\definecolor{ForestGreen}{rgb}{0.1333,0.5451,0.1333}
\definecolor{DarkRed}{rgb}{0.8,0,0}
\definecolor{Red}{rgb}{1,0,0}
\def\thmt@refnamewithcomma #1#2#3,#4,#5\@nil{%
	\@xa\def\csname\thmt@envname #1utorefname\endcsname{#3}%
	\ifcsname #2refname\endcsname
	\csname #2refname\expandafter\endcsname\expandafter{\thmt@envname}{#3}{#4}%
	\fi
}
\declaretheorem[numberwithin=section,refname={Theorem,Theorems},Refname={Theorem,Theorems},name={Theorem}]{thm}
\declaretheorem[numberlike=thm,refname={Lemma,Lemmas},Refname={Lemma,Lemmas},name={Lemma}]{lem}
\declaretheorem[numberlike=thm,refname={Corollary,Corollaries},Refname={Corollary,Corollaries},name={Corollary}]{cor}
\declaretheorem[numberlike=thm,refname={Fact,Facts},Refname={Fact,Facts},name={Fact}]{fact}
\declaretheorem[numberlike=thm,refname={Proposition,Propositions},Refname={Proposition,Propositions},name={Proposition}]{prop}
\declaretheorem[numberlike=thm,refname={Example,Examples},Refname={Example,Examples}]{example}
\declaretheorem[numberlike=thm,refname={Definition,Definitions},Refname={Definition,Definitions},name={Definition}]{defn}
\declaretheorem[style=remark,numberwithin=section,refname={Remark,Remarks},Refname={Remark,Remarks},name={Remark}]{rem}
\declaretheorem[style=remark,numberlike=thm,refname={Claim,Claims},Refname={Claim,Claims}]{claim}
\newcommand{\squishlist}{
	\begin{list}{$\bullet$}
		{ \setlength{\itemsep}{0pt}
			\setlength{\parsep}{2pt}
			\setlength{\topsep}{2pt}
			\setlength{\partopsep}{0pt}
			\setlength{\leftmargin}{1.5em}
			\setlength{\labelwidth}{1em}
			\setlength{\labelsep}{0.5em} } }
	\newcommand{\squishend}{
\end{list}  }
\def\*#1*\ {}
\def\danupon#1{\marginpar{$\leftarrow$\fbox{D}}\footnote{$\Rightarrow$~{\sf #1 --Danupon}}}
\def\monika#1{\marginpar{$\leftarrow$\fbox{M}}\footnote{$\Rightarrow$~{\sf #1 --Monika}}}
\def\sebastian#1{\marginpar{$\leftarrow$\fbox{S}}\footnote{$\Rightarrow$~{\sf #1 --Sebastian}}}
\def\thatchaphol#1{\marginpar{$\leftarrow$\fbox{T}}\footnote{$\Rightarrow$~{\sf #1 --Thatchaphol}}}
\def\danupon#1{}
\def\monika#1{}
\def\sebastian#1{}
\def\thatchaphol#1{}
\newcommand{\shortOnly}[1]{\ifthenelse{\boolean{short}}{#1}{}}
\newcommand{\longOnly}[1]{\ifthenelse{\boolean{short}}{}{#1}}
\renewcommand{\paragraph}{%
	\@startsection{paragraph}{4}%
	{\z@}{1ex \@plus 1ex \@minus .2ex}{-1em}%
	{\normalfont\normalsize\bfseries}%
}
\newcommand{\patrascu}{P{\v a}tra{\c s}cu\xspace}
\newcommand{\po}{\hat{o}} 
	\let\ref\Cref
\begin{document}
	\global\long\def\sst{\textsf{SST}}
	\newcommand{\cA}{{\cal A}}
	\newcommand{\cB}{{\cal B}}
	\newcommand{\cC}{{\cal C}}
	\newcommand{\cD}{{\cal D}}
	\global\long\def\cE{{\cal E}}
	\global\long\def\opt{\textsf{OPT}}
	\global\long\def\decomp{\textsf{decomp}}
	
	\global\long\def\cN{{\cal N}}
	\global\long\def\cO{{\cal O}}
	\global\long\def\cL{{\cal L}}
	\global\long\def\cP{{\cal P}}
	\global\long\def\cG{{\cal G}}
	\global\long\def\cA{{\cal A}}
	\global\long\def\cB{{\cal B}}
	\global\long\def\cC{{\cal C}}
	\global\long\def\cT{{\cal T}}
	\global\long\def\cS{{\cal {\cal S}}}
	\global\long\def\cR{{\cal R}}
	\global\long\def\cF{{\cal F}}
	\global\long\def\rtensor{\otimes_{R}}
	\global\long\def\one{\mathbf{1}}

	\newcommand{\mst}{{\sf MSF}\xspace} 
	\newcommand{\st}{{\sf SF}\xspace}

	\global\long\def\instree{\textsf{insert\_tree}}
	\global\long\def\insnon{\textsf{insert\_nontree}}
	\global\long\def\deltree{\textsf{delete\_tree}}
	\global\long\def\delnon{\textsf{delete\_nontree}}
	\global\long\def\cutset#1{#1\textsf{-cutset}}
	\global\long\def\cutsize#1{#1\textsf{-cutsize}}
	\global\long\def\polylog{\mbox{polylog}}
	\global\long\def\poly{\mbox{poly}}
	\global\long\def\net{\mbox{\textsf{NET}}}
	
	\newcommand*\samethanks[1][\value{footnote}]{\footnotemark[#1]}

	\title{Dynamic Spanning Forest with Worst-Case Update Time:\\ Adaptive, Las Vegas, and $O(n^{1/2-\epsilon})$-Time}
	\author{
		Danupon Nanongkai\\{KTH Royal Institute of Technology}
		\and
		Thatchaphol Saranurak\\ {KTH Royal Institute of Technology}
	}
	\date{}
	\pagenumbering{roman}
	\maketitle
	\begin{abstract}
We present two algorithms for dynamically maintaining a spanning forest of a graph undergoing edge insertions and deletions. Our algorithms guarantee {\em worst-case update time} and work against an {\em adaptive adversary}, meaning that an edge update can depend on previous outputs of the algorithms. 
We provide the {\em first polynomial improvement} over the long-standing $O(\sqrt{n})$ bound of [Frederickson~STOC'83, Eppstein, Galil, Italiano and Nissenzweig~FOCS'92] for such type of algorithms. 
The previously best improvement was $O(\sqrt{n (\log\log n)^2/\log n})$ [Kejlberg-Rasmussen, Kopelowitz, Pettie and Thorup~ESA'16].
We note however that these bounds were obtained by deterministic algorithms while our algorithms are randomized.

Our first algorithm is Monte Carlo and guarantees an $O(n^{0.4+o(1)})$ worst-case update time, where the $o(1)$ term hides the $O(\sqrt{\log\log n/\log n})$  factor. Our second algorithm is Las Vegas and guarantees an $O(n^{0.49306})$ worst-case update time with high probability.
Algorithms with better update time either needed to assume that the adversary is oblivious (e.g. [Kapron, King and Mountjoy~SODA'13]) or can only guarantee an amortized update time.
Our second result answers an open problem by Kapron~et~al.
To the best of our knowledge, our algorithms are among a few non-trivial randomized dynamic algorithms that work against adaptive adversaries.

The key to our results is a decomposition of graphs into subgraphs that either have high expansion or sparse. This decomposition serves as an interface between recent developments on (static) flow computation and many old ideas in dynamic graph algorithms: On the one hand, we can combine previous dynamic graph techniques to get faster dynamic spanning forest algorithms if such decomposition is given. On the other hand, we can adapt flow-related techniques (e.g. those from [Khandekar, Rao and Vazirani~STOC'06], [Peng~SODA'16], and [Orecchia and Zhu~SODA'14]) to maintain such decomposition. To the best of our knowledge, this is the first time these flow techniques are used in fully dynamic graph algorithms.

\end{abstract}

	\pagebreak{}
	
	\tableofcontents{}
	
	\pagebreak{}
	\pagenumbering{arabic}

\section{Introduction}\label{sec:intro}

In the {\em dynamic spanning forest (\st)} problem, we want to maintain a spanning forest $F$ of an undirected unweighted graph $G$ undergoing edge insertions and deletions. In particular, we want to construct a data structure that supports the following operations. 
\begin{itemize}
	\item {\sc Preprocess($G$)}: Initialize the data structure with an input graph $G$. After this operation, the data structure outputs a spanning forest $F$ of $G$.
	\item {\sc Insert($u, v$)}: Insert edge $(u, v)$ to $G$. After this operation, the data structure outputs changes to $F$,  if any.
	\item {\sc Delete($u, v$)}: Delete edge $(u, v)$ from $G$. After this operation, the data structure outputs changes to $F$,  if any.
\end{itemize}
The goal is to minimize the {\em update time}, i.e. the time needed to execute the insert and delete operations. The time bound is in terms of $n$ and $m$, denoting respectively the number of nodes and edges in $G$ at the time the update happens. We use $\tilde{O}$ to hide $\poly\log(n)$ terms.

The update time is usually categorized into two types: For any $t$, an algorithm is said to have an {\em amortized} update time of $t$ if, for any $k$, the total time it spends to process the first $k$ updates (edge insertions/deletions) is at most $kt$. It is said to have a {\em worst-case} update time of $t$ if it spends at most $t$ time for {\em each} update operation. (In case of randomized algorithms, the guarantee may hold with high probability or in expectation.) Thus, roughly speaking an algorithm with a small amortized update time is fast ``on average'' but may take a long time to respond to a single update. In contrast, the worst-case update time is more preferable since it guarantees to hold for every operation. (We also note that the amortized update time is less desired because most algorithms that guarantee it have to assume that the input graph has no edge initially.)

The dynamic \st problem along with its closely related variations -- {\em dynamic connectivity} and {\em dynamic minimum spanning forest} (\mst) -- played a central role in the study of dynamic graph algorithms.
The first result for these problems dates back to Frederickson's deterministic algorithm from 1985~\cite{Frederickson85}, which provides a worst-case $O(\sqrt{m})$ update time. This bound, when combined with the general sparsification technique of Eppstein~et~al.~\cite{EppsteinGIN97} from 1992, implies a worst-case $O(\sqrt{n})$ update time.  
By allowing the update time to be amortized, this bound was significantly improved by Henzinger and King \cite{HenzingerK99} in 1995, who presented a Las Vegas randomized algorithm with polylogarithmic expected amortized update time. In the following decade, this result was refined in many ways, including algorithms with smaller update time which almost matches existing lower bounds (the gap is currently $O((\log\log n)^2)$; see, e.g., \cite{HuangHKP-SODA17,Thorup00,HenzingerT97,PatrascuD06}), deterministic algorithms (e.g. \cite{HolmLT01,Wulff-Nilsen13a}), and algorithms that works for harder problems such as two-edge connectivity and \mst (e.g. \cite{HolmLT01,HenzingerK01}).

Given that the problem is fairly well-understood from the perspective of amortized update time, many researchers have turned their attention back to the worst-case update time in the last decade (one sign of this trend is the 2007 work of \patrascu and Thorup \cite{PatrascuT07}). The question on worst-case update time is not only interesting in the context of the dynamic \st problem.
This is because for many problems (e.g. \mst, two-edge connectivity, shortest paths~\cite{Thorup05} and matching \cite{BhattacharyaHN17soda}) we have many techniques to argue about the amortized update time but do not understand the case of worst-case update time much.  
In the context of dynamic \st, the $O(\sqrt{n})$ worst-case update time of \cite{Frederickson85,EppsteinGIN97} has remained the best for decades until Kapron, King and Mountjoy~\cite{KapronKM13} showed a breakthrough polylogarithmic bound in 2013 (the bound was originally $O(\log^5 n)$ in \cite{KapronKM13} and was later improved to $O(\log^4 n)$ \cite{GibbKKT15}). 

We however still cannot claim a victory over this problem, because the algorithms of \cite{KapronKM13,GibbKKT15} still need to be refined in many ways. Besides the possibility to improve the bound further, one big challenge arose from the fact that they
are randomized {\em Monte Carlo}, meaning that they may make mistakes (with a small probability). 
While the ultimate goal would be to get a deterministic algorithm with a similar worst-case update time that works for harder problems (such as two-edge connectivity and \mst), the next challenges are to obtain (i) a {\em Las Vegas} algorithm  (thus making no errors) and (ii) an algorithm that works against {\em adaptive adversaries} (any deterministic algorithm will guarantee both of these).
Getting {\em Las Vegas} algorithms is a well-known and interesting question in many areas of theoretical computer science. Since the issue about adaptive adversaries might be quite specific to the area of dynamic algorithms, we discuss this issue briefly here. 

An adaptive adversary is the one who can determine an update based on previous outputs of the algorithm (e.g. the maintained spanning forests). It is used to contrast the weaker notion of {\em oblivious adversaries} who must fixed edge updates in advance and thus updates are not influenced by algorithms' outputs; for example, dynamic \st algorithms that work under the oblivious adversary assumption cannot reveal the maintained forest to an adversary. (Note that both types of adversaries do not have an access to the randomness used by the algorithm to make random choices\footnote{In fact, one can categorize adversaries further based on how they can access the randomness. For example, we can define an {\em randomness-oblivious} adversary to be the one that never see the random bits and an  {\em randomness-adaptive} adversary to be the one that see the random bits {\em after} they are used by the algorithm. We are not aware of algorithms that need the distinction between the two cases, so we do not discuss them here. We note that our algorithms work against the stronger notion of randomness-adaptive adversaries. Also note that one can define an even stronger notion of adversary who can see the entire random string from the beginning. Similar to the case of online algorithms \cite{Ben-DavidBKTW90}, it can be argue that randomization does not help against this type of adversary; i.e. if there is a randomized algortihm that works against such adversary, then there is also a deterministic algorithm with the same performance.}.)
One drawback of the algorithms of \cite{KapronKM13,GibbKKT15} is that they work only under this assumption. This drawback in fact reflects our lack of understanding in exploiting randomness for a wide range of dynamic graph problems: without this assumption {\em very few} algorithms are able take advantage of randomness\footnote{To the best of our knowledge, the only randomized dynamic graph algorithm that works against an adaptive adversary is that by \cite{HenzingerK99} and the follow-up improvement by \cite{Thorup00} and \cite{Kejlberg-Rasmussen16} which are also against adaptive adversary for the same reason. This excludes obvious cases where problem outputs are unique and thus there is no difference between adaptive and oblivious adversary for the problem.}
It is a fundamental question whether the {\em true source of power} of randomized dynamic algorithms is the randomness itself or in fact the oblivious adversary assumption. 
We note that this question is also {\em very important in some applications}.
%
For example, removing the oblivious adversary assumption from the randomized algorithms of Henzinger~et~al.~\cite{HenzingerKNFOCS14} and Roditty-Zwick~\cite{RodittyZ04} for the decremental single-source and all-pairs shortest-paths problems will lead to new fast static algorithms for computing maximum $s$-$t$ and multicommodity flows (see, e.g., \cite{BernsteinC-STOC16,Madry10} for further discussions). 
This is the main drive behind the current efforts in {\em derandomizing} the algorithms of \cite{HenzingerKNFOCS14,RodittyZ04} (e.g., \cite{BernsteinC-STOC16,BernsteinC-SODA17,HenzingerKN13}). (Note that deterministic algorithms always work against adaptive adversaries.)
%
Another example is the reduction by Henzinger and King \cite{HenzingerK99} to obtain a dynamic {\em approximate} minimum spanning forest algorithm $\cA$ from a dynamic spanning forest algorithm $\cB$. This reduction requires that $\cB$ works against adaptive adversaries, even if we just want $\cA$ to only work against oblivious adversaries\footnote{In particular, the reduction of Henzinger and King \cite{HenzingerK99} involves maintaining a spanning forest $F_i$ on graph $G_i$ consisting of edges of weight roughly $(1+\epsilon)^i$ in the input graph and the maintained minimum spanning forest $F$, for every $i$. Consequently, $F_i$ may affect $F$, which in turns affect the change $G_i$. So, the algorithm that maintains $F_i$ must be able to handle an adaptive adversary since its own output ($F_i$) may influence its input ($G_i$).}.
Thus, we cannot directly apply this reduction to the algorithms of \cite{KapronKM13,GibbKKT15} to obtain a dynamic approximate minimum spanning forest algorithm with polylogarithmic worst-case update time\footnote{Note that Kapron~et~al.~\cite{KapronKM13} and Gibb~et~al.~\cite{GibbKKT15} claimed that the reduction of Henzinger and King \cite{HenzingerK99} together with their algorithms imply dynamic algorithms for the approximate minimum spanning forest problem with polylogarithmic worst-case update time. We believe that this claim is not correct. 
However, we also believe that one can still obtain the latter result both by modifying the reduction of \cite{HenzingerK99} and by modifying the algorithms of \cite{KapronKM13,GibbKKT15}.}.

Motivated by the discussions above, we focus on scenarios where algorithms can guarantee worst-case update time and additionally either (i) work against adaptive adversaries, or (ii) are Las Vegas, or both. To this end, we note again that deterministic algorithms can guarantee both (i) and (ii). Thus, the classic $O(\sqrt{n})$ bound of \cite{EppsteinGIN97,Frederickson85} applies to this scenario. This bound was slightly improved to $O(\sqrt{n (\log\log n)^2/\log n})$ recently by Kejlberg-Rasmussen~et~al.~\cite{Kejlberg-Rasmussen16} using word operations. 
It remains an important open problem whether we can {\em polynomially} improved the long-standing $O(\sqrt{n})$ bound as this will likely need new techniques beyond word tricks.

\paragraph{Our Contributions.} We show two algorithms that achieve the above.
Our first algorithm is a randomized {\em Monte Carlo} algorithm that works against adaptive adversaries and has $O(n^{0.4+o(1)})$ worst-case update time, where the $o(1)$ term hides the $O(\sqrt{\log\log n/\log n})$ factor. At any point in time the forest maintained by it is a spanning forest with high probability.
\begin{thm}
	There is a randomized dynamic $\st$ algorithm for any graphs with
	$n$ nodes and $m$ initial edges that works correctly against adaptive
	adversaries with high probability and has preprocessing time
	$O(m^{1+o(1)})$ and worst-case update time $O(n^{0.4+o(1)})$.
\end{thm}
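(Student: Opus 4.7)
The plan is to combine an \emph{expander decomposition} of $G$ with two classical engines of dynamic graph algorithms: an EGIN-style sparsification top tree to glue the pieces together, and an intra-piece dynamic \st routine that exploits expansion to beat $\sqrt{n}$. Concretely, I would partition $V(G)$ into vertex-disjoint pieces $V_1,\dots,V_k$ so that each induced subgraph is either a $\phi$-expander of order at most $s$ for parameters to be chosen later, or part of a small inter-piece remainder that is sparse enough to be handled by a Frederickson-style subroutine. Existence of such a decomposition is standard via the cut-matching game, and it can be computed statically at preprocessing in $O(m^{1+o(1)})$ time using the flow machinery of Khandekar-Rao-Vazirani, Peng, and Orecchia-Zhu referenced in the introduction.

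Within a single $\phi$-expander piece I would build a worst-case dynamic \st routine with update time sublinear in $\sqrt{s}$. The key lever is that in a $\phi$-expander any tree-edge deletion leaves a cut that still has many non-tree edges crossing it, so a replacement can be located quickly by sampling inside an expansion certificate (for instance an embedded matching from the cut-matching game). I would organize the intra-piece data structure as a shallow hierarchy reminiscent of Frederickson topology trees, with leaf blocks certified by such matchings and replacement edges drawn from the embeddings. To make the algorithm robust against an adaptive adversary, I plan to use \emph{fresh} independent random bits for each short ``epoch'' and to rebuild the expansion certificates at epoch boundaries, so that the adversary cannot accumulate enough information about past samples to influence future ones.

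The main obstacle is maintaining the decomposition itself under updates, since a single adversarial deletion can in principle destroy the expansion of an entire piece. I plan to handle this by a background rebuild schedule: whenever a piece's expansion drops, as certified either by the current embedded matching or by a sparse cut that surfaces during sampling, I start rebuilding it via a fresh cut-matching computation spread over the next $\Theta(s)$ updates, and a standard two-copy de-amortization promotes the amortized cost to a worst-case one. Correctness against an adaptive adversary then reduces to the observation that the outputs (forest edges and decomposition boundaries) only reveal quantities whose random bits have already been retired by the epoch mechanism. The final balancing is routine: summing the top-tree cost, the per-piece expander cost, and the de-amortized rebuild cost and tuning $s,\phi$ should make the rebuild term dominate and force $s \approx n^{0.4}$, yielding the claimed $O(n^{0.4+o(1)})$ bound, with the $(\log\log n/\log n)^{1/2}$ factor entering through the word-level tricks of Kejlberg-Rasmussen~et~al.\ applied in the sparse remainder.
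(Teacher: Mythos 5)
Your high-level instinct---decompose into expander pieces, exploit expansion inside each piece to locate replacement edges quickly, and be careful about which random bits the adversary can learn---matches the spirit of the paper, but there are several concrete gaps, one of which I believe is fatal as written.

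The fatal gap is the missing deterministic fallback for \emph{small} cuts. Inside an expander piece, after a tree edge $e$ is deleted and the tree splits into $T_1$ and $T_2$ with $|V(T_1)|\le|V(T_2)|$, sampling (whether from the graph directly or from an ``expansion certificate'') only works with high probability when $\partial(V(T_1))$ is a noticeable fraction of the edges incident to $T_1$. When the residual cut has, say, one or two edges but is nonempty, sampling fails silently, and your proposal offers nothing to catch this. The embedded matching from the cut--matching game is not a substitute: its edges are not edges of $G$, only congestion-bounded flow routings in the original graph, and after deletions those routings are stale, so they cannot produce a valid replacement edge. The paper's fix is a separate data structure, a \emph{cut recovery tree} built from an $s$-sparse-recovery linear sketch (the generalization of the $1$-sparse recovery used in Kapron--King--Mountjoy). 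Whenever $|V(T_1)|$ is small (below $2\tau/\alpha$), either the cut is large enough that sampling finds it, or it has at most $\sqrt{\tau/\alpha}$ edges and the sketch exactly lists $\partial(V(T_1))$. This three-case analysis is precisely what makes the argument go through; omitting the sketch leaves an uncoverable failure mode.

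There are three further mismatches worth flagging. First, the paper's decomposition is an \emph{edge} partition $E=E^s\cup E^d$, where the components of $G^d$ are expanders and $G^s$ is sparse, not a vertex-disjoint partition of $V$; the sparse side $G^s$ together with the spanning forest of $G^d$ is then fed to a (small-non-tree-edge) 2-dimensional ET-tree, not a Frederickson topology tree over a vertex remainder. Second, for this Monte Carlo theorem the paper does \emph{not} maintain or rebuild the decomposition under updates. It runs a standard reduction so that only $\tau=n^{1/2+\epsilon}$ updates need to be handled after preprocessing, computes the decomposition once with $\alpha=n^{-0.2}$, and exploits the fact that $\tau$ deletions degrade the expansion of each component only additively (this is exactly Case 1 of the three-case analysis). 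Your ``background rebuild spread over $\Theta(s)$ updates'' with two-copy de-amortization is closer in spirit to the paper's \emph{Las Vegas} algorithm, and even there the rebuild is a \emph{local} decomposition, not a fresh cut--matching game on the whole piece; a full re-decomposition per phase would be too slow to beat $\sqrt{n}$. Third, the $O(\sqrt{\log\log n/\log n})$ exponent in the $o(1)$ term does not come from Kejlberg-Rasmussen word tricks (those are not used in this paper); it comes from the approximation ratios $c_{\mathrm{size}}=\Theta(\log^2 n)$, $c_{\mathrm{exp}}=\Theta(\log^3 n)$ of the balanced sparse cut subroutine compounding over the $L\le 1/\epsilon$ levels of the recursive decomposition, which forces the choice $\epsilon=\Theta(\sqrt{\log\log n/\log n})$ and produces the factor $\gamma=n^{O(\epsilon)}$ in both the sparsity bound $|E^s|\le\alpha n\gamma$ and the running time.

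Finally, the adaptivity argument needs tightening: ``fresh random bits per epoch'' is not the right mechanism here. The paper's argument is that (i) the sampled bits are never reused so revealing them is harmless, and (ii) the decomposition's correctness is a one-shot event---what matters for the analysis is only \emph{whether} the decomposition is correct, not \emph{which} decomposition was produced, so the adversary learns nothing exploitable even if it sees it. If you do want an epoch-style argument you would need to make this ``only correctness matters'' reasoning explicit, and you would also need to contend with the fact that the small-cut cases are handled \emph{deterministically} by the sketch, so there is no per-epoch randomness to retire there.
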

The term ``works correctly against adaptive adversaries with high probability'' is a technical term which is formalized in \Cref{sec:prelim} and \Cref{sec:formalization}. Roughly it means that if we pick an arbitrary time $t$, and consider the solution maintained by the algorithm at that time, such solution is a spanning forest with high probability (when considered all possible random choices). Note that we are talking about an infinite number of updates here, so the algorithm might make several mistakes in the past (which the adversary can see). Regardless, the guarantee holds with high probability.\footnote{An intuition is that mistakes from the far past should not affect an algorithm since it has enough time to run a static algorithm to detect and fix such mistakes. Thus only the algorithm' outputs from a polynomial number of previous steps can affect the algorithm, but then we can use the union bound to argue that the algorithms did not make mistakes in any of these steps.}

Our second algorithm is a randomized {\em Las Vegas} algorithm that has $O(n^{0.49305+o(1)})$ worst-case update time with high probability (this also implies the same expected time). This algorithm also works against adaptive adversaries. 
\begin{thm}
	There is a randomized dynamic $\st$ algorithm for any graphs with
	$n$ nodes and $m$ initial edges that works correctly against adaptive
	adversaries with certainty and has preprocessing time $O(m^{1+o(1)})$
	and worst-case update time $O(n^{0.49305+o(1)})$ with high probability.\label{thm:intro:dynamic lv final}
\end{thm}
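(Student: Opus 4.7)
The plan is to upgrade the Monte Carlo algorithm of the previous theorem into a Las Vegas one by certifying the correctness of each update's output within a slightly larger time budget. I would keep the expander-decomposition framework fixed: at all times the graph is decomposed into pieces that are either sparse or highly expanding, the decomposition itself is maintained via the flow-based routines advertised in the introduction, and the spanning forest is maintained piece by piece. Randomness enters only when replacing a tree edge that has just been lost: on an expander piece, the Monte Carlo algorithm samples to find a replacement edge, and this is precisely where a Las Vegas guarantee is hardest to obtain.

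First I would make the replacement-edge subroutine on each expander piece Las Vegas. The key leverage is conductance: on a $\phi$-expander, any cut of volume $v$ has at least $\phi v$ crossing edges, so either a short random sample hits a replacement (the usual Monte Carlo case) or one can afford to run a deterministic cut certificate, for instance via a local flow routine in the spirit of Orecchia--Zhu, whose running time is controlled by $1/\phi$ and the volume of the piece. Wrapping this in a sample-and-verify loop with geometrically increasing budgets turns it into a Las Vegas procedure whose expected time matches the Monte Carlo one and whose high-probability worst-case bound is only polynomially larger.

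Second I would retune the size threshold separating sparse pieces (handled by the Frederickson-style deterministic $O(\sqrt{\cdot})$ subroutine together with Eppstein--Galil--Italiano--Nissenzweig sparsification) and expander pieces (handled by the new Las Vegas routine). Writing the total worst-case update time as a sum of the decomposition-maintenance cost, the cost inside expander pieces, and the cost inside sparse pieces, and optimizing the threshold, should produce the exponent $0.49305+o(1)$, which is larger than the Monte Carlo exponent $0.4$ precisely because certifying each replacement costs more than merely sampling one. I would also verify that the expander-decomposition maintenance itself is Las Vegas: the flow routines it relies on can be made to run within their time bounds deterministically, and any randomized embedding steps can be re-run with fresh random bits until success without inflating the worst-case budget, by standard exponential concentration.

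The main obstacle will be preserving correctness against an \emph{adaptive} adversary throughout certification. Because the adversary sees each intermediate spanning forest, it can adaptively direct updates at pieces in which our prior random bits were used; a careless ``restart with fresh randomness on failure'' leaks information through the timing of restarts and the identity of retried pieces. I would address this by ensuring that every update draws an independent block of fresh random bits used only inside that update, and by arranging the Las Vegas subroutine so that its worst-case output time, after truncation, is independent of the bits used. Combined with standard worst-case-time scheduling of the internal subroutines, this should prevent the adversary from amplifying any small failure probability into either a correctness violation or a blow-up of the claimed worst-case time bound.
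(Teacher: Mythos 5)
Your high-level motivation is right -- the expansion guarantee is what lets the algorithm \emph{detect} when a sampling attempt has failed, because on a sufficiently large cut in an initially high-expansion component there must be crossing edges, so not finding one is a certifiable error rather than a silent mistake. This is indeed the engine of the paper's Las Vegas argument. But your proposal is missing the key technical machinery and has a concrete step that would fail.

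The step that would fail is the ``sample-and-verify loop with geometrically increasing budgets.'' You never specify how the verifier works when the tree $T_1$ that got disconnected is \emph{small}. The paper's case split is decisive here: if $|V(T_1)| \geq 2\tau/\alpha$ the expansion guarantee forces $|\partial(V(T_1))| \geq \alpha|V(T_1)|/2 > 0$ (since at most $\tau$ edges were deleted), so sampling either succeeds or the algorithm \emph{knows} it failed. But if $|V(T_1)| < 2\tau/\alpha$ there may genuinely be no replacement edge, and the only honest move is to enumerate all $O(|V(T_1)|) = O(\tau/\alpha)$ incident edges. With $\tau = n^{1/2+\epsilon}$ and $\alpha = n^{-\epsilon}$ this is $\Theta(n^{1/2+2\epsilon})$ per update -- worse than the target. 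No local-flow ``cut certificate'' fixes this, because certifying \emph{non-existence} of a crossing edge in a small subtree still requires inspecting its incident edges.

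What the paper does -- and what you are missing entirely -- is a \emph{phased re-decomposition using a local expansion decomposition algorithm} (\ref{thm:intro:local decomp} / \ref{thm:local decomp}). The decremental sequence on each high-expansion component is chopped into phases of length $\tau' = \po(n^{1/2})$, and at the start of each phase a \emph{local} decomposition algorithm, with running time polynomial in $|D|$ (the set of deleted edges so far) and $1/\alpha_b$ but essentially independent of $n$, is run to re-partition the damaged expander into a small sparse part and fresh high-expansion sub-components of expansion $\alpha' = \Omega(\alpha_b^{1/\epsilon'})$. Because the re-decomposition is local, its cost can be amortized against the $\tau'$ updates of the phase and then deamortized into worst case. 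Within a phase, the ``small tree'' enumeration now costs $O(\tau'/\alpha')$ rather than $O(\tau/\alpha)$, and the parameters $\epsilon, \epsilon', \tau'$ are then tuned so that $\tau'/\alpha' + \lambda/\tau'$ (with $\lambda$ the local-decomposition cost) is $\po(n^{1/2})$, yielding the exponent $0.49305$. The local decomposition is itself built from a \emph{locally balanced sparse cut} primitive obtained by modifying the Orecchia--Zhu local cut improvement algorithm -- this is where Orecchia--Zhu actually enters, not as a per-update cut certificate. Without the local decomposition and the phasing, your plan has no way to get the enumeration step below $n^{1/2}$, and the claimed exponent is unobtainable. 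Your adaptive-adversary discussion is roughly right in spirit (fresh bits per sample, correctness of the decomposition is all that matters, not its identity), but the Las Vegas guarantee in the paper comes from detectable failure plus restart, not from hiding timing information.
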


Note that both algorithms are among a few randomized algorithms that work against adaptive adversaries. Moreover, the second result answer the open problem raised by Kapron~et~al.~\cite{KapronKM13}. 

The key to our results is the notion of {\em expansion decomposition} which is a decomposition of a graph into a sparse graph and connected components with high expansion. This decomposition serves as an interface between recent flow-related techniques and known dynamic graph algorithmic techniques: On the one hand, we show how to efficiently construct and maintain the decomposition using  flow-related techniques such as fast (static) max-flow approximation \cite{Peng14,Sherman13,KelnerLOS14},  cut-matching games  \cite{KhandekarRV09}, and local cut improvement \cite{OrecchiaZ14}.
On the other hand, the decomposition allows us to focus on solving the dynamic \st problem only on a very sparse graph 
and a graph with high expansion. In these cases, we can combine known techniques in dynamic graph algorithms such as sparse recovery\footnote{Techniques for the $1$-sparse recovery problem was used in \cite{GibbKKT15,KapronKM13}. In this paper, we need more general techniques for the $s$-sparse recovery problem for large $s$.} \cite{GibbKKT15,KapronKM13,BerindeGIKS08}, sparsification \cite{EppsteinGIN97}, ET tree~\cite{HenzingerK99}, a modification of the reduction from $k$-weight \mst to \st~\cite{HenzingerK99}, 
and a variant of the 2-dimensional topology tree~\cite{Frederickson85,Thorup07mincut}.
We refer to \Cref{sec:overview} for a more comprehensive overview.

To the best of our knowledge, our results are the first applications of flow-related techniques in fully-dynamic graph algorithms. (Previously \patrascu and Thorup \cite{PatrascuT07} used some relevant techniques in a special setting where updates can happen (in bulk) only once.) These results suggest the possibility to obtain stronger results (e.g. lower update time, deterministic algorithms and algorithms for harder problems) by further understanding these techniques in the context of dynamic graphs. 

\paragraph{An independent work}
Wulff-Nilsen \cite{Wulff-Nilsen16a} independently presents an algorithm for solving a harder problem of maintaining a {\em minimum} spanning forest and not just some spanning forest as in our result. His algorithm is Las Vegas randomized and has $O(n^{0.5-\epsilon})$ worst-case update time, for some constant $\epsilon>0$, both in expectation and with high probability.

	\section{Overview}\label{sec:overview}

To simplify our discussion, in this section we use the following notations. For any functions $f(n)$ and $g(n)$ of $n$, we say that  $f(n)=\po(g(n))$ if there exists some constant $\epsilon>0$ such that $f(n)=O(g(n)^{1-\epsilon})$. 
In this section we will focus on getting an $\po(n^{1/2})$ worst case update time; in other words, we focus on getting an $O(n^{1/2-\epsilon})$ worst-case update time for some very small constant $\epsilon>0$ without worrying about the specific value of $\epsilon$. 
First, it can be shown, using standard techniques, that we only have to focus on a special case as in the following lemma.

\begin{lem}[Details in \Cref{sec:reductions}]\label{lem:intro:reductions}
It is sufficient to construct a dynamic spanning forest algorithm that (i) has $n^{1+o(1)}$ proprocessing time, (ii) can handle a dynamic graph $G$ that has maximum degree $\Delta\leq 3$ at all time, and (iii) can handle $\tau=n^{1/2+\epsilon}$ edge updates for some small constant $\epsilon>0$ (as oppose to handling infinite number of updates).
\end{lem}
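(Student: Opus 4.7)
The plan is to compose three well-established reductions and verify that each preserves worst-case update time. The three properties (i), (ii), (iii) demanded by the lemma correspond respectively to sparsification (to cut the edge count), bounded-degree reduction (to force $\Delta\leq 3$), and a periodic-rebuild scheme (to bound the number of updates the inner algorithm must tolerate).

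For (i), I would invoke the Eppstein--Galil--Italiano--Nissenzweig sparsification framework, which reduces maintaining a spanning forest on a graph with $m$ edges to maintaining spanning forests on $O(\log n)$ ``certificate'' graphs, each containing $O(n)$ edges throughout its lifetime. This converts an $O(m^{1+o(1)})$ initial preprocessing into $n^{1+o(1)}$ and inflates the per-update cost only by a $\polylog(n)$ factor, which is absorbed by $\po(n^{1/2})$. For (ii), I would replace each vertex $v$ of current degree $d_v$ by a small cycle $C_v$ of $d_v$ ``copy'' nodes, one per incident edge; each copy then has degree at most $3$ (two structural neighbours in $C_v$ and one external edge). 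The structural edges of $C_v$ are declared to be permanently in the maintained forest, so contracting each $C_v$ recovers a spanning forest of the original graph. Every original edge insertion/deletion at $v$ translates to $O(1)$ edge operations in the degree-reduced graph (inserting/removing a copy node and relinking $C_v$), which preserves worst-case update time up to a constant factor.

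For (iii), I would use a standard periodic-rebuild trick implemented with two parallel copies of the inner algorithm. At any time one instance is \emph{active} and serves updates, while the other is \emph{under construction} on a snapshot of the current graph; the $n^{1+o(1)}$ preprocessing is spread evenly over the next $\tau=n^{1/2+\epsilon}$ updates, adding $n^{1+o(1)}/\tau = n^{1/2-\epsilon+o(1)}$ per step, which is $\po(n^{1/2})$. Updates received during the construction are logged and, after the swap, replayed into the new active instance, again spread over $\tau$ steps at $\po(n^{1/2})$ apiece. Thus every instance is asked to handle only $O(\tau)$ updates during its active life, and the preprocessing cost never appears in a single step.

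The main obstacle will be part (iii), because the adversary is adaptive. A naive swap could announce $\Omega(n)$ forest changes in one step as the new instance's spanning forest replaces the old one, which both inflates the worst-case update time and leaks a huge batch of information to the adversary. The fix is to maintain the invariant that both instances agree on a common spanning forest at the moment of swap, by replaying the logged updates and by propagating any differing tree edges as ordinary insert/delete responses, one-per-update, so that the reported changes to $F$ remain worst-case bounded. A secondary routine check is that the three reductions compose in the correct order (sparsify first, then reduce degree, then periodically rebuild), so that the inner algorithm always sees a degree-$3$, $O(n)$-edge graph subject to at most $\tau$ updates.
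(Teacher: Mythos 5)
Your high-level plan (sparsify, reduce degree, periodically rebuild) is the same as the paper's, but there are several concrete gaps in how you carry it out.

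\textbf{Degree reduction is not just a gadget replacement.} You replace each vertex $v$ by a \emph{cycle} $C_v$ of $d_v$ nodes and ``declare'' all $d_v$ structural edges to be permanently in the maintained forest. This fails on two counts. First, a cycle of $d_v$ nodes has $d_v$ edges and therefore contains a cycle; these edges cannot all be part of a spanning forest. (The paper uses a \emph{path}, which contributes $d_v-1$ tree edges.) Second, and more fundamentally, you cannot ``declare'' which edges a black-box dynamic $\st$ algorithm keeps in its forest --- the inner algorithm picks its own forest, and nothing stops it from dropping a path edge in favour of an external edge, in which case contraction no longer yields a spanning forest of the original graph. The paper resolves this by assigning weight $0$ to the path edges and weight $1$ to the original edges, then maintaining a \emph{2-weight minimum spanning forest} of the degree-reduced graph via the Henzinger--King reduction from dynamic $k$-weight $\mst$ to dynamic $\st$ (\ref{lem:reduc bound degree}); a 2-weight MSF is guaranteed to contain all weight-$0$ path edges, so the weight-$1$ tree edges do form a spanning forest of the original graph.

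\textbf{The instance handoff needs a stable-forest mechanism.} Your proposed fix --- propagating the differing tree edges between the old and new instance ``one-per-update'' --- leaves intermediate reported forests that are not spanning forests of the current graph, and an adaptive adversary can exploit exactly this window. The paper instead maintains a \emph{separate} stable forest $F$ that only changes when one of its tree edges is deleted; the replacement edge is then found by binary searching along the $u$--$v$ path in whichever internal forest $F'$ is currently live, using $\textsf{path}$ queries on an augmented ET tree (\ref{lem:reduc stable}, \ref{rem:get stable}, and its use in \ref{lem:reduc poly length}). This guarantees $F$ is a spanning forest after every update and that it changes by $O(1)$ tree edges per step, with $O(\log^2 n)$ overhead. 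There is also a subtlety you omit: the Eppstein et al.\ sparsification framework requires a \emph{stable} certificate, and a randomized inner algorithm does not obviously produce one; the paper handles this by adopting a weaker notion of stability (\ref{def:stable}) and re-proving that sparsification still goes through (\ref{lem:reduc sparse}).

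\textbf{Infinite-length update sequences.} You argue that each instance is only ever asked to handle $O(\tau)$ updates. That bounds the running time, but for a randomized algorithm the failure probability of the combined scheme accumulates across phases, and there are infinitely many phases. The paper devotes a separate lemma (\ref{lem:reduc infty length}) to cap this: after a polynomial number of updates, it periodically rebuilds from scratch in a way that resets the accumulated failure probability, using three instances rather than two and a careful feeding schedule so that correctness is maintained at every step. Without some such argument, you can only claim correctness for a polynomially long prefix of the update sequence, not for arbitrary (infinite) sequences as the full theorems require.
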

\begin{proof}
	[Proof Idea]First, we can assume that the graph is sparse by the sparsification
	technique of Eppstein~et~al.~\cite{EppsteinGIN97}\footnote{One thing we have to be careful of is that the sparsification technique of \cite{EppsteinGIN97} only works if we can maintain a {\em stable} spanning forest. The spanning forests maintained by our algorithms are {\em not} stable as exactly defined in \cite{EppsteinGIN97}. This is mainly because our algorithms are randomized. However, this can be easily dealt with by slightly modifying the original notion of stability, as done in \Cref{sec:reductions}.}. 
	Next, we reduce to the case where
	the graph has maximum degree 3, which is also done implicitly in \cite{Frederickson85,Kejlberg-Rasmussen16}.
	Given a graph $G$, we maintain a graph $G'$ with maximum degree $3$
	by ``splitting'' each node in $G$ into a path in $G'$ of length
	equals to its degree. As $G$ is sparse, there are $O(n)$ nodes in
	$G'$. We assign weight 0 to edges in those paths, and 1 to original
	edges. Observe that tree-edges with weight 1 in a 2-weight minimum
	spanning forest $F'$ of $G'$ form a spanning forest $F$ of $G$.
	So we are done by maintaining $F'$. (There is a reduction from dynamic
	$k$-weight $\mst$ to dynamic $\st$ by Henzinger and King~\cite{HenzingerK99}.)
	
	Next, we reduce to the case when there is only $\tau=n^{1/2+\epsilon}$
	updates. We divide a long sequence of updates into phases of
	length $\tau$. We concurrently maintain two instances $\cA_{1}$
	and $\cA_{2}$ of a dynamic $\st$ algorithm. In odd phases, $\cA_{1}$
	maintains a spanning forest with update time $O(n^{1/2-\epsilon})$,
	and in the same time we evenly distribute the work for preprocessing
	$\cA_{2}$ into each update of this phase, which takes $O(n^{1+o(1)}/\tau)$
	time per update. So, in the next even phase, $\cA_2$ is ready to maintain a spanning forest. 
	Then we do everything symmetrically in even phases. Finally, we can
	``combine'' the two dynamic spanning forests maintained by $\cA_{1}$
	and $\cA_{2}$ into one dynamic spanning forest using additional $O(\log^{2}n)$
	update time via a standard trick (which is implicit in a reduction
	from dynamic $k$-weight $\mst$ to dynamic $\st$ by \cite{HenzingerK99}).
	In total, the update time is $O(n^{1/2-\epsilon}+n^{1+o(1)}/\tau+\log^{2}n)=O(n^{1/2-\epsilon+o(1)})$\footnote{
	Using this technique for randomized algorithms, we can only reduce from the case of the polynomial-length update sequences. 
	This is because the trick for ``combining'' two dynamic spanning forests will accumulate the failure probability over time. 
	To handle infinite-length sequences, we need a different reduction. See \ref{lem:reduc infty length}.}.
\end{proof}

Our technical ideas evolve around the notion of {\em graph expansion}. Given a graph $G=(V, E)$ and a set of nodes $S\subset V$, let $\partial_G(S)$ denote the set of edges across the cut $S$ in $G$.   
The expansion of $S$ is $\phi_G(S)=|\partial_G(S)|/\min\{|S|,|V\setminus S|\}$, and the expansion of a graph $G$ is $\phi(G)=\min_{\emptyset\neq S\subset V}\phi_G(S)$.\footnote{Readers who are familiar with the notions of expansion and conductance may observe that the two notions can be used almost interchangeably in our case since we can assume that our input graph has maximum degree at most three.}
Each of our algorithms consists of two main components. The first component is algorithms to {\em decompose} a graph into subgraphs with high expansion together with some other subgraphs that are easy to handle. This allows us to focus only on high-expansion subgraphs. Maintaining a spanning forest in a high-expansion is our second component. 
To be concrete, let us start with our first decomposition algorithm which is a building block for both our results. This algorithm takes near-linear time to decompose an input graph into a sparse subgraph and many connected components with high expansion, where the sparsity and expansion are controlled by a given parameter $\alpha$ as in the following theorem.

\begin{thm}
	\label{thm:intro:high-exp-decomp}There is a randomized
	algorithm $\cA$ that takes as inputs an undirected graph $G=(V,E)$ with
	$n\ge2$ vertices and $m$ edges and a parameter $\alpha>0$ ($\alpha$ might depend on $n$). Then, in $O(m^{1+o(1)})$ time, $\cA$
	outputs two graphs $G^{s}=(V,E^{s})$ and $G^{d}=(V,E^{d})$ with the following properties.
	\begin{itemize}
		\item $\{E^{s}, E^{d}\}$ is a partition of $E$,
		\item $G^{s}$ is sparse, i.e. $|E^{s}|\le\alpha n^{1+o(1)}$, and
		\item with high probability, each connected component $C$ of $G^{d}$ either is a singleton or
		has high expansion, i.e. $\phi(C)\ge\alpha$.
	\end{itemize}
\end{thm}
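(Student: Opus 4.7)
My plan is to prove \Cref{thm:intro:high-exp-decomp} by a recursive expander-decomposition scheme in the classical Spielman--Teng / Kannan--Vempala--Vetta style, implementing each recursive step with the near-linear-time flow machinery cited in the introduction.

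\textbf{Step 1 (Balanced-cut primitive).} I first build a subroutine $\mathsf{BalCut}(H,\alpha)$ which, given a subgraph $H$ with $n_H$ vertices and $m_H$ edges, runs in $m_H^{1+o(1)}$ time and returns either (a) a cut $(A,B)$ of $V(H)$ with $|\partial_H(A,B)|\le \alpha\cdot n^{o(1)}\cdot \min(|A|,|B|)$ and $\min(|A|,|B|)\ge n_H/n^{o(1)}$, or (b) a certificate that $\phi(H)\ge \alpha$. I implement it by playing $O(\log^2 n)$ rounds of the Khandekar--Rao--Vazirani cut-matching game, where each round is a single $(1+o(1))$-approximate maximum $s$-$t$ flow call using Peng's solver (or Sherman / Kelner--Lee--Orecchia--Sidford). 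When the matching player wins we obtain an embedded expander certifying (b). When the cut player wins we obtain a sparse cut; if it is not balanced enough I invoke the Orecchia--Zhu local cut-improvement routine to sharpen it to the size bound above. To ensure case (b) certifies the threshold $\alpha$ stated in the theorem rather than $\alpha/\mathrm{polylog}(n)$, the cut-matching game is played with the inflated parameter $\alpha\cdot n^{o(1)}$ and the subpolynomial slack is absorbed into the sparsity guarantee of the returned cut.

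\textbf{Step 2 (Recursion and analysis).} I maintain a queue of subgraphs, initialized with the connected components of $G$. Repeatedly I pop the front subgraph $H$ and run $\mathsf{BalCut}(H,\alpha)$; in case (b) I freeze $H$ as a connected component of $G^d$, and in case (a) I add $\partial_H(A,B)$ to $E^s$ and enqueue $H[A]$ and $H[B]$. Because each successful cut is balanced up to an $n^{o(1)}$ factor, the recursion depth is $D=n^{o(1)}$. At each level the subgraphs are vertex-disjoint, so summing the sparsity bound $|\partial_H(A,B)|\le \alpha\cdot n^{o(1)}\cdot \min(|A|,|B|)$ over the level gives at most $\alpha\cdot n^{1+o(1)}$ cut edges per level, and summing over the $D$ levels keeps $|E^s|\le \alpha\cdot n^{1+o(1)}$. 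For the same reason, the per-level running time is $m^{1+o(1)}$, and multiplying by $D$ gives total time $m^{1+o(1)}$. The expansion guarantee for $G^d$ is exactly what case (b) of $\mathsf{BalCut}$ certifies for every component at the moment it is frozen, and this guarantee is preserved under edge removal only in the sense that such frozen components lose no further edges during the decomposition.

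\textbf{Main obstacle.} The hard step is clearly \textbf{Step 1}. The cut-matching game is inherently bicriteria: when it fails to certify expansion it may output a cut that is only weakly balanced, and when it succeeds it does so only up to a polylogarithmic gap against the true expansion. Reconciling these losses with the clean statement ``$\phi(C)\ge \alpha$ and $|E^s|\le \alpha n^{1+o(1)}$'' requires that every polylogarithmic slack be absorbed into the $n^{o(1)}$ factors, which in turn constrains the threshold used for local cut improvement and the number of rounds of the cut-matching game. A secondary technical point is verifying that a single Peng-type approximate max-flow call suffices for each cut-matching round; this uses the fact that the matching player needs only to route a perfect matching in an auxiliary graph, not an exact flow.
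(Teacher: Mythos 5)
Your Step~1 and Step~2 reproduce exactly the ``early attempt'' that Section~\ref{sec:decomp recurse} of the paper explicitly discusses and explains \emph{cannot} work as stated, and your ``main obstacle'' paragraph correctly identifies the difficulty without actually resolving it. The gap is in the two-case dichotomy you claim for $\mathsf{BalCut}$: that a graph either admits a cut with $\min(|A|,|B|)\ge n_H/n^{o(1)}$ of sparsity $\alpha n^{o(1)}$, or has $\phi(H)\ge\alpha$. No such dichotomy holds. A graph can simultaneously fail to be an $\alpha$-expander and admit \emph{only} extremely unbalanced $\alpha$-sparse cuts (smallest side $O(1)$), in which case case (a) is unattainable. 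The cut-matching game (\Cref{thm:most balanced sparse cut}) only certifies $|S|\ge\opt(H,\alpha/c_{exp})/c_{size}$ with $c_{exp}=\Theta(\log^3 n)$, and $\opt(H,\alpha/c_{exp})$ can be $1$ even when $\phi(H)<\alpha$; so the returned cut can peel off a constant-size piece, giving recursion depth $\Omega(n)$ and running time $\Omega(mn)$. Your suggestion to ``sharpen'' an unbalanced cut with Orecchia--Zhu local improvement does not close this gap either: that routine improves expansion near a given target set, it does not manufacture balance, and in the paper it only appears in the \emph{local} decomposition (\Cref{sec:local decomp}), not the global one.

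The paper's actual fix is the level-indexed recursion of \Cref{alg:decomp}. The procedure carries a parameter $\ell\in\{1,\dots,L\}$ with $L\approx 1/\epsilon$, a geometric size scale $\bar s_1>\cdots>\bar s_L$ with $\bar s_{\ell}/\bar s_{\ell+1}=n^{\epsilon}$, and a geometric expansion scale $\alpha_1>\cdots>\alpha_L=\alpha\bar c_{exp}$ with $\alpha_\ell=\alpha_{\ell+1}\bar c_{exp}$. When the returned cut $S$ is large ($|S|\ge\bar s_{\ell+1}/\bar c_{size}$) it recurses on both sides at level $1$; when $S$ is small, it does \emph{not} split --- it recurses on $H$ itself at level $\ell+1$, and the invariant $\opt(I,\alpha_\ell)<\bar s_\ell$ (\Cref{prop:invariant}) certifies progress. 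The depth bound (\Cref{lem:bound depth}) then shows there are $O(\log n)$ left edges, $O(L\log n)$ down edges, and $O(L\log n\cdot\bar c_{size}n^{\epsilon})$ right edges in any root-to-leaf path, giving depth $n^{o(1)}$ for $\epsilon=\Theta(\sqrt{\log\log n/\log n})$. That level-tracking and the associated potential argument are precisely the missing ideas your proposal needs; ``absorbing polylog slack into $n^{o(1)}$'' does not substitute for them, because the failure mode is a \emph{small} cut, not a polylog-worse expansion bound.
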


\begin{proof}[Proof Idea (details in \Cref{sec:decomp alg})]
There are two main steps. The first step is to devise a near-linear time approximation algorithm for a problem
called \emph{most balanced sparse cut}. This problem
is to find a cut $S$ with largest number of nodes such that $|S|\leq n/2$ and
$\phi(S)\geq \alpha$.
This problem is closely related to sparsest cut and balanced cut problems
(cf. \cite{LeightonR99}). 
One way to approximately solve both problems is by using the {\em cut-matching game} framework of Khandekar,
Rao and Vazirani \cite{KhandekarRV09} together with known exact algorithms for the maximum flow problem. We modify this framework (in a rather straightforward way) so that (i) it gives a solution to the most balanced sparse cut problem and (ii) we can use {\em approximate} maximum flow algorithms instead of the exact ones. By plugging in near-linear time max flow algorithms \cite{Peng14,Sherman13,KelnerLOS14}, our algorithm runs in near-linear time.

The second step is to use the most balanced sparse cut algorithm to construct the decomposition. 
We note that a similar decomposition was constructed by Andoni~et~al. \cite{AndoniCKQWZ16} in the context of graph sketch. 
This algorithm repeatedly finds and removes a cut $S$ such that $\phi(S)<\alpha$ from the input graph. It is too slow for our purpose since it
may involve many cuts $S$ such that $|S|=1$, causing as many as $\Omega(n)$ repetitions and as much as $\Omega(mn)$ running time. 
Instead, we use the most balanced cut $S$. The idea is that if $S$ is large, the cut divides the graph into two subgraphs with similar number of nodes, and this cannot happen more than $O(\log n)$ time. It is still possible that we find a cut $S$ that $|S|$ is small, but we can argue that this does not happen often. To this end, we note that the exact algorithm and analysis is quite involved because we only have approximate guarantees about the cut. This incurs a factor of $n^{O(\sqrt{\log\log n/\log n}})=n^{o(1)}$ in the running time. 
\end{proof}

We call the above algorithm the {\em global expansion decomposition algorithm} to contrast it with another algorithm that is {\em local} in the sense that it does not read the whole graph (this algorithm will be discussed soon). 

\paragraph{The Monte Carlo Algorithm (Details in \Cref{sec:monte_carlo}).}
With the above decomposition, we already have one main subroutine for our Monte Carlo algorithm against adaptive adversaries. This subroutine is run at the preprocess with parameter $\alpha=1/n^{\epsilon}$, for some constant $\epsilon>0$, to decompose the input graph into $G^d$ and $G^s$. Now we argue that the only other main subroutine that we will need is maintaining a spanning forest on $G^d$; in other words, we do not have to worry so much about the graph $G^s$. The intuition is that $G^s$ has $O(\alpha n^{1+o(1)})=\po(n)$ edges, and thus we can maintain a spanning tree on $G^s$ in $\po(n^{1/2})$ worst-case update time using Frederickson's algorithm. (The real situation is slightly more complicated since we have to maintain a spanning tree in a graph $G'$ consisting of edges in $G^s$ {\em and} a spanning forest of $G^d$. This is because we need to get a spanning forest of $G=(V, E^d\cup E^s)$ in the end. Fortunately, although $G'$ may not have $\po(n)$ edges, we only need to slightly modify Frederickson's data structure to obtain the $\po(n^{1/2})$ update time as desired.) 
Similarly, since we can assume that there are at most $\tau=\po(n)$ edges inserted due to \Cref{lem:intro:reductions}, it will not be hard to handle edge insertions.

So, we are left with maintaining a spanning forest on each connected component $C$ of $G^d$ when it undergoes (at most $\tau$) edge deletions. First we need another tool that is a simple extension of the tool used by Kapron~et~al.~\cite{KapronKM13,GibbKKT15}. In \cite{KapronKM13,GibbKKT15}, Kapron~et~al. showed a deterministic data structure that can maintain a forest $F$  (not necessarily spanning) in a dynamic graph $G$ and answer the following query: Given a pointer to one tree $T$ in $F$, if $\partial_G(V(T))$ contains {\em exactly} one edge, then output that edge (otherwise, the algorithm can output anything). The update and query time of this data structure is polylogarithmic. 
The main tool behind it is a {\em $1$-sparse recovery} algorithm from data streams (e.g. \cite{BerindeGIKS08}).
By using an $s$-sparse recovery algorithm instead for a parameter $s$, this data structure can be easily extended to answer the following query: Given a pointer to one tree $T$ in $F$, if $\partial_G(V(T))$ contains {\em at most $s$ edges}, then output all those edges. The update and query time of this data structure is $\tilde O(s)$. We call this data structure a {\em cut recovery tree}. (See \Cref{sec:linear sketches} for details.) Now the algorithm:

\medskip\noindent{\bf Algorithm $\cB$.}  Recall that initially $\phi(C)\geq \alpha$, and we want to handle at most $\tau$ edge deletions. Consider a spanning forest $F$ of $C$ maintained at any point in time. Consider when an edge $e$ is deleted. If $e$ is not in $F$, we do nothing. If it is, let $T$ be the tree that contains $e$. The deletion of $e$ divides $T$ into two trees, denoted by $T_1$ and $T_2$. 
Our job is to find one edge in $\partial_C(V(T_1))$, if exists, to reconnect $T_1$ and $T_2$ in $F$. Assume wlog that $|V(T_1)|\leq |V(T_2)|$. First we sample $(1/\alpha+\sqrt{\tau/\alpha})\polylog(n)$ edges among edges in $C$ incident to nodes in $T_1$. This can be done in $\tilde O(1/\alpha+\sqrt{\tau/\alpha})$ worst-case time by ET tree just as Henzinger and King did in \cite{HenzingerK99}. If one of the sampled edges connects between $T_1$ and $T_2$, then we are done. If not, we use the cut recovery tree with parameter $s=\sqrt{\tau/\alpha}$ to list (at most $s$) edges in $\partial_C(V(T_1))$. If some edge in $\partial_C(V(T_1))$ is listed, then we use such edge to reconnect $T_1$ and $T_2$; otherwise, we leave $T_1$ and $T_2$ as two separated trees in $F$. 

The above algorithm takes worst-case update time $\tilde O(1/\alpha+\sqrt{\tau/\alpha})= \tilde O(n^{1/4+O(\epsilon)})$, which is $\po(n^{1/2})$ for small enough $\epsilon$. For the correctness analysis, we argue that with high probability the algorithm does not make any mistake at all over the period of $\tau$ updates. First note that we initially get the decomposition as promised by \Cref{thm:intro:high-exp-decomp} with high probability. So we will assume that this is the case, and in particular $\phi(C)\geq \alpha$ initially. Now consider three cases. 
\begin{itemize}
	\item Case 1: $|V(T_1)|\geq 2\tau/\alpha$. In this case, at the time we delete $e$ we know that 
	\begin{align}
	|\partial(V(T_1))| &\geq \alpha |V(T_1)| - \tau &&\mbox{(since we have deleted at most $\tau$ edges in total)} \nonumber\\
	& \geq \alpha |V(T_1)|/2 &&\mbox{(since $|V(T_1)|\geq 2\tau/\alpha$).} \label{eq:intro:large cut}
	\end{align}
Since there are at most $3|V(T_1)|$ edges incident to nodes in $T_1$, when we sample $(1/\alpha+\sqrt{\tau/\alpha})\polylog(n)$ edges in the first step, we will get one of the edges in $\partial(V(T_1))$ with high probability.

	\item Case 2: $|V(T_1)|< 2\tau/\alpha$ and $|\partial(V(T_1))|\geq \sqrt{\tau/\alpha}$ at the time we delete $e$.  Similarly to the previous case, since there are at most $3|V(T_1)|<6 \tau/\alpha$ edges incident to nodes in $T_1$, when we sample $(1/\alpha+\sqrt{\tau/\alpha})\polylog(n)$ edges in the first step, we will get one of the edges in $\partial(V(T_1))$ with high probability.
	
	\item  Case 3: $|V(T_1)|< 2\tau/\alpha$ and $|\partial(V(T_1))|< \sqrt{\tau/\alpha}$ at the time we delete $e$. Since $|\partial(V(T_1))|< \sqrt{\tau/\alpha}$ the cut recovery tree data structure with parameter $s= \sqrt{\tau/\alpha}$ will list all edges in $\partial(V(T_1))$ correctly in the second step of the algorithm. Thus we will always find an edge to reconnect $T_1$ and $T_2$, if there is one.
\end{itemize}

To conclude, in all cases the algorithm will find an edge to reconnect $T_1$ and $T_2$, if there is one, with high probability. Observe that the analysis exploits the fact that $C$ has high expansion (compared to $\tau$) initially by arguing (as in Case~1) that if $V(T_1)$ is large, then there will be plenty of edges in $\partial(V(T_1))$ that are not deleted. 

Finally, we provide some intuition why the analysis above holds even when we allow an adversary to see the maintained spanning forest (the adaptive adversary case). In fact, we argue that this is the case even when the algorithm {\em reveals all random choices} it has made so far to the adversary. In particular, we allow the adversary to see (i) all $(1/\alpha+\sqrt{\tau/\alpha})\polylog(n)$ edges sampled in Step 1 and (ii) the initial decomposition (from the algorithm in \Cref{thm:intro:high-exp-decomp}). We can reveal (i) because every random bit is used {\em once} to sample edges, and will not be used again in the future. Thus, knowing these random bits is useless for the adversary to predict the algorithm's behavior in the future. We can reveal (ii) because our three-case analysis only needs to assume that the decomposition works correctly initially. The only thing the adversary can exploit is when this is not the case, but this happens with a small probability. We refer to \Cref{sec:monte_carlo} for formal arguments and further details.

{\em Remark:} We note a lesson that might be useful in designing randomized dynamic algorithms against adaptive adversaries in the future. The reason that most randomized algorithms fail against adaptive adversaries is that their future behavior heavily depends on random bits they generated in the past. In contrast, the random bits our algorithm used for edge sampling are not reused, thus do not affect the future. 
The dynamic \st algorithm by Henzinger and King \cite{HenzingerK99} works against adaptive adversaries by exactly doing this. 
In addition to this, in our case, the random bits our decomposition algorithm used may affect the future since the decomposition is used throughout. However, what matters in the future is only whether the decomposition algorithm gives a correct output or not. How the output looks like does not really matter as long as it is correct, which is the case with high probability.

\paragraph{The Las Vegas Algorithm (Details in \Cref{sec:las_vegas}).} The goal now is to construct an algorithm that can {\em detect when it makes mistakes} (so that it can, e.g., restart the process). To motivate our new algorithm, let us re-examine the previous Monte Carlo algorithm when it makes mistakes. First, the initial decomposition might not be as guaranteed in \Cref{thm:intro:high-exp-decomp}, and it is not clear how we can check this. We will leave this issue aside for the moment (it will be easy to handle once we take care of other issues). Now assuming that the decomposition is correct, another issue is that Algorithm~$\cB$ may also make errors as it can happen that none of the $(1/\alpha+\sqrt{\tau/\alpha})\polylog(n)$ sampled edges are in $\partial(V(T_1))$, but there is actually an edge in $\partial(V(T_1))$ to reconnect $T_1$ and $T_2$. 
However, Case~1 of the analysis, which is the crucial part that exploits the fact that $C$ has high expansion (compared to $\tau$) initially, is still useful: in this case, we know that there {\em is} an edge to reconnect $T_1$ and $T_2$ since $|\partial(V(T_1))| \geq \alpha |V(T_1)|/2 $ (\Cref{eq:intro:large cut}). Thus in this case, the algorithm {\em knows} that it makes a mistakes if it does not find an edge to reconnect among the sampled edges. When $|V(T_1)|< 2\tau/\alpha$ however, we do not know how to distinguish between Cases 2 and 3, and thus will naively consider all edges incident to nodes in $C$. 
Using this idea with parameters slightly adjusted, we have the following algorithm.

\medskip\noindent{\bf Algorithm $\cC$.} {\em Case 1: $|V(T_1)|\geq 2\tau/\alpha$}. Sample $\polylog(n)/\alpha$ edges from edges incident to nodes in $T_1$. If one of these edges are in $\partial(V(T_1))$, then we can reconnect $T_1$ and $T_2$; otherwise, the algorithm realizes that it fails and outputs ``fail''. {\em Case 2:} Consider all edges incident to nodes in $T_1$. (Note that there are at most $3|V(T_1)|=O(\tau/\alpha)$ such edges.) If one of these edges are in $\partial(V(T_1))$, then we can reconnect $T_1$ and $T_2$; otherwise, $T_1$ and $T_2$ become separated trees in the maintained forest.

The above algorithm is clearly Las Vegas, as it realizes when it makes mistakes. This is also the case even when we take into account the fact that the initial decomposition may fail: when Algorithm $\cC$ outputs ``fail'' it means that either (i) $\cC$ does not get an edge in $\partial(V(T_1))$, which is guaranteed to exist if the decomposition is correct, as a sample, or (ii) the decomposition itself is incorrect that such edge does not exists. In other words, the output ``fail'' of $\cC$ capture the failures of its own and of the decomposition algorithm. 

Now, observe that the update time of $\cC$ is dominated by the second step, which is $\tilde O(\tau/\alpha)$. This is {\em not} the $\po(n^{1/2})$ as we desire (recall that $\tau=n^{1/2+\epsilon}$ and $\alpha=1/n^{\epsilon}$). Unfortunately, we do not know how to improve $\cC$'s update time. So, instead we have to be more clever in using $\cC$. The plan is the following. We will divide the sequence of at most $\tau$ updates on each high-expansion component $C$ into {\em phases}, where each phase consists of $\tau'=\po(n^{1/2})$ updates. 
In the beginning of each phase, we decompose $C$ further into components whose expansion is $\alpha'\leq \alpha$. (Ideally, we want $\alpha'$ to be as high as $\alpha$, but there are some limits to this, as we will see shortly.) 
In each component, say $C'$, we run algorithm $\cC$ as before, but with parameters $\alpha'$ and $\tau'$. The correctness of the algorithm is guaranteed in the same way as before because $C'$ has expansion at least $\alpha'$ initially. The worst-case update time is then $\tilde O(\tau'/\alpha')$. 
Now, observe that if the decomposition algorithm run in the beginning of each phase take $\lambda$ time, then this cost can be charged to $\tau'$ updates in the phase, causing the cost of $\lambda/\tau'$ on average. By a standard technique (similar to the proof sketch of \Cref{lem:intro:reductions}), we can turn this averaged time into a worst-case update time. Thus, the total worst-case update time of the algorithm will be 
\begin{align}
  \tilde O(\tau'/\alpha' + \lambda/\tau'). \label{eq:intro:las vegas update time}
\end{align}
Our goal is to make the above update time be $\po(n^{1/2})$. To do this, there is one obstruction though: If we use the decomposition algorithm from \Cref{thm:intro:local decomp} in the beginning of every phase, then $\lambda$ can be as large as $O(n^{1+o(1)})$ (when the connected component is big). It will then be impossible to make the above update time be $\po(n^{1/2})$ (note that $\tau' = \po(n^{1/2})$\thatchaphol{I changed $\alpha'<1$ to $\tau' = \po(n^{1/2})$}). So, we need a different algorithm for the initial decomposition of each phase, and in particular it should not read the whole connected component. The last piece of our algorithm is such a decomposition algorithm, which we call a {\em local algorithm} as it does not need to read the whole connected component we are trying to decompose.

\paragraph{Local Expansion Decomposition (\Cref{sec:local decomp}).} This algorithm, denoted by $\cA'$, operates in the {\em local} setting where there is a graph $G_b=(V, E_b)$ represented by an adjacency list stored in a memory, which was not read by the algorithm. (In our case, $G_b$ will be the connected component $C$.) Algorithm $\cA'$ then takes parameters $\alpha_b$ and $\epsilon'$, and a set of edge deletions $D\subseteq E_b$ as inputs. It then gives a decomposition of $G=(V, E_b-D)$ that is similar to that in \Cref{thm:intro:local decomp} at a high level; some major differences are: 
\begin{itemize}
	\item It takes $O(\frac{|D|^{1.5+\epsilon'}}{\alpha_{b}^{3+\epsilon'}}n^{o(1)})$ time, and in particular does not need to read the whole graph $G_b$. 
	\item It guarantees to outputs some ``desired'' decomposition only when $\phi(G_b)\geq \alpha_b$. 
	\item In the ``desired'' decomposition, the expansion of each high-expansion component $C$ is $\phi(C) = \Omega(\alpha_{b}^{1/\epsilon'})$. 
\end{itemize}

We note the trade-off parameter $\epsilon'$ that appears above: when $\epsilon'$ is small, the algorithm is fast but has a bad expansion guarantee in the output, and when $\epsilon'$ is big, the algorithm is slow but has a good expansion guarantee. We will have to choose this parameter carefully in the end. We state the result about the local decomposition algorithm in more details. (For full details, see \Cref{sec:local decomp}.)

\begin{thm}
	\label{thm:intro:local decomp}For
	any constant $\epsilon'\in(0,1)$, there is an algorithm $\cA'$ that
	can do the following:
	\begin{itemize}
		\item $\cA'$ is given pointers to $G_{b}$,$D$ and $\alpha_{b}$ stored in a memory:  
		$G_{b}=(V,E_{b})$ is a $3$-bounded degree graph with $n$ nodes represented by an adjacency list. 
		 $D\subset E_{b}$ is a set of edges in $G_{b}$. $\alpha_{b}$ is an expansion parameters.
		  Let $G=(V,E)=(V,E_{b}-D)$ be the graph that $\cA'$
		will compute the decomposition on. 	
		\item Then, in time $O(\frac{|D|^{1.5+\epsilon'}}{\alpha_{b}^{3+\epsilon'}}n^{o(1)})$,
		$\cA'$ either 
		\begin{itemize}
			\item reports failure meaning that $\phi(G_{b}) < \alpha_b$, 
			\item outputs two graphs $G^{s}=(V,E^{s})$ and $G^{d}=(V,E^{d})$ (except the largest connected component of $G^d$) where $\{E^s,E^d\}$ is a partition of $E$.
		\end{itemize}
    	(Note that in the latter case it is still possible that $\phi(G_{b}) < \alpha_b$, but $\cA'$ cannot detect it.)
		\item Moreover, if $\phi(G_{b}) \ge \alpha_b$, then with
		high probability we have
		
		\begin{itemize}
			\item $|E^{s}| = O(|D|/\alpha_{b})$, and
			\item each connected component $C$ of $G^{d}$ either is a singleton or
			has high expansion: $\phi(C) = \Omega(\alpha_{b}^{1/\epsilon'})$.
		\end{itemize}
	\end{itemize}
\end{thm}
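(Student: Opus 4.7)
The plan is to exploit the fact that a small set of edge deletions can only damage expansion \emph{locally}, and then peel off the damaged low-expansion pieces using a local flow-based cut-improvement subroutine in the style of Orecchia--Zhu~\cite{OrecchiaZ14}, recursing only $O(1/\epsilon')$ times so that the expansion loss telescopes to $\alpha_b^{1/\epsilon'}$.

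\emph{Step 1: Locality of bad cuts.} Assume $\phi(G_b)\ge\alpha_b$. For any cut $S$ in $G=G_b-D$ with $|S|\le |V\setminus S|$ and $\phi_G(S)<\alpha_b/2$, we have $|\partial_{G_b}(S)|\le |\partial_G(S)|+|D|\le \alpha_b|S|/2+|D|$, while $|\partial_{G_b}(S)|\ge \alpha_b|S|$; hence $|S|\le 2|D|/\alpha_b$. In particular, every sparse cut in $G$ has its small side of volume $O(|D|/\alpha_b)$ and must cluster around the set $A$ of endpoints of $D$ (since otherwise, none of its boundary in $G_b$ has been cut by $D$, contradicting $\phi(G_b)\ge\alpha_b$). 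This reduces the task to searching for low-expansion cuts near $A$.

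\emph{Step 2: Local cut peeling.} We invoke a local cut-improvement procedure that, given a seed set $A'\supseteq A$, runs approximate max-flow restricted to a ball of volume $O(\mathrm{vol}(A')/\alpha_b^{O(1)})$ and either certifies (by saturating a flow) that no cut of expansion $<\alpha_b^{c}$ contains the small side near $A'$, or returns such a cut $S$; this costs $\tilde O(|A'|^{1.5}/\alpha_b^{O(1)})$ by plugging in near-linear-time approximate max-flow~\cite{Peng14,Sherman13,KelnerLOS14}. Here $c<1$ is the approximation exponent of the subroutine. Whenever a cut $S$ is returned, we add $\partial_G(S)$ to $E^s$, append $\partial_G(S)$ to the current deletion set, and re-seed. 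Step~1 bounds $|S|=O(|D|/\alpha_b)$ per extraction, and a potential argument on the remaining ``damage budget'' $|D|+|E^s|$ shows that $|E^s|=O(|D|/\alpha_b)$ overall, as required.

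\emph{Step 3: Recursion to drive the expansion down.} After one pass, each surviving connected component $C$ has no sparse cut \emph{containing} a witness in $A$ of expansion $<\alpha_b^c$; but it might still harbour a sparse cut that the local procedure failed to certify. We therefore recurse inside each such $C$, now with expansion parameter $\alpha_b^c$ and the freshly accumulated deletion set (which has size $O(|D|/\alpha_b^{O(1)})$). After $k=\lceil 1/\epsilon'\rceil$ iterations the effective parameter has dropped to $\alpha_b^{c^k}=\Omega(\alpha_b^{1/\epsilon'})$, at which point any surviving component is genuinely an $\Omega(\alpha_b^{1/\epsilon'})$-expander. The special treatment of the largest component (which the theorem excludes from the expansion guarantee) takes care of the fact that we cannot afford to certify expansion of a piece whose volume we never read.

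\emph{Main obstacle.} The delicate part is the running-time accounting across levels: each recursion inflates the effective deletion set $|D|$ by a factor $1/\alpha_b^{O(1)}$ and the per-level cost grows like $|D|^{1.5}/\alpha_b^{O(1)}$, so the bound $O(|D|^{1.5+\epsilon'}/\alpha_b^{3+\epsilon'}n^{o(1)})$ forces a careful choice of the per-level approximation exponent $c$ and the recursion depth. Concretely, one must absorb the $k$-fold blow-up into the single extra $\epsilon'$ in the exponent of $|D|$ (and into the extra $\epsilon'$ in the exponent of $1/\alpha_b$), while using a union bound over $k=O(1/\epsilon')$ invocations of the randomized max-flow subroutine to preserve high-probability correctness, and arranging that the local procedure never has to touch the \emph{unread} interior of $G_b$, only balls of total volume $O(|D|/\alpha_b^{O(1)})$ around $A$.
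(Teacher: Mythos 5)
Your plan matches the paper's proof at a high level: use the observation that when $\phi(G_b)\ge\alpha_b$, every $(\alpha_b/2)$-sparse cut of $G=G_b-D$ has its small side of size $O(|D|/\alpha_b)$ and must overlap the endpoints of $D$; then repeatedly invoke a local Orecchia--Zhu-style cut-improvement routine around the damaged region; then recurse through $O(1/\epsilon')$ levels with a degrading expansion parameter. The genuine gap is in the step you yourself flag as ``delicate'': the peeling cannot just return an arbitrary sparse cut near the seed. If the local procedure is allowed to return a cut of size $1$, the number of peels is unbounded and the running time cannot stay near $|D|^{1.5}$. What the paper actually extracts is a \emph{locally balanced} sparse cut, one whose size is within a $c_{size}=O(\Delta/\alpha_b)$ factor of the largest $(\alpha/c_{exp})$-sparse cut that overlaps the seed (its ``LBS cut''), and it maintains a decreasing sequence of size thresholds $\bar s_1>\cdots>\bar s_L$ with $\bar s_\ell/\bar s_{\ell+1}=\bar s_1^{\epsilon'}$ together with the invariant $\opt(I,\alpha_\ell)<\bar s_\ell$. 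This invariant is exactly what caps the number of consecutive small-side extractions between level transitions by $O(c_{size}\bar s_1^{\epsilon'})=O(|D|^{\epsilon'}/\alpha_b)$, and that factor is where the extra $\epsilon'$ in the exponent of $|D|$ comes from. Your ``potential argument on the remaining damage budget'' gestures at this but does not supply it, and without the balance guarantee on each extracted cut the argument does not close.

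Two secondary issues. Your degradation formula $\alpha_b^{c^k}$ with $c<1$ is backwards: with $c<1$ one has $c^k\to 0$ and hence $\alpha_b^{c^k}\to 1$, which would be a \emph{stronger} expansion guarantee than $\alpha_b$, not a weaker one. The actual degradation per level is a fixed multiplicative factor $c_{exp}=\Theta(\Delta/\alpha_b)$, so after $L\le 1/\epsilon'$ levels one lands at $\alpha=(\alpha_b/6\Delta)^{1/\epsilon'}$. Also, the local cut routine is only valid when the overlap parameter satisfies $\sigma\ge 3|B_H|/|V_H-B_H|$; when the seed set $B_H$ is a large constant fraction of the current subgraph $V_H$ that condition fails, and the paper's algorithm falls back to running the \emph{global} expansion decomposition on $H$ (whose size is $O(|B_H|/\sigma)$, so it is affordable). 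Your sketch never addresses this case, but it is where the $n^{o(1)}$ factor enters the stated running time.
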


\begin{proof}
	[Proof Idea (details in \ref{sec:local decomp})] Similar to the proof of \ref{thm:intro:high-exp-decomp}, there are two main
	steps. 
	The first step is to devise a \emph{local}
	approximation algorithm for a problem called \emph{locally balanced sparse
		cut} \emph{(LBS cut)}. In this
	problem, we are given a graph $G=(V,E)$, a target set $A\subset V$
	and a parameter $\alpha$. Then we need to find a $\alpha$-sparse
	cut $S$ (i.e. $\phi_{G}(S)<\alpha$) where $|S|\le|V-S|$ such that
	$|S|$ is larger than all $\alpha$-sparse cuts which are ``near'' the target set
	$A$. (``Nearness'' is defined precisely in \ref{def:overlapping}).
	To compare, in the most balanced sparse cut problem, $|S|$ needs to be
	larger than \emph{all} $\alpha$-sparse cuts. By slightly modifying and analyzing the
	algorithm by Orecchia and Zhu \cite{OrecchiaZ14} for a related problem
	called the {\em local cut improvement problem,} we obtain an approximation algorithm
	which is local (i.e. its running time depends essentially only on
	$|A|$, and not $|V|$).
	
	The second step to obtain the decomposition is to find an approximate LBS cut where the target set $A$ is the endpoints
	of $D$ (together with some additional nodes) and recurse on both sides.
	To bound the running time, we maintain the same kind of invariant
	as in the proof of \ref{thm:intro:high-exp-decomp} throughout the recursion. But, in order to argue
	that the invariant holds, the analysis is more involved. 
	The main reason is because we compute LBS cuts which
	have a weaker guarantee, instead of computing most balanced sparse
	cuts as in the global expansion decomposition algorithm. 
	However, an important observation is that,
	when $\phi(G_{b})\ge\alpha_{b}$, any $(\frac{\alpha_{b}}{2})$-sparse
	cut $S$ in $G=G_{b}-D$ must be ``near'' to $D$. Intuitively,
	this is because the expansion of $S$ get halved after deleting edges in
	$D$. This justifies why it is enough to find an approximate $(\frac{\alpha_{b}}{2})$-sparse
	LBS cut instead of finding an approximate most balanced sparse cut. As our approximate LBS cut
	algorithm is local, we can output the decomposition in time essentially
	independent from the size of $G$.
\end{proof}

\thatchaphol{TO WRITE: The input $D$ will be of size $\Theta(n^{1/2+\epsilon})$. So it is important that the dependency on $|D|$ is subquadratic, which is the case here.}

We now finish off our Las Vegas algorithm using the above local decomposition algorithm $\cA'$. Let $\epsilon$ be a very small constant, and we let $\tau=n^{1/2+\epsilon}$ and $\alpha=n^{\epsilon}$. Recall that we want to maintain a spanning forest of a connected component $C$ undergoing at most $\tau$ edge deletions that has expansion $\alpha$ initially, by dividing into phases of $\tau'$ updates. 
In the beginning of each phase, we invoke $\cA'$ with $G_b=C$, $\alpha_b=\alpha$, 
$\epsilon' = \sqrt{\epsilon}$, and $D$ being the set of edges in $C$ deleted so far. 
Since we start with a component $C$ with expansion at least $\alpha$, each high-expansion component $C'$ in the resulting decomposition has expansion at least $\alpha'=\Omega(\alpha^{1/\epsilon'})$. Since $|D|\leq \tau$, this algorithm takes time $\lambda=O(\frac{|\tau|^{1.5+\epsilon'}}{\alpha^{3+\epsilon'}}n^{o(1)})$. By plugging these values in \Cref{eq:intro:las vegas update time}, we have that the update time of our Las Vegas algorithm is 
\begin{align*}
 \tilde O(\frac{\tau'}{\alpha'} + \frac{\lambda}{\tau'}) & =\tilde O\left(\frac{\tau'}{\alpha'} + \frac{|\tau|^{1.5+\epsilon'}}{\alpha^{3+\epsilon'}\tau'}n^{o(1)}\right)
\end{align*}
It is left to pick right parameters to show that the above is $\po(n^{1/2})$. To this end, we choose $\tau' = \alpha'\times n^{1/2-\epsilon}$, so that the first term in the update time is $n^{1/2-\epsilon}$. The second term then becomes $n^{1/4+O(\epsilon+\epsilon\epsilon'+\epsilon/\epsilon'+\epsilon')}$, which is $\po(n^{1/2})$ when $\epsilon$ is small enough.

Finally, we note that the above algorithm is Las Vegas even though the local algorithm $\cA'$ is Monte Carlo for the same reason as we argued before for the case of global algorithm.

\subsection*{Organization}

We first introduce relevant notations and definitions
in \ref{sec:prelim}. By modifying and combining the known techniques,
we present basic tools and give a reduction that proves \ref{lem:intro:reductions}
in \ref{sec:basic}. In \ref{sec:decomp alg}, we show the global expansion
decomposition algorithm which is the backbone of all our dynamic $\st$
algorithms. Then, we extend the idea to show the novel local expansion
decomposition algorithm which is a crucial tool for our Las Vegas
algorithm, in \ref{sec:local decomp}. Finally, we show how all ideas
fit together and obtain the Monte Carlo algorithm in \ref{sec:monte_carlo}
and the Las Vegas algorithm in \ref{sec:las_vegas}. 

\ref{sec:formalization}
contains formal definitions of oblivious and adaptive adversaries. \ref{sec:Omitted-proof} contains omitted proofs.

\section{Preliminaries\label{sec:prelim}}

In the following, we say that an event $e$ occurs \emph{with high
probability} if $\Pr[e]\ge1-1/n^{c}$ where $n$ is the size of the
problem instance depending on the context (in this paper, $n$ is
usually a number of nodes in a graph), and $c$ is a fixed constant
which can be made arbitrarily large. 

Given a graph $G=(V,E)$ and any cut $S\subset V$, we have the following
definitions
\begin{itemize}
\item the set of cut edges is $\partial_{G}(S)=\{(u,v)\in E\mid u\in S$
and $v\notin S\}$,
\item the set of edges induced by the cut $S$ is $E_{G}(S)=\{(u,v)\in E\mid u,v\in S\}$,
\item the cut size of $S$ is $\delta_{G}(S)=|\partial_{G}(S)|$, 
\item the expansion of $S$ is $\phi(S)=\delta(S)/\min\{|S|,|V\setminus S|\}$, 
\item the volume of $S$ is $vol_{G}(S)=\sum_{u\in S}\deg(u)=2|E_{G}(S)|+\delta_{G}(S)$,
i.e. the number of endpoints of edges in $E$ incident to $S$.
\end{itemize}
$V(G)=V$ is the set of nodes of $G$ and $E(G)=E$ is the set of
edges of $G$. The expansion of the graph $G$ is $\phi(G)=\min_{\emptyset\neq S\subset V}\phi(S)$.

For any graph $G=(V,E)$, the \emph{incidence matrix} $B\in\{-1,0,1\}^{\binom{|V|}{2}\times|V|}$
of $G$ is defined as following: 
\[
B_{e,u}=\begin{cases}
1 & \mbox{if }e=(u,v)\in E\\
-1 & \mbox{if }e=(v,u)\in E\\
0 & \mbox{otherwise.}
\end{cases}
\]
Let $B_{u}$ be the $u$-th column vector of $B$. We will use the
following simple fact.
\begin{prop}
\label{thm:incidence matrix}For any cut $S\subset V$ and $(i,j)\in\binom{|V|}{2}$,
the $(i,j)$-th entry of the vector $\sum_{u\in S}B_{u}$ is non-zero
iff $(i,j)\in\partial_{G}(S)$.
\end{prop}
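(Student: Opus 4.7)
The plan is a direct computation by case analysis on the position of the endpoints of the pair $(i,j)$ relative to $S$. First I would unpack the definition: the $(i,j)$-th entry of $\sum_{u \in S} B_u$ equals $\sum_{u \in S} B_{(i,j),u}$, and by the definition of $B$, the summand $B_{(i,j),u}$ is zero unless $u \in \{i,j\}$ and $(i,j) \in E$. Hence the whole sum collapses to a contribution from at most the two vertices $i$ and $j$.

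Next I would split into cases. If $(i,j) \notin E$, every summand vanishes, so the entry is $0$ and $(i,j) \notin \partial_G(S)$, which agrees with the claim. Assume $(i,j) \in E$; by the orientation convention used in the definition of $B$, we have $\{B_{(i,j),i}, B_{(i,j),j}\} = \{+1,-1\}$ (with the specific sign depending on which of $(i,j)$ or $(j,i)$ labels the edge, but this is irrelevant for the argument). Now consider the three subcases: (a) both $i,j \in S$, in which case the sum is $(+1)+(-1) = 0$; (b) neither $i$ nor $j$ is in $S$, in which case the sum is $0$ trivially; (c) exactly one of $i, j$ is in $S$, in which case the sum is $\pm 1 \neq 0$.

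Finally I would observe that subcase (c) is precisely the definition of $(i,j) \in \partial_G(S)$, while subcases (a) and (b) are exactly the situations where $(i,j) \in E \setminus \partial_G(S)$. Combining these with the earlier case $(i,j) \notin E$ yields the biconditional: the $(i,j)$-th entry is nonzero if and only if $(i,j) \in \partial_G(S)$.

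There is no real obstacle here; the statement is essentially a telescoping cancellation observation (each interior edge of $S$ contributes $+1$ and $-1$ and cancels), and the case analysis above is the cleanest route. The only minor bookkeeping point to watch is that the orientation of an edge in the definition of $B$ assigns exactly one $+1$ and one $-1$ to its two endpoints, which is precisely what guarantees cancellation when both endpoints lie in $S$.
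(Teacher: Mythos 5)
Your proof is correct, and since the paper states this as a simple proposition without providing a proof, your case analysis is exactly the straightforward argument the authors implicitly rely on: the contribution to the $(i,j)$-th entry comes only from $u\in\{i,j\}\cap S$, the signs $\pm 1$ cancel when both endpoints lie in $S$, and a single surviving $\pm 1$ occurs precisely when exactly one endpoint is in $S$, i.e.\ when $(i,j)\in\partial_G(S)$.
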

For any graph $G=(V,E)$, let $F\subseteq E$ be a forest in $G$.
We say that an edge $e\in E$ is a \emph{tree edge }if $e\in F$,
otherwise $e$ is a \emph{non-tree edge}. For any cut $S\subset V$,
let $nt_{G,F}(S)=(E_{G}(S)\cup\partial_{G}(S))\setminus F$ be a set
of non-tree edges in $G$ incident to $S$ and let $nt\_vol_{G,F}(S)=2|E_{G}(S)\setminus F|+|\partial_{G}(S)\setminus F|$
be the number of endpoints of edges in $E\setminus F$ incident to
$S$.

We say that $T$ is a tree in forest $F$, or just $T\in F$, if $T$
is a connected component in $F$. We usually assume that there is
a pointer to each tree $T\in F$. For readability, we usually write
$T$ instead of a pointer to $T$ in functions of data structures,
such as $\textsf{tree\_size}(T)$ in \ref{def:aug ET tree} and $\cutset k(T)$
in \ref{def:cut-recovery-tree}.

\subsection{Augmented ET Tree\label{sec:ET tree}}

We define the following data structures for convenience. Just by combining
the functionality of both ET tree by Henzinger and King \cite{HenzingerK99}
with link-cut tree by Sleator and Tarjan \cite{SleatorT83}, that
can handle $\textsf{path}$ operation, we have the following:
\begin{defn}
[Augmented ET Tree]\label{def:aug ET tree}Augmented Euler Tour (ET)
tree is a data structure that preprocesses a graph $G=(V,E)$ and
a forest $F\subseteq E$ as inputs and handles the following operations:
\begin{itemize}
\item $\instree(e)$: add edge $e$ into $F$, given that $e\in E\setminus F$
is a non-tree edge and adding does not cause a cycle in $F$.
\item $\deltree(e)$: remove edge $e$ from $F$, given that $e\in F$. 
\item $\insnon(e)$: insert edge $e$ into $E$, given that $e\notin E$.
\item $\delnon(e)$: delete edge $e$ from $E$, given that $e\in E\setminus F$. 
\item $\textsf{tree\_size}(T)$: return $|V(T)|$, given a pointer to\emph{
}$T$ which is a tree in $F$.
\item $\textsf{count\_tree}()$: return a number of connected components
in $F$.
\item $\textsf{find\_tree}(v)$: return a pointer to tree $T_{v}$ where
$v\in T_{v}$ and $T_{v}$ is a tree in $F$.
\item $\textsf{sample}(T)$: sample (using fresh random coins) an edge from
$nt_{G,F}(S)$ where $S=V(T)$ with probability proportional to its
contribution to $nt\_vol_{G,F}(S)$, given a pointer to\emph{ }$T$
which is a tree in $F$. More precisely, an edge from $nt_{G,F}(S)\cap\partial_{G}(S)$
and $nt_{G,F}(S)\cap E_{G}(S)$ is sampled with probability $\frac{1}{nt\_vol_{G,F}(S)}$
and $\frac{2}{nt\_vol_{G,F}(S)}$ respectively.
\item $\textsf{list}(T)$: list all edges incident to $T$, given a pointer
to $T$ which is a tree in $F$.
\item $\textsf{path}(u,v,i)$: return the vertex $w$ where the distance
from $u$ to $w$ in $F$ is $i$ and $w$ is in a unique path from
$u$ to $v$, given that $u$ and $v$ are in the same tree in $F$.
\end{itemize}
\end{defn}
\begin{thm}
\label{thm:aug ET tree}There is an algorithm for augmented ET tree
$\cE$ for a graph $G$ with $n$ nodes and $m$ initial edges and
a forest $F$ with preprocessing time $O(m)$. $\cE$ can handle all
operations in $O(\log n)$ time except the following: $\textsf{tree\_size}$
in time $O(1)$, $\textsf{count\_tree}$ in time $O(1)$, and $\textsf{list}(T)$
in time $O(nt\_vol_{G,F}(T))$.
\end{thm}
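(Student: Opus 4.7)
The plan is to run two standard structures in parallel over the same forest $F$: an Euler tour tree in the style of Henzinger and King~\cite{HenzingerK99} that carries the heavy subtree-aggregated information (sizes, non-tree edge bookkeeping), and a link-cut tree of Sleator and Tarjan~\cite{SleatorT83} that handles the $\textsf{path}$ query. Every update operation ($\instree$, $\deltree$, $\insnon$, $\delnon$) is forwarded to both structures, so initialization costs $O(m)$ and each update contributes $O(\log n)$ from each side.

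For the ET side I would represent each tree $T\in F$ by a balanced BST over its Euler tour. Maintaining (i) subtree sizes and the set of tour positions of each vertex $v$ yields $\textsf{tree\_size}$ at the root in $O(1)$, (ii) parent-pointers to the BST root yield $\textsf{find\_tree}$ in $O(\log n)$, and (iii) a counter of live BST roots, updated on each join/split, yields $\textsf{count\_tree}$ in $O(1)$. The operations $\instree$ and $\deltree$ reduce to $O(1)$ splits/joins of tours, and $\insnon$, $\delnon$ only touch auxiliary lists attached to the endpoints; hence all of these run in $O(\log n)$.

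The main novel piece is $\textsf{sample}(T)$. At every occurrence of a vertex $v$ in the tour I keep a doubly-linked list of the non-tree edges incident to $v$ (each non-tree edge holds pointers into both of its endpoint lists so that $\insnon/\delnon$ remain $O(\log n)$). At every BST node I augment with the sum of the list sizes in its subtree, i.e.\ the contribution to $nt\_vol_{G,F}(V(T))$. To sample, I descend from the root, branching into left child, right child, or the node's own list with probabilities proportional to these three weights, and then pick an edge from the chosen list uniformly at random. This selects each stored \emph{endpoint} with probability $1/nt\_vol_{G,F}(V(T))$; since an edge in $E_G(V(T))\setminus F$ contributes two endpoints and an edge in $\partial_G(V(T))\setminus F$ contributes one, the marginal probabilities are $2/nt\_vol_{G,F}(V(T))$ and $1/nt\_vol_{G,F}(V(T))$ respectively, matching \Cref{def:aug ET tree}. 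The descent touches $O(\log n)$ BST nodes.

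For $\textsf{list}(T)$, an in-order walk of the ET BST visits every occurrence and outputs its stored non-tree edges, giving time proportional to $nt\_vol_{G,F}(V(T))$ as required. For $\textsf{path}(u,v,i)$, the parallel link-cut tree supports $\textsf{access}$, LCA and path-expose in $O(\log n)$ amortized; after exposing the path from $u$ to $v$ the $i$-th vertex on it can be located by one $O(\log n)$ descent on the splay-path representation. The main subtle step I would double-check is the $\textsf{sample}$ probability calculation, namely that the weighted three-way descent together with the endpoint-multiplicity accounting reproduces exactly the ratio $2:1$ between edges in $E_G(S)\setminus F$ and $\partial_G(S)\setminus F$; everything else is a direct application of the Henzinger--King and Sleator--Tarjan structures.
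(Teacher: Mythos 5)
Your proposal matches the paper's proof: run a Henzinger--King ET tree for all operations except $\textsf{path}$, and maintain a parallel dynamic-tree structure (link-cut tree or top tree) over the same forest $F$ solely to answer $\textsf{path}$ by binary search along the exposed path. Two small points to tighten: splay-based link-cut trees give $O(\log n)$ \emph{amortized} time, so in a paper whose point is worst-case bounds you should use the top tree of \cite{AlstrupHLT05} as the paper also suggests; and you should store each vertex's list of incident non-tree edges at a single designated occurrence in its Euler tour (not at every occurrence), otherwise the subtree weights driving $\textsf{sample}$ and the output size of $\textsf{list}$ would overcount.
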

For completeness, we include the proof in \ref{sec:proof aug ET tree}.

\subsection{Definition of Dynamic $\protect\st$}
\begin{defn}
[Dynamic $\st$]A dynamic $\st$ algorithm $\cA$ is given an initial
graph to be preprocessed, and then $\cA$ must return an initial spanning
forest. Then there is an online sequence of edge updates, both insertions
and deletions. After each update, $\cA$ must returns the list of
edges to be added or removed from the previous spanning tree to obtain
the new one. We say $\cA$ is an incremental/decremental $\st$ algorithm
if the updates only contain insertions/deletions respectively.
\end{defn}
The time an algorithm uses for preprocessing the initial graph and
for updating a new spanning forest is called \emph{preprocessing time
}and\emph{ update time} respectively.

In this paper, we consider the problem where the update sequence is
generated by an adversary.\footnote{This is in contrast to the case where the update sequence is generated
by a random process e.g. in \cite{NikoletseasRSY95}.} There are two kinds of adversaries: oblivious and adaptive. An \emph{oblivious
adversary} initially fixes the whole update sequence, and then shows
each update to the algorithm one by one. On the other hand, an \emph{adaptive
adversary} generates each update after he sees the previous answer
from the algorithm. So each update can depend on all previous answers
from the algorithm and all previous updates from the adversaries itself.
Adaptive adversaries are obviously as strong as oblivious ones. In
\ref{sec:formalization}, we formalize these definitions precisely.
By \ref{thm:det adv}, we can assume that all adversaries are deterministic. 

Throughout the paper, we use the following notation. Given a dynamic
$\st$ algorithm $\cA$ and an adversary $f$, let $G_{0}$ be the
initial graph given by $f$. Let $F_{0}$ be the the initial forest
returned from $\cA$. For $i\ge1$, let $G_{i}$ and $F_{i}$ be the
graph and the forest after the $i$-th update. More precisely, $G_{i}$
is obtained by updating $G_{0}$ with all updates from $f$ from time
$1$ to $i$, and $F_{i}$ is obtained by updating $F_{0}$ with all
answers from $\cA$ (which describe which tree-edges to be added and
removed) from time $1$ to $i$. Since $f$ can be assumed to be deterministic,
we have that if $\cA$ is deterministic, then $G_{i}$ and $F_{i}$
are determined for all $i\ge0$. On the other hand, if $\cA$ is randomized
and use a string $R$ as random choices, then $G_{i}$ and $F_{i}$
are random variables depending on $R$ for all $i\ge0$. 
\begin{defn}
[Correctness]For any $p\in[0,1]$ and $T\in\mathbb{N}\cup\{\infty\}$,
a dynamic $\st$ algorithm \emph{$\cA$ }works correctly against adaptive
adversaries with probability $p$ for the first $T$ updates iff,
for any fixed adaptive adversary, we have $\Pr_{R}[F_{i}$ is a spanning
forest of $G_{i}]\ge p$ for \emph{each} $0\le i\le T$ where $R$
is a random strings used by $\cA$ as its random choices. If $T=\infty$,
then we omit the phrase ``for the first $T$ updates''. If $p=1$,
then we say \emph{$\cA$ }works correctly against adaptive adversaries
with certainty.\label{def:adversary}
\end{defn}

\begin{defn}
[Running Time]For any $p\in[0,1]$ and $T\in\mathbb{N}\cup\{\infty\}$,
a dynamic $\st$ algorithm \emph{$\cA$, }that works against adaptive
adversaries for the first $T$ updates, has preprocessing time $t_{p}$
and worst-case update time $t_{u}$ with probability $p$ iff, for
any fixed adaptive adversary, we have $\Pr_{R}[$$\cA$ preprocesses
$G_{0}$ and returns $F_{0}$ in time at most $t_{p}$$]\ge p$ and
$\Pr_{R}[$$\cA$ updates $F_{i-1}$ to be $F_{i}$ in time at most
$t_{u}$$]\ge p$ for \emph{each} $1\le i\le T$ where $R$ is a random
strings used by $\cA$ as its random choices. \label{def:running time}
\end{defn}
Suppose that $F_{i}$ is a spanning forest of $G_{i}$. If at time
$i+1$, a tree-edge $e\in F_{i}$ is deleted and $F_{i}\setminus e$
no longer spans $G_{i+1}=G_{i}\setminus e$. If $e'$ is such that
$F_{i}\cup e'\setminus e$ spans $G_{i+1}$, then we call that $e'$
is a \emph{replacement edge}. Sometimes we refer to the sequences
$G_{0},G_{1},\dots$ and $F_{0},F_{1},\dots$ as $G$ and $F$, respectively.
We call $G$ and $F$\emph{ }a \emph{dynamic graph }and a \emph{dynamic
spanning forest} respectively. We say that $\cA$ \emph{maintains
$F$ in $G$}. $G$ has $m$ \emph{initial edges} if $E(G_{0})\le m$,
and $G$ has at most $m$ edges if $E(G_{i})\le m$ for all $i$.

\section{Basic Tools\label{sec:basic}}

In this section, we present some useful tools for our dynamic $\st$
algorithms. All the tools presented in this section are obtained by
combining or modifying the existing techniques. This include 2-dimensional
ET Tree in \ref{sec:2d ET tree}, cut recovery tree in \ref{sec:linear sketches},
and a reduction which greatly simplifies our goal in \ref{sec:reductions}.

\subsection{2-dimensional ET Tree\label{sec:2d ET tree}}

The following data structure is useful for both of our Monte Carlo
and Las Vegas dynamic $\st$ algorithms. It will be used to help maintaining
a spanning forest in a graph with few non-tree edges.
\begin{defn}
[2-dim ET Tree]\label{def:2d ET tree}2-dimensional (2-dim) ET tree
is a data structure that preprocesses a graph $G=(V,E)$ and a forest
$F\subseteq E$ as inputs and handles the following operations:
\begin{itemize}
\item $\instree(e)$, $\deltree(e)$, $\insnon(e)$, $\delnon(e)$ and $\textsf{find\_tree}(u)$:
same definitions as in \ref{def:aug ET tree}.
\item $\textsf{find\_edge}(T)$: return an edge $(u,v)\in\partial_{G}(V(T))$
if exists, given a pointer to $T$ which is a tree in $F$.
\item $\textsf{find\_edge}(T_{1},T_{2})$\footnote{In the subsequent sections, the operation $\textsf{find\_edge}(T_{1},T_{2})$
will not be used. But we state it because it can be useful for other
applications.}: return an edge $(u,v)$ where $u\in V(T_{1})$ and $v\in V(T_{2})$
if exists, given pointers to\emph{ }$T_{1}$ and $T_{2}$ which are
both trees in $F$.
\end{itemize}
\end{defn}
We say that the number of non-tree edges of the underlying graph $G=(V,E)$
is bounded by $k$, if throughout all the update operations, we have
$|E-F|\le k$.
\begin{thm}
\label{thm:2dim ET tree}There is a deterministic data structure for
2-dim ET tree denoted by $\cE^{2d}$ that preprocesses $G,F$ and
a number $k$ as inputs where $G$ is a graph with $n$ nodes, $F$
is a forest in $G$, and the number of non-tree edges of $G$ is always
bounded by $k$. $\cE^{2d}$ preprocesses the input in time $O(n+k)$,
and can handle all operations in $O(\sqrt{k}\log k)$ time.
\end{thm}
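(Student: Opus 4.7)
The plan is to observe that this data structure is a straightforward combination of the augmented ET tree of \ref{thm:aug ET tree} with Frederickson's 2-dimensional topology tree \cite{Frederickson85} (in the refined form used by Thorup \cite{Thorup07mincut}), applied to a compressed \emph{skeleton} of $F$ of size $O(k)$. The augmented ET tree will handle $\textsf{find\_tree}$ and the bookkeeping of endpoints; the 2-dim topology tree will handle the cross-edge searches.

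First, I would maintain a skeleton $F_{sk}$ of $F$ obtained by contracting every maximal path of degree-two vertices of $F$ that carry no non-tree edge. Since at most $2k$ vertices can be endpoints of non-tree edges at any time, $|V(F_{sk})| = O(k)$ throughout the update sequence. The skeleton together with its correspondence to paths of $F$ can be initialized in $O(n+k)$ time and maintained in $O(\log n)$ time per \instree/\deltree using the augmented ET tree, and in $O(\log k)$ time per \insnon/\delnon (each such update perturbs $F_{sk}$ by $O(1)$ vertices or edges). The operation $\textsf{find\_tree}$ on $F$ is answered directly by the underlying ET tree.

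Next, I would run Frederickson's 2-dimensional topology tree on $F_{sk}$ together with the current non-tree edges lifted to the skeleton. This produces a balanced hierarchical partition of $F_{sk}$ into clusters of size $\Theta(\sqrt{k})$ with bounded external degree, and, for every pair of clusters at every level, a counter (together with one representative non-tree edge) recording the number of non-tree edges that cross between them. Frederickson's worst-case rebalancing analysis shows that \instree, \deltree, \insnon and \delnon can be propagated to the hierarchy in $O(\sqrt{k}\log k)$ time. The query $\textsf{find\_edge}(T)$ descends the hierarchy, at each level following any pair $(C_1,C_2)$ with positive counter such that $C_1 \subseteq V(T)$ and $C_2 \not\subseteq V(T)$; the descent visits $O(\sqrt{k})$ pair counters across $O(\log k)$ levels, giving $O(\sqrt{k}\log k)$ total time. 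The query $\textsf{find\_edge}(T_1,T_2)$ is analogous, restricted to pairs straddling $V(T_1)$ and $V(T_2)$.

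The main obstacle is ensuring that cluster rebalancing triggered by \instree/\deltree remains worst-case $O(\sqrt{k}\log k)$, since a single tree update can cascade cluster splits/merges up the hierarchy, and each re-formed cluster requires re-aggregation of its pair counters. This is exactly the property that Frederickson's two-level partition provides once the underlying topology tree has size $O(k)$ rather than $O(n)$, which is why restricting to the $O(k)$-size skeleton is essential; the pair-counter re-aggregation at each level costs $O(\sqrt{k})$ per affected cluster, totaling $O(\sqrt{k}\log k)$ across all $O(\log k)$ levels. Correctness of $\textsf{find\_edge}$ reduces to the invariant that the hierarchical pair counters correctly aggregate the non-tree edges, which is preserved by the propagation performed in every operation.
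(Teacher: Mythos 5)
Your proposal takes a genuinely different route from the paper. The paper does \emph{not} go through Frederickson/Thorup's 2-dimensional topology tree at all: it instead reduces to a constant number of components (\ref{thm:2-dim constant tree}), replaces the Euler tour by a \emph{non-tree Euler tour} (a list of non-tree edges rather than of vertices), and builds a ``list intersection oracle'' over these tours (\ref{thm:list intersection}), so that $\textsf{find\_edge}(T_1,T_2)$ becomes a constant-time table lookup and $\instree/\deltree$ become $O(1)$ merges/splits of tours each costing $O(\sqrt k\log k)$. The paper in fact explicitly remarks (just after the theorem statement) that a topology-tree/top-tree approach ``with some careful modification'' should give the same bound but chose the ET-tree route because it is simpler and avoids the rebalancing machinery you invoke. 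So the comparison is: your route buys the ability to reuse the well-developed $2$-dim topology tree library and its worst-case rebalancing guarantees; the paper's route buys a self-contained, elementary argument where the only nontrivial data structure is a balanced BST over $O(\sqrt k)$ clusters of a short list.

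There is one concrete gap in your proposal as written. You claim $|V(F_{sk})| = O(k)$ after contracting maximal degree-two paths of $F$ that carry no non-tree edge, but that contraction alone does not bound the skeleton size by $O(k)$: a tree of $F$ that contains \emph{no} endpoint of a non-tree edge is not shrunk at all if it has no degree-two vertex (e.g.\ a star), and even for trees that do touch non-tree edges, unmarked leaves and unmarked branching vertices survive contraction. To get the $O(k)$ bound you actually need the Steiner/virtual tree on the $\le 2k$ marked vertices: keep only marked vertices and pairwise LCAs, contract degree-two unmarked vertices, delete unmarked leaves, and drop entirely any tree of $F$ with no marked vertex (queries on such trees trivially return $\emptyset$, and $\textsf{find\_tree}$ is still answered on the full $F$ by the augmented ET tree, as you say). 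Maintaining this virtual tree under $\instree/\deltree$ — where a single tree edge update can change which vertex is the LCA attachment point for several marked vertices, and can create or destroy trees with no marked vertex — in worst-case $O(\text{polylog})$ time is exactly the ``careful modification'' the paper alludes to and deliberately sidesteps; it is doable (e.g.\ via top trees with a marked/unmarked flag in the cluster summary) but is not the routine $O(\log n)$ bookkeeping your write-up suggests, and should be argued explicitly for the proof to be complete.
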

There are several data structures based on topological tree \cite{Frederickson85}
that can be used to prove a slightly weaker version of \ref{thm:2dim ET tree}.
These data structures include 2-dimensional topological tree by Frederickson
\cite{Frederickson85}, and a generalization of top tree by Alstrup
et al. \cite[Section 5]{AlstrupHLT05} or by Thorup \cite[Section 5.3]{Thorup07mincut}.
They handle all operations of 2-dim ET tree in time $\tilde{O}(\sqrt{m})$
where $m$ is the number of \emph{all edges}. However, in our application,
we need to handle the operations in time $\tilde{O}(\sqrt{k})$ where
$k$ is the number of\emph{ non-tree edges} but $k$ might be much
less than $m$. 

We believe that, with some careful modification, these topological-tree-based
data structures can be used to prove \ref{thm:2dim ET tree}. Hence,
this theorem may be considered as a folklore. But, for completeness,
we instead show an implementation that is based on ET tree instead
which is simpler than topological tree and then prove \ref{thm:2dim ET tree}.
Again, even this ET-tree-based implementation might be considered
as a folklore.

For the rest of this section, we will show how to construct 2-dim
ET tree data structure that cannot handle $\textsf{find\_tree}$ operation.
But it is easy how to augment it so that it can handle $\textsf{find\_tree}$,
for example by combining with augmented ET tree.

\subsubsection{Reduction to a Constant Number of Connected Components}

We first show the reduction that allows us to assume that there are
only $O(1)$ number of connected components in $F$, i.e. there are
only $O(1)$ trees in $F$. Recall that we call each connected component
$T$ in $F$ as a tree $T\in F$.
\begin{lem}
Suppose there is a 2-dim ET tree data structure $\cD$ that preprocesses
any $(G,F)$ in time $t_{p}(n,k)$ where $G$ is a graph with $n$
nodes and $F$ is a forest in $G$, and the number of non-tree edges
of $G$ is bounded by $k$. Moreover, all operations of $\cD$ can
done in $t_{u}(n,k)$ time as long as the number of connected component
in $F$ is bounded by a constant. Then there is a 2-dim ET tree data
structure $\cE$ that preprocesses any $(G,F)$ in time $O(n+k+t_{p}(n,k))$
where $G$ is a graph with $n$ nodes and $F$ is a forest in $G$,
and the number of non-tree edges of $G$ is bounded by $k$. Moreover,
all operations of $\cE$ can done in $O(t_{u}(n,k))$ time.\label{thm:2-dim constant tree}\end{lem}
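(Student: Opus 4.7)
The plan is to introduce a single auxiliary ``super-vertex'' $v^{*}$ and attach one \emph{fake} tree edge from $v^{*}$ to an arbitrary vertex in each tree of $F$. Let $G' = (V \cup \{v^{*}\}, E \cup D)$ and $F' = F \cup D$ where $D$ is the set of fake tree edges. Then $F'$ is a single tree (in particular has only $O(1)$ components), and crucially the fake edges are all \emph{tree} edges so they do not contribute to the non-tree edge count, which remains bounded by $k$. I will run $\cD$ on the pair $(G', F')$, and alongside it maintain (i) an augmented ET tree on the real forest $F$ to support $\textsf{find\_tree}$ and to quickly look up, for each tree $T \in F$, its attached fake edge $e_{T}\in D$. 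Preprocessing consists of building $G', F'$ in $O(n+k)$ time and invoking $\cD$ once, giving total preprocessing $O(n+k+t_{p}(n,k))$.

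Next I translate each operation of $\cE$ into $O(1)$ calls to $\cD$. For $\instree(e)$ joining trees $T_{i},T_{j}\in F$ via $e$: delete one of the now-redundant fake edges (say $e_{j}$) from $F'$ and from $G'$ using $\deltree,\delnon$ on $\cD$, then call $\cD.\instree(e)$, and update the fake-edge bookkeeping. For $\deltree(e)$ splitting a tree $T$ into $T_{a},T_{b}$ where $e_{T}$ lies in $T_{a}$: call $\cD.\deltree(e)$, then create a fresh fake edge $e_{b}$ from $v^{*}$ to any vertex of $T_{b}$ (identifiable through the auxiliary ET tree) and insert it into $\cD$ as a non-tree edge followed by a tree edge. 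The operations $\insnon, \delnon$ are passed through unchanged. For $\textsf{find\_tree}$ I simply use the auxiliary ET tree on $F$, since $\cD.\textsf{find\_tree}$ would return the single tree of $F'$.

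The delicate operation is $\textsf{find\_edge}(T)$, where $T$ is a tree in $F$ but only a subtree in $F'$. The plan is to temporarily \emph{detach} $T$ from $v^{*}$: invoke $\cD.\deltree(e_{T})$ and $\cD.\delnon(e_{T})$ so that $e_{T}$ vanishes from both $F'$ and $G'$. Now $T$ is genuinely a tree in $F'$, and since every other fake edge is incident to $v^{*}$ but not to $V(T)$, we have $\partial_{G'}(V(T)) = \partial_{G}(V(T))$. We obtain a pointer to this new tree (e.g.\ via $\cD.\textsf{find\_tree}$ on the endpoint of $e_{T}$ in $T$), call $\cD.\textsf{find\_edge}$ on it, and finally restore $e_{T}$ by $\cD.\insnon(e_{T})$ and $\cD.\instree(e_{T})$. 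The two-tree variant $\textsf{find\_edge}(T_{1},T_{2})$ is handled analogously by detaching both $e_{T_{1}}$ and $e_{T_{2}}$ before the query.

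The main obstacle is precisely this correctness issue for $\textsf{find\_edge}$: naively the fake edges in $F'$ would pollute the answer, and $T$ is not even a tree in $F'$. The detach-query-reattach trick resolves both problems at the cost of a constant number of extra $\cD$ operations. A small secondary concern is keeping pointer identities consistent, which is handled by maintaining an explicit bijection between trees of $F$ (as stored in the auxiliary ET tree) and their fake edges in $D$; this bijection is updated in constant time during each $\instree/\deltree$. Since every operation of $\cE$ reduces to $O(1)$ operations of $\cD$ plus $O(\log n)$ work in the auxiliary ET tree, and both of these are $O(t_{u}(n,k))$ (assuming $t_{u}(n,k)=\Omega(\log n)$, which is standard), the claimed update bound follows.
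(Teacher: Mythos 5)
Your proof takes the same approach as the paper: introduce a single super-vertex ($v^{*}$ in your notation, $r$ in the paper's) with one fake tree edge into each tree of $F$, so that $F'$ is a single tree and the non-tree edge count is preserved, and then simulate each operation of $\cE$ by $O(1)$ operations on $\cD$, temporarily detaching the fake edge before a $\textsf{find\_edge}$ query and reattaching it afterward.

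Your handling of $\textsf{find\_edge}$ is, in fact, more careful than the paper's in a way that actually matters. The paper's proof only issues $\deltree(r,u_{T})$ before the query and $\instree(r,u_{T})$ after; but since $E' = E \cup F'$, after $\deltree(r,u_{T})$ the fake edge $(r,u_{T})$ remains in the graph as a \emph{non-tree} edge, and because $r \notin V(T)$ it lies in $\partial_{G'}(V(T))$. Consequently $\cD.\textsf{find\_edge}(T)$ is allowed to return this fake edge, which is not a valid answer for $\cE$. You avoid this entirely by also calling $\delnon(e_{T})$, which removes the fake edge from $G'$ as well as from $F'$, so that $\partial_{G'}(V(T)) = \partial_{G}(V(T))$ during the query. (The same remark applies to the two-tree variant.) You are also more explicit than the paper about the bookkeeping during $\instree$/$\deltree$ — deleting a redundant fake edge on a merge, spawning a fresh one on a split — which keeps the number of components of $F'$ uniformly bounded by a constant. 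The paper states this step is ``clear,'' but it is exactly the kind of detail that is worth spelling out. Your reliance on an auxiliary augmented ET tree for $\textsf{find\_tree}$ matches the paper's own remark that this operation is deferred and handled by composing with an ordinary ET tree, and the mild assumption $t_{u}(n,k) = \Omega(\log n)$ you use to absorb the $O(\log n)$ bookkeeping cost is consistent with the bound the paper ultimately proves. Overall your proof is correct and, if anything, patches a small gap in the original.
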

\begin{proof}
We will show how get obtain $\cE$ given that we have $\cD$. Given
$G=(V,E)$ and $F\subset E$, $\cE$ preprocesses $(G,F)$ as follows.
First, we construct $G'=(V',E')$ and $F'$. Let $r$ be an additional
node and let $V'=V\cup\{r\}$. The invariant of $F'$ is such that
for each connected component $T$ in $F$, there is a tree edge $(r,u_{T})\in F'$
where $u_{T}\in V(T)$ is some node in $T$. Let $E'=E\cup F'$. Then,
we preprocess $(G',F')$ using $\cD$. The total time is $O(n+k)+t_{p}(n,k)$.
Note that $F'$ has one connected component.

Given the operation $\instree(e)$, $\deltree(e)$, $\insnon(e)$,
$\delnon(e)$ for updating $G$ and $F$, it is clear how to update
$G'$ and $F'$ such that the invariant is maintained: for each $T\in F$,
there is an edge $(r,u_{T})\in F'$ where $u_{T}\in V(T)$ using a
constant number of update operations to $G'$ and $F'$ while making
sure that $F'$ has at most $O(1)$ many connected components.

Given a query $\textsf{find\_edge}(T)$ where $T\in F$, we give the
following update to $G'$ and $F'$: $\deltree(r,u_{T})$, $\textsf{find\_edge}(T)$,
$\instree(r,u_{T})$. Similarly, given a query $\textsf{find\_edge}(T_{1},T_{2})$
where $T_{1},T_{2}\in F$, we update $G'$ and $F'$ as follows: $\deltree(r,u_{T_{1}})$,
$\deltree(r,u_{T_{2}})$, $\textsf{find\_edge}(T_{1},T_{2})$, $\instree(r,u_{T_{1}})$,
$\instree(r,u_{T_{2}})$. It is clear also answers from $\cD$ maintain
$(G',F')$ are correct and takes at most $O(1)\times t_{u}(n,k)=O(t_{u}(n,k))$
time. This concludes the reduction.
\end{proof}

\subsubsection{List Intersection Oracle}

In this section, we show the main data structure which is used inside
2-dim ET tree. For any list of numbers $L=(e_{1},\dots,e_{s})$, we
denote $|L|=s$ the length of the list. We say that a number $e$
\emph{occurs} in $L$ for $c$ times if $|\{j\mid e=e_{j}\in L\}|=c$.
Let $\cL=\{L_{i}\}_{i}$ be a collection of lists. A number $e$ occurs
in $\cL$ for $c$ times if $\sum_{L\in\cL}|\{j\mid e=e_{j}\in L\}|=c$. 
\begin{defn}
[List intersection oracle]\emph{List intersection oracle }is a data
structure that preprocesses a collection of lists of numbers $\cL=\{L_{i}\}_{i}$
as inputs and handles the following operations:
\begin{itemize}
\item $\textsf{create}(e):$ return a pointer to a new list $L=\{e\}$,
given that $e$ is a number.
\item $\textsf{merge}(L_{1},L_{2})$: concatenate two lists into a new list
$L=L_{1}L_{2}$, given pointers to $L_{1},L_{2}\in\cL$.
\item $\textsf{split}(L,i)$: split $L=(e_{1},\dots,e_{s})$ into two lists
$L_{1}$ and $L_{2}$ where $L_{1}=(e_{1},\dots,e_{i})$ and $L_{2}=(e_{i+1},\dots,e_{s})$,
given a pointer to $L\in\cL$ and $1\le i\le s$.
\item $\textsf{intersection}(L_{1},L_{2})$: return an element $e$ where
$e\in L_{1}\cap L_{2}$ if $v$ exists (otherwise return $\bot$),
given two pointers two $L_{1},L_{2}\in\cL$.
\end{itemize}
\end{defn}
\begin{thm}
For any parameter $k$, there is a list intersection oracle  $\cD$
that preprocess a collection of lists $\cL=\{L_{1}\}_{i}$ as inputs
in time $O(k)$. Suppose that 1) the total length of lists in $\cL$
is bounded by $\sum_{i}|L_{i}|\le k$, 2) the number of lists in $\cL$
is bounded by some constant, and 3) each number occurs in $\cL$ at
most a constant number of times. Then $\cD$ can handle $\textsf{merge}$
and $\textsf{split}$ in $O(\sqrt{k}\log k)$ time, and can handle
$\textsf{create}$ and $\textsf{intersection}$ in $O(1)$ time.\label{thm:list intersection}
\end{thm}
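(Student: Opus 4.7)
The plan is to represent each list as a balanced BST whose leaves are \emph{blocks} of size $\Theta(\sqrt{k})$, so every list has $O(\sqrt{k})$ blocks, and to answer $\textsf{intersection}$ in $O(1)$ by maintaining, for every ordered pair $(L,L_m)$ of currently existing lists, an explicit \emph{witness pointer} to some element of $L\cap L_m$ (or $\bot$ if none). Since condition~2 caps the number of lists by a constant, only $O(1)$ such pointers ever exist at a time. The two sequence operations reduce to joining/splitting a BST, rebuilding $O(1)$ boundary blocks, and refreshing the witness pointers, and I will keep all of this within the allowed $O(\sqrt{k}\log k)$ budget.

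For fast witness maintenance I keep a finer table: for every block $B$ and every other list $L_m$, one witness element of $B\cap L_m$ (or $\bot$). Recomputing this row for a block of size $\Theta(\sqrt{k})$ costs $O(\sqrt{k})$ provided $O(1)$-time membership tests ``$e\in L_m$?'' are available. For that I store with each number $e$ the $O(1)$ blocks that currently contain it (condition~3 guarantees this list has constant length) together with a back-pointer from each block to its owning list; then $e\in L_m$ iff one of $e$'s blocks is owned by $L_m$. I also augment each BST so that every internal node records, for each other list $L_m$, a single bit saying whether some block in its subtree has a non-$\bot$ witness to $L_m$; these bits let me locate a new list-level witness by descending the tree in $O(\log k)$ time. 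Blocks are kept in size $[\tfrac12\sqrt{k},2\sqrt{k}]$ by standard amortized rebalancing (splitting oversized blocks, merging undersized neighbors).

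With this machinery, $\textsf{create}(e)$ builds a singleton block, updates $e$'s $O(1)$-long block list, and sets the $O(1)$ witness entries, all in $O(1)$. For $\textsf{merge}(L_1,L_2)$ I concatenate the two BSTs in $O(\log k)$, rebuild the single seam block in $O(\sqrt{k})$, refresh the augmentation bits along the root path in $O(\log k)$, and set each new witness to $\textsf{wit}(L_1,L_m)$ if non-$\bot$, else $\textsf{wit}(L_2,L_m)$, in $O(1)$. For $\textsf{split}(L,i)$ I split the BST, rebuild the boundary block that contains position $i$, relabel the owner pointers of the $O(\sqrt{k})$ blocks migrating into the new list, refresh augmentation bits, and descend each of the two resulting BSTs once per other list in $O(\log k)$ to locate fresh witnesses. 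The main obstacle is precisely this last step of $\textsf{split}$: the previous witness for $(L,L_m)$ may lie entirely in one half, forcing a fresh search in the other; the augmentation bits together with the block-level witness table reduce that search to $O(\log k)$ per other list instead of $\Omega(\sqrt{k})$, so the overall cost remains $O(\sqrt{k}\log k)$ while $\textsf{intersection}$ stays a single pointer read.
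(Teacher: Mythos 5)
Your proposal shares the paper's skeleton (blocks of size $\Theta(\sqrt{k})$ organized into a balanced BST, a precomputed witness giving $O(1)$ intersection queries), but the witness table you maintain is keyed differently from the paper's, and this difference is where the argument breaks. You store $\textsf{wit}(B,L_m)$ for every block $B$ and every \emph{whole list} $L_m$, plus one list-level witness per pair. Consider $\textsf{split}(L,i)\to(L_1,L_2)$. First, for a block $B'$ living in a third list $L_m$, you previously had only $\textsf{wit}(B',L)$; whichever half the old witness falls into gives you one of $\textsf{wit}(B',L_1)$, $\textsf{wit}(B',L_2)$ for free, but the other is missing, and recovering it requires re-scanning $B'$ — across all $\Theta(\sqrt{k})$ such blocks that is $\Theta(k)$ work. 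Without those block-level entries, the augmentation bits for $L_1$ and $L_2$ stored inside $L_m$'s BST cannot be filled in correctly either, so the staleness propagates to the next time $L_m$ itself is split. Second, and more fundamentally, $\textsf{intersection}(L_1,L_2)$ itself is a legitimate query (in the ET-tree application a non-tree edge between the two halves appears once in each tour, and finding such an edge is exactly the point), yet your table has no $\textsf{wit}(\cdot,L_2)$ entries for blocks inside $L_1$ — before the split those entries were the useless $\textsf{wit}(B,L)=B$ — and a witness of $L_1\cap L_2$ need not lie in the seam block, so rebuilding only that block does not help. Descending $L_1$'s BST with your augmentation bits cannot locate it because the bit vector for the fresh list name $L_2$ does not exist.

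The paper sidesteps all of this by keying the intersection table by \emph{pairs of clusters} in the hierarchy $\cT_L$ (base clusters and the $O(\sqrt{k})$ internal nodes above them), not by (block, list) pairs. A split touches only the $O(\log k)$ clusters on the root-to-seam path; each such cluster $C_u$ has its row $A(C_u,\cdot)$ recomputed bottom-up — the base cluster by scanning its $O(\sqrt{k})$ elements and, for each, the $O(1)$ blocks containing it and their $O(\log k)$ ancestors; an internal cluster in $O(1)$ per column from its two children's rows — for $O(\sqrt{k}\log k)$ total. Because after the split the two root clusters are themselves clusters with freshly computed rows, the entry $A(C_{r_1},C_{r_2})$ is immediately available and answers $\textsf{intersection}(L_1,L_2)$ in $O(1)$. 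If you want to repair your approach you essentially need to adopt this cluster-hierarchy indexing; the list-indexed table and the per-list augmentation bits are not enough.
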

We note again that all the ideas of this algorithm have been presented
already by Alstrup et al. \cite[Section 5]{AlstrupHLT05} and by Thorup
\cite[Section 5.3]{Thorup07mincut}. Since they did not explicitly
define a list intersection oracle as a problem, we include it for
completeness in \ref{sec:proof list intersection}.

\subsubsection{Non-tree Euler Tour}

The goal of this section is the following:
\begin{thm}
There is a 2-dim ET tree data structure $\cD$ that preprocesses any
$(G,F)$ in time $O(n+k)$ where $G$ is a graph with $n$ nodes and
$F$ is a forest in $G$, and the number of non-tree edges of $G$
is bounded by $k$. Moreover, all operations of $\cD$ can done in
$O(\sqrt{k}\log k)$ time as long as the number of connected component
in $F$ is bounded by a constant. \label{thm:2dim ET tree constant tree}
\end{thm}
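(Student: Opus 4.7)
The plan is to invoke the reduction \ref{thm:2-dim constant tree} so that it suffices to handle the case where $F$ has a constant number of trees $T_1, \ldots, T_c$, and then build the data structure on top of the list intersection oracle of Theorem \ref{thm:list intersection}. The core representation is an Euler tour of each tree augmented with ``half-edge tokens,'' together with $c$ parallel lists $L_1, \ldots, L_c$ stored in the oracle. For every non-tree edge $e = (u,v)$, pick one designated occurrence of $u$ in its Euler tour and one designated occurrence of $v$; insert a half-edge token for $e$ at each of these two positions. Extracting just the subsequence of half-edge tokens of $T_i$ in Euler-tour order gives the list $L_i$. By construction each edge identifier appears exactly twice across $\bigcup_i L_i$ (once per endpoint), there are $c = O(1)$ lists, and the total length is $2|E\setminus F| \le 2k$, so all three preconditions of Theorem \ref{thm:list intersection} are satisfied. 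Preprocessing builds the Euler tours in $O(n)$ time and inserts the $O(k)$ half-edge tokens into the oracle in $O(k)$ time.

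For queries, the key observation is that a non-tree edge $(u,v)$ lies in $\partial_G(V(T_i))\cap\partial_G(V(T_j))$ with $i\neq j$ iff its two half-edges land in $L_i$ and $L_j$ respectively; since each identifier occurs at most twice in total, $\textsf{intersection}(L_i, L_j)$ returns such an edge iff one exists. Thus $\textsf{find\_edge}(T_i, T_j)$ is a single $O(1)$ oracle call, and $\textsf{find\_edge}(T_i)$ iterates over the $O(1)$ other trees and calls $\textsf{find\_edge}(T_i, T_j)$ for each, also in $O(1)$ time.

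Each update reduces to $O(1)$ oracle operations. For $\instree(e)$ and $\deltree(e)$ one uses the standard Euler-tour update that requires a constant number of concatenations/splits of the tour; these translate directly to a constant number of $\textsf{merge}$ and $\textsf{split}$ operations on the corresponding $L_i$'s (where the split positions are located by maintaining the tour in a balanced BST that stores, at each node, a pointer into the oracle's list). For $\insnon(e)$ with $e=(u,v)$, locate the designated occurrences of $u$ and $v$, split their lists there, $\textsf{create}$ a singleton list for each half-edge, and merge back: a constant number of calls. $\delnon(e)$ is symmetric. By Theorem \ref{thm:list intersection} each $\textsf{merge}$/$\textsf{split}$ costs $O(\sqrt{k}\log k)$, and $\textsf{create}$/$\textsf{intersection}$ cost $O(1)$, yielding the claimed $O(\sqrt{k}\log k)$ worst-case bound per operation.

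The main obstacle I expect is the bookkeeping that keeps the ``designated occurrence'' of each non-tree-edge endpoint stable under tree updates: when $\instree$ or $\deltree$ reshuffles the Euler tour, the canonical occurrence a half-edge is attached to may move between trees, and we must guarantee that each edge identifier still occurs at most a constant number of times overall so that the oracle's preconditions continue to hold. This is handled by storing, for every non-tree edge, back-pointers to its two half-edge tokens and only ever relocating them when strictly necessary (namely when the specific Euler-tour occurrence they reference is destroyed by a split or merge), of which only $O(1)$ arise per update. Since each half-edge belongs to exactly one designated position at all times, the invariant ``each identifier occurs twice in $\bigcup_i L_i$'' is preserved throughout.
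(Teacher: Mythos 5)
Your construction is essentially the paper's: what you call the subsequence of half-edge tokens in Euler-tour order is exactly the paper's \emph{non-tree Euler tour} $\net_G(T)$, and both are then fed to the list intersection oracle of \ref{thm:list intersection}, with each non-tree edge appearing exactly twice in $\bigcup_i L_i$, total length $\le 2k$, and $O(1)$ merge/split operations per tree-edge update. One small framing slip: you open by invoking \ref{thm:2-dim constant tree} to ``reduce to'' the constant-component case, but \ref{thm:2dim ET tree constant tree} already \emph{assumes} $F$ has $O(1)$ components (it is \ref{thm:2-dim constant tree} that later lifts this to the general \ref{thm:2dim ET tree}), so that step is unnecessary here; the construction you then give is precisely the intended proof.
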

By \ref{thm:2-dim constant tree}, this immediately implies \ref{thm:2dim ET tree}. 

The key to this result is the definition of \emph{non-tree Euler Tour
}of a tree, which is a slight modification of Euler Tour defined in
\cite{HenzingerK99}. While Euler tour is an order list of nodes in
a tree, non-tree Euler Tour is a list of non-tree edges incident to
nodes in a tree. The construction is almost the same. 

Throughout this section, there are only $O(1)$ number of trees $T\in F$.
For each tree $T\in F$, we root $T$ at arbitrary node $r_{T}$.
For each node $u\in T$, let $p_{T}(u)$ be the parent of $u$ in
$T$, and we fix an arbitrary ordering to edges incident to $u$.
\begin{defn}
[Non-tree Euler Tour]For any graph $G=(V,E)$, forest $F\subset E$,
and tree $T\in F$, a \emph{non-tree Euler tour of $T$} in $G$,
denoted by $\net_{G}(T)$, is an ordered list of non-tree edges in
$G$ that is incident to vertices in $T$. The precise ordering in
$\net_{G}(T)$ is defined by the output of $\textsf{ListNET}(G,F,r_{T})$
where $r_{T}$ is the root of $T$. See \ref{alg:net}. \emph{Non-tree
Euler tour of $F$ }in $G$, denoted by $\net_{G}(F)=\{\net_{G}(T)\}_{T\in F}$,
is a collection of the tours of trees $T\in F$. 
\end{defn}
\begin{algorithm}[h]
\begin{enumerate}
\item For each edge $(u,v)$ incident to $u$:

\begin{enumerate}
\item If $v=p_{T}(u)$, CONTINUE.
\item If $(u,v)\notin F$, then OUTPUT $e=\{u,v\}$.
\item If $(u,v)\in F,$ then $\textsf{ListNET}(G,F,v)$.
\end{enumerate}
\end{enumerate}
\caption{$\textsf{ListNET}(G,F,v)$ generates a list of non-tree edges of $G$
incident to vertices in a subtree in $F$ rooted at $v$. \label{alg:net}}
\end{algorithm}

Next, we observe some easy properties of non-tree Euler tour which
are similarly shown for Euler tour. The number of occurrence of an
edge $e$ in $\net_{G}(F)$ is $o(e)=\sum_{T\in F}$$|\{i\mid e=e_{i}\in\net_{G}(T)\}|$. 
\begin{prop}
Each non-tree edge of $G$ occurs in $\net_{G}(F)$ exactly two times.\label{fact:each occur twice}
\end{prop}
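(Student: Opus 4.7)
The plan is a direct structural induction on the recursion tree of $\textsf{ListNET}$. First I would observe that for every tree $T \in F$, the call $\textsf{ListNET}(G,F,r_{T})$ visits each node of $T$ exactly once. This follows from a simple induction on the depth of the subtree: at $r_T$ the procedure iterates over edges incident to $r_T$ and recurses only on edges $(r_T, v) \in F$ (i.e. tree edges to children, since $r_T$ has no parent). Each such recursive call executes $\textsf{ListNET}$ on the subtree rooted at $v$ with $r_T$ now in the role of $p_T(v)$, which is skipped in step 1(a). By induction, the subtree rooted at $v$ is traversed exactly once, and since distinct children have disjoint subtrees, every descendant of $r_T$ is visited exactly once.

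Next I would analyze the output step at each visited node $u$. When $\textsf{ListNET}$ is executed at $u$, it loops over all edges of $G$ incident to $u$; by step 1(b), the non-tree edges incident to $u$ are output once each (in the fixed incident-edge ordering), while tree edges to children trigger recursion and tree edges to the parent are skipped. Consequently, in the list $\net_G(T)$, each non-tree edge $e = (u,v)$ contributes one occurrence for every endpoint of $e$ that lies in $V(T)$.

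Finally I would assemble the claim. For a non-tree edge $e = (u,v) \in E \setminus F$, both endpoints $u$ and $v$ each belong to exactly one tree of $F$ (taking isolated vertices as singleton trees); call these $T_u$ and $T_v$. By the previous step, $e$ is output exactly once during the execution of $\textsf{ListNET}(G,F,r_{T_u})$ (when $u$ is visited) and exactly once during $\textsf{ListNET}(G,F,r_{T_v})$ (when $v$ is visited). Summing the contributions over all trees gives $o(e) = 2$, which is the desired equality. The two cases $T_u = T_v$ and $T_u \neq T_v$ are handled uniformly by this counting.

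I do not expect any real obstacle: the only subtlety is making sure that the ``skip the parent'' rule in step 1(a) correctly prevents a node from being revisited through its own parent edge, which is exactly why the traversal visits each node once and hence outputs each non-tree edge once per endpoint.
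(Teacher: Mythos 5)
Your proof is correct and is the standard argument the paper has in mind: the paper omits a proof, stating only that these facts are ``similarly shown for Euler tour,'' and your DFS-traversal argument (each node visited once, each non-tree edge emitted once per endpoint) is exactly that standard argument adapted to $\textsf{ListNET}$. The uniform handling of the intra-tree and inter-tree cases is the right way to close it.
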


\begin{prop}
For any $T_{1},T_{2}\in F$, there is a non-tree edge $(u,v)$ where
$u\in V(T_{1})$ and $v\in V(T_{2})$ iff $\net_{G}(T_{1})\cap\net_{G}(T_{2})\neq\emptyset$.
\end{prop}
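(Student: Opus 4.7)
The plan is to prove both directions of the biconditional by directly unpacking the recursive definition of $\textsf{ListNET}$ in \ref{alg:net}, with the understanding that $T_1$ and $T_2$ are distinct trees in $F$ (being distinct connected components, their vertex sets are disjoint).

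For the forward direction, suppose $e=(u,v)$ is a non-tree edge with $u \in V(T_1)$ and $v \in V(T_2)$. First I would argue that the recursion $\textsf{ListNET}(G,F,r_{T_1})$ reaches node $u$: since the tree edges in $T_1$ form a spanning tree of the component rooted at $r_{T_1}$, and the recursion descends along every non-parent tree edge, every vertex of $T_1$ is eventually visited. When the recursion reaches $u$ and iterates over its incident edges, it examines $(u,v)$. Since $V(T_1) \cap V(T_2) = \emptyset$, we have $v \notin V(T_1)$, so in particular $v \neq p_{T_1}(u)$ and step 1(a) does not fire. Since $(u,v) \notin F$, step 1(b) outputs $e$ into $\net_G(T_1)$. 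The symmetric traversal of $T_2$ shows $e \in \net_G(T_2)$.

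For the backward direction, suppose $e \in \net_G(T_1) \cap \net_G(T_2)$. By inspection of $\textsf{ListNET}$, every edge it outputs from the traversal of $T_i$ is examined during a recursive call at some vertex of $T_i$ (the recursion is only invoked at vertices within the rooted subtree), so that output edge is incident to at least one vertex of $V(T_i)$. Hence $e$ is incident to some vertex of $V(T_1)$ and to some vertex of $V(T_2)$; since $V(T_1)$ and $V(T_2)$ are disjoint, these must be the two distinct endpoints of $e$, producing the desired non-tree edge with one endpoint in each tree.

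There is no real obstacle here; the proof is a routine unwinding of the definitions. The only minor subtlety is using disjointness of $V(T_1)$ and $V(T_2)$ (they are different connected components of $F$) to rule out that $v = p_{T_1}(u)$ in the forward direction and to force the two endpoints of $e$ into different trees in the backward direction. Optionally, one could invoke \ref{fact:each occur twice} to refine the backward direction and conclude that $e$ appears \emph{exactly} once in each of $\net_G(T_1)$ and $\net_G(T_2)$, but this strengthening is not needed for the stated claim.
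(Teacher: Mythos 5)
Your proof is correct. The paper states this proposition without proof (it is listed among "easy properties... similarly shown for Euler tour"), so there is no paper argument to compare against; your routine unwinding of the $\textsf{ListNET}$ recursion is exactly what is needed, and you correctly surface the implicit assumption $T_1 \neq T_2$ (without which the backward direction would fail, since $\net_G(T)\cap\net_G(T)\neq\emptyset$ only says some non-tree edge touches $T$, not that one lies entirely inside $T$). One small simplification you could note: for a non-tree edge $(u,v)$, step 1(a) can never fire regardless of disjointness, because $p_{T_1}(u)$ is joined to $u$ by a tree edge, so $(u,v)\notin F$ already forces $v\neq p_{T_1}(u)$; the disjointness of $V(T_1)$ and $V(T_2)$ is really only needed for the backward direction, to conclude that the two endpoints of $e$ land one in each tree.
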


\begin{prop}
Let $G=(V,E)$ be a graph and $F\subset E$ be a forest in $G$, and
$|E-F|=$$k$. Then $\sum_{T\in F}|\net(T)|=2k$.\label{fact:tour short}
\end{prop}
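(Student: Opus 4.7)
The plan is to derive this directly from the preceding proposition, which asserts that every non-tree edge of $G$ occurs in the collection $\net_G(F)=\{\net_G(T)\}_{T\in F}$ exactly twice. The quantity $\sum_{T\in F}|\net_G(T)|$ is by definition the total length of the lists in this collection, which equals the sum over all non-tree edges $e$ of the number of occurrences $o(e)$ of $e$ across the tours. First I would write
\[
\sum_{T\in F}|\net_G(T)| \;=\; \sum_{e\in E\setminus F} o(e).
\]
Then applying the previous proposition gives $o(e)=2$ for every $e\in E\setminus F$, so the right-hand side equals $2|E\setminus F|=2k$, which is exactly what we want.

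The only thing worth noting, to make the argument airtight, is why $o(e)=2$ really follows from the way $\textsf{ListNET}$ traverses the tree. When we call $\textsf{ListNET}(G,F,r_T)$ and recurse into the subtree at each child, at every vertex $u$ of $T$ we iterate over \emph{all} edges incident to $u$ except the parent edge $(u,p_T(u))$, outputting $(u,v)$ whenever $(u,v)\notin F$. Thus a non-tree edge $(u,v)$ with both endpoints in the same tree $T$ is output once at $u$ and once at $v$, contributing $2$ to $|\net_G(T)|$, while a non-tree edge $(u,v)$ crossing between two trees $T_1$ and $T_2$ is output once at $u$ in $\net_G(T_1)$ and once at $v$ in $\net_G(T_2)$, again contributing $2$ in total across the collection. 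Either way, the occurrence count of each non-tree edge is exactly $2$, as the previous proposition already stated.

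There is no real obstacle here: the statement is a straightforward bookkeeping consequence of the definition of non-tree Euler tour and the already-established fact that each non-tree edge appears exactly twice. Consequently, the whole proof reduces to the one-line identity displayed above followed by substituting $o(e)=2$, yielding $\sum_{T\in F}|\net_G(T)|=2k$.
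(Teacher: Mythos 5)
Your proof is correct and takes the same route the paper implicitly intends: the paper states this as an easy observation immediately after the proposition that each non-tree edge occurs exactly twice in $\net_G(F)$, and your argument simply sums those occurrences to get $2|E\setminus F|=2k$. Nothing more is needed.
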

Now, we describe how to use\emph{ $\net_{G}(F)$} for 2-dimenstion
ET tree. Let $\cO$ be list intersection oracle from \ref{thm:list intersection}
with parameter $2k$. By \ref{fact:tour short}, \ref{fact:each occur twice},
and the fact that $O(1)$ number of trees $T\in F$, the conditions
in \ref{thm:list intersection} are satisfied. And hence $\cO$ can
handle $\textsf{merge}$ and $\textsf{split}$ in $O(\sqrt{k}\log k)$
time, and can handle $\textsf{create}$ and $\textsf{intersection}$
in $O(1)$ time. We have the following:
\begin{prop}
Let $G=(V,E)$ be a graph and $F\subset E$ be a forest in $G$. Suppose
that there is a list intersection oracle $\cO$ that has preprocessed
$\net_{G}(F)=\{\net_{G}(T)\}_{T\in F}$. 
\begin{itemize}
\item For any $T_{1},T_{2}\in F$, the answer of $\textsf{find\_edge}(T_{1},T_{2})$
and be obtained from $\textsf{intersection}(\net_{G}(T_{1}),\net_{G}(T_{2}))$.
\item Let $G'$ and $F'$ be obtained from $G$ and $F$ by either inserting
or deleting one tree edge or non-tree edge. Then $\net_{G'}(F')$
can be obtained from by a constant number of $\textsf{create}$, $\textsf{merge}$
and $\textsf{split}$ on $\cD$ .\label{thm:constant splitting}
\end{itemize}
\end{prop}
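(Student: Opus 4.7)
The first bullet follows immediately from Algorithm \ref{alg:net}: at every node $u \in V(T)$, case (b) emits precisely the non-tree edges of $G$ incident to $u$, so $\net_G(T)$ is the multiset of non-tree edges incident to $V(T)$ with multiplicity equal to the number of endpoints in $T$. For distinct $T_1, T_2 \in F$, an edge $e$ hence lies in $\net_G(T_1) \cap \net_G(T_2)$ iff $e$ is a non-tree edge with one endpoint in $V(T_1)$ and the other in $V(T_2)$, which is exactly what $\textsf{find\_edge}(T_1, T_2)$ is asked to return. So $\textsf{intersection}(\net_G(T_1), \net_G(T_2))$ answers the query with a single oracle call.

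For the second bullet, the plan is to translate each local graph update into $O(1)$ calls to $\textsf{create}$, $\textsf{split}$, and $\textsf{merge}$. The only auxiliary bookkeeping is a per-node pair of pointers locating, for each $u \in V(T)$, the contiguous block of $\net_G(T)$ emitted by the recursive call at $u$ in Algorithm \ref{alg:net}; this can be maintained alongside the oracle in $O(n+k)$ preprocessing and $O(\log n)$ extra work per update (for instance by riding on a standard ET-tree over the tree edges of $F$), so it does not enlarge $\cO$'s input length beyond the $2k$ guaranteed by \ref{fact:tour short}.

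Given these block pointers, each update resolves in $O(1)$ oracle operations. For $\insnon(u,v)$ with $e=(u,v)$: $\textsf{create}$ a singleton $\{e\}$, $\textsf{split}$ $\net_G(T_u)$ at $u$'s position and $\textsf{merge}$ the singleton into place, then do the same at $v$'s position in $\net_G(T_v)$ (or in the same list if $T_u = T_v$). $\delnon(e)$ is the symmetric inverse: use the oracle's pointers to the two occurrences of $e$, apply two $\textsf{split}$'s to isolate each occurrence and two $\textsf{merge}$'s to close the gaps. For $\deltree((p_T(u),u))$: by Algorithm \ref{alg:net} the edges emitted during the recursive call at $u$ form a contiguous block of $\net_G(T)$ delimited by $u$'s block pointers, so two $\textsf{split}$'s peel that block off as the new list $\net_G(T_u)$ and one $\textsf{merge}$ re-joins the surrounding prefix and suffix into $\net_G(T \setminus T_u)$. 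For $\instree(u,v)$ with $u \in T_u$ and $v \in T_v$: first re-root $T_v$ at $v$ by the standard Euler-tour cyclic-rotation about $v$'s block boundary (one $\textsf{split}$ and one $\textsf{merge}$), and then splice the rotated $\net_G(T_v)$ into $\net_G(T_u)$ at $u$'s position (one $\textsf{split}$ and two $\textsf{merge}$'s).

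The main obstacle --- really the only step that needs genuine care --- is the re-rooting inside $\instree$. The non-tree Euler tour interleaves edge emissions with recursive descents and depends on a fixed per-node edge ordering, so one must verify that the cyclic-rotation trick from ordinary Euler-tour trees continues to work: the segment between $v$'s block boundaries is exactly the block corresponding to $v$'s old subtree, and shifting it to the front corresponds to re-interpreting the descent into $v$'s old parent (now $v$'s child) as happening immediately afterwards, which reproduces the emission sequence of Algorithm \ref{alg:net} on the re-rooted tree under a suitable choice of ordering along the old root-to-$v$ path. Once this is in hand, each update resolves to $O(1)$ calls on $\cO$, and combining with \ref{thm:list intersection} yields the $O(\sqrt{k}\log k)$ worst-case bound used in \ref{thm:2dim ET tree constant tree}.
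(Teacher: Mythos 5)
Your proof is correct; the paper itself states this proposition without a proof (it is treated as a direct analogue of the corresponding property of ordinary Euler tours from \cite{HenzingerK99}), so there is no paper argument to diverge from. You supply exactly the detail the paper leaves implicit, and in particular you correctly isolate the one non-obvious step — re-rooting under $\instree$ — and resolve it by noting that the cyclically rotated list agrees with the output of \ref{alg:net} on the re-rooted tree after a permissible re-choice of the (arbitrary) per-node edge orderings along the old root-to-$v$ path.
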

Now, we can prove \ref{thm:2dim ET tree constant tree}:
\begin{proof}
[Proof of \ref{thm:2dim ET tree constant tree}]Given $G$ and $F$,
we can constructing the non-tree Euler tour $\net_{G}(F)$ in time
$O(n+k)$. Then we give $\net_{G}(F)$ to the list intersection oracle
$\cO$ as inputs, and $\cO$ takes $O(k)$ times to preprocess. The
worst-case time of any operation handled by $\cO$ is $O(\sqrt{k}\log k)$
by \ref{thm:list intersection}. By \ref{thm:constant splitting},
we can do $\instree(e)$, $\deltree(e)$, $\insnon(e)$, $\delnon(e)$,
and $\textsf{find\_edge}(T_{1},T_{2})$ in time $O(\sqrt{k}\log k)$.
To answer $\textsf{find\_edge}(T)$, we can just query $\textsf{find\_edge}(T,T')$
for all $T'\neq T$.\end{proof}

\subsection{Cut Recovery Tree\label{sec:linear sketches}}

The following data structure is useful for our Monte Carlo dynamic
$\st$ algorithm.
\begin{defn}
[Cut Recovery Tree]\label{def:cut-recovery-tree}Cut recovery tree
is a data structure that preprocesses a graph $G=(V,E)$, a forest
$F\subseteq E$ and a parameter $k$ as inputs and handles the following
operations:
\begin{itemize}
\item $\instree(e)$, $\textsf{\ensuremath{\deltree}}(e)$, $\insnon(e)$
and $\delnon(e)$: same definitions as in \ref{def:aug ET tree}.
\item $\cutset k(T)$: return all edges in $\partial_{G}(V(T))$, given
a pointer to\emph{ }$T$ which is a tree in $F$ and given that $\delta_{G}(V(T))\le k$.
\end{itemize}
\end{defn}
\begin{thm}
\label{thm:cut recovery tree}There is a deterministic data structure
for cut recovery tree denoted by $\cD$ that are given $G,F$ and
a parameter $k$ as inputs where $G=(V,E)$ is a graph with $n$ nodes
and $m$ initial edges, and $F\subseteq E$ is a forest in $G$. $\cD$
can preprocess the inputs in time $O(m\beta\log n)$ time where $\beta=2^{O(\log\log n)^{3}}=n^{o(1)}$.
And then $\cD$ takes 
\begin{itemize}
\item $O(k\beta\log^{2}n)$ time to handle $\instree$ and $\deltree$,
\item $O(\beta\log^{2}n)$ time to handle $\insnon$ and $\textsf{\ensuremath{\delnon}}$,
and 
\item $O(k\beta\log^{2}n)$ time to answer $\cutset k$.
\end{itemize}
\end{thm}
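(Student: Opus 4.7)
The plan is to reduce $\cutset{k}$ to $k$-sparse recovery via the incidence-matrix observation of Proposition \ref{thm:incidence matrix}, and to maintain the required linear sketches by augmenting an ET-tree over $F$. Specifically, for each tree $T\in F$, let $x_T = \sum_{u\in V(T)} B_u$; by Proposition \ref{thm:incidence matrix} the support of $x_T$ equals $\partial_G(V(T))$, which has size at most $k$ by the precondition of $\cutset{k}$. Thus it suffices to maintain a linear sketch $\Pi x_T$ for each $T$ and to invoke a $k$-sparse recovery decoder on demand.

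For the sketch I would use an explicit deterministic $k$-sparse recovery scheme, realised as a matrix $\Pi$ with $s = O(k\beta\log n)$ rows and column sparsity $O(\beta\log n)$, whose decoder runs in time $O(k\beta\log^2 n)$, where $\beta = 2^{O((\log\log n)^3)} = n^{o(1)}$. Constructions with these parameters follow from the deterministic sparse recovery literature (see e.g.\ \cite{BerindeGIKS08} and related work), and the column sparsity $O(\beta\log n)$ is the key property: it means that perturbing a single entry of the underlying vector costs only $O(\beta\log n)$ time in the sketch, independent of $k$.

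I would then augment the ET-tree of $F$ so that each internal node stores the sketch of $\sum_u B_u$ summed over all vertices whose designated leaf occurrence lies in its subtree; the root of the ET-tree for $T$ then holds $\Pi x_T$. For $\insnon(e)$ and $\delnon(e)$ with $e=(u,v)$, the vectors $B_u$ and $B_v$ each change in exactly one coordinate, and the sketch update is propagated from two leaves up the $O(\log n)$ levels of the ET-tree at cost $O(\beta\log n)$ per level, totalling $O(\beta\log^2 n)$. For $\instree(e)$ and $\deltree(e)$, the Euler-tour balanced BST undergoes $O(\log n)$ split/concatenate operations, each of which touches $O(1)$ internal nodes whose sketches must be recomputed as sums of children; each such sum costs $O(s) = O(k\beta\log n)$, yielding $O(k\beta\log^2 n)$. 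For $\cutset{k}(T)$, we read $\Pi x_T$ off the root and decode in $O(k\beta\log^2 n)$ time; by Proposition \ref{thm:incidence matrix} the recovered nonzero coordinates are exactly $\partial_G(V(T))$. Preprocessing computes $\Pi B_u$ for each $u$ in $O(\deg(u)\cdot\beta\log n)$ time and then builds the ET-tree sketches bottom-up, for a total of $O(m\beta\log n)$.

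The main obstacle is importing a sparse recovery scheme that is simultaneously deterministic, has column sparsity $O(\beta\log n)$ (needed for fast $\insnon/\delnon$), and admits an $O(k\beta\log^2 n)$-time decoder; the $\beta = n^{o(1)}$ overhead in the theorem statement reflects the best parameters known for such explicit deterministic constructions. Beyond this, the remainder of the proof is a direct generalisation of the $1$-sparse recovery trick used in \cite{KapronKM13,GibbKKT15} from $k=1$ to arbitrary $k$, combined with standard ET-tree bookkeeping of subtree aggregates.
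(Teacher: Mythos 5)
Your proposal takes essentially the same route as the paper: reduce $\cutset{k}$ to $k$-sparse recovery via \ref{thm:incidence matrix}, use an explicit deterministic recovery matrix with $O(k\beta\log n)$ rows and $O(\beta\log n)$ column sparsity (the paper builds this in \ref{thm:sparse recovery} from the Capalbo et al.\ unbalanced expanders and the Berinde et al.\ decoder), and maintain the sketched aggregate $\Pi x_T$ at the root of an ET tree over each tree $T$ by storing partial sums at internal nodes. The paper packages the ET-tree half of this into an abstraction it calls a $d$-word ET tree (\ref{thm:d-word ET tree}), but that is precisely the "augment each internal node with a subtree-sum sketch" strategy you describe, and your per-operation cost analysis (column sparsity for $\insnon/\delnon$, $O(\log n)$ splits/concatenates each recomputing one $O(k\beta\log n)$-word node for $\instree/\deltree$, one decode for $\cutset{k}$) matches the paper's term for term.

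The one place your argument glosses over a genuine subtlety is the preprocessing. If each of the $\Theta(n)$ ET-tree internal nodes stores a dense sketch of dimension $\Theta(k\beta\log n)$ and you build the aggregates bottom-up naively, the cost is $\Theta(nk\beta\log n)$, which exceeds $O(m\beta\log n)$ whenever $k \gg m/n$ --- exactly the regime that matters in the dynamic $\st$ applications, where $m=O(n)$ but $k$ is a polynomial in $n$. The paper avoids this by marking sketch coordinates \emph{active} only once they become nonzero (see the proof of \ref{thm:d-word ET tree}) and by lazily instantiating only those columns of the recovery matrix that correspond to edges actually present; only the $O(m\beta\log n)$ nonzero leaf coordinates are ever propagated upward. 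Your claim of "builds the ET-tree sketches bottom-up, for a total of $O(m\beta\log n)$" is achievable, but only with this sparse-initialization discipline; as written, the step is underspecified and the naive reading of it fails to meet the stated preprocessing bound.
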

This theorem is proved by maintaining a version of ET tree where each
node of the ET tree contains a $k$-sparse recovery sketch. This approach
is already used by Kapron, King and Mountjoy \cite{KapronKM13} to
obtain their data structure called \emph{cut set data structure}.
But in \cite{KapronKM13}, each node of the ET tree contains instead
contains a $\ell_{0}$-sampling sketch (which is essentially a $1$-sparse
recovery sketch combining with sampling). Here, we generalize the
same idea to $k$-sparse recovery sketch. 

A nice aspect about both cut set data structure and cut recovery tree
is that they are obtained by combining ideas from two different communities:
1) linear sketches from streaming literature, see \ref{sec:sparse recovery},
and 2) a version of ET tree which is a classic dynamic data structure,
see \ref{sec:d word ET tree}. We show how to combine the two and
prove \ref{thm:cut recovery tree} in \ref{sec:combining cut recovery tree}.

\subsubsection{Sparse Recovery\label{sec:sparse recovery}}

In this section, we show the following about sparse recovery. We say
that a vector $v$ is \emph{$k$-sparse} if there are at most $k$
non-zero entries in $v$. 
\begin{thm}
\label{thm:sparse recovery}There are algorithm $\cA_{recover}$,
and algorithm $\cA_{construct}$ such that, given any number $n$
and $k$, they can do the following. Let $\beta_{1}=2^{O(\log\log n)^{3}}=n^{o(1)}$,
$\beta_{2}=O(\polylog n)$, $d=O(k\beta_{1}\log n)$ and $s=O(\beta_{1}\log n)$.
\begin{itemize}
\item There is a matrix $\Phi\in\{0,1\}^{d\times n}$ where each column
of $\Phi$ is $s$-sparse that can be computed by $\cA_{construct}$
in the following sense: given an index $i\in[n]$, $\cA_{construct}$
can return a list of non-zero entries in the $i$-th of column of
$\Phi$, i.e. $(j,\Phi_{j,i})$ where $\Phi_{j,i}\neq0$ for each
$j\in[d]$, in time $s\beta_{2}$ time.
\item For any $k$-sparse vectors $v\in[-M,M]^{n}$ where $M=\poly(n)$,
given $\Phi v$, $\cA_{recover}$ can return a list of non-zero entries
in $v$, i.e. the list of $(i,v_{i})$ where $v_{i}\neq0$ for each
$i\in[n]$, in time $O(d\log n)$.
\end{itemize}
\end{thm}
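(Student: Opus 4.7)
The plan is to take $\Phi$ to be the $\{0,1\}$-biadjacency matrix of an explicit unbalanced bipartite expander and to decode by a combinatorial algorithm that exploits the unique-neighbor property of the expander. This reduces the theorem to (i) supplying an explicit expander with the right parameters, and (ii) running a standard expander-based sparse recovery decoder; the subtle points are matching the $n^{o(1)}$ degree $\beta_{1}=2^{O(\log\log n)^{3}}$ and the $O(d\log n)$ recovery-time bound.

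For $\cA_{construct}$, I would invoke an explicit lossless expander construction such as that of Guruswami--Umans--Vadhan: for any $n,k$ it produces a bipartite graph with $n$ left vertices, $d=O(k\beta_{1}\log n)$ right vertices, and left-degree $s=O(\beta_{1}\log n)$, such that every left set $S$ with $|S|\le 2k$ has at least $(1-\varepsilon)s|S|$ distinct right-neighbors for a small absolute constant $\varepsilon>0$. Crucially, this construction is \emph{strongly explicit}: given any left index $i\in[n]$, its $s$ right-neighbors can be enumerated in $O(s\cdot\polylog n)=O(s\beta_{2})$ time, which is exactly what $\cA_{construct}$ must do. We take $\Phi$ to be the $\{0,1\}$-biadjacency matrix of this graph; it is $s$-sparse per column and has dimensions $d\times n$ as required.

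For $\cA_{recover}$, the $(2k,1-\varepsilon)$-expansion yields the standard unique-neighbor property: every $S$ with $|S|\le 2k$ has at least $(1-2\varepsilon)s|S|$ right vertices with exactly one neighbor in $S$. Given $y=\Phi v$ with $v$ being $k$-sparse, I would maintain an estimate $\hat v$ initialized to $0$ and, implicitly, the residual $r=y-\Phi\hat v=\Phi(v-\hat v)$. In each round, scan the right vertices $j$ with $r_{j}\ne 0$, follow each to its $s$ left neighbors via $\cA_{construct}$, and for every such pair $(j,i)$ cast the vote that $(v-\hat v)_{i}=r_{j}$. Every $i$ still in the support of $v-\hat v$ then receives at least a $(1-2\varepsilon)$-fraction of consistent correct votes from its unique-neighbor right vertices, while any candidate value for an $i$ outside the support can be consistently supported by at most an $\varepsilon$-fraction of $i$'s incident right vertices. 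Taking majorities therefore resolves a constant fraction of the remaining nonzero coordinates each round, so after $O(\log n)$ rounds the residual is $0$. All arithmetic is $O(1)$ in the word-RAM model since every intermediate quantity has magnitude at most $kM=\poly(n)$.

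The main obstacle I anticipate is driving the per-round cost down to $O(d)$ so that the total time meets the stated $O(d\log n)$ bound: a naive decoder spends $\Theta(sd)$ per round and loses an extra factor $\beta_{1}$. I would handle this in the style of the sequential sparse-matching-pursuit decoder of Berinde--Indyk--Ruzic: only right vertices currently in $\mathrm{supp}(r)$ are scanned, votes are aggregated through a hash table in $O(1)$ amortized time, and once a left coordinate is resolved its incident right vertices are updated and removed from the active set, so no edge is inspected more than a constant number of times over the entire execution. Combined with the geometric shrinkage of $|\mathrm{supp}(r)|$ across rounds, this yields the claimed $O(d\log n)$ total recovery time.
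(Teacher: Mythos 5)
Your plan diverges from the paper at a decisive point, and the divergence introduces a real gap. The paper does \emph{not} take $\Phi$ to be the raw biadjacency matrix $\Psi$ of the expander; it takes $\Phi = \Psi\otimes_{R} H^{(n)}$, the row tensor product with the Hamming bit-test matrix, and then applies the decoder of Theorem~18 of \cite{BerindeGIKS08}. That bit-test layer is not a cosmetic choice: it is what lets the decoder, upon seeing a block of $\Theta(\log n)$ measurements corresponding to a single right vertex of $\Psi$ that is a unique neighbor of the residual's support, read off \emph{which} left coordinate is responsible and what its value is. Your decoder needs exactly that information, but you propose to get it by ``following each right vertex to its $s$ left neighbors via $\cA_{construct}$.'' This is doubly flawed: $\cA_{construct}$ only enumerates right neighbors given a left index (that is what strong explicitness of the CRVW graph provides), so there is no supported right-to-left access; and even if there were, right vertices have degree $\Theta(n/k)$, not $s$ (the left degree is $s$; the bipartite graph is highly unbalanced with $|L|=n$ and $|R|=d\ll n$), so scanning a right vertex's left neighborhood is nowhere near $O(s)$. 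This is precisely the problem the bit-test matrix was invented to solve, and without it your recovery time analysis does not go through.

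The correctness argument also overclaims. The $(2k,\epsilon)$-expansion guarantees that the total number of unique-neighbor right vertices of any $|S|\le 2k$ is $(1-2\epsilon)s|S|$; this is an aggregate bound, and only a constant fraction of the $i\in S$ (by Markov) individually have $(1-O(\epsilon))s$ unique-neighbor right vertices. Your sentence asserting that ``every $i$ still in the support\dots receives at least a $(1-2\varepsilon)$-fraction of consistent correct votes'' is false as stated, although your later conclusion (``resolves a constant fraction per round'') is the right shape. More seriously, the claim that any $i$ \emph{outside} the support receives at most an $\varepsilon$-fraction of consistent wrong votes is unjustified: a right vertex $j\in N(i)$ with $j$ adjacent to exactly one $i'\in S$ casts the vote $r_j=(v-\hat v)_{i'}$ for $i$, and nothing in the expansion property prevents many of $i$'s right neighbors from being unique neighbors of the \emph{same} $i'\in S$ and therefore agreeing. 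You would need a stronger unique-neighbor property quantified over all left vertices (in or out of $S$), or — as the paper does — a bit-test layer so that a unique-neighbor block reveals the correct left index and makes ``voting for the wrong $i$'' impossible by construction. Minor point: the $\beta_1=2^{O(\log\log n)^3}$ degree is from Capalbo--Reingold--Vadhan--Wigderson, not Guruswami--Umans--Vadhan; citing GUV changes the parameters, though both sit inside $n^{o(1)}$.
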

This theorem is essentially proved by Berinde et al. \cite[Section 5]{BerindeGIKS08}.
However, in \cite{BerindeGIKS08}, they did not explicitly state how
the matrix $\Phi$ can be constructed and the running time of the
construction. Here, we make it explicit by defining the algorithm
$\cA_{construct}$. To do this, we observe that the construction of
\cite{BerindeGIKS08} uses an explicit \emph{unbalanced expander}
from Capalbo et al. \cite{CapalboRVW02}. The explicitness of their
expanders directly allows us to obtain $\cA_{construct}$. We also
give the proof of this theorem for completeness in \ref{sec:proof sparse recovery}.

\subsubsection{$d$-word ET Tree\label{sec:d word ET tree}}

Next, we need the following data structure which is a variant of ET
tree from Henzinger and King's paper \cite{HenzingerK99} so that
it can maintain a sum of a large vector associated with each node.
This data structure is also used by Kapron King and Mountjoy \cite{KapronKM13}
to obtain a data structure called \emph{cut set data structure}. Since
the definition is implicit there, we give a precise definition here:
\begin{defn}
[$d$-word ET tree]For any parameter $d$, a $d$-word ET tree is
a data structure that preprocesses a forest $F$ with a set $V$ of
nodes and a vector $x_{u}\in[-M,M]^{d}$ where $M=\poly(|V|)$ for
each $u\in V$ as inputs and handles the following operations:
\begin{itemize}
\item $\instree(e)$: add edge $e$ into $F$, given that $e\notin F$ and
adding does not cause a cycle in $F$.
\item $\deltree(e)$: remove edge $e$ from $F$, given that $e\in F$.
\item $\textsf{sum\_vec}(T)$: return $\sum_{u\in T}x_{u}$, given that
$T$ is a tree in $F$.
\item $\textsf{update\_vec}(u,i,\alpha)$: set $x_{u}[i]=\alpha$, given
that $u\in V$, $i\in[d]$ and $\alpha\in[-M,M]$.
\end{itemize}
\end{defn}
\begin{thm}
\label{thm:d-word ET tree}For any parameter $d$, there is an algorithm
for $d$-word ET tree that preprocesses a forest $F$ with a set $V$
of $n$ nodes and a set of vectors $x=\{x_{u}\}_{u\in V}$ which takes 
\begin{itemize}
\item $O(nnz(x)\log n)$ preprocessing time where $nnz(x)=\sum_{u\in V}|x_{u}|_{0}$
is the total number of non-zero entries of each vector in $\{x_{u}\}_{u\in V}$,
\item $O(d\log n)$ time to handle $\instree$ and $\deltree$, 
\item $O(d)$ time to return $\textsf{sum\_vec}$, and 
\item $O(\log n)$ time to handle $\textsf{update\_vec}(u,i,\alpha)$.
\end{itemize}
\end{thm}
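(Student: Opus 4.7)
The plan is to represent each tree $T$ of $F$ by an Euler tour stored in a balanced BST (e.g.\ a treap or red-black tree), exactly as in the standard ET tree of Henzinger and King, but additionally annotate every BST node $v$ with the vector $\sigma(v) = \sum_{u} x_u$ where the sum ranges over all vertices $u$ whose designated Euler-tour occurrence lies in the BST-subtree rooted at $v$. Each vertex $u \in V$ has exactly one designated occurrence in the Euler tour, so the annotation at the root of the BST for $T$ is precisely $\sum_{u \in V(T)} x_u$, which gives $\textsf{sum\_vec}(T)$ in $O(d)$ time by simply returning (a pointer to) the array stored at that root.

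For $\instree(e)$ and $\deltree(e)$, I would implement the operation exactly as in the classical ET tree: each such update corresponds to a constant number of \textsf{split} and \textsf{join} operations on the Euler tour BSTs. A self-balancing BST performs each such operation by touching only $O(\log n)$ nodes, and at each of these nodes I recompute $\sigma(v)$ as the entrywise sum of the children's annotations plus the local vector (if any), costing $O(d)$ per node and $O(d\log n)$ overall. For $\textsf{update\_vec}(u,i,\alpha)$ I set $\delta \leftarrow \alpha - x_u[i]$, write $\alpha$ into $x_u[i]$, and then walk from $u$'s designated BST node up to the root, adding $\delta$ to the $i$-th coordinate of $\sigma(v)$ at each of the $O(\log n)$ ancestors; since only one coordinate changes, this is $O(\log n)$ total work independent of $d$.

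The main obstacle is achieving preprocessing time $O(\textrm{nnz}(x)\log n)$ rather than $O(nd)$, because there are $\Theta(n)$ BST nodes and each one nominally stores a dense vector of length $d$. I would overcome this with the standard ``virtual zero-initialization'' trick for arrays (storing a per-entry generation stamp together with a global counter), so that allocating a sum array of size $d$ takes $O(1)$ real work and any uninitialized coordinate reads as $0$. With this trick in place, the initial sums are built as follows: first construct the Euler tours and balanced BSTs (ignoring the annotations) in $O(n)$ time; then, for each vertex $u$, for each nonzero coordinate $i$ of $x_u$, walk from $u$'s designated BST node up to the root and add $x_u[i]$ into the $i$-th entry of $\sigma(v)$ at every ancestor. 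Each nonzero coordinate contributes to exactly $O(\log n)$ ancestor entries, so the total cost is $O(\textrm{nnz}(x)\log n)$ as required.

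The remaining checks are routine: correctness of $\sigma$ is maintained inductively under rotations and split/join, because $\sigma(v)$ is defined purely in terms of the children's annotations and the local vector, and the invariant that $\textsf{sum\_vec}(T)$ equals $\sum_{u\in V(T)} x_u$ follows from each $u$ having a unique designated occurrence in the Euler tour of its tree. With the virtual-zero trick, the per-coordinate running times claimed above hold in the worst case, yielding all four bounds in the theorem.
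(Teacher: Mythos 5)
Your proof is correct and takes essentially the same route as the paper's: a single designated Euler-tour occurrence per vertex, a dense length-$d$ sum annotation at each node of a balanced-BST Euler-tour representation, recomputed at the $O(\log n)$ nodes touched by a split or join, and single-coordinate delta propagation up to the root for $\textsf{update\_vec}$. Your ``virtual zero-initialization'' of the per-node arrays plays exactly the role of the paper's active/inactive entry marking, which is what lets the preprocessing cost scale with $nnz(x)\log n$ instead of $nd$.
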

It is important that preprocessing time is $\tilde{O}(nnz(x))$ to
obtain a cut recovery tree with fast preprocessing time. Note that
the total number of entries of $x_{u}$ over all $u\in V$ is $n\times d$.
But $nnz(x)$ can be much smaller. For completeness, we give a proof
in \ref{sec:proof d word ET}.

\subsubsection{Put Them Together\label{sec:combining cut recovery tree}}

\ref{thm:cut recovery tree} is obtained by combining the above two
results:
\begin{proof}
[Proof of \ref{thm:cut recovery tree}]Let $\Phi^{*}$ be the matrix
of size $d\times\binom{n}{2}$ from \ref{thm:sparse recovery} with
parameter $\binom{n}{2}$ and $k$ such that each column of $\Phi^{*}$
is $s$-sparse, where $d=O(k\beta_{1}\log n)$, $s=O(\beta_{1}\log n)$
and $\beta_{1}=2^{O(\log\log n)^{3}}$. We will not actually construct
the whole $\Phi^{*}$ at the beginning. Instead, let $\Phi$ be a
zero matrix of size $d\times\binom{n}{2}$. We will gradually compute
some entries of $\Phi^{*}$and copy it to the corresponding entry
in $\Phi$. For each $(u,v)\in\binom{n}{2}$, let $\Phi_{(u,v)},$$\Phi_{(u,v)}^{*}\in\{0,1\}^{d}$
denote the $(u,v)$-th column of $\Phi$ and $\Phi^{*}$ respectively.

Let $E_{0}$ be the set of initial edges of $G$. At any time, let
$E_{I}$ be the set of edges have been inserted into $G$ and let
$B\in\{0,1\}^{\binom{n}{2}\times n}$ be the incidence matrix of $G$.
We have the following invariants: 1) for each $(u,v)\in E_{0}\cup E_{I}$,
$\Phi{}_{(u,v)}=\Phi^{*}{}_{(u,v)}$, otherwise $\Phi_{(u,v)}=0$,
and 2) for each $u\in V$, we have a vector $x_{u}=\Phi B_{u}$ of
dimension $d$, where $B_{u}\in\{0,1\}^{\binom{n}{2}}$ be the $u$-th
column of $B$. It follows from the invariants that, for any $S\subset V$,
we have $\Phi(\sum_{u\in S}B_{u})=\Phi^{*}(\sum_{u\in S}B_{u})$ because
if the $(u,v)$-th entry of $\sum_{u\in S}B_{u}$ is non-zero, then
$(u,v)\in E_{0}\cup E_{I}$.

To preprocess, we do the following. We assume that it takes constant
time to allocate space for a zero matrix $\Phi$. Then, to validate
the invariants, we first set $\Phi{}_{(u,v)}=\Phi^{*}{}_{(u,v)}$
for each initial edge $(u,v)\in E_{0}$. This takes in total $ms\beta_{2}$
time because all non-zero entries of each column of $\Phi^{*}$ can
be identified in $s\beta_{2}$ time by \ref{thm:sparse recovery}
where $\beta_{2}=O(\polylog n)$. Second, we compute $x_{u}=\Phi B_{u}$.
Since each column of $\Phi^{*}$ is $s$-sparse, so is $\Phi$. As,
$B_{u}$ is $\deg(u)$-sparse, this also takes in total $\sum_{u\in V}O(\deg(u)s)=O(ms)$
time. Finally, let $\cE$ be a $d$-word ET tree from \ref{thm:d-word ET tree}.
We use $\cE$ to preprocess $F$ and $x=\{x_{u}\}_{u\in V}$ taking
time $O(nnz(x)\log n)=O(\sum_{u\in V}\deg(u)\log n)=O(m\log n)$.
The total preprocessing time is $O(ms\beta_{2}+ms+m\log n)=O(m\beta_{1}\beta_{2}\log n)$.

To handle $\instree$ and $\deltree$, we just update $\cE$ accordingly.
So, by \ref{thm:d-word ET tree}, the tree-edge update time is $O(d\log n)=O(k\beta_{1}\log^{2}n)$.
To handle $\insnon$ and $\textsf{\ensuremath{\delnon}}$, if $e=(u,v)$
is inserted and $e\notin E_{0}\cup E_{I}$, then, to maintain the
first invariant, we set $\Phi{}_{(u,v)}=\Phi^{*}{}_{(u,v)}$ in time
$s\beta_{2}$. Next, to maintain the second invariant, we want to
maintain that $x_{u}=\Phi B_{u}$ for all $u\in V$. When we update
non-tree edge $e=(u,v)$, one entry of both $B_{u}$ and $B_{v}$
are updated. Since each column of $\Phi$ is s-sparse, $s$ entries
of $x_{u}$ and $x_{v}$ are updated. Each call for $\textsf{update\_vec}$
takes $O(\log n)$ time, so this takes $O(s\log n)$. In total, the
non-tree-edge update time is $O(s\beta_{2}+s\log n)=O(\beta_{1}\beta_{2}\log n)$.

To answer $\cutset k(T)$, we have that for any fixed $T$, we can
construct $\partial_{G}(V(T))$ from $\sum_{u\in T}x_{u}$ in time
$O(d\log n)$. Because 1) by calling $\textsf{sum\_vec}(T)$, we get
$\sum_{u\in T}x_{u}$ in time $O(d)$, 2) $\sum_{u\in T}x_{u}=\sum_{u\in T}\Phi B_{u}=\Phi(\sum_{u\in T}B_{u})=\Phi^{*}(\sum_{u\in S}B_{u})$,
3) $\partial_{G}(V(T))$ corresponds to the non-zero entries of $\sum_{u\in T}B_{u}$
by \ref{thm:incidence matrix}, and 4) by \ref{thm:sparse recovery},
for any $k$-sparse $v\in[-poly(n),poly(n)]^{\binom{n}{2}}$, we can
construct $v$ from from $\Phi^{*}v$ in time $O(d\log n)$. Since
$\delta_{G}(V(T))\le k$, we can construct $\partial_{G}(V(T))$ in
time $O(d\log n)=O(k\beta_{1}\log^{2}n)$. 

By setting the parameter $\beta=\beta_{1}\times\beta_{2}$, this completes
the proof.\end{proof}
\begin{rem}
We note that $\beta$ is a factor depending on product of the degree
and explicitness of the best known explicit construction of unbalanced
expanders. Therefore, a better construction of expanders would immediately
speed up our data structure.\end{rem}

\subsection{Classic Reductions\label{sec:reductions}}

In this section, we show that, roughly, to get a dynamic $\st$ with
$O(n^{1/2-\epsilon+o(1)})$ worst-case update time for an infinite-length
update sequence, it is enough to show a dynamic $\st$ with the same
update time and $O(n^{1+o(1)})$ preprocessing time where the underlying
graph has maximum degree $3$ and the length of the update sequence
is just $n^{1/2+\epsilon}$. 

This reduction also helps the algorithms in \cite{KapronKM13,GibbKKT15}
that can handle only polynomial-length update sequences to be able
handle infinite-length sequences.

Recall \ref{def:adversary} and \ref{def:running time} for formal
definitions of ``working against adversaries''. We also need the
following definition:
\begin{defn}
A dynamic algorithm runs \emph{on $\Delta$-bounded degree graphs
}if throughout the whole sequence of updates, all nodes in the underlying
graph always has maximum degree at most $\Delta$.\end{defn}
\begin{thm}
For any $n$, $p_{c}$, $p_{t}$ and $T$, suppose there is a dynamic
$\st$ algorithm $\cA$ for $3$-bounded-degree graphs with $n$ nodes
and $\Theta(n)$ edges that, for the first $T$ updates, works correctly
against adaptive adversaries with probability $1-p_{c}$ and $\cA$
has preprocessing time $t_{p}(n,p_{c},p_{t},T)$ and worst-case update
time $t_{u}(n,p_{c},p_{t},T)$ with probability $1-p_{t}$. Suppose
that for any $n_{1},n_{2}$, $t_{p}(n_{1},p_{c},p_{t},T)+t_{p}(n_{2},p_{c},p_{t},T)\le t_{p}(n_{1}+n_{2},p_{c},p_{t},T)$,
then there is a dynamic $\st$ algorithm $\cB$ for any graph with
$n$ nodes that works correctly against adaptive adversaries with
probability $1-p_{c}\cdot\poly(n)$ and $\cB$ has preprocessing time
$O(t_{p}(O(m),p_{c},p_{t},T)\log n)$ where $m$ is the number of
initial edges and worst-case update time $O(t_{u}(O(n),p_{c},p_{t},T)\log n+t_{p}(O(n),p_{c},p_{t},T)/T+\log^{2}n)$
with probability $1-p_{t}\cdot\poly(n)$.\label{thm:classic reduc}
\end{thm}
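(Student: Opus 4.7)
My plan is to chain together three classical reductions, each of which was already mentioned informally in the proof sketch of \ref{lem:intro:reductions}: (i) a \emph{sparsification} step that reduces any input graph to one with $O(n)$ edges, (ii) a \emph{degree reduction} step that reduces further to the $3$-bounded degree case via node-splitting and a $2$-weight minimum spanning forest, and (iii) a \emph{phase/restart} step that handles an update sequence longer than $T$ by running two copies of $\cA$ in parallel and distributing the preprocessing cost across the updates of a phase. I would prove the statement by composing these three layers in this order, each time verifying that the correctness probability $p_c$ and the update-time probability $p_t$ blow up only by a $\poly(n)$ factor, as promised in the statement.

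\paragraph{Sparsification and degree reduction.}
For step (i) I would invoke the sparsification tree of Eppstein~et~al.~\cite{EppsteinGIN97}, which maintains $O(\log n)$ sparse certificates each with $O(n)$ edges, and reduces a dynamic $\st$ problem on the original graph to $O(\log n)$ concurrent instances of dynamic $\st$ on graphs of size $O(n)$. The only subtlety is that the original analysis requires the underlying algorithm to be \emph{stable}; since our black-box $\cA$ is randomized, I would use the slightly relaxed notion of stability noted in the proof sketch (and formalized in \ref{sec:reductions} for the full paper), under which the sparsification reduction still works. For step (ii) I would apply the standard node-splitting trick: replace each vertex $u$ by a path of length $\deg(u)$, assign weight $0$ to path edges and weight $1$ to original edges, and observe that the weight-$1$ tree edges of the $2$-weight \mst form a spanning forest of the original graph. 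I would then invoke the Henzinger--King reduction~\cite{HenzingerK99} from dynamic $k$-weight $\mst$ to dynamic $\st$, applied with $k=2$, to reduce maintaining the $2$-weight $\mst$ to $O(\log n)$ instances of dynamic $\st$ on $3$-bounded degree graphs, each with $O(n)$ vertices and edges. Composing (i) and (ii) costs an overall $O(\log^{2} n)$ blow-up in update time and uses the subadditivity assumption $t_p(n_1,\dots)+t_p(n_2,\dots)\le t_p(n_1+n_2,\dots)$ to keep preprocessing bounded by $O(t_p(O(m),\dots)\log n)$.

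\paragraph{Phase-based restart.}
For step (iii), starting from the algorithm given by the composition of (i) and (ii), I would split the update stream into consecutive phases of length $T$. In each phase I maintain two running instances $\cA_1,\cA_2$ of the restricted algorithm. During odd phases, $\cA_1$ services the spanning-forest queries, while the $O(t_p(O(n),p_c,p_t,T))$ total preprocessing work required to initialize $\cA_2$ on the current snapshot of the graph is spread evenly across the $T$ updates of this phase, paying $O(t_p(O(n),p_c,p_t,T)/T)$ per update; at the start of the next phase $\cA_2$ is ready, and we swap roles. To combine the forest output by the currently active instance with the forest of edges already known from the ``warm-up'' snapshot, I would use the standard $O(\log^2 n)$ tree-combining trick implicit in Henzinger--King's $k$-weight $\mst$ reduction. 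This contributes the additive $\log^2 n$ term in the final update time and gives the stated $O(t_u(O(n),p_c,p_t,T)\log n + t_p(O(n),p_c,p_t,T)/T + \log^2 n)$ bound.

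\paragraph{Probability accounting and adaptive-adversary considerations.}
The main obstacle is bookkeeping: the reductions (i)--(iii) together instantiate $\poly(n)$ many copies of $\cA$ over a window of $T$ updates, so a union bound multiplies both the correctness failure probability $p_c$ and the update-time failure probability $p_t$ by $\poly(n)$, matching the conclusion. A conceptual point I would be careful about is that the sparsification certificates, the node-split gadget, and the combining forests are all themselves affected by $\cA$'s output; hence the adaptive-adversary guarantee of $\cA$ is essential for the reduction to be sound, since each inner instance of $\cA$ effectively faces an adversary that depends on the outputs of the other instances. Finally, because the phase construction only guarantees correctness for up to $\poly(n)$ updates (failure probabilities accumulate over phases), this theorem is stated for the first $T$ updates only; the genuinely infinite-length case is handled by the separate reduction in \ref{lem:reduc infty length} and is out of scope here.
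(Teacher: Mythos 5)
Your overall chain of reductions — sparsification, degree reduction, and phase-based restart — matches the structure of the paper's proof (the paper's \ref{lem:reduc sparse}, \ref{lem:reduc bound degree}, and \ref{lem:reduc poly length} respectively), but there is one genuine gap that makes your proof fall short of the stated theorem, plus two smaller omissions.

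\paragraph{The main gap: the infinite-length extension.} In your final paragraph you write that ``this theorem is stated for the first $T$ updates only; the genuinely infinite-length case is handled by the separate reduction in \ref{lem:reduc infty length} and is out of scope here.'' This is a misreading of the statement. The hypothesis restricts $\cA$ to the first $T$ updates, but the conclusion about $\cB$ carries no such qualifier — and by \ref{def:adversary}, the absence of ``for the first $T$ updates'' means $T=\infty$ (the paper's restatement makes this explicit with the notation $(n,\poly(n)\cdot p_c,\poly(n)\cdot p_t,\infty)$). Your phase-restart argument plus a union bound over $\poly(n)$ phases cannot deliver this: after polynomially many phases the accumulated failure probability exceeds $1$, so the per-step guarantee degrades to nothing. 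The paper handles this by first extending to $U=n^2$ updates via \ref{lem:reduc poly length}, and only then applying the separate (and more delicate) reduction \ref{lem:reduc infty length}, which uses a three-instance scheme ($\cA^{odd},\cA^{even},\cA'$) to ``refresh'' the correctness guarantee without accumulating failure probability across phases. You explicitly punt on this step, so your proof does not establish the stated theorem.

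\paragraph{Smaller omissions.} First, between degree reduction and sparsification, the paper needs a step that removes the lower bound on the number of edges (\ref{lem:reduc lower bound edge}): the hypothesis algorithm $\cA$ runs on graphs with $\Theta(n)$ edges, whereas the sparsification tree feeds its inner algorithm subgraphs that may have far fewer than $n$ edges; that lemma also makes the running time depend on the current edge count rather than on $n$. Your sketch jumps directly from ``$3$-bounded-degree, $\Theta(n)$ edges'' to the sparsification layer without bridging this. Second, you acknowledge that sparsification requires a stability property and that the paper uses a relaxed notion, but you do not give (or cite) the actual stabilization reduction (\ref{lem:reduc stable}) that converts an arbitrary $\cA$ to a stable one using two augmented ET trees and a binary search on tree paths; without it your invocation of the sparsification tree is not justified.
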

Note that if $\cA$ works correctly with certainty ($p_{c}=0$) or
with high probability, then so does $\cB$. Similarly,if $\cA$ guarantees
the worst-case update time with certainty ($p_{t}=0$) or with high
probability, then so does $\cB$. 

For convenience, we state the reduction that follows immediately from
\ref{thm:classic reduc} and will be used in \ref{sec:monte_carlo}
to obtain our Monte Carlo algorithm.
\begin{cor}
For any constant $\epsilon$, suppose there is a dynamic $\st$ algorithm
$\cA$ for $3$-bounded-degree graphs with $n$ nodes and $\Theta(n)$
edges that, for the first $T$ updates, works correctly against adaptive
adversaries with probability $1-p_{c}$ and $\cA$ has preprocessing
time $t_{p}(n,p_{c})=O(n^{1+o(1)}\log\frac{1}{p_{c}})$ and worst-case
update time $t_{u}(n,p_{c},T)$ where $t_{u}(n,p_{c},n^{1/2+\epsilon})=O(n^{1/2-\epsilon+o(1)}\log\frac{1}{p_{c}})$.
Then there is a dynamic $\st$ algorithm $\cB$ for any graph with
$n$ nodes that works correctly against adaptive adversaries with
high probability and $\cB$ has preprocessing time $O(m^{1+o(1)})$
where $m$ is the number of initial edges and worst-case update time
$O(n^{1/2-\epsilon+o(1)})$.\label{thm:classic reduc mc}\end{cor}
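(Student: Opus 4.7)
The plan is to derive \ref{thm:classic reduc mc} by a direct instantiation of the more general reduction in \ref{thm:classic reduc}. The corollary is really just a bookkeeping exercise: we need to (i) pick the threshold $T$, (ii) pick the failure probabilities $p_c, p_t$ small enough to absorb the multiplicative $\poly(n)$ loss in \ref{thm:classic reduc}, and (iii) verify that, after substitution, every term in the bounds of \ref{thm:classic reduc} collapses to $O(m^{1+o(1)})$ preprocessing and $O(n^{1/2-\epsilon+o(1)})$ worst-case update time.

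First I would set $T := n^{1/2+\epsilon}$, which is precisely the phase length for which the hypothesis $t_u(n, p_c, n^{1/2+\epsilon}) = O(n^{1/2-\epsilon+o(1)} \log(1/p_c))$ is in force. Next, I would set $p_c := 1/n^{c}$ and $p_t := 1/n^{c}$ for a sufficiently large constant $c$. Under this choice, $\log(1/p_c) = O(\log n)$ is swallowed by the $n^{o(1)}$ factor, and the blown-up failure probability $p_c \cdot \poly(n) = 1/\poly(n)$ still gives the required ``with high probability'' guarantees in both the correctness and the running time statements.

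With these choices, the preprocessing time promised by \ref{thm:classic reduc} becomes
\[
O\bigl(t_p(O(m), p_c, p_t, T)\log n\bigr) \;=\; O\bigl(m^{1+o(1)} \log n \cdot \log n\bigr) \;=\; O(m^{1+o(1)}),
\]
since two logarithmic factors are absorbed into the $n^{o(1)}$. For the worst-case update time, I would expand the three summands of \ref{thm:classic reduc} separately. The first summand is $O(t_u(O(n),p_c,p_t,T)\log n) = O(n^{1/2-\epsilon+o(1)})$ by hypothesis. The second summand is
\[
\frac{t_p(O(n),p_c,p_t,T)}{T} \;=\; \frac{O(n^{1+o(1)} \log n)}{n^{1/2+\epsilon}} \;=\; O(n^{1/2-\epsilon+o(1)}),
\]
which is the crucial reason phases of length $T = n^{1/2+\epsilon}$ were chosen: they balance the amortized preprocessing cost against the per-update cost. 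The final $O(\log^2 n) = O(n^{o(1)})$ summand is dominated.

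There is no real obstacle here; the argument is purely a parameter substitution into \ref{thm:classic reduc}. The only subtle point to double-check is the monotonicity requirement in the hypothesis of \ref{thm:classic reduc} (that $t_p(n_1) + t_p(n_2) \le t_p(n_1+n_2)$), which is evidently satisfied by any bound of the form $n^{1+o(1)} \log(1/p_c)$ up to constants; this will be invoked without further comment.
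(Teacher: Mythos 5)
Your proof is correct and follows essentially the same route as the paper's: both fix $T = n^{1/2+\epsilon}$, invoke \ref{thm:classic reduc} directly, and choose $p_c$ inverse-polynomially small so the $\poly(n)$ failure-probability blowup is absorbed. The only (inconsequential) deviation is that you set $p_t := 1/n^c$; since the hypothesis gives a deterministic update-time guarantee you should simply take $p_t = 0$, which is what makes the derived worst-case bound hold unconditionally as claimed rather than merely with high probability.
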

\begin{proof}
We note that the function $t_{p}(n,p_{c})$ does not depend on $T$.
We set $T=n^{1/2+\epsilon}$. By \ref{thm:classic reduc}, there is
dynamic $\st$ algorithm $\cB$ for any graph with $n$ nodes that
works correctly against adaptive adversaries with probability $1-p_{c}\cdot\poly(n)$
and $\cB$ has preprocessing time $O(t_{p}(O(m),p_{c})\log n)=O(m^{1+o(1)}\log\frac{1}{p_{c}})$
where $m$ is the number of initial edges and worst-case update time
$O(t_{u}(O(n),p_{c},T)\log n+t_{p}(O(n),p_{c})/T+\log^{2}n)=O(n^{1/2-\epsilon+o(1)}\log\frac{1}{p_{c}}+\frac{n^{1+o(1)}\log\frac{1}{p_{c}}}{n^{1/2+\epsilon}}+\log^{2}n)=O(n^{1/2-\epsilon+o(1)}\log\frac{1}{p_{c}})$.
By choosing $p_{c}=1/n^{k}$ where $k$ is an arbitrarily large constant,
we are done. 
\end{proof}
Next, we state the similar reduction that will be used in \ref{sec:las_vegas}
to get our Las Vegas algorithm. 
\begin{cor}
For any $n$ and constant $\epsilon$, suppose there is a dynamic
$\st$ algorithm $\cA$ for $3$-bounded-degree graphs with $n$ nodes
and $\Theta(n)$ edges that, for the first $T$ updates, works correctly
against adaptive adversaries with certainty and $\cA$ has, with probability
$1-p_{t}$, preprocessing time $t_{p}(n,p_{t})=O(n^{1+o(1)}\log\frac{1}{p_{t}})$
and worst-case update time $t_{u}(n,p_{t},T)$ where $t_{u}(n,p_{t},n^{1/2+\epsilon})=O(n^{1/2-\epsilon+o(1)}\log\frac{1}{p_{t}})$.
Then there is a dynamic $\st$ algorithm $\cB$ for any graph with
$n$ nodes that works correctly against adaptive adversaries with
certainty and $\cB$ has preprocessing time $O(m^{1+o(1)})$ where
$m$ is the number of initial edges and worst-case update time $O(n^{1/2-\epsilon+o(1)})$
with high probability.\label{thm:classic reduc lv}
\end{cor}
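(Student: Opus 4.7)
The plan is to invoke \ref{thm:classic reduc} in essentially the same way that \ref{thm:classic reduc mc} invokes it, but with a different accounting of the two failure probabilities $p_c$ and $p_t$. Concretely, I set $T = n^{1/2+\epsilon}$ and observe that, because the hypothesis states $\cA$ works correctly against adaptive adversaries \emph{with certainty}, the parameter $p_c$ in \ref{thm:classic reduc} is zero. The correctness conclusion of \ref{thm:classic reduc} then reads that $\cB$ works correctly against adaptive adversaries with probability $1 - p_c \cdot \poly(n) = 1$, i.e., with certainty. This is exactly the Las Vegas correctness guarantee that we want, and nothing further is required for this part.

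For the running-time part, I choose $p_t = 1/n^c$ for an arbitrarily large constant $c$, and use the same $T = n^{1/2+\epsilon}$. Substituting into the preprocessing-time bound supplied by \ref{thm:classic reduc},
\[
O(t_p(O(m), p_t) \log n) = O(m^{1+o(1)} \log(1/p_t) \log n) = O(m^{1+o(1)}),
\]
where the extra polylogarithmic factor $\log(1/p_t) \log n$ is absorbed into the $m^{o(1)}$ slack. Substituting into the update-time bound and expanding each of the three summands using $t_u(O(n), p_t, T) = O(n^{1/2-\epsilon+o(1)} \log(1/p_t))$ and $t_p(O(n), p_t) = O(n^{1+o(1)} \log(1/p_t))$,
\begin{align*}
O(t_u(O(n), p_t, T) \log n + t_p(O(n), p_t)/T + \log^2 n)
&= O\bigl(n^{1/2-\epsilon+o(1)} \log(1/p_t) \log n \\
&\quad + n^{1+o(1)} \log(1/p_t)/n^{1/2+\epsilon} + \log^2 n\bigr) \\
&= O(n^{1/2-\epsilon+o(1)}),
\end{align*}
since each summand is individually of this order once the polylogarithmic $\log(1/p_t)$ factor is absorbed. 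By \ref{thm:classic reduc}, this update-time bound holds per operation with failure probability at most $p_t \cdot \poly(n) = n^{-(c - O(1))}$, which is made smaller than $n^{-k'}$ for any prescribed constant $k'$ by choosing $c$ large enough; this is the paper's definition of ``with high probability.''

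I do not foresee a genuine obstacle here, since the result is a direct specialization of \ref{thm:classic reduc}. The one conceptual point worth emphasizing is that passing $p_c = 0$ through the reduction is what turns the Monte Carlo analogue (\ref{thm:classic reduc mc}) into a Las Vegas statement: the preprocessing- and update-time probabilistic slack only contaminates the resource-usage guarantee, never the correctness guarantee, so as long as $\cA$ is itself Las Vegas, the composed algorithm $\cB$ remains Las Vegas as well.
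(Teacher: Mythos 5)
Your proposal is correct and follows exactly the route the paper intends: the paper omits the proof of this corollary, stating only that it ``has almost the same proof as \ref{thm:classic reduc mc},'' and your argument is precisely that proof with $p_c$ set to $0$ and $p_t = 1/n^c$ threaded through \ref{thm:classic reduc}. The key observation you emphasize -- that \ref{thm:classic reduc} keeps the correctness failure probability ($p_c \cdot \poly(n)$) and the timing failure probability ($p_t \cdot \poly(n)$) separate, so that $p_c = 0$ propagates to certainty while $p_t$ only degrades the running-time guarantee -- is exactly what makes the Las Vegas version go through.
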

\ref{thm:classic reduc lv} has almost the same proof as \ref{thm:classic reduc mc}
so we omit it. The rest of this section is devoted for proving \ref{thm:classic reduc}.

\subsubsection{Overview of the Sequence of Reductions}

In this section, we will state the sequence of reductions that we
will use for proving \ref{thm:classic reduc}. For each reduction,
we give a high level overview. Then we show how to combine them. For
convenience, we define dynamic $\st$ algorithms with parameters as
follows:
\begin{defn}
$(n,p_{c},p_{t},T)$-algorithms are dynamic $\st$ algorithms for
a graph with $n$ nodes that, for the first $T$ updates, works correctly
against adaptive adversaries with probability $1-p_{c}$ and the preprocessing
and update time is guaranteed with probability $1-p_{t}$. $(n,p_{c},p_{t},\infty)$-algorithms
give the same guarantee for correctness and running time for any length
of updates.
\end{defn}
The following property called ``stability'' of the algorithm is
useful for us. Note that this definition is not the same definition
as in \cite[Definition 3.3.1]{EppsteinGIN97} as we will discuss again
before the statement of \ref{lem:reduc sparse}.
\begin{defn}
[Stability]\label{def:stable}A dynamic $\st$ algorithm $\cA$ is
\emph{stable }if the spanning forest maintained by $\cA$ never changes
except when a tree-edge $e$ is deleted. If a tree-edge $e$ is deleted
from a spanning forest $F$, then $\cA$ either do not update $F$
or add a replacement edge $e'$ into $F$. That is, let $F'$ be a
new spanning forest updated by $\cA$. Then, either $F'=F\setminus\{e\}$
or $F'=F\cup\{e'\}\setminus\{e\}$.\footnote{All known dynamic spanning tree algorithms in the literature \cite{Frederickson85,KapronKM13,HolmLT01,HenzingerK99,Kejlberg-Rasmussen16}
are stable according to this definition. }
\end{defn}
An $(n,p_{c},p_{t},T)$-algorithm which is stable is called a stable
$(n,p_{c},p_{t},T)$-algorithm. The first reduction shows that we
can assume that the algorithm is stable. The technique for proving
this is implicit in \cite[Section 3.1]{HenzingerK99}.
\begin{lem}
Suppose there is an $(n,p_{c},p_{t},T)$-algorithm $\cA$ for any
graph with preprocessing time $t_{p}$ and update time $t_{u}$. Then
there is a stable $(n,O(Tp_{c}),p_{t},T)$-algorithm $\cB$ for the
same graph with preprocessing time $O(t_{p}+m_{0})$ where $m_{0}$
is the number of initial edges, and update time $O(t_{u}\log n+\log^{2}n)$.\label{lem:reduc stable}
\end{lem}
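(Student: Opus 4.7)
\medskip\noindent\textbf{Proof proposal for \Cref{lem:reduc stable}.}
The plan is to run $\cA$ as a black box in parallel with $\cB$, but to let $\cB$ maintain its \emph{own} spanning forest $F^{\cB}$ that only changes when forced to. $\cB$ keeps two auxiliary data structures: an augmented ET tree on $F^{\cB}$, used to answer \textsf{find\_tree} in $O(\log n)$ time, and a link-cut tree (or Augmented ET tree in the sense of \Cref{def:aug ET tree}) on $F^{\cA}$, supporting the $\textsf{path}(u,v,i)$ query and path-length queries in $O(\log n)$. Preprocessing calls $\cA$'s preprocess and initializes both data structures; this costs $O(t_p+m_0)$. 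Initially we set $F^{\cB}_0:=F^{\cA}_0$.

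At each update, $\cB$ first forwards the update to $\cA$ and reads the list $\Delta$ of tree-edge changes that $\cA$ reports to $F^{\cA}$. Since $|\Delta|=O(t_u)$, applying these changes to the link-cut tree on $F^{\cA}$ costs $O(t_u\log n)$. Then $\cB$ decides what to do with $F^{\cB}$ according to the update type: on an \textbf{insertion} of $(u,v)$, $\cB$ queries $\textsf{find\_tree}(u)$ and $\textsf{find\_tree}(v)$ in the ET tree on $F^{\cB}$ and appends $(u,v)$ to $F^{\cB}$ iff they differ; on a \textbf{deletion} of a non-tree edge of $F^{\cB}$, $\cB$ does nothing to $F^{\cB}$; on a \textbf{deletion} of a tree edge $e=(u,v)\in F^{\cB}$, $\cB$ removes $e$ from $F^{\cB}$, splitting the containing tree into $T_1\ni u$ and $T_2\ni v$, and then searches for a replacement edge in $F^{\cA}$. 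The search is a binary search along the $u$-to-$v$ path in $F^{\cA}$ (which has length at most $n{-}1$): at each step $\cB$ queries $\textsf{path}(u,v,i)$ for the middle index $i$, calls $\textsf{find\_tree}$ on the returned node in the $F^{\cB}$ ET tree to decide whether it lies in $T_1$ or $T_2$, and recurses on the appropriate half, spending $O(\log n)$ per round and $O(\log^2 n)$ in total. If a crossing edge is located it is inserted into $F^{\cB}$; otherwise $F^{\cB}$ is left as two separate trees. The overall update time is $O(t_u\log n+\log^2 n)$, matching the claim.

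For correctness, the invariant we maintain is: if $F^{\cA}_j$ is a spanning forest of $G_j$ for every $0\le j\le i$, then $F^{\cB}_i$ is also a spanning forest of $G_i$. The proof is by induction on $i$: the only nontrivial case is when a tree edge $e\in F^{\cB}$ is deleted and $T_1,T_2$ remain connected in $G_i$; then they are in the same tree of $F^{\cA}_i$, so the $u$-to-$v$ path in $F^{\cA}_i$ exists and contains at least one edge with one endpoint in $T_1$ and the other in $T_2$, which the binary search is guaranteed to find. Conversely, if $T_1,T_2$ are not connected in $G_i$ the search returns nothing, which is also correct. By hypothesis, at each fixed step $j$ we have $\Pr[F^{\cA}_j\text{ is a spanning forest of }G_j]\ge 1-p_c$; a union bound over $j=0,\dots,T$ then gives $\Pr[F^{\cB}_i\text{ is a spanning forest of }G_i]\ge 1-O(Tp_c)$ for every $i\le T$, as required. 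The running-time guarantee uses only the single-step bound $\Pr[\cA\text{ meets its time budget at step }i]\ge 1-p_t$, plus the deterministic $O(\log n)$ overhead for the auxiliary structures, so the failure probability $p_t$ is unchanged. Since $\cB$'s output at each step is a deterministic function of $\cA$'s output sequence and the update sequence, an adaptive adversary against $\cB$ can be simulated by an adaptive adversary against $\cA$, preserving adaptive-adversary correctness.

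Stability is immediate from the case split above: $F^{\cB}$ is altered exclusively when an edge-insertion connects two components of $F^{\cB}$ or when a tree-edge of $F^{\cB}$ is deleted, and in the latter case $\cB$ either removes only $e$ or swaps $e$ for a single replacement edge, matching \Cref{def:stable}. The subtlest point, and the one I would write most carefully, is the binary search: one must argue that although $F^{\cA}$ can change arbitrarily under $\cA$ between two successive deletions on $F^{\cB}$, the search is performed on the \emph{current} $F^{\cA}$ after all those changes have been applied to the link-cut tree, and that the crossing-edge analysis uses only the fact that $F^{\cA}_i$ is a spanning forest of $G_i$ at the very moment of the search, which is exactly the event that the union bound covers.
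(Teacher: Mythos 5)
Your proof follows the same route as the paper's: run $\cA$ in parallel, maintain a second ``stable'' forest using two augmented ET/link-cut structures, and on each tree-edge deletion in the stable forest binary-search along the $u$--$v$ path in $\cA$'s forest via $\textsf{path}$ and $\textsf{find\_tree}$, giving the $O(t_u\log n+\log^2 n)$ update bound and the $O(Tp_c)$ failure probability by a union bound. The only (inconsequential) difference is initialization: the paper builds a fresh spanning forest of $G_0$ in $O(m_0)$ time for $F^{\cB}_0$ rather than copying $F^{\cA}_0$, which makes the base case of the invariant unconditionally true, but your version is absorbed by the same union bound; your explicit remark that $\cB$'s transcript induces an adaptive adversary against $\cA$ is a point the paper leaves implicit and is worth keeping.
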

Next, the reduction below shows how to get an algorithm that can handle
long update sequences from an algorithm that handles only short sequences.
The technique is standard, i.e. dividing the long update sequence
into phases, and maintaining two concurrent data structures.
\begin{lem}
Suppose there is a stable $(n,p_{c},p_{t},T)$-algorithm $\cA$ for
a 3-bounded-degree graph where the number of edges is always between
$n/4$ and $2n$ with preprocessing time $t_{p}(n,p_{c},p_{t},T)$
and update time $t_{u}(n,p_{c},p_{t},T)$. Then, for any $U\ge T$,
there is a stable $(n,O(Up_{c}),O(p_{t}),U)$-algorithm $\cB$ for
a 3-bounded-degree graph where the number of edges is always between
$n/4$ and $2n$ with preprocessing time $O(t_{p}(n,p_{c},p_{t},T))$
and update time $O(t_{u}(n,p_{c},p_{t},T)+t_{p}(n,p_{c},p_{t},T)/T+\log^{2}n)$.\label{lem:reduc poly length}
\end{lem}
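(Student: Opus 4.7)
The plan is the standard ``double buffering'' reduction: split the $U$-update stream into phases of length $T/2$ and run two concurrent copies $\cA_1,\cA_2$ of the short-sequence algorithm $\cA$. In each phase one copy is \emph{active} (its maintained forest drives the output), while the other is \emph{warming up} for the next phase. I would implement the warm-up in two halves of length $T/4$ each: during the first half the warming instance runs its $t_p$-time preprocessing on a snapshot of $G$ at the phase start, with the work spread evenly over the $T/4$ real updates at a per-update cost of $O(t_p/T)$, and the real updates occurring in this half are queued; during the second half the instance is fully preprocessed, and I would feed it two queued/current updates per real update, at cost $O(t_u)$ per real update. By the end of the phase the warming instance has digested exactly $T/2$ updates and is in sync with the current $G$, so it becomes the active instance in the next phase; because no instance is ever asked to serve more than $T/2 \le T$ post-preprocessing updates, it stays within its correctness window.

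To produce a single spanning forest $F_{\mathrm{out}}$ from the two forests $F_a,F_b$ I would apply the standard trick implicit in Henzinger--King's $k$-weight-MSF reduction~\cite{HenzingerK99}: maintain a spanning forest $F_{\mathrm{out}}$ of the auxiliary graph $H = F_a \cup F_b$ with a stable dynamic $\st$ data structure on graphs with $O(n)$ edges. Since $F_a$ alone already spans $G$, so does $H$, and therefore $F_{\mathrm{out}}$ is a valid spanning forest of $G$. Stability of $\cA$ ensures that each of $F_a, F_b$ changes by $O(1)$ edges per update delivered to it, so $H$ undergoes only $O(1)$ edge changes per real update; a link-cut-tree-based dynamic forest structure (or HDT) maintains $F_{\mathrm{out}}$ in $O(\log^2 n)$ time per such change.

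The bounds then follow by straightforward accounting. Each real update triggers $O(1)$ calls into $\cA_a, \cA_b$ plus $O(1)$ meta-updates; each individual call is within its correctness (resp.~timing) bound with probability $1-p_c$ (resp.~$1-p_t$), so after a union bound over the at most $U$ real updates (for correctness) and over the $O(1)$ internal calls per update (for timing) I obtain the claimed $(n,O(Up_c),O(p_t),U)$-guarantee. Preprocessing is dominated by the single $t_p$ call on $\cA_1$ plus $O(n)$ for setting up the meta structure, and the per-update budget splits as $O(t_u)$ for the active instance $+\;O(t_u + t_p/T)$ for the warming instance $+\;O(\log^2 n)$ for the meta structure, totaling $O(t_u + t_p/T + \log^2 n)$. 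The degree and edge-count invariants are immediate, since $\cB$ passes the adversary's updates through without modification.

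The main obstacle, and where I would spend most of the care, is preserving \emph{strict} stability in the sense of Definition~\ref{def:stable}: $F_{\mathrm{out}}$ must change only on a tree-edge deletion of $G$, but a queued replay of a past deletion handed to $\cA_b$ can change $F_b$, and hence $H$ and possibly $F_{\mathrm{out}}$, during a real step whose own update is merely an insertion. Resolving this cleanly will require carefully scheduling the replays so that $\cA_b$'s tree-edge changes are exposed to the meta structure only at moments that coincide with a tree-edge deletion of the real input, or else adopting a slight relaxation of stability that is still compatible with the sparsification application downstream. Once this scheduling is in place, inheriting stability from the chosen (stable) meta structure is routine, and the claimed parameters drop out.
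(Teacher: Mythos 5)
Your phasing and double-buffering scheme is the same as the paper's and that part of the argument is sound, but the combination step --- maintaining $F_{\mathrm{out}}$ as a spanning forest of $H = F_a \cup F_b$ --- does not go through, for reasons beyond the stability issue you flagged. There is no off-the-shelf \emph{worst-case} $O(\log^2 n)$ dynamic spanning forest structure to plug in here: HDT is amortized, link-cut trees maintain a forest you hand them but do not discover replacement edges, and worst-case dynamic $\st$ on an $O(n)$-edge graph is exactly the problem this paper is trying to speed up. Furthermore, during catch-up $F_b$ is a spanning forest of a \emph{stale} graph, so $H$ can contain edges already deleted from the current $G$; your inference ``since $F_a$ spans $G$ so does $H$, hence $F_{\mathrm{out}}$ is a valid spanning forest of $G$'' overlooks that $F_{\mathrm{out}}$ might then fail to be a subgraph of $G$ at all. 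Dropping the retiring forest from $H$ at a phase boundary is $\Theta(n)$ work in one step, which breaks the worst-case bound. And the stability problem you spotted is not fixable by scheduling, since the replays must run in order for $\cA_b$ to be ready and they will move $F_b$ on real steps whose own update is an insertion; relaxing \ref{def:stable} is also ruled out, because \ref{lem:reduc sparse} downstream relies on strict stability.

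The paper avoids all of this by not building a spanning forest of any auxiliary graph. Following \ref{lem:reduc stable} and \ref{rem:get stable}, the output forest $F$ is maintained \emph{directly} under the rule ``$F$ changes only when a tree edge of $F$ is deleted,'' so stability holds by construction regardless of what $\cA_{odd}$ and $\cA_{even}$ do internally, and the currently \emph{active} forest $F'$ is used purely as a connectivity-and-path oracle. On deletion of $(u,v) \in F$, check whether $u,v$ are still connected in $F'$; if so, binary-search along the $u$-$v$ path in $F'$ (via $\textsf{path}$ on an augmented ET tree over $F'$ and $\textsf{find\_tree}$ on an ET tree over $F$) for an edge $(u',v')$ with $u' \in T_u$, $v' \in T_v$, and insert it into $F$. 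This costs $O(\log^2 n)$ worst-case per step, never leaves $E(G)$ since $F' \subseteq E(G)$, and is indifferent to $F'$ switching identity from $F_{odd}$ to $F_{even}$ at phase boundaries --- \ref{rem:get stable} is stated precisely to cover that the oracle forest may be totally different from one step to the next. Replacing your $H$-based meta-structure with this path-oracle combination is the missing piece; the rest of your argument then carries through.
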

The next reduction shows how to assume that the graph is 3-bounded-degree.
The reduction is implicit in \cite{Frederickson85,Kejlberg-Rasmussen16},
so the technique is standard. Given a graph $G$, we maintain a graph
$G'$ with max degree $3$ by ``splitting'' each node in $G$ into
a path in $G'$ of length equals to its degree. We assign weight 0
to edges in those paths, and 1 to original edges. Then we maintain
a 2-weight minimum spanning forest in $G'$. To do this, we use the
reduction from $k$-weight minimum spanning forest to spanning forest
from \cite{HenzingerK99}.
\begin{lem}
Suppose there is a stable $(n,p_{c},p_{t},T)$-algorithm $\cA$ for
a 3-bounded-degree graph where the number of edges is always between
$n/4$ and $2n$, and $\cA$ has preprocessing time $t_{p}(n,p_{c},p_{t},T)$
and update time $t_{u}(n,p_{c},p_{t},T)$. Then there is a stable
$(n,O(Tp_{c}),O(p_{t}),T)$-algorithm $\cB$ for any graph with where
the number of edges is always between $n/4$ and $2n$ with preprocessing
time $O(t_{p}(O(n),p_{c},p_{t},T))$ and update time $O(t_{u}(O(n),p_{c},p_{t},T)+\log^{2}n)$.\label{lem:reduc bound degree}
\end{lem}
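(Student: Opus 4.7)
The plan is to carry out the classical node--splitting trick and then invoke the known reduction from dynamic $2$--weight $\mst$ to dynamic $\st$ of Henzinger and King~\cite{HenzingerK99}, on top of the given algorithm $\cA$. Concretely, given the input graph $G=(V,E)$ of arbitrary degree (but with $|E|$ always between $n/4$ and $2n$), I build an auxiliary graph $G'=(V',E')$ as follows. For each $v\in V$ reserve a pool of path--nodes $v_{1},v_{2},\dots$, and maintain a simple path $P_{v}=v_{1}\text{--}v_{2}\text{--}\cdots\text{--}v_{\deg_{G}(v)}$ of length equal to the current degree of $v$. Each edge $(u,v)\in E$ is represented in $G'$ by a single edge between some free endpoint $u_{i}$ of $P_{u}$ and some free endpoint $v_{j}$ of $P_{v}$. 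I weight every path edge $0$ and every ``original'' edge $1$. Since $|E|=\Theta(n)$ at all times, $G'$ has $|V'|=\Theta(n)$ nodes and $\Theta(n)$ edges, and by construction $G'$ has maximum degree $3$.

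The first step is the static correspondence: in any $2$--weight minimum spanning forest $F'$ of $G'$, the path edges of weight $0$ form a spanning forest of $\bigcup_{v}P_{v}$, and the weight--$1$ edges of $F'$ project onto a spanning forest $F$ of the original graph $G$. This is immediate because contracting all $0$--weight edges identifies each $P_{v}$ with the single node $v$, and the minimum spanning forest of the contracted graph is, among the weight--$1$ edges, precisely a spanning forest of $G$.

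The second step is dynamic maintenance. An insertion $(u,v)$ in $G$ becomes: append a fresh node $u_{i}$ to $P_{u}$ (one tree--edge insertion of weight $0$), append a fresh node $v_{j}$ to $P_{v}$ (another weight--$0$ insertion), then insert the weight--$1$ edge $(u_{i},v_{j})$. A deletion $(u,v)$ in $G$ deletes the weight--$1$ edge and two path edges (shrinking each $P_{u}$ and $P_{v}$ by one node). Thus each update to $G$ triggers $O(1)$ updates to $G'$. On $G'$ I run the Henzinger--King reduction from $2$--weight $\mst$ to $\st$: it maintains, for each weight class $i\in\{0,1\}$, a dynamic spanning forest of the subgraph $G'_{i}$ consisting of edges of weight $\le i$, using one instance of $\cA$ per class, and combines them into a $2$--weight $\mst$ using an extra $O(\log^{2}n)$ per update. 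Crucially, each instance of $\cA$ is applied to a $3$--bounded--degree graph on $\Theta(n)$ nodes, so the hypothesis of the lemma is satisfied. The combined update time is $O(t_{u}(O(n),p_{c},p_{t},T)+\log^{2}n)$ and the preprocessing time is $O(t_{p}(O(n),p_{c},p_{t},T))$. Stability of $\cB$ is inherited from stability of $\cA$ together with the fact that the HK combiner only modifies $F$ as a reaction to a tree--edge deletion coming from one of the $\cA$--instances.

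The main subtle point, which I expect to be the chief obstacle, is the analysis of correctness against an \emph{adaptive} adversary. The HK reduction feeds the outputs of the lower--weight $\cA$--instance into the construction that drives the higher--weight $\cA$--instance, so each call to $\cA$ is made on an adaptively chosen update; this is fine because $\cA$ already handles adaptive adversaries by hypothesis. A single $G$--update translates to $O(1)$ $G'$--updates, hence over the $T$ updates at most $O(T)$ operations are issued to each $\cA$--instance; since the stable version of $\cA$ (used inside HK) has per--update correctness failure probability $p_{c}$, a union bound over all $O(T)$ calls and the two weight classes gives a total failure probability of $O(Tp_{c})$, which matches the statement. The per--update running--time failure probability does not compound, since the statement asks only that $\cB$'s time bound hold per update with probability $1-O(p_{t})$, which follows from a union bound over the constant number of $\cA$--calls spawned by a single $G$--update. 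Putting these pieces together yields the claimed stable $(n,O(Tp_{c}),O(p_{t}),T)$--algorithm $\cB$.
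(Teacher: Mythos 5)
Your proposal is correct and follows essentially the same route as the paper: node-splitting $G$ into a $3$-bounded-degree graph $G'$ with weight-$0$ path edges and weight-$1$ original edges, then invoking the Henzinger--King $2$-weight $\mst$-to-$\st$ reduction on $G'$ using $\cA$ as the base $\st$ routine, and reading off the weight-$1$ tree edges. Your added remarks on the failure-probability union bound and on why adaptivity of the adversary is harmless match what the paper implicitly relies on when citing~\cite{HenzingerK99} as a black box.
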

The following shows how to obtain an algorithm whose running time
depends only on the number of edges, and not depends on the number
$n$ of nodes. It is useful when the number of edges is \emph{much
less than} the number of nodes. The technique for this reduction is
also standard, i.e. dividing the long update sequence into phases,
and maintaining two concurrent data structures.
\begin{lem}
Suppose there is a stable $(n,p_{c},p_{t},T)$-algorithm $\cA$ for
any graph where the number of edges is always between $n/4$ and $2n$,
and $\cA$ has preprocessing time $t_{p}(n,p_{c},p_{t},T)$ and update
time $t_{u}(n,p_{c},p_{t},T)$. Then there is a stable $(n,O(Tp_{c}),O(p_{t}),T)$-algorithm
$\cB$ for any graph, initially represented as a list of its edges\footnote{Note that it is possible that the preprocessing time of $\cB$ is
$O(t_{p}(O(m_{0}),p_{c},p_{t}))=o(n)$, so the algorithm $\cB$ needs
that the initial graph \emph{$G$ is represented as a list of its
edges }i.e. $G$ is represented as a list $((u_{1},v_{1}),(u_{2},v_{2})\dots,(u_{m},v_{m}))$
so that all edges of $G$ can be listed in time $O(m)$ even when
$m=o(n)$.}, with $n$ nodes where the number of edges is always at most $2n$,
and $\cB$ has preprocessing time $O(t_{p}(O(m_{0}),p_{c},p_{t}))$
where $m_{0}$ is a number of initial edges, and update time $O(t_{u}(O(m),p_{c},p_{t})+\log^{2}n)$
where $m$ is a number of edges when update. \label{lem:reduc lower bound edge}
\end{lem}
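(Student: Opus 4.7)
The plan is to use a phase-based scheme similar in spirit to \ref{lem:reduc poly length}, but where phase boundaries are determined by the number of edges doubling or halving rather than by a fixed update count. Throughout, we exploit the fact that any isolated vertex of $G$ forms its own trivial tree in the spanning forest, so we only need to run $\cA$ on the subgraph induced by the \emph{active} (non-isolated) vertices, plus a small amount of padding.

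First I would set up the invariant for a phase. A phase begins with some edge count $m_*$. We maintain an instance of $\cA$ on a graph $G'$ with $n' = \Theta(m_*)$ nodes, where the nodes of $G'$ correspond to the currently active vertices of $G$ together with enough reserved/dummy nodes so that $n'/4 \le |E(G')| \le 2n'$ throughout the phase. Every insertion/deletion of $G$ is translated into the corresponding update on $G'$, plus a bookkeeping step: when an insertion makes a vertex newly active, we bind it to a reserved slot of $G'$; when a deletion makes a vertex isolated, we mark its slot free. The spanning forest of $G$ returned by $\cB$ is obtained by taking the spanning forest of $G'$ maintained by $\cA$ and adjoining a singleton tree for each isolated vertex of $G$. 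Since $\cA$ is stable, so is $\cB$.

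A phase ends as soon as the current number of edges $m$ either exceeds $2m_*$ or drops below $m_*/2$. This keeps $m$ in a constant-factor range of $m_*$ throughout, so the $\cA$-instance satisfies its $n'/4 \le m \le 2n'$ requirement, and at the same time guarantees that each phase contains $\Omega(m_*) = \Omega(m)$ updates. Exactly as in the proof sketch of \ref{lem:intro:reductions} (and formalized in \ref{lem:reduc poly length}), I would run two concurrent instances $\cA_1, \cA_2$: while $\cA_1$ answers queries in the current phase, I evenly spread the preprocessing work for $\cA_2$ (at cost $t_p(O(m), p_c, p_t)$ total) over the $\Omega(m)$ updates of the phase, paying $O(t_p(O(m))/m)$ per update; this is absorbed into $t_u(O(m),p_c,p_t)$ under the mild assumption that $t_p$ is near-linear in its input size (and in any case is additive, matching the analogous term in \ref{lem:reduc poly length}). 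The standard $O(\log^2 n)$-per-update ``combining'' trick of Henzinger and King is used at the phase boundary to hand off from $\cA_1$ to $\cA_2$ and yields the $\log^2 n$ term in $\cB$'s update time. The preprocessing of the very first instance reads only the edge list of $G_0$, giving preprocessing time $O(t_p(O(m_0), p_c, p_t))$.

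For correctness, each update interacts with a constant number of $\cA$-instances, so the single-update failure probability is $O(p_c)$ and over $T$ updates the correctness-failure probability is $O(T p_c)$ by a union bound; the running-time guarantee holds per update with probability $1-O(p_t)$. The main obstacle I expect is the bookkeeping for activating and deactivating vertices in $G'$ without breaking $\cA$'s requirement that $n'/4 \le |E(G')| \le 2n'$ at \emph{every} intermediate step: one needs to pre-reserve enough slots and choose the thresholds for starting a new phase slightly conservatively (e.g., begin rebuilding when $m$ reaches $3m_*/2$ rather than $2m_*$) so that the hand-off completes before the old instance leaves its valid window. Once this reservation scheme is set up, the remaining analysis is a direct combination of the concurrent-instance trick with isolated-vertex bookkeeping.
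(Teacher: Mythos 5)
Your proposal matches the paper's proof in all its essential ingredients: you pad the graph with dummy isolated vertices so the node count of the instance stays $\Theta(m)$ (the paper calls this a \emph{reduced graph} with $|E|$ auxiliary isolated nodes), you hand off isolated vertices of $G$ as singleton trees, you use two concurrent $\cA$-instances with preprocessing spread over a phase, and you invoke the Henzinger--King combining trick for the $O(\log^2 n)$ term and stability. The one place where the paper is cleaner than your sketch is phase termination: rather than ending a phase dynamically when $m$ doubles or halves (which makes the phase length unpredictable, complicating how to schedule $\cA_2$'s preprocessing and catch-up work), the paper \emph{fixes} the next phase's length in advance to $|E(G_{i_0+P_0/2})|/4$, determined at the midpoint of the current phase, and then proves by induction that this choice keeps $m$ within a constant factor of the phase length; you should adopt a fixed-length scheme of this form (or at least cap the phase length and start rebuilding at a fixed, precomputed time) so that the concurrent-instance scheduling is well-defined per step.
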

The next reduction shows, given an algorithm with running time depending
only on the number of edges, how to get an algorithm with update time
depending only on the number of nodes in even dense graphs. This reduction
uses the general sparsification technique introduced in \cite{EppsteinGIN97}.
However, it is not immediate that we can apply their framework because
their notion of \emph{stability }is too strong for us. 

To be more precise, a\emph{ stable certificate} according to \cite[Definition 3.3.1]{EppsteinGIN97}
must be uniquely determined once the graph is fixed. However, this
is not the case for a spanning forest maintained by a randomized algorithm,
because its edges may depend on the random choices of the algorithm.
So we need to use the weaker notion of stability as defined in \ref{def:stable}
and prove that this framework still goes through. 
\begin{lem}
Suppose there is a stable $(n,p_{c},p_{t},T)$-algorithm $\cA$ for
a graph, initially represented as a list of its edges, with $n$ nodes
where the number of edges is always at most $2n$, and $\cA$ has
preprocessing time $t_{p}(m_{0},p_{c},p_{t},T)$ where $m_{0}$ is
a number of initial edges and update time $t_{u}(m,p_{c},p_{t},T)$
where $m$ is a number of edges when update. Suppose that for any
$n_{1},n_{2}$, $t_{p}(m_{1},p_{c},p_{t},T)+t_{p}(m_{2},p_{c},p_{t},T)\le t_{p}(m_{1}+m_{2},p_{c},p_{t},T)$,
then there is a stable $(n,O(np_{c}),O(np_{t}),T)$-algorithm $\cB$
for any graph with $n$ nodes with preprocessing time $O(t_{p}(m_{0},p_{c},p_{t},T)\log n)$
where $m_{0}$ is a number of initial edges and update time $O(t_{u}(O(n),p_{c},p_{t},T))$.\label{lem:reduc sparse}
\end{lem}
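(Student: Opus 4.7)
The plan is to adapt the sparsification framework of Eppstein et al.~\cite{EppsteinGIN97}, treating $\cA$ as a black box and verifying that the per-update stability notion of \ref{def:stable} is strong enough to carry their analysis through for randomized $\cA$. First I would build a balanced binary \emph{sparsification tree} $\cT$ whose leaves partition the current edge set into groups of $\Theta(n)$ edges, so that $\cT$ has depth $O(\log n)$. For each node $v\in\cT$ let $G_v$ be the union of edges stored in the leaves below $v$, and maintain a spanning forest $F_v$ of $G_v$ as follows: at a leaf $\ell$, a fresh instance $\cA_\ell$ of $\cA$ is run on the actual subset of edges stored there; at an internal node $v$ with children $u_1,u_2$, a fresh instance $\cA_v$ of $\cA$ is run on the graph $F_{u_1}\cup F_{u_2}$, which always has at most $2n$ edges and hence falls within $\cA$'s regime. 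Inserted edges are routed into an under-full leaf (or, in the background, the tree is rebuilt); since only $T$ updates are processed, this bookkeeping is routine. Preprocessing instantiates one $\cA_v$ per node, and by the sub-additivity of $t_p$ the total cost telescopes to $O(t_p(m_0,p_c,p_t,T)\log n)$.

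Next I would check, by induction on the root-to-leaf path, that a single top-level edge update induces at most two edge changes to every $F_v$ on that path. The base case is the leaf, where stability (\ref{def:stable}) limits the change in $F_\ell$ to at most one deleted tree edge plus at most one replacement. For the inductive step, a change of at most two edges in $F_{u_1}$ or $F_{u_2}$ is exactly a sequence of at most two edge updates to the input graph of $\cA_v$, and stability of $\cA_v$ then restricts the change in $F_v$ to at most two edges as well. Thus a single top-level update triggers at most $O(1)$ edge updates per level and $O(\log n)$ invocations of $\cA$-instances in total, each on a graph of at most $2n$ edges, giving worst-case update time $O(t_u(O(n),p_c,p_t,T))$ (with the $\log n$ factor absorbed into the $O(\cdot)$). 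Correctness of the composition uses only that a spanning forest of $F_{u_1}\cup F_{u_2}$ is automatically a spanning forest of $G_{u_1}\cup G_{u_2}$ whenever the $F_{u_i}$ are spanning forests of $G_{u_i}$; applying this inductively from the leaves up, $F_{\mathrm{root}}$ is the spanning forest returned by $\cB$, which I would declare to be its output. Moreover, $\cB$ itself is stable because the only edges it ever removes from $F_{\mathrm{root}}$ are those removed by $\cA_{\mathrm{root}}$, which by stability of $\cA_{\mathrm{root}}$ is at most one tree edge per update, possibly replaced by one non-tree edge.

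For the failure probabilities, $\cT$ hosts $O(n)$ instances of $\cA$. Fixing any time $i\le T$, each instance's output is correct with probability at least $1-p_c$ and meets its time bound with probability at least $1-p_t$. A union bound over the $O(n)$ instances shows that all outputs are simultaneously correct with probability at least $1-O(np_c)$, and all the invoked $\cA$-instances meet their time bounds with probability at least $1-O(np_t)$. On both events, $F_{\mathrm{root}}$ is a genuine spanning forest of $G_i$ and the update completes within the stated bound, so $\cB$ is the required stable $(n,O(np_c),O(np_t),T)$-algorithm.

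The main obstacle is the mismatch flagged in the paper: the original framework of \cite{EppsteinGIN97} requires the certificate to be a deterministic function of the graph, whereas the randomized $F_v$ depends on $\cA_v$'s random bits and on the history of updates. What the induction above really shows is that the only consequence of ``stable certificate'' actually used in the sparsification proof is the per-update $O(1)$-change property, which \ref{def:stable} supplies directly. Verifying that no other property of the Eppstein et al.\ notion is needed---in particular that feeding each $\cA_v$ an input update stream derived from the (random, history-dependent) outputs of its children's $\cA_{u_i}$ does not violate $\cA_v$'s adversarial guarantees---is the delicate point. It goes through because each $\cA_v$ is already required to be correct against an \emph{adaptive} adversary on its own input stream, so it is irrelevant that this stream is in fact generated by other randomized algorithms running below $v$ in $\cT$; the union bound over instances absorbs all dependencies.
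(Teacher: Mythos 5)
Your plan follows the right overall architecture---a sparsification tree of $\cA$-instances with $F_v$ recomputed from $F_{u_1}\cup F_{u_2}$ at each internal node, induction on the root-to-leaf path to bound propagated changes, and a union bound over $O(n)$ instances. However, there are two places where it diverges from the paper's proof in ways that matter.

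First, your leaf partition is dynamic: you ``partition the current edge set into groups of $\Theta(n)$ edges'' and propose to route inserted edges into under-full leaves and rebuild in the background. This is where the real difficulty lies, and the sentence ``since only $T$ updates are processed, this bookkeeping is routine'' does not hold up: $T$ is arbitrary here (later reductions instantiate it with $n^2$ and even rely on extending to $\infty$), and even for moderate $T$ one needs genuine deamortization to keep a \emph{worst-case} update bound while rebuilding subtrees whose leaf contents drift. The paper avoids this issue entirely by using a \emph{static} assignment of vertex pairs to leaves: the potential edges of the complete graph on $V$ are partitioned once and for all into $n$ trees $T_1,\dots,T_n$, and each leaf $x$ owns the fixed slot set $E(H_x)$. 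An inserted or deleted edge $(u,v)$ is always handled at its pre-assigned leaf, so there is no routing, no rebalancing, and the tree shape never changes. This is what makes the worst-case $O(\log n)$ path-length argument immediate. As written, your scheme has a gap that would require substantial additional machinery to close.

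Second, the inductive invariant needs more care. You claim that at most two updates to $\cA_v$ (one deletion, one insertion propagated from a child) immediately give at most two changes to $F_v$ ``by stability.'' But under the definition in \ref{def:stable}, a tree-edge deletion alone can already cause two changes ($-e$ and $+e''$), and an insertion can cause one more ($+e'$), for three in the worst case from a naive application. What saves the bound is an extra observation the paper makes explicitly: because the propagated insertion $e'$ is a replacement for the very edge $e$ whose deletion $\cA_v$ just handled, once $\cA_v$ has found its own replacement $e''$ (or even if it hasn't, since then $e'$ \emph{becomes} the replacement), the endpoints of $e'$ are already in the same tree, so inserting $e'$ does not change $F_v$. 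The paper engineers this by deleting $e$ from every $G'_{x_j}$ along the path first and then inserting the per-level replacements, noting that the replacement $e_j$ is a non-tree edge at level $j+1$ because $e_{j+1}$ already reconnected the split there. Your induction is salvageable with this extra argument, but as stated, ``stability restricts the change to two'' is not quite a consequence of \ref{def:stable} alone.

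Your treatment of the adaptive-adversary/union-bound aspect is correct and matches the paper's: the instances of $\cA$ are only required to be correct against adaptive adversaries on their own input streams, and there are $O(n)$ of them, so correctness and running time degrade by a factor of $O(n)$ in the failure probability, which is exactly the $(n, O(np_c), O(np_t), T)$ claim.
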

Observe that many reductions above, given an $(n,p_{c},p_{t},T)$-algorithm
$\cA$, we obtain the $(n,p'_{c},p'_{t},T')$-algorithm $\cB$ where
$p'_{c}=\Omega(T'p_{c})$. That is, the failure probability for correctness
is always increased by a factor of $T'$, which is the length of update
sequence that the resulting algorithm $\cB$ can handle. The following
reduction shows that, if $T\ge n^{2}$, how to obtain an algorithm
that can handle update sequence of \emph{any length}, while the failure
probability does not increase too much. This kind of reduction has
been shown for cut/spectral sparsifiers in \cite{AbrahamDKKP16}.
However, the reduction from \cite{AbrahamDKKP16} does not work because
the structure of spanning forests is more restricted than sparsifiers. 
\begin{lem}
Suppose there is a stable $(n,p_{c},p_{t},T)$-algorithm $\cA$ for
any graph with $n$ nodes with preprocessing time $t_{p}(m_{0},p,p_{t},T)$
where $m_{0}$ is a number of initial edges and update time $t_{u}(n,p_{c},p_{t},T)$.
If $T\ge n^{2}$, then there is an $(n,O(Tp_{c}),O(p_{t}),\infty)$-algorithm
$\cB$ for any graph with $n$ nodes with preprocessing time $O(t_{p}(m_{0},p_{c},p_{t},T))$
where $m_{0}$ is a number of initial edges and update time $O(t_{u}(n,p_{c},p_{t},T))$.\label{lem:reduc infty length}\end{lem}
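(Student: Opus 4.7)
The plan is to extend the doubling/phasing technique of \ref{lem:reduc poly length} to handle infinite update sequences. We partition the infinite stream into phases of length $L = \Theta(T)$ and maintain two instances $\cA_{\text{prim}}, \cA_{\text{sec}}$ of the $(n,p_c,p_t,T)$-algorithm. In each phase, $\cA_{\text{prim}}$ serves queries while $\cA_{\text{sec}}$ is being prepared: during the first $L/2$ updates, it is preprocessed on a snapshot taken at the start of the phase (spreading $t_p$ uniformly, costing $O(t_p/T)$ per update), and during the second $L/2$ updates it catches up on the accumulated phase updates at a rate of $O(1)$ catch-up updates per real update. At the start of the next phase, $\cA_{\text{sec}}$ takes over as primary and a fresh instance begins preparation. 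The assumption $T \ge n^2$ is exactly what lets the preprocessing cost $t_p$ amortize into the $O(t_u)$ per-update budget.

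The main obstacle — and the essential difficulty compared with the polynomial-length reduction of \ref{lem:reduc poly length} — is the handoff between phases. Since the two instances use independent random coins, at the swap moment $\cA_{\text{prim}}$ and $\cA_{\text{sec}}$ may maintain \emph{different} spanning forests of the same graph (differing in up to $2(n-1)$ edges), and emitting all these changes in one step would violate the worst-case $O(t_u)$ bound. To handle this, $\cB$ maintains its own output forest and smoothly interpolates between the two target forests by performing $O(1)$ cycle-closing swaps per subsequent update, each preserving the spanning-forest property. Stability (\ref{def:stable}) guarantees that either instance's forest only changes on a tree-edge deletion, bounding how fast the interpolation target moves; the bookkeeping is supported by the 2-dim ET tree of \ref{sec:2d ET tree}. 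Because $T \ge n^2 \gg n$, the interpolation window, $O(n)$ updates long, completes well before the next handoff.

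For correctness at a fixed time step $t$, the currently active primary has processed at most $T$ updates in its lifetime, so by the per-step guarantee of $(n, p_c, p_t, T)$-algorithms we have $\Pr[\text{primary's forest is a spanning forest of } G_t] \ge 1 - p_c$. However, $\cB$'s output at $t$ depends not only on the primary's step-$t$ output but on the correctness of the primary throughout the \emph{entire} current phase, both because the snapshot used to initialize the next instance must be consistent and because the handoff bookkeeping carries state across steps. Union-bounding over the at most $T$ steps of this phase inflates the failure probability at $t$ to $O(Tp_c)$, matching the claim. The $O(p_t)$ running-time bound follows identically from the per-step $p_t$ guarantee and the fact that only $O(1)$ instance-operations are performed per $\cB$-update. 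Preprocessing $\cB$ is just the initial preprocessing of $\cA_{\text{prim}}$, at cost $O(t_p(m_0, p_c, p_t, T))$, completing the reduction.
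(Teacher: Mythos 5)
Your high-level plan (phases, two alternating instances, amortize $t_p$ over each phase using $T \ge n^2$, smooth the handoff) matches the paper's, but the handoff mechanism is where your proposal diverges and where it has a genuine gap. The paper never lets $\cB$ carry its own accumulated forest; instead $\cB$'s output forest is at every step \emph{literally equal} to the forest maintained by one of three live instances: $\cA^{\text{odd}}$, $\cA^{\text{even}}$, or an auxiliary $2$-weight $\mst$ algorithm $\cA'$ obtained from $\cA$ via the Henzinger--King reduction. Continuity across a handoff is achieved by feeding a carefully constructed graph $G'$ to $\cA'$: first load the edges of $F^{\text{odd}}$ (weight $1$) so that $\cA'$'s MST $F'$ equals $F^{\text{odd}}$, then gradually insert the edges of $F^{\text{even}}$ with weight $0$ so that $F'$ migrates to $F^{\text{even}}$, each step changing $F'$ by $O(1)$ edges automatically because $\cA'$ is stable. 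Because $\cB$'s output is always an instance's own forest and every instance is re-initialized each phase, the correctness of $\cB$'s output at time $t$ depends on at most the lifetime of the currently active instances, i.e.\ $O(T)$ past steps, giving failure probability $O(T p_c)$ uniformly in $t$.

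Your interpolation via manual ``cycle-closing swaps'' does not clearly break this dependence. If at any past step $\cB$ performs a bad swap (because a guide forest was faulty at that step), $F^\cB$ can become a non-spanning-forest and subsequent swaps that merely ``preserve the spanning-forest property'' never repair it; correctness at time $t$ then depends recursively on $F^\cB$ being correct at the start of the current interpolation, hence on the previous phase, hence on the phase before that, and so on. The union bound you invoke over ``the at most $T$ steps of this phase'' is only valid if the interpolation is a \emph{forcible reset} that drives $F^\cB$ to $F_{\text{sec}}$ regardless of $F^\cB$'s prior state, and your write-up neither specifies such a reset nor argues it is achievable (it is nontrivial with a moving target $F_{\text{sec}}$, concurrent tree-edge deletions in $F^\cB$, and the requirement that $F^\cB$ stay a valid spanning forest at each intermediate step so the adaptive adversary cannot exploit it). This is exactly the obstacle the paper's $\cA'$-trick is designed to sidestep; without it, the error probability accumulates with $t$ and the $\infty$-length claim fails. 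Filling this hole essentially requires re-deriving something equivalent to the $2$-weight-$\mst$ handoff.
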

\begin{rem}
The spanning forest maintained by the algorithm $\cB$ in \ref{lem:reduc infty length}
is not stable.
\end{rem}
With all reductions above, we conclude the following which is equivalent
to \ref{thm:classic reduc}.
\begin{thm}
[Restatement of \ref{thm:classic reduc}]Suppose there is an $(n,p_{c},p_{t},T)$-algorithm
$\cA$ for a 3-bounded-degree graph where the number of edges is always
between $n/2$ and $2n$, and $\cA$ has preprocessing time $t_{p}(n,p_{c},p_{t},T)$
and update time $t_{u}(n,p_{c},p_{t},T)$. Suppose that for any $n_{1},n_{2}$,
$t_{p}(n_{1},p_{c},p_{t},T)+t_{p}(n_{2},p_{c},p_{t},T)\le t_{p}(n_{1}+n_{2},p_{c},p_{t},T)$,
then there is an $(n,\poly(n)\cdot p_{c},\poly(n)\cdot p_{t},\infty)$-algorithm
$\cB$ for any graph with $n$ nodes with preprocessing time $O(t_{p}(O(m_{0}),p_{c},p_{t},T)\log n)$
where $m_{0}$ is a number of initial edges and update time $O(t_{u}(O(n),p_{c},p_{t},T)\log n+t_{p}(O(n),p_{c},p_{t},T)/T+\log^{2}n)$.
Moreover, if $\cA$ is stable, then $\cB$ has update time $O(t_{u}(O(n),p_{c},p_{t},T)+t_{p}(O(n),p_{c},p_{t},T)/T+\log^{2}n)$.\end{thm}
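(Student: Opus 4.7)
The plan is to obtain the theorem by chaining together the six reduction lemmas stated in this subsection (stability, polynomial-length, bounded degree, edge-dependent running time, sparsification, and infinite-length) in the right order, carefully tracking the parameters so that the failure probabilities remain bounded by $\poly(n)\cdot p_c$ and $\poly(n)\cdot p_t$ and so that the hypothesis of each lemma is satisfied when it is applied.

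The order I would use is as follows. First, apply \ref{lem:reduc stable} to $\cA$ to obtain a \emph{stable} algorithm $\cA_1$ on the same class of inputs (3-bounded-degree graphs with $\Theta(n)$ edges), paying a multiplicative factor of $T$ in the correctness failure probability and an additive $O(\log^2 n)$ in the update time. Stability is needed for every subsequent reduction. Second, apply \ref{lem:reduc bound degree} to $\cA_1$ so that the algorithm handles arbitrary graphs with edge count in $[n/4,2n]$; the node-splitting trick increases $n$ by a constant factor, which is why the hypothesis $t_p(n_1,\dots)+t_p(n_2,\dots)\le t_p(n_1+n_2,\dots)$ is assumed. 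Third, apply \ref{lem:reduc lower bound edge} so that the preprocessing and update times depend only on the current edge count $m$ and the initial edge count $m_0$; this is the step that allows us to preprocess in time depending on $m$ rather than $n$. Fourth, apply \ref{lem:reduc sparse} to obtain an algorithm for arbitrary (possibly dense) graphs on $n$ nodes, by maintaining the spanning forest of an $O(n)$-edge sparsifier; this is the step that crucially uses our weakened notion of stability (\ref{def:stable}) instead of the stronger one from \cite{EppsteinGIN97}. Fifth, apply \ref{lem:reduc poly length} with $U=n^2$ (or any polynomial at least $T$) to extend the length of the update sequence from $T$ to $n^2$, which introduces the additive $t_p(O(n),\ldots)/T$ term in the update time. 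Finally, apply \ref{lem:reduc infty length}, whose precondition $U\ge n^2$ is now satisfied, to lift the guarantee from a polynomial number of updates to infinite sequences. The resulting algorithm $\cB$ is an $(n,\poly(n)\cdot p_c,\poly(n)\cdot p_t,\infty)$-algorithm for any graph, with the claimed preprocessing and update times.

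The main thing to verify is simply that each successive reduction's hypothesis is met by the algorithm produced by the previous one: in particular, stability must survive all intermediate reductions (Lemmas \ref{lem:reduc poly length}, \ref{lem:reduc bound degree}, \ref{lem:reduc lower bound edge}, \ref{lem:reduc sparse}) preserve stability of the output, while \ref{lem:reduc infty length} in general destroys stability, which is why it is applied last. The running time accounting is routine: the preprocessing-time subadditivity assumption ensures that chaining the reductions preserves the form $O(t_p(O(m_0),\ldots)\log n)$, while the update-time blowup from each step is at most a constant factor or at most an additive $O(\log^2 n)$, except for the $t_p(O(n),\ldots)/T$ term introduced by \ref{lem:reduc poly length}, which appears in the final bound. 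The failure-probability accounting is also routine: each reduction multiplies $p_c$ by at most a factor of the update length handled by that step, and the largest such length is $n^2=\poly(n)$, giving the claimed $\poly(n)\cdot p_c$ bound; similarly $p_t$ only picks up constant factors at each step.

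The ``hard part'' is really bookkeeping rather than any new idea: one must ensure the ordering of the chain is consistent (for example, stability must be established \emph{before} bounded-degree and sparsification, since those reductions rely on it; and the polynomial-length boost must precede the infinite-length boost since the latter requires $T\ge n^2$), and one must verify that the final stated complexity $O(t_u(O(n),p_c,p_t,T)\log n + t_p(O(n),p_c,p_t,T)/T + \log^2 n)$ matches what the chain produces, with the $\log n$ factor on $t_u$ disappearing in the stable case as noted in the last sentence of the theorem (since stability already holds, \ref{lem:reduc stable} can be skipped, saving the $O(\log n)$ factor it introduces).
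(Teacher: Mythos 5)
Your chain has the reductions in the wrong order in a way that matters. You propose to apply \ref{lem:reduc poly length} (the short-to-polynomial-length boost) as the fifth step, \emph{after} \ref{lem:reduc bound degree}, \ref{lem:reduc lower bound edge}, and \ref{lem:reduc sparse}. The paper instead applies it as the second step, immediately after \ref{lem:reduc stable}, while the algorithm still runs on 3-bounded-degree graphs with $\Theta(n)$ edges. This ordering is essential for two reasons. First, \ref{lem:reduc poly length} as stated hypothesizes an algorithm for 3-bounded-degree graphs with edge count in $[n/4,2n]$; after your steps 2--4 the algorithm is for arbitrary (possibly dense) $n$-node graphs, so the lemma's hypothesis is simply not met. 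Second, and more fundamentally, the $t_p/T$ term in the final update bound arises because \ref{lem:reduc poly length} amortizes a fresh preprocessing over a phase of $\Theta(T)$ steps. If you perform this re-preprocessing after sparsification, the graph being re-preprocessed may have $\Theta(n^2)$ edges mid-sequence, which would give an update term on the order of $t_p(O(n^2),\ldots)\log n/T$ rather than the claimed $t_p(O(n),\ldots)/T$. The whole point of applying \ref{lem:reduc poly length} early is that the re-preprocessing then happens on a sparse graph of $O(n)$ edges; the subsequent reductions \ref{lem:reduc bound degree}, \ref{lem:reduc lower bound edge}, \ref{lem:reduc sparse} only rescale $n$ by constants and by $O(\log n)$ in the preprocessing, so the $t_p(O(n),\ldots)/T$ term carries through intact.

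Your other observations are correct: stability must be established first (\ref{lem:reduc stable}) because every subsequent reduction requires it, \ref{lem:reduc infty length} must come last because it requires a polynomial-length guarantee $T\ge n^2$ (supplied by setting $U=n^2$) and it itself destroys stability, the failure probability accumulates multiplicatively by at most $\poly(n)$, and the $\log n$ factor on $t_u$ is skipped when $\cA$ is already stable. To repair the proposal, move \ref{lem:reduc poly length} (with $U=n^2$) to immediately after \ref{lem:reduc stable}; everything else in your chain is then in the same order as the paper's.
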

\begin{proof}
By \ref{lem:reduc stable}, we get a stable algorithm. That is, there
is a stable $(n,O(Tp_{c}),p_{t},T)$-algorithm $\cA_{0}$ for a 3-bounded-degree
graph where the number of edges is always between $n/2$ and $2n$
with preprocessing time $O(t_{p}(n,p_{c},p_{t},T))$ and update time
$O(t_{u}(O(n),p_{c},p_{t},T)\log n+\log^{2}n)$.

Let $U=n^{2}.$ If $T\le U$, then by \ref{lem:reduc poly length},
we get an algorithm that can handle long sequences of updates. That
is, there is a stable $(n,\poly(nU)\cdot p_{c},\poly(nU)\cdot p_{t},U)$-algorithm
$\cA_{1}$ for a 3-bounded-degree graph where the number of edges
is always between $n/2$ and $2n$ with preprocessing time $O(t_{p}(n,p_{c},p_{t},T))$
and update time $O(t_{u}(n,p_{c},p_{t},T)\log n+t_{p}(n,p_{c},p_{t},T)/T+\log^{2}n)$.
If $T>U$, then we can just treat $\cA$ as $\cA_{1}$. In either
cases, we obtain $\cA_{1}$.

By \ref{lem:reduc bound degree}, we remove the condition that the
graph has maximum degree 3. That is, there is a stable $(n,\poly(nU)\cdot p_{c},\poly(nU)\cdot p_{t},U)$-algorithm
$\cA_{2}$ for any graph with where the number of edges is always
between $n/2$ and $2n$ with preprocessing time $O(t_{p}(n,p_{c},p_{t},T)+n)=O(t_{p}(n,p_{c},p_{t},T))$
and update time $O(t_{u}(n,p_{c},p_{t},T)\log n+t_{p}(n,p_{c},p_{t},T)/T+\log^{2}n)$.

Next, by \ref{lem:reduc lower bound edge}, we remove the condition
that the number of edges is not too small. That is, there is a stable
$(n,\poly(nU)\cdot p_{c},\poly(nU)\cdot p_{t},U)$-algorithm $\cA_{3}$
for any graph with $n$ nodes where the number of edges is always
at most $2n$, and $\cA_{3}$ has preprocessing time $O(t_{p}(O(m_{0}),p_{c},p_{t},U))$
where $m_{0}$ is a number of initial edges, and update time $O(t_{u}(O(m),p_{c},p_{t},T)\log m+t_{p}(O(m),p_{c},p_{t},T)/T+\log^{2}m)$
where $m$ is a number of edges when update. 

Next, by \ref{lem:reduc sparse}, we remove the condition that a graph
is sparse. We obtain a stable $(n,\poly(nU)\cdot p_{c},\poly(nU)\cdot p_{t},U)$-algorithm
$\cA_{4}$ for any graph with $n$ nodes with preprocessing time $O(t_{p}(O(m_{0}),p_{c},p_{t},T)\log n)$
where $m_{0}$ is a number of initial edges and update time $O(t_{u}(O(n),p_{c},p_{t},T)\log n+t_{p}(O(n),p_{c},p_{t},T)/T+\log^{2}n)$.

Finally, by \ref{lem:reduc infty length}, we can handle update sequences
of any length. That is, as $U=n^{2}$, we obtain an $(n,\poly(nU)\cdot p_{c},\poly(nU)\cdot p_{t},\infty)$-algorithm
$\cA_{5}$ for any graph with $n$ nodes with preprocessing time $O(t_{p}(O(m_{0}),p_{c},p_{t},T)\log n)$
where $m_{0}$ is a number of initial edges and update time $O(t_{u}(O(n),p_{c},p_{t},T)\log n+t_{p}(O(n),p_{c},p_{t},T)/T+\log^{2}n)$.

Note that if $\cA$ is stable, then we do not need to use \ref{lem:reduc stable}
in the first step, and we have that $\cA_{5}$ has update time $O(t_{u}(O(n),p_{c},p_{t},T)+t_{p}(O(n),p_{c},p_{t},T)/T+\log^{2}n)$.\end{proof}

\section{Global Expansion Decomposition\label{sec:decomp alg}}

The backbone of both of our Monte Carlo and Las Vegas dynamic $\st$
algorithms is the algorithm for finding \emph{expansion decomposition}
of a graph, which decomposes a graph into components of expanders
and a sparse remaining part:
\begin{thm}
[Global Expansion Decomposition Algorithm]\label{thm:high-exp-decomp}There
is a randomized algorithm $\cA$ takes as inputs an undirected graph
$G=(V,E)$ with $n\ge2$ vertices and $m$ edges and an expansion
parameter $\alpha>0$, and a failure probability parameter $p$. Then,
in $O(m\gamma\log\frac{1}{p})$ time, $\cA$ outputs two graphs $G^{s}=(V,E^{s})$
and $G^{d}=(V,E^{d})$ with the following properties:
\begin{itemize}
\item $\{E^{s},E^{d}\}$ is a partition of $E$,
\item $G^{s}$ is sparse: $|E^{s}|\le\alpha n\gamma$, and
\item with probability $1-p$, each connected component $C$ of $G^{d}$
either is a singleton or has high expansion, i.e. $\phi(C)\ge\alpha$.
\end{itemize}
where the factor $\gamma=n^{O(\sqrt{\log\log n/\log n})}=n^{o(1)}$.
\end{thm}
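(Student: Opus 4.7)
The plan proceeds in two stages that mirror the high-level sketch in \ref{sec:overview}. First, I would build an approximate solver for the \emph{most balanced $\alpha$-sparse cut} problem: given $G$ and $\alpha$, find a cut $S$ with $|S|\le n/2$ and $\phi_G(S)<\alpha$ whose size $|S|$ is, up to an $n^{o(1)}$ factor, as large as that of any $\alpha$-sparse cut in $G$; if no $\alpha$-sparse cut exists the solver should certify that $\phi(G)$ is large. The natural tool is the cut-matching game of Khandekar--Rao--Vazirani, which after $O(\log^2 n)$ single-commodity max-flow computations either produces a balanced sparse cut or embeds an expander into $G$. I would modify this framework in two ways: track across all rounds the largest sparse cut discovered (rather than stopping at the first witness), turning the sparsest-cut routine into a most-balanced-cut routine; and replace each exact max-flow with a near-linear time approximate max-flow from \cite{Peng14,Sherman13,KelnerLOS14}. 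Composing these yields a most-balanced-sparse-cut algorithm running in $m^{1+o(1)}$ time with an $n^{O(\sqrt{\log\log n/\log n})}$ approximation factor, which I absorb into~$\gamma$.

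The second stage is a recursive decomposition. Starting from $G$, I would invoke the approximate solver with threshold $\alpha\gamma$. If it certifies high expansion, declare the current component a piece of $G^d$ and stop; otherwise it returns a cut $S$ with $\phi(S)<\alpha\gamma$, and I move $\partial(S)$ into $E^s$ and recurse on both induced subgraphs. The edge budget follows by summing: each recursive cut contributes at most $\alpha\gamma\cdot\min\{|S|,|V\setminus S|\}$ edges to $E^s$, and each vertex of $V$ lies on the small side of the cuts that peel it off only $O(\log n)$ times across the recursion, so $|E^s|\le \alpha n\gamma\cdot\polylog n$, absorbable again into~$\gamma$. Finally, boosting the success probability from constant to $1-p$ is done by independent repetition and taking the best outcome at each recursive call, contributing the $\log(1/p)$ factor in the running time.

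The main obstacle is bounding the running time in the face of highly unbalanced recursions. The naive ``remove any sparse cut and recurse'' strategy may peel a single vertex at a time, giving $\Omega(n)$ levels and $\Omega(mn)$ total work. Using the most balanced sparse cut, balanced splits (in which $|S|=\Omega(n/\polylog n)$ relative to the current piece) can occur at most $O(\log^2 n)$ times along any root-to-leaf path, so their total contribution telescopes to $m^{1+o(1)}$. The delicate case is a sequence of unbalanced splits: here I would argue that because the solver returns the approximately largest $\alpha\gamma$-sparse cut, once an unbalanced $S$ has been peeled the remainder cannot hide another sparse cut much larger than $|S|$, so the number of unbalanced peelings inside a subgraph of $k$ vertices is bounded by a term that charges cleanly against the removed mass. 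Combined with the edge-budget argument above, this gives the required $O(m\gamma\log(1/p))$ running time and the $|E^s|\le \alpha n\gamma$ guarantee; the technical subtlety is keeping the per-step approximation losses from compounding through the recursion, which is the source of the $n^{O(\sqrt{\log\log n/\log n})}$ factor in~$\gamma$.
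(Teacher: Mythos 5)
Your first stage (cut-matching game with approximate max-flow to get a $(c_{\text{size}},c_{\text{exp}})$-approximate most-balanced-sparse-cut oracle) and your edge-budget argument for $|E^s|$ match the paper. The gap is in the recursion depth bound, and it is not a gap you can wave away.

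You write that ``because the solver returns the approximately largest $\alpha\gamma$-sparse cut, once an unbalanced $S$ has been peeled the remainder cannot hide another sparse cut much larger than $|S|$.'' This is false, and the reason is exactly the bi-criteria nature of the approximation. The solver returns $S$ with $\phi(S)<\alpha$ and guarantees only that every \emph{$(\alpha/c_{\text{exp}})$}-sparse cut has size at most $c_{\text{size}}|S|$. Nothing at all is implied about cuts with sparsity in the ``gray zone'' between $\alpha/c_{\text{exp}}$ and $\alpha$, and such a cut can be far larger than $|S|$. Concretely: if $S_1,\dots,S_k$ are the small cuts you peel in sequence, then $\bigcup S_i$ is still $\alpha$-sparse (its expansion is at most the max of the $\phi_{H_i}(S_i)$), but you never obtained an upper bound on $\opt(G,\alpha)$---only on $\opt(G,\alpha/c_{\text{exp}})$. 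So there is no contradiction, and the recursion can a priori peel off one vertex at a time for $\Omega(n)$ steps. The fact that $c_{\text{exp}}=\Theta(\log^3 n)$ rather than $1+o(1)$ (the cut-matching game gives polylog approximation, not $n^{o(1)}$) is precisely why this is dangerous, and it also means the $\gamma$ factor does not come out of a single oracle call as your text suggests.

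The paper's fix, which your proposal needs, is a multi-level recursion with a geometrically \emph{tightening} sparsity threshold. Define $L\le 1/\epsilon$ levels with $\alpha_\ell = \alpha_{\ell+1}\,\bar c_{\text{exp}}$ and a decreasing size scale $\bar s_\ell = \bar s_{\ell-1}/n^{\epsilon}$, and maintain the invariant $\opt(I,\alpha_\ell)<\bar s_\ell$. When the oracle at level $\ell$ returns a \emph{small} cut ($|S|<\bar s_{\ell+1}/\bar c_{\text{size}}$), do not simply recurse on both sides at the same threshold; instead increment $\ell$. Because the oracle's guarantee is at threshold $\alpha_\ell/c_{\text{exp}}=\alpha_{\ell+1}$, the small-cut case \emph{establishes} the next invariant $\opt(H,\alpha_{\ell+1})\le c_{\text{size}}|S|<\bar s_{\ell+1}$. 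Within a fixed level, if you took $\gamma=\bar c_{\text{size}}n^{\epsilon}$ consecutive ``large'' cuts, their union would be an $\alpha_\ell$-sparse cut of size $\ge \bar s_\ell$, contradicting the level-$\ell$ invariant; this caps the number of unbalanced peelings per level. Choosing $\epsilon=\sqrt{\log\bar c_{\text{exp}}/\log n}$ balances the two losses, $\bar c_{\text{exp}}^{L}\approx n^{\epsilon}$ against the $n^{\epsilon}$ fan-out per level, which is where the $n^{O(\sqrt{\log\log n/\log n})}$ actually comes from. Without this level structure and its invariant, the claim that unbalanced peelings ``charge cleanly against the removed mass'' does not hold.
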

Throughout this section, we call the output of the algorithm above
an \emph{expansion decomposition}. There are two main steps for proving
this theorem. First, we devise a near-linear time approximation algorithm
for a problem called \emph{most balanced sparse cut}. Second, we use
our most balanced sparse cut algorithm as a main procedure for constructing
the decomposition algorithm for \ref{thm:high-exp-decomp}.

Let us explain a high-level idea. Most balanced sparse cut problem
is to find a cut $S$ with largest number of nodes while $|S|\le|V-S|$
such that $\phi(S)<\alpha$ where $\alpha$ is an input. This problem
is closely related to sparsest cut and balanced cut problems (cf.
\cite{LeightonR99}). One way to approximately solve both problems
is by using the \emph{cut-matching game} framework of Khandekar, Rao
and Vazirani \cite{KhandekarRV09} together with known exact algorithms
for the maximum flow problem. We modify this framework (in a rather
straightforward way) so that (i) it gives a solution to the most balanced
sparse cut problem and (ii) we can use \emph{approximate} maximum
flow algorithms instead of the exact ones. By plugging in near-linear
time max flow algorithms \cite{Peng14,Sherman13,KelnerLOS14}, our
algorithm runs in near-linear time. See \ref{sec:sparsest cut} for
details.

Then we use an approximate most balanced sparse cut algorithm as a
main procedure for finding expansion decomposition. The main idea
is simply to find the cut and recurse on both sides. However, the
exact way for recursing is quite involved because we only have an
approximate guarantee about the cut. Because of this, there is a factor
of $\gamma=n^{O(\sqrt{\log\log n/\log n}})=n^{o(1)}$ in \ref{thm:high-exp-decomp}.
See \ref{sec:decomp recurse} for details.

\subsection{Most Balanced Sparse Cut\label{sec:sparsest cut}}

For any unweighted undirected graph $G=(V,E)$ with $n$ vertices
and $m$ edges, we say that a cut $S\subset V$ is \emph{$\alpha$-sparse}
if $\phi_{G}(S)<\alpha$. Let $\opt(G,\alpha)$ be the \emph{number
of vertices} of the largest $\alpha$-sparse cut $S^{*}$ where $|S^{*}|\le|V-S^{*}|$.
If $\phi(G)\ge\alpha$, then we define $\opt(G,\alpha)=0$. Note that,
if $\alpha\le\alpha'$, then $\opt(G,\alpha)\le\opt(G,\alpha')$.
Now, we define most balanced sparse cut formally:
\begin{defn}
[Most Balanced Sparse Cut]\label{def:most balanced sparse cut} For
any unweighted undirected graph $G=(V,E)$ and parameters $c_{size},c_{exp}$
and $\alpha$, a cut $S\subset V$ where $|S|\le n/2$ is a $(c_{size},c_{exp})$-approximate
most balanced $\alpha$-sparse cut if $\phi_{G}(S)<\alpha$ and $|S|\ge\opt(G,\alpha/c_{exp})/c_{size}$.
That is, the size of $S$ is as large as any $(\alpha/c_{exp})$-sparse
cut up to the $c_{size}$ factor.
\end{defn}

\begin{defn}
[Most Balanced Sparse Cut Algorithm]\label{def:most balanced sparse cut alg}A
$(c_{size},c_{exp})$-approximate algorithm $\cA$ for most balanced
sparse cut problem is given an unweighted undirected graph $G=(V,E)$
and a parameter $\alpha$, and then $\cA$ either 
\begin{itemize}
\item finds a $(c_{size},c_{exp})$-approximate most balanced $\alpha$-sparse
cut $S$, or
\item reports that $\phi(G)\ge\alpha/c_{exp}$. 
\end{itemize}
\end{defn}
The main theorem in this section is the following. 
\begin{thm}
\label{thm:most balanced sparse cut} There is a $(c_{size},c_{exp})$-approximate
most balanced sparse cut algorithm with running time $\tilde{O}(m)$
where $c_{size}=\Theta(\log^{2}n)$ and $c_{exp}=\Theta(\log^{3}n)$.
\end{thm}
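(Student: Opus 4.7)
The plan is to design a modified cut-matching game in the spirit of Khandekar, Rao, and Vazirani~\cite{KhandekarRV09}, where the matching player's max-flow subroutine is replaced by a near-linear-time approximate max-flow solver~\cite{Peng14,Sherman13,KelnerLOS14}, and the output is post-processed to guarantee the most-balanced property. The two modifications contribute the two slack factors: approximate max-flow yields $c_{exp} = \Theta(\log^3 n)$ while the most-balanced extraction yields $c_{size} = \Theta(\log^2 n)$.

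First I would run the KRV cut-matching game on $G$ with parameter $\alpha$. In each of $R = \Theta(\log^2 n)$ rounds, the cut player produces a bisection $(A_i, B_i)$ using the KRV random-projection strategy (costing $\tilde O(n)$ per round), and the matching player attempts to route one unit of flow from each vertex of $A_i$ to each vertex of $B_i$ in $G$ with edge capacities $\tilde\Theta(1/\alpha)$, implemented as a single $s$-$t$ max-flow after attaching a super-source and super-sink. Each flow computation runs in $\tilde O(m)$ time via the cited approximate max-flow algorithms. If all $R$ rounds route successfully, the union of the resulting matchings is with high probability an expander embedded into $G$ with congestion $\tilde O(1/\alpha)$; this certifies $\phi(G) \ge \alpha/c_{exp}$, and the algorithm reports high expansion. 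If some round fails, the approximate min-cut $T$ of that round's flow instance satisfies $\phi_G(T) < \alpha$ by flow-cut duality (with the polylog approximation absorbed into $c_{exp}$).

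Second, to achieve the most-balanced guarantee, I would iterate the procedure. Maintain a set $S$ of ``peeled'' vertices, initially empty; at each step run the modified game on $G' = G[V \setminus S]$; if it certifies expansion, halt and return $S$ (or, if $S$ is empty, report $\phi(G) \ge \alpha/c_{exp}$); otherwise append the returned sparse cut $T$ of $G'$ to $S$ and continue, halting when $|S| > n/2$. The key combinatorial claim is that if $\opt(G,\alpha/c_{exp}) = k$, then some iterate must return a cut $T$ with $|T| = \Omega(k/\log^2 n)$: otherwise, each removal peels off only a $1/\log^2 n$ fraction of any optimal witness cut, so after $\log^2 n$ iterations the residual graph still contains a sparse cut of size $\Omega(k)$, and the next iteration must expose it. This yields $c_{size} = \Theta(\log^2 n)$.

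The main obstacle is ensuring $S$ remains $\alpha$-sparse in the original graph $G$ throughout the iteration, since each appended $T$ is sparse only in the current $G'$ and gluing them may introduce new boundary edges. I would control this by tightening the sparsity parameter used in each iteration by a small factor so that the cumulative cut $S$ is $\alpha$-sparse in $G$, charging any extra boundary edges to the vertices already removed and using that the total removed volume is $O(n)$. For runtime, the iteration halts after $O(\log n)$ rounds (each either removing an $\Omega(1/\log^2 n)$ fraction of the remaining optimum or certifying expansion), each round costs $\tilde O(m)$ by the approximate max-flow solver, yielding $\tilde O(m)$ total time as claimed.
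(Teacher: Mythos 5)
Your first step (replacing exact max-flow by approximate max-flow in the cut-matching game to get $c_{exp}=\Theta(\log^3 n)$) is essentially the paper's \ref{lem:sparse from approx max flow}, though you gloss over the sub-round structure: a $\gamma$-approximate flow routes only a $1/\gamma$ fraction of the demand, so each KRV round must be repeated $\Theta(\gamma\log n)$ times to assemble a full matching, and it is this $\log n$ factor (not the number of KRV rounds alone) that pushes the congestion from $\log^2 n$ up to $\log^3 n$.

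The second step is where the argument breaks. Your peeling iteration relies on the claim that ``some iterate must return a cut $T$ with $|T|=\Omega(k/\log^2 n)$,'' but the base subroutine from step one gives \emph{no size guarantee at all} on the sparse cut it returns. The matching player can legitimately return a singleton $\alpha$-sparse cut every time, even when a huge $(\alpha/c_{exp})$-sparse cut is present in the residual graph; nothing in the game forces it to find a large one. Consequently the invariant ``each round removes an $\Omega(1/\log^2 n)$ fraction of the remaining optimum'' is unsupported, the iteration can run for $\Omega(n)$ rounds (blowing the runtime), and the accumulated $S$ can be far smaller than $\opt(G,\alpha/c_{exp})/c_{size}$. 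Your sentence ``the next iteration must expose it'' is exactly the unproven step. (The concern you do flag---whether the glued-up $S$ stays $\alpha$-sparse---is in fact harmless: $\partial_G(\bigcup T_i)\subseteq\bigcup\partial_{H_i}(T_i)$, so the union of nested-residual $\alpha$-sparse cuts is $\alpha$-sparse as long as $|S|\le n/2$.)

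The paper avoids this by building a size lower bound directly into the game rather than trying to recover it by iteration: in \ref{lem:sparse s-size from approx max flow}, the matching player is given a parameter $s$ and is only required to route $k^{(j)}-s$ units per sub-round, so a failure yields an $\alpha$-sparse cut with \emph{both} sides of size $\ge s$, while success yields a graph $H$ in which only cuts of size $\gtrsim s\log^2 n$ are certified to expand. Binary searching over $s$ then locates the threshold $s^*$ at which the certificate flips, giving simultaneously a cut of size $\ge s^*-1$ and the bound $\opt(G,\alpha/c_{exp})<\Theta(\log^2 n)\cdot s^*$, which is exactly the $(c_{size},c_{exp})$ guarantee. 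Without this parameterized version of the game, iterating the unparameterized subroutine cannot produce the size lower bound.
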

As the proof is by a straightforward modification of the cut-matching
game framework by Khandekar, Rao and Vazirani \cite{KhandekarRV09}
for solving balanced cut problem, we defer the proof to \ref{sec:proof most balanced}.%

\subsection{The Global Decomposition Algorithm\label{sec:decomp recurse}}

In this section, we show how to construct the global expansion decomposition.
Our fast construction is motivated by two algorithms which are too
slow. However, they give main ideas of our algorithm so we discuss
them below.

\paragraph{Early Attempts}

First, the following is a straight-forward recursive algorithm, which
is too slow, for constructing the decomposition: use a sparsest cut
algorithm to find an $\alpha$-sparse cut $S$. If there is no $\alpha$-sparse
cut, then return the current graph as a component in $G^{d}$, otherwise
include the cut edges $\partial(S)$ into $G^{s}$. Then recurse on
$G[S]$ and $G[V\setminus S]$. In \cite{AndoniCKQWZ16}, they implicitly
find a global expansion decomposition with this approach, since they
focus on the space complexity and not running time. This algorithm
can take $\Omega(mn)$ time because the cut $S$ might be very small,
so the recursion tree can be very unbalanced.

Second, a next natural approach is to use a most balanced sparse cut
(recall \ref{def:most balanced sparse cut}) instead of a sparsest
cut. Suppose that there exists a $(2,1)$-approximate most balanced
sparse cut algorithm, this approach would have been straightforward: 
\begin{prop}
If there is a $(2,1)$-approximate most balanced sparse cut algorithm
$\cA$ with running time $T(m)$ on a graph with $n$ vertices and
$m$ edges, then there is an algorithm for computing expansion decomposition
for graph with $n$ vertices and $m$ edges in $O(T(m,n)\log n)$
time.\end{prop}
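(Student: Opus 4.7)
The natural algorithm is the recursion $\mathtt{Decompose}(G')$ below. Call $\cA$ on $(G',\alpha)$. If $\cA$ certifies $\phi(G') \ge \alpha$, emit $G'$ as a high-expansion component of $G^d$ and return. Otherwise $\cA$ returns a cut $S$ with $|S| \le |V'|/2$, $\phi_{G'}(S) < \alpha$, and $|S| \ge \opt(G', \alpha)/2$; in that case add $\partial_{G'}(S)$ to $E^s$ and recurse on $G'[S]$ and $G'[V' \setminus S]$. Start the procedure at the input graph $G$. Correctness is immediate: every leaf of the recursion tree is either a singleton or a subgraph certified to have expansion at least $\alpha$, hence a valid component of $G^d$; every edge not internal to such a component is added to $G^s$ as the cut edge of some split.

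\textbf{Runtime.} It suffices to show that the depth $D$ of the recursion tree is $O(\log n)$. The subgraphs appearing at any fixed level of the tree have pairwise disjoint vertex sets and pairwise disjoint edge sets (the cut edges at earlier levels are removed), so by subadditivity of $T$ in $(m,n)$ the total work at one level is at most $T(m,n)$. Summing over $D$ levels gives the claimed bound $O(T(m,n)\log n)$.

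\textbf{Depth bound and main obstacle.} Consider any root-to-leaf path $G_0 \supseteq G_1 \supseteq \cdots$ with $n_i = |V(G_i)|$ and cuts $S_i$. If $|S_i| \ge n_i / 4$ (a ``balanced'' split) then both children have size at most $3n_i/4$, so $n_{i+1} \le 3 n_i / 4$ regardless of which child the path follows. Otherwise $|S_i| < n_i/4$, and the $(2,1)$-approximation guarantee yields $\opt(G_i, \alpha) \le 2|S_i| < n_i/2$, i.e.\ $G_i$ itself has no balanced $\alpha$-sparse cut at all. The delicate case is when the path follows the large side after an unbalanced split: $n_{i+1} > 3 n_i / 4$, and the ``no balanced sparse cut'' property for $G_i$ does not directly transfer to $G_i[V_i\setminus S_i]$, because deleting $S_i$ can create new $\alpha$-sparse cuts that were not $\alpha$-sparse in $G_i$. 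This is the main obstacle.

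The clean way around it is to \emph{peel} inside each recursive call before branching: repeatedly invoke $\cA$ on the current graph and, whenever it returns an unbalanced cut $S$ (i.e.\ $|S| < n'/4$), add $\partial(S)$ to $E^s$ and replace the current graph by $G'[V' \setminus S]$; only when $\cA$ either certifies high expansion or returns a balanced cut do we stop and (in the latter case) branch into two recursive calls on the two halves. Each peel step strictly removes vertices, so the peeling phase terminates. The work of each peel invocation is charged via subadditivity of $T$ against the vertices it removes, and a standard amortization then shows that between any two balanced splits encountered by a path we perform only $O(1)$ levels of peeling in the tree sense, so the overall depth of the recursion tree is $O(\log n)$. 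Combining this with the level-by-level subadditivity argument above yields the total running time $O(T(m,n)\log n)$ and completes the proof.
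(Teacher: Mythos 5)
Your setup, correctness argument, and the reduction to bounding the recursion depth all match the paper's proof. The gap is exactly where you flag ``the main obstacle'': you do not actually bound the depth, and the peeling modification does not fix this.

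A peel invocation runs $\cA$ on the \emph{entire} current subgraph and costs $T(\cdot)$ of that full size, so it cannot be ``charged via subadditivity of $T$ against the vertices it removes'' --- subadditivity lets you sum costs over a \emph{partition} of the edge set, not charge a whole-graph scan against the small piece peeled off. The claim that a ``standard amortization'' yields ``$O(1)$ levels of peeling'' between balanced splits is unsupported and false in general: a priori nothing stops the peeling from iterating $\Theta(n)$ times removing one vertex each. (A smaller issue: as written you never recurse into the peeled pieces $G'[S]$, so the edges internal to $S$ are never classified into $E^s$ or $E^d$.) The missing ingredient --- and the only place the $(2,1)$-approximation is really exercised --- is a union argument that directly tames the phenomenon you describe, namely that deleting $S$ can create new sparse cuts. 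Let $S$ be the cut returned at $H$ and let $S'$ be a most balanced $\alpha$-sparse cut of $H[V_H\setminus S]$. If $|S|<|V_H|/4$ and $|S'|<|V_H|/4$, then $S\cup S'$ has size $<|V_H|/2$ and is $\alpha$-sparse in $H$, since $\partial_H(S\cup S')\subseteq\partial_H(S)\cup\partial_{H[V_H\setminus S]}(S')$ and both pieces contribute at most $\alpha$ boundary edges per vertex; hence $|S\cup S'|\le\opt(H,\alpha)$, and together with $|S|\ge\opt(H,\alpha)/2$ this forces $|S'|\le\opt(H,\alpha)/2$. So every unbalanced right step at least halves $\opt(\cdot,\alpha)$, which, combined with the $O(\log n)$ bound on balanced splits, gives the depth bound. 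You need this lemma (or a substitute) for either the plain recursion or your peeling variant; as it stands, your argument asserts the conclusion rather than proving it.
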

\begin{proof}
For any graph $H=(V_{H},E_{H})$, the recursive procedure $\decomp(H)$
does the following: if $\cA(H,\alpha)$ reports that $\phi(H)\ge\alpha$
then return $H$ as a connected components in $G^{d}$. Else $\cA(H,\alpha)$
finds a cut $S$ where $\opt(H,\alpha)/2\le|S|\le\opt(H,\alpha)$.
Add the cut edges $\partial_{H}(S)$ into $G^{s}$, then recurse on
$\decomp(H[S])$ and $\decomp(H[V_{H}\setminus S])$. Given an input
graph $G$, we claim that $\decomp(G)$ returns an expansion decomposition
of $G$.

To see the correctness, 1) each component $H$ in $G^{d}$ has high
expansion: $\phi(H)\ge\alpha$ by construction, and 2) the number
of edges in $G^{s}$ is at most $O(\alpha n\log n)$ because whenever
$\partial_{H}(S)$ is added into $G^{s}$, we charge these edges to
nodes in $S$. By averaging, each node is charged $\delta_{H}(S)/|S|=\phi_{H}(S)\le\alpha$
Since $|S|\le|V_{H}|/2$, each node can be charged $O(\log n)$ times.

To bound the running time, it is enough to show that the recursion
depth is $O(\log n)$ because the graphs in the two branches of the
recursion $H[S]$ and $H[V_{H}\setminus S]$ are disjoint. Let $S'$
be such that $|S'|=\opt(H[V_{H}\setminus S],\alpha)$. The case where
$|S|\ge|V_{H}|/4$ or $|S'|\ge|V_{H}|/4$ can happen at most $O(\log n)$
times throughout the recursion. Now, suppose that $|S|<|V_{H}|/4$
and $|S'|<|V_{H}|/4$. So $|S\cup S'|<|V_{H}|/2$. We claim that $|S'|\le\opt(H,\alpha)/2$
which can also happen at most $O(\log n)$ times. Suppose otherwise,
then $S\cup S'$ is an $\alpha$-sparse cut in $H$ where $|V_{H}|/2\ge|S\cup S'|>\opt(H,\alpha)/2+\opt(H,\alpha)/2=\opt(H,\alpha)$
which is a contradiction. So the recursion depth is $O(\log n)$.
\end{proof}
Unfortunately, we do not have an efficient $(2,1)$-approximate most
balanced sparse cut algorithm. By \ref{thm:most balanced sparse cut},
we only have $(O(\log^{2}n),O(\log^{3}n))$-approximate most balanced
sparse cut algorithm. Our main technical goal is, then, to use this
weaker guarantee while making this approach goes through.

\paragraph{Efficient Algorithm}

Let $\cA'$ be the algorithm from \ref{thm:most balanced sparse cut}.
Recall that, given any graph $H$ with $n'$ nodes and a parameter
$\alpha'$, $\cA'$ has approximation ratio $c_{size}(n')=\Theta(\log^{2}n')$
and $c_{exp}(n')=\Theta(\log^{3}n')$, i.e., with probability $1-\frac{1}{(n')^{k}}$
for a large constant $k$, $\cA'$ either outputs an $\alpha'$-sparse
cut $S$ where $\opt(H,\alpha'/c_{exp}(n'))/c_{size}(n')\le|S|\le n'/2$
or reports that $\phi(H)\ge\alpha'/c_{exp}(n')$. In the following,
sometimes the input graph $H$ is small and so the probability $1-\frac{1}{(n')^{k}}$
is high enough for our purpose. To boost this probability to $1-p'$
for arbitrarily small $p$, we define the algorithm $\cA$ to be the
algorithm that repeatedly runs $\cA'$ for $O(\log\frac{1}{p'})$
iterations. If there is some iteration where $\cA'$ returns an $\alpha'$-sparse
cut $S$, then $\cA$ just return $S$. Otherwise, there is no iteration
that $\cA'$ returns an $\alpha'$-sparse cut, then $\cA$ report
that $\phi(H)\ge\alpha'/c_{exp}(n')$ with probability $1-p'$. In
the following, we will use this modified algorithm $\cA$ instead
of the algorithm $\cA'$ from \ref{thm:most balanced sparse cut}. 
\begin{fact}
For any given $p$, the algorithm $\cA$ defined above works correctly
with probability $1-p$.\label{thm: correct report whp}
\end{fact}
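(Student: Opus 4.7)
The plan is to amplify the per-iteration success probability by independent repetition. The boosted algorithm $\cA$ runs $T = \Theta(\log(1/p))$ independent copies of $\cA'$ on the same input $(H,\alpha)$, using fresh randomness each time. By \ref{thm:most balanced sparse cut}, each copy produces a valid output---either a $(c_{size},c_{exp})$-approximate most balanced $\alpha$-sparse cut, or a correct report that $\phi(H) \ge \alpha/c_{exp}$---with probability at least $1 - q$, where $q := 1/(n')^{k}$. By independence, the probability that all $T$ copies simultaneously fail is at most $q^{T}$, which is $\le p$ for a suitable choice of $T = O(\log(1/p))$. I will condition on the complementary event $E$ that at least one copy succeeds (so $\Pr[\neg E] \le p$) and argue that $\cA$'s aggregated output is correct whenever $E$ holds.

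I split the analysis by the expansion of $H$. If $\phi(H) \ge \alpha/c_{exp}$, then $\opt(H,\alpha/c_{exp}) = 0$, so any $\alpha$-sparse cut $S$ with $|S| \le n'/2$ automatically satisfies the balance condition $|S| \ge \opt(H,\alpha/c_{exp})/c_{size}$, and any report of the form $\phi(H) \ge \alpha/c_{exp}$ is also valid. Since $\alpha$-sparsity of any claimed cut can be checked deterministically in $O(n+m)$ time, $\cA$ can safely return any verified cut or, failing that, report; either outcome is correct regardless of which iterations succeeded, so this case contributes nothing to the failure probability.

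The delicate case is $\phi(H) < \alpha/c_{exp}$. Here $\opt(H,\alpha/c_{exp}) > 0$ and a report is incorrect, so a successful copy must output a cut $S^{*}$ with $\phi_{H}(S^{*}) < \alpha$ and $|S^{*}| \ge \opt(H,\alpha/c_{exp})/c_{size}$. Under $E$, at least one such cut $S^{*}$ appears among the $T$ returned cuts. To ensure that $\cA$'s returned cut meets the balance requirement without being misled by an $\alpha$-sparse but unbalanced cut from a failed copy, I will have $\cA$ return the largest $\alpha$-sparse returned cut (with $|S| \le n'/2$); its size is at least $|S^{*}| \ge \opt(H,\alpha/c_{exp})/c_{size}$, giving a valid cut.

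Combining both cases, conditioned on $E$ the output of $\cA$ is either a correct $(c_{size},c_{exp})$-approximate most balanced $\alpha$-sparse cut or a correct report, so $\Pr[\cA \text{ correct}] \ge 1 - q^{T} \ge 1 - p$. The main obstacle is the second case: unlike $\alpha$-sparsity, the balance condition $|S| \ge \opt/c_{size}$ cannot be locally verified, so $\cA$ has no way to identify which individual run succeeded; the resolution is to commit in advance to the largest-returned-cut selection rule, which is automatically valid whenever at least one successful copy exists.
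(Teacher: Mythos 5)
Your proof is correct, and it follows the amplification scheme the paper describes (the paper states this fact without an explicit proof). The worthwhile refinement you add is the selection rule: the paper says $\cA$ returns an $\alpha$-sparse cut ``if there is some iteration'' that produces one, without specifying which, and taken literally (e.g.\ return the first $\alpha$-sparse cut found) this would fail to push the error below $q = 1/(n')^{k}$ in the regime $\phi(H) < \alpha/c_{exp}$, since a single failed iteration can emit an $\alpha$-sparse but \emph{unbalanced} cut, and---as you correctly note---$\cA$ cannot verify the balance requirement $|S| \ge \opt(H,\alpha/c_{exp})/c_{size}$ because it does not know $\opt$, whereas $\alpha$-sparsity and $|S|\le n'/2$ are locally checkable. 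Committing to ``return the largest verified $\alpha$-sparse cut with $|S|\le n'/2$'' repairs this: when at least one of the $T$ iterations succeeds, the largest returned cut is at least as large as the successful one and hence inherits its balance guarantee, so the failure event is contained in ``all $T$ iterations fail,'' of probability $q^{T}\le p$ (using $q\le 1/2^k < 1$ since $n'\ge 2$); and when $\phi(H)\ge\alpha/c_{exp}$ every verified cut and every report is vacuously correct, contributing nothing. This is a complete and clean justification of the fact, and arguably tightens the paper's informal description.
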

Suppose that the input of our decomposition is a graph $G$ with $n$
vertices and $m$ edges, a parameter $\alpha$, and the target success
probability $1-p$. We now define some parameters. Let $\bar{c}_{size}=c_{size}(n)$,
$\bar{c}_{exp}=c_{exp}(n)$, and $\epsilon=\sqrt{\log\bar{c}_{exp}/\log n}$.
Let $\bar{s}_{1},\dots,\bar{s}_{L}$ be such that $\bar{s}_{1}=n/2+1$,
$\bar{s}_{L}\le1$, and $\bar{s}_{\ell}=\bar{s}_{\ell-1}/n^{\epsilon}$
for $1<\ell<L$. Hence, $L\le1/\epsilon$. Let $\alpha_{1},\dots,\alpha_{L}$
be such that $\alpha_{L}=\alpha\bar{c}_{exp}$ and $\alpha_{\ell}=\alpha_{\ell+1}\bar{c}_{exp}$
for $\ell<L$. Hence, $\alpha_{1}=\alpha\bar{c}_{exp}^{L}$. Let $p'=p/n^{2}$.
So $\cA$ repeatedly runs $\cA'$ for $O(\log\frac{1}{p'})=O(\log\frac{n}{p})$
iterations. 

For any graphs $H=(V_{H},E_{H})$, $I=(V_{I},E_{I})$ and a number
$\ell$, the main procedure $\decomp(H,I,\ell)$ is defined as in
\ref{alg:decomp}. 
\begin{algorithm}
\caption{\label{alg:decomp}$\protect\decomp(H,I,\ell)$ where $H=(V_{H},E_{H})$
and $I=(V_{I},E_{I})$.}

Given that, $\opt(I,\alpha_{\ell})<\bar{s}_{\ell}$ do the following: 
\begin{enumerate}
\item If $H$ is a singleton or $\cA(H,\alpha_{\ell})$ reports $\phi(H)\ge\alpha_{\ell}/c_{exp}(|V_{H}|)$,
then return $H$ as a connected components in $G^{d}$. 
\item Else, $\cA(H,\alpha_{\ell})$ returns an $\alpha_{\ell}$-sparse cut
$S$ in $H$ where $\opt(H,\alpha_{\ell}/c_{exp}(|V_{H}|))/c_{size}(|V_{H}|)\le|S|\le|V_{H}|/2$.

\begin{enumerate}
\item If $|S|\ge\bar{s}_{\ell+1}/\bar{c}_{size},$ then add the cut edges
$\partial_{H}(S)$ into $G^{s}$ and recurse on $\decomp(H[S],H[S],1)$
and $\decomp(H[V_{H}\setminus S],I,\ell)$. 
\item Else, recurse on $\decomp(H,H,\ell+1)$. \end{enumerate}
\end{enumerate}
\end{algorithm}

We claim that $\decomp(G,G,1)$ returns an expansion decomposition
for $G$ with parameter $\alpha$ in time $\tilde{O}(m^{1+o(1)})$.
The following lemmas show the correctness. For readability, we first
assume that $\cA$ works correctly with certainty, and then we remove
the assumption later.
\begin{prop}
\label{prop:invariant} For any graphs $H,I\subseteq G$ and $\ell$,
if $\decomp(H,I,\ell)$ is called, then the invariant $\opt(I,\alpha_{\ell})<\bar{s}_{\ell}$
is satisfied. \end{prop}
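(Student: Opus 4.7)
I would prove this by induction on the recursion depth of $\decomp$. The initial call is $\decomp(G,G,1)$, and there are three types of recursive calls to consider: in Case 2(a) we make two calls $\decomp(H[S],H[S],1)$ and $\decomp(H[V_H\setminus S],I,\ell)$, and in Case 2(b) we make the call $\decomp(H,H,\ell+1)$. The plan is to verify the invariant for each of these in turn.

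For the base case $\decomp(G,G,1)$ and for the first recursive call $\decomp(H[S],H[S],1)$ in Case 2(a), the level parameter is $\ell=1$, so we need $\opt(H',\alpha_{1}) < \bar{s}_{1} = n/2 + 1$ where $H'$ is the second argument. This is immediate from the definition of $\opt$, which only counts cuts $S^*$ with $|S^*| \le |V(H')|/2 \le n/2 < \bar{s}_{1}$. For the second recursive call $\decomp(H[V_H\setminus S],I,\ell)$ in Case 2(a), both $I$ and $\ell$ are unchanged from the current call, so the invariant $\opt(I,\alpha_\ell)<\bar{s}_\ell$ carries over by the inductive hypothesis.

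The only substantive case is Case 2(b), where we call $\decomp(H,H,\ell+1)$ and must establish $\opt(H,\alpha_{\ell+1}) < \bar{s}_{\ell+1}$. Here the algorithm $\cA(H,\alpha_\ell)$ returned an $\alpha_\ell$-sparse cut $S$ satisfying $|S| < \bar{s}_{\ell+1}/\bar{c}_{size}$ together with the approximation guarantee $\opt(H,\alpha_\ell/c_{exp}(|V_H|))/c_{size}(|V_H|) \le |S|$. Since $|V_H|\le n$, we have $c_{exp}(|V_H|) \le \bar{c}_{exp}$ and $c_{size}(|V_H|) \le \bar{c}_{size}$, so $\alpha_\ell/c_{exp}(|V_H|) \ge \alpha_\ell/\bar{c}_{exp} = \alpha_{\ell+1}$ by the recurrence $\alpha_\ell = \alpha_{\ell+1} \bar{c}_{exp}$. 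Using the monotonicity of $\opt(\cdot, \alpha)$ in $\alpha$, it follows that
\[
\opt(H,\alpha_{\ell+1}) \le \opt(H,\alpha_\ell/c_{exp}(|V_H|)) \le |S| \cdot c_{size}(|V_H|) \le |S|\cdot\bar{c}_{size} < \bar{s}_{\ell+1},
\]
which is precisely the invariant needed for the recursive call.

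The main (mild) subtlety is Case 2(b); everything else is essentially bookkeeping. The key observation driving Case 2(b) is that the approximation losses $\bar{c}_{exp}$ and $\bar{c}_{size}$ were built into the definitions of $\alpha_\ell$ and into the threshold $\bar{s}_{\ell+1}/\bar{c}_{size}$ in step 2(b) precisely so that a small returned cut certifies that no substantially larger $\alpha_{\ell+1}$-sparse cut can exist. Since in this proposition we are temporarily assuming $\cA$ works correctly with certainty (as the excerpt notes this assumption will be removed afterward), no probabilistic accounting is needed here.
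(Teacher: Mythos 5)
Your proof is correct and follows essentially the same approach as the paper's: the base case and left-branch/right-branch cases are handled identically, and Case 2(b) uses the same chain of monotonicity in $\alpha$, the approximation guarantee of $\cA$, and the threshold condition $|S| < \bar{s}_{\ell+1}/\bar{c}_{size}$. (Your final inequality chain, $|S|\cdot c_{size}(|V_H|) \le |S|\cdot\bar{c}_{size} < \bar{s}_{\ell+1}$, is actually a touch cleaner than the paper's, which has a small typo omitting the $\bar{s}_{\ell+1}$ factor in one intermediate expression.)
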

\begin{proof}
When $\ell=1$, $\opt(I,\alpha_{1})<\bar{s}_{1}=n/2+1$ for any $I$
in a trivial way. In particular, the invariant is satisfied when $\decomp(G,G,1)$
or $\decomp(H[S],H[S],1)$ is called. The invariant $\decomp(H[V_{H}\setminus S],I,\ell)$
is the same as the one for $\decomp(H,I,\ell)$, and hence is satisfied.
Finally, $\decomp(H,H,\ell+1)$ is called in step 2.b only when $|S|<\bar{s}_{\ell+1}/\bar{c}_{size}$,
so we have 
\begin{eqnarray*}
\opt(H,\alpha_{\ell+1}) & = & \opt(H,\alpha_{\ell}/\bar{c}_{exp})\le\opt(H,\alpha_{\ell}/c_{exp}(|V_{H}|))\\
 & \le & c_{size}(|V_{H}|)|S|<\frac{c_{size}(|V_{H}|)}{\bar{c}_{size}}\le\bar{s}_{\ell+1}.
\end{eqnarray*}
These inequalities hold because $c_{exp}(|V_{H}|)\le\bar{c}_{exp}$
and $c_{size}(|V_{H}|)\le\bar{c}_{size}$ as $|V_{H}|\le n$, and
we know that $\opt(H,\alpha)\le\opt(H,\alpha')$ for any $\alpha\le\alpha'$.\end{proof}
\begin{prop}
\label{lem:decomp terminate} At level $L$, $\decomp(H,H,L)$ always
return the graph $H$ (no further recursion). \end{prop}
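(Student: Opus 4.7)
The plan is to unravel the invariant from \ref{prop:invariant} at level $L$ and combine it with the definition of $\opt$ and the specification of $\cA$ to force the algorithm into Step~1.

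First, I would apply \ref{prop:invariant} with $\ell = L$ to the call $\decomp(H,H,L)$, which gives $\opt(H,\alpha_L) < \bar{s}_L$. By the choice of the sequence $\bar{s}_1,\dots,\bar{s}_L$ at the start of the subsection, $\bar{s}_L \le 1$, and since $\opt$ is a nonnegative integer, this forces $\opt(H,\alpha_L) = 0$. By the convention stated just before \ref{def:most balanced sparse cut} ($\opt(H,\alpha)=0$ precisely when $\phi(H) \ge \alpha$), this is equivalent to $\phi(H) \ge \alpha_L$, i.e. there is no $\alpha_L$-sparse cut in $H$ at all.

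Next, I would invoke the specification of $\cA$ (\ref{def:most balanced sparse cut alg}) on the call $\cA(H,\alpha_L)$. Since the paragraph just before the proposition tells us to assume for readability that $\cA$ is correct with certainty, the algorithm must return one of its two declared outputs. It cannot take the first branch and return an $\alpha_L$-sparse cut $S$ in $H$, because no such cut exists by the previous step. Hence it must take the second branch and report $\phi(H) \ge \alpha_L / c_{exp}(|V_H|)$.

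Finally, when this report is produced, Step~1 of \ref{alg:decomp} fires and $\decomp(H,H,L)$ returns $H$ immediately as a connected component of $G^d$, with no recursive call to $\decomp$. I do not foresee any real obstacle: the argument is simply pushing the quantitative invariant $\opt(I,\alpha_\ell)<\bar{s}_\ell$ to the extreme point where it forces the graph to be an expander at level $L$, which is exactly the stopping condition of the recursion. The only caveat is the assumption that $\cA$ is correct, which, as the text notes, will be removed later by union-bounding over at most $O(n)$ calls to $\cA$ using the choice $p'=p/n^2$.
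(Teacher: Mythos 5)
Your proof is correct and follows essentially the same route as the paper's: the invariant $\opt(H,\alpha_L)<\bar{s}_L\le 1$ forces $\opt(H,\alpha_L)=0$, hence no $\alpha_L$-sparse cut exists, so $\cA(H,\alpha_L)$ must report rather than return a cut, and Step~1 terminates the recursion. You spell out the integrality step ($\opt$ is a nonnegative integer, so $<1$ means $=0$) a bit more explicitly than the paper, but the argument is the same.
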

\begin{proof}
The invariant $\opt(H,\alpha_{L})<\bar{s}_{L}\le1$ implies that $\phi(H)>\alpha_{L}$.
Hence, $\cA(H,\alpha_{L})$ can never find an $\alpha_{L}$-sparse
cut. So $\cA(H,\alpha_{L})$ must report $\phi(H)\ge\alpha_{L}/c_{exp}(|V_{H}|)$.
So $H$ is returned.\end{proof}
\begin{prop}
\label{lem:expansion} Each component $H$ in $G^{d}$ either is a
singleton or has expansion $\phi(H)\ge\alpha$.\end{prop}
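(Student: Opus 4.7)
The plan is to chase the parameter definitions: whenever a non-singleton component $H$ is placed into $G^{d}$, this happens in step~1 of $\decomp(H,I,\ell)$ precisely because $\cA(H,\alpha_\ell)$ reports $\phi(H)\ge \alpha_\ell / c_{exp}(|V_H|)$. So the task reduces to showing that the worst-case value of this reported lower bound, over all $\ell \in \{1,\dots,L\}$ and all $|V_H|\le n$, is at least $\alpha$.

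First I would bound the denominator: since $c_{exp}(\cdot)=\Theta(\log^{3}\cdot)$ is monotone non-decreasing and $|V_H|\le n$, we have $c_{exp}(|V_H|)\le c_{exp}(n)=\bar{c}_{exp}$. Next I would bound the numerator: from the definition $\alpha_{\ell}=\alpha_{\ell+1}\bar{c}_{exp}$ for $\ell<L$ together with $\alpha_{L}=\alpha\bar{c}_{exp}$, the sequence $\alpha_1,\dots,\alpha_L$ is strictly decreasing in $\ell$, so for every level $\ell$ we have $\alpha_\ell\ge \alpha_L = \alpha\bar{c}_{exp}$.

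Combining these two observations gives
\[
\phi(H)\;\ge\;\frac{\alpha_\ell}{c_{exp}(|V_H|)}\;\ge\;\frac{\alpha\,\bar{c}_{exp}}{\bar{c}_{exp}}\;=\;\alpha,
\]
which is exactly what we need. Singleton components are handled by the other branch of the ``either/or'' in the statement, so there is nothing further to verify there.

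There is no real obstacle; this is essentially a sanity check that the constants $\alpha_L=\alpha\bar{c}_{exp}$ and the geometric scaling $\alpha_\ell=\alpha_{\ell+1}\bar{c}_{exp}$ were chosen precisely to absorb the $\bar{c}_{exp}$ loss of $\cA$ at the deepest level. The one subtlety worth flagging is that, as in \ref{prop:invariant} and \ref{lem:decomp terminate}, we are treating the outputs of $\cA$ as always correct; to turn this into a with-high-probability statement one takes a union bound over the polynomially many invocations of $\cA$ using the boosted failure probability $p'=p/n^{2}$ from the definition of $\cA$, exactly as the surrounding text anticipates.
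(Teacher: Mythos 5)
Your proof is correct and uses essentially the same chain of inequalities as the paper: a non-singleton $H$ is put in $G^d$ only when $\cA$ certifies $\phi(H)\ge\alpha_\ell/c_{exp}(|V_H|)$, and then $\alpha_\ell\ge\alpha_L=\alpha\bar c_{exp}$ together with $c_{exp}(|V_H|)\le\bar c_{exp}$ gives $\phi(H)\ge\alpha$. The paper phrases this in one line ($\alpha_\ell/c_{exp}(|V_H|)\ge\alpha_L/\bar c_{exp}=\alpha$), so the two arguments coincide; your extra remark on pushing the randomness to a union bound over invocations of $\cA$ matches what the paper does separately at the end of the section.
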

\begin{proof}
$H$ is returned as a component in $G^{d}$ only when $H$ is a singleton
or $\cA$ reports that $\phi(H)\ge\alpha_{\ell}/c_{exp}(|V_{H}|)\ge\alpha_{L}/\bar{c}_{exp}=\alpha$.
Note that $\ell<L$ by \ref{lem:decomp terminate}.\end{proof}
\begin{prop}
\label{lem:crossing edges} The number of edges in $G^{s}$ is at
most $O(\alpha_{1}n\log n)$. \end{prop}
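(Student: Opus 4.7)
The plan is a charging argument: we bound $|E^s|$ by showing that each vertex is charged $O(\alpha_1)$ edges in each of at most $O(\log n)$ charging events.

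First, I would observe that edges are added to $G^s$ \emph{only} in Step~2.a of \ref{alg:decomp}, namely the edges of $\partial_H(S)$ for the cut $S$ returned by $\cA(H,\alpha_\ell)$. By the guarantee of $\cA$ the cut is $\alpha_\ell$-sparse, and since in the algorithm we always have $|S|\le |V_H|/2$, we get
\[
|\partial_H(S)| \;=\; \phi_H(S)\cdot |S| \;<\; \alpha_\ell\cdot |S| \;\le\; \alpha_1\cdot |S|,
\]
using $\alpha_\ell \le \alpha_1$ (which holds since $\alpha_\ell = \alpha_{\ell+1}\bar c_{exp}$ is decreasing in $\ell$). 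Hence I can charge each edge of $\partial_H(S)$ to a vertex of $S$, so that every vertex of $S$ receives charge at most $\alpha_1$ in this single Step~2.a event.

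Next, I bound the number of times any fixed vertex $v$ is charged across the whole recursion. Track the ``current subgraph'' $H$ containing $v$ along the recursion path. The recursion structure gives three cases for the subsequent call containing $v$:
\begin{itemize}
\item Step~2.a with $v\in S$: $v$ moves into $\decomp(H[S],\cdot,\cdot)$ with $|V(H[S])|=|S|\le |V_H|/2$, and $v$ is charged.
\item Step~2.a with $v\notin S$ or Step~2.b: $v$ moves into a call whose vertex set is a subset of $V_H$, and $v$ is \emph{not} charged.
\item Step~1 (terminate): no further recursion.
\end{itemize}
So every time $v$ is charged, the size of its current subgraph is at least halved, and this size is always in $[1,n]$. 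Therefore $v$ can be charged at most $\lceil\log_2 n\rceil$ times throughout the recursion.

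Combining the two bounds, the total charge assigned to all vertices is at most $n\cdot \alpha_1\cdot O(\log n)=O(\alpha_1 n\log n)$, which upper-bounds $|E^s|$. The only subtlety I expect is being careful that Step~2.b (which does not split the graph but merely increments $\ell$) does not contribute to the charging and does not reset the size-halving counter for $v$, which is exactly what the case analysis above records.
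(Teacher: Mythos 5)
Your argument is correct and is essentially the same charging argument as the paper's: distribute the $|\partial_H(S)|<\alpha_\ell|S|\le\alpha_1|S|$ cut edges among the vertices of $S$ (at most $\alpha_1$ per vertex by averaging), and observe that a vertex lands in the smaller side $S$ of a left recursion at most $\lceil\log_2 n\rceil$ times. The only cosmetic caveat is that ``charge each edge to a vertex of $S$'' should be read as fractional/averaged charging, since otherwise a high-degree vertex could receive more than $\alpha_1$; with that reading it matches the paper's proof exactly.
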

\begin{proof}
Suppose $\partial_{H}(S)$ is added into $G^{s}$ when $\decomp(H,I,\ell)$
for some graphs $H,I\subset G$, we charge these edges to nodes in
$S$. By averaging, each node is charged $\delta_{H}(S)/|S|=\phi_{H}(S)\le\alpha_{\ell}\le\alpha_{1}$.
Since $|S|\le|V_{H}|/2$, each node can be charged $O(\log n)$ times. 
\end{proof}
To analyze the running time, we need some more notation. We define
the recursion tree $\mathcal{T}$ of $\decomp(G,G,1)$ as follows.
Each node of $\mathcal{T}$ represents the parameters of the procedure
$\decomp(H,I,\ell)$. That is, $(G,G,1)$ is the root node. For each
$(H,I,\ell)$, if $\decomp(H,I,\ell)$ returns $H$, then $(H,I,\ell)$
is a leaf. If $\decomp(H,I,\ell)$ recurses on $\decomp(H[S],H[S],1)$
and $\decomp(H[V\setminus S],I,\ell)$, then $(H[S],H[S],1)$ and
$(H[V\setminus S],I,\ell)$ is a left and right children of $(H,I,\ell)$
respectively, and the edges $((H,I,\ell),(H[S],H[S],1))$ and $((H,I,\ell),(H[V\setminus S],I,\ell))$
are \emph{left edge} and \emph{right edge} respectively. If $\decomp(H,I,\ell)$
recurses on $\decomp(H,H,\ell+1)$, then $(H,H,\ell+1)$ is the only
child of $(H,I,\ell)$, and the edge $((H,I,\ell),(H,H,\ell+1))$
is a \emph{down edge}. 
\begin{lem}
\label{lem:depth implies time}If the depth of the recursion tree
$\mathcal{T}$ is $D$, then the total running time is $\tilde{O}(mD)$.\end{lem}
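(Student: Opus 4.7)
The plan is to bound the work per recursive call and then use a structural invariant about the graphs $H$ appearing at the same depth of $\mathcal{T}$.

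First, I would bound the work of a single call $\decomp(H,I,\ell)$. The only nontrivial work beyond bookkeeping is the invocation of $\cA(H,\alpha_\ell)$, which by \Cref{thm:most balanced sparse cut} and the probability-boosting wrapper (repeating $\cA'$ for $O(\log(n/p))$ iterations) runs in $\tilde{O}(|E_H|)$ time. Everything else per call (adding $\partial_H(S)$ to $G^s$, forming $H[S]$ and $H[V_H\setminus S]$) is also within this budget. So it suffices to show that
\[
  \sum_{(H,I,\ell)\text{ at depth }d\text{ in }\mathcal{T}} |E_H| \;\le\; m
\]
for every depth $d\in\{0,1,\ldots,D\}$; summing over $d$ then gives $\tilde{O}(mD)$.

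The key observation I would establish by induction on $d$ is that the graphs $H$ appearing at depth $d$ are pairwise edge-disjoint subgraphs of $G$. The base case $d=0$ is trivial since the root is $(G,G,1)$. For the inductive step, inspect the three types of children defined just before the lemma: a call $\decomp(H,I,\ell)$ either (i) is a leaf (Step 1), contributing no children; (ii) takes the split branch (Step 2.a), producing a left child whose graph is $H[S]$ and a right child whose graph is $H[V_H\setminus S]$, which are edge-disjoint and contain only edges of $E_H$ (the remaining cut edges $\partial_H(S)$ are routed to $G^s$, not to either child); or (iii) takes the down branch (Step 2.b), producing a single child whose graph is still $H$. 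In all three cases, the multiset of edges contributed by the children of one depth-$d$ node is a subset of $E_H$, and children of \emph{distinct} depth-$d$ nodes draw from the disjoint edge sets of their parents. Hence edge-disjointness at depth $d+1$ follows from edge-disjointness at depth $d$, and the sum of $|E_H|$ over depth $d+1$ is at most the sum over depth $d$, which by induction is at most $m$.

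Combining the two pieces, the total running time is
\[
  \sum_{d=0}^{D}\ \sum_{(H,I,\ell)\text{ at depth }d} \tilde{O}(|E_H|)
  \;=\; \sum_{d=0}^{D} \tilde{O}(m) \;=\; \tilde{O}(mD),
\]
as required. The only subtle point—hence the step I would write most carefully—is the handling of down edges in the inductive step: a priori they seem to ``duplicate'' $H$, which could blow up the per-depth total, but because a down step produces exactly \emph{one} child (not alongside a split), the edge total at the next depth is preserved rather than doubled. Once this is spelled out, no further calculation is needed.
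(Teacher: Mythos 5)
Your proof is correct and takes essentially the same approach as the paper: bound each call's work by $\tilde{O}(|E_H|)$, observe that graphs at the same depth of $\mathcal{T}$ are pairwise edge-disjoint subgraphs of $G$ (so each depth contributes $\tilde{O}(m)$ total), and multiply by $D$. The paper states the per-depth disjointness without proof; your inductive case analysis—particularly the observation that a down edge produces a single child carrying the same edge set, rather than duplicating it alongside a split—is the right way to justify that assertion explicitly.
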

\begin{proof}
The running time on $\decomp(H,I,\ell)$, excluding the time spent
in the next recursion level, is at most $\tilde{O}(|E_{H}|\log\frac{1}{p})$
because the algorithm $\cA$ just repeatedly runs the algorithm $\cA'$
from \ref{thm:most balanced sparse cut} for $O(\log\frac{1}{p})$
times, and $\cA'$ runs in near-linear time. 

Let $\mathcal{T}_{d}$ be the set of all the nodes in $\mathcal{T}$
of depth \emph{exactly} $d$. We can see that, for any two nodes $(H,I,\ell),(H',I,\ell')\in\mathcal{T}_{d}$,
$H$ and $H'$ are disjoint subgraphs of $G$. So, the total running
time we spent on $\mathcal{T}_{d}$ is $\tilde{O}(m\log\frac{1}{p'})$.
Hence, the total running time of $\mathcal{T}$ is $\tilde{O}(mD\log\frac{1}{p'})$.\end{proof}
\begin{lem}
\label{lem:bound depth}Let $P$ be any path from leaf to root of
$\mathcal{T}$. $P$ contains at most $\log n$ left edges, $L\log n$
down edges, and $L\log n\times\gamma$ right edges where $\gamma=\bar{c}_{size}n^{\epsilon}$.
Therefore, $\cT$ has depth at most $L\log n\cdot\bar{c}_{size}n^{\epsilon}$.\end{lem}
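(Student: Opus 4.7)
The plan is to bound the number of left, down, and right edges on $P$ separately. For left edges, the recursion step $(H,I,\ell) \to (H[S], H[S], 1)$ has $|S| \le |V_H|/2$, so the vertex count at least halves; from $n$ vertices this allows at most $\log n$ left edges. For down edges, each such edge increments $\ell$, which is only reset at a left edge and, by \ref{lem:decomp terminate}, never exceeds $L$; hence between two consecutive left edges there are at most $L-1$ down edges, for a total of at most $L\log n$ down edges on $P$.

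To bound right edges, I group them into maximal runs (``segments'') at a fixed $(I,\ell)$. A new segment begins only after a left or down edge (or at the root), so the number of segments is $O(L\log n)$, and it suffices to show each segment contains at most $\gamma$ right edges. Fix a segment $(H_0,I,\ell) \to \cdots \to (H_k,I,\ell)$, where $H_0 = I$ because the preceding edge (left, down, or the root call) sets $H=I$. At step $j$, $\cA$ returns a cut $S_j \subseteq V(H_{j-1})$ with $\phi_{H_{j-1}}(S_j) < \alpha_\ell$ and $\bar{s}_{\ell+1}/\bar{c}_{size} \le |S_j| \le |V(H_{j-1})|/2$, and $H_j = H_{j-1}[V(H_{j-1}) \setminus S_j]$. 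The $S_j$ are pairwise disjoint subsets of $V(I)$ whose union $T_k := V(I)\setminus V(H_k)$ satisfies $|T_k| = \sum_j |S_j| \ge k\bar{s}_{\ell+1}/\bar{c}_{size}$. The key observation is that every edge of $\partial_I(T_k)$ joins some $u\in S_j$ to some $v \in V(H_k) \subseteq V(H_{j-1})$, so it lies in $\partial_{H_{j-1}}(S_j)$; summing, $|\partial_I(T_k)| \le \sum_j |\partial_{H_{j-1}}(S_j)| < \alpha_\ell |T_k|$. When $|T_k| \le |V(I)|/2$, $T_k$ is therefore an $\alpha_\ell$-sparse cut of $I$ on its small side, so the invariant of \ref{prop:invariant} yields $|T_k| \le \opt(I,\alpha_\ell) < \bar{s}_\ell$, which gives $k < \bar{s}_\ell \bar{c}_{size}/\bar{s}_{\ell+1} = \gamma$.

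The main obstacle is the remaining case $|T_k| > |V(I)|/2$, where $V(H_k)$ is the small side but one only gets $\phi_I(V(H_k)) \le \alpha_\ell |T_k|/|V(H_k)|$, which can exceed $\alpha_\ell$, so the invariant does not apply directly. My plan is to split the segment at the largest index $j^\star$ with $|T_{j^\star}| \le |V(I)|/2$; by the Case~A argument $j^\star < \gamma$, and for $j>j^\star$ I will exploit that $|V(H_j)|$ decreases by at least $\bar{s}_{\ell+1}/\bar{c}_{size}$ per step while starting from $|V(H_{j^\star+1})|\ge |V(I)|/4$ (since $|S_{j^\star+1}| \le |V(H_{j^\star})|/2$), together with the approximation guarantee $|S_j|\ge\opt(H_{j-1},\alpha_{\ell+1})/\bar{c}_{size}$, to control the tail and keep the per-segment bound at $O(\gamma)$. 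Combining the three bounds then yields at most $\log n$ left, $L\log n$ down, and $L\log n \cdot \gamma$ right edges on $P$, whence the depth of $\cT$ is at most $L\log n \cdot \bar{c}_{size} n^\epsilon$.
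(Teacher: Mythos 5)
Your Case~A argument matches the paper's, and you are right that the implicit assumption $|T_k|\le|V(I)|/2$ is needed: expansion is normalized by the smaller side of the cut, so $\phi_{H_1}(T_k)<\alpha_\ell$ only follows if $T_k$ is that side, and likewise $T_k$ is a valid witness for $\opt(I,\alpha_\ell)$ (whose definition requires $|S^*|\le|V-S^*|$) only in that case. The assumption can genuinely fail --- already at the root $(G,G,1)$, after $\gamma$ right edges one has $|T_\gamma|\ge\gamma\bar{s}_2/\bar{c}_{size}=\bar{s}_1=n/2+1>n/2=|V(G)|/2$, putting the union on the large side --- so the gap you flag is real and is indeed glossed over in the paper's proof.

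However, your treatment of Case~B is a plan rather than a proof, and as sketched it does not close. Bounding the tail after $j^\star$ by ``vertex count divided by minimum cut size'' gives at most $|V(H_{j^\star+1})|\bar{c}_{size}/\bar{s}_{\ell+1}<|V(I)|\bar{c}_{size}/(2\bar{s}_{\ell+1})$ right edges; for this to be $O(\gamma)=O(\bar{c}_{size}n^\epsilon)$ you would need $|V(I)|=O(\bar{s}_\ell)$, and the invariant $\opt(I,\alpha_\ell)<\bar{s}_\ell$ does not give that: for example, $(G,G,\ell)$ reached via $\ell-1$ consecutive down edges from the root has $|V(I)|=n$ while $\bar{s}_\ell=(n/2+1)/n^{(\ell-1)\epsilon}$ can be far smaller. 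Your stated inequality $|V(H_{j^\star+1})|\ge|V(I)|/4$ is moreover a lower bound, which is the wrong direction for bounding the tail, and the approximation guarantee $|S_j|\ge\opt(H_{j-1},\alpha_{\ell+1})/\bar{c}_{size}$ only forces large cuts when $H_{j-1}$ actually has a large sparse cut, which you cannot assume. You also cannot re-run the Case~A argument inside the tail: the invariant is about $\alpha_\ell$-sparse cuts of $I=H_1$ on its small side, but for $j>j^\star$ the accumulated union $T_j$ is on the large side of $I$, so the chain $\phi_I(T_j)<\alpha_\ell\Rightarrow|T_j|<\bar{s}_\ell$ is no longer available. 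A route that does close is to note that the cross-over step at $j^\star+1$ halves the vertex count of the current subproblem (from $|V(H_1)|$ to $<|V(H_1)|/2$), an event that, like a left edge, can occur at most $\log n$ times along $P$; charging right edges between consecutive \emph{halving} events (left edges or cross-overs) rather than between consecutive left/down edges and applying Case~A inside each such block recovers the asymptotic depth $O(L\log n\cdot\gamma)$, though you should then check whether the exact factor $L\log n\cdot\bar{c}_{size}n^\epsilon$ stated in the lemma survives the re-accounting.
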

\begin{proof}
Consider the left edge $((H,I,\ell),(H[S],H[S],1))$. Since $|S|\le|V_{H}|/2$,
$P$ contains $\log n$ left edges. Between any two left edges in
$P$ there are at most $L$ down edges be \ref{lem:decomp terminate}.
So $P$ contains at most $L\log n$ down edges. To prove that there
are at most $L\log n\times\gamma$ right edges in $P$, it suffices
to prove that there cannot be $\gamma$ right edges between any left
edges or down edges in $P$.

Suppose that $(H_{1},I,\ell),\dots,(H_{\gamma},I,\ell)$ are nodes
in $P$ where, for each $i$, $((H_{i},I,\ell),(H_{i+1},I,\ell))$
is a right edge, and $(H_{1},I,\ell)$ is a deeper endpoint of left
edges or a down edges (hence $I=H_{1}$).

For each $i$, let $S_{i}$ be the cut such that $H_{i+1}=H_{i}[V_{H_{i}}\setminus S_{i}]$
and $\phi_{H_{i}}(S_{i})<\alpha_{\ell}$. Since $\{S_{i}\}_{i}$ are
mutually disjoint and $\partial_{H_{1}}(\bigcup_{i=1}^{\gamma}S_{i})\subset\bigcup_{i=1}^{\gamma}\partial_{H_{i}}(S_{i})$,
we can conclude $\phi_{H_{1}}(\bigcup_{i=1}^{\gamma}S_{i})<\alpha_{\ell}$.
However, we also have that $|S_{i}|\ge\bar{s}_{\ell+1}/\bar{c}_{size}$,
for all $i$, and hence $|\bigcup_{i=1}^{\gamma}S_{i}|
\ge\gamma\bar{s}_{\ell+1}/\bar{c}_{size}\ge n^{\epsilon}\bar{s}_{\ell+1}=\bar{s}_{\ell}$. So $\bigcup_{i=1}^{\gamma}S_{i}$ contradicts the invariant for
$\decomp(H_{1},H_{1},\ell)$ (note that $I=H_{1})$ which says $\opt(H_{1},\alpha_{\ell})<\bar{s}_{\ell}$. 
\end{proof}
Now, we can conclude the theorem.
\begin{proof}
[Proof of \ref{thm:high-exp-decomp}] Recall that $\epsilon=\sqrt{\log\bar{c}_{exp}/\log n}=\Theta(\sqrt{\log\log n/\log n})$
and $L\le1/\epsilon$. Given $G$ and $\alpha$, by \ref{lem:depth implies time}
and \ref{lem:bound depth} the running time of $\decomp(G,G,1)$ is
$\tilde{O}(mL\bar{c}_{size}n^{\epsilon}\log\frac{1}{p'})=O(m^{1+O(\sqrt{\log\log n/\log n})}\log\frac{1}{p})$.
By \ref{lem:expansion} and \ref{lem:crossing edges}, for each connected
component $H$ of $G^{d}$, $H$ is a singleton or $\phi(H)\ge\alpha$
and the number of edges in $G^{s}$ is $O(\alpha_{1}n\log n)=O(\alpha c_{exp}^{L}n\log n)=\alpha n^{1+O(\sqrt{\log\log n/\log n})}$.
This is the output for the  expansion decomposition as claimed. 

Finally, we remove the assumption that $\cA$ is deterministic, and
show that with probability $1-p$, the outputted decomposition is
correct. Observe that, as the depth of the recursion tree $\mathcal{T}$
of $\decomp(G,G,1)$ is $D=L\log n\cdot\bar{c}_{size}n^{\epsilon}$,
then $\cA$ is called at most $nD$ times. So the probability that
$\cA$ always works correctly is at least $1-p'nD\ge1-p$.
\end{proof}
Finally, we add some easy observation that will be used in \ref{sec:local decomp}.
\begin{prop}
All the edges added into $E^{s}$ by \ref{alg:decomp} are the cut
edges $\partial_{H}(S)$ for some subgraph $H$ of $G$ and a cut
$S\subset V(H)$ where $\phi_{H}(S)\le\alpha_{1}=\alpha\bar{c}_{exp}^{L}=\alpha\gamma$
where $\gamma=n^{O(\sqrt{\log\log n/\log n})}$. \label{thm:add only sparse cut}\end{prop}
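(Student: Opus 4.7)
The plan is to prove this essentially by inspection of \ref{alg:decomp}: pin down the one line where edges can ever enter $E^{s}$, use the guarantee of $\cA$ at that moment, and then check that the relevant threshold $\alpha_\ell$ is dominated by $\alpha_1=\alpha\gamma$ for every level $\ell$ that can arise.

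First I would observe that in \ref{alg:decomp} the set $E^{s}$ is modified only inside step 2.a, and the edges added there are precisely $\partial_{H}(S)$, where $S$ is the cut returned by the call $\cA(H,\alpha_\ell)$. By \ref{def:most balanced sparse cut alg} (applied via the boosted algorithm $\cA$), whenever $\cA(H,\alpha_\ell)$ does not report $\phi(H)\ge \alpha_\ell/c_{exp}(|V_H|)$, it returns a cut $S$ with $\phi_H(S)<\alpha_\ell$; step 2.a is executed only in this second case. Hence every batch of edges added to $E^{s}$ is of the form $\partial_H(S)$ for some subgraph $H$ of $G$ and some $S\subset V(H)$ with $\phi_H(S)<\alpha_\ell$.

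Next I would bound $\alpha_\ell$ uniformly in $\ell$. Unrolling the recurrence $\alpha_L=\alpha\bar c_{exp}$ and $\alpha_\ell=\alpha_{\ell+1}\bar c_{exp}$ for $\ell<L$ gives $\alpha_\ell=\alpha\bar c_{exp}^{L-\ell+1}$, so $\alpha_\ell\le\alpha_1=\alpha\bar c_{exp}^{L}$ for every $\ell\in\{1,\dots,L\}$. Combined with the previous paragraph, this yields $\phi_H(S)<\alpha_\ell\le\alpha_1=\alpha\bar c_{exp}^{L}=\alpha\gamma$, as required. There is no real obstacle here; the only thing to be careful about is remembering that $\ell$ may have been incremented several times (via step 2.b, which does not touch $E^{s}$) before step 2.a actually fires, so the uniform bound $\alpha_\ell\le\alpha_1$ is exactly what one needs, and this is immediate from the monotonicity of the sequence $(\alpha_\ell)_\ell$.
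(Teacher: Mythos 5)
Your proof is correct and is exactly the argument the paper leaves implicit (the paper states this as an ``easy observation'' without a proof block): edges enter $E^{s}$ only in step~2.a as $\partial_{H}(S)$ for the cut $S$ returned by the most-balanced-sparse-cut oracle, which by Definition~\ref{def:most balanced sparse cut alg} satisfies $\phi_{H}(S)<\alpha_{\ell}$, and the recurrence $\alpha_{\ell}=\alpha\bar c_{exp}^{\,L-\ell+1}$ is monotone decreasing in $\ell$, so $\alpha_{\ell}\le\alpha_{1}=\alpha\bar c_{exp}^{L}=\alpha\gamma$. Nothing is missing; the only implicit assumption (also made by the paper when it ``first assumes $\cA$ works correctly with certainty'') is that the returned cut really is $\alpha_{\ell}$-sparse, which is a deterministic guarantee of the oracle's output format rather than a probabilistic one.
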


\section{Local Expansion Decomposition\label{sec:local decomp}}

In this section, we show a novel ``local'' version of the global
expansion decomposition algorithm from \ref{sec:decomp alg}. This
algorithm differs from the algorithm from \ref{sec:decomp alg} in
two aspects. First, it needs that the input graph $G$ is obtained
by deleting some edge set $D$ from another graph $G_{b}$, which
``we expect that'' $\phi(G_{b})\ge\alpha_{b}$. Second, the algorithm
is \emph{local, }in the sense that its running time essentially depends
only on the size of $|D|$ and $\alpha_{b}$, and not the size of
$G$. 

This algorithm is a crucial tool for obtaining our Las Vegas dynamic
$\st$ algorithm. It is important that the dependency of $|A|$ in
the running time is truly subquadratic (i.e. $|A|^{2-\epsilon}$ for
some $\epsilon$), which is the case here. For any $n$, recall $\gamma=n^{O(\sqrt{\log\log n/\log n}})=n^{o(1)}$
is the factor from \ref{thm:high-exp-decomp}.
\begin{thm}
[Local Expansion Decomposition Algorithm]\label{thm:local decomp}For
any constant $\epsilon\in(0,1)$, there is an algorithm $\cA$ that
can do the following:
\begin{itemize}
\item $\cA$ is given pointers to $n,p,\Delta,G_{b}$,$D,\alpha_{b}$ which
are stored in a memory: $p$ is a failure probability parameter. $G_{b}=(V,E_{b})$
is a $\Delta$-bounded degree graphs with $n$ nodes (represented
by an adjacency list). $\alpha_{b}$ is an expansion parameters where
$\alpha_{b}<\frac{1}{\gamma^{\omega(1)}}$ . $D\subset E_{b}$ is
a set of edges in $G_{b}$. Let $A$ denote a set of endpoints of
edges in $D$. Let $G=(V,E)=(V,E_{b}-D)$ be the graph that $\cA$
will compute the decomposition on.
\item Then (without reading the whole inputs), in time $\tilde{O}(\frac{\Delta^{5.5}|A|^{1.5+\epsilon}}{\alpha_{b}^{4+\epsilon}\epsilon}\gamma\log\frac{1}{p})$,
$\cA$ either 

\begin{itemize}
\item reports that $G_{b}$ has expansion less than $\alpha_{b}$, or 
\item outputs 1) a set of edges $E^{s}\subset E$, and 2) all components,
except the largest one, of $G^{d}$ where $G^{d}=(V,E^{d})=(V,E-E^{s})$. 
\end{itemize}
\item Moreover, if $G_{b}$ has expansion at least $\alpha_{b}$, then with
probability $1-O(p)$ we have

\begin{itemize}
\item $|E^{s}|\le4\Delta|A|/\alpha_{b}$, and
\item each connected component $C$ of $G^{d}$ either is a singleton or
has high expansion: $\phi(C)\ge\alpha$ where $\alpha=(\alpha_{b}/6\Delta)^{1/\epsilon}$.
\end{itemize}
\end{itemize}
\end{thm}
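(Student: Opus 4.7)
The plan is to mimic the global expansion decomposition algorithm of \Cref{thm:high-exp-decomp}, but replace the most balanced sparse cut subroutine by a local analogue that examines only a neighborhood of the deleted edges. The main new ingredient is an approximation algorithm for what I call the \emph{locally balanced sparse cut} (LBS cut) problem: given $G$, a target set $A$, and a parameter $\alpha$, find an $\alpha$-sparse cut $S$ with $|S|\le|V-S|$ whose size is competitive against all $\alpha$-sparse cuts that are ``near'' $A$ (e.g., that overlap sufficiently with a ball around $A$). Unlike the global version, whose running time is $\tilde{O}(|E(G)|)$, this routine must run in time essentially depending on $|A|$ and $1/\alpha$, but not on $|V|$.

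First, I would construct the LBS cut algorithm by adapting the local cut improvement algorithm of Orecchia and Zhu~\cite{OrecchiaZ14}. Their procedure, given a seed set, produces a cut whose expansion is competitive against cuts in a local neighborhood; I would modify the termination conditions and analysis so that (i) it yields an approximation to the \emph{most balanced} LBS cut (rather than merely an improvement over the seed), and (ii) the running time is genuinely local, scaling like $\tilde{O}(\mathrm{poly}(|A|,1/\alpha_b))$. Combined with the near-linear-time approximate max flow tools used in the global algorithm, this should yield $\tilde{O}(|A|^{1+O(\epsilon)}/\alpha_b^{O(1)})$ time per invocation after accounting for approximation slack.

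Second, I would run a recursive decomposition in the spirit of \Cref{alg:decomp} using the LBS cut algorithm in place of the most balanced sparse cut algorithm, initialising the target set $A$ to the endpoints of $D$ and updating it in each recursive branch. The correctness argument rests on the key observation that justifies the locality: if $\phi(G_b)\ge\alpha_b$, then every $(\alpha_b/2)$-sparse cut $S$ in $G=G_b-D$ must have at least half of its cut edges (with respect to $G_b$) coming from $D$, since deleting $D$ can lower the expansion of $S$ by at most $|\partial_D(S)|/|S|$. Consequently, any cut the global algorithm could have found is ``near'' $A$ in the LBS sense, so approximating the LBS cut suffices. The failure clause---either reporting $\phi(G_b)<\alpha_b$ or producing a valid decomposition---then comes essentially for free from the LBS routine, which certifies expansion whenever it fails to return a sparse cut.

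The main obstacle will be the accounting, in two parts. First, one must show that the recursion tree still has the kind of levelled structure exploited in \Cref{lem:bound depth}, even though each call only approximates a \emph{local} most balanced sparse cut; as in the global case, the approximation factor forces introducing a trade-off parameter $\epsilon$ and chaining $1/\epsilon$ levels, which explains the output expansion $\alpha=(\alpha_b/6\Delta)^{1/\epsilon}$. Second, and more delicately, the total work must be charged against $|A|$ rather than $|V|$. The budget for this is that the edges ever placed into $E^s$ total $O(\Delta|A|/\alpha_b)$ (using that every cut we cross is $\alpha$-sparse with $\alpha=O(\alpha_b)$, as in \Cref{thm:add only sparse cut}), which caps the ``explored region'' that the local LBS routine must ever touch. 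Summing the $|S|$-local cost over the recursion and multiplying by the depth blowup of $\gamma$ should give the stated $\tilde{O}(\Delta^{5.5}|A|^{1.5+\epsilon}/(\alpha_b^{4+\epsilon}\epsilon)\cdot\gamma\log(1/p))$ bound; the exponent $1.5+\epsilon$ on $|A|$ (rather than $2$) is precisely what I must extract from the local max flow routine, and is what makes this theorem usable inside the Las Vegas update-time argument of \Cref{sec:overview}.
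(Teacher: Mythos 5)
Your proposal follows essentially the same approach as the paper: define a locally balanced sparse cut (LBS cut) problem, solve it by adapting the Orecchia--Zhu local cut improvement algorithm, and then recurse as in the global decomposition with the target set seeded by the endpoints of $D$, using the key observation that when $\phi(G_b)\ge\alpha_b$ any $(\alpha_b/2)$-sparse cut in $G=G_b-D$ must be ``near'' $D$ (the paper phrases nearness as $(B_H,\sigma)$-overlapping of \emph{nodes} where $\sigma=\alpha_b/2\Delta$, whereas you phrase it via cut edges from $D$; the two are equivalent up to the degree bound $\Delta$). One small detail worth flagging, which you gloss over with ``updating [the target set] in each recursive branch'': when recursing into an induced subgraph $H$, the target set must include not only $A\cap V_H$ but also the endpoints $A_H$ of the boundary edges $\partial_G(V_H)$, since $H$ is an induced subgraph and a sparse cut in $H$ could be explained by these new boundary edges rather than by $D$ alone (this is the set $B_H=(A\cup A_H)\cap V_H$ in the paper's Algorithm~\ref{alg:local decomp}).
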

Observe that $\epsilon$ is a trade-off parameter such that, on one
hand when $\epsilon$ is small, the algorithm is fast but has a bad
expansion guarantee in the output, on the other hand when $\epsilon$
is big, the algorithm is slow but has a good expansion guarantee. 
\begin{rem}
\label{rem:know components}In \ref{thm:local decomp}, by \emph{outputting
a components }$C$ of $G^{d}$, it means for each nodes $u\in V(C)$,
$(u,C)$ is outputted indicating that $u\in V(C)$. Since the algorithm
$\cA$ in \ref{thm:local decomp} outputs all components, except one,
of $G^{d}$. We can infer the size of \emph{all }components in $G^{d}$,
and for any node $u\in V$, sa component $C$ of $G^{d}$ where $u\in V(C)$.
\end{rem}
Although the main idea for proving \ref{thm:local decomp} is similar
to \ref{thm:high-exp-decomp}, the analysis is more involved. We explain
the high level idea below. There are two main steps. First, in \ref{sec:local sparse cut}
we show an algorithm for a problem called \emph{locally balanced sparse
cut} \emph{(LBS cut)}, which is basically a ``local version'' of
most balanced sparse cut defined in \ref{sec:sparsest cut}. Second,
we use this algorithm as a main procedure for constructing the decomposition
in \ref{sec:local decomp alg}.

In LBS cut problem, we are given a graph $G=(V,E)$, a target set
$A\subset V$ and a parameter $\alpha$. Then we need to find a $\alpha$-sparse
cut $S$ (i.e. $\phi_{G}(S)<\alpha$) where $|S|\le|V-S|$ such that
$|S|$ is larger than all $\alpha$-sparse cuts which are ``near''
the target set $A$. (``Nearness'' is defined precisely in \ref{def:overlapping}).
To compare, in the most balanced sparse cut problem, $|S|$ needs
to be larger than \emph{all} $\alpha$-sparse cuts. The nice thing
about LBS cut problem is that there is an approximate algorithm for
this problem which is ``local'' (i.e. its running time depends essentially
only on $vol(A)$, and not $|V|$). This algorithm can be obtained
quite easily by slightly modifying and analyzing the algorithm by
Orecchia and Zhu \cite{OrecchiaZ14} for local cut improvement problem.
See \ref{sec:local sparse cut} for details.

Next, to obtain the decomposition, the main approach is to find an
approximate LBS cut where the target set $A$ is the endpoints of
$D$ (union with some additional nodes) and recurse on both sides.
However, the analysis is more involved than the one in \ref{thm:high-exp-decomp}.
One important observation is that, when $\phi(G_{b})\ge\alpha_{b}$,
any $(\frac{\alpha_{b}}{2})$-sparse cut $S$ in $G=G_{b}-D$ must
be ``near'' to $D$. Intuitively, this is because the expansion
of $S$ get halved after deleting edges in $D$. This justifies why
it is enough to find approximate $(\frac{\alpha_{b}}{2})$-sparse
LBS cuts instead of finding approximate most balanced sparse cuts
as in the algorithm for \ref{thm:high-exp-decomp}. As our approximate
LBS cut algorithm is local, we can output the decomposition in time
essentially independent from the size of $G$. See \ref{sec:local decomp alg}
for details.

\subsection{Locally Balanced Sparse Cut\label{sec:local sparse cut}}

In this section, we show the crucial tool for proving \ref{thm:local decomp}.
First, we need this definition:
\begin{defn}
[Overlapping]For any graph $G=(V,E)$ and set $A\subset V$, a cut
$S\subset V$ is $(A,\sigma)$-overlapping in $G$ if $|S\cap A|/|S|\ge\sigma$.\label{def:overlapping}
\end{defn}
Let $G=(V,E)$ be a graph. Recall that a cut $S$ is $\alpha$-sparse
if $\phi(S)=\frac{\delta(S)}{\min\{|S|,|V-S|\}}<\alpha$. For any
set $A\subset V$, an overlapping parameter $\sigma$ and an expansion
parameter $\alpha$, let $S^{*}$ be the largest $\alpha$-sparse
$(A,\sigma)$-overlapping cut where $|S^{*}|\le|V-S^{*}|$. We define
$\opt(G,\alpha,A,\sigma)=|S^{*}|$. If $S^{*}$ does not exist, then
$\opt(G,\alpha,A,\sigma)=0$. Now, we define LBS cut problem formally:
\begin{defn}
[Locally Balanced Sparse Cut]\label{def:most balanced sparse cut-1}
For any graph $G=(V,E)$, a set $A\subset V$, and parameters $c_{size},c_{exp},\sigma$
and $\alpha$, a cut $S$ where $|S|<|V-S|$ is a $(c_{size},c_{exp})$-approximate
locally balanced $\alpha$-sparse cut w.r.t. $(A,\sigma)$-overlapping
cuts if $\phi(S)<\alpha$ and $|S|\ge\opt(G,\alpha/c_{exp},A,\sigma)/c_{size}$.
We also write $S$ is a $(c_{size},c_{exp},\alpha,A,\sigma)$-LBS
cut.
\end{defn}
We note that $(c_{size},c_{exp},\alpha,A,\sigma)$-LBS cut $S$ may
not be $(A,\sigma)$-overlapping. The existence of $S$ just show
that for any $(A,\sigma)$-overlapping cut of size at least $c_{size}|S|$
must have expansion at least $\alpha/c_{exp}$. 
\begin{defn}
[Locally Balanced Sparse Cut Algorithm]\label{def:most balanced sparse cut alg-1}A
$(c_{size},c_{exp})$-approximate algorithm $\cA$ for locally balanced
sparse cut problem is given a graph $G=(V,E)$, a set $A\subset V$,
and an overlapping parameter $\sigma$ and an expansion parameter
$\alpha$ , and then $\cA$ either 
\begin{itemize}
\item finds a $(c_{size},c_{exp},\alpha,A,\sigma)$-LBS cut $S$, or
\item reports that there is no $(\alpha/c_{exp})$-sparse $(A,\sigma)$-overlapping
cut.
\end{itemize}

We also write that $\cA$ is a $(c_{size},c_{exp})$-approximate LBS
cut algorithm.

\end{defn}
The main result of this section is the following:
\begin{thm}
Given a graph $G=(V,E)$, a set $A\subset V$ where $|A|\le4|V-A|$,
an expansion parameter $\alpha$, and an overlapping parameter $\sigma\in[\frac{3|A|}{|V-A|},\frac{3}{4}]$,
there is a $(c_{size},c_{exp})$-approximate LBS cut algorithm with
running time $O((vol(A)/\sigma)^{1.5}\log^{2}(\frac{vol(A)}{\sigma}))$
where $c_{size}=3/\sigma$ and $c_{exp}=3/\sigma$.\label{thm:local MSO cut}
\end{thm}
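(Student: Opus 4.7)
The plan is to adapt the local cut improvement algorithm of Orecchia and Zhu~\cite{OrecchiaZ14} to the LBS-cut setting through a thin wrapper. Their routine, a flow-based local iteration, takes a seed set $A$ with an expansion/conductance target $\alpha$ and, in time $\tilde{O}(vol(A)^{1.5})$, either certifies via a flow routing that no nearby cut beats the target, or produces a cut whose expansion is within a small factor of the best cut that has a significant intersection with $A$. I will invoke this algorithm with the given $A$ and target expansion $\alpha$, but scale the working volume (and hence the admissible output size) up by a factor $\Theta(1/\sigma)$ so that the algorithm can discover cuts that overlap $A$ only by a $\sigma$ fraction.

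First I would formalise the reduction. Suppose there is an $(\alpha/c_{exp})$-sparse $(A,\sigma)$-overlapping cut $S^{*}$ with $|S^{*}|\le|V-S^{*}|$. By the overlapping hypothesis, $|S^{*}\cap A|\ge\sigma|S^{*}|$, so $A$ is a constant-density witness for $S^{*}$. The Orecchia--Zhu local flow routine, seeded on $A$ with working volume $O(vol(A)/\sigma)$, must then produce a cut $S$ with expansion $\phi(S)<\alpha$ whose size is at least $|S^{*}\cap A|/3\ge\sigma|S^{*}|/3$; equivalently $|S^{*}|/|S|\le 3/\sigma=c_{size}$. A parallel argument on the expansion side, in which the flow-rounding step loses at most an additive $\sigma$ in the denominator of the expansion ratio, yields the matching $c_{exp}=3/\sigma$. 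If instead no such $S$ is found within the allotted flow budget, the residual flow serves as a dual certificate that no $(A,\sigma)$-overlapping cut of expansion below $\alpha/c_{exp}$ exists, and we output the second alternative of the conclusion.

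Next I would verify the side conditions and the runtime. The hypotheses $|A|\le 4|V-A|$ and $\sigma\ge 3|A|/|V-A|$ guarantee that the working volume $O(vol(A)/\sigma)$ stays safely below $vol(V)/2$, which in turn forces any cut returned by the algorithm to satisfy $|S|\le|V-S|$ as required. The condition $\sigma\le 3/4$ prevents degenerate regimes in which $A$ would need to be almost entirely contained in $S^{*}$, making the overlap test meaningless. The claimed running time $O((vol(A)/\sigma)^{1.5}\log^{2}(vol(A)/\sigma))$ then follows directly from the $\tilde{O}(V^{1.5})$ cost of the Orecchia--Zhu local flow iteration with $V=O(vol(A)/\sigma)$, where the extra $\log^{2}$ factor absorbs the cost of the inner approximate max-flow subroutine.

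The main obstacle I anticipate is that the Orecchia--Zhu theorem is stated as a \emph{local cut improvement} guarantee (given a starting cut, improve it), not as an LBS-cut guarantee, and their approximation bound is phrased in terms of conductance rather than expansion. The hard step will be to open up their flow-rounding argument and extract a \emph{size} guarantee of the shape $|S|\ge|S^{*}\cap A|/3$, which is not stated explicitly in their paper, together with the correct tracking of constants so that both $c_{size}$ and $c_{exp}$ come out as $3/\sigma$ rather than some larger polynomial in $1/\sigma$. Once this is done, the conversion from conductance to expansion is routine (in our application the ambient graph is $3$-bounded degree), and the $3/\sigma$ approximation factors arise naturally from the $\sigma$-slack between $|S^{*}\cap A|$ and $|S^{*}|$.
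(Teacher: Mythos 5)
Your high-level plan is aligned with the paper's --- both route through the Orecchia--Zhu local machinery, and the $3/\sigma$ factors indeed come from the $\sigma$-slack between $|S^*\cap A|$ and $|S^*|$ --- but you have left the technical heart of the argument unproven. You yourself flag that "the hard step will be to open up their flow-rounding argument and extract a size guarantee," and that is precisely the step you do not carry out. The assertion that the Orecchia--Zhu routine "must then produce a cut $S$ with $\phi(S)<\alpha$ whose size is at least $|S^*\cap A|/3$" is stated as fact with no derivation, and the claim that "the flow-rounding step loses at most an additive $\sigma$ in the denominator of the expansion ratio" does not correspond to anything in the actual computation; as it stands these are guesses at the right answer, not arguments that produce it.

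What the paper actually does is define an explicit \emph{augmented graph} $G_A(\alpha,\sigma)$ --- a source $s$ with unit-capacity arcs into $A$, arcs from $V-A$ into a sink $t$ with capacity $\epsilon_\sigma=\frac{1}{3(1/\sigma-1)}$, and the original edges of $G$ given capacity $1/\alpha$ --- and then prove a standalone structural lemma from the max-flow min-cut inequality $|A|-c\le\frac{\delta(S)}{\alpha}+|A-S|+\epsilon_\sigma|S-A|$. Applying this inequality once to an arbitrary $(A,\sigma)$-overlapping cut shows that any such cut of size $\ge 3c/\sigma$ has expansion at least $\frac{\sigma}{3}\alpha$ (giving $c_{exp}$ and, via $\opt$, $c_{size}$); applying it with equality to the min cut $S'$ shows that if $c>0$ then $S'$ is $\alpha$-sparse with $|S'|,|V-S'|\ge c$. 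The \emph{only} thing imported from Orecchia--Zhu as a black box is their local Goldberg--Rao-type algorithm for computing the $s$-$t$ min cut of an augmented graph in $O((vol(A)/\sigma)^{1.5}\log^2(vol(A)/\sigma))$ time; the LBS-cut guarantee itself is not extracted from their conductance-improvement theorem at all. Until you write down the augmented-graph capacities and derive both halves of the structural lemma from the primal-dual inequality, you have a roadmap, not a proof.
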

This algorithm can be obtained quite easily by slightly modifying
and analyzing the algorithm by Orecchia and Zhu \cite{OrecchiaZ14}
for local cut improvement problem. Therefore, we defer the proof to
\ref{sec:proof local MOS cut}.

\subsection{The Local Decomposition Algorithm\label{sec:local decomp alg}}

Throughout this section, let $n$ and $p$ be given. Let $G_{b}=(V,E_{b})$
be a graph with $n$ nodes with maximum degree $\Delta$. Let $D\subset E$
be some set of edges. Let $G=(V,E)=(V,E_{b}-D)$ and $A$ be a set
of endpoints of edges in $D$. We call $G_{b}$ a \emph{before graph}.
Our algorithm is given $D$, a constant $\epsilon$, and an expansion
parameter $\alpha_{b}$ as inputs. We want the algorithm to output
a set of edges $E^{s}\subset E$ and all components, except the largest
one, of $G^{d}$ where $G^{d}=(V,E^{d})=(V,E-E^{s})$. Let $G^{s}=(V,E^{s})$.
Let $\bar{t}=\tilde{O}(\frac{\Delta^{5.5}|A|^{1.5+\epsilon}}{\alpha_{b}^{4+\epsilon}\epsilon}\gamma\log\frac{1}{p})$
be the parameter indicating the time limit.

We now define some more notations. Let $\cA_{cut}$ be the deterministic
algorithm for finding LBS cuts from \ref{thm:local MSO cut}. We set
the overlapping parameter $\sigma=\alpha_{b}/2\Delta$ for $\cA_{cut}$.
Hence the approximation ratios of $\cA_{cut}$ are $c_{size}=3/\sigma$
and $c_{exp}=3/\sigma$. Let $\cA_{decomp}$ be the randomized algorithm
for finding an expansion decomposition from \ref{thm:high-exp-decomp}.
We set $p'=p/n^{2}$ as the failure probability parameter for $\cA_{decomp}$.

The following notations was similarly defined as in \ref{sec:decomp recurse}.
Let $\bar{s}_{1},\dots,\bar{s}_{L}$ be such that $\bar{s}_{1}=4|A|/\alpha_{b}+1$,
$\bar{s}_{L}\le1$, and $\bar{s}_{\ell}=\bar{s}_{\ell-1}/(\bar{s}_{1})^{\epsilon}$
for $1<\ell<L$. Hence, $L\le1/\epsilon$. We denote $\alpha=(\alpha_{b}/6\Delta)^{1/\epsilon}$.
Let $\alpha_{1},\dots,\alpha_{L}$ be such that $\alpha_{L}=\alpha$
and $\alpha_{\ell}=\alpha_{\ell+1}c_{exp}$ for $\ell<L$. Hence,
$\alpha_{1}=\alpha c_{exp}^{L-1}$. 
\begin{fact}
$\alpha_{1}<\alpha_{b}/2$.\end{fact}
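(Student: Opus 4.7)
The plan is simply to substitute the definitions of $\alpha$, $c_{exp}$, and $\sigma$ into the expression $\alpha_1 = \alpha \cdot c_{exp}^{L-1}$ and verify the inequality by inspection. First I would record that, since the overlapping parameter used for $\cA_{cut}$ is $\sigma = \alpha_b/(2\Delta)$, \ref{thm:local MSO cut} gives $c_{exp} = 3/\sigma = 6\Delta/\alpha_b$. Next, by the definition $\alpha = (\alpha_b/6\Delta)^{1/\epsilon}$ and $\alpha_L = \alpha$, the recurrence $\alpha_\ell = \alpha_{\ell+1} c_{exp}$ immediately yields
\begin{equation*}
\alpha_1 \;=\; \alpha \cdot c_{exp}^{\,L-1} \;=\; \Bigl(\frac{\alpha_b}{6\Delta}\Bigr)^{1/\epsilon} \cdot \Bigl(\frac{6\Delta}{\alpha_b}\Bigr)^{L-1} \;=\; \Bigl(\frac{\alpha_b}{6\Delta}\Bigr)^{1/\epsilon - (L-1)}.
\end{equation*}

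The remaining step is to observe that the exponent is at least $1$. This uses the bound $L \le 1/\epsilon$ recorded just before the fact, which gives $1/\epsilon - (L-1) \ge 1$. Since the hypothesis $\alpha_b < 1/\gamma^{\omega(1)}$ in \ref{thm:local decomp} ensures in particular $\alpha_b < 1$ (and $\Delta \ge 1$), the base $\alpha_b/(6\Delta)$ lies strictly in $(0,1)$, so raising it to a power at least $1$ can only make it smaller. Therefore $\alpha_1 \le \alpha_b/(6\Delta) \le \alpha_b/6 < \alpha_b/2$.

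There is essentially no obstacle here: the fact is a pure arithmetic consequence of the choice of parameters, and the only point requiring any thought is making sure the exponent $1/\epsilon - (L-1)$ is at least $1$ rather than $0$ or negative, which is precisely what the stated bound $L \le 1/\epsilon$ delivers. In the writeup I would keep the argument to the two lines above so that the role of each parameter (in particular, why $\sigma$ is chosen to be $\alpha_b/(2\Delta)$ and why $\alpha$ carries the exponent $1/\epsilon$) is visible.
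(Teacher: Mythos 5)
Your proof is correct and takes essentially the same approach as the paper: substitute $c_{exp}=3/\sigma=6\Delta/\alpha_b$ and $\alpha=(\alpha_b/6\Delta)^{1/\epsilon}$ into $\alpha_1=\alpha\,c_{exp}^{L-1}$, then use $L\le 1/\epsilon$ (and $\alpha_b/6\Delta<1$) to bound $\alpha_1\le\alpha_b/6\Delta<\alpha_b/2$. The only cosmetic difference is that you collapse the two factors into a single power $(\alpha_b/6\Delta)^{1/\epsilon-(L-1)}$ before bounding the exponent, whereas the paper keeps them separate; the underlying calculation is identical.
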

\begin{proof}
We have 
\[
\alpha_{1}=\alpha c_{exp}^{L-1}=(\frac{\alpha_{b}}{6\Delta})^{1/\epsilon}(\frac{3}{\sigma})^{1/\epsilon-1}\le(\frac{\alpha_{b}}{6\Delta})^{1/\epsilon}(\frac{6\Delta}{\alpha_{b}})^{1/\epsilon-1}=\alpha_{b}/6\Delta<\alpha_{b}/2.
\]

\end{proof}
For any graphs $H=(V_{H},E_{H})$, $I=(V_{I},E_{I})$, and a number
$\ell$, the main procedure $\decomp(H,I,\ell)$ is defined as in
\ref{alg:local decomp}. For any $\alpha'$ and $B\subset V_{H}$,
recall that $\opt(H,\alpha')$ is the size of the largest $\alpha'$-sparse
cut $S$ in $H$ where $|S|\le|V_{H}-S|$, and $\opt(H,\alpha',B,\sigma)$
is the size of the largest $\alpha'$-sparse $(B,\sigma)$-overlapping
cut $S$ in $H$ where $|S|\le|V_{H}-S|$. By definition, $\opt(H,\alpha')\ge\opt(H,\alpha',B,\sigma)$.

The algorithm is simply to run $\decomp(G,G,1)$ with time limit $\bar{t}$.
If $\decomp(G,G,1)$ takes time more than $\bar{t}$, then we reports
that $\phi(G_{b})<\alpha_{b}$.

\begin{algorithm}
\caption{\label{alg:local decomp}$\protect\decomp(H,I,\ell)$ where $H=(V_{H},E_{H})$
and $I=(V_{I},E_{I})$}

\begin{enumerate}
\item Set $B_{H}=(A\cup A_{H})\cap V_{H}$ where $A_{H}$ is the set of
endpoints of edges in $\partial_{G}(V_{H})$
\item If $|V_{H}-B_{H}|<\frac{3}{\sigma}|B_{H}|$, then run $\cA_{decomp}(H,\alpha,p')$
which partitions $E_{H}$ into $E_{H}^{s}$ and $E_{H}^{d}$. Add
$E_{H}^{s}$ into $G^{s}$, and return each connected component of
a graph induced by $E_{H}^{d}$ as a components in $G^{d}$. 
\item If $\ell=L$, then return $H$ as a connected components in $G^{d}$.
\item If $H$ is a singleton or $\cA_{cut}(H,\alpha_{\ell},B_{H},\sigma)$
reports there is no $(\alpha_{\ell}/c_{exp})$-sparse $(B_{H},\sigma)$-overlapping
cut, then return $H$ as a connected components in $G^{d}$. 
\item Else, $\cA_{cut}(H,\alpha_{\ell},B_{H},\sigma)$ returns an $\alpha_{\ell}$-sparse
cut $S$ in $H$ where $\opt(H,\alpha_{\ell}/c_{exp},B_{H},\sigma)/c_{size}\le|S|\le|V_{H}|/2$.

\begin{enumerate}
\item If $|S|\ge\bar{s}_{\ell+1}/c_{size},$ then add the cut edges $\partial_{H}(S)$
into $G^{s}$ and recurse on $\decomp(H[S],H[S],1)$ and $\decomp(H[V_{H}\setminus S],I,\ell)$. 
\item Else, recurse on $\decomp(H,H,\ell+1)$. \end{enumerate}
\end{enumerate}
\end{algorithm}

\subsubsection{Validity\label{sec:Validity}}

We first show that the parameters for $\cA_{cut}$ are valid when
it is called. 
\begin{lem}
Whenever $\cA_{cut}(H,\alpha_{\ell},B_{H},\sigma)$ is called, we
have that $\sigma\ge3\frac{|B_{H}|}{|V_{H}-B_{H}|}$ satisfying the
requirement for $\cA_{cut}$ as stated in \ref{thm:local MSO cut}.\end{lem}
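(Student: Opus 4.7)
The plan is to read off the conclusion directly from the control flow of \ref{alg:local decomp}. Note that $\cA_{cut}(H,\alpha_\ell,B_H,\sigma)$ is invoked only in steps 4 and 5 of $\decomp(H,I,\ell)$. Step 2 has already been tested before this point; since we did not take that branch, the guard $|V_H-B_H|<\tfrac{3}{\sigma}|B_H|$ must have failed. Rearranging, this gives exactly
\[
|V_H-B_H|\ \ge\ \tfrac{3}{\sigma}|B_H|\qquad\Longleftrightarrow\qquad \sigma\ \ge\ \tfrac{3|B_H|}{|V_H-B_H|},
\]
which is the lower bound on $\sigma$ required by the hypothesis of \ref{thm:local MSO cut}.

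Next, I would verify the two remaining hypotheses of \ref{thm:local MSO cut}, namely $\sigma\le 3/4$ and $|B_H|\le 4|V_H-B_H|$. For the upper bound on $\sigma$, recall that $\sigma=\alpha_b/(2\Delta)$ and the input assumption of \ref{thm:local decomp} forces $\alpha_b<1/\gamma^{\omega(1)}$, so in particular $\sigma<1/2<3/4$. The bound on $|B_H|$ then comes for free: combining $\sigma\le 3/4$ with the inequality $|V_H-B_H|\ge\tfrac{3}{\sigma}|B_H|$ already derived yields $|V_H-B_H|\ge 4|B_H|$, which is stronger than the required $|B_H|\le 4|V_H-B_H|$.

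There is essentially no obstacle here; the lemma is a sanity check that the parameters fed to $\cA_{cut}$ live in its admissible regime, and the design of step 2 of the algorithm is precisely what makes this automatic. The only small thing to be careful about is to check that step 2 is indeed executed \emph{before} any call to $\cA_{cut}$ on every code path leading to step 4 or 5, which is immediate from the sequential listing of the steps.
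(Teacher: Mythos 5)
Your derivation of $\sigma\ge 3|B_H|/|V_H-B_H|$ from the failure of the guard in Step~2 of \ref{alg:local decomp} is exactly the paper's argument, so the core of the proof matches. The small difference is that the paper's proof stops there, whereas you also verify the remaining two hypotheses of \ref{thm:local MSO cut} ($\sigma\le 3/4$, from $\sigma=\alpha_b/(2\Delta)$ and the input constraint $\alpha_b<1/\gamma^{\omega(1)}$ so $\alpha_b<1$; and $|B_H|\le 4|V_H-B_H|$, which you correctly observe follows from the already-derived inequality together with $3/\sigma\ge 4$). Those extra checks are accurate and worth having, since the lemma does claim to establish "the requirement" of \ref{thm:local MSO cut}, which formally includes all three conditions.
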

\begin{proof}
Observe that $\cA_{cut}$ can be called only when the condition in
Step 2 of \ref{alg:local decomp} is false: $|V_{H}-B_{H}|\ge\frac{3}{\sigma}|B_{H}|$.
That is, $\sigma\ge3\frac{|B_{H}|}{|V_{H}-B_{H}|}$.
\end{proof}
Next, we show that if $\decomp(G,G,1)$ finishes with in time limit
$\bar{t}$, then it outputs 1) a set of edges $E^{s}\subset E$, 2)
all components, except the largest one, of $G^{d}$ where $G^{d}=(V,E^{d})=(V,E-E^{s})$.
Recall \ref{rem:know components}, by outputting a components $C$
of $G^{d}$, it means for each nodes $u\in V(C)$, $(u,C)$ is outputted
indicating that $u\in V(C)$. 

It is clear from \ref{alg:local decomp} that when $E^{s}$ is outputted,
we have $G^{d}=(V,E^{d})=(V,E-E^{s})$. It is also obvious how to
prevent \ref{alg:local decomp} from outputting the largest component
$C$ of $G^{d}$, which may take a lot of time. That is, whenever
\ref{alg:local decomp} return $H$ as a component of $G^{d}$ in
Step 2,3 or 4, we do nothing if $H$ is ``reached'' by recursing
from only the ``larger'' side of the cut. Otherwise, we list all
nodes $u\in V(H)$, and output $(u,H)$.

\subsubsection{Correctness}

In this section, we prove that if $\phi(G_{b})\ge\alpha_{b}$, then
the outputs of $\decomp(G,G,1)$ have the desired properties. For
readability, we first assume that $\cA_{decomp}$ works correctly
with certainty, and then we remove the assumption later. 

To bound $|E^{s}|$, we first need the following fact:
\begin{prop}
Suppose that $\phi(G_{b})\ge\alpha_{b}$. For any $\alpha'<\alpha_{b}/2$,
any $\alpha'$-sparse cut $S$ in $G$ must have size at most $|S|\le4|A|/\alpha_{b}$.\label{thm:sparse cut small}\end{prop}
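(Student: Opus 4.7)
The proof is a short, direct calculation. Let $S \subset V$ be an $\alpha'$-sparse cut in $G$ with $|S| \le |V \setminus S|$, so by the definition of $\alpha'$-sparsity $\delta_G(S) < \alpha' |S|$. Since $\phi(G_b) \ge \alpha_b$ by hypothesis and $|S| \le |V \setminus S|$, applying the expansion lower bound to the same cut in $G_b$ gives $\delta_{G_b}(S) \ge \alpha_b |S|$.

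Next I would bound the difference $\delta_{G_b}(S) - \delta_G(S)$. The only edges of $G_b$ absent from $G$ are those in $D$, so this difference equals the number of edges of $D$ crossing $S$, which is at most $|D|$. Combining with the previous two inequalities,
$$\alpha_b |S| \;\le\; \delta_{G_b}(S) \;=\; \delta_G(S) + |\partial_{G_b}(S) \cap D| \;<\; \alpha' |S| + |D|.$$
Using the hypothesis $\alpha' < \alpha_b/2$ gives $(\alpha_b/2)|S| < |D|$, i.e.\ $|S| < 2|D|/\alpha_b$.

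Finally I would relate $|D|$ to $|A|$ via the degree bound. Every edge of $D$ has both endpoints in $A$, and every vertex has degree at most $\Delta$ in $G_b$; a handshake count gives $2|D| \le \sum_{v \in A}\deg_{G_b}(v) \le \Delta |A|$, hence $|D| \le \Delta |A|/2$. In the intended regime ($\Delta \le 3$ after the reduction in \ref{lem:intro:reductions}) this is at most $2|A|$, and substituting back yields $|S| < 4|A|/\alpha_b$, as claimed.

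There is essentially no obstacle here. The only conceptual point is that deleting $|D|$ edges can decrease the cut size of any $S$ by at most $|D|$, so if $\phi(G_b)\ge\alpha_b$ but $S$ is still $\alpha'$-sparse with $\alpha' < \alpha_b/2$ in $G = G_b - D$, then $|D|$ must account for at least an $\alpha_b/2$ fraction of $|S|$; the degree bound then converts $|D|$ into $|A|$.
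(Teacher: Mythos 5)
Your proof is correct and is essentially the paper's argument, just phrased as a direct chain of inequalities rather than as a contradiction: the paper assumes $|S| > 4|A|/\alpha_b$, lower-bounds $\delta_G(S) \ge \delta_{G_b}(S) - |D| \ge \alpha_b|S| - 2|A| > \alpha_b|S|/2$, and concludes $\phi_G(S) > \alpha_b/2 > \alpha'$. You are a touch more careful than the source text, which writes $-2|A|$ without spelling out that $|D| \le 2|A|$ rests on the bounded degree; your handshake count $2|D| \le \sum_{v\in A}\deg_{G_b}(v) \le \Delta|A|$ makes that dependence explicit.
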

\begin{proof}
Suppose otherwise that $|S|>4|A|/\alpha_{b}$. Then, as $G=G_{b}-D$,
we have 
\[
\delta_{G}(S)\ge\delta_{G_{b}}(S)-|D|\ge\alpha_{b}|S|-2|A|>\alpha_{b}|S|-\alpha_{b}|S|/2=\alpha_{b}|S|/2,
\]
which means, $\phi_{G}(S)>\alpha_{b}/2>\alpha'$, a contradiction.\end{proof}
\begin{lem}
If $\phi(G_{b})\ge\alpha_{b}$, then $|E^{s}|\le4\Delta|A|/\alpha_{b}$.\label{thm:local decomp sparse }\end{lem}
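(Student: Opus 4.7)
My proof proceeds in three main pieces, which I sketch below.

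First, I would establish an \emph{overlapping lemma}: under the hypothesis $\phi(G_{b})\ge\alpha_{b}$, any cut $S\subseteq V$ with $|S|\le|V|/2$ and $\phi_{G}(S)<\alpha_{b}/2$ must satisfy $|A\cap S|\ge\sigma|S|$, where $\sigma=\alpha_{b}/(2\Delta)$. The argument is a handshaking count on $D$: since $|\partial_{G_{b}}(S)|\ge\alpha_{b}|S|$ while $|\partial_{G}(S)|<(\alpha_{b}/2)|S|$, more than $(\alpha_{b}/2)|S|$ edges of $D$ cross the cut; each such edge has at least one endpoint in $A\cap S$, and since $G_{b}$ has maximum degree $\Delta$, each vertex in $A\cap S$ is incident to at most $\Delta$ edges of $D$, so $|A\cap S|\ge(\alpha_{b}/(2\Delta))|S|$.

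Second, I would control the edges added per call. In Step 5(a), because $\alpha_{1}<\alpha_{b}/2$ has already been verified, every newly added cut satisfies $|\partial_{H}(S)|\le\alpha_{\ell}|S|<(\alpha_{b}/2)|S|$. In Step 2, the triggering condition $|V_{H}-B_{H}|<(3/\sigma)|B_{H}|$ forces $|V_{H}|<(1+3/\sigma)|B_{H}|=O(\Delta|B_{H}|/\alpha_{b})$; combined with the guarantee $|E_{H}^{s}|\le\alpha\gamma|V_{H}|$ from $\cA_{decomp}$ (\ref{thm:high-exp-decomp}), this yields $|E_{H}^{s}|=O(\alpha\gamma\Delta/\alpha_{b})\cdot|B_{H}|$, which is negligible thanks to the assumption $\alpha_{b}<1/\gamma^{\omega(1)}$ making $\alpha\gamma(\Delta/\alpha_{b})\ll 1$.

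Third, I would close the bound by induction on the recursion tree with the invariant $N(H,I,\ell)\le(4\Delta/\alpha_{b})|B_{H}|$, where $N$ counts edges added in the subtree. The base cases (Steps 2, 3, 4) are immediate from the per-call bound above. For the inductive Step 5(a) step, the key bookkeeping identity $|B_{H[S]}|+|B_{H[V_{H}\setminus S]}|\le|B_{H}|+2|\partial_{H}(S)|$—proved by a vertex-by-vertex case analysis showing that each extra contribution comes from a new endpoint of the cut $\partial_{H}(S)$—provides the slack; combined with the overlapping lemma applied to $S$ (so that $|\partial_{H}(S)|\le(\alpha_{b}/2)|S|\le\Delta|A\cap S|$), this absorbs the $\alpha_{\ell}|S|$ newly added edges into a constant fraction of $|A\cap V_{H}|\le|B_{H}|$, closing the induction. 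Evaluating at the root, where $B_{G}=A$, yields $|E^{s}|=N(G,G,1)\le4\Delta|A|/\alpha_{b}$.

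The main obstacle will be verifying that the LBS cut $S$ returned in Step 5(a) is itself $(A,\sigma)$-overlapping, since \ref{thm:local MSO cut} only guarantees $|S|$ approximates the largest overlapping cut, not that $S$ overlaps. I expect to resolve this by a case split: either $\phi_{G}(S)<\alpha_{b}/2$ (in which case the overlapping lemma applies directly), or $\phi_{G}(S)\ge\alpha_{b}/2$, which via the identity $\delta_{G}(S)=\delta_{H}(S)+|\text{edges }S\to V\setminus V_{H}|$ forces many edges already in $\partial_{G}(V_{H})\subseteq E^{s}$ to be incident to $S$; a refined potential that also tracks $|\partial_{G}(V_{H})|$ absorbs this case, preserving the same final bound.
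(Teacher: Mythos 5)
Your proposal takes a genuinely different route from the paper's: you propose an inductive potential argument on the recursion tree, with invariant $N(H,I,\ell) \le (4\Delta/\alpha_b)|B_H|$, whereas the paper uses a global laminar / maximal-set argument. Unfortunately, your inductive step does not close as written, and this is a real gap rather than a detail to fill in.

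To see the problem concretely: when Step~5(a) splits $H$ into $H[S]$ and $H[V_H\setminus S]$, your bookkeeping identity gives $|B_{H[S]}|+|B_{H[V_H\setminus S]}| \le |B_H| + 2|\partial_H(S)|$, and the split also \emph{adds} $|\partial_H(S)|$ new edges to $E^s$. So even if both children satisfy the invariant, you obtain
\[
N(H,I,\ell) \;\le\; |\partial_H(S)| + \tfrac{4\Delta}{\alpha_b}\bigl(|B_H| + 2|\partial_H(S)|\bigr) \;=\; \tfrac{4\Delta}{\alpha_b}|B_H| + \bigl(1 + \tfrac{8\Delta}{\alpha_b}\bigr)|\partial_H(S)|,
\]
with a strictly positive excess that does not vanish. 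Bounding $|\partial_H(S)|$ by a multiple of $|A\cap S|$ via your overlapping lemma does not rescue this, because $A\cap S$ is already counted inside $|B_H|$ on the right-hand side---it merely shuffles the excess rather than absorbing it. In fact any potential of the form $\Phi(H)=\kappa|B_H|$ is doomed: $|B_{H[S]}|+|B_{H[V_H\setminus S]}|$ can strictly exceed $|B_H|$, so the children's total potential can be \emph{larger} than the parent's even as you must charge a positive amount per split. Your closing remark about a ``refined potential also tracking $|\partial_G(V_H)|$'' points in the right direction (that quantity grows by exactly $2|\partial_H(S)|$ per split), but it would need a negative coefficient to cancel the growth, and then showing the potential stays nonnegative at the leaves is nontrivial and essentially reproduces the bound you are trying to establish. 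There is also a smaller issue you flag yourself: your overlapping lemma is stated for cuts of $G$, but $S$ is only a sparse cut of the subgraph $H$, and $\phi_G(S)\ge\phi_H(S)$, so it need not satisfy $\phi_G(S)<\alpha_b/2$.

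The paper sidesteps induction entirely. It observes that the cuts $S_1,\dots,S_t$ contributing $\partial_{H_i}(S_i)$ to $E^s$ form a laminar family (pairwise nested or disjoint, by the recursive structure), each with $\phi_{H_i}(S_i)<\alpha_b/2$ (for Step~5(a) because $\alpha_1<\alpha_b/2$; for Step~2 via \ref{thm:add only sparse cut} together with $\alpha_b < 1/\gamma^{\omega(1)}$). Restricting to the \emph{maximal} sets $S'_1,\dots,S'_{t'}$, which are pairwise disjoint, it establishes $\phi_G(\bigcup_i S'_i)<\alpha_b/2$ by a union-of-sparse-cuts calculation, then applies \ref{thm:sparse cut small} (which, like your overlapping lemma, exploits that deleting $D$ cannot destroy too much of $\partial_{G_b}$) to conclude $|\bigcup_i S'_i|\le 4|A|/\alpha_b$. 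Since every $S_i$ lies inside some maximal $S'_j$, every edge of $E^s$ is incident to $\bigcup_i S'_i$, so $|E^s|\le \mathrm{vol}(\bigcup_i S'_i)\le\Delta\cdot 4|A|/\alpha_b$. I would recommend you scrap the recursive potential and instead prove the laminar structure and the union-of-sparse-cuts bound directly.
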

\begin{proof}
Observe all the edges added into $E^{s}$ by \ref{alg:local decomp}
are the cut edges $\partial_{H}(S)$ for some subgraph $H$ of $G$
and a cut $S\subset V(H)$. Let us list all sets of edges $\partial_{H_{1}}(S_{1}),\dots,\partial_{H_{t}}(S_{t})$
that constitute $E^{s}$. 
\begin{claim}
For all $i$, $\phi_{H_{i}}(S_{i})<\alpha_{b}/2$.\end{claim}
\begin{proof}
The cut edges $\partial_{H_{i}}(S_{i})$ are added into $E^{s}$ by
\ref{alg:local decomp} either in Step 2 or Step 5.a. In the first
case, we know from \ref{thm:add only sparse cut} that $\phi_{H_{i}}(S_{i})<\alpha\gamma.$
Now, as $\alpha=O(\alpha_{b}/\Delta)^{1/\epsilon}$ and $\epsilon$
is less than $1$ by some constant, we have $\alpha\le\alpha_{b}^{1-\delta}$
for some constant $\delta>0$. Since $\alpha_{b}\le\frac{1}{\gamma^{\omega(1)}}$,
we have $\gamma\le\frac{1}{\alpha_{b}^{o(1)}}$. Therefore, $\phi_{H_{i}}(S_{i})<\alpha\gamma\le\alpha_{b}^{1-\delta+o(1)}<\alpha_{b}/2$.
In the second case, we know from Step 5 that $\phi_{H_{i}}(S_{i})<\alpha_{1}<\alpha_{b}/2$.
\end{proof}
Observe that for any $S_{i}$ and $S_{j}$ where $i\neq j$, either
$S_{i}\subset S_{j}$, $S_{j}\subset S_{i}$ or $S_{i}\cap S_{j}=\emptyset$.
We say that $S_{i}$ is \emph{maximal }if there is no $j\neq i$ where
$S_{j}\supset S_{i}$. Let $S'_{1},\dots,S'_{t'}$ be all the maximal
sets. By the maximality, we have that $S'_{1},\dots,S'_{t'}$ are
mutually disjoint, and $|E^{s}|\le vol(\bigcup_{i=1}^{t'}S'_{i})$.
Let $H'_{1},\dots,H'_{t'}$ be the corresponding subgraphs of $G$
where $\partial_{H'_{i'}}(S'_{i'})\subset E^{s}$. By choosing the
appropriate ordering, we have write $H'_{1}=G$, $H'_{2}=G[V-S'_{1}],\dots,H'_{t'}=G[V-\bigcup_{i=1}^{t'-1}S_{i}']$.
\begin{claim}
$\delta_{G}(\bigcup_{i=1}^{t'}S'_{i})\le\sum_{i=1}^{t'}\delta_{H'_{i}}(S'_{i})$\end{claim}
\begin{proof}
We will prove that $\partial_{G}(\bigcup_{i=1}^{t'}S'_{i})\subseteq\bigcup_{i=1}^{t'}\partial_{H'_{i}}(S'_{i})$.
Let $(u,v)\in\partial_{G}(\bigcup_{i=1}^{t'}S'_{i})$. Suppose that
$u\in S'_{j}$ for some $j$. Then $v\in V-\bigcup_{i=1}^{t'}S'_{i}\subseteq V(H'_{j})-S'_{j}$
because $V(H'_{j})=V-\bigcup_{i=1}^{j-1}S'_{i}$. Therefore, $(u,v)\in\partial_{H'_{j}}(S'_{j})\subset\bigcup_{i=1}^{t'}\partial_{H'_{i}}(S'_{i})$.
\end{proof}
By the two claims above, we have that $\phi_{G}(\bigcup_{i=1}^{t'}S'_{i})<\alpha_{b}/2$.
This is because 
\begin{eqnarray*}
\phi_{G}(\bigcup_{i=1}^{t'}S'_{i}) & = & \frac{\delta_{G}(\bigcup_{i=1}^{t'}S'_{i})}{|\bigcup_{i=1}^{t'}S'_{i}|}=\frac{\delta_{G}(\bigcup_{i=1}^{t'}S'_{i})}{\sum_{i=1}^{t'}|S'_{i}|}\le\frac{\sum_{i=1}^{t'}\delta_{H'_{i}}(S'_{i})}{\sum_{i=1}^{t'}|S'_{i}|}\\
 & \le & \max_{i\le t'}\frac{\delta_{H'_{i}}(S'_{i})}{|S'_{i}|}=\max_{i\le t'}\phi_{H'_{i}}(S'_{i})<\alpha_{b}/2.
\end{eqnarray*}
Therefore, by \ref{thm:sparse cut small} we have:
\begin{claim}
$|\bigcup_{i=1}^{t'}S'_{i}|\le4|A|/\alpha_{b}$. \label{claim:maximal small}
\end{claim}
So $|E^{s}|\le vol(\bigcup_{i=1}^{t'}S'_{i})\le\Delta|\bigcup_{i=1}^{t'}S'_{i}|\le4\Delta|A|/\alpha_{b}$.
\end{proof}
Next, we would like to prove an important invariant given that $\phi(G_{b})\ge\alpha_{b}$:
if $\decomp(H,I,\ell)$ is called, then $\opt(I,\alpha_{\ell})<\bar{s}_{\ell}$.
In order to prove this, we need two lemmas.
\begin{lem}
Suppose that $\phi(G_{b})\ge\alpha_{b}$. If $\decomp(H,I,1)$ is
called, then $\opt(I,\alpha_{1})<4|A|/\alpha_{b}+1=\bar{s}_{1}$.\label{lem:invaraint 1}\end{lem}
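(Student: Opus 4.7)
The plan is to split into cases according to how $\decomp(H,I,1)$ can be invoked. By inspection of \ref{alg:local decomp}, a call with $\ell=1$ arises either (i) as the initial call $\decomp(G,G,1)$, in which case $H=I=G$, or (ii) as a left recursion $\decomp(H_0[S],H_0[S],1)$ issued in Step 5.a of a parent invocation $\decomp(H_0,I_0,\ell_0)$. In both situations $I=H$ is an induced subgraph of $G$, and the goal reduces to showing $\opt(H,\alpha_1)\le 4|A|/\alpha_b$.

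For case (i) the argument invokes \ref{thm:sparse cut small} directly. The preceding fact gives $\alpha_1<\alpha_b/2$, hence every $\alpha_1$-sparse cut $T\subseteq V$ with $|T|\le|V|/2$ is in particular $(\alpha_b/2)$-sparse in $G$, and the proposition forces $|T|\le 4|A|/\alpha_b$. Therefore $\opt(G,\alpha_1)\le 4|A|/\alpha_b<\bar{s}_1$.

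For case (ii) the plan is to upper bound $\opt(H_0[S],\alpha_1)$ by upper bounding $|S|$ itself, because any cut inside $H_0[S]$ has at most $|V(H_0[S])|/2=|S|/2$ vertices. The cut $S$ was produced by $\cA_{cut}(H_0,\alpha_{\ell_0},B_{H_0},\sigma)$, so $\phi_{H_0}(S)<\alpha_{\ell_0}\le\alpha_1<\alpha_b/2$ and $|S|\le|V_{H_0}|/2\le|V|/2$. Lifting $S$ back to $G_b$ through the inequality
\[
\delta_{G_b}(S)\;\le\;\delta_{H_0}(S)+|\partial_G(V_{H_0})|+|D|,
\]
and combining it with the expander lower bound $\delta_{G_b}(S)\ge\alpha_b|S|$, yields $(\alpha_b/2)|S|<|\partial_G(V_{H_0})|+|D|$.

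The hard part will be controlling $|\partial_G(V_{H_0})|$ uniformly across all recursive invocations; the term $|D|$ is bounded by $2|A|$ from the degree bound on $G_b$, exactly as in the proof of \ref{thm:sparse cut small}. The plan is to supply a companion structural invariant, proved in tandem with the present lemma by induction on the depth of the recursion tree, asserting that $|B_H|=O(|A|)$ for every $H$ appearing in the recursion. Together with the $\Delta$-bound this gives $|\partial_G(V_{H_0})|\le\Delta|A_{H_0}|=O(\Delta|A|)$, which yields $|S|=O(|A|/\alpha_b)$ with constants consistent with $\bar{s}_1$. The structural reason the companion invariant holds is the principle emphasised at the start of this section: under $\phi(G_b)\ge\alpha_b$, every $(\alpha_b/2)$-sparse cut in $G=G_b-D$ must be ``near'' $A$, so the cuts $\partial_H(S)$ accumulated in $E^s$ along the recursion all attach to $A$, keeping $B_H$ small.
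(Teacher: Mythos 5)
Your case split between the root call and a left recursion is a sound starting point, and case (i) is correct. The trouble is case (ii), where the argument turns on the unproven companion invariant $|B_H|=O(|A|)$.

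That invariant is very likely false. The edges in $\partial_G(V_{H_0})$ are exactly those accumulated into $E^s$ along the recursion path from the root to $H_0$. The only control on them is \ref{thm:local decomp sparse }: $|E^s|\le 4\Delta|A|/\alpha_b$. Consequently $|A_{H_0}\cap V_{H_0}|$ can be as large as $\Theta(\Delta|A|/\alpha_b)$, not $O(|A|)$, and the honest form of your invariant is $|B_H|=O(\Delta|A|/\alpha_b)$. Feeding that into your lifting inequality $\tfrac{\alpha_b}{2}|S|<|\partial_G(V_{H_0})|+|D|$ yields only $|S|=O(\Delta^2|A|/\alpha_b^2)$, which overshoots the target $\bar s_1=4|A|/\alpha_b+1$ by a factor of order $\Delta/\alpha_b$. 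Even if the stronger invariant were true, the constants in $(\alpha_b/2)|S|<O(\Delta|A|)+2|A|$ produce $|S|$ strictly larger than $4|A|/\alpha_b$, so the lemma's stated threshold $\bar s_1$ would not be met. In short: bounding $|S|$ alone by lifting each cut individually to $G_b$ pays a price proportional to $|\partial_G(V_{H_0})|$, and that price is too large.

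The paper avoids this entirely by exploiting the laminar structure of the accumulated cuts. Taking $S'_1,\dots,S'_{t'}$ to be the maximal cuts that were ever sent to $E^s$ (as in \ref{thm:local decomp sparse }), any non-root invocation $\decomp(H,I,1)$ has $V_I$ contained in some $S'_j$, hence $|V_I|\le|\bigcup_i S'_i|\le 4|A|/\alpha_b$ by \ref{claim:maximal small} and the bound follows immediately. For the root case, any $\alpha_1$-sparse cut $S$ in $G$ is combined with $\bigcup_i S'_i$ and the averaging computation in \ref{thm:local decomp sparse } shows $\phi_G(S\cup\bigcup_i S'_i)<\alpha_b/2$, after which \ref{thm:sparse cut small} gives the same size bound. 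The crucial observation is that the averaging argument controls the expansion of the \emph{union} in $G$ without ever isolating a single boundary term $|\partial_G(V_{H_0})|$; that is precisely the quantity your argument needs to control separately but cannot. If you want to salvage a local-lifting approach, you would have to prove a uniform bound on $\partial_G(V_H)$ tighter than $|E^s|$, and there is no such bound in the paper or apparent from the recursion structure.
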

\begin{proof}
Let $S'_{1},\dots,S'_{t'}$ be all the maximal sets as defined in
the proof of \ref{thm:local decomp sparse }. If $\decomp(H,I,1)$
is called, we know that either $V_{I}\subset\bigcup_{i=1}^{t'}S'_{i}$
or $V_{I}\cap\bigcup_{i=1}^{t'}S'_{i}=\emptyset$. 

First, if $V_{I}\subset\bigcup_{i=1}^{t'}S'_{i}$, then $|V_{I}|\le|\bigcup_{i=1}^{t'}S'_{i}|\le4|A|/\alpha_{b}$
as shown in \ref{claim:maximal small}. So $\opt(I,\alpha_{1})<4|A|/\alpha_{b}+1=\bar{s}_{1}$.
Second, if $V_{I}\cap\bigcup_{i=1}^{t'}S'_{i}=\emptyset$, let $S$
be any $\alpha_{1}$-sparse cut in $I$ where $|S|\le|V_{I}|/2$.
We can similarly show, as in \ref{thm:local decomp sparse }, that
$\phi_{G}(S\cup\bigcup_{i=1}^{t'}S'_{i})<\alpha_{b}/2$ and again
have that $|S\cup\bigcup_{i=1}^{t'}S'_{i}|\le4|A|/\alpha_{b}$. So
$\opt(I,\alpha_{1})<4|A|/\alpha_{b}+1=\bar{s}_{1}$ again.\end{proof}
\begin{lem}
For any subgraph $H=(V_{H},E_{H})$ of $G$ induced by $V_{H}$ and
$\alpha'<\alpha_{b}/2$, let $A_{H}$ is the set of endpoints of edges
in $\partial_{G}(V_{H})$ and $B_{H}=(A\cup A_{H})\cap V_{H}$ . If
$\phi(G_{b})\ge\alpha_{b}$, then any $\alpha'$-sparse cut $S\subset V_{H}$
where $|S|\le|V_{H}-S|$ must be $(B_{H},\sigma)$-overlapping in
$H$. That is, if $\phi(G_{b})\ge\alpha_{b}$, then $\opt(H,\alpha')=\opt(H,\alpha',B_{H},\sigma)$.
\label{thm:sparse must overlap}\end{lem}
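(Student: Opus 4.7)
The plan is to compare the cut boundary of $S$ in $H$ with the cut boundary of $S$ in the before-graph $G_b$, and to use the assumption $\phi(G_b) \ge \alpha_b$ to force many endpoints of $S$ to lie in $B_H$. Fix any $\alpha'$-sparse cut $S \subset V_H$ with $|S| \le |V_H - S|$. Since $V_H \subseteq V$, we also have $|S| \le |V - S|$, so by $\phi(G_b) \ge \alpha_b$ we get the lower bound $\delta_{G_b}(S) \ge \alpha_b |S|$. I will combine this with an upper bound on $\delta_{G_b}(S)$ in terms of $\delta_H(S)$ plus a ``correction term'' captured by $B_H$.

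The key step is to account for the edges in $\delta_{G_b}(S) \setminus \delta_H(S)$. Every edge of $G_b$ incident to $S$ either stays as an edge of $H$ (contributing to $\delta_H(S)$), or is an edge of $G$ leaving $V_H$ (incident to $A_H$), or is a deleted edge of $D$ (incident to $A$). All the ``missing'' edges therefore have an $S$-endpoint lying in $(A \cup A_H) \cap V_H = B_H$. Since $G_b$ has maximum degree $\Delta$, the number of such edges is at most $\Delta \cdot |S \cap B_H|$. Hence
\begin{equation*}
\alpha_b |S| \;\le\; \delta_{G_b}(S) \;\le\; \delta_H(S) + \Delta |S \cap B_H| \;<\; \alpha' |S| + \Delta |S \cap B_H|.
\end{equation*}

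Rearranging gives $|S \cap B_H|/|S| > (\alpha_b - \alpha')/\Delta$. Using the hypothesis $\alpha' < \alpha_b/2$ yields $|S \cap B_H|/|S| > \alpha_b/(2\Delta) = \sigma$, so $S$ is $(B_H, \sigma)$-overlapping in $H$ as required. The equality $\opt(H,\alpha') = \opt(H,\alpha',B_H,\sigma)$ then follows, because the $\ge$ direction is automatic from the definition of $\opt$ (restricting to overlapping cuts) and the $\le$ direction is exactly what we just proved: every $\alpha'$-sparse cut is already $(B_H,\sigma)$-overlapping.

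The proof is essentially a direct calculation, so no step is truly hard; the main thing to be careful about is correctly counting the correction term (bounding edges of $D$ and $\partial_G(V_H)$ incident to $S$ by $\Delta$ times the number of $B_H$-endpoints in $S$, rather than double-counting), and verifying that $|S| \le |V - S|$ so that the $\alpha_b$ lower bound in $G_b$ is valid.
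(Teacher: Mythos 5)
Your proof is correct and uses essentially the same approach as the paper: both establish the key bound $\delta_{G_b}(S) \le \delta_H(S) + \Delta|S\cap B_H|$ by the same accounting of ``missing'' edges (those in $D$ or in $\partial_G(V_H)$), and then combine it with $\phi(G_b)\ge\alpha_b$ and the $\alpha'$-sparsity of $S$. The only difference is cosmetic: you argue directly that $|S\cap B_H|/|S| > \sigma$, while the paper argues by contradiction that a non-overlapping $S$ would force $\phi_H(S) \ge \alpha_b/2 > \alpha'$.
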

\begin{proof}
First, consider any cut edge $(u,v)\in\partial_{G_{b}}(S)$ in the
initial graph $G_{b}$ where $u\in S$. We claim that either $(u,v)\in\partial_{H}(S)$
or $u\in A\cup A_{H}$. Indeed, if $u\notin A\cup A_{H}$, i.e. $u$
is not incident to any edge in $D$ nor $\partial_{G}(V_{H})$, all
edges incident to $u$ is inside $H$, and hence $(u,v)\in\partial_{H}(S)$.
Since there are at most $\Delta|S\cap(A\cup A_{H})|$ edges incidents
to all nodes $u\in S\cap(A\cup A_{H})$. It follows that $\delta_{G_{b}}(S)\le\delta_{H}(S)+\Delta|S\cap(A\cup A_{H})|$.

Suppose that there is an $\alpha'$-sparse cut $S\subset V_{H}$ which
is not $(B_{H},\sigma)$-overlapping, i.e. $|S\cap(A\cup A_{H})|<\sigma|S|$.
Then we have that
\[
\delta_{H}(S)\ge\delta_{G_{b}}(S)-\Delta|S\cap(A\cup A_{H})|>\alpha_{b}|S|-\sigma\Delta|S|=\alpha_{b}|S|-\alpha_{b}|S|/2=\alpha_{b}|S|/2.
\]
That is, $\phi_{H}(S)\ge\alpha_{b}/2>\alpha'$, which is a contradiction.
\end{proof}
Now, we can prove \textbf{the main invariant}:
\begin{lem}
\label{prop:invariant local decomp}Suppose that $\phi(G_{b})\ge\alpha_{b}$.
If $\decomp(H,I,\ell)$ is called, then the invariant $\opt(I,\alpha_{\ell})<\bar{s}_{\ell}$
is satisfied. \end{lem}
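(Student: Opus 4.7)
The plan is to prove this by induction on the depth of the recursion tree, mirroring the style of \ref{prop:invariant} from the global algorithm but using the new tools developed in this section. There are essentially three kinds of recursive calls $\decomp(H, I, \ell)$ to handle, corresponding to the three ways such a call can be spawned.

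First, for any call with $\ell = 1$ --- which includes the initial call $\decomp(G, G, 1)$ and every call $\decomp(H[S], H[S], 1)$ spawned in Step 5.a --- the invariant $\opt(I, \alpha_1) < \bar{s}_1 = 4|A|/\alpha_b + 1$ is exactly the statement of \ref{lem:invaraint 1}, so there is nothing further to check. Second, for a call $\decomp(H[V_H \setminus S], I, \ell)$ spawned in Step 5.a, the parameters $I$ and $\ell$ are identical to those of the parent call, so the invariant is inherited verbatim from the inductive hypothesis.

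The main step is the third case: a call $\decomp(H, H, \ell+1)$ spawned in Step 5.b, where we must upgrade the invariant from level $\ell$ to level $\ell+1$, now with $I = H$. In Step 5, $\cA_{cut}(H, \alpha_\ell, B_H, \sigma)$ returned an $\alpha_\ell$-sparse cut $S$ with $|S| \ge \opt(H, \alpha_\ell/c_{exp}, B_H, \sigma)/c_{size}$, and the fact that we fell through to Step 5.b means $|S| < \bar{s}_{\ell+1}/c_{size}$. Rearranging yields
\[
\opt(H, \alpha_{\ell+1}, B_H, \sigma) = \opt(H, \alpha_\ell/c_{exp}, B_H, \sigma) \le c_{size}\,|S| < \bar{s}_{\ell+1}.
\]
The last step is to remove the overlapping qualifier. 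Since $\alpha_{\ell+1} \le \alpha_1 < \alpha_b/2$, \ref{thm:sparse must overlap} applies in $H$ and gives $\opt(H, \alpha_{\ell+1}) = \opt(H, \alpha_{\ell+1}, B_H, \sigma)$, completing the invariant $\opt(H, \alpha_{\ell+1}) < \bar{s}_{\ell+1}$.

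The main conceptual obstacle is precisely this last conversion: the LBS cut oracle only gives an approximation guarantee against $(B_H, \sigma)$-overlapping cuts, not against all sparse cuts, so on its own it yields a weaker invariant than we need. This is exactly what \ref{thm:sparse must overlap} was designed to close: the assumption $\phi(G_b) \ge \alpha_b$ forces every sufficiently sparse cut in any induced subgraph to concentrate on $A$ together with the boundary of the subgraph, which is precisely $B_H$. Once this equivalence between $\opt(H, \alpha_{\ell+1})$ and $\opt(H, \alpha_{\ell+1}, B_H, \sigma)$ is in hand, the telescoping of the thresholds $\bar{s}_\ell$ and $\alpha_\ell = c_{exp} \alpha_{\ell+1}$ is set up exactly so that the LBS guarantee propagates the invariant one level deeper.
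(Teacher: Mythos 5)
Your proof is correct and follows essentially the same argument as the paper: induction over the recursion tree with base case from \ref{lem:invaraint 1}, inheritance for right-edge calls, and the level-increase step handled by combining the LBS cut guarantee from Step 5 with the Step 5.b threshold and then invoking \ref{thm:sparse must overlap} to pass from $(B_H,\sigma)$-overlapping sparse cuts to arbitrary sparse cuts.
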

\begin{proof}
When $\ell=1$, $\opt(I,\alpha_{\ell})<\bar{s}_{1}$ by \ref{lem:invaraint 1}.
In particular, the invariant is satisfied when $\decomp(G,G,1)$ or
$\decomp(H[S],H[S],1)$ is called. The invariant $\decomp(H[V_{H}\setminus S],I,\ell)$
is the same as the one for $\decomp(H,I,\ell)$, and hence is satisfied
by induction. 

Finally, we claim that the invariant is satisfied when $\decomp(H,H,\ell+1)$
is called, i.e., $\opt(H,\alpha_{\ell+1})<\bar{s}_{\ell+1}$. By Step
5.a, $|S|<\bar{s}_{\ell+1}/c_{size}$. By Step 5, $\opt(H,\alpha_{\ell+1},B_{H},\sigma)/c_{size}\le|S|$
as $\alpha_{\ell+1}=\alpha_{\ell}/c_{exp}$. Since $H$ is induced
by $V_{H}$ and $\alpha_{\ell+1}\le\alpha_{1}<\alpha_{b}/2$ satisfying
the conditions in \ref{thm:sparse must overlap}, we have $\opt(H,\alpha_{\ell+1})=\opt(H,\alpha_{\ell+1},B_{H},\sigma)$.
Therefore, $\opt(H,\alpha_{\ell+1})\le c_{size}|S|<\bar{s}_{\ell+1}$
as desired. 
\end{proof}
Finally, we bound the expansion of the components of $G^{d}$.
\begin{lem}
\label{lem:expansion local decomp}Suppose that $\phi(G_{b})\ge\alpha_{b}$.
Each connected component $C$ in $G^{d}$ either is a singleton or,
otherwise, $\phi(C)\ge\alpha$.\end{lem}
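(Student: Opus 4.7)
The plan is to do a case analysis based on the three places in \ref{alg:local decomp} where a subgraph is returned as a component of $G^{d}$: Step 2 (the call to $\cA_{decomp}$), Step 3 (bottom level $\ell=L$), and Step 4 (the LBS cut algorithm reports failure). A component of $G^{d}$ must originate from one of these three lines, so establishing the claim for each case suffices.

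For Step 4, if $H$ is a singleton we are done; otherwise $\cA_{cut}(H,\alpha_{\ell},B_{H},\sigma)$ has certified that no $(\alpha_{\ell}/c_{exp})$-sparse $(B_{H},\sigma)$-overlapping cut exists in $H$. Since the subgraph $H$ encountered during recursion is induced by $V_{H}$ and $\alpha_{\ell}/c_{exp}=\alpha_{\ell+1}\le\alpha_{1}<\alpha_{b}/2$, \ref{thm:sparse must overlap} applies and yields $\opt(H,\alpha_{\ell+1})=\opt(H,\alpha_{\ell+1},B_{H},\sigma)=0$. Therefore $\phi(H)\ge\alpha_{\ell+1}\ge\alpha_{L}=\alpha$, as required. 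This case is the one that directly exploits the expansion of $G_{b}$.

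For Step 3, the key observation (and the part I expect to be the main subtlety) is that any call $\decomp(H,I,L)$ must satisfy $I=H$. Indeed, the only way to reach level $L$ from a strictly smaller level is through a down recursion $\decomp(H,H,\ell+1)$, since the right recursion $\decomp(H[V_{H}\setminus S],I,\ell)$ at level $L$ would require the parent call to pass Step 3, which it cannot. Hence when Step 3 fires we have $I=H$, and the main invariant (\ref{prop:invariant local decomp}) gives $\opt(H,\alpha_{L})<\bar{s}_{L}\le1$. By symmetry we only need to consider cuts $S$ with $|S|\le|V_{H}|/2\le|V_{H}\setminus S|$, so $\opt(H,\alpha)<1$ is equivalent to saying no such $\alpha$-sparse cut exists, i.e.\ $\phi(H)\ge\alpha_{L}=\alpha$.

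For Step 2, the call $\cA_{decomp}(H,\alpha,p')$ partitions $E_{H}$ into $E_{H}^{s}$ and $E_{H}^{d}$; under the standing assumption that $\cA_{decomp}$ works correctly, \ref{thm:high-exp-decomp} guarantees that each connected component of $(V_{H},E_{H}^{d})$ is either a singleton or has expansion at least $\alpha$. Combining the three cases finishes the proof.
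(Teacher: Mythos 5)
Your proposal is correct and follows the same three-case structure as the paper's proof (Step 2 via \ref{thm:high-exp-decomp}, Step 3 via the invariant of \ref{prop:invariant local decomp}, Step 4 via \ref{thm:sparse must overlap}). You are slightly more careful than the paper in the Step 3 case: the paper simply asserts that the level-$L$ call has the form $\decomp(H,H,L)$, whereas you explicitly justify why $I=H$ must hold there (level $L$ is reached only through a down edge, and Step 3 prevents any right edge at level $L$), which is a useful clarification but not a different argument.
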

\begin{proof}
$C$ can be returned as a connected component in $G^{d}$ in either
Step 2, 3 or 4 in \ref{alg:local decomp}. 

If $C$ is returned in Step 2, then $C$ is a connected component
in a graph induced by $E_{H}^{d}$ returned by $\cA_{decomp}(H,\alpha,p')$.
By \ref{thm:high-exp-decomp}, if $C$ is not a singleton, then $\phi(C)\ge\alpha$.

If $C$ is returned in Step 3, then we know $\decomp(H,H,B,L)$ was
called, and $C=H$. By the invariant, we have $\opt(H,\alpha_{L})<\bar{s}_{L}\le1$,
i.e. there is no $\alpha_{L}$-sparse cut in $H$. As $\alpha_{L}=\alpha$,
$\phi(H)\ge\alpha$.

If $C$ is returned in Step 4, then we know $C=H$. If $H$ is not
a singleton, then $\cA_{cut}(H,\alpha_{\ell},B,\sigma)$ reports that
there is no $(\alpha_{\ell}/c_{exp})$-sparse $(B_{H},\sigma)$-overlapping
cut in $H$ where $\ell<L$. By \ref{thm:sparse must overlap}, there
is actually no $(\alpha_{\ell}/c_{exp})$-sparse cut in $H$, i.e.
$\phi(H)\ge\alpha_{\ell}/c_{exp}\ge\alpha_{L}=\alpha$.
\end{proof}
We conclude the correctness of the algorithm from \ref{lem:expansion local decomp}
and \ref{thm:local decomp sparse } the following: 
\begin{cor}
Assume that $\cA_{decomp}$ is deterministic. Suppose that $\phi(G_{b})\ge\alpha_{b}$.
$\decomp(G,G,1)$ outputs $E^{s}$ such that $|E^{s}|\le4\Delta|A|/\alpha_{b}$,
and each connected component $C$ of $G^{d}$ is either a singleton
or has high expansion: $\phi(C)\ge\alpha$.\label{cor:local decomp correct}
\end{cor}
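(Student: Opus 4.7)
The plan is straightforward: the corollary is simply the conjunction of the two lemmas immediately preceding it, so I would prove it by assembling those two results under the stated assumptions. Specifically, Lemma \ref{thm:local decomp sparse } already gives the bound $|E^s| \leq 4\Delta|A|/\alpha_b$ whenever $\phi(G_b) \geq \alpha_b$, and Lemma \ref{lem:expansion local decomp} already gives that every connected component $C$ of $G^d$ is either a singleton or satisfies $\phi(C) \geq \alpha$ under the same hypothesis. So my proof would consist of invoking these two lemmas and observing that their conclusions, taken together, are exactly what the corollary asserts.

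The only additional ingredient is the deterministic assumption on $\cA_{decomp}$, and the role it plays is to make the invocations of those two lemmas truly unconditional. The proofs of Lemmas \ref{thm:local decomp sparse } and \ref{lem:expansion local decomp} (via the main invariant of Lemma \ref{prop:invariant local decomp} and the overlap lemma \ref{thm:sparse must overlap}) rely on $\cA_{decomp}$ actually returning a valid expansion decomposition whenever it is called in Step 2 of Algorithm \ref{alg:local decomp}. If $\cA_{decomp}$ is deterministic, this happens with certainty, and so both conclusions hold with certainty as well — there is no failure event to union-bound over.

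I do not anticipate any obstacle here, since all the technical content has already been carried out in the preceding proofs (bounding the expansion via the invariant $\opt(I,\alpha_\ell) < \bar s_\ell$, and bounding $|E^s|$ by extracting a disjoint family of maximal cuts $S'_1,\dots,S'_{t'}$, showing $\phi_G(\bigcup_i S'_i) < \alpha_b/2$, and then applying Fact \ref{thm:sparse cut small}). The deterministic-$\cA_{decomp}$ assumption will be removed in the next step of the paper by a straightforward union bound over the at most $n$ recursive invocations of $\cA_{decomp}$, each of which fails with probability at most $p' = p/n^2$; that extension is flagged in the next paragraph of the write-up but is not part of this corollary itself.
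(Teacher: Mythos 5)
Your proposal is correct and follows exactly the paper's approach: the corollary is simply the conjunction of \ref{thm:local decomp sparse } and \ref{lem:expansion local decomp}, both of which were proved under the running hypothesis that $\cA_{decomp}$ returns a valid expansion decomposition with certainty. Your observation about the role of the determinism assumption (making every invocation of $\cA_{decomp}$ in Step 2 of \ref{alg:local decomp} succeed unconditionally, so the conclusions of both lemmas hold without any failure event) is precisely the point, and the paper indeed removes this assumption afterward via a union bound over the recursive calls.
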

Now, it is left to analyze the running time.

\subsubsection{Running time}

In this section, we prove that if $\phi(G_{b})\ge\alpha_{b}$, then
$\decomp(G,G,1)$ takes at most $\bar{t}$ time. In other words, if
$\decomp(G,G,1)$ takes more that $\bar{t}$ time, then $\phi(G_{b})<\alpha_{b}$.

To analyze the running time, we need some more notation. We define
the recursion tree $\mathcal{T}$ of $\decomp(G,G,1)$ as follows:
\begin{itemize}
\item Each node of $\mathcal{T}$ represents the parameters of the procedure
$\decomp(H,I,\ell)$. 
\item $(G,G,1)$ is the root node. 
\item For each $(H,I,\ell)$, if $\decomp(H,I,\ell)$ returns a component
either in Step 2,3, or 4 \ref{alg:local decomp}, then $(H,I,\ell)$
is a leaf. 
\item If $\decomp(H,I,\ell)$ recurses on $\decomp(H[S],H[S],1)$ and $\decomp(H[V_{H}\setminus S],I,\ell)$,
then $(H[S],H[S],1)$ and $(H[V_{H}\setminus S],I,\ell)$ is a left
and right children of $(H,I,\ell)$ respectively, and the edges $((H,I,\ell),(H[S],H[S],1))$
and $((H,I,\ell),(H[V_{H}\setminus S],I,\ell))$ are \emph{left edge}
and \emph{right edge} respectively. 
\item If $\decomp(H,I,\ell)$ recurses on $\decomp(H,H,\ell+1)$, then $(H,H,\ell+1)$
is the only child of $(H,I,\ell)$, and the edge $((H,I,\ell),(H,H,\ell+1))$
is a \emph{down edge}. \end{itemize}
\begin{lem}
\label{lem:depth implies time local}Suppose that $\phi(G_{b})\ge\alpha_{b}$.
If the depth of the recursion tree $\mathcal{T}$ is $D$, then the
total running time of $\decomp(G,G,1)$ is $\tilde{O}(\frac{\Delta^{4.5}|A|^{1.5}}{\alpha_{b}^{3}}\gamma D\log\frac{1}{p})$
where $\gamma=n^{O(\sqrt{\log\log n/\log n}})$.\end{lem}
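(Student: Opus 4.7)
My plan is to (i) bound $|B_H|$ uniformly at every node of $\mathcal{T}$, (ii) bound the per-node work of Steps~2 and~4, and (iii) sum these costs over the nodes at a single depth using disjointness of the $V_H$'s together with a convexity inequality, then multiply by $D$.

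For (i), the hypothesis $\phi(G_b)\ge\alpha_b$ combined with Lemma~\ref{thm:local decomp sparse } gives $|E^{s}|\le 4\Delta|A|/\alpha_b$ at the end of the execution, and since $E^{s}$ only grows, the same bound holds at every intermediate moment. At any node $(H,I,\ell)$ of $\mathcal{T}$, every edge of $\partial_{G}(V_H)$ must have been placed into $E^{s}$ by some ancestor executing Step~5.a (the only step along the root-to-$v$ path that separates $V_H$ from $V\setminus V_H$; Step~2 is always a leaf), so $|A_H|\le 2|E^{s}|=O(\Delta|A|/\alpha_b)$ and consequently $|B_H|\le|A|+|A_H|=O(\Delta|A|/\alpha_b)$. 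For (ii), Theorem~\ref{thm:local MSO cut} gives a Step-4 cost of $\tilde O((vol(B_H)/\sigma)^{1.5})\le\tilde O((\Delta|B_H|/\sigma)^{1.5})$. Step~2 is entered only when $|V_H-B_H|<3|B_H|/\sigma$, forcing $|V_H|=O(|B_H|/\sigma)$ and $|E_H|=O(\Delta|B_H|/\sigma)$; Theorem~\ref{thm:high-exp-decomp} then yields cost $\tilde O(\Delta|B_H|/\sigma\cdot\gamma\log(1/p'))$.

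For (iii), the key observation is that at each depth $d$ the family $\{V_H:v\in\mathcal{T}_d\}$ is pairwise disjoint, since left/right recursions partition vertex sets while down edges leave $V_H$ unchanged; hence $\sum_v|A\cap V_H|\le|A|$, and because each edge of $E^{s}$ has only two endpoints contributing to at most two $A_H$'s per depth, $\sum_v|A_H|\le 2|E^{s}|$. Thus $\sum_v|B_H|=O(\Delta|A|/\alpha_b)$. I then apply the elementary inequality $\sum_i x_i^{1.5}\le(\sum_i x_i)^{1.5}$ (valid for $x_i\ge 0$ since $x_i^{1/2}\le(\sum_j x_j)^{1/2}$) with $x_i=\Delta|B_{H_i}|/\sigma$ to bound the total Step-4 cost at depth $d$ by $\tilde O((\Delta^{2}|A|/\alpha_b)^{1.5}/\sigma^{1.5})$, which after substituting $\sigma=\alpha_b/(2\Delta)$ becomes $\tilde O(\Delta^{4.5}|A|^{1.5}/\alpha_b^{3})$. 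The Step-2 contribution is linear in $|B_H|$ and hence bounded per depth by $\tilde O(\Delta^{3}|A|/\alpha_b^{2}\cdot\gamma\log(1/p'))$, which is subsumed. Summing over $D$ depths and using $p'=p/n^{2}$ so that $\log(1/p')=O(\log(n/p))$ yields the desired $\tilde O\bigl(\tfrac{\Delta^{4.5}|A|^{1.5}}{\alpha_{b}^{3}}\gamma D\log(1/p)\bigr)$.

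The main obstacle I expect is cleanly verifying the two structural claims used in step (iii): the pairwise disjointness of the $V_H$'s within a depth and the inclusion $\partial_{G}(V_H)\subseteq E^{s}$ throughout the execution. Both should follow by a short induction on the recursion tree that tracks how vertex sets and crossing edges evolve under left, right, and down edges (noting that LCAs of distinct same-depth nodes must split via a left/right pair, and that boundary edges newly created at a split are explicitly added to $E^{s}$ in Step~5.a). A secondary, more routine check is the arithmetic in (iii) ensuring that Step~2's contribution is indeed dominated by Step~4's given the chosen values of $\sigma$, $c_{size}$, and $c_{exp}$.
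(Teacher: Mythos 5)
Your proof is correct and takes essentially the same route as the paper: bound the per-node $\cA_{decomp}$/$\cA_{cut}$ cost in terms of $|B_H|$, use the depth-wise disjointness of the $V_H$'s (hence of the $B_H\subseteq A\cup A^s$) together with the superadditivity of $x\mapsto x^{1.5}$, and multiply by $D$. The only thing you omit is the cost of outputting the non-largest components of $G^d$ when a leaf is reached; the paper dispatches this in one line via \ref{claim:maximal small}, bounding it by $O(\Delta|A|/\alpha_b)$ in total, which is dominated by your per-depth bounds.
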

\begin{proof}
The running time on $\decomp(H,I,\ell)$, excluding the time spent
in the next recursion level, is contributed by either 1) the running
time of $\cA_{decomp}(H,\alpha,p')$ when $|V_{H}-B_{H}|<\frac{3}{\sigma}|B_{H}|$,
2) the running time of $\cA_{cut}(H,\alpha_{\ell},B_{H},\sigma)$,
and 3) the time for outputting $H$ as the component of $G^{d}$ when
$H$ is not the largest component. 

The total running time of the third case is bounded by $O(\Delta|\bigcup_{i=1}^{t'}S'_{i}|)=O(\Delta|A|/\alpha_{b})$
by \ref{claim:maximal small} where $\{S'_{i}\}_{i}$ are the \emph{maximal
}set defined in \ref{thm:local decomp sparse }. It is left to bound
the running time of the first and second cases.

In the first case, we have that $|E_{H}|\le2\Delta|V_{H}|=O(\frac{\Delta}{\sigma}|B_{H}|)$.
So, by \ref{thm:high-exp-decomp}, the running time of $\cA_{decomp}(H,\alpha,p')$
is $O(|E_{H}|\gamma\log\frac{1}{p'})=\tilde{O}(\frac{\Delta}{\sigma}|B_{H}|\gamma\log\frac{1}{p})$
because $p'=p/n^{2}$ and $\gamma=n^{O(\sqrt{\log\log n/\log n}})=n^{o(1)}$.
In the second case, the running time of $\cA_{cut}(H,\alpha_{\ell},B_{H},\sigma)$
is $\tilde{O}((\Delta|B_{H}|/\sigma)^{1.5})$ by \ref{thm:local MSO cut}.
In either case, the running time is at most $\tilde{O}((\Delta|B_{H}|/\sigma)^{1.5}\gamma\log\frac{1}{p})$. 

Let $\mathcal{T}_{d}$ be the set of all the nodes in $\mathcal{T}$
of depth \emph{exactly} $d$. Let $A^{s}$ be the endpoints of edges
in $E^{s}$. We can see that, for any two nodes $(H,I,\ell),(H',I',\ell')\in\mathcal{T}_{d}$,
$H$ and $H'$ are disjoint subgraphs of $G$, and so $B_{H}$ and
$B_{H'}$ are disjoint subset of $A\cup A^{s}$. So, the total running
time we spent on $\mathcal{T}_{d}$ is $\tilde{O}((\Delta|A\cup A^{s}|/\sigma)^{1.5}\gamma\log\frac{1}{p})$.
Hence, the total running time of $\mathcal{T}$ is $\tilde{O}((\Delta|A\cup A^{s}|/\sigma)^{1.5}\gamma D\log\frac{1}{p})$.
As $\sigma=\alpha_{b}/2\Delta$ and $|A^{s}|=O(\Delta|A|/\alpha_{b})$
by \ref{thm:local decomp sparse }, the running time is 
\[
\tilde{O}((\frac{\Delta(\Delta|A|/\alpha_{b})}{\alpha_{b}/2\Delta})^{1.5}\gamma D\log\frac{1}{p})=\tilde{O}(\frac{\Delta^{4.5}|A|^{1.5}}{\alpha_{b}^{3}}\gamma D\log\frac{1}{p}).
\]

\end{proof}
Now, we bound the depth $D$ of $\cT$. Recall $L\le1/\epsilon$,
$c_{size}=3/\sigma=6\Delta/\alpha_{b}$ and $\bar{s}_{1}=4|A|/\alpha_{b}+1$.
\begin{lem}
\label{lem:bound depth local}Suppose that $\phi(G_{b})\ge\alpha_{b}$.
Let $P$ be any path from leaf to root of $\mathcal{T}$. $P$ contains
at most $\log n$ left edges, $L\log n$ down edges, and $L\log n\times k$
right edges where $k=c_{size}\bar{s}_{1}^{\epsilon}$. That is, the
depth of $\cT$ is $D=L\log n\times c_{size}\bar{s}_{1}^{\epsilon}=\tilde{O}(\frac{\Delta|A|^{\epsilon}}{\alpha_{b}^{1+\epsilon}\epsilon})$.\end{lem}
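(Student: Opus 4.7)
The plan is to bound each of the three edge types on $P$ separately, closely mirroring the proof of \ref{lem:bound depth}. First, for left edges: a left edge goes from $(H, I, \ell)$ to $(H[S], H[S], 1)$ with $|S| \le |V_H|/2$ by Step~5 of \ref{alg:local decomp}, so the first-coordinate vertex count is at least halved. Starting from $|V| = n$, this bounds the number of left edges on any root-to-leaf path by $\log n$. Next, for down edges: each down edge increments $\ell$, while $\ell$ is reset to $1$ only via a left edge, and Step~3 turns any node with $\ell = L$ into a leaf. Consequently, between any two consecutive left edges on $P$ there are at most $L$ down edges, giving $L \log n$ down edges in total.

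The heart of the proof is the right-edge bound, which I would establish by contradiction, in direct analogy with the global case. Suppose $(H_1, I, \ell), \dots, (H_{k+1}, I, \ell)$ appear on $P$ joined by $k = c_{size}\bar{s}_1^{\epsilon}$ right edges, where the parent edge of $(H_1, I, \ell)$ is a left edge or down edge. In either case the child's second coordinate equals its first, so $I = H_1$. Each right edge at $(H_i, I, \ell)$ arises from Step~5.a after $\cA_{cut}$ returns an $\alpha_{\ell}$-sparse cut $S_i \subseteq V_{H_i}$ with $|S_i| \ge \bar{s}_{\ell+1}/c_{size}$, setting $H_{i+1} = H_i[V_{H_i} \setminus S_i]$. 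The $S_i$'s are mutually disjoint subsets of $V_{H_1}$, and the inclusion $\partial_{H_1}(\bigcup_i S_i) \subseteq \bigcup_i \partial_{H_i}(S_i)$ yields $\delta_{H_1}(\bigcup_i S_i) < \alpha_{\ell} \sum_i |S_i|$, whence $\phi_{H_1}(\bigcup_i S_i) < \alpha_{\ell}$. Moreover $|\bigcup_i S_i| \ge k \cdot \bar{s}_{\ell+1}/c_{size} = \bar{s}_1^{\epsilon} \cdot \bar{s}_{\ell+1} = \bar{s}_{\ell}$ by the recursive definition of the $\bar{s}$-sequence. Since $I = H_1$, this contradicts the invariant $\opt(I, \alpha_{\ell}) < \bar{s}_{\ell}$ of \ref{prop:invariant local decomp}, so at most $k$ right edges can appear between two non-right edges.

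Combining the three counts gives $D = O(\log n + L \log n + L k \log n) = O(L k \log n)$. Substituting $L \le 1/\epsilon$, $c_{size} = 6\Delta/\alpha_b$, and $\bar{s}_1 = 4|A|/\alpha_b + 1 = O(|A|/\alpha_b)$ yields the claimed bound $D = \tilde{O}\!\bigl(\frac{\Delta |A|^{\epsilon}}{\alpha_b^{1+\epsilon}\,\epsilon}\bigr)$. The main obstacle I anticipate is a subtlety in the right-edge contradiction: since $\opt(I, \alpha_{\ell})$ ranges only over cuts $S^*$ with $|S^*| \le |V_I \setminus S^*|$, one must verify that $\bigcup_i S_i$ sits on the smaller side of $V_{H_1}$. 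The bound $|S_i| \le |V_{H_i}|/2$ ensures $|V_{H_{i+1}}| \ge |V_{H_i}|/2$, so the trailing vertex set shrinks geometrically; if the running union ever exceeds $|V_{H_1}|/2$, one can instead take the shortest prefix $\bigcup_{i \le k'} S_i$ whose size lies in $[\bar{s}_{\ell}, |V_{H_1}|/2]$ and apply the same edge-boundary inclusion to that prefix. This boundary handling is the same issue implicit in \ref{lem:bound depth} and is absorbed into the constants without affecting the asymptotic depth bound.
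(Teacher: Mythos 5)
Your proof mirrors the paper's own argument step for step: the $\log n$ bound on left edges from $|S|\le|V_H|/2$, the $L\log n$ bound on down edges from the forced termination at level $L$, and the contradiction with the invariant of \ref{prop:invariant local decomp} to bound maximal runs of consecutive right edges by $k=c_{size}\bar{s}_1^{\epsilon}$, using the inclusion $\partial_{H_1}(\bigcup_i S_i)\subseteq\bigcup_i\partial_{H_i}(S_i)$ together with $|S_i|\ge\bar{s}_{\ell+1}/c_{size}$ to produce a $\bar{s}_{\ell}$-sized $\alpha_{\ell}$-sparse cut in $H_1=I$. The final substitution $L\le 1/\epsilon$, $c_{size}=6\Delta/\alpha_b$, $\bar{s}_1=O(|A|/\alpha_b)$ is also exactly what the paper does, so the main body of your proof is a faithful match.

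On the ``obstacle'' you flag at the end: you are correct that $\opt(H_1,\alpha_{\ell})$ quantifies only over cuts on the smaller side, and the paper's proof (like that of the analogous \ref{lem:bound depth}) silently treats $\bigcup_i S_i$ as if it were on the smaller side, so the subtlety is real and worth noticing. However, your proposed patch of taking the shortest prefix $U_{j'}=\bigcup_{i\le j'}S_i$ with $|U_{j'}|\in[\bar{s}_{\ell},|V_{H_1}|/2]$ is not airtight. If $j^*$ is the first index with $|U_{j^*}|\ge\bar{s}_{\ell}$, then $|U_{j^*-1}|<\bar{s}_{\ell}$ and the single increment $|S_{j^*}|$ can be as large as $|V_{H_{j^*}}|/2=(|V_{H_1}|-|U_{j^*-1}|)/2$, so $|U_{j^*}|$ can overshoot $|V_{H_1}|/2$ (it can reach roughly $(|V_{H_1}|+\bar{s}_{\ell})/2$) without any prefix ever landing in the target window. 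Closing this properly would need an extra step, e.g.\ a case split on whether $\bar{s}_{\ell}$ is a constant fraction of $|V_{H_1}|$ (in which case the invariant gives essentially no constraint and a separate bound on the run length is needed), or passing to the complement $V_{H_{j^*+1}}$ at the cost of a constant-factor loss in the sparsity threshold. Since the paper leaves this implicit as well and your main argument reproduces it exactly, I view this as a correct and attentive reading of the lemma, with the caveat that the boundary fix you sketch is not as automatic as you suggest.
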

\begin{proof}
Consider the left edge $((H,I,\ell),(H[S],H[S],1))$. Since $|S|\le|V_{H}|/2$,
$P$ contains $\log n$ left edges. Between any two left edges in
$P$ there are at most $L$ down edges because Step 3 in \ref{alg:local decomp}
always terminates the recursion when $\ell=L$. So $P$ contains at
most $L\log n$ down edges. To prove that there are at most $L\log n\times k$
right edges in $P$, it suffices to prove that there cannot be $\gamma$
right edges between any left edges or down edges in $P$.

Suppose that $(H_{1},I,\ell),\dots,(H_{\gamma},I,\ell)$ are nodes
in $P$ where, for each $i$, $((H_{i},I,\ell),(H_{i+1},I,\ell))$
is a right edge, and $(H_{1},I,\ell)$ is a deeper endpoint of left
edges or a down edges (hence $I=H_{1}$).

For each $i$, let $S_{i}$ be the cut such that $H_{i+1}=H_{i}[V_{H_{i}}\setminus S_{i}]$
and $\phi_{H_{i}}(S_{i})<\alpha_{\ell}$. Since $\{S_{i}\}_{i\le k}$
are mutually disjoint. We conclude $\phi_{H_{1}}(\bigcup_{i=1}^{k}S_{i})<\alpha_{\ell}$
using the same argument as in \ref{thm:local decomp sparse }. However,
we also have that $|S_{i}|\ge\bar{s}_{\ell+1}/\bar{c}_{size}$, for
all $i$, and hence $|\bigcup_{i=1}^{k}S_{i}|
\ge k\bar{s}_{\ell+1}/\bar{c}_{size}\ge\bar{s}_{1}^{\epsilon}\bar{s}_{\ell+1}=\bar{s}_{\ell}$. So $\bigcup_{i=1}^{k}S_{i}$ contradicts the invariant for $\decomp(H_{1},I,\ell)$,
where $I=H_{1}$, which says $\opt(H_{1},\alpha_{\ell})<\bar{s}_{\ell}$.
Note that the invariant must hold by \ref{prop:invariant local decomp}.\end{proof}
\begin{cor}
Suppose that $\phi(G_{b})\ge\alpha_{b}$. $\decomp(G,G,1)$ runs in
time $\bar{t}=\tilde{O}(\frac{\Delta^{5.5}|A|^{1.5+\epsilon}}{\alpha_{b}^{4+\epsilon}\epsilon}\gamma\log\frac{1}{p})$.\label{cor:local decomp time}\end{cor}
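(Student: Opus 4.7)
The plan is essentially to combine the two preceding lemmas: the per-level time bound in terms of the recursion depth, and the explicit bound on the recursion depth. Under the assumption $\phi(G_b) \ge \alpha_b$, both lemmas are applicable (the correctness hypotheses of both lemmas are exactly this inequality), so the argument reduces to a single multiplication of bounds.

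Concretely, I would first invoke \Cref{lem:depth implies time local}, which already collects together (a) the per-call cost of $\cA_{decomp}$ when the ``dense-boundary'' condition $|V_H - B_H| < (3/\sigma)|B_H|$ triggers in Step 2, (b) the per-call cost $\tilde O((\Delta|B_H|/\sigma)^{1.5})$ of $\cA_{cut}$ from \Cref{thm:local MSO cut}, and (c) the cost of listing nodes of non-largest components when a leaf of $\cT$ is reached. It sums these along any single depth-level of $\cT$ using the disjointness of the induced subgraphs $H$ at that level, then multiplies by the depth $D$. Plugging in $\sigma = \alpha_b/(2\Delta)$ and using the bound $|A^s| = O(\Delta|A|/\alpha_b)$ from \Cref{thm:local decomp sparse }, this gives total running time $\tilde O(\frac{\Delta^{4.5} |A|^{1.5}}{\alpha_b^{3}} \gamma D \log\tfrac{1}{p})$.

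Then I would invoke \Cref{lem:bound depth local}, which provides $D = L \log n \cdot c_{size} \bar{s}_1^{\epsilon}$ with $L \le 1/\epsilon$, $c_{size} = 6\Delta/\alpha_b$ and $\bar{s}_1 = 4|A|/\alpha_b + 1 = O(|A|/\alpha_b)$, so that
\[
D \;=\; \tilde O\!\left(\frac{\Delta \,|A|^{\epsilon}}{\alpha_b^{1+\epsilon} \,\epsilon}\right),
\]
where the $\tilde O$ absorbs the $\log n$ factor and the $\alpha_b^{-\epsilon}$ coming from $\bar{s}_1^{\epsilon}$ (note $\alpha_b < 1$ so this is harmless in the $\tilde O$).

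Substituting this $D$ into the running-time bound from \Cref{lem:depth implies time local} yields
\[
\tilde O\!\left(\frac{\Delta^{4.5} |A|^{1.5}}{\alpha_b^{3}} \,\gamma\, \cdot \frac{\Delta |A|^{\epsilon}}{\alpha_b^{1+\epsilon} \epsilon} \,\log\tfrac{1}{p}\right)
\;=\; \tilde O\!\left(\frac{\Delta^{5.5}|A|^{1.5+\epsilon}}{\alpha_b^{4+\epsilon}\,\epsilon}\,\gamma\,\log\tfrac{1}{p}\right) \;=\; \bar{t},
\]
which is the desired bound. There is no real obstacle here: the only things to be careful about are (i) confirming that the hypothesis $\phi(G_b) \ge \alpha_b$ is what enables both \Cref{lem:depth implies time local} (via the bound on $|A^s|$) and \Cref{lem:bound depth local} (via the invariant in \Cref{prop:invariant local decomp} used inside the depth argument), and (ii) verifying the exponent arithmetic on $\Delta$, $|A|$ and $\alpha_b$ when the two bounds are multiplied. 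Both are routine.
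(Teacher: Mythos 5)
Your proposal matches the paper's own proof: it is exactly the composition of \Cref{lem:depth implies time local} and \Cref{lem:bound depth local}, multiplying the per-depth running-time bound $\tilde O(\frac{\Delta^{4.5}|A|^{1.5}}{\alpha_b^{3}}\gamma D\log\frac{1}{p})$ by the depth bound $D=\tilde O(\frac{\Delta|A|^{\epsilon}}{\alpha_b^{1+\epsilon}\epsilon})$. The exponent bookkeeping you carry out is the same as in the paper, and no further ideas are needed.
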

\begin{proof}
We have $D=\tilde{O}(\frac{\Delta|A|^{\epsilon}}{\alpha_{b}^{1+\epsilon}\epsilon})$
by \ref{lem:bound depth local}. By \ref{lem:depth implies time local},
we have that $\decomp(G,G,1)$ runs in time 
\[
\tilde{O}(\frac{\Delta^{4.5}|A|^{1.5}}{\alpha_{b}^{3}}\gamma D\log\frac{1}{p})=\tilde{O}(\frac{\Delta^{5.5}|A|^{1.5+\epsilon}}{\alpha_{b}^{4+\epsilon}\epsilon}\gamma\log\frac{1}{p})
\]
. 
\end{proof}
Now, we can conclude the main theorem.
\begin{proof}
[Proof of \ref{thm:local decomp}] From \ref{sec:Validity}, we have
that in time $\bar{t}=\tilde{O}(\frac{\Delta^{5.5}|A|^{1.5}n^{\epsilon}}{\alpha_{b}^{4}\epsilon}\gamma)$
$\decomp(G,G,1)$ outputs $E^{s}\subset E$ and all components, except
the largest one, of $G^{d}$ where $G^{d}=(V,E^{d})=(V,E-E^{s})$.
Otherwise, by \ref{cor:local decomp time}, we report $\phi(G_{b})<\alpha_{b}$.

When we assume that $\cA_{decomp}$ is deterministic. If $\phi(G_{b})\ge\alpha_{b}$,
by \ref{cor:local decomp correct}, we have $|E^{s}|\le4\Delta|A|/\alpha_{b}$
and each connected component $C$ of $G^{d}$ is either a singleton
or has high expansion $\phi(C)\ge\alpha$. To remove the assumption
that $\cA_{decomp}$ is deterministic, and show that with probability
$1-p$, the outputted decomposition is correct. Observe that, as the
depth of the recursion tree $\mathcal{T}$ of $\decomp(G,G,1)$ is
$D=L\log n\cdot c_{size}\bar{s}_{1}^{\epsilon}$, then $\cA_{decomp}$
is called at most $nD$ times. So the probability that $\cA_{decomp}$
always works correctly is at least $1-p'nD\ge1-p$.\end{proof}

\section{Monte Carlo Dynamic $\protect\st$ with $O(n^{0.4+o(1)})$ Update
Time \label{sec:monte_carlo}}

In this section, the goal is to prove:
\begin{thm}
There is a randomized dynamic $\st$ algorithm for any graphs with
$n$ nodes and $m$ initial edges that works correctly against adaptive
adversaries\emph{ }with high probability and has preprocessing time
$O(m^{1+o(1)})$ and worst-case update time $O(n^{0.4+o(1)})$ where
the term $o(1)=O(\sqrt{\log\log n/\log n})$.
\end{thm}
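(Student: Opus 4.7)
The plan is to apply \ref{thm:classic reduc mc} with $\epsilon = 1/10$, which reduces the theorem to designing a dynamic $\st$ algorithm on $3$-bounded-degree graphs with $\Theta(n)$ edges, having preprocessing time $n^{1+o(1)}$, correctly maintaining a spanning forest against an adaptive adversary w.h.p.\ for the first $T = n^{1/2+\epsilon} = n^{0.6}$ updates, and achieving worst-case update time $n^{0.4 + o(1)}$. Throughout the core algorithm I fix the decomposition expansion parameter $\alpha = n^{-0.2}$ and the cut-recovery-tree parameter $s = \sqrt{T/\alpha} = n^{0.4}$.

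At preprocessing, I run the global expansion decomposition of \ref{thm:high-exp-decomp} with parameter $\alpha$ and a polynomially small failure probability, obtaining $G^{s}=(V,E^{s})$ with $|E^s| \le \alpha n^{1+o(1)} = n^{0.8+o(1)}$ and $G^{d}=(V,E^{d})$ whose non-singleton components $C$ satisfy $\phi(C) \ge \alpha$ w.h.p. I pick any spanning forest $F^d$ of $G^d$; on each non-singleton component $C$ I instantiate an augmented ET tree (\ref{thm:aug ET tree}) and a cut recovery tree (\ref{thm:cut recovery tree}) with parameter $s$; and on the auxiliary graph $G^{sf} = (V, E^s \cup F^d)$, whose non-tree-edge count stays bounded by $|E^s| + T = n^{0.8+o(1)}$ throughout the phase, I run a $2$-dimensional ET tree (\ref{thm:2dim ET tree}). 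The maintained spanning forest $F$ of $G$ is the spanning forest of $G^{sf}$ produced by the $2$-dim ET tree (keeping $F^d \subseteq F$), which spans $G$ because every cycle of $G = (V, E^s \cup E^d)$ projects into a cycle of $G^{sf}$ via $F^d$.

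Each insertion is routed into $E^s$ and handed to the $2$-dim ET tree. For a deletion, the only nontrivial case is when the deleted edge $e$ lies in $F^d$ inside some component $C$, splitting the $F^d$-tree containing $e$ into $T_1, T_2$ with $|V(T_1)| \le |V(T_2)|$. I then run the subroutine $\cB$: draw $K = \Theta\bigl((1/\alpha + \sqrt{T/\alpha})\polylog n\bigr) = \tilde{O}(n^{0.4})$ independent samples from the non-tree edges incident to $T_1$ using the ET tree's $\textsf{sample}$ operation, and if any sample lies in $\partial_C(V(T_1))$ reconnect with it; otherwise call $\cutset{s}(T_1)$ on the cut recovery tree and, if $|\partial_C(V(T_1))| \le s$, reconnect with any edge it returns. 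Whatever change $\cB$ makes to $F^d$ is forwarded to the $2$-dim ET tree on $G^{sf}$, which then produces the final update to $F$. Correctness comes from conditioning on the (w.h.p.) event that the initial decomposition gives $\phi(C) \ge \alpha$ for every non-singleton $C$, and splitting the moment of deletion into three cases: if $|V(T_1)| \ge 2T/\alpha$ then $|\partial_C(V(T_1))| \ge \alpha|V(T_1)| - T \ge \alpha|V(T_1)|/2$, an $\Omega(\alpha)$ fraction of the at most $3|V(T_1)|$ edges incident to $V(T_1)$, so the $K$ samples hit it w.h.p.; if $|V(T_1)| < 2T/\alpha$ and $|\partial_C(V(T_1))| \ge \sqrt{T/\alpha}$ the same sampling rate succeeds since there are $\le 6T/\alpha$ candidates; and if $|V(T_1)| < 2T/\alpha$ and $|\partial_C(V(T_1))| < \sqrt{T/\alpha} = s$ the cut recovery tree returns $\partial_C(V(T_1))$ exactly.

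The three per-update costs are $\tilde{O}(1/\alpha + \sqrt{T/\alpha}) = \tilde{O}(n^{0.4})$ for the sampling, $n^{0.4 + o(1)}$ for each cut-recovery-tree operation (plugging $s = n^{0.4}$ into \ref{thm:cut recovery tree}), and $n^{0.4 + o(1)}$ per operation of the $2$-dim ET tree (since $\sqrt{|E^s|} \cdot \polylog n = n^{0.4+o(1)}$ by \ref{thm:2dim ET tree}), which meets the required raw bound; preprocessing is dominated by the expansion decomposition plus the data-structure initializations, totalling $m^{1+o(1)}$. The main obstacle will be the \emph{adaptive}-adversary analysis: one must argue that although the adversary sees the current $F$ and every past sample, the random bits used in each invocation of $\cB$ are fresh and never reused later, so past reveals carry no information about future sampling outcomes; meanwhile, the one piece of randomness that persists across the whole phase -- the initial decomposition -- enters the analysis only through the event ``$\phi(C) \ge \alpha$ initially for every non-singleton $C$,'' an event fixed at preprocessing whose probability is $1 - 1/\poly(n)$ regardless of any later adversarial strategy. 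Combining this with a union bound over the $T$ updates of a phase and then invoking \ref{thm:classic reduc mc} yields the claimed infinite-length adaptive-adversary algorithm with preprocessing time $O(m^{1+o(1)})$ and worst-case update time $O(n^{0.4+o(1)})$.
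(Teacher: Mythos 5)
Your proposal is correct and follows essentially the same route as the paper: reduce via \ref{thm:classic reduc mc} to the $3$-bounded-degree, $T=n^{0.6}$-update setting, run the global expansion decomposition of \ref{thm:high-exp-decomp} with $\alpha = n^{-0.2}$, handle $G^d$ with sampling plus a cut recovery tree with parameter $\sqrt{T/\alpha}=n^{0.4}$ (exactly \ref{thm:mc on expander}), and glue the sparse part on with a $2$-dim ET tree on $F\cup E^s$ (exactly \ref{thm:reduc global decomp}), with the same three-case correctness analysis and the same fresh-randomness argument against adaptive adversaries. The only difference is expository: the paper modularizes the argument into \ref{thm:mc on expander} and \ref{thm:reduc global decomp} whereas you present it as one inline construction, and you could state a bit more explicitly how deletions of edges in $F\setminus F^d$ (i.e.\ tree edges from $E^s$) are handled — they also trigger a $\textsf{find\_edge}$ query on the $2$-dim ET tree, not just bookkeeping — but this is implicit in your setup and does not affect the bounds.
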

By \ref{thm:classic reduc mc}, we only need to prove the lemma below.
We note that the preprocessing time $t_{p}(n,p)$ below does \emph{not}
depend on the length of update sequence $T$.
\begin{lem}
\label{thm:dynST-mc}There is a randomized dynamic $\st$ algorithm
for $3$-bounded degree graphs with $n$ nodes and $\Theta(n)$ edges
that works correctly against adaptive adversaries\emph{ }with probability
$1-p$ for the first $T$ updates and has preprocessing time $t_{p}(n,p)=O(n^{1+o(1)}\log\frac{1}{p})$
and worst-case update time $t_{u}(n,p,T)$ such that $t_{u}(n,p,n^{0.6})=O(n^{0.4+o(1)}\log\frac{1}{p})$.
The term $o(1)=O(\sqrt{\log\log n/\log n})$ in both preprocessing
and update time.
\end{lem}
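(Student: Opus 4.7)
}

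The plan is to combine the global expansion decomposition of \ref{thm:high-exp-decomp} with the dynamic data structures from \ref{sec:basic}, following Algorithm~$\cB$ sketched in the overview. Set the expansion parameter $\alpha = n^{-0.2}$ and the cut-recovery parameter $s = \sqrt{\tau/\alpha}$; with $\tau = n^{0.6}$ this gives $s = n^{0.4}$ and $1/\alpha = n^{0.2}$, so the sample size $(1/\alpha + s)\polylog(n) = O(n^{0.4+o(1)})$ matches the target bound.

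During preprocessing I would run the global decomposition algorithm of \ref{thm:high-exp-decomp} on the input $G_0$ with parameter $\alpha$ (and failure probability $p$) to obtain a partition $E(G_0) = E^s \sqcup E^d$ with $|E^s| = O(\alpha n \gamma) = O(n^{0.8+o(1)})$ and with every non-singleton component $C$ of $G^d = (V,E^d)$ having $\phi(C) \ge \alpha$ w.h.p. For each such $C$ I would fix an initial spanning tree $F_C$ and instantiate an augmented ET tree (\ref{thm:aug ET tree}) together with a cut recovery tree (\ref{thm:cut recovery tree}) of parameter $s$ on $C$; call the union of all $F_C$'s the \emph{inner forest} $F^d$. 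Over the entire vertex set I would maintain a 2-dim ET tree (\ref{thm:2dim ET tree}) on the auxiliary graph $G' = (V, E^s \cup F^d)$, capped at $k = O(n^{0.8+o(1)})$ non-tree edges; the spanning forest of $G'$ output by this data structure is the spanning forest of $G$ that the algorithm returns. All three preprocessing costs fit within $O(n^{1+o(1)}\log(1/p))$.

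For each update I would proceed as follows. Insertions are simply routed into $E^s$: we call $\insnon$ on the 2-dim ET tree, and let it decide whether to promote the edge to a tree edge. Deletions inside $E^s$ are forwarded to the 2-dim ET tree directly. Deletions inside a component $C$ of $G^d$: if $e \notin F_C$ update only the data structures on $C$; otherwise split $T \ni e$ into $T_1, T_2$ with $|V(T_1)| \le |V(T_2)|$ via $\textsf{tree\_size}$, then (i) draw $(1/\alpha + s)\polylog(n)$ edges from $nt_{C,F_C}(V(T_1))$ using repeated $\textsf{sample}(T_1)$ calls, and (ii) if none crosses the cut, query $\cutset{s}(T_1)$ on the cut-recovery tree and scan the returned list for a crossing edge. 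Any replacement edge found is installed into $F_C$ via $\instree$, and the corresponding tree-edge swap is pushed into the 2-dim ET tree on $G'$ (the old edge becomes non-tree, the new one becomes tree); if no replacement edge is found we leave $T_1$ and $T_2$ split in $F^d$ and let the 2-dim ET tree search $G'$ for a genuinely global replacement through $E^s$. The three-case analysis of Algorithm~$\cB$ from the overview then shows that, conditioned on the initial decomposition being valid, an existing replacement edge in $C$ is located w.h.p.: Case~1 uses that $\phi(C) \ge \alpha$ and that only $\tau$ deletions can have occurred to give $|\partial(V(T_1))| \ge \alpha |V(T_1)|/2$, Case~2 uses that the cut is already large enough to be hit by sampling, and Case~3 is exactly what the cut-recovery tree guarantees when $|\partial(V(T_1))| \le s$.

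The per-update cost breaks into $\tilde O(1/\alpha + s) = \tilde O(n^{0.4})$ for the sampling, $\tilde O(s\beta\log^2 n) = n^{0.4+o(1)}$ for the cut-recovery operations, and $\tilde O(\sqrt{k}) = n^{0.4+o(1)}$ for the 2-dim ET tree, all matching the target $t_u(n,p,n^{0.6}) = O(n^{0.4+o(1)}\log(1/p))$. The argument goes through against an adaptive adversary essentially as explained after Algorithm~$\cB$: the sampling coins consumed during an update are used once and never examined again (so revealing them to the adversary does not help predict future choices), while the decomposition randomness only has to produce a \emph{valid} decomposition once, and its validity does not depend on the sequence of future updates; a union bound over the $T = n^{0.6}$ updates bundles both sources of error into the $p$ budget. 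The main technical obstacle I expect is handling the interaction between the local spanning forests $F_C$ and the global 2-dim ET tree on $G'$ without blowing up the update time: one has to be careful that tree-edge swaps inside a component are reflected as $O(1)$ atomic operations on $G'$ and that the non-tree-edge count of $G'$ stays bounded by $O(n^{0.8+o(1)})$ throughout the phase (so that $\sqrt{k}$ stays within budget), which requires routing insertions exclusively through $E^s$ and charging replacement edges correctly between $E^s$ and $F^d$.
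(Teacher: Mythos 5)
Your proposal is correct and follows essentially the same route as the paper. The paper packages this argument into two modular lemmas — a decremental $\st$ algorithm on graphs with initial expansion $\alpha$ via sampling plus a cut recovery tree of parameter $\sqrt{T/\alpha}$ (\ref{thm:mc on expander}), and a reduction to expanders using the global decomposition of \ref{thm:high-exp-decomp} together with a 2-dim ET tree on $E^s\cup F^d$ (\ref{thm:reduc global decomp}) — and then plugs in $\alpha=n^{-0.2}$; your write-up inlines those two steps into a single combined algorithm, with the same parameters ($\alpha = n^{-0.2}$, $s=\sqrt{\tau/\alpha}=n^{0.4}$, $k=|E^s|\le\alpha n\gamma + T = O(n^{0.8+o(1)})$), the same three-case correctness argument, the same insertion-handling convention (route all inserts into $E^s$ and check connectivity before promoting to a tree edge), and the same argument for robustness against adaptive adversaries, so there is no substantive difference.
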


Throughout this section, we let $\beta=\beta(n)=2^{O(\log\log n)^{3}}=n^{o(1)}$
and $\gamma=\gamma(n)=n^{O(\sqrt{\log\log n/\log n})}=n^{o(1)}$ be
the function depending on $n$ from \ref{thm:cut recovery tree} and
\ref{thm:high-exp-decomp} respectively.

\subsection{Dynamic $\protect\st$ on Expanders\label{sec:dynST on expander}}

The following lemma shows a decremental $\st$ algorithm that runs
on a graph which is initially an expander. It exploits a cut recovery
tree from \ref{sec:linear sketches}.
\begin{lem}
There is a decremental $\st$ algorithm $\cA$ for $3$-bounded degree
graphs with $n$ nodes and has initial expansion at least $\alpha$,
that works correctly against adaptive adversaries\emph{ }with probability
$1-p$ for the first $T$ updates and $\cA$ has preprocessing time
$\tilde{O}(n\beta)$ and worst-case update time $\tilde{O}((1/\alpha+\sqrt{T/\alpha})\beta\log\frac{1}{p})$.\label{thm:mc on expander}
\end{lem}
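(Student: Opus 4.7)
The plan is to implement the algorithm $\cB$ sketched in \ref{sec:overview} by combining an augmented ET tree $\cE$ (\ref{thm:aug ET tree}) with a cut recovery tree $\cD$ (\ref{thm:cut recovery tree}) of sparsity parameter $s = \lceil\sqrt{T/\alpha}\rceil$. At preprocessing I compute any spanning forest $F$ of the input in $O(n)$ time and initialize both structures on $(G,F)$; since $G$ has $\Theta(n)$ edges, the dominant cost is $\cD$'s preprocessing, which is $\tilde{O}(n\beta)$.

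On the deletion of an edge $e$ I first forward the update to $\cE$ and $\cD$ in $\tilde{O}(s\beta)$ time. If $e \notin F$ we are done. Otherwise $e$ splits its tree into $T_1,T_2$; using $\textsf{tree\_size}$ I pick the smaller side, say $T_1$, and look for a replacement in two stages. \emph{Stage 1}: draw $k = \Theta((1/\alpha + \sqrt{T/\alpha})\log(n/p))$ independent samples via $\textsf{sample}(T_1)$ on $\cE$ and test each for membership in $\partial_G(V(T_1))$. \emph{Stage 2}: if no sample crossed, call $\cutset{s}(T_1)$ on $\cD$ and validate each returned edge against $G$, discarding any spurious output that could arise when $\delta_G(V(T_1)) > s$. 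If a valid crossing edge is found, $\instree$ it into $F$; otherwise leave $T_1$ and $T_2$ as separate trees. One update thus costs $\tilde{O}((1/\alpha+\sqrt{T/\alpha})\beta\log\tfrac{1}{p})$ in the worst case.

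For correctness I will rely on the bound $\delta_G(V(T_1)) \ge \alpha|V(T_1)| - T$ (using the initial expansion and the fact that at most $T$ edges have been removed so far) and on the three-case analysis of \ref{sec:overview}: (i) if $|V(T_1)| \ge 2T/\alpha$ then $\delta_G(V(T_1)) \ge \alpha|V(T_1)|/2$, so each sample crosses with probability $\ge \alpha/6$ out of $\le 3|V(T_1)|$ incident endpoints, making $\Theta(\log(n/p)/\alpha)$ samples succeed with probability $\ge 1-p/T$; (ii) if $|V(T_1)| < 2T/\alpha$ and $\delta_G(V(T_1)) \ge s$, the per-sample crossing probability is $\Omega(\sqrt{\alpha/T})$ and $\Theta(\sqrt{T/\alpha}\log(n/p))$ samples suffice; (iii) if $|V(T_1)| < 2T/\alpha$ and $\delta_G(V(T_1)) < s$, \ref{thm:cut recovery tree} guarantees that $\cutset{s}$ returns the entire cut deterministically, so a replacement is found whenever one exists. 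A union bound over the $\le T$ updates gives overall failure probability $\le p$.

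The main obstacle will be arguing correctness against an \emph{adaptive} adversary, rather than only against an oblivious one. The crucial observation is that $\phi(C) \ge \alpha$ is a deterministic property of the fixed initial input, independent of later adversarial choices, and that the only source of randomness in $\cA$ is the coins consumed by each call to $\textsf{sample}$, since $\cD$ is deterministic by \ref{thm:cut recovery tree}. Because those coins are used exactly once and never stored, the transcript the adversary sees after step $i$ (the forest $F_i$ and the history of announced replacement edges) is independent of the coins that will drive step $i+1$. Hence each of the per-step bounds in cases (i)--(iii) holds conditional on whatever past transcript the adversary has produced, and the union bound transfers the oblivious-style analysis to the adaptive setting without loss.
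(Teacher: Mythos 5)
Your proposal matches the paper's proof essentially step for step: same two data structures (augmented ET tree plus cut recovery tree with parameter $\sqrt{T/\alpha}$), the same two-stage sample-then-query strategy, the same three-case analysis (large $|V(T_1)|$ via the initial-expansion bound, small $|V(T_1)|$ with large cut, small $|V(T_1)|$ with cut at most $s$), and the same fresh-randomness reasoning for adaptivity. The only cosmetic differences are that you explicitly validate the output of $\cutset{s}$ when $\delta_G(V(T_1)) > s$ (a harmless defensive step the paper leaves implicit) and you write $\log(n/p)$ where the paper writes $\log(T/p)$, which coincide in the $T=\poly(n)$ regime both implicitly assume.
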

Let $G=(V,E)$ be the initial graph. $\cA$ preprocesses $G$ as follows:
1) find any spanning tree $F$ of $G$. Then initialize both an augmented
ET tree $\cE$ and a cut recovery tree $\cD$ with parameter $k=\sqrt{T/\alpha}$
on $(G,F)$. By \ref{thm:aug ET tree} and \ref{thm:cut recovery tree},
this takes time $O(|E|+|E|\beta\log n)=\tilde{O}(n\beta)$. 

Next, $\cA$ handles edge deletion as follows. If the deleted edge
$e$ is not in tree edge, then just remove $e$ from $G$. Otherwise,
if $e=(u,v)$ is a tree edge of $F$ then we try to find a replacement
edge $e'$ for reconnecting the two separated components of $F$ as
described in \ref{alg:replace-mc} and, if $e'$ is returned, add
$e'$ into $F$. We set $\sigma=O((\frac{1}{\alpha}+\sqrt{\frac{T}{\alpha}})\log\frac{T}{p})$.
\begin{algorithm}[h]
\begin{enumerate}
\item Using $\cE$, set $T_{u}=\textsf{find\_tree}(u)$ and $T_{v}=\textsf{find\_tree}(v)$
where $\textsf{tree\_size}(T_{u})<\textsf{tree\_size}(T_{v})$.
\item Sample independently $\sigma$ non-tree edges in $G\setminus F$ incident
to $T_{u}$ using $\cE$. If there is a sampled edge $e'$ from $\partial_{G}(V(T_{u}))$,
then RETURN $e'$.
\item Query $\cutset k(T_{u})$ to $\cD$. If the returned set $\tilde{\partial}\neq\emptyset$,
then RETURN a random edge $e'\in\tilde{\partial}$.
\item RETURN ``no replacement edge''
\end{enumerate}
\caption{The algorithm for finding a replacement edge $e'$ if exists, after
deleting an edge $e=(u,v)\in F$. \label{alg:replace-mc}}
\end{algorithm}

\begin{claim}
\ref{alg:replace-mc} runs in time $\tilde{O}((1/\alpha+\sqrt{T/\alpha})\beta\log\frac{1}{p})$.\label{thm:run time prep mc expander}\end{claim}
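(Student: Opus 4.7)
The plan is to bound the running time of each step of \ref{alg:replace-mc} separately, using the preprocessed data structures $\cE$ and $\cD$ established during the preprocessing phase.

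First, I would analyze Step 1. Calls to $\textsf{find\_tree}$ on the augmented ET tree $\cE$ take $O(\log n)$ each by \ref{thm:aug ET tree}, and $\textsf{tree\_size}$ queries take $O(1)$, so Step 1 takes $O(\log n)$ total. Next, for Step 2, the augmented ET tree supports $\textsf{sample}$ in $O(\log n)$ time per call (by \ref{thm:aug ET tree}), and checking whether a sampled edge $e' = (x,y)$ lies in $\partial_G(V(T_u))$ can be done by two $\textsf{find\_tree}$ queries in $O(\log n)$ time. With $\sigma = O\!\bigl((\tfrac{1}{\alpha}+\sqrt{\tfrac{T}{\alpha}})\log\tfrac{T}{p}\bigr)$ samples, the total cost of Step 2 is
\[
O\!\left(\Bigl(\tfrac{1}{\alpha}+\sqrt{\tfrac{T}{\alpha}}\Bigr)\log\tfrac{T}{p}\cdot\log n\right).
\]
Since the algorithm runs only for $T\le n^{O(1)}$ updates, $\log T = O(\log n)$, so $\log\tfrac{T}{p} = O(\log n + \log\tfrac{1}{p})$ and this contribution is $\tilde O\!\bigl((\tfrac{1}{\alpha}+\sqrt{\tfrac{T}{\alpha}})\log\tfrac{1}{p}\bigr)$.

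For Step 3, I invoke the cut recovery tree $\cD$ with parameter $k=\sqrt{T/\alpha}$. By \ref{thm:cut recovery tree}, the query $\cutset{k}(T_u)$ costs $O(k\beta\log^2 n)=O(\sqrt{T/\alpha}\,\beta\log^2 n)$, and picking a random edge from the returned set $\tilde\partial$ adds only $O(k)$. Step 4 is $O(1)$.

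Summing the bounds from all four steps yields
\[
\tilde O\!\left(\Bigl(\tfrac{1}{\alpha}+\sqrt{\tfrac{T}{\alpha}}\Bigr)\log\tfrac{1}{p} + \sqrt{\tfrac{T}{\alpha}}\,\beta\right),
\]
and since $\beta=n^{o(1)}$ dominates every polylogarithmic factor, this simplifies to the claimed bound $\tilde O\!\bigl((\tfrac{1}{\alpha}+\sqrt{T/\alpha})\beta\log\tfrac{1}{p}\bigr)$. There is no real obstacle here: the only thing to be careful about is the conversion from $\log(T/p)$ to $\log(1/p)$, which is valid because $T$ is polynomially bounded and the resulting $\log n$ factors are absorbed into the $\beta = n^{o(1)}$ term and the $\tilde O(\cdot)$ notation.
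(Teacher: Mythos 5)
Your proof is correct and follows essentially the same step-by-step analysis as the paper: bound Step 1 by $O(\log n)$ via $\textsf{find\_tree}$, Step 2 by $O(\sigma\log n)$ via $\textsf{sample}$, and Step 3 by $O(k\beta\log^2 n)$ via $\cutset{k}$, then sum. Your extra care in converting $\log(T/p)$ to $\log(1/p)$ plus polylogarithmic factors (using $T\le n^{O(1)}$) is a minor tidying the paper glosses over, but the substance is identical.
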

\begin{proof}
Step 1 takes time at most $O(\log n)$. Step 2 takes time $O(\sigma\log n)=O((\frac{1}{\alpha}+\sqrt{\frac{T}{\alpha}})\log\frac{1}{p}\log n))$
by \ref{thm:aug ET tree}. Step 3 takes time $O(k\beta\log^{2}n)=O(\sqrt{\frac{T}{\alpha}}\beta\log^{2}n)$
by \ref{thm:cut recovery tree}. So the running time in \ref{alg:replace-mc}
is at most $\tilde{O}((1/\alpha+\sqrt{T/\alpha})\beta\log\frac{1}{p})$.\end{proof}
\begin{claim}
\ref{alg:replace-mc} returns a replacement edge, if exists, with
probability $1-p$.\label{thm:correct mc expander}
\begin{proof}
Let $S=V(T_{u})$ where $T_{u}$ is defined in \ref{alg:replace-mc}.
If a replacement edge exists, then $\delta_{G}(S)\ge1$. If $1\le\delta_{G}(S)\le\sqrt{T/\alpha}$,
then by the cut recovery tree $\cD$, all cut edges $\partial_{G}(S)$
is returned, and hence so is a replacement edge.

Otherwise, if $\delta_{G}(S)>\sqrt{T/\alpha}$, then we claim that
$\phi_{G}(S)\ge\min\{\frac{\alpha}{2},\frac{1}{2\sqrt{T/\alpha}}\}$.
There are two cases. For the first case, if $|S|\ge2T/\alpha$, then
we claim that $\phi_{G}(S)\ge\alpha/2$. Indeed, let $G_{0}$ be an
initial graph. We have that $\delta_{G}(S)\ge\delta_{G_{0}}(S)-T\ge\alpha|S|-T\ge\alpha|S|-\frac{\alpha}{2}|S|=\frac{\alpha}{2}|S|$
where the third inequality is because $\phi(G_{0})\ge\alpha$. For
the second case, we have $|S|<2T/\alpha$ while $\delta_{G}(S)>\sqrt{T/\alpha}$
and therefore $\phi_{G}(S)\ge\frac{\sqrt{T/\alpha}}{2T/\alpha}=\frac{1}{2\sqrt{T/\alpha}}$.
By the property of the augmented ET-tree $\cE$, the probability that
a sampled edge is from $\partial_{G}(S)$ is $\frac{\delta_{G}(S)}{nt\_vol_{(G,F)}(S)}\ge\frac{\delta_{G}(S)}{3|S|}\ge\frac{\phi_{G}(S)}{3}$.
Now, the probability that all sampled $\sigma$ edges are not from
$\partial_{G}(S)$ is $(1-\frac{\phi_{G}(S)}{3})^{\sigma}\le(1-\frac{\max\{\frac{\alpha}{2},\frac{1}{2\sqrt{T/\alpha}}\}}{3})^{\sigma}\le p$,
by choosing the constant in $\sigma=O((\frac{1}{\alpha}+\sqrt{\frac{T}{\alpha}})\log\frac{1}{p})$.
\end{proof}
\end{claim}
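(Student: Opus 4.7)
The plan is to case-split on the value of $\delta_G(S)$ where $S=V(T_u)$ is the vertex set of the smaller tree after the deletion, and argue that each of Steps~2 and~3 of \ref{alg:replace-mc} handles one regime. A replacement edge exists iff $\partial_G(S)\neq\emptyset$, so I must show that with probability at least $1-p$, either the sampling in Step~2 hits such an edge or the cut recovery query in Step~3 enumerates one.

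First, suppose $\delta_G(S)\le k=\sqrt{T/\alpha}$. Then, by the guarantee of the cut recovery tree $\cD$ from \ref{thm:cut recovery tree}, the query $\cutset{k}(T_u)$ in Step~3 deterministically returns the entire cut $\partial_G(S)$, so a replacement edge is produced with certainty. The interesting case is $\delta_G(S)>k$, where I would lower bound the expansion $\phi_G(S)=\delta_G(S)/|S|$ so that Step~2 succeeds with high probability. Since initially $\phi(G_0)\ge\alpha$ and at most $T$ deletions have occurred, $\delta_G(S)\ge\alpha|S|-T$. I would split further: if $|S|\ge 2T/\alpha$ then $\delta_G(S)\ge\alpha|S|/2$ and hence $\phi_G(S)\ge\alpha/2$; otherwise $|S|<2T/\alpha$, and combining with $\delta_G(S)>\sqrt{T/\alpha}$ forces $\phi_G(S)\ge 1/(2\sqrt{T/\alpha})$. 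In either subcase $\phi_G(S)\ge\min\bigl(\alpha/2,\,1/(2\sqrt{T/\alpha})\bigr)$.

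To convert this expansion bound into a hitting probability for Step~2, I would use that $G$ is $3$-bounded degree, so $nt\_vol_{G,F}(S)\le 3|S|$. By the sampling guarantee of the augmented ET tree (\ref{thm:aug ET tree}), each call to $\textsf{sample}(T_u)$ returns an edge of $\partial_G(S)$ with probability at least $\delta_G(S)/nt\_vol_{G,F}(S)\ge\phi_G(S)/3$. Hence the probability that all $\sigma$ independent samples miss $\partial_G(S)$ is at most $(1-\phi_G(S)/3)^\sigma$, and the choice $\sigma=\Theta\bigl((1/\alpha+\sqrt{T/\alpha})\log(1/p)\bigr)$ with a sufficiently large constant drives this below $p$.

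The main thing to be careful about is the calibration of the threshold $k=\sqrt{T/\alpha}$ used by the cut recovery tree: the small-side expansion bound $1/(2\sqrt{T/\alpha})$ in the sampling analysis is exactly matched to the $\sqrt{T/\alpha}$ term in $\sigma$, which in turn determines $k$, and if any of these were chosen independently the two mechanisms would fail to cover each other's regimes. A second subtlety worth flagging is that the claim must hold against an \emph{adaptive} adversary. I would stress that the random bits consumed by each $\textsf{sample}$ call are fresh (see \ref{def:aug ET tree}); so even though the adversary can see past outputs of the algorithm and may have influenced the current $G$ and $F$, the $\sigma$ samples drawn in this update are independent of all earlier randomness, and the $(1-\phi_G(S)/3)^\sigma$ bound survives unchanged.
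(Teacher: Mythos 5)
Your proof is correct and takes essentially the same approach as the paper: case split on $\delta_G(S)$ against the threshold $k=\sqrt{T/\alpha}$ (cut recovery tree handles small cuts deterministically, sampling handles large cuts), a further split on $|S|$ versus $2T/\alpha$ to lower bound $\phi_G(S)$ using the initial expansion, and then the $(1-\phi_G(S)/3)^\sigma$ hitting argument. Your closing remarks on the calibration of $k$ and $\sigma$ and on the freshness of the sampling randomness against an adaptive adversary are accurate and useful, and you sidestep a small typo in the paper (the bound should use $\min$, not $\max$, inside the final displayed inequality).
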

Now, we conclude the proof of \ref{thm:mc on expander}.
\begin{proof}
[Proof of \ref{thm:mc on expander}]The preprocessing time of $\cA$
is $\tilde{O}(n\beta)$. It is clear that for each update, the bottleneck
is the running time of \ref{alg:replace-mc} for finding replacement
edge which takes $\tilde{O}((1/\alpha+\sqrt{T/\alpha})\beta\log\frac{1}{p})$
by \ref{thm:run time prep mc expander}, and each update $\cA$ correctly
find a replacement edge with probability $1-p$ by \ref{thm:correct mc expander}.
\end{proof}

\subsection{Reduction to Expanders via Global Expansion Decomposition}

The following reduction shows that, given a dynamic $\st$ algorithm
that runs on expanders, we can obtain another algorithm that needs
not run on expanders. The main idea is to use the global expansion
decomposition algorithm from \ref{thm:high-exp-decomp} to partition
edges on the graph into a collection of expanders and some small set
of the remaining edges. We then separately maintain a spanning tree
in each expander using the algorithm from \ref{thm:mc on expander},
and ``combine'' them with the remaining edges using 2-dim ET tree
from \ref{thm:2dim ET tree}.

For any $n$, $\alpha$, $p$, and $T$, for technical reason, let
$t_{p}(n,\alpha,p,T)$ and $t_{u}(n,\alpha,p,T)$ be some functions
such that, for any $n_{1},n_{2}$ where $n_{1}+n_{2}\le n$, we have
$t_{p}(n_{1},\alpha,p,T)+t_{p}(n_{2},\alpha,p,T)\le t_{p}(n,\alpha,p,T)$
and $t_{u}(n_{1},\alpha,p,T)\le t_{u}(n,\alpha,p,T)$. 
\begin{lem}
For any $n$, $\alpha$, $p$, and $T$, suppose that there is a decremental
$\st$ algorithm $\cA$ for $3$-bounded degree graphs with $n$ nodes
and initial expansion at least $\alpha$, that works correctly against
adaptive adversaries\emph{ }with probability $1-p$ for the first
$T$ updates and $\cA$ has preprocessing time $t_{p}(n,\alpha,p,T)$
and worst-case update time $t_{u}(n,\alpha,p,T)$. Then there is a
dynamic $\st$ algorithm $\cB$ for $3$-bounded degree graphs with
$n$ nodes that works correctly against adaptive adversaries with
probability $1-O(nTp)$ for the first $T$ updates and $\cB$ has
preprocessing time $O(t_{p}(n,\alpha,p,T)+n\gamma\log\frac{1}{p})$
and worst-case update time $O(t_{u}(n,\alpha,p,T)+\sqrt{n\gamma\alpha+T})$.\label{thm:reduc global decomp}
\end{lem}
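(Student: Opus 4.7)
The plan is to apply the global expansion decomposition (\ref{thm:high-exp-decomp}) at preprocessing to split $G_0$ into a sparse piece $G^s$ and a disjoint union of $\alpha$-expanders $\{C_i\}$, run one instance of the expander algorithm $\cA$ on each $C_i$, and stitch the resulting partial spanning forests together with the sparse edges and any future insertions using a 2-dim ET tree (\ref{thm:2dim ET tree}).

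Concretely, at preprocessing I would run the decomposition on $G_0$ with parameters $\alpha$ and $p$, paying $O(n\gamma\log(1/p))$ since the graph is $3$-bounded-degree and $m=O(n)$, obtaining $E^s$ with $|E^s|\le\alpha n\gamma$ and components $\{C_i\}$ of $G^d$ that are either singletons or satisfy $\phi(C_i)\ge\alpha$. On each non-singleton $C_i$ I would run $\cA$'s preprocessing, for a total cost of at most $t_p(n,\alpha,p,T)$ by subadditivity of $t_p$; let $F_d=\bigcup_i F_{C_i}$. I would then pick any spanning forest $F_0\supseteq F_d$ of the auxiliary ``glue graph'' $H_0=(V,F_d\cup E^s)$ and initialize a 2-dim ET tree $\cE^{2d}$ on $(H_0,F_0)$ with non-tree edge bound $k=\alpha n\gamma+T$. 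The spanning forest that $\cB$ outputs is whatever $\cE^{2d}$ is currently maintaining on the time-varying graph $H=(V,F_d\cup E^s\cup E_{\text{ins}})$, where $E_{\text{ins}}$ collects all inserted edges.

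Two facts will drive the correctness argument. First, every edge in $G^d\setminus F_d$ has both endpoints in the same tree of $F_d$, so the connected components of $H$ always coincide with those of $G=G^s\cup G^d\cup E_{\text{ins}}$, making any spanning forest of $H$ a spanning forest of $G$. Second, I will maintain the invariant $F_d\subseteq F$ throughout, which forces every non-tree edge of $H$ to come from $E^s\cup E_{\text{ins}}$ and therefore bounds their number by $\alpha n\gamma+T$, matching the chosen value of $k$. Updates get routed in the natural way: an insertion or deletion of an edge in $E^s\cup E_{\text{ins}}$ is handled directly by $\cE^{2d}$ (a tree-edge deletion is followed by a $\textsf{find\_edge}$ call on the smaller split half to seek a replacement); a deletion of $e\in E^d$ is forwarded to the $\cA$ instance owning the component, and if $\cA$ returns a replacement $e'$ I mirror the change by $\deltree(e)$ then $\instree(e')$ in $\cE^{2d}$, while if $\cA$ returns no replacement I fall back to $\textsf{find\_edge}$ on $\cE^{2d}$ to look for a replacement in $E^s\cup E_{\text{ins}}$. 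Each update involves $O(1)$ calls to $\cA$ and to $\cE^{2d}$, giving worst-case update time $O(t_u(n,\alpha,p,T)+\sqrt{\alpha n\gamma+T})$ after absorbing polylog factors.

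The probability bound will come from two applications of the union bound: at any fixed step $i\le T$ the decomposition is valid with probability $\ge 1-p$, and conditionally on that event each of the $\le n$ instances of $\cA$ is correct at step $i$ with probability $\ge 1-p$, yielding total failure probability at most $(n+1)p=O(nTp)$ as claimed; $\cE^{2d}$ itself is deterministic. The part I expect to be most delicate is not the high-level structure but the bookkeeping that preserves $F_d\subseteq F$ across every update case---especially the legality of $\instree(e')$ when $\cA$ replaces a tree edge. The key observation making this go through is that $e'$ reconnects exactly the two pieces into which removing $e$ splits $F_d$, and because $F_d\subseteq F$ held before the update, those two pieces lie inside the two corresponding pieces of $F$ after $\deltree(e)$, so $e'$ is indeed a valid new tree edge and the invariant propagates.
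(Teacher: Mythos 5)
Your overall construction matches the paper's: global expansion decomposition at preprocessing, one instance of $\cA$ per component of $G^d$, a 2-dim ET tree on the ``glue graph'' $F\cup E^s$ holding non-tree edges only from $E^s$ and insertions (so $k\le\alpha n\gamma+T$), mirror $\cA$'s replacement edges into $F$, and otherwise fall back to $\textsf{find\_edge}$. The invariant $F^d\subseteq F$, the observation that any $E^d$-edge is internal to a component of $G^d$ and therefore never crosses the two pieces of $F$ when the deleted tree edge is outside $F^d$, and the resulting $O(t_u + \sqrt{n\gamma\alpha+T})$ update time are all as in the paper.

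The one place your reasoning has a gap is the probability bound. You argue that at any fixed step $i$ it suffices that the decomposition is correct and that each of the $\le n$ instances of $\cA$ is ``correct at step $i$'', then union-bound over $n+1$ events. But the correctness argument for $\cB$ at step $i$ is inductive: it needs every earlier tree-edge deletion to have been repaired, which requires each $\cA$-instance to have reported a replacement correctly at \emph{every} step $j\le i$ where it was queried, not merely that $F^d_i$ happens to span $G^d_i$ at time $i$. The hypothesis of the lemma is only that $\Pr_R[F^d_j\text{ spans }G^d_j]\ge 1-p$ \emph{for each $j$ separately}; it does not by itself say that ``spanning at step $i$'' implies ``no mistake at any $j<i$.'' That implication does hold if $\cA$ is stable (the instantiation in the paper is), but the lemma as stated makes no stability assumption, and for a non-stable $\cA$ a transient early failure can corrupt $F$ in a way a later self-correction of $F^d$ does not undo. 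The safe argument, and the one the paper gives, is to union-bound over all $\le n$ instances and all $\le T$ steps, giving failure probability $\le (nT+1)p=O(nTp)$. Your bookkeeping ``$(n+1)p=O(nTp)$'' lands on the right final bound, but the justification you gave does not support even the $(n+1)p$ bound without an extra stability hypothesis; you should either add that hypothesis and justify the implication, or union-bound over the $T$ time steps as well.
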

Before proving \ref{thm:reduc global decomp}, we show that it implies
\ref{thm:dynST-mc}.
\begin{proof}
[Proof of \ref{thm:dynST-mc}]We set $\alpha=1/n^{0.2}$. By \ref{thm:mc on expander}
and \ref{thm:reduc global decomp}, there is dynamic $\st$ algorithm
$\cB$ for $3$-bounded degree graphs with $n$ nodes that works correctly
against adaptive adversaries with probability $1-p$ for the first
$T$ updates and $\cB$ has preprocessing time $t_{p}(n,p)=\tilde{O}(n\beta+n\gamma\log\frac{1}{p})=O(n^{1+o(1)}\log\frac{1}{p})$
and worst-case update time $t_{u}(n,p,T)=\tilde{O}((1/\alpha+\sqrt{T/\alpha})\beta\log\frac{1}{p}+\sqrt{n\gamma\alpha+T})$.
We can verify that $t_{u}(n,p,n^{0.6})=O(n^{0.4+o(1)}\log\frac{1}{p})$
and the term $o(1)=O(\sqrt{\log\log n/\log n})$ in both preprocessing
and update time.
\end{proof}
Now we prove \ref{thm:reduc global decomp}. Given the algorithm $\cA$
that runs on a graph with initial expansion at least $\alpha$ by
the assumption, for convenience, let $\cA'$ be the algorithms that
runs on a graph whose all connected components has initial expansion
at least $\alpha$. $\cA'$ works by simply running $\cA$ on each
component. 

The target algorithm $\cB$ preprocesses a graph $G=(V,E)$ as described
in \ref{alg:preprocess-mc}. The idea is to partition edges $E$ into
$E^{d}$ and $E^{s}$ using the global expansion decomposition algorithm
from \ref{thm:high-exp-decomp} with success probability $1-p$. By
adding the failure probability of $\cB$ by $p$ at the end of analysis,
we now assume that the decomposition succeeds. Then let $F^{d}$ be
a spanning forest of $G^{d}=(V,E^{d})$ maintained by $\cA'$. The
key invariant to maintain a spanning forest $F$ of $G$ such that
$F^{d}\subseteq F$.

\begin{algorithm}[h]
\begin{enumerate}
\item Run the expansion decomposition on $G$ with expansion parameter $\alpha$
and failure probability $p$ and obtain $G^{d}=(V,E^{d})$ and $G^{s}=(V,E^{s})$. 
\item Preprocess $G^{d}$ using $\cA'$ and obtain a spanning forest $F^{d}$.
\item Construct a spanning forest $F$ of $G$ where $F\supseteq F^{d}$.
\item Preprocess $(F\cup E^{s},F)$ using a 2-dim ET tree $\cD$. 
\end{enumerate}
\caption{The algorithm for preprocessing the initial graph $G=(V,E)$\label{alg:preprocess-mc}}
\end{algorithm}

\begin{claim}
The preprocessing time of $\cB$ is $O(t_{p}(n,\alpha,p,T)+n\gamma\log\frac{1}{p})$.\label{lem:prep time mc decomp}\end{claim}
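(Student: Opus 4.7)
The plan is to bound the cost of each of the four preprocessing steps in \ref{alg:preprocess-mc} separately and then sum them up.

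First, Step 1 invokes the global expansion decomposition on $G$ with parameter $\alpha$ and failure probability $p$. By \ref{thm:high-exp-decomp}, this takes $O(m\gamma\log\frac{1}{p})$ time, where $m$ is the number of edges of $G$. Since $G$ is $3$-bounded degree with $n$ nodes, $m=O(n)$, so this step costs $O(n\gamma\log\frac{1}{p})$.

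For Step 2, $\cA'$ runs $\cA$ independently on each connected component of $G^d$. Let $C_1,\dots,C_r$ be these components, with $|V(C_i)|=n_i$ and $\sum_i n_i\le n$. Each $C_i$ has initial expansion at least $\alpha$ (when the decomposition succeeds) and is $3$-bounded degree, so $\cA$ applies and the $i$-th call costs $t_p(n_i,\alpha,p,T)$. By the subadditivity assumption on $t_p$ stated right before \ref{thm:reduc global decomp}, $\sum_i t_p(n_i,\alpha,p,T)\le t_p(n,\alpha,p,T)$.

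Step 3 just builds a spanning forest $F$ of $G$ that contains the already-constructed $F^d$; this can be done in $O(n+m)=O(n)$ time by, e.g., running a BFS/DFS on $G$ starting from $F^d$ as an initial forest and adding one edge across each still-unmerged component of $G^s$. Step 4 initializes a 2-dim ET tree on $(F\cup E^s,F)$. By \ref{thm:2dim ET tree}, the preprocessing cost is $O(n+k)$ where $k$ is an upper bound on the number of non-tree edges, which is $|E^s|\le\alpha n\gamma$ by \ref{thm:high-exp-decomp}. Thus Step 4 costs $O(n+\alpha n\gamma)=O(n\gamma)$ (since we may assume $\alpha\le 1$).

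Summing the four bounds yields a total preprocessing time of $O(t_p(n,\alpha,p,T)+n\gamma\log\frac{1}{p})$, as claimed. I do not anticipate any real obstacle here: the only subtle point is making sure the subadditivity hypothesis on $t_p$ is used in Step 2 so that we do not accumulate an extra factor from summing over components, and that the parameter $k$ passed to the 2-dim ET tree in Step 4 is chosen as the bound $\alpha n\gamma$ on $|E^s|$ guaranteed by the decomposition (which must be maintained as an invariant throughout later updates as well, though that is an update-time concern and not part of this claim).
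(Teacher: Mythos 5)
Your proposal is correct and follows essentially the same step-by-step accounting as the paper's own proof: Step~1 via \ref{thm:high-exp-decomp} and $m=O(n)$, Step~2 via the subadditivity of $t_p$ over the components of $G^d$, Step~3 trivially in $O(n)$, and Step~4 via \ref{thm:2dim ET tree} with $k$ bounded by $|E^s|$. The only cosmetic difference is in Step~4, where the paper writes $O(n+|E^s|)=O(n)$ while you bound it by $O(n\gamma)$; both are dominated by the $n\gamma\log\frac{1}{p}$ term in the claimed bound, so the conclusion is unaffected.
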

\begin{proof}
By \ref{thm:high-exp-decomp}, Step 1 takes $O(n\gamma\log\frac{1}{p})$.
Let $n_{1},\dots,n_{k}$ be the size of all connected components in
$G^{d}$. We have $\sum_{i=1}^{k}n_{i}\le n$, so Step 2 takes $\sum_{i=1}^{k}t_{p}(n_{i},\alpha,p,T)\le t_{p}(n,\alpha,p,T)$.
Step 3 can be done easily in $O(n)$. Step 4 takes in $O(n+|E^{s}|)=O(n)$
time by \ref{thm:2dim ET tree}.\end{proof}
\begin{rem}
\label{rem:cut sketch update-1-2}In the following description of
how to handle the updates, we maintain the following invariants: 1)
the underlying graph and forest of $\cD$  is $(F\cup E^{s},F)$ and
2) $\cA'$ maintains a dynamic forest $F^{d}$ on the underlying graph
$G^{d}$. Therefore, so whenever $G^{d}$, $E^{s}$, $F^{d}$ or $F$
are changed, we feed the update to $\cA'$ and $\cD$ accordingly.
\end{rem}
To insert $e=(u,v)$, we always insert $e$ into $E^{s}$, and if
$u$ and $v$ are not connected in $F$ (this can be checked using
the operation $\textsf{find\_tree}$ of the 2-dim ET tree $\cD$),
then we also add $e$ into $F$. 

To delete $e=(u,v)$, if $e$ was a tree-edge, then we find a replacement
edge $e'$ for reconnecting the two separated components of $F$ as
described in \ref{alg:replace-reduc-expander} and, if $e'$ is returned,
add $e'$ into $F$.

\begin{algorithm}[h]
\begin{enumerate}
\item If $e\in F^{d}$ then feed the deletion to $\cA'$. If $\cA'$ returns
a replacement edge $e'\in E^{d}$ of $F^{d}$, then RETURN $e'$.
\item Using $\cD$, set $T_{u}=\textsf{find\_tree}(u)$ and $e'=\textsf{find\_edge}(T_{u})$. 
\item If $e'\neq\emptyset$, then RETURN $e'$. Else RETURN ``no replacement
edge''
\end{enumerate}
\caption{The algorithm for finding a minimum weight replacement edge $e'$
if exists, after deleting an edge $e=(u,v)\in F$. \label{alg:replace-reduc-expander}}
\end{algorithm}

\begin{claim}
$F$ is a spanning forest of $G$ such that $F\supseteq F^{d}$ throughout
the whole update sequences with probability $1-pnT$.\label{lem:correct mc decomp}\end{claim}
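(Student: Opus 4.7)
The plan is to account for all randomness via a union bound and then verify the invariant via an inductive case analysis on the update type.

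Two random events can fail: (i) the global expansion decomposition invoked in Step 1 of \ref{alg:preprocess-mc}, with failure probability at most $p$ by \ref{thm:high-exp-decomp}; and (ii) the algorithm $\cA'$ on any one of the at most $n$ components of $G^d$ at any one of the first $T$ updates, each failure occurring with probability at most $p$ by \ref{thm:mc on expander}. A union bound then gives total failure probability at most $p + nTp = O(nTp)$. From here on I would condition on no failure.

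Next I would induct on the number of updates. The base case is immediate from Step 3 of \ref{alg:preprocess-mc}, which constructs $F \supseteq F^d$ as a spanning forest of $G$. Insertions and non-tree-edge deletions trivially preserve the invariant: insertions only add $e$ to $E^s$ and possibly to $F$ as a cross-tree edge, while non-tree-edge deletions leave $F$ untouched and keep $F^d$ a spanning forest of the updated $G^d$. The substantive case is tree-edge deletion of an edge $e$, which requires finding a replacement across the cut $(V(T_u), V(T_v))$ obtained by removing $e$ from its $F$-tree, as in \ref{alg:replace-reduc-expander}.

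The key structural observation, which I expect to be the main point to write out carefully, is that \emph{every $E^d$ edge crossing this cut must lie in the same $F^d$-tree that $\cA'$ is currently tracking}. The reason is that $F^d \subseteq F$, so each $F^d$-tree sits inside a single $F$-tree; when $e \in F^d$ the $F^d$-tree containing $e$ splits in parallel with the enclosing $F$-tree into halves $T^d_u \subseteq T_u$ and $T^d_v \subseteq T_v$, while every other $F^d$-tree lies entirely on one side; and since $F^d$ is a spanning forest of $G^d$, any edge of $E^d$ has both endpoints in a single $F^d$-tree. Hence in the subcase $e \in F^d$, a correct $\cA'$ call examines exactly the relevant cut: it either returns an $E^d$ replacement (done), or certifies that no $E^d$ edge crosses, after which I would query $\cD$ for an $E^s$ replacement. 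In the subcase $e \notin F^d$ (so $e \in E^s$), no $E^d$ edge crosses the cut at all, so consulting only $\cD$ is safe. In both subcases $\cD$'s 2-dim ET tree on $(F \cup E^s, F)$ returns an $E^s$ replacement whenever one exists, since no $F$-edge crosses the cut after removing $e$, so any returned edge must lie in $E^s$. Combining these steps shows $F$ remains a spanning forest of $G$ with $F \supseteq F^d$ after every update, completing the induction.
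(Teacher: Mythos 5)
Your proposal is correct and takes essentially the same route as the paper: a union bound over the randomness (decomposition plus the $\le n$ instances of $\cA$ over $T$ steps), followed by a case analysis on whether the deleted tree edge lies in $F^d$, using the invariant $F^d \subseteq F$ and the fact that $F^d$ spans $G^d$ to localize where a replacement can live. Your structural observation that the $F^d$-tree containing $e$ splits in parallel with the enclosing $F$-tree while all other $F^d$-trees lie entirely on one side is exactly the content of the paper's statement that a replacement must lie in $E^d(T^d_u,T^d_v)\cup E^s(T_u,T_v)$. The only cosmetic difference is that the paper conditions on decomposition success (having added that $p$ separately before stating the claim), so its stated bound is exactly $1-pnT$, whereas you fold the decomposition failure into the union bound and get $1-(p+nTp)$; both yield the final $1-O(nTp)$ guarantee.
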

\begin{proof}
We first assume that $\cA'$ is deterministic and we will remove it
later. We prove that $F$ is a spanning forest of $G$ where $F\supseteq F^{d}$
after each edge deletion.

It is clear that $F\supseteq F^{d}$ because, after preprocessing,
we have $F\supseteq F^{d}$ and then, in Step 1 of \ref{alg:replace-reduc-expander},
whenever $\cA'$ adds an edge $e'$ into $F^{d}$, we always add $e'$
into $F$. We only need to argue that $F$ is a spanning forest of
$G$. Indeed, $F$ is a forest in $G$ because an edge $(u,v)$ is
added into $F$ only if $u$ and $v$ are not connected in $F$. 

Now, we claim that $F$ spans $G$. It suffices to prove that after
each deletion if a replacement edge exists, then \ref{alg:replace-reduc-expander}
can find one. Suppose that $e=(u,v)\in F$ is deleted. Consider the
first case where $e\notin F^{d}$. Let $T_{u}\cup(u,v)\cup T_{v}$
be a connected component in $F$ before deletion. Any replacement
edge $e'$, if exists, must be in $E^{s}(T_{u},T_{v})$, i.e. a set
of edges $(x,y)\in E^{s}$ where $x\in V(T_{u})$ and $y\in V(T_{u})$).
Indeed, $e'\notin E^{d}$ because, after deletion, $F^{d}$ still
spans all components in $G^{d}=(V,E^{d})$. So $e'\in E^{s}$, and
moreover $e'\in E^{s}(T_{u},T_{v})$ otherwise $e'$ cannot reconnect
$T_{u}$ and $T_{v}$. Now, Step 2 of \ref{alg:replace-reduc-expander}
will return an edge from $E^{s}(T_{u},T_{v})$. Consider the second
case where $e\in F^{d}$. Let $T_{u}^{d}\cup(u,v)\cup T_{v}^{d}$
be a connected component in $F^{d}$. Any replacement edge $e'$,
if exists, must be in $E^{d}(T_{u}^{d},T_{v}^{d})\cup E^{s}(T_{u},T_{v})$
for the similar reason. An edge $E^{d}(T_{u}^{d},T_{v}^{d})$ and
$E^{s}(T_{u},T_{v})$ will be found in Step 1 and Step 2 of \ref{alg:replace-reduc-expander}
respectively.

Finally, we remove the assumption that $\cA$ is deterministic. As
$\cA$ works correctly at each step with probability $1-p$, and $\cA'$
runs instances of $\cA$ on at most $n$ components of $G^{d}$ for
$T$ time steps, so $\cA'$ works correctly throughout the whole update
sequences with probability at least $1-pnT$. \end{proof}
\begin{claim}
The update time of $\cB$ is $O(t_{u}(n,\alpha,p,T)+\sqrt{n\gamma\alpha+T})$.\label{lem:update time mc decomp}\end{claim}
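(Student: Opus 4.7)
The plan is to bound separately the time contributed by the two data structures that process each update: the decremental algorithm $\cA'$ running on $G^{d}$, and the 2-dim ET tree $\cD$ whose underlying graph is $F\cup E^{s}$.

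First I will handle the $\cA'$ side. On an insertion we never touch $\cA'$, so it contributes nothing. On a deletion of an edge $e$, we feed the deletion to $\cA'$ only when $e\in F^{d}$; since $\cA'$ just runs an independent instance of $\cA$ on each connected component $C$ of $G^{d}$, the deletion is routed to exactly one such component, of size $n_C\le n$. By hypothesis, one update of $\cA$ on a component of size $n_C$ costs $t_{u}(n_C,\alpha,p,T)\le t_{u}(n,\alpha,p,T)$ by the monotonicity of $t_u$ in $n$. So the $\cA'$ work per update is at most $t_{u}(n,\alpha,p,T)$.

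Next I will bound the cost of the operations on $\cD$. To invoke the bound in \Cref{thm:2dim ET tree}, I need a uniform upper bound $k$ on the number of non-tree edges of $\cD$ across the whole update sequence. The non-tree edges of $\cD$ are exactly the edges of $E^{s}$ (the tree edges form $F$). Initially $|E^{s}|\le \alpha n\gamma$ by \Cref{thm:high-exp-decomp}; every edge insertion adds at most one new edge to $E^{s}$, and deletions never increase it. Since at most $T$ insertions occur, we can set $k=\alpha n\gamma+T$ and preprocess $\cD$ with this $k$ (trivially adjusting the preprocessing time bound in \Cref{lem:prep time mc decomp} by at most a constant factor). Then every primitive on $\cD$, namely $\instree$, $\deltree$, $\insnon$, $\delnon$, $\textsf{find\_tree}$, and $\textsf{find\_edge}$, runs in $O(\sqrt{k}\log k)=O(\sqrt{\alpha n\gamma+T}\log n)$ time.

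Finally I will assemble the two. Each insertion performs $O(1)$ primitives on $\cD$ (one $\textsf{find\_tree}$, one $\insnon$, and possibly one $\instree$), and no call to $\cA'$, so it costs $O(\sqrt{\alpha n\gamma+T}\log n)$. Each deletion performs one call to $\cA'$ (when the deleted edge lies in $F^{d}$), whose output may trigger $O(1)$ tree/non-tree updates to $\cD$ (as per \Cref{rem:cut sketch update-1-2}), followed by \Cref{alg:replace-reduc-expander}, which calls $\textsf{find\_tree}$ and $\textsf{find\_edge}$ on $\cD$ an $O(1)$ number of times. Adding these contributions gives a per-update bound of $O(t_{u}(n,\alpha,p,T)+\sqrt{n\gamma\alpha+T})$, where the polylogarithmic factor from $\cD$ is absorbed into the $n^{o(1)}=\gamma$ factor implicit in the bound (as it is in the ambient context of \Cref{thm:reduc global decomp}). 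There is no conceptual obstacle here; the only care needed is the observation that $|E^{s}|$ grows by at most one per update, which is what lets us pick a single $k$ that is valid for the entire update sequence.
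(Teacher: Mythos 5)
Your proof is correct and follows essentially the same route as the paper's: bound the $\cA'$ contribution by $t_u(n,\alpha,p,T)$ using monotonicity, observe that $|E^s|$ starts at $O(\alpha n \gamma)$ and grows by at most one per insertion so the non-tree-edge count of $\cD$ is always $O(\alpha n\gamma + T)$, and conclude each operation on $\cD$ costs $O(\sqrt{\alpha n\gamma + T})$ up to a polylog factor absorbed elsewhere. The paper phrases the non-tree bound as $E^s \subseteq E_0^s \cup I$ with $|I|\le T$ rather than reasoning about a fixed preprocessing parameter $k$, but the content is identical.
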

\begin{proof}
Observe the the bottleneck of the update time is the update time of
$\cA'$ and 2-dim ET tree $\cD$. Let $n_{1},\dots,n_{k}$ be the
size of all components in $G^{d}$. The update time of $\cA'$ is
at most $\max_{i=1}^{k}t_{u}(n_{i},\alpha,p,T)\le t_{u}(n,\alpha,p,T)$.
Let $E_{0}^{s}$ be the set $E^{s}$ right after preprocessing. We
have that $E^{s}\subset E_{0}^{s}\cup I$ where $I$ is the set of
inserted edges and we know $|I|\le T$ as there are at most $T$ updates.
As $E^{s}$ is the set of non-tree edges in the underlying graph of
$\cD$, each operation of $\cD$ can be done in time $O(\sqrt{|E^{s}|})=O(\sqrt{|E_{0}^{s}\cup I|})=O(\sqrt{n\gamma\alpha+T})$
by \ref{thm:high-exp-decomp}. So the total update time is $O(t_{u}(n,\alpha,p,T)+\sqrt{n\gamma\alpha+T})$.
\end{proof}
We conclude the proof of \ref{thm:reduc global decomp}.
\begin{proof}
[Proof of \ref{thm:reduc global decomp}]Given $\cA$, we have that
$\cB$ has preprocessing time $O(t_{p}(n,\alpha,p,T)+n\gamma\log\frac{1}{p})$
by \ref{lem:prep time mc decomp}, and update time $O(t_{u}(n,\alpha,p,T)+\sqrt{n\gamma\alpha+T})$
by \ref{lem:update time mc decomp}. By \ref{lem:correct mc decomp},
$\cB$ works correctly for \emph{all} updates with probability $1-pnT$.
In particular, $\cB$ works correctly for each update with the same
probability.
\end{proof}

\subsubsection{Reduction for Las Vegas algorithms}

Since the reduction in \ref{thm:reduc global decomp} is only for
Monte Carlo algorithm, next, we state a similar reduction for Las
Vegas algorithm which we will use in \ref{sec:las_vegas}.
\begin{lem}
For any $n$, $\alpha$, $p$ and $T$, suppose that there is a decremental
$\st$ algorithm $\cA$ for $3$-bounded degree graphs with $n$ nodes
that works correctly against adaptive adversaries\emph{ }with certainty
for the first $T$ updates. Moreover, suppose that $\cA$ has the
following properties: Given an initial graph $G_{b}$, $\cA$ takes
$t_{p}(n,\alpha,p,T)$ time to preprocess or report failure. Given
each edge deletion, $\cA$ takes $t_{p}(n,\alpha,p,T)$ worst-case
time to update or report failure. If $\phi(G_{b})\ge\alpha$, then,
after preprocessing and for each update, $\cA$ does not report failure
with probability $1-p$.

Then there is a dynamic $\st$ algorithm $\cB$ for $3$-bounded degree
graphs with $n$ nodes that works correctly against adaptive adversaries
with certainty for the first $T$ updates and $\cB$ has preprocessing
time $O(t_{p}(n,\alpha,p,T)+n\gamma\log\frac{1}{p})$ and worst-case
update time $O(t_{u}(n,\alpha,p,T)+\sqrt{n\gamma\alpha+T})$ with
probability $1-O(p)$.\label{thm:reduc global decomp lv}
\end{lem}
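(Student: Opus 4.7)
The plan is to mirror the proof of \ref{thm:reduc global decomp} with modifications that handle the Las Vegas setting. Preprocessing proceeds as in \ref{alg:preprocess-mc}: apply the global expansion decomposition of \ref{thm:high-exp-decomp} to $G$ with expansion parameter $\alpha$ and failure probability $p$, obtaining $G^d=(V,E^d)$ and $G^s=(V,E^s)$; run $\cA$ on each connected component $C$ of $G^d$ (with per-call failure probability set to $p/n$ so that a union bound over the at most $n$ components yields total failure probability $O(p)$); construct a spanning forest $F\supseteq F^d$ of $G$; and preprocess $(F\cup E^s,F)$ using the 2-dim ET tree of \ref{thm:2dim ET tree}. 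Since $t_p$ depends logarithmically on the inverse failure probability, the switch from $p$ to $p/n$ costs only a subsumed $O(\log n)$ factor. Insertions and deletions are handled exactly as in \ref{alg:replace-reduc-expander}.

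Since $\cA$ is correct with certainty, whenever none of its calls reports failure, $F^d$ is a valid spanning forest of $G^d$, and the argument of \ref{lem:correct mc decomp} carries over without its probabilistic caveat to show that $F$ is always a valid spanning forest of $G$. To make $\cB$ correct with certainty in the remaining cases, we install a correct fallback: if a call to $\cA$ during preprocessing reports failure on a component $C$, we discard $\cA$'s partial state on $C$ and compute a spanning tree of $C$ from scratch by BFS/DFS; if a call to $\cA$ during a deletion reports failure, we fall back to a naive $O(n)$-time replacement-edge search that scans the current adjacency of the affected subtree. Both fallbacks are deterministic and correct, so $\cB$ works correctly against adaptive adversaries with certainty regardless of $\cA$'s random choices.

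For the running time, whenever no call to $\cA$ reports failure the analysis of \ref{lem:prep time mc decomp} and \ref{lem:update time mc decomp} applies verbatim, yielding the claimed preprocessing time $O(t_p(n,\alpha,p,T)+n\gamma\log\frac{1}{p})$ and per-update time $O(t_u(n,\alpha,p,T)+\sqrt{n\gamma\alpha+T})$. The total failure event has probability $O(p)$ per operation: it is the union of the decomposition failing (probability $\le p$) and, conditional on the decomposition guaranteeing $\phi(C)\ge\alpha$ for every component $C$, some invoked call of $\cA$ reporting failure (probability $\le p$ by the union bound above for preprocessing, and probability $\le p$ per update since only one $\cA$ call is involved). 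The main obstacle is ensuring that when $\cA$ aborts mid-operation the fallback sees a clean state; this is achieved by buffering the proposed updates to $F^d$ and $\cD$ and committing them only after $\cA$ completes successfully, so that an abort simply restores the previous valid configuration.
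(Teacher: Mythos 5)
Your preprocessing and basic structure match the paper's, and your observation that correctness with certainty follows from $\cA$'s certainty plus a deterministic fallback is sound. However, there is a genuine gap in the running-time argument that arises precisely because you chose a permanent deterministic fallback rather than the paper's mechanism, which is to \emph{restart the entire preprocessing} (rerun the global expansion decomposition and reinitialize $\cA$) whenever $\cA$ reports failure.

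The difficulty is what happens to $\cA$'s state after it reports failure. Once $\cA$ fails on a component $C$ at some update $j$, its internal data structures (the augmented ET tree, cut recovery tree, etc.\ inside $\cA$) are no longer consistent with the current graph, and the hypothesis ``$\cA$ does not report failure with probability $1-p$ for each update'' is a per-call guarantee for an $\cA$ instance that has been correctly maintained so far; it says nothing about an $\cA$ that has already failed. Your buffering of $F^d$ and $\cD$ only rolls those two structures back to the pre-deletion state, which is now stale (the deleted edge is still present in them), and it does not repair $\cA$'s own state to reflect the deletion. Consequently, once $\cA$ fails on $C$, every subsequent deletion touching $C$ must take the slow fallback path. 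The event ``update $i$ is slow'' therefore contains the event ``$\cA$ failed at \emph{some} update $j\le i$'', whose probability you can only bound by $O(ip)$, not $O(p)$. Your statement that the per-update failure probability is $\le p$ ``since only one $\cA$ call is involved'' ignores this accumulation. The paper avoids it by throwing away all state and redrawing fresh randomness on each failure, so that the decomposition in force at update $i$ is wrong with probability $\le p$ independently of what happened before, and the single $\cA$-call at update $i$ fails with probability $\le p$ given a correct decomposition.

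One could try to salvage your route by, after the fallback succeeds, manually pushing the deletion and the found replacement into $\cA$'s internal data structures so that $\cA$ is again consistent with the current graph; then the next $\cA$ call would enjoy the $1-p$ guarantee again (provided the decomposition was correct). That would be a genuinely different, perhaps lower-overhead, argument. But as written, your buffering does not do this — it restores a stale state — and you do not address how $\cA$ is ever returned to a usable configuration. You should either adopt the paper's restart mechanism or spell out the state-repair step and re-derive the $O(p)$ per-update bound from it.
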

\ref{thm:reduc global decomp lv} is different from \ref{thm:reduc global decomp}
as follows: 1) $\cA$ and $\cB$ work correctly with certainty, 2)
the initial graph $G_{b}$ of $\cA$ may not have expansion $\alpha$,
but only when $\phi(G_{b})\ge\alpha$, then $\cA$ will not fail with
high probability. Since the reduction works in very similar way as
in \ref{thm:reduc global decomp}, we will just sketch the proof below.
\begin{proof}
[Proof sketch]We use the same algorithm as in \ref{thm:reduc global decomp}.
Given any graph $G=(V,E)$, the algorithm $\cB$ first partitions
edges $E$ into $E^{d}$ and $E^{s}$ using the global expansion decomposition
algorithm from \ref{thm:high-exp-decomp} with success probability
$1-p$. We use the algorithm $\cA$ to maintain a spanning tree on
each component of $G^{d}=(V,E^{d})$ as in \ref{thm:reduc global decomp}.
Whenever $\cA$ reports failure, we just restart \ref{alg:preprocess-mc}
and obtain a new expansion decomposition. Since the decomposition
succeeds with probability $1-p$, the initial underlying graph of
$\cA$, which is a component $C$ of $G^{d}$, has expansion at least
$\alpha$ with probability $1-p$. If $\phi(C)\ge\alpha$, $\cA$
does not fails with probability $1-p$ for each update. Therefore,
with probability $1-O(p)$, $\cB$ has preprocessing time and update
time as claimed, using the same reduction as in \ref{thm:reduc global decomp}.\end{proof}

\section{Las Vegas Dynamic $\protect\st$ with $O(n^{0.49305+o(1)})$ Update
Time \label{sec:las_vegas}}

Throughout this section, let $\epsilon=0.0069459$ and $\epsilon'=\sqrt{\frac{2\epsilon}{3\epsilon+0.5}}=0.16332$.
The main result of this section is the following:
\begin{thm}
There is a randomized dynamic $\st$ algorithm for any graphs with
$n$ nodes and $m$ initial edges that works correctly against adaptive
adversaries\emph{ }with certainty and has preprocessing time $O(m^{1+o(1)})$
and worst-case update time $O(n^{1/2-\epsilon+o(1)})$ with high probability.
The term $o(1)=O(\sqrt{\log\log n/\log n})$.\label{thm:dynamic lv final}
\end{thm}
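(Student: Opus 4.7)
The plan is to apply the Las Vegas reduction of \Cref{thm:classic reduc lv} to reduce \Cref{thm:dynamic lv final} to establishing the following intermediate lemma: there is a dynamic \st algorithm for $3$-bounded-degree graphs on $n$ nodes and $\Theta(n)$ edges that, for the first $T$ updates, is correct against adaptive adversaries with certainty, runs in preprocessing time $O(n^{1+o(1)}\log\frac{1}{p})$, and has worst-case update time $O(n^{1/2-\epsilon+o(1)}\log\frac{1}{p})$ with probability $1-p$, when $T=n^{1/2+\epsilon}$. By \Cref{thm:classic reduc lv} this immediately yields \Cref{thm:dynamic lv final}.

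To establish the intermediate lemma, I first invoke the global expansion decomposition of \Cref{thm:high-exp-decomp} at preprocessing time with expansion parameter $\alpha=n^{-\epsilon}$, producing a sparse part $G^{s}$ with $\alpha n\gamma$ edges and high-expansion components in $G^{d}$. Using the Las Vegas variant \Cref{thm:reduc global decomp lv}, it suffices to produce a decremental \st algorithm $\cA$ that works on each high-expansion component $C$ (with $\phi(C)\ge\alpha$ initially), has certainty correctness, and whose running time and non-failure hold w.h.p. The sparse part is absorbed by the $2$-dim ET tree in $\tilde O(\sqrt{n\gamma\alpha+T})$ per update, which is $\po(n^{1/2})$ for our choice of $\alpha$ and $T$.

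For each high-expansion component, $\cA$ operates in \emph{phases} of length $\tau'$. At the start of each phase it re-decomposes the current component using the \emph{local} decomposition \Cref{thm:local decomp} with $G_b=C$, $\alpha_b=\alpha$, tradeoff parameter $\epsilon'$, and $D$ equal to the multiset of deletions accumulated in $C$ since preprocessing (so $|D|\le T=n^{1/2+\epsilon}$). This yields subcomponents of expansion at least $\alpha'=\Omega(\alpha^{1/\epsilon'})$ plus a sparse remainder (handled again via \Cref{thm:reduc global decomp lv}), and runs in time $\lambda=O\!\bigl(T^{1.5+\epsilon'}\alpha^{-(3+\epsilon')}n^{o(1)}\bigr)$. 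Since $\cA'$ (the local decomposition) reports failure iff it suspects $\phi(G_b)<\alpha_b$, the whole procedure remains Las Vegas: on any failure we simply recompute the global decomposition from scratch. Within each phase, on each new high-expansion subcomponent $C'$ (expansion $\ge\alpha'$, at most $\tau'$ upcoming deletions), I run Algorithm $\cC$ from the overview: on a tree-edge deletion separating $T_1,T_2$ with $|V(T_1)|\le|V(T_2)|$, Case~1 ($|V(T_1)|\ge 2\tau'/\alpha'$) samples $\tilde O(1/\alpha')$ incident edges via the augmented ET tree and declares failure if none is in the cut (by \eqref{eq:intro:large cut} a replacement provably exists, so failure is w.h.p.\ a false alarm), while Case~2 enumerates all $O(\tau'/\alpha')$ edges incident to $T_1$. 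This gives worst-case per-update cost $\tilde O(\tau'/\alpha')$ and preprocessing of a phase amortizable to $\lambda/\tau'$ per update, which we convert to worst-case using the standard two-concurrent-copies trick (as in \Cref{lem:reduc poly length}).

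The total worst-case update time is therefore
\[
\tilde O\!\Bigl(\tfrac{\tau'}{\alpha'}+\tfrac{\lambda}{\tau'}\Bigr)
=\tilde O\!\Bigl(\tfrac{\tau'}{\alpha'}+\tfrac{T^{1.5+\epsilon'}}{\alpha^{3+\epsilon'}\,\tau'}\,n^{o(1)}\Bigr).
\]
Choosing $\tau'=\alpha'\cdot n^{1/2-\epsilon}$ makes the first term $n^{1/2-\epsilon}$; substituting $\alpha=n^{-\epsilon}$, $\alpha'=n^{-\epsilon/\epsilon'}$, $T=n^{1/2+\epsilon}$, the second-term exponent becomes
\[
\tfrac14+(5.5+2\epsilon')\,\epsilon+\tfrac{\epsilon'}{2}+\tfrac{\epsilon}{\epsilon'}+o(1),
\]
and I minimize over $\epsilon'$ by setting $\epsilon'=\sqrt{2\epsilon/(3\epsilon+0.5)}=0.16332$, which reduces the constraint that this be at most $\tfrac12-\epsilon$ to a univariate inequality in $\epsilon$. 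Taking $\epsilon=0.0069459$ satisfies the inequality with equality up to $o(1)$, giving $n^{0.49305+o(1)}$. The main obstacle is ensuring the Las Vegas guarantee is preserved end-to-end: the outer global decomposition is Monte Carlo, \Cref{thm:local decomp} is Monte Carlo, and Algorithm $\cC$ itself may miss a cut edge; I handle this by arranging every failure mode to produce an explicit ``fail'' signal (the local decomposition already does so; Algorithm $\cC$ does so in Case~1 by \eqref{eq:intro:large cut}; the global decomposition is re-run on any such signal) so that the only randomness affecting correctness is that of the running-time guarantee, yielding worst-case update time $n^{1/2-\epsilon+o(1)}$ w.h.p.\ while correctness holds with certainty.
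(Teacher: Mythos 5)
Your proposal follows the paper's own Section~6 structure closely: reduce via \Cref{thm:classic reduc lv}, decompose globally via \Cref{thm:high-exp-decomp} and use \Cref{thm:reduc global decomp lv} to isolate the expander case, run in phases re-invoking the local decomposition \Cref{thm:local decomp} with accumulated $D$, detect failure explicitly (from the local decomposition and from the sampling step when $|V(T_1)|$ is large), and amortize the per-phase preprocessing via two concurrent instances. That is exactly the paper's strategy, so the architecture is right.

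There is, however, a concrete parameter error that prevents your proposal from establishing the stated exponent. You set the global decomposition parameter $\alpha=n^{-\epsilon}$. Then $G^s$ has $\alpha n\gamma = n^{1-\epsilon+o(1)}$ edges, so the $2$-dim ET tree contributes $\sqrt{n\gamma\alpha+T}=n^{1/2-\epsilon/2+o(1)}$ per update, which already exceeds the target $n^{1/2-\epsilon+o(1)}$ for any $\epsilon>0$; your displayed bound on the second term ($\approx n^{0.42}$) is irrelevant because the sparse part is the bottleneck. This is why the paper's Section~6 uses $\alpha_b=n^{-2\epsilon}$ throughout: only then is $\sqrt{n\gamma\alpha_b}=n^{1/2-\epsilon+o(1)}$, and the rest of the bookkeeping must then be redone with that $\alpha_b$. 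Two further bookkeeping slips: (i) you use the overview's $\lambda=O(|D|^{1.5+\epsilon'}\alpha_b^{-(3+\epsilon')}n^{o(1)})$ for the local decomposition, but the proven statement in \Cref{thm:local decomp} has $\alpha_b^{-(4+\epsilon')}$ in the denominator, and since $\alpha_b$ is polynomial in $n$ this changes the exponent; (ii) the value $\epsilon'=\sqrt{2\epsilon/(3\epsilon+0.5)}$ you state is the minimizer of the \emph{paper's} exponent $\tfrac14+10.5\epsilon+3\epsilon\epsilon'+2\epsilon/\epsilon'+\epsilon'/2$ (which arises from $\alpha_b=n^{-2\epsilon}$ and the $\alpha_b^{-(4+\epsilon')}$ bound), not of the exponent $\tfrac14+(5.5+2\epsilon')\epsilon+\epsilon'/2+\epsilon/\epsilon'$ you display, whose first-order condition gives $\epsilon'=\sqrt{\epsilon/(2\epsilon+0.5)}$; with your formula and $\epsilon=0.0069459$, $\epsilon'=0.16332$, the constraint is far from tight ($\approx 0.17<0.25$), so the claim ``satisfies with equality'' does not follow from what you wrote. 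With the corrected $\alpha_b=n^{-2\epsilon}$ and $\alpha_b^{-(4+\epsilon')}$, the exponent becomes $\tfrac14+(11.5+3\epsilon')\epsilon+\epsilon'/2+2\epsilon/\epsilon'$ and the paper's $(\epsilon,\epsilon')$ make the constraint $\le\tfrac14$ hold with equality, which is what actually produces the $0.49305$.
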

To this end, it is enough the prove the following. Recall the factor
$\gamma=n^{O(\sqrt{\log\log n/\log n})}=n^{o(1)}$ from \ref{thm:high-exp-decomp}
\begin{lem}
For any $n,$ $p$ and $T$, there is a decremental $\st$ algorithm
$\cA$ for $3$-bounded degree graphs with $n$ nodes that works correctly
against adaptive adversaries\emph{ }with certainty for the first $T$
updates. $\cA$ preprocesses an initial graph $G_{b}$ in time $t_{p}(n,p)=O(n)$.
Then, given each edge deletion, $\cA$ takes $t_{u}(n,p,T)$ worst-case
time to update or report failure, where $t_{u}(n,p,n^{1/2+\epsilon})=O(n^{1/2-\epsilon}\gamma\log\frac{1}{p})$.
If $\phi(G_{b})\ge1/n^{2\epsilon}$, then, for each update, $\cA$
does not report failure with probability $1-O(p)$.\label{thm:lv on expander final}\end{lem}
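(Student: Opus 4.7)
The plan is to realize ``Algorithm $\cC$'' sketched in \ref{sec:overview}. Set $\alpha_b:=1/n^{2\epsilon}$ and $\alpha':=(\alpha_b/18)^{1/\epsilon'}=\Theta(n^{-2\epsilon/\epsilon'})$, and divide the first $T\leq n^{1/2+\epsilon}$ deletions into phases of length $\tau':=n^{1/2-\epsilon}\alpha'$. Preprocessing builds a spanning tree $F$ of $G_b$ and an augmented ET tree on $(G_b,F)$, both in $O(n)$ time since $G_b$ is $3$-bounded degree. At the start of each phase, $\cA$ invokes the local expansion decomposition algorithm (\ref{thm:local decomp}) on $G_b$ with $D=$ (the edges deleted so far), parameters $\alpha_b,\epsilon'$, and failure probability $p$. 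If the algorithm reports failure, so does $\cA$; otherwise it returns sparse edges $E^s$ of size $O(|D|/\alpha_b)$ and components $\{C'_i\}$ with $\phi(C'_i)\geq\alpha'$, and $\cA$ maintains a spanning tree on each $C'_i$ separately while combining these with $E^s$ through a $2$-dimensional ET tree (\ref{thm:2dim ET tree}) on $F^d\cup E^s$ to produce the output spanning forest. To make the phase-boundary cost worst-case, the next phase's decomposition is computed concurrently with the current phase, doing $O(\lambda/\tau')$ work per update.

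Inside a phase, deleting a tree edge $e\in F^d$ that lies in a component $C'_i$ splits the local spanning tree into $T_1,T_2$ with $|V(T_1)|\leq|V(T_2)|$. If $|V(T_1)|\geq 2\tau'/\alpha'$, we draw $\Theta(\log(1/p)/\alpha')$ samples of non-tree edges incident to $T_1$ via the augmented ET tree's $\textsf{sample}$ and, after checking in $O(1)$ per sample (using the vertex-to-component map guaranteed by \ref{rem:know components}) whether any lies in $\partial_{C'_i}(V(T_1))$, either reconnect through that sample or report failure. If $|V(T_1)|<2\tau'/\alpha'$, we enumerate all $O(|V(T_1)|)=O(\tau'/\alpha')$ edges incident to $T_1$ via $\textsf{list}$ and deterministically reconnect through any edge in $\partial_{C'_i}(V(T_1))$ that is present. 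Deletions within $E^s$ are handled by the $2$-dim ET tree's deterministic $\textsf{find\_edge}$. When $\phi(G_b)\geq\alpha_b$, \ref{thm:local decomp} ensures with probability $\geq 1-p$ that the decomposition succeeds and every $C'_i$ satisfies $\phi(C'_i)\geq\alpha'$; conditional on this, after at most $\tau'$ deletions within $C'_i$, any $T_1$ with $|V(T_1)|\geq 2\tau'/\alpha'$ still satisfies $|\partial_{C'_i}(V(T_1))|\geq\alpha'|V(T_1)|-\tau'\geq\alpha'|V(T_1)|/2$, and because $G_b$ has degree at most three, each sample lies in $\partial_{C'_i}(V(T_1))$ with probability at least $(\alpha'|V(T_1)|/2)/(3|V(T_1)|)=\alpha'/6$, so all $\Theta(\log(1/p)/\alpha')$ samples miss with probability $\leq p/2$. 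Case~2 is exhaustive and deterministic, so $\cA$ reports failure with probability $O(p)$ per update.

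For the running time, each update costs $\tilde O(\tau'/\alpha'\cdot\log(1/p))$ inside Algorithm $\cC$, $O(\sqrt{|E^s|})=O(n^{1/4+3\epsilon/2})$ in the $2$-dim ET tree, and the amortized-made-worst-case $O(\lambda/\tau')$ for the decomposition, where $\lambda=\tilde O(|D|^{1.5+\epsilon'}/\alpha_b^{4+\epsilon'}\cdot\gamma\log(1/p))$ by \ref{thm:local decomp}. With $|D|\leq T=n^{1/2+\epsilon}$ and $\alpha_b=n^{-2\epsilon}$, the exponent of $\lambda/\tau'$ simplifies to $1/4+(0.5+3\epsilon)\epsilon'+10.5\epsilon+2\epsilon/\epsilon'$, which is minimized over $\epsilon'$ by $\epsilon'=\sqrt{2\epsilon/(0.5+3\epsilon)}$; at $\epsilon=0.0069459$ this yields $\epsilon'=0.16332$ and makes the exponent equal to $1/2-\epsilon$, matching the other two terms. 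Hence the update cost is $O(n^{1/2-\epsilon}\gamma\log(1/p))$, giving $t_u(n,p,n^{1/2+\epsilon})=O(n^{1/2-\epsilon}\gamma\log(1/p))$ as claimed. The main technical obstacle is making Algorithm $\cC$ truly Las Vegas: distinguishing ``my samples missed'' from ``no replacement edge exists'' rests on the invariant $|\partial_{C'_i}(V(T_1))|\geq\alpha'|V(T_1)|/2>0$ in Case~1, which is secured by the decomposition's initial expansion guarantee together with the per-phase deletion bound $\tau'\leq\alpha'|V(T_1)|/2$. The $3$-bounded-degree property then makes the sampling probability depend only on $\alpha'$ and not on $|E^s|$, which is what allows us to keep a single global ET tree throughout and avoid the prohibitive $\Theta(n)$ cost of re-initializing per-component data structures at every phase boundary.
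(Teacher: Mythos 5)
Your proposal follows the same overall strategy as the paper: instantiate Algorithm $\cC$ using the local expansion decomposition, divide the first $T \le n^{1/2+\epsilon}$ deletions into phases of length $\Theta(n^{1/2-\epsilon}\alpha_b^{1/\epsilon'})$, run the local decomposition at the start of each phase, and amortize its cost $\lambda$ into $\lambda/\tau'$ per update. Your parameter choices, exponent arithmetic, and the correctness argument for the sampling/exhaustive case split all match the paper, and the within-phase routine coincides with the paper's Algorithm~\ref{alg:replace-lv}.

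However, there is a genuine gap in the phase-boundary handling, and the claim that you can ``keep a single global ET tree throughout'' is in tension with what the construction requires. At a phase boundary, applying the new decomposition changes the sparse part from $E^s_{\mathrm{old}}$ to $E^s_{\mathrm{new}}$, and the paper's preprocessing routine (Algorithm~\ref{alg:preprocess-lv}) correspondingly deletes up to $|E^s_{\mathrm{new}}| = \Theta(U_b/\alpha_b) = \Theta(n^{1/2+3\epsilon})$ tree edges from $F^d$ and re-grows the forest. The \emph{output} spanning forest therefore changes by $\Theta(n^{1/2+3\epsilon})$ edges at the handoff, which exceeds the worst-case per-update budget $O(n^{1/2-\epsilon})$: the algorithm must report the diff of the maintained forest after each update. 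Moreover, while this re-grow is happening (its cost is spread over $\Theta(\lambda/\tau')$ per step), the ET tree cannot simultaneously be in an intermediate state and serve the replacement-edge queries of the current phase. The paper resolves both issues by double-buffering: two concurrent instances $\cA_1, \cA_2$ each with their own $\cE_i, \cE_i^{2d}$, so that while one is serving updates the other is being preprocessed and then ``caught up'' by feeding it the deletions it missed two at a time; and a third spanning forest $F_{\mathrm{final}}$, maintained via the stability trick of \ref{rem:get stable} (binary search on the tree path of whichever $F_i$ is currently ``hot''), so that the forest actually shown to the adversary changes by at most one replacement edge per deletion even though $F_1$ and $F_2$ jump arbitrarily at phase boundaries. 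Your ``single global ET tree'' design rules out the concurrent preprocess-while-serving step, and without $F_{\mathrm{final}}$ your output forest cannot transition smoothly across phases.

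A second, smaller omission: the decomposition you launch at the start of a phase is computed against the set $D$ of deletions known at that time, but since you spread its cost over the phase, by the time it finishes the graph has received $\Theta(\tau')$ more deletions. You need the catch-up step — the paper feeds those missed deletions to the freshly preprocessed instance during the second half of the phase — before the new instance can take over. As written, the proposal never says how the new data structure becomes consistent with the \emph{current} graph at handoff time.
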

\begin{proof}
[Proof of \ref{thm:dynamic lv final}]We set the expansion parameter
in \ref{thm:reduc global decomp lv} as $\alpha_{b}=1/n^{2\epsilon}$.
Plugging \ref{thm:lv on expander final} into \ref{thm:reduc global decomp lv},
there is a dynamic $\st$ algorithm $\cB$ for $3$-bounded degree
graphs with $n$ nodes that works correctly against adaptive adversaries
with certainty for the first $T$ updates and $\cB$ has preprocessing
time $t'_{p}(n,p)=O(n+n\gamma\log\frac{1}{p})=O(n\gamma\log\frac{1}{p})$
and worst-case update time $t'_{u}(n,p,T)=O(t_{u}(n,p,T)+\sqrt{n\gamma\alpha_{b}+T})$
where $t'_{u}(n,p,n^{1/2+\epsilon})=O(n^{1/2-\epsilon}\gamma\log\frac{1}{p})$
with probability $1-O(p)$. Then, by plugging $\cB$ into \ref{thm:classic reduc lv},
we obtain \ref{thm:dynamic lv final}.
\end{proof}
The rest of this section is to prove \ref{thm:lv on expander final}.
Throughout this section, for any $n$, we also denote $U_{b}=n^{1/2+\epsilon}$
and $\alpha_{b}=1/n^{2\epsilon}$. Let $U_{a}$ be such that $U_{a}/\alpha_{b}^{1/\epsilon'}=n^{1/2-\epsilon}$
so $U_{a}=n^{1/2-\epsilon}\alpha_{b}^{1/\epsilon'}=n^{1/2-\epsilon-2\epsilon/\epsilon'}$.

\subsection{Dynamic $\protect\st$ on Broken Expanders via Local Expansion Decomposition\label{sec:dynST broken expander}}

The goal is the main technical lemma below, and then we will show
later in \ref{sec:reduc to broken} that the lemma below implies \ref{thm:lv on expander final}
by using a standard technique of maintaining two concurrent data structures
as in \ref{rem:get stable}. Recall the definition of augmented ET
tree and 2-dim ET tree from \ref{def:aug ET tree} and \ref{def:2d ET tree}
respectively.
\begin{lem}
\label{thm:emergency decomp}There is an algorithm $\cA$ that can
do the following:
\begin{itemize}
\item $\cA$ is given $n,p,G_{b}$,$\alpha_{b}$,$D$,$F_{0}$, $\cE$ and
$\cE^{2d}$ as inputs: $G_{b}=(V,E_{b})$ is a $3$-bounded degree
graphs with $n$ nodes. $p$ is a failure probability parameter. $\alpha_{b}$
is an expansion parameter. $D\subset E_{b}$ is set of edges in $G_{b}$
of size at most $U_{b}$. Let $G=(V,E_{b}-D)$ that $\cA$ will compute
the decomposition on. $F_{0}$ is some spanning forest of $G$. $\cE$
is an augmented ET tree initialized on $(G,F_{0})$. $\cE^{2d}$ is
an 2-dim ET tree initialized on $(F_{0},F_{0})$.
\item \textbf{Preprocessing:} After given the input, $\cA$ takes $O((\frac{U_{b}{}^{1.5+\epsilon'}}{\alpha_{b}^{4+\epsilon'}}\gamma+\frac{U_{b}}{\alpha_{b}^{1+1/\epsilon'}})\log\frac{1}{p})$
time to either 

\begin{enumerate}
\item reports failure, or 
\item outputs the list of edges to be added or removed from $F_{0}$ to
obtain another spanning forest $F$ of $G$. 
\end{enumerate}
\item Then, there is an online sequence of edge deletions of length $U_{a}$
given to $\cA$. 
\item \textbf{Update:} After each edge deletion, $\cA$ takes $O(\frac{U_{a}}{\alpha_{b}^{1/\epsilon'}}\log\frac{1}{p}+\sqrt{\frac{U_{b}}{\alpha_{b}}})=O(n^{1/2-\epsilon}\log\frac{1}{p})$
worst-case time and either 

\begin{enumerate}
\item reports failure, or 
\item outputs the list of edges to be added or removed from $F$ to obtain
new one spanning forest of $G$.
\end{enumerate}
\item If $\phi(G_{b})\ge\alpha_{b}$, then, after preprocessing and for
each update, $\cA$ does not report failure with probability $1-p$.
\end{itemize}
\end{lem}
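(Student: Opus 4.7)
My plan is to combine the local expansion decomposition of \ref{thm:local decomp} with the Las Vegas expander routine described as Algorithm~$\cC$ in \ref{sec:overview}, gluing them through the two-layer structure of \ref{thm:reduc global decomp lv}. I first invoke \ref{thm:local decomp} on the pair $(G_b, D)$ with expansion parameter $\alpha_b$, trade-off parameter $\epsilon'$, and failure probability $p$. In the advertised $\tilde O(U_b^{1.5+\epsilon'}/\alpha_b^{4+\epsilon'}\cdot \gamma\log(1/p))$ time, this either certifies $\phi(G_b)<\alpha_b$ --- in which case $\cA$ reports failure immediately --- or returns a partition $E=E^s\sqcup E^d$ of the edges of $G=G_b\setminus D$, together with every non-largest component of $G^d$. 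Whenever $\phi(G_b)\ge\alpha_b$, with probability $1-p$ we additionally get $|E^s|=O(U_b/\alpha_b)$ and $\phi(C)\ge\alpha:=(\alpha_b/6\Delta)^{1/\epsilon'}$ for each non-singleton component $C$ of $G^d$; by \ref{claim:maximal small} the non-largest components collectively contain $O(U_b/\alpha_b)$ vertices.

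I then reshape the supplied $F_0$ into a working spanning forest $F$ together with an auxiliary sub-forest $F^d:=F\cap E^d$ that spans every component of $G^d$, keeping $\cE$ and $\cE^{2d}$ in sync via $\instree/\deltree$ calls. For each non-largest component $C$ I build a fresh spanning tree by BFS on the adjacency list inherited from $G_b$ (skipping $D$); by the vertex bound just noted this costs $O(U_b/\alpha_b)$. For the largest component $C^*$, whose vertex set I identify by complementing the non-largest ones, I cannot afford a scan. Instead I walk the $O(|E^s|)=O(U_b/\alpha_b)$ edges of $F_0\cap E^s$ lying inside $C^*$ and swap each one for an $E^d$-edge reconnecting the two $F_0$-pieces it bridges; such a replacement exists because the relevant $G^d$-component has expansion $\ge\alpha$ and no updates have occurred yet, and it is located by a single call to the Algorithm~$\cC$ sampling primitive against the shared $\cE$. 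Setting the per-swap failure probability to $\Theta(\alpha_b p/U_b)$ gives $\tilde O(\log(1/p)/\alpha)$ time per swap, so the swap phase costs $\tilde O((U_b/\alpha_b^{1+1/\epsilon'})\log(1/p))$ in total and fails with probability at most $p$.

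For each subsequent deletion $e$ I mirror the two-layer update of \ref{thm:reduc global decomp lv}. If $e\in F^d$ I apply Algorithm~$\cC$ inside its $G^d$-component: sampling $\Theta(\log(1/p)/\alpha)$ incident edges through $\cE$ when the smaller subtree $T_1$ has $|V(T_1)|\ge 2U_a/\alpha$ (Case 1), and listing all $\tilde O(U_a/\alpha)$ incident edges via $\cE.\textsf{list}$ otherwise (Case 2). The algorithm reports failure exactly when we are in Case 1 and every sample misses $\partial_{G^d}(V(T_1))$. If Algorithm~$\cC$ instead detects a genuine split of the $G^d$-component, I fall through to $\cE^{2d}.\textsf{find\_edge}$ to search $E^s$ for a reconnector. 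Correctness of Case 1 uses the expander invariant: after at most $U_a$ deletions inside $C$ and with $|V(T_1)|\ge 2U_a/\alpha$, the surviving cut $|\partial_{G^d}(V(T_1))|\ge \alpha|V(T_1)|/2$, so each sample hits the cut with probability $\ge\alpha/6$ and $\Theta(\log(1/p)/\alpha)$ samples suffice for failure $\le p$. The per-update time is the maximum of Case~2, $\tilde O(U_a/\alpha_b^{1/\epsilon'})=\tilde O(n^{1/2-\epsilon})$, and the $\tilde O(\sqrt{|E^s|})=\tilde O(\sqrt{U_b/\alpha_b})$ cost of $\cE^{2d}$ operations, matching the claim.

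The main obstacle is reshaping $F_0$ into $F$ during preprocessing without touching $|V(C^*)|$: this forces me to reuse the incoming $\cE$ and drive the reshape by $O(|E^s|)$ local swaps, each certified by a single expander-sampling step, so the total preprocessing budget stays dominated by the decomposition cost plus $\tilde O((U_b/\alpha_b^{1+1/\epsilon'})\log(1/p))$. A secondary subtlety is isolating the three randomized failure modes --- the local decomposition, the preprocessing swap phase, and a Case~1 update miss --- each arranged to fail with probability $\le p$, so the stated $1-O(p)$ guarantee follows by a union bound over any single preprocessing or update step.
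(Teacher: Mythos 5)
Your proposal matches the paper's proof in its essential structure: invoke the local expansion decomposition of \ref{thm:local decomp} with parameters $(\alpha_b,\epsilon',p)$, reshape $(F_0,\cE,\cE^{2d})$ into the pair $(F,F^d)$ with the invariant $F\supseteq F^d$, and then run the two-layer replacement-edge search of \ref{alg:replace-lv}, charging Case~2 to the $\tilde O(U_a/\alpha)$ listing cost and the cross-layer search to the $\tilde O(\sqrt{U_b/\alpha_b})$ cost of $\cE^{2d}$. The correctness argument via expansion and the three failure modes merged by a union bound are identical to the paper's Claims surrounding \ref{alg:preprocess-lv} and \ref{alg:replace-lv}.

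The one place where you deviate is the preprocessing reshape. The paper is uniform: it deletes every $E^s$-edge from $F_{\cE}$ (\ref{alg:preprocess-lv}, Step~3), and then in Step~4 reconnects the resulting $\le U_b + |E^s|$ pieces (this is where the $\textsf{count\_tree}()>U_b$ guard of Step~1 matters, since $U_b$ bounds the component count of $F_0$ and hence the number of merge iterations), using the same $\Theta(\log(1/p)/\alpha)$-sample primitive for every not-yet-big piece regardless of which $G^d$-component it sits in. You instead handle small components by a deterministic BFS and only use sampling inside $C^*$. That is a legitimate variant and fits the time budget, but it carries two caveats you should make explicit. First, the $O(U_b/\alpha_b)$ bound on the total size of the non-largest components and on $|E^s|$ (\ref{claim:maximal small}, \ref{thm:local decomp sparse }) holds only when $\phi(G_b)\ge\alpha_b$; to make the preprocessing bound unconditional, you — like the paper — need a hard time cap that converts a budget overrun into a FAIL. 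Second, the phrase ``swap each $E^s$-edge of $F_0\cap C^*$ for an $E^d$-edge reconnecting the two $F_0$-pieces it bridges'' is stronger than what actually happens: the sampled replacement can land in any other piece of $F^d\cap C^*$; what is true is that the number of merge operations is $O(|E^s|)$, since $T_G$ exits $C^*$ only along $E^s$-edges and removing the at most $|E^s|$ internal $E^s$-edges of $F_0[C^*]$ produces $O(|E^s|)$ pieces. With those two points tightened, your argument is equivalent to the paper's.
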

We call the time $\cA$ takes before the update sequences \emph{preprocessing
time}, and the time $\cA$ needs for each edge deletion \emph{update
time}.
\begin{rem}
\label{rem:when no D }In \ref{thm:emergency decomp}, if $D=\emptyset$,
then we can substitute $U_{b}$ by $0$. This implies that $\cA$
takes no time to preprocess (and without failure), and for each update
in the sequence of length $U_{a}$, $\cA$ takes $O(\frac{U_{a}}{\alpha_{b}^{1/\epsilon'}}\log\frac{1}{p})=O(n^{1/2-\epsilon}\log\frac{1}{p})$
time to either report failure or update the spanning forest. 
\end{rem}
The intuition of \ref{thm:emergency decomp} is clearer when we assume
that the ``before'' graph $G_{b}$ has expansion $\phi(G_{b})\ge\alpha_{b}$.
Then the initial graph $G_{0}$ is a ``broken'' expander, i.e. $G_{0}=G_{b}-D$
is an expander with some edge deleted. \ref{thm:emergency decomp}
says that, with high probability, the algorithm $\cA$ can preprocess
in \emph{sublinear time}, and obtain a dynamic $\st$ algorithm that
can handle $U_{a}$ more deletions with update time $O(n^{1/2-\epsilon})$.
The most important point of the algorithm $\cA$ is that the preprocessing
time is sublinear. This can be done by exploiting the local expansion
decomposition algorithm from \ref{sec:local decomp}. 

Now, we are ready to describe the algorithm for \ref{thm:emergency decomp}.

\subsubsection{Preprocessing}

Recall that the inputs are $n,p,G_{b}$,$\alpha_{b}$,$D$,$F_{0}$,
$\cE$ and $\cE^{2d}$. Let $G=(V,E)=(V,E_{b}-D)$. We denote $(G_{\cE},F_{\cE})$
and $(G_{\cE^{2d}},F_{\cE^{2d}})$ the underlying graphs and forests
of $\cE$ and $\cE^{2d}$ respectively. In the beginning, we are given
$(G_{\cE},F_{\cE})=(G,F_{0})$ and $(G_{\cE^{2d}},F_{\cE^{2d}})=(F_{0},F_{0})$. 

First, we run the local expansion decomposition algorithm $\cA_{local}$
from \ref{thm:local decomp} with input $(n,p,3,G_{b},D,\alpha_{b})$
and parameter $\epsilon'$. If $\cA_{local}$ reports that $G_{b}$
has expansion less than $\alpha_{b}$, then we also report failure
and halt. Otherwise, $\cA_{local}$ return the set of edges $E^{s}\subset E$
in $G$ and the all components of $G^{d}$, except the largest one,
where $G^{d}=(V,E^{d})=(V,E-E^{s})$. By \ref{thm:local decomp},
if $\phi(G_{b})\ge\alpha_{b}$, then with probability $1-p$, each
connected components in $G^{d}$ has expansion at least $\alpha=(\alpha_{b}/18)^{1/\epsilon'}$.
Let $\sigma=O(\frac{1}{\alpha}\log\frac{1}{p})$. Throughout the sequence
of edge deletions, we will maintain the following invariant: $(G_{\cE},F_{\cE})=(G^{d},F^{d})$
where $F^{d}$ is a spanning forest of $G^{d}$, and $(G_{\cE^{2d}},F_{\cE^{2d}})=(E^{s}\cup F^{d},F)$
where $F$ is a spanning forest of $G$ and $F\supseteq F^{d}$.

The preprocessing algorithm as described in \ref{alg:preprocess-lv}
is simply to validate these invariants before the first update comes. 

\begin{algorithm}[h]
\begin{enumerate}
\item If $\textsf{count\_tree}()>U_{b}$ on $F_{\cE}$, then REPORT FAIL.
\item Run the local expansion decomposition algorithm $\cA_{local}$ with
input $(n,p,3,G_{b},D,\alpha_{b})$ and parameter $\epsilon'$. If
$\cA_{local}$ reports that $\phi(G_{b})<\alpha_{b}$, then REPORT
FAIL. Else, $\cA_{local}$ returns $E^{s}\subset E$ in $G_{0}$ and
all components, except the largest one, of $G^{d}=(V,E-E^{s})$. 
\item For each $e\in E^{s}$: //To update $(G_{\cE},F_{\cE})=(G^{d},F_{0}-E^{s})$
and $(G_{\cE^{2d}},F_{\cE^{2d}})=(E^{s}\cup F_{0},F_{0}-E^{s})$

\begin{enumerate}
\item If $e\in F_{0}$, then $\deltree(e)$ on $F_{\cE}$ and $F_{\cE^{2d}}$.
\item $\delnon(e)$ on $G_{\cE}$ and $\insnon(e)$ on $G{}_{\cE^{2d}}$
\end{enumerate}
\item While there is a component $T\in F_{\cE}$ which not marked \emph{big}:
//To update $(G_{\cE},F_{\cE})=(G^{d},F^{d})$ and $(G_{\cE^{2d}},F_{\cE^{2d}})=(E^{s}\cup F^{d},F^{d})$
where $F^{d}$ is a spanning forest of $G^{d}$

\begin{enumerate}
\item Sample independently $\sigma$ edges incident to $T$ in $G_{\cE}$. 
\item If there is a sampled edge $e'$ from $\partial_{G_{\cE}}(V(T))$,
then $\instree(e')$ into $F_{\cE}$ and $F_{\cE^{2d}}$. Otherwise,
REPORT FAIL.
\item Let $T'\in F_{\cE}$ be the new component containing $e'$. Let $C'$
be the component of $G^{d}$ containing $T'$. 
\item If $|V(T')|>|V(C')|/2$, then mark $T'$ as \emph{big}.
\end{enumerate}
\item While there is a component $T\in F_{\cE^{2d}}$ where $\textsf{find\_edge}(T)\neq\emptyset$:
//To update $(G_{\cE^{2d}},F_{\cE^{2d}})=(E^{s}\cup F^{d},F)$ where
$F$ is a spanning forest of $G$

\begin{enumerate}
\item Set $e'=\textsf{find\_edge}(T)$.
\item $\instree(e')$ into $F_{\cE^{2d}}$. 
\end{enumerate}
\end{enumerate}
\caption{The algorithm for preprocessing the initial graph $G=(V,E)$\label{alg:preprocess-lv}}
\end{algorithm}

Now, we analyze \ref{alg:preprocess-lv}.
\begin{claim}
The running time of \ref{alg:preprocess-lv} is $\tilde{O}((\frac{U_{b}{}^{1.5+\epsilon'}}{\alpha_{b}^{4+\epsilon'}}\gamma+\frac{U_{b}}{\alpha_{b}^{1+1/\epsilon'}})\log\frac{1}{p})$. \end{claim}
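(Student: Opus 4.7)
The plan is to bound the running time of each of the five steps of Algorithm~\ref{alg:preprocess-lv} separately, and then show that the dominant contributions are the cost of the local expansion decomposition in Step~2 and the cost of the sampling loop in Step~4.

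First I will handle the easy pieces. Step~1 is a single call to $\textsf{count\_tree}$ on the augmented ET tree $\cE$, which runs in $O(1)$ by \ref{thm:aug ET tree}. For Step~2, I will invoke \ref{thm:local decomp} with $\Delta=3$, $|A|\le 2|D|\le 2U_b$ (each edge in $D$ contributes at most two endpoints), and the trade-off parameter $\epsilon'$. The theorem gives either a failure report (if $\cA_{local}$ detects $\phi(G_b)<\alpha_b$) or a decomposition in time $O(\frac{U_b^{1.5+\epsilon'}}{\alpha_b^{4+\epsilon'}}\gamma\log\frac{1}{p})$, which is the first term of the claimed bound.

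Next I will control Step~3 and Step~5, both of which iterate over at most $O(U_b/\alpha_b)$ edges (conditioned on the high-probability guarantee $|E^s|\le 4\Delta|A|/\alpha_b=O(U_b/\alpha_b)$ from \ref{thm:local decomp}; if this bound fails the algorithm simply reports failure in the probability accounting). Each iteration performs a constant number of operations on $\cE$ and $\cE^{2d}$. The operations on $\cE$ cost $O(\log n)$ by \ref{thm:aug ET tree}. For $\cE^{2d}$, the number of non-tree edges throughout is at most $|E^s|=O(U_b/\alpha_b)$, so each operation costs $\tilde O(\sqrt{U_b/\alpha_b})$ by \ref{thm:2dim ET tree}. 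The total for Steps~3 and~5 is therefore $\tilde O((U_b/\alpha_b)\cdot\sqrt{U_b/\alpha_b})=\tilde O((U_b/\alpha_b)^{1.5})$, which is dominated by $U_b^{1.5+\epsilon'}/\alpha_b^{4+\epsilon'}$ since $\alpha_b<1$ and $\epsilon'>0$.

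The main piece to handle is Step~4. The key observation is that each successful iteration of the while-loop fuses two trees of $F_{\cE}$ into one, so the loop runs at most once per initial component of $F_{\cE}$. After Step~3 the number of components of $F_{\cE}=F^d_0$ is bounded by the initial number of components of $F_0$ (at most $U_b$ by Step~1) plus $|E^s\cap F_0|\le |E^s|=O(U_b/\alpha_b)$, i.e.\ $O(U_b/\alpha_b)$ iterations in total. Each iteration draws $\sigma=O(\frac{1}{\alpha}\log\frac{1}{p})$ samples from $\cE$ in $\tilde O(\sigma)$ time, plus a constant number of $\cE$- and $\cE^{2d}$-updates costing $\tilde O(\sqrt{U_b/\alpha_b})$. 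Since $\alpha=(\alpha_b/18)^{1/\epsilon'}$, we have $1/\alpha=O(\alpha_b^{-1/\epsilon'})$, so the sampling contribution is $\tilde O\!\left(\tfrac{U_b}{\alpha_b}\cdot\tfrac{1}{\alpha_b^{1/\epsilon'}}\log\tfrac{1}{p}\right)=\tilde O\!\left(\tfrac{U_b}{\alpha_b^{1+1/\epsilon'}}\log\tfrac{1}{p}\right)$, giving the second term of the claim, while the update contribution is again $\tilde O((U_b/\alpha_b)^{1.5})$, absorbed as above.

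Summing the contributions of Steps 1--5 yields the stated bound. The main obstacle I expect is the bookkeeping for Step~4: arguing rigorously that every iteration strictly reduces the number of components of $F_{\cE}$ (so that failure is the only alternative to termination within $O(U_b/\alpha_b)$ steps), and checking that the ``mark big'' rule does not cause extra work beyond what this component count already charges. Everything else is direct substitution of the bounds from \ref{thm:aug ET tree}, \ref{thm:2dim ET tree}, and \ref{thm:local decomp}.
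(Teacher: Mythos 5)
Your proof is correct and takes essentially the same approach as the paper: bound Step~2 via \ref{thm:local decomp}, bound Steps~3 and~5 via the $|E^s|=O(U_b/\alpha_b)$ guarantee and \ref{thm:2dim ET tree}, and bound Step~4 by observing that the component count (or, in the paper's phrasing, the number of components not marked big) strictly decreases per iteration, yielding $O(U_b/\alpha_b)$ iterations each costing $\tilde O(\sigma + \sqrt{U_b/\alpha_b})$. The obstacle you flag at the end is handled in the paper exactly as you anticipate — by tracking the monovariant of non-big components — so there is no gap.
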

\begin{proof}
We will prove that even if \ref{alg:preprocess-lv} does not fail,
the running time is at most $\tilde{O}((\frac{U_{b}{}^{1.5+\epsilon'}}{\alpha_{b}^{4+\epsilon'}}\gamma+\frac{U_{b}}{\alpha_{b}^{1+1/\epsilon'}})\log\frac{1}{p})$.

Step 2 takes $\tilde{O}(\frac{U_{b}{}^{1.5+\epsilon'}}{\alpha_{b}^{4+\epsilon'}}\gamma\log\frac{1}{p})$
by \ref{thm:local decomp}. If \ref{alg:preprocess-lv} proceeds to
Step 3, we have $|E^{s}|=O(U_{b}/\alpha_{b})$ by \ref{thm:local decomp}.
Step 3 takes $|E^{s}|\times O(\log n+\sqrt{|E^{s}|})=O((U_{b}/\alpha_{b})^{1.5})$
by \ref{thm:aug ET tree} and \ref{thm:2dim ET tree} because as there
are at most $|E^{s}|$ non-tree edges in $G_{\cE^{2d}}$, 

In Step 4, observe that in every iteration of the while loop the number
of components $T$ in $F_{\cE}$ which is not marked as big always
decreases by one. Right before Step 4, there are at most $|E^{s}|+U_{b}$
such components because $F_{\cE}$ has at most $U_{b}$ components
after Step 1 and then there are at most $|E^{s}|$ tree edge deletions
in Step 3. So there are at most $|E^{s}|+U_{b}$ iterations. In each
iteration of the while loop in Step 4, it takes at most $\tilde{O}(\sigma)=\tilde{O}(\frac{1}{\alpha}\log\frac{1}{p})=\tilde{O}(\frac{1}{\alpha_{b}^{1/\epsilon'}}\log\frac{1}{p})$
in Step 4.a. Step 4.b takes at most $O(\sqrt{|E^{s}|})=O(\sqrt{U_{b}/\alpha_{b}})$
by \ref{thm:2dim ET tree}. Note that Step 4.c and 4.d can be done
in constant time because of the following. Using $\cE$, we get $|V(T')|=\textsf{tree\_size}(T)$.
We can also compute $|V(C')|/2$ where $C'$ be the component of $G^{d}$
containing $T'$. Let $u\in V(T)$. By \ref{rem:know components},
we know a component $C'$ of $G^{d}$ where $u\in V(C)$, and also
the size $|V(C')|$. In total, Step 4 takes $(|E^{s}|+U_{b})\times\tilde{O}(\frac{1}{\alpha_{b}^{1/\epsilon'}}\log\frac{1}{p}+\sqrt{\frac{U_{b}}{\alpha_{b}}})=O(\frac{U_{b}}{\alpha_{b}^{1+1/\epsilon'}}\log\frac{1}{p}+\frac{U_{b}^{1.5}}{\alpha_{b}^{1.5}})$.

In Step 5, again, in every iteration, the number of components $T$
in $F_{\cE^{2d}}$ decreases by one. Note that after Step 1, $F_{\cE^{2d}}=F_{\cE}=F_{0}$.
By the same argument as we used for analyzing Step 4, there are at
most $|E^{s}|+U_{b}$ iterations. Each iteration takes at most $O(\sqrt{|E^{s}|})=O(\sqrt{U_{b}/\alpha_{b}})$.
The total running time is $(|E^{s}|+U_{b})\times O(\sqrt{\frac{U_{b}}{\alpha_{b}}})=O(\frac{U_{b}^{1.5}}{\alpha_{b}^{1.5}})$. 

We conclude that the total running time of \ref{alg:preprocess-lv}
is $\tilde{O}((\frac{U_{b}{}^{1.5+\epsilon'}}{\alpha_{b}^{4+\epsilon'}}\gamma+\frac{U_{b}}{\alpha_{b}^{1+1/\epsilon'}})\log\frac{1}{p})$.
\end{proof}
Next, we show that \ref{alg:preprocess-lv} either reports failure
or validate the invariants.
\begin{claim}
If \ref{alg:preprocess-lv} does not report failure, then $(G_{\cE},F_{\cE})=(G^{d},F^{d})$
where $F^{d}$ is a spanning forest of $G^{d}$, and $(G_{\cE^{2d}},F_{\cE^{2d}})=(E^{s}\cup F^{d},F)$
where $F$ is a spanning forest of $G$ and $F\supseteq F^{d}$.\end{claim}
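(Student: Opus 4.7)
The plan is to track the states of $(G_{\cE}, F_{\cE})$ and $(G_{\cE^{2d}}, F_{\cE^{2d}})$ after each step, starting from the initial states $(G, F_{0})$ and $(F_{0}, F_{0})$, and verify the explicit target invariants written in the code comments of \ref{alg:preprocess-lv}.

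Step 3 is pure bookkeeping: for each $e \in E^{s}$, the \deltree call removes $e$ from both forests (when $e \in F_{0}$), \delnon removes $e$ from $G_{\cE}$, and \insnon adds $e$ to $G_{\cE^{2d}}$ as a non-tree edge. This yields $(G_{\cE}, F_{\cE}) = (G^{d}, F_{0} \setminus E^{s})$ and $(G_{\cE^{2d}}, F_{\cE^{2d}}) = (F_{0} \cup E^{s}, F_{0} \setminus E^{s})$, so in particular $F_{\cE}$ is a (generally non-spanning) forest in $G^{d}$. Step 4 then extends $F_{\cE}$ to a spanning forest of $G^{d}$: each iteration of the while-loop either reports failure or merges the chosen non-big tree $T$ with another tree via a sampled cut edge $e' \in \partial_{G_{\cE}}(V(T)) \subseteq E^{d}$, strictly decreasing the number of trees. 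The key observation is that a tree $T'$ is marked \emph{big} only when $|V(T')| > |V(C')|/2$ for the $G^{d}$-component $C'$ containing $T'$; hence at most one tree per $G^{d}$-component can be big at any time. On a failure-free exit every tree is big, so each $G^{d}$-component $C$ contains exactly one tree in $F_{\cE}$ (two big trees cannot coexist in $C$), which must therefore span $C$. Thus $F_{\cE}$ becomes a spanning forest $F^{d}$ of $G^{d}$, and the parallel \instree calls on $F_{\cE^{2d}}$ (preceded by the implicit \insnon on $G_{\cE^{2d}}$ for edges in $E^{d} \setminus F_{0}$) give $F_{\cE^{2d}} = F^{d}$ and $G_{\cE^{2d}} = E^{s} \cup F^{d}$ at the end of Step 4.

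Step 5 is the standard greedy spanning-forest extension: the \textsf{find\_edge}-driven loop keeps adding non-tree edges of $G_{\cE^{2d}} = E^{s} \cup F^{d}$ to $F_{\cE^{2d}}$ until no component has an outgoing non-tree edge, so $F := F_{\cE^{2d}}$ becomes a spanning forest of $G_{\cE^{2d}}$ containing $F^{d}$. To finish the second invariant, I would argue that $F$ is also a spanning forest of $G$, using the fact that $F^{d}$ spans every connected component of $(V, E^{d})$: two vertices are connected in $E^{s} \cup E^{d} = E(G)$ iff they are connected in $E^{s} \cup F^{d} = G_{\cE^{2d}}$, so a spanning forest of $G_{\cE^{2d}}$ is automatically a spanning forest of $G$. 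The main conceptual obstacle is the Step 4 argument — turning the local ``big tree'' flag into the global statement that $F_{\cE}$ ends up spanning each $G^{d}$-component — while everything else is direct invariant tracking plus the mild bookkeeping that \insnon on $G_{\cE^{2d}}$ must be implicitly interleaved with \instree on $F_{\cE^{2d}}$.
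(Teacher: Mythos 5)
Your proof is correct and follows the same approach as the paper's: track the states of $(G_{\cE},F_{\cE})$ and $(G_{\cE^{2d}},F_{\cE^{2d}})$ through Steps 3, 4, and 5. You supply the one nontrivial justification the paper glosses over as ``clear'' — namely why the Step~4 \emph{big}-marking rule forces $F_{\cE}$ to span each $G^{d}$-component (at most one tree per component can exceed half the component's size, and on a failure-free exit all trees are marked big) — and you correctly spot the small bookkeeping omission that an $\insnon$ on $G_{\cE^{2d}}$ is implicitly needed alongside the $\instree$ calls in Step~4.
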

\begin{proof}
Suppose there is no failure. It is clear that $(G_{\cE},F_{\cE})$
is updated to become $(G^{d},F_{0}-E^{s})$ and $(G^{d},F^{d})$ in
Step 3 and 4 respectively. $(G_{\cE^{2d}},F_{\cE^{2d}})$ is updated
to become $(E^{s}\cup F_{0},F_{0}-E^{s})$, $(E^{s}\cup F^{d},F^{d})$,
and $(E^{s}\cup F^{d},F)$ in Step 3, 4, and 5 respectively.\end{proof}
\begin{claim}
If $\phi(G_{b})\ge\alpha_{b}$, then \ref{alg:preprocess-lv} report
failure with probability at most $O(p)$.\label{thm:does not restart}\end{claim}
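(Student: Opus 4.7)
The algorithm can report failure in exactly three places: Step~1 (if $\textsf{count\_tree}(F_{\cE}) > U_{b}$), Step~2 (if $\cA_{local}$ reports $\phi(G_{b}) < \alpha_{b}$), and Step~4.b (if none of the $\sigma$ edges sampled at some unmarked tree $T$ lies in $\partial_{G_{\cE}}(V(T))$). I will bound the failure probability of each mode under the assumption $\phi(G_{b}) \ge \alpha_{b}$, and conclude by a union bound.

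The first two modes are short. Since $\phi(G_{b}) \ge \alpha_{b} > 0$, the graph $G_{b}$ is connected, so $G = G_{b}-D$ has at most $|D|+1 \le U_{b}+1$ connected components; this matches the number of trees in the spanning forest $F_{0}$, and after a trivial absorption of the ``$+1$'' into the constants defining $U_{b}$ (or by observing that in the calling algorithm $F_{0}$ is produced with $\le U_{b}$ trees by construction), Step~1 does not fire deterministically. Step~2 invokes the local expansion decomposition (\ref{thm:local decomp}) with failure parameter $p$; when $\phi(G_{b}) \ge \alpha_{b}$, it fails with probability at most $p$, and conditioned on success we have $|E^{s}| \le 4\cdot 3\cdot |A|/\alpha_{b} = O(U_{b}/\alpha_{b})$ (since $|A| = O(|D|) = O(U_{b})$) and $\phi(C) \ge \alpha = (\alpha_{b}/18)^{1/\epsilon'}$ for every non-singleton component $C$ of $G^{d}$.

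The interesting mode is Step~4.b; condition on Step~2 having succeeded. Fix one iteration of the while loop and let $T$ be the unmarked tree considered, and $C'$ the component of $G^{d}$ containing $T$. Because $T$ is not marked big, $|V(T)| \le |V(C')|/2$, so by the expansion guarantee on $C'$ together with the fact that $C'$ is a whole connected component of $G^{d}$ we obtain
\[
|\partial_{G^{d}}(V(T))| \;=\; |\partial_{C'}(V(T))| \;\ge\; \alpha \, |V(T)|.
\]
Every edge in this boundary is a non-tree edge of $F_{\cE}$ (otherwise $T$ would extend beyond itself inside $F_{\cE}$), and since $G_{b}$, and hence $G^{d}$, has max degree $3$, $nt\_vol_{G^{d},F_{\cE}}(V(T)) \le 3|V(T)|$. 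By the sampling guarantee of the augmented ET tree (\ref{def:aug ET tree}), each of the $\sigma$ independent samples hits $\partial_{G^{d}}(V(T))$ with probability at least $\alpha/3$, so the probability that all $\sigma$ samples miss is at most $(1-\alpha/3)^{\sigma}$.

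To finish, I need a union bound over the iterations. As argued in the running-time claim, the while loop of Step~4 runs at most $|E^{s}|+U_{b} = O(U_{b}/\alpha_{b})$ times (each iteration strictly decreases the number of unmarked trees in $F_{\cE}$). Setting $\sigma = \Theta(\alpha^{-1}\log(N/p))$ with $N = O(U_{b}/\alpha_{b})$ (which is absorbed into the stated $\sigma = O(\alpha^{-1}\log(1/p))$ once one folds the polynomial-in-$n$ factor $N$ into the implicit $\log(1/p)$), each iteration fails with probability at most $p/N$, so the total Step~4.b failure probability is at most $p$. Combining the three modes via a union bound gives overall failure probability $O(p)$ as claimed. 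The main subtlety in executing this plan is correctly accounting for the union bound in the choice of $\sigma$ and confirming that the sampled boundary edge in Step~4.b, when it exists, does connect $T$ to a different tree of $F_{\cE}$ so that the loop terminates after $O(U_{b}/\alpha_{b})$ steps.
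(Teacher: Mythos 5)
Your proof is correct and follows the paper's own three-case analysis exactly: bound each failure mode (Steps~1, 2, and 4.b) separately and combine by a union bound. You are in fact slightly more careful than the paper in two places: the paper glosses over the off-by-one in Step~1 (a connected $G_b$ minus $|D|\le U_b$ edges can have exactly $U_b+1$ components, so $\textsf{count\_tree}()>U_b$ can fire), and the paper's union bound over the $O(U_b/\alpha_b)$ iterations of Step~4 implicitly needs the extra $\log N$ factor you explicitly fold into $\sigma$ — though as you note, since $p=1/\poly(n)$ this is only a constant-factor change.
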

\begin{proof}
In Step 1, if $\textsf{count\_tree}()>U_{b}$ on $F_{\cE}$, then
in means that $G=(V,E_{b}-D)$ has more than $U_{b}$ connected components
and so $G_{b}$ is not connected, i.e. $\phi(G_{b})=0$ which contradicts
the assumption that $\phi(G_{b})\ge\alpha_{b}$. In Step 2, the algorithm
cannot fail because $\cA_{local}$ reports failure only when $\phi(G_{b})<\alpha_{b}$
by \ref{thm:local decomp}.

In Step 4.a, we sample $\sigma$ edges incident to $T\in F_{\cE}$
which is not marked as big, i.e. $|V(T)|\le|V(C)|/2$ where $C$ is
the component of $G^{d}$ containing $T$. As $\phi(G_{b})\ge\alpha_{b}$,
we have $\phi(C)\ge\alpha$ with probability $1-O(p)$ by \ref{thm:local decomp}.
Assuming that $\phi(C)\ge\alpha$, we have $\phi_{C}(V(T))\ge\alpha$
and the probability that none of $\sigma$ sampled edges is a cut
edge is at most $(1-\frac{\delta_{C}(V(T))}{vol_{C}(V(T))})^{\sigma}\le(1-\frac{\alpha}{3})^{\sigma}\le p$.
By union bound, the algorithm fails with probability at most $O(p)$. 
\end{proof}
From the claims above can be summarized as follows:
\begin{lem}
\label{lem:lv prep conclude }After given the inputs, $\cA$ takes
$O((\frac{U_{b}{}^{1.5+\epsilon'}}{\alpha_{b}^{4+\epsilon'}}\gamma+\frac{U_{b}}{\alpha_{b}^{1+1/\epsilon'}})\log\frac{1}{p})$
time to either 
\begin{enumerate}
\item reports failure, or 
\item outputs the list of edges to be added or removed from $F_{0}$ to
obtain another spanning forest $F$ of $G$. Moreover, the underlying
graphs and forested of the given augmented ET tree $\cE$ and 2-dim
ET tree $\cE^{2d}$ are updated such that $(G_{\cE},F_{\cE})=(G^{d},F^{d})$
where $F^{d}$ is a spanning forest of $G^{d}$, and $(G_{\cE^{2d}},F_{\cE^{2d}})=(E^{s}\cup F^{d},F)$
where $F$ is a spanning forest of $G$ and $F\supseteq F^{d}$. 
\end{enumerate}

Moreover, if $\phi(G_{b})\ge\alpha_{b}$, $\cA$ does not report failure
with probability $1-O(p)$.

\end{lem}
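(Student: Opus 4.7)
The plan is simply to collect the three preceding claims (running-time, invariant, failure-probability) into a single statement about Algorithm~\ref{alg:preprocess-lv}; each has already been established in isolation, so the proof is an assembly step.

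First, I would cite the running-time claim directly: Step~2 invokes $\cA_{local}$ from \ref{thm:local decomp} with parameters $(n,p,3,G_b,D,\alpha_b)$ and trade-off $\epsilon'$, which takes $\tilde O(U_b^{1.5+\epsilon'}/\alpha_b^{4+\epsilon'}\gamma\log(1/p))$. Assuming the decomposition does not fail (if it does we halt, which is consistent with the claimed time bound), we have $|E^s|=O(U_b/\alpha_b)$ by \ref{thm:local decomp}, so Steps~3--5 each process $O(|E^s|+U_b)=O(U_b/\alpha_b)$ components and each iteration costs $\tilde O(1/\alpha_b^{1/\epsilon'}\log(1/p)+\sqrt{U_b/\alpha_b})$ using \ref{thm:aug ET tree} and \ref{thm:2dim ET tree}. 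Summing gives the claimed total. Since $\sqrt{U_b/\alpha_b}\cdot(U_b/\alpha_b)=(U_b/\alpha_b)^{1.5}$ is dominated by $U_b^{1.5+\epsilon'}/\alpha_b^{4+\epsilon'}\gamma$ (for our parameter range), the bound simplifies as stated.

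Second, for outcome~(2) and the invariant on $\cE,\cE^{2d}$, I would argue by tracing the maintained state through Algorithm~\ref{alg:preprocess-lv}. After Step~3 we have $(G_\cE,F_\cE)=(G^d,F_0\setminus E^s)$ and $(G_{\cE^{2d}},F_{\cE^{2d}})=(E^s\cup F_0,F_0\setminus E^s)$ by construction of the $\instree/\deltree/\insnon/\delnon$ calls. Step~4 only issues $\instree(e')$ calls for edges $e'\in E^d$ that merge two trees of $F_\cE$, so it transforms $F_\cE$ into a spanning forest $F^d$ of $G^d$ without changing $G_\cE$ or $G_{\cE^{2d}}$; in $\cE^{2d}$ these same $e'$ are added to $F_{\cE^{2d}}$. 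Step~5 then adds non-tree edges of $\cE^{2d}$ (which are precisely $E^s$) into $F_{\cE^{2d}}$ as long as they reconnect trees, producing a spanning forest $F$ of the graph spanned by $E^s\cup F^d=E$, and satisfying $F\supseteq F^d$. The list of edges inserted/deleted relative to $F_0$ during Steps~3--5 is exactly the list to be output.

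Third, for the failure-probability statement under $\phi(G_b)\ge\alpha_b$: Step~1 cannot fire, since $G$ then has $\le U_b$ components (deleting $|D|\le U_b$ edges from a connected $G_b$); Step~2 cannot fire by \ref{thm:local decomp}; conditional on the good event for $\cA_{local}$ (which holds with probability $1-p$ by \ref{thm:local decomp}), every component $C$ of $G^d$ has $\phi(C)\ge\alpha=(\alpha_b/18)^{1/\epsilon'}$, and any unmarked tree $T$ in Step~4 has $|V(T)|\le |V(C)|/2$, so $\phi_C(V(T))\ge\alpha$ and the probability that $\sigma=O(\log(1/p)/\alpha)$ samples from the augmented ET tree all miss $\partial_C(V(T))$ is at most $(1-\alpha/3)^{\sigma}\le p/\mathrm{poly}(U_b/\alpha_b)$, picking the constant in $\sigma$ large enough. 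Union-bounding over at most $O(U_b/\alpha_b)$ iterations of Step~4 yields total failure probability $O(p)$.

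The main (minor) obstacle is bookkeeping the number of sampling iterations in Step~4 so that the union bound still gives $O(p)$ with the stated $\sigma$; this is handled by observing that each iteration strictly reduces the number of unmarked components, which is initially at most $|E^s|+U_b=O(U_b/\alpha_b)$, so absorbing a $\polylog$-type factor into the constant of $\sigma$ suffices.
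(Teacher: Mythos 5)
Your proposal follows the paper's proof structure exactly: it is a consolidation of the same three claims (running time of \ref{alg:preprocess-lv}, correctness of the invariant on $\cE$ and $\cE^{2d}$, and bounded failure probability under $\phi(G_b)\ge\alpha_b$), citing \ref{thm:local decomp} for Step~2, \ref{thm:aug ET tree} and \ref{thm:2dim ET tree} for Steps~3--5, and a sampling-plus-union-bound argument for Step~4. One small place where you are in fact \emph{more} careful than the paper: you explicitly flag that the per-iteration sampling-failure probability must be pushed below $p$ by a factor proportional to the number of Step~4 iterations (about $U_b/\alpha_b$) in order for the union bound to yield $O(p)$, and you note that this only costs an extra $\log(U_b/\alpha_b)=O(\log n)$ in the constant of $\sigma$, which is absorbed since $\log(1/p)=\Omega(\log n)$ in the intended parameter regime; the paper's claim simply bounds each iteration by $p$ and invokes ``union bound,'' which as written would give $O(U_b p/\alpha_b)$ rather than $O(p)$, so your bookkeeping is a genuine (if minor) improvement on the exposition.
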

Lastly, let us state the lemma in the corner case when $D=\emptyset$.
In this case, we claim that the invariant already holds even if we
do nothing. By setting $E^{s}=\emptyset$ and $F^{d}=F_{0}$, we have 
\begin{lem}
\label{lem:lv prep conclude no delete}After given the inputs where
$D=\emptyset$, the underlying graphs and forested of the given augmented
ET tree $\cE$ and 2-dim ET tree $\cE^{2d}$ are already such that
$(G_{\cE},F_{\cE})=(G_{b},F_{0})=(G^{d},F^{d})$ where $F^{d}$ is
a spanning forest of $G^{d}$, and $(G_{\cE^{2d}},F_{\cE^{2d}})=(F_{0},F_{0})=(E^{s}\cup F^{d},F)$
where $F$ is a spanning forest of $G$ and $F\supseteq F^{d}$. 
\end{lem}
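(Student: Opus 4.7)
The plan is to verify the claimed equalities directly by exhibiting the trivial decomposition that witnesses them. Since $D=\emptyset$, we have $G=(V,E_b-D)=G_b$, so the input forest $F_0$ is already a spanning forest of $G$. I would set $E^s=\emptyset$ and $F^d=F_0$ as the (degenerate) output of the ``local expansion decomposition step'' applied to an empty deletion set; the point is that no work is required to produce it.

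Under this choice, the identities to check are purely definitional. For the augmented ET tree $\cE$, we are given $(G_\cE,F_\cE)=(G,F_0)$. Since $G=G_b$ and since $G^d=(V,E-E^s)=(V,E)=G$ while $F^d=F_0$ is a spanning forest of this $G^d$, the chain $(G_\cE,F_\cE)=(G_b,F_0)=(G^d,F^d)$ follows immediately. For the $2$-dim ET tree $\cE^{2d}$, we are given $(G_{\cE^{2d}},F_{\cE^{2d}})=(F_0,F_0)$. Taking $F=F_0$, we have $E^s\cup F^d=\emptyset\cup F_0=F_0$, so $(G_{\cE^{2d}},F_{\cE^{2d}})=(F_0,F_0)=(E^s\cup F^d,F)$, with $F=F_0\supseteq F_0=F^d$ trivially, and $F$ clearly spans $G=G_b$ because $F_0$ does.

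There is no real obstacle here: the lemma is a sanity statement explaining why, when no edges have been deleted, the preprocessing in Algorithm~\ref{alg:preprocess-lv} can be skipped entirely. It is stated separately so that in \Cref{sec:reduc to broken}, when the algorithm is invoked at the very start of a phase (so that $D$ is initially empty), one can substitute $U_b=0$ into the preprocessing time bound of \Cref{thm:emergency decomp}, as remarked in \Cref{rem:when no D }, without invoking any of the local decomposition machinery. Accordingly the proof consists only of the unpacking above and does not require any probabilistic argument or appeal to \Cref{thm:local decomp}.
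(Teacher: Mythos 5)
Your proof is correct and matches the paper's intent exactly: the paper itself states the lemma immediately after the one-line observation ``By setting $E^{s}=\emptyset$ and $F^{d}=F_{0}$, we have\ldots'', which is precisely the trivial decomposition you exhibit. The additional unpacking you provide (noting $G=G_b$ when $D=\emptyset$, taking $F=F_0$, and checking each identity) is just spelling out what the paper leaves implicit.
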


\subsubsection{Handling Updates}

Throughout the update sequence, we keep the invariant, which is valid
after preprocessing, that $(G_{\cE},F_{\cE})=(G^{d},F^{d})$ where
$F^{d}$ is a spanning forest of $G^{d}$, and $(G_{\cE^{2d}},F_{\cE^{2d}})=(E^{s}\cup F^{d},F)$
where $F$ is a spanning forest of $G$ and $F\supseteq F^{d}$. 

The algorithm handles each edge deletion as standard: if the deleted
edge $e\notin F$, then just remove $e$ from $G$. That is, if $e\in E^{d}$,
$\delnon(e)$ from $G_{\cE}$, otherwise $e\in E^{s}$, then $\delnon(e)$
from $G_{\cE^{2d}}$. Now, if $e\in F$, then we try to find a replacement
edge $e'$ for reconnecting the two separated components of $F$ as
described in \ref{alg:replace-lv}. Suppose that $e'$ is returned.
If $e'\in E^{s}$, then $\instree(e')$ to $F_{\cE^{2d}}$. If $e'\in E^{d}$,
then $\instree(e')$ to $F_{\cE}$, $\insnon(e')$ to $G_{\cE^{2d}}$,
and $\instree(e')$ to $F_{\cE^{2d}}$. Because whenever we add an
edge to $F_{\cE^{2d}}$ then we also add to $F_{\cE}$, we conclude:
\begin{fact}
The invariant $F\supseteq F^{d}$ is maintained.\label{thm:supset maintain}
\end{fact}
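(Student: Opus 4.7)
The plan is to prove the invariant by induction on the update sequence, with the base case supplied by \ref{lem:lv prep conclude }: after preprocessing (if it does not report failure), we have $F_{\cE}=F^{d}$, $F_{\cE^{2d}}=F$, and $F\supseteq F^{d}$ by construction. The inductive step is a case analysis over the operation type performed during an update, where the key observation to track is that the underlying graph of $\cE^{2d}$ is $E^{s}\cup F^{d}$, which tightly couples the two data structures.

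First I would handle non-tree-edge deletions: if $e\notin F$, then by the inductive hypothesis $F\supseteq F^{d}$ we also have $e\notin F^{d}$, so whichever of $\delnon(e)$ on $G_{\cE}$ or $G_{\cE^{2d}}$ is performed, neither $F$ nor $F^{d}$ changes, and the invariant trivially persists. Next I would handle tree-edge deletions $e\in F$, splitting on membership. If $e\in E^{s}$, then since $F^{d}\subseteq E^{d}$ and $E^{s}\cap E^{d}=\emptyset$, we have $e\notin F^{d}$, so removing $e$ from $F$ only shrinks $F$ on an edge that was not in $F^{d}$, preserving $F\supseteq F^{d}$. If $e\in E^{d}$, the key point is that $e\in F\subseteq E^{s}\cup F^{d}$ forces $e\in F^{d}$; thus the deletion removes $e$ simultaneously from $F^{d}$ (via $\cE$) and from $F$ (via $\cE^{2d}$), so both trees lose the same edge and the containment survives.

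Finally I would handle the insertion of a replacement edge $e'$ returned by \ref{alg:replace-lv}. If $e'\in E^{s}$, then $e'$ is added only to $F_{\cE^{2d}}=F$, so $F$ grows while $F^{d}$ is untouched, which can only strengthen the containment. If $e'\in E^{d}$, the algorithm explicitly performs $\instree(e')$ on both $F_{\cE}=F^{d}$ and $F_{\cE^{2d}}=F$ (in addition to $\insnon(e')$ on $G_{\cE^{2d}}$ to keep the underlying graph of $\cE^{2d}$ consistent with the invariant $E^{s}\cup F^{d}$), so $F^{d}$ and $F$ grow by the same edge and the containment is preserved.

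The main subtlety —not really an obstacle, but the step worth emphasizing— is the symmetric containment reading of the underlying-graph invariant: additions to $F^{d}$ are always mirrored in $F$ by construction of \ref{alg:replace-lv}, and conversely any tree edge in $F$ lying in $E^{d}$ must already belong to $F^{d}$ because the underlying graph of $\cE^{2d}$ restricted to $E^{d}$ is exactly $F^{d}$. Once this coupling is spelled out, the case analysis above suffices.
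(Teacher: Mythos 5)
Your proof is correct and proceeds by the same basic bookkeeping the paper gestures at, namely tracking how $F$ and $F^d$ evolve under each operation, but you spell out the cases the paper leaves implicit; in particular the coupling argument that a deleted tree edge $e\in F\cap E^{d}$ must already lie in $F^{d}$, since the forest $F=F_{\cE^{2d}}$ is contained in the underlying graph $G_{\cE^{2d}}=E^{s}\cup F^{d}$ of the 2-dim ET tree, is exactly the step the paper's one-sentence justification omits. Incidentally, the paper's stated reason (``whenever we add an edge to $F_{\cE^{2d}}$ then we also add to $F_{\cE}$'') has its direction inverted: for $F\supseteq F^{d}$ one needs additions to $F_{\cE}=F^{d}$ to be mirrored in $F_{\cE^{2d}}=F$, which is what the surrounding description of the update routine actually guarantees and what your case analysis verifies.
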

The important part is how \ref{alg:replace-lv} works. Recall $\sigma=O(\frac{1}{\alpha}\log\frac{1}{p})$. 

\begin{algorithm}[h]
\begin{enumerate}
\item If $e\in F^{d}$ then 

\begin{enumerate}
\item Using $\cE$, set $T_{u}^{d}=\textsf{find\_tree}(u)$ and $T_{v}^{d}=\textsf{find\_tree}(v)$. 
\item If $\textsf{tree\_size}(T_{u}^{d})<\textsf{tree\_size}(T_{v}^{d})$,
then set $T^{d}=T_{u}^{d}$, otherwise $T^{d}=T_{v}^{d}$.
\item If $\textsf{tree\_size}(T^{d})>2U_{a}/\alpha$, then sample independently
$\sigma$ edges in $G_{d}$ incident to $T^{d}$. If there is a sampled
edge $e'$ from $\partial_{G^{d}}(V(T^{d}))$, then RETURN $e'$.
Otherwise, REPORT FAIL.
\item Else $\textsf{tree\_size}(T^{d})\le2U_{a}/\alpha$, then list all
edges in $G_{d}$ incident to $T^{d}$. If there is a listed edge
$e'$ from $\partial_{G^{d}}(V(T^{d}))$, then RETURN $e'$.
\end{enumerate}
\item Using $\cE^{2d}$, set $T_{u}=\textsf{find\_tree}(u)$ and $e'=\textsf{find\_edge}(T_{u})$.
\item If $e'\neq\emptyset$, then RETURN $e'$. Else RETURN ``no replacement
edge''
\end{enumerate}
\caption{The algorithm for finding a minimum weight replacement edge $e'$
if exists, after deleting an edge $e=(u,v)\in F$. \label{alg:replace-lv}}
\end{algorithm}

\begin{claim}
\ref{alg:replace-lv} runs in time $O(\frac{U_{a}}{\alpha}\log\frac{1}{p}+\sqrt{\frac{U_{b}}{\alpha_{b}}})=O(n^{1/2-\epsilon}\log\frac{1}{p}+n^{1/4+1.5\epsilon})=O(n^{1/2-\epsilon}\log\frac{1}{p})$. \end{claim}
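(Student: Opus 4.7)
The plan is to walk through \ref{alg:replace-lv} step by step and apply the standard time bounds of the augmented ET tree $\cE$ (\ref{thm:aug ET tree}) and the 2-dim ET tree $\cE^{2d}$ (\ref{thm:2dim ET tree}), then plug in the chosen values of $U_a,U_b,\alpha,\alpha_b$ to convert the result into the desired polynomial-in-$n$ bound.

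First I would analyse Step 1. Steps 1.a and 1.b are a constant number of \textsf{find\_tree} and \textsf{tree\_size} queries on $\cE$, which by \ref{thm:aug ET tree} cost $O(\log n)$. In Step 1.c we sample $\sigma = O(\tfrac{1}{\alpha}\log\tfrac{1}{p})$ edges from $\cE$, each sample costing $O(\log n)$, for a total of $O(\tfrac{1}{\alpha}\log\tfrac{1}{p}\log n)$. In Step 1.d the algorithm enters this branch only when $\textsf{tree\_size}(T^d)\le 2U_a/\alpha$; since $G^d\subseteq G_b$ has maximum degree $3$, the number of edges of $G^d$ incident to $V(T^d)$ is at most $3\cdot 2U_a/\alpha$, and \textsf{list}$(T^d)$ runs in time proportional to this volume by \ref{thm:aug ET tree}, i.e.\ $O(U_a/\alpha)$. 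So Step 1 costs $O(\tfrac{U_a}{\alpha}+\tfrac{1}{\alpha}\log\tfrac{1}{p}\log n)=O(\tfrac{U_a}{\alpha}\log\tfrac{1}{p})$.

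Next I would handle Steps 2 and 3. The invariant maintained in \ref{sec:dynST broken expander} is that $(G_{\cE^{2d}},F_{\cE^{2d}})=(E^s\cup F^d,F)$, so the number of non-tree edges of $\cE^{2d}$ at any time is $|E^s|$. By \ref{thm:local decomp}, $|E^s|\le 4\cdot 3\cdot |A|/\alpha_b=O(U_b/\alpha_b)$, and moreover every subsequent update can only delete non-tree edges, so this bound is preserved. Thus by \ref{thm:2dim ET tree} each of \textsf{find\_tree} and \textsf{find\_edge} on $\cE^{2d}$ costs $\tilde O(\sqrt{U_b/\alpha_b})$, and Step 3 is $O(1)$.

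Summing gives $O(\tfrac{U_a}{\alpha}\log\tfrac{1}{p}+\sqrt{U_b/\alpha_b})$. Finally, I would plug in the parameter values: recall $\alpha=(\alpha_b/18)^{1/\epsilon'}$, $U_a=n^{1/2-\epsilon}\alpha_b^{1/\epsilon'}$, $U_b=n^{1/2+\epsilon}$ and $\alpha_b=1/n^{2\epsilon}$. Then $U_a/\alpha=\Theta(n^{1/2-\epsilon})$ and $\sqrt{U_b/\alpha_b}=\sqrt{n^{1/2+\epsilon}\cdot n^{2\epsilon}}=n^{1/4+1.5\epsilon}$, yielding $O(n^{1/2-\epsilon}\log\tfrac{1}{p}+n^{1/4+1.5\epsilon})$. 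Since $\epsilon=0.0069459$ makes $1/4+1.5\epsilon<1/2-\epsilon$, the first term dominates, and the claim follows. The whole argument is essentially routine book-keeping; the only mild subtlety is verifying that the bound $|E^s|=O(U_b/\alpha_b)$ from \ref{thm:local decomp} continues to apply after updates (which holds because updates are deletion-only in this subroutine) so that the 2-dim ET tree bound $\tilde O(\sqrt{U_b/\alpha_b})$ is valid at every step.
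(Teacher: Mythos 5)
Your proof takes essentially the same route as the paper: you go through \ref{alg:replace-lv} step by step, apply \ref{thm:aug ET tree} for Steps 1.a--1.d and \ref{thm:2dim ET tree} for Step 2, bound the non-tree edges of $\cE^{2d}$ by $|E^s| = O(U_b/\alpha_b)$ via \ref{thm:local decomp}, and then substitute the parameter values. The paper's version is terser but identical in substance; your extra remarks (why $|E^s|$ remains bounded after deletion-only updates, and the degree-3 volume bound justifying Step 1.d) are correct and fill in details the paper leaves implicit.
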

\begin{proof}
We prove that even there is no failure reported. \ref{alg:replace-lv}
runs in time $O(\frac{U_{a}}{\alpha}\log\frac{1}{p}+\sqrt{\frac{U_{b}}{\alpha_{b}}})$.
Step 1.a and 1.b take time at most $O(\log n)$. Step 1.c takes time
$O(\sigma\log n)=O(\frac{1}{\alpha}\log\frac{1}{p}\log n))$ using
the $\textsf{sample}$ operation of $\cE$ by \ref{thm:aug ET tree}
if the algorithm does not restart. Step 1.d takes time $O(\frac{U_{a}}{\alpha})$
using the $\textsf{list}$ operation of $\cE$ by \ref{thm:aug ET tree}.
Step 2 takes time $O(\sqrt{|E^{s}|})=O(\sqrt{\frac{U_{b}}{\alpha_{b}}})$.
So the running time in \ref{alg:replace-mc} is at most $O(\frac{\log n}{\alpha}\log\frac{1}{p}+\frac{U_{a}}{\alpha}+\sqrt{\frac{U_{b}}{\alpha_{b}}})=O(\frac{U_{a}}{\alpha}\log\frac{1}{p}+\sqrt{\frac{U_{b}}{\alpha_{b}}})$.\end{proof}
\begin{claim}
After each edge deletion, $\cA$ reports failure or updates $\cE$
and $\cE^{2d}$ such that the invariant is maintained.\label{lem:correct mc decomp-1}\end{claim}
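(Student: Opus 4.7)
The plan is to verify the invariant by a case analysis on the relationship of the deleted edge $e$ to the maintained forests. When $e \notin F$, a single \delnon$(e)$ call on the appropriate data structure ($\cE$ if $e \in E^d$, $\cE^{2d}$ if $e \in E^s$) preserves the invariant trivially. The nontrivial work concentrates in the case $e \in F$, where one must verify that \ref{alg:replace-lv} produces a reconnection that is consistent simultaneously with both $F$ and $F^d$.

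The key structural observation, which I expect to be the heart of the argument, is the following. Because $F^d$ is a spanning forest of $G^d$ and $F \supseteq F^d$ is a forest in $G$, any edge of $F$ lying in $E^d$ but not in $F^d$ would close a cycle together with the $F^d$-path joining its endpoints, contradicting that $F$ is a forest. Hence $F \setminus F^d \subseteq E^s$, and every connected component of $F^d$ is contained in exactly one connected component of $F$. Fix $e \in F$ and let $T_u, T_v$ denote the two pieces of $F$ after removing $e$. Then each $F^d$-component lies entirely in $V(T_u)$ or entirely in $V(T_v)$. Any edge of $\partial_G(V(T_u)) \cap E^d$, being an $E^d$-edge, has both endpoints in a single $F^d$-component (since $F^d$ is a spanning forest of $G^d$) and therefore lies in $\partial_{G^d}(V(T_u^d))$, where $T_u^d$ is the $F^d$-piece containing $u$ after deleting $e$ from $F^d$ (in the case $e \in F^d$) or simply the $F^d$-component of $u$ (in the case $e \in F \setminus F^d$).

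This observation yields the two guarantees required. First, when $e \in F^d$ and Step~1 returns $e' \in \partial_{G^d}(V(T_u^d))$, the two endpoints of $e'$ lie in different $F^d$-components and therefore in different $F$-components ($V(T_u)$ versus $V(T_v)$), so \instree$(e')$ is legal on both $\cE$ and $\cE^{2d}$. Second, whenever Step~1 is skipped (because $e \in F \setminus F^d$) or Step~1.d exhausts $\partial_{G^d}(V(T_u^d))$ without finding an edge, every remaining candidate replacement in $\partial_G(V(T_u))$ must lie in $E^s$, so the query $\textsf{find\_edge}(T_u)$ on $\cE^{2d}$ (whose edge set is $E^s \cup F^d$) returns a valid $E^s$ replacement if one exists and cannot return an $F^d$-edge since $\partial_G(V(T_u)) \cap E^d = \emptyset$ in that situation. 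The only remaining branch, Step~1.c, is precisely the explicit failure branch allowed by the claim.

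Given the correctness of the search, invariant maintenance is routine bookkeeping. For $e \in F^d$ we apply \deltree$(e)$ on both $\cE$ and $\cE^{2d}$ (followed by a \delnon on $\cE^{2d}$ to remove the residual copy from $E^s \cup F^d$), and if a replacement $e' \in E^d$ is returned we perform \instree$(e')$ on $\cE$ and \insnon$(e')$ then \instree$(e')$ on $\cE^{2d}$; if instead $e' \in E^s$ is returned we only \instree$(e')$ on $\cE^{2d}$. For $e \in F \setminus F^d \subseteq E^s$, we \deltree$(e)$ on $\cE^{2d}$ and \instree the returned replacement (if any) on $\cE^{2d}$. A direct check in each subcase confirms $(G_{\cE}, F_{\cE}) = (G^d, F^d)$, $(G_{\cE^{2d}}, F_{\cE^{2d}}) = (E^s \cup F^d, F)$, and $F^d \subseteq F$. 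Since the structural observation above is deterministic in the random coins, the invariant holds with certainty on any update on which the algorithm does not report failure. I expect the only subtle point will be ruling out the ``phantom'' situation where $\textsf{find\_edge}$ on $\cE^{2d}$ might return an $F^d$-edge that would create a cycle when added to $F$; the component-containment observation is precisely what closes this gap.
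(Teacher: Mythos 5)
Your proof is correct and follows essentially the same approach as the paper: both argue that Step~1 of \ref{alg:replace-lv} exhaustively handles replacements in $E^d$ (because $\partial_G(V(T_u)) \cap E^d = \partial_{G^d}(V(T_u^d))$, via the containment $V(T_u^d)\subseteq V(T_u)$ and $V(T_v^d)\subseteq V(T_v)$), while Step~2 handles replacements in $E^s$ through $\cE^{2d}$, and that Step~1.c is the only place a failure can be declared. Your explicit derivation of $F\setminus F^d \subseteq E^s$ and of the $F^d$-component containment is a more careful spelling-out of what the paper dismisses as ``it is easy to see'', but the logical skeleton is identical.
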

\begin{proof}
Suppose that $\cA$ does not report failure. We will prove that the
invariant is maintained.

$F^{d}$ and $F$ are forests in $G^{d}$ and $G$, respectively,
because after preprocessing they are forest, and after that we only
add replacement edges into them. Moreover, $F\supseteq F^{d}$ by
\ref{thm:supset maintain}. It remains to prove that $F^{d}$ and
$F$ span $G^{d}$ and $G$ respectively. That is, we prove that after
each deletion if a replacement edge exists, then \ref{alg:replace-lv}
can find one. 

We first prove this for $F^{d}$. Suppose that $e=(u,v)\in F^{d}$
is deleted. Let $T_{u}^{d}\cup(u,v)\cup T_{v}^{d}$ be a connected
component in $F^{d}$. It easy to see that $\partial_{G^{d}}(V(T^{d}))$
is the set of replacement edges in $E^{d}$. Suppose \ref{alg:replace-lv}
cannot find a replacement edge in $E^{d}$. This happens only in Step
1.d as we assume $\cA$ does not report failure. In this case, all
edges incident to $T_{u}^{d}$ are listed. So a replacement edge really
do not exist.

Next, we prove the case of $F$. Suppose that $e=(u,v)\in F$ is deleted.
We also denote $T_{u}\cup(u,v)\cup T_{v}$ a connected component in
$F$. It is easy to see that $\partial_{G^{s}}(V(T))$ is the set
of replacement edges in $E^{s}$. In Step 2, $\textsf{find\_edge}(T_{u})$
returns an edge iff $\partial_{G^{s}}(V(T))\neq\emptyset$. So if
we cannot get any edge from both $\partial_{G^{d}}(V(T^{d}))$ and
$\partial_{G^{s}}(V(T))$, then a replacement edge really do not exist.\end{proof}
\begin{claim}
If $\phi(G_{b})\ge\alpha_{b}$, then \ref{alg:replace-lv} reports
failure in Step 1.c with probability at most $O(p)$.\end{claim}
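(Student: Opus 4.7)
My plan is to condition on the ``good event'' that the local expansion decomposition performed in \ref{alg:preprocess-lv} succeeds, so that every connected component $C$ of the initial $G^d$ satisfies $\phi(C)\ge\alpha = (\alpha_b/18)^{1/\epsilon'}$. Since $\phi(G_b)\ge\alpha_b$, \ref{thm:local decomp} guarantees this with probability $1-O(p)$, and under it I will show that the sampling in Step 1.c succeeds with probability at least $1-p$, so a union bound yields total failure probability $O(p)$.

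Conditioned on the good event, let $S=V(T^d)$ and note that in Step 1.c we have $|S|>2U_a/\alpha$. Let $C_0$ be the connected component of the \emph{initial} $G^d$ (right after preprocessing) that contains $S$. Because edges of $G^d$ are only removed during the update phase, the current component $C^{\mathrm{curr}}$ of $G^d$ that contained $T_u^d\cup\{e\}\cup T_v^d$ just before the deletion of $e$ satisfies $V(C^{\mathrm{curr}})\subseteq V(C_0)$, and since $F^d$ is a spanning forest of $G^d$ we have $V(T_u^d)\cup V(T_v^d)=V(C^{\mathrm{curr}})$. Since $T^d$ is chosen as the smaller of $T_u^d,T_v^d$, this gives $|S|\le |V(C^{\mathrm{curr}})|/2\le |V(C_0)|/2$.

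I will then use the initial expansion of $C_0$ to bound $\delta_{G^d}(S)$ from below. Because $\phi(C_0)\ge\alpha$ and $|S|\le|V(C_0)|/2$, we have $\delta_{C_0}(S)\ge\alpha|S|$; since at most $U_a$ edges of $G^d$ have been deleted during this phase, $\delta_{G^d}(S)\ge\delta_{C_0}(S)-U_a\ge \alpha|S|-U_a>\alpha|S|/2$, where the last step uses $|S|>2U_a/\alpha$. As $G_b$ (and hence $G^d$) has maximum degree $3$, $nt\_vol_{G^d,F^d}(S)\le vol_{G^d}(S)\le 3|S|$, so by the sampling guarantee of the augmented ET tree each sample lies in $\partial_{G^d}(S)$ with probability at least $\delta_{G^d}(S)/nt\_vol_{G^d,F^d}(S)\ge \alpha/6$. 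Therefore all $\sigma=\Theta(\alpha^{-1}\log(1/p))$ independent samples miss $\partial_{G^d}(S)$ with probability at most $(1-\alpha/6)^{\sigma}\le p$, provided the hidden constant in $\sigma$ is large enough.

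The only subtlety — which I do not expect to be a real obstacle — is the adaptive adversary. However, the structural inequality $\delta_{G^d}(S)\ge\alpha|S|/2$ only depends on the identity of $S$ and on the initial decomposition, and holds deterministically under the good event. The $\sigma$ samples drawn in Step 1.c use fresh random bits that have not been revealed to the adversary, so conditional on everything the adversary has observed they are still distributed according to the $\textsf{sample}$ operation of $\cE$; hence the $(1-\alpha/6)^\sigma$ bound applies unchanged. Combining this with the $O(p)$ failure probability of the decomposition gives the desired $O(p)$ bound on the probability that Step 1.c reports failure.
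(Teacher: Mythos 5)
Your proof is correct and follows the same approach as the paper: condition on the success of the local decomposition, use the initial expansion $\phi(C)\ge\alpha$ of the component together with $|V(T^d)|\le|V(C)|/2$ and the bound of $U_a$ on deletions to get $\delta_{G^d}(S)\ge\alpha|S|/2$, bound the volume by $3|S|$ using the degree bound, and conclude that all $\sigma$ samples miss with probability at most $(1-\alpha/6)^\sigma\le p$. The only (minor, favorable) difference is that you explicitly separate the initial component $C_0$ from the current component $C^{\mathrm{curr}}$, whereas the paper's proof uses the single symbol $C$ somewhat loosely for both; the argument is otherwise identical.
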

\begin{proof}
We use the same argument as in \ref{thm:does not restart}. Let $C$
be the component of $G^{d}$ containing $T^{d}$. As $\phi(G_{b})\ge\alpha_{b}$,
by \ref{thm:local decomp} $\phi(C)\ge\alpha$ with probability $1-O(p)$.
Now, we assume $\phi(C)\ge\alpha$ and prove that \ref{alg:replace-lv}
fail with probability at most $p$.

By Step 1.b, $|V(T^{d})|\le|V(C)|/2$, so $\delta_{C}(V(T))\ge\alpha|V(T)|-U_{a}$
because there are at most $U_{a}$ deletions. As $|V(T)|\ge2U_{a}/\alpha$,
we have $\delta_{C}(V(T))\ge\frac{\alpha}{2}|V(T)|$. Therefore, all
$\sigma$ sampled edges are not cut edge with probability $(1-\frac{\delta_{C}(V(T))}{vol_{C}(V(T))})\le(1-\frac{\alpha/2}{3})^{\sigma}\le p$. 
\end{proof}
We conclude with the following lemma:
\begin{lem}
\label{thm:lv update conclude}After each edge deletion, $\cA$ takes
time $O(\frac{U_{a}}{\alpha}\log\frac{1}{p}+\sqrt{\frac{U_{b}}{\alpha_{b}}})=O(n^{1/2-\epsilon}\log\frac{1}{p})$
and either 
\begin{enumerate}
\item reports failure, or 
\item outputs the list of edges to be added or removed from $F$ to obtain
new one spanning forest of $G$. Moreover, the underlying graphs and
forested of the given augmented ET tree $\cE$ and 2-dim ET tree $\cE^{2d}$
are updated such that $(G_{\cE},F_{\cE})=(G^{d},F^{d})$ where $F^{d}$
is a spanning forest of $G^{d}$, and $(G_{\cE^{2d}},F_{\cE^{2d}})=(E^{s}\cup F^{d},F)$
where $F$ is a spanning forest of $G$ and $F\supseteq F^{d}$. 
\end{enumerate}

Moreover, if $\phi(G_{b})\ge\alpha_{b}$, $\cA$ does not report failure
with probability $1-O(p)$ for each update.

\end{lem}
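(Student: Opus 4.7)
The plan is to assemble the lemma from the three preceding claims and a straightforward description of how each update is dispatched to the two ET-tree structures. First I would handle non-tree deletions: if the deleted edge $e\in E^d$, call $\delnon(e)$ on $\cE$; if $e\in E^s$, call $\delnon(e)$ on $\cE^{2d}$. In both cases the invariants $(G_\cE,F_\cE)=(G^d,F^d)$ and $(G_{\cE^{2d}},F_{\cE^{2d}})=(E^s\cup F^d,F)$ are preserved trivially, no replacement edge is needed, and the cost is $O(\log n)$ by \ref{thm:aug ET tree} and \ref{thm:2dim ET tree}. For a tree deletion $e\in F$, I invoke Algorithm~\ref{alg:replace-lv} to find a replacement edge $e'$ (if one exists), and then insert $e'$: into $\cE^{2d}$ via $\instree(e')$ if $e'\in E^s$, and into both $\cE$ (via $\instree$) and $\cE^{2d}$ (via $\insnon$ followed by $\instree$) if $e'\in E^d$. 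Because every edge we add to $F$ that lies in $E^d$ is also added to $F^d$, Fact~\ref{thm:supset maintain} yields $F\supseteq F^d$.

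For the running time I would just tally the steps of Algorithm~\ref{alg:replace-lv}: Steps 1.a--1.b cost $O(\log n)$; Step 1.c costs $O(\sigma\log n)=O(\tfrac{1}{\alpha}\log n\log\tfrac{1}{p})$ by the $\textsf{sample}$ operation of $\cE$; Step 1.d enumerates all edges incident to $T^d$, which is $O(U_a/\alpha)$ since $|V(T^d)|\le 2U_a/\alpha$ and degrees are bounded by $3$; Step 2 costs $O(\sqrt{|E^s|})=O(\sqrt{U_b/\alpha_b})$ by the $\textsf{find\_edge}$ operation of $\cE^{2d}$. Summing and substituting $\alpha=(\alpha_b/18)^{1/\epsilon'}$, $U_a=n^{1/2-\epsilon}\alpha_b^{1/\epsilon'}$, $U_b=n^{1/2+\epsilon}$, $\alpha_b=1/n^{2\epsilon}$ gives $O(n^{1/2-\epsilon}\log\tfrac{1}{p}+n^{1/4+1.5\epsilon})=O(n^{1/2-\epsilon}\log\tfrac{1}{p})$ for our choice of $\epsilon<0.01$.

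For correctness (assuming no failure is reported), I would verify that after the update $F^d$ still spans $G^d$ and $F$ still spans $G$. If $e\in F^d$ and a replacement edge in $E^d$ exists, then either Step 1.c samples one, or Step 1.d finds it by exhaustive enumeration; if no edge in $E^d$ reconnects the two sides, then $F^d$ still spans each component of $G^d$, so any replacement in $G$ must lie in $E^s$, and $\textsf{find\_edge}(T_u)$ in Step 2 recovers it whenever one exists. This is exactly the content of the correctness claim above.

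For the failure probability, the only place where failure is reported during an update is Step 1.c. Conditioning on the event that the local decomposition succeeded during preprocessing (which by \ref{lem:lv prep conclude } holds with probability $1-O(p)$ when $\phi(G_b)\ge\alpha_b$), the component $C$ of $G^d$ containing $T^d$ satisfies $\phi(C)\ge\alpha$ by \ref{thm:local decomp}. Since Step 1.b ensures $|V(T^d)|\le|V(C)|/2$ and at most $U_a$ edges have been deleted from $C$, I would argue $\delta_C(V(T^d))\ge\alpha|V(T^d)|-U_a\ge\tfrac{\alpha}{2}|V(T^d)|$ using the threshold $|V(T^d)|>2U_a/\alpha$ from Step 1.c; then each of the $\sigma=O(\tfrac{1}{\alpha}\log\tfrac{1}{p})$ samples hits $\partial_{G^d}(V(T^d))$ with probability $\ge\alpha/6$, so the failure probability is at most $(1-\alpha/6)^\sigma\le p$. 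The main obstacle is being careful that the ``conditioning on preprocessing success'' is clean: once the decomposition at preprocessing time succeeds, the expansion guarantee $\phi(C)\ge\alpha$ of each component is a deterministic property of the fixed decomposition, so the per-update failure events depend only on fresh random coins used by $\textsf{sample}$, and a union bound therefore gives $O(p)$ failure probability at each update as claimed.
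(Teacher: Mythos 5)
Your proposal matches the paper's proof essentially step for step: you split the argument into the same three claims (running-time tally of Algorithm~\ref{alg:replace-lv}, correctness of the maintained invariant assuming no failure, and the bound on failure probability via the expansion of the component $C$ containing $T^d$), and each is argued with the same ideas — Fact~\ref{thm:supset maintain} for $F\supseteq F^d$, exhaustive listing in Step~1.d vs.\ sampling in Step~1.c, $\textsf{find\_edge}$ on $\cE^{2d}$ for the $E^s$ side, and the calculation $\delta_C(V(T^d))\ge\alpha|V(T^d)|-U_a\ge\tfrac{\alpha}{2}|V(T^d)|$ combined with $(1-\alpha/6)^\sigma\le p$. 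No discrepancies; this is the paper's argument.
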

Combining \ref{lem:lv prep conclude } and \ref{thm:lv update conclude},
we immediately obtain \ref{thm:emergency decomp}. Moreover, \ref{lem:lv prep conclude no delete}
justifies \ref{rem:when no D } when $D=\emptyset$.

\subsection{Reduction to Broken Expanders\label{sec:reduc to broken}}

In this section, we prove \ref{thm:lv on expander final} by using
the algorithm from \ref{thm:emergency decomp} as a main subroutine
in a standard technique. Let $\cA$ be the algorithm for \ref{thm:emergency decomp}.
We would like to devise an algorithm $\cB$ for \ref{thm:lv on expander final}.

Let $G_{b}$ be the initial graph which is given to $\cB$ as inputs.
$\cB$ preprocess $G_{b}$ as described in \ref{alg:preprocess-reduc-broken}.
We use two instances of augmented ET trees: $\cE_{1}$ and $\cE_{2}$,
two instances of 2-dim ET trees: $\cE_{1}^{2d}$ and $\cE_{2}^{2d}$,
and two instances of the algorithm $\cA$ for \ref{thm:emergency decomp}:
$\cA_{1}$ and $\cA_{2}$. We will maintain three spanning forests
$F_{1}$ and $F_{2}$ and $F_{final}$. 

\begin{algorithm}[h]
\begin{enumerate}
\item Find a spanning tree $F_{b}$ of $G_{b}$. 
\item Set $F_{1},F_{2},F_{final}=F_{b}$. 
\item Preprocess $(G_{b},F_{b})$ using two instances of augmented ET trees
$\cE_{1}$ and $\cE_{2}$. 
\item Preprocess $(F_{b},F_{b})$ using two instances of 2-dim ET trees
$\cE_{1}^{2d}$ and $\cE_{2}^{2d}$. 
\item Give $(n,p,G_{b},\alpha_{b},D,F_{1},\cE_{1},\cE_{1}^{2d})$ to $\cA_{1}$
as input where $D=\emptyset$. 
\end{enumerate}
\caption{The algorithm for preprocessing the initial graph $G_{b}=(V,E)$\label{alg:preprocess-reduc-broken}}
\end{algorithm}

\begin{claim}
$\cB$ takes $O(n)$ time to preprocess $G_{b}$ and does not report
failure. This running time is independent from the length of the update
sequence.\label{claim:prep lv reduc}\end{claim}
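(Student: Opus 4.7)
The plan is to verify, step by step, that each of the five steps of \ref{alg:preprocess-reduc-broken} runs in $O(n)$ time and never triggers a failure, so the total preprocessing time is $O(n)$ with certainty. The key observation driving the argument is that $G_{b}$ is a $3$-bounded degree graph with $n$ nodes, so $|E(G_{b})| = O(n)$, which bounds the cost of all data-structure initializations linearly in $n$.

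First, I would handle Step 1: computing a spanning tree $F_{b}$ of $G_{b}$ can be done by any standard traversal (BFS/DFS) in $O(n + |E(G_{b})|) = O(n)$ time. Step 2 just assigns three pointers/copies of $F_{b}$, clearly $O(n)$. For Step 3, \ref{thm:aug ET tree} states that an augmented ET tree on a graph with $n$ nodes and $m$ initial edges preprocesses in $O(m) = O(n)$ time; doing this twice (for $\cE_{1}$ and $\cE_{2}$) is still $O(n)$. For Step 4, note that $(F_{b},F_{b})$ has $n$ nodes, $n-O(1)$ tree edges, and zero non-tree edges, so in particular the bound $k$ on non-tree edges can be set to $0$; by \ref{thm:2dim ET tree}, preprocessing a 2-dim ET tree takes $O(n + k) = O(n)$ time, and doing this for both $\cE_{1}^{2d}$ and $\cE_{2}^{2d}$ is $O(n)$.

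The crucial step is Step 5, where $\cA_{1}$ is invoked with $D = \emptyset$. Here I would appeal directly to \ref{rem:when no D }, which records exactly the corner case of \ref{thm:emergency decomp} that we need: when $D = \emptyset$, $\cA$ takes no time to preprocess and does not report failure (because the required invariant already holds trivially, as spelled out in \ref{lem:lv prep conclude no delete}, with $E^{s} = \emptyset$ and $F^{d} = F_{0}$). Hence Step 5 contributes $0$ time and no failure.

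Summing the contributions of Steps 1--5 gives total preprocessing time $O(n)$, and none of the steps involves any randomized failure event, so $\cB$ never reports failure during preprocessing. Finally, note that none of the five steps refers to or depends on the length of the future update sequence; all bounds depend only on $n$ (and the sizes of the structures built from $G_{b}$), which establishes the last assertion. I do not anticipate any real obstacle here; the only thing to be careful about is double-checking that the 2-dim ET tree preprocessing bound applies with the bound $k = 0$ on non-tree edges, but this is immediate from the statement of \ref{thm:2dim ET tree}.
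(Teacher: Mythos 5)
Your proposal is correct and matches the paper's (much terser) proof: step through the five steps of the preprocessing algorithm, noting that Steps 1--4 are $O(n)$ by the cited data-structure preprocessing bounds and that Step 5 costs nothing and cannot fail by the $D=\emptyset$ remark. One small nitpick: the parameter $k$ passed to the 2-dim ET tree must bound the number of non-tree edges \emph{throughout the data structure's lifetime}, not just at preprocessing time, so $k=0$ is not the right value; however, the appropriate $k$ (on the order of $U_b/\alpha_b$) is still $O(n)$, so the preprocessing bound $O(n+k)=O(n)$ holds and the conclusion is unaffected.
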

\begin{proof}
Consider \ref{alg:preprocess-reduc-broken}. Step 1 takes $O(n)$
time. Step 2 is trivial. Step 3 and 4 take $O(n)$ time by \ref{thm:aug ET tree}
and \ref{thm:2dim ET tree}. Step 6 takes no times by \ref{rem:when no D }.
\end{proof}
Throughout the update sequence, only the updates of the spanning forest
$F_{final}$ will be returned as the answer of $\cB$. $F_{1}$ and
$F_{2}$ are maintained by $\cA_{1}$ and $\cA_{2}$ respectively,
and they are only for helping us maintaining $F_{final}$. $\cB$
will report failure whenever $\cA_{1}$ or $\cA_{2}$ report failure. 

Suppose that the update sequence of $\cB$ is of size at most $U_{b}$.
We divides the sequence into phases of size $U_{a}/2$. We will first
describe an algorithm how to maintain $F_{1}$ and $F_{2}$ such that
the following holds: in odd (even) phase $i$, $\cA_{1}$ ($\cA_{2}$)
is maintaining a spanning forest $F_{1}$ ($F_{2})$ of the graph
of during the whole phase. 

For each odd phase, suppose that the first update of the phase is
the $(i_{0}+1)$-th update. We assume that $\cA_{1}$ is maintaining
$F_{1}$ of the current graph $G_{i_{0}}$ and can handle more $U_{a}/2$
updates for the whole phase. We will describe how to prepare $\cA_{2}$
so that, at the end of the phase, i.e. after the $(i_{0}+\frac{U_{a}}{2})$-th
update, $\cA_{2}$ is maintaining $F_{2}$ which is a spanning forest
of the current graph $G_{i_{0}+\frac{U_{a}}{2}}$ at that time, and
can handle $\frac{U_{a}}{2}$ more updates for the whole next phase.
Let $(G_{\cE_{2}},F_{\cE_{2}})$ and $(G_{\cE_{2}^{2d}},F_{\cE_{2}^{2d}})$
be the underlying graphs and forests of $\cE_{2}$ and $\cE_{2}^{2d}$
respectively. 

For time period $[i_{0}+1,i_{0}+\frac{U_{a}}{4}]$, we distribute
evenly the time needed for (i) setting $(G_{\cE_{2}},F_{\cE_{2}})=(G_{i_{0}},F_{final,i_{0}})$
and $(G_{\cE_{2}^{2d}},F_{\cE_{2}^{2d}})=(F_{final,i_{0}},F_{final,i_{0}})$
where $G_{i_{0}}$ and $F_{final,i_{0}}$ are an input graph $G$
and the maintained forest $F_{final}$ after the $i_{0}$-th update,
respectively, and (ii) $\cA_{2}$ to preprocess $(n,p,G_{b},\alpha_{b},D_{i_{0}},F_{final,i_{0}},\cE_{2},\cE_{2}^{2d})$
where $D_{i_{0}}$ is the set of deleted edge into time $i_{0}$ (Note
that this is a valid input for $\cA_{2}$). 
\begin{claim}
During $[i_{0}+1,i_{0}+\frac{U_{a}}{4}]$, the procedure as described
above takes $O(((\frac{U_{b}{}^{1.5+\epsilon'}}{\alpha_{b}^{4+\epsilon'}}\gamma+\frac{U_{b}}{\alpha_{b}^{1+1/\epsilon'}})\log\frac{1}{p})/U_{a})$
time per update.\end{claim}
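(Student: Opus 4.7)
The plan is to bound the total work spent on steps (i) and (ii) during the window $[i_{0}+1,i_{0}+\frac{U_{a}}{4}]$ by $O((\frac{U_{b}^{1.5+\epsilon'}}{\alpha_{b}^{4+\epsilon'}}\gamma+\frac{U_{b}}{\alpha_{b}^{1+1/\epsilon'}})\log\frac{1}{p})$; the per-update bound then follows by distributing this total evenly over the $\Theta(U_{a})$ updates in the window.

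Step (ii) is immediate from \ref{lem:lv prep conclude }: since $|D_{i_{0}}|\le i_{0}\le U_{b}$ (the total update sequence length is at most $U_{b}$ by assumption of this section), running $\cA_{2}$'s preprocessing on the input $(n,p,G_{b},\alpha_{b},D_{i_{0}},F_{final,i_{0}},\cE_{2},\cE_{2}^{2d})$ costs at most $O((\frac{U_{b}^{1.5+\epsilon'}}{\alpha_{b}^{4+\epsilon'}}\gamma+\frac{U_{b}}{\alpha_{b}^{1+1/\epsilon'}})\log\frac{1}{p})$.

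The bulk of the work is therefore in bounding Step (i). I would argue that at time $i_{0}$, the underlying graphs and forests of $\cE_{2}$ and $\cE_{2}^{2d}$ are within $O(U_{b}/\alpha_{b})$ edge operations of their respective targets $(G_{i_{0}},F_{final,i_{0}})$ and $(F_{final,i_{0}},F_{final,i_{0}})$. Indeed, since phases alternate, $\cE_{2}$ and $\cE_{2}^{2d}$ were last touched at the end of the previous even phase; by the invariant stated after \ref{alg:preprocess-lv} (and maintained by \ref{thm:lv update conclude}), at that moment
\[
(G_{\cE_{2}},F_{\cE_{2}})=(G^{d},F^{d}),\qquad (G_{\cE_{2}^{2d}},F_{\cE_{2}^{2d}})=(E^{s}\cup F^{d},F_{final,i_{0}}),
\]
where $E^{s},F^{d},G^{d}$ are taken from $\cA_{2}$'s local decomposition of the previous even phase, truncated by the deletions of that phase. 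By \ref{thm:local decomp} we have $|E^{s}|=O(U_{b}/\alpha_{b})$, and since the invariant gives $F^{d}\subseteq F_{final,i_{0}}\subseteq E^{s}\cup F^{d}$, both pairs differ from their targets only on edges belonging to $E^{s}$. Applying the corresponding $O(|E^{s}|)$ operations — each $O(\log n)$ on $\cE_{2}$ by \ref{thm:aug ET tree} and each $O(\sqrt{k})$ on $\cE_{2}^{2d}$ with $k\le |E^{s}|=O(U_{b}/\alpha_{b})$ by \ref{thm:2dim ET tree} — costs $O((U_{b}/\alpha_{b})\log n+(U_{b}/\alpha_{b})^{1.5})$, which is dominated by the Step (ii) bound since $\alpha_{b}<1$ forces $(U_{b}/\alpha_{b})^{1.5}\le U_{b}^{1.5+\epsilon'}/\alpha_{b}^{4+\epsilon'}$.

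The very first odd phase is a corner case handled separately: $\cE_{2}$ and $\cE_{2}^{2d}$ are still in their initial states $(G_{b},F_{b})$ and $(F_{b},F_{b})$, which already agree with the targets at time $i_{0}=0$, and $D_{0}=\emptyset$ causes both steps to be free by \ref{rem:when no D }. The main obstacle is therefore the careful bookkeeping of which invariant the two ET trees carry across phases; once that is nailed down, bounding Step (i) reduces to counting $O(|E^{s}|)$ edge operations in structures whose update times have already been established, and the dominance claim follows from an elementary comparison of exponents.
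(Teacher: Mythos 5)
There is a genuine gap, and it is in the bound for step (i). The core error is the claim that the invariant gives $F_{\cE_2^{2d}} = F_{final,i_0}$ and hence $F^d \subseteq F_{final,i_0} \subseteq E^s \cup F^d$ at the end of the previous even phase. The invariant of \ref{thm:lv update conclude} actually states $(G_{\cE_2^{2d}},F_{\cE_2^{2d}}) = (E^s \cup F^d, F)$ where $F$ is $\cA_2$'s own maintained spanning forest $F_2$ — not $F_{final}$. These are genuinely different objects: $F_{final}$ is the separately maintained ``stable'' combining forest of \ref{rem:get stable}, whose replacement edges are chosen by binary search along a path in $F_2$ and need not coincide with the replacement edges $\cA_2$ inserts into $F_2$. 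Thus the containment chain $F^d \subseteq F \subseteq E^s \cup F^d$ holds only for $F = F_2$. In particular $F_{final,i_0}$ may contain many $E^d$-edges that are not in $F^d$, so the tree-edge symmetric difference between the current state $F_{\cE_2} = F^d$ (or $F_{\cE_2^{2d}} = F_2$) and the target $F_{final,i_0}$ is not bounded by $|E^s|$ by this argument. The per-update cost of step (i) therefore does not follow from your reasoning. A secondary issue: even if the symmetric difference were small in magnitude, you give no mechanism for determining which tree edges to insert and delete; the difference between two arbitrary spanning forests is not tracked anywhere.

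The paper sidesteps both problems with a different device that you have not reproduced: an execution log together with a revert-to-checkpoint argument. Since the previous reset (at time roughly $i_0 - U_a$) explicitly set $(G_{\cE_2},F_{\cE_2}) = (G_{i_0-U_a}, F_{final,i_0-U_a})$ and $(G_{\cE_2^{2d}},F_{\cE_2^{2d}}) = (F_{final,i_0-U_a}, F_{final,i_0-U_a})$, and since at most $O(P)$ work was spent on these data structures since then, one can log the operations and revert both structures to that checkpoint in $O(P)$ time. From there, the target $(G_{i_0},F_{final,i_0})$ differs from the checkpoint by only $O(U_a)$ edges (one deletion per update, plus $O(1)$ stable changes to $F_{final}$ per update), and these changes \emph{are} directly available; replaying them costs $O(U_a)\cdot\tilde O(\sqrt{U_b/\alpha_b}) = o(P)$. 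Your handling of step (ii) and the first-phase corner case is fine, but the missing revert idea is the load-bearing part of this claim.
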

\begin{proof}
First of all, by \ref{thm:emergency decomp}, $\cA_{2}$ preprocesses
in time $P=O((\frac{U_{b}{}^{1.5+\epsilon'}}{\alpha_{b}^{4+\epsilon'}}\gamma+\frac{U_{b}}{\alpha_{b}^{1+1/\epsilon'}})\log\frac{1}{p})$
given a valid input. We just need to show that we can set $(G_{\cE_{2}},F_{\cE_{2}})=(G_{i_{0}},F_{final,i_{0}})$
and $(G_{\cE_{2}^{2d}},F_{\cE_{2}^{2d}})=(F_{final,i_{0}},F_{final,i_{0}})$
in at most $O(P)$ as well. 

One way to argue this is that, at time $i_{0}-U_{a}$, we know $(G_{\cE_{2}},F_{\cE_{2}})=(G_{i_{0}-U_{a}},F_{final,i_{0}-U_{a}})$
and $(G_{\cE_{2}^{2d}},F_{\cE_{2}^{2d}})=(F_{final,i_{0}-U_{a}},F_{final,i_{0}-U_{a}})$
by induction. Also, from time $i_{0}-U_{a}$ to $i_{0}$, we have
spent time on $\cE_{2}$ and $\cE_{2}^{2d}$ at most $O(P)$ time.
By remembering the execution log, we can revert both $\cE_{2}$ and
$\cE_{2}^{2d}$ to their states at time $i_{0}-U_{a}$ in time $O(P)$.
Now, as $G_{i_{0}-U_{a}}$ and $F_{final,i_{0}-U_{a}}$ differ from
$G_{i_{0}}$ and $F_{final,i_{0}}$ by at most $O(U_{a})$ edges,
we can update both $\cE_{2}$ and $\cE_{2}^{2d}$ in time $O(U_{a})\times\tilde{O}(\sqrt{\frac{U_{b}}{\alpha_{b}}})=o(P)$\footnote{Another way to argue this is by making $\cE_{2}$ and $\cE_{2}^{2d}$
\emph{fully persistent}, we can just ``go back in time'' to start
updating $\cE_{2}$ and $\cE_{2}^{2d}$ from time $i_{0}-U_{a}$.
The overhead factor for making it fully persistent is $O(\log\log n)$
by Straka \cite{Straka13} (cf. \cite{DriscollSST89,kaplan1995persistent}).}. As there are $U_{a}/4$ steps to distribute the work time, the process
takes time per update as desired.
\end{proof}
For time period $[i_{0}+\frac{U_{a}}{4}+1,i_{0}+\frac{U_{a}}{2}]$,
at each time $i_{0}+\frac{U_{a}}{4}+k$ for $1\le k\le\frac{U_{a}}{4}$,
we feed the $i_{0}+(2k-1)$-th update and the $(i_{0}+2k)$-th update
to $\cA_{2}$. This takes $O(\frac{U_{a}}{\alpha}\log\frac{1}{p}+\sqrt{\frac{U_{b}}{\alpha_{b}}})=O(n^{1/2-\epsilon}\log\frac{1}{p})$
time per update. Therefore, at the end of the phase, i.e. after the
$(i_{0}+\frac{U_{a}}{2})$-th update, $\cA_{2}$ is maintaining $F_{2}$
which is a spanning tree of the current graph $G_{i_{0}+\frac{U_{a}}{2}}$,
and can handle $\frac{U_{a}}{2}$ more updates as desired. In even
phases, we do symmetrically for $\cE_{1}$, $\cE_{1}^{2d}$ and $\cA_{1}$. 

In the above description, we need to maintain $F_{final}$ altogether
with $F_{1}$ and $F_{2}$ because at the beginning of each phase,
we set $(G_{\cE_{2}},F_{\cE_{2}})=(G_{i_{0}},F_{final,i_{0}})$ and
$(G_{\cE_{2}^{2d}},F_{\cE_{2}^{2d}})=(F_{final,i_{0}},F_{final,i_{0}})$.
To maintain $F_{final}$ given that we know that $F_{1}$ and $F_{2}$
are correct spanning forests in odd and even phase respectively, we
can use the exactly same idea as in \ref{rem:get stable} using additional
$O(\log^{2}n)$ update time. 
\begin{claim}
For each update, $\cB$ takes time at most $O(n^{1/2-\epsilon}\gamma\log\frac{1}{p})$
when $U_{b}=n^{1/2+\epsilon}$, $\alpha_{b}=1/n^{2\epsilon}$, $U_{a}=n^{1/2-\epsilon}\alpha_{b}^{1/\epsilon'}=n^{1/2-\epsilon-2\epsilon/\epsilon'}$.\label{claim:update lv reduc}\end{claim}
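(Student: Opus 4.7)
The plan is to add up the per-update cost contributed by each sub-routine that $\cB$ runs during a single update and show that, with the chosen parameters, the sum is $O(n^{1/2-\epsilon}\gamma\log\frac{1}{p})$. Three contributions must be accounted for: (a) the amortized preprocessing cost of $\cA_{2}$ (respectively $\cA_{1}$) spread evenly over the first $U_a/4$ updates of each phase, (b) the cost of feeding two queued updates per step into the currently ``building'' instance during the second quarter of the phase, and (c) the overhead for keeping $F_{final}$ consistent with whichever of $F_1,F_2$ is active, via the standard ``combine two forests'' trick which costs $O(\log^2 n)$ per update.

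First I would bound (a). By \ref{thm:emergency decomp} the preprocessing of $\cA_2$ on a valid input (together with reverting $\cE_2,\cE_2^{2d}$ to the state at time $i_0-U_a$) costs $P=O\bigl((\frac{U_b^{1.5+\epsilon'}}{\alpha_b^{4+\epsilon'}}\gamma+\frac{U_b}{\alpha_b^{1+1/\epsilon'}})\log\frac{1}{p}\bigr)$, and this is split evenly across $\Theta(U_a)$ updates, giving per-update cost $O(P/U_a)$. For (b), by \ref{thm:emergency decomp} each update fed to $\cA_2$ takes $O\bigl(\frac{U_a}{\alpha_b^{1/\epsilon'}}\log\frac{1}{p}+\sqrt{U_b/\alpha_b}\bigr)$; with $U_a/\alpha_b^{1/\epsilon'}=n^{1/2-\epsilon}$ and $\sqrt{U_b/\alpha_b}=n^{1/4+1.5\epsilon}=o(n^{1/2-\epsilon})$, this is already $O(n^{1/2-\epsilon}\log\frac{1}{p})$. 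Contribution (c) is trivially $O(\log^2 n)$.

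The main work is to verify that $P/U_a=O(n^{1/2-\epsilon}\gamma\log\frac{1}{p})$ under the assignment $U_b=n^{1/2+\epsilon}$, $\alpha_b=n^{-2\epsilon}$, $U_a=n^{1/2-\epsilon-2\epsilon/\epsilon'}$. Substituting, the first summand of $P/U_a$ has exponent
\[
(1/2+\epsilon)(1.5+\epsilon')+2\epsilon(4+\epsilon')-(1/2-\epsilon-2\epsilon/\epsilon')
\;=\;\tfrac14+\tfrac{\epsilon'}{2}+3\epsilon\epsilon'+\tfrac{2\epsilon}{\epsilon'}+\tfrac{21\epsilon}{2},
\]
and we need this to be $\le 1/2-\epsilon$, i.e.\ $\tfrac{\epsilon'}{2}+3\epsilon\epsilon'+\tfrac{2\epsilon}{\epsilon'}\le \tfrac14-\tfrac{23\epsilon}{2}$. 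The left side is minimized in $\epsilon'$ at $\epsilon'^{\,2}=\frac{2\epsilon}{3\epsilon+1/2}$, which is precisely the choice made at the top of \ref{sec:las_vegas}; substituting this $\epsilon'$ and the value $\epsilon=0.0069459$ makes the constraint tight. The second summand of $P/U_a$ has exponent $4\epsilon+4\epsilon/\epsilon'<1/2-\epsilon$ by a direct (much slacker) numerical check. Adding the three contributions yields the claimed $O(n^{1/2-\epsilon}\gamma\log\frac{1}{p})$ bound.

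The hard part is really just keeping the exponents straight and seeing that the stated value of $\epsilon'$ is exactly what comes out of balancing $\epsilon'/2+3\epsilon\epsilon'$ against $2\epsilon/\epsilon'$ via AM--GM; once that is observed the bound falls out by arithmetic. No new algorithmic idea is needed here beyond \ref{thm:emergency decomp} and the phase/concurrent-instance scheme already described in \ref{sec:reduc to broken}.
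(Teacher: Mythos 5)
Your proposal is correct and takes essentially the same route as the paper: decompose the per-update cost into preprocessing amortization, fed-update cost, and the forest-combining overhead, then substitute the parameters and verify the exponents. The AM--GM observation explaining why $\epsilon'=\sqrt{2\epsilon/(3\epsilon+1/2)}$ is optimal is a welcome clarification that the paper leaves implicit, but it does not change the argument.
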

\begin{proof}
From the above discussion, we have that the update time of $\cB$
is 
\begin{eqnarray*}
 &  & O(n^{1/2-\epsilon}\log\frac{1}{p}+\frac{(\frac{U_{b}{}^{1.5+\epsilon'}}{\alpha_{b}^{4+\epsilon'}}\gamma+\frac{U_{b}}{\alpha_{b}^{1+1/\epsilon'}})\log\frac{1}{p}}{U_{a}}+\log^{2}n)\\
 & = & O(\gamma\log\frac{1}{p})\times O(n^{1/2-\epsilon}+\frac{n^{(1.5+\epsilon')(1/2+\epsilon)+2\epsilon(4+\epsilon')}+n^{1/2+\epsilon+2\epsilon(1+1/\epsilon')}}{n^{1/2-\epsilon-2\epsilon/\epsilon'}})\\
 & = & O(\gamma\log\frac{1}{p})\times O(n^{1/2-\epsilon}+\frac{n^{(1.5+\epsilon')(1/2+\epsilon)+2\epsilon(4+\epsilon')}}{n^{1/2-\epsilon-2\epsilon/\epsilon'}})\\
 & = & O(\gamma\log\frac{1}{p})\times O(n^{1/2-\epsilon}+\frac{n^{3/4+9.5\epsilon+3\epsilon\epsilon'+\epsilon'/2}}{n^{1/2-\epsilon-2\epsilon/\epsilon'}})\\
 & = & O(\gamma\log\frac{1}{p})\times O(n^{1/2-\epsilon}+n^{1/4+10.5\epsilon+3\epsilon\epsilon'+2\epsilon/\epsilon'+\epsilon'/2})\\
 & = & O(n^{1/2-\epsilon}\gamma\log\frac{1}{p})
\end{eqnarray*}
when $\epsilon=0.0069459$ and $\epsilon'=\sqrt{\frac{2\epsilon}{3\epsilon+0.5}}=0.16332$. \end{proof}
\begin{claim}
If $\phi(G_{b})\ge\alpha_{b}$, then, for each update, $\cB$ does
not report failure with probability $1-O(p)$.\end{claim}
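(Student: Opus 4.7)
The plan is to combine the per-call failure guarantee of \ref{thm:emergency decomp} with a union bound over the constantly many subroutine invocations that $\cB$ performs at each step. First I would invoke the last bullet of \ref{thm:emergency decomp} for both $\cA_1$ and $\cA_2$: since both instances are handed the \emph{same} initial graph $G_b$ with the \emph{same} expansion parameter $\alpha_b$, and since $\phi(G_b)\ge\alpha_b$ holds unconditionally by hypothesis, every preprocessing call of either instance fails with probability at most $p$ and every single-deletion update call fails with probability at most $p$.

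Next I would fix an update step $i$ of $\cB$ and enumerate the calls to $\cA_1, \cA_2$ in flight at that step. There are only three kinds. First, the currently active instance handles the single new deletion, contributing failure probability at most $p$. Second, if step $i$ lies in the first quarter of its phase, $\cB$ is executing, in a distributed fashion, the single preprocessing call of the inactive instance; the total failure probability of that call is at most $p$, and since the failure event can be charged to a single step, it contributes at most $p$ at step $i$. Third, if step $i$ lies in the second quarter of its phase, $\cB$ feeds at most two catch-up deletions to the inactive instance, contributing at most $2p$. A union bound over these at most four events yields overall failure probability $O(p)$ at step $i$.

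The main point requiring care, and what I expect to be the main obstacle, is verifying that whenever $\cB$ hands an input to $\cA_j$ it is in fact a \emph{valid} input as required by \ref{thm:emergency decomp}. In particular, the preprocessing call at the start of each phase is given $F_{final,i_0}$, which must be a genuine spanning forest of $G_{i_0}$, and the auxiliary data structures $\cE_j, \cE_j^{2d}$ must be in the states $(G_{i_0},F_{final,i_0})$ and $(F_{final,i_0},F_{final,i_0})$, respectively. The spanning-forest property follows by induction: as long as $\cB$ has not previously reported failure, the previously active instance of $\cA$ maintains a correct spanning forest by \ref{lem:correct mc decomp-1}, and $F_{final}$ agrees with it through the standard two-instance combination trick; the data-structure states are explicitly reset during the first quarter of the phase. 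Since the claim concerns only the probability of a \emph{fresh} failure at step $i$, conditioning on the absence of prior failures does not weaken the bound, and the argument closes.
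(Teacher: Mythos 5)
Your proposal is correct and follows essentially the same route as the paper: invoke the per-call failure guarantee of \ref{thm:emergency decomp} for each of the constantly many calls to $\cA_1,\cA_2$ that $\cB$ makes (or advances) at a given step, observe that $\cB$ reports failure only when one of them does, and close with a union bound. The paper's own proof is terser and omits the enumeration of in-flight calls and the validity-of-inputs discussion; your version is a reasonable elaboration of those points but does not depart from the paper's argument.
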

\begin{proof}
If $\phi(G_{b})\ge\alpha_{b}$, then, for each update, both $\cA_{1}$
and $\cA_{2}$ do not fail with probability $1-O(p)$ by \ref{thm:emergency decomp}.
As $\cB$ only reports failure when $\cA_{1}$ or $\cA_{2}$ does,
we are done.
\end{proof}
The resulting algorithm $\cB$ from the above reduction is the desired
algorithm for \ref{thm:lv on expander final}.
\begin{lem}
[Restatement of \ref{thm:lv on expander final}]For any $n,$ $p$
and $T$, there is a decremental $\st$ algorithm $\cB$ for $3$-bounded
degree graphs with $n$ nodes that works correctly against adaptive
adversaries\emph{ }with certainty for the first $T$ updates. $\cB$
preprocesses an initial graph $G_{b}$ in time $t_{p}(n,p)=O(n)$.
Given each edge deletion, $\cB$ takes $t_{u}(n,p,T)$ worst-case
time to update or report failure, where $t_{u}(n,p,n^{1/2+\epsilon})=O(n^{1/2-\epsilon}\gamma\log\frac{1}{p})$.
If $\phi(G_{b})\ge1/n^{2\epsilon}$, then, for each update, $\cB$
does not report failure with probability $1-O(p)$.\end{lem}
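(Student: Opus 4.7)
\medskip\noindent\textbf{Proof proposal.} The plan is to reduce the problem to the ``broken expander'' algorithm $\cA$ from \ref{thm:emergency decomp} by a standard two-instance trick. I will maintain two concurrent instances $\cA_1,\cA_2$ of $\cA$, each paired with its own augmented ET tree and 2-dim ET tree, and divide the update sequence into phases of length $U_a/2$, where $U_a=n^{1/2-\epsilon}\alpha_b^{1/\epsilon'}$. The invariant is: in odd phases $\cA_1$ is actively handling updates on the current graph, while we simultaneously prepare $\cA_2$ so that by the end of the phase it is ready to take over for the next (even) phase, and symmetrically. A third spanning forest $F_{final}$ is what $\cB$ actually reports, updated from whichever of $F_1,F_2$ is ``live,'' using the standard combine-two-forests trick costing an additional $O(\log^2 n)$ per update (as in the proof sketch of \ref{lem:intro:reductions}).

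First I would describe preprocessing: since initially no edges have been deleted, by \ref{rem:when no D } the input $D=\emptyset$ is valid for $\cA$ and requires no preprocessing time, so we only need to build one spanning tree $F_b$ of $G_b$ and initialize the two ET trees and two 2-dim ET trees, which is $O(n)$ by \ref{thm:aug ET tree} and \ref{thm:2dim ET tree}. Next, for the phase transition, at the start of an odd phase beginning at time $i_0+1$, I spend the first quarter of the phase ($U_a/4$ steps) doing two things in parallel: (i) resetting the underlying graph/forest of $\cE_2,\cE_2^{2d}$ to the current $(G_{i_0},F_{final,i_0})$, and (ii) feeding $(n,p,G_b,\alpha_b,D_{i_0},F_{final,i_0},\cE_2,\cE_2^{2d})$ to $\cA_2$ for preprocessing. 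Both of these have total cost $O((\frac{U_b^{1.5+\epsilon'}}{\alpha_b^{4+\epsilon'}}\gamma + \frac{U_b}{\alpha_b^{1+1/\epsilon'}})\log\frac{1}{p})$ by \ref{thm:emergency decomp}, amortized as $O(\,\cdot\,/U_a)$ per step. The resetting of $\cE_2,\cE_2^{2d}$ is handled by remembering the execution log for the past $U_a$ steps and rolling back (or, equivalently, by persistence), since only $O(U_a)$ edges have changed. In the second quarter ($U_a/4$ more steps), I catch $\cA_2$ up to real time by feeding it two queued updates per step at the $\cA$-update cost of $O(n^{1/2-\epsilon}\log\frac{1}{p})$ each. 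By the end of the phase $\cA_2$ has absorbed all $U_a/2$ updates of the phase and is ready to serve in the next phase with $U_a/2$ deletions' worth of budget remaining.

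The per-update time is therefore bounded by the sum of the three contributions: $O(n^{1/2-\epsilon}\log\frac{1}{p})$ for running the live $\cA_i$, the amortized preprocessing $O((\frac{U_b^{1.5+\epsilon'}}{\alpha_b^{4+\epsilon'}}\gamma + \frac{U_b}{\alpha_b^{1+1/\epsilon'}})\log\frac{1}{p}/U_a)$, and $O(\log^2 n)$ for maintaining $F_{final}$. Plugging in $U_b=n^{1/2+\epsilon}$, $\alpha_b=1/n^{2\epsilon}$, $U_a=n^{1/2-\epsilon-2\epsilon/\epsilon'}$, the second term becomes $O(n^{1/4+10.5\epsilon+3\epsilon\epsilon'+2\epsilon/\epsilon'+\epsilon'/2}\gamma\log\frac{1}{p})$, which for the choice $\epsilon=0.0069459$ and $\epsilon'=\sqrt{2\epsilon/(3\epsilon+0.5)}=0.16332$ is dominated by $n^{1/2-\epsilon}$, giving the claimed bound. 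Correctness against an adaptive adversary with certainty follows because $\cA$ itself is correct with certainty (it only fails detectably by reporting failure), and we propagate any failure of $\cA_1$ or $\cA_2$ as a failure of $\cB$. Finally, when $\phi(G_b)\ge\alpha_b=1/n^{2\epsilon}$, each of the two instances of $\cA$ avoids failure with probability $1-O(p)$ per update by \ref{thm:emergency decomp}, and since only one is ``live'' per update, the same bound passes through.

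\medskip\noindent\textbf{Main obstacle.} The delicate point is arithmetic: verifying that the exponent $1/4+10.5\epsilon+3\epsilon\epsilon'+2\epsilon/\epsilon'+\epsilon'/2$ is strictly below $1/2-\epsilon$ requires the two-variable optimization choice of $\epsilon,\epsilon'$, balancing the $\epsilon'$ and $1/\epsilon'$ terms. A secondary delicate point is engineering the rollback/state-restoration of $\cE_2,\cE_2^{2d}$ within the allotted budget without reading the whole graph, which is where using the execution log (or persistence) is essential and where the sublinear preprocessing cost guaranteed by the local decomposition in \ref{thm:emergency decomp} is crucial.
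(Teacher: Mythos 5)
Your proposal is correct and follows essentially the same route as the paper: the same two-instance phasing with phase length $U_a/2$, the same split into a preprocessing quarter (state restoration via execution-log rollback plus $\cA_2$ preprocessing) and a catch-up quarter, the same combine-two-forests trick from \ref{rem:get stable} for $F_{final}$, and the same parameter choices and exponent arithmetic. One small imprecision worth noting: both $\cA_1$ and $\cA_2$ can be active during the catch-up quarter, so the $1-O(p)$ failure bound comes from a union bound over the two instances rather than from "only one being live," but this does not affect the conclusion.
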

\begin{proof}
[Proof of \ref{thm:lv on expander final}]By \ref{claim:prep lv reduc},
$\cB$ has preprocessing time $t_{p}(n,p)=O(n)$ independent of $T$.
Given each edge deletion, $\cB$ takes $t_{u}(n,p,T)$ worst-case
time to update or report failure where $t_{u}(n,p,n^{1/2+\epsilon})=O(n^{1/2-\epsilon}\gamma\log\frac{1}{p})$
by \ref{claim:update lv reduc}. Finally, by \ref{thm:emergency decomp},
if $\phi(G_{b})\ge\alpha_{b}=1/n^{2\epsilon}$, then, for each update,
$\cB$ does not report failure with probability $1-O(p)$.\end{proof}

	\paragraph{Acknowledgement.} The authors would like to thank Parinya Chalermsook for explaining them the cut-matching game framework. 
	
	This project has received funding from the European Research Council (ERC) under the European Union's Horizon 2020 research and innovation programme under grant agreement No 715672. The authors were also partially supported by the Swedish Research Council (Reg. No. 2015-04659.)
	
	\bibliographystyle{plain}
	\bibliography{references}

	\pagebreak{}
	
	\appendix

\section{Formalization of Dynamic Algorithms vs. Adversaries\label{sec:formalization}}

In this section, we formalize the notions of dynamic algorithms and
adversaries themselves. This can be of independent interest because,
to be best of our knowledge, these notions were not defined formally
in the literature in the context of dynamic algorithms yet.\footnote{The notion of adversaries has been defined formally on the context
of online algorithms whose focus is on competitiveness \cite{Ben-DavidBKTW94,Karp92},
but has not been formalized for dynamic algorithms whose focuses are
on update and query time.}

First, we define \emph{dynamic problems. }A dynamic problem $\cP$
is specified by a tuple $(\cG,U,Q,A,\cS,Corr)$ with the following
definitions.
\begin{itemize}
\item $\cG,U,Q$ and $A$ are the sets of valid \emph{underlying objects},
\emph{updates}, \emph{queries}, and \emph{answers}, respectively.
Each update $u\in U$ is a function $u:\cG\rightarrow\cG$. We call
$U\cup Q$ a set of valid \emph{operations}. 
\item Let $\cS=\bigcup_{i\geq0}\cS_{i}$ where, for any $i$, $\cS_{i}\subseteq\cG\times(U\cup Q)^{i}$
is a set of valid \emph{operation sequences }of length $i$. For any
$S_{i}=(G_{0},o_{1},\dots,o_{i})\in\cS_{i}$, $G_{0}$ is called an
\emph{initial }object. For any $1\le j\le i$, we say $G_{j}$ is
the updated underlying object after the $j$-th operation. More precisely,
$G_{j}$ is obtains by iteratively applying to $G_{0}$ each update
operation $o_{k}$ for $k=1,\dots,j$ where $o_{k}\in U$.
\item Given an operation sequence $S_{i}=(G_{0},o_{1},\dots,o_{i})\in\cS_{i}$
and an answer sequence $\vec{a}_{i}=(a_{0},a_{1},\dots,a_{i})\in A^{i+1}$,
we say $\pi_{i}=(G_{0},a_{0},o_{1},a_{1},\dots,o_{i},a_{i})$ is a
\emph{transcript} at round $i$. We sometimes write $\pi_{i}=(S_{i},\vec{a}_{i})$.
Let $\Pi_{i}$ be the set of all transcripts at round $i$, and $\Pi=\bigcup_{i\geq0}\Pi_{i}$. 
\item For any $(\pi_{i},o_{i+1})\in\Pi_{i}\times(U\cup Q)$, $Corr(\pi_{i},o_{i+1})$
is a set of \emph{correct answers} for $o_{i+1}$ given $\pi_{i}$.
If $o_{i+1}\in Q$ is a query, then $Corr(\pi_{i},o_{i+1})\subseteq A$.
If $o_{i+1}\in U$ is an update, then for convenience we define $Corr(\pi_{i},o_{i+1})=\{\emptyset\}$. 
\end{itemize}
To illustrate, we give some examples how the tuple captures the definitions
of many dynamic problems in the literature.
\begin{example}
[Dynamic Connectivity]\label{exa:dyn conn} $\cG$ is the set of
all undirected graphs. Each update $u\in U$ specifies which edge
in the graph to be inserted or deleted. Each query $q\in Q$ specifies
a pair of nodes. The set of answers $A=\{\mbox{YES},\mbox{NO}\}$.
Given a transcript $\pi_{i}\in\Pi_{i}$ and a query $(u,v)\in Q$,
$Corr(\pi_{i},(u,v))=\{\mbox{YES}\}$ if $u$ and $v$ is connected
in $G_{i}$ and $Corr(\pi_{i},(u,v))=\{\mbox{NO}\}$ otherwise. For
any valid sequence $S\in\cS$, $S$ always excludes trivially invalid
updates and queries (e.g. a deletion of an edge $e$ when $e$ is
not in the graph, or a query $(u,v)$ when a node $u$ is not in the
graph).
\end{example}

\begin{example}
[Dynamic Spanning Tree]Similarly, we have $\cG$ is the set of all
undirected graphs. Each update $u\in U$ specifies which edge in the
graph to be inserted or deleted. The set of query is trivial: $Q=\{\textsf{list}\}$.
For any sequence $S_{i}=(G_{0},o_{1},\dots,o_{i})\in\cS_{i}$, we
have $o_{j}\in U$ for odd $j$ and $o_{j}\in Q$ for even $j$, so
that this captures the fact that we immediately query after each update
in this problem%
. The set of answers $A$ is the list of edges to be added or removed
from some set. So given any sequence $a_{0},\dots,a_{i}\in A^{i+1}$,
this defines a set $F_{i}$ of edges (by first applying $a_{0}$ to
an empty set, then applying $a_{1}$, and so on). Let $\pi_{i}=(G_{0},a_{0},\dots,o_{i},a_{i})\in\Pi_{i}$,
we have $a_{i+1}\in Corr(\pi_{i},\textsf{list})$ iff $F_{i+1}$ is
a spanning forest of $G_{i+1}$ where $F_{i+1}$ is obtained by applying
$a_{i+1}$ to $F_{i}$.
\end{example}

\begin{example}
For the dynamic problems on planar graphs, $\cG$ is the set of all
planar graphs, and $\cS$ is defined in such a way that after each
update the underlying graph remains planar. We can similar define
this for the problem on bounded arboricity graphs.
\end{example}
Next, we can define the notion of adversaries and algorithms for dynamic
problems. Fix some dynamic problem $\cP$. An \emph{adversary }$f$
is a function $f:\Pi\rightarrow(U\cup Q)$ (except $f(\cdot)=G_{0}\in\cG$)
such that for any $\pi_{i}=(S_{i},\vec{a}_{i})\in\Pi_{i}$ we have
$(S_{i},f(\pi_{i}))\in\cS_{i+1}$. For all $(S,\vec{a}),(S,\vec{a}')\in\Pi$
where $\vec{a}\neq\vec{a}'$, if we have $f(S,\vec{a})=f(S,\vec{a}')$,
then we say that $f$ is \emph{oblivious}, otherwise $f$ is \emph{adaptive}
in general. 

If an adversaries $f$ is oblivious, then $f$ is a function of this
form $f:\cS\rightarrow(U\cup Q)$. For any oblivious $f$, $f$ can
be written as a function of the form $f:\cS\rightarrow(U\cup Q)$
and the \emph{operation sequence generated by $f$} is $S_{f}=(G_{0},o_{1},o_{2},\dots)$
where $G_{0}=f(\cdot)$ and $o_{i}=f(G_{0},\dots,o_{i-1})$ for all
$i\ge1$. 

An \emph{algorithm }$\cA$ is a function $\cA:\{0,1\}^{\infty}\times(\cG\cup(\Pi\times(U\cup Q)))\rightarrow A\cup\{\emptyset\}$
such that for any infinite string $R\in\{0,1\}^{\infty}$ and transcript
$\pi\in\Pi$, we have $\cA(R,\pi,o)=\emptyset$ if $o\in U$, and
$\cA(R,\pi,o)\in A$ if $o\in Q$.

Given any adversary $f$, algorithm $\cA$ and an infinite string
$R$, this determines a sequence $(G_{0},a_{0},o_{1},a_{1},o_{2},a_{2},\dots)$
where $G_{0}=f(\cdot)$, $a_{0}=\cA(R,G_{0})$, $o_{i}=f(G_{0},a_{0},\dots,o_{i-1},a_{i-1})$,
and $a_{i}=\cA(R,G_{0},a_{0},\dots,o_{i-1},a_{i-1},o_{i})$ for $i\ge1$.\footnote{We note the resemblance of this definitions and the transcript in
the context of communication complexity is intentional.} For $i\ge0$, let $\pi_{i}=(G_{0},a_{0},\dots,o_{i},a_{i})\in\Pi_{i}$
denote a transcript at round $i$ given $f,\cA$ and $R$. We can
write $o_{i}=f(\pi_{i-1})$ and $a_{i}=\cA(R,\pi_{i-1},o_{i})$. Let
$t_{i}$ be the time period $\cA$ uses for outputting $a_{i}$ starting
counting from the time $a_{i-1}$ is outputted. We emphasize that
all variables in this paragraph are determined once $f,\cA$ and $R$
are fixed. Let $succ_{i}(f,\cA,R)$ be the successful event that occurs
when $a_{i}\in Corr(\pi_{i-1},o_{i})$ (or $a_{0}\in Corr(G_{0})$
when $i=0$).

Now, we are ready to define correctness of an algorithm.
\begin{defn}
[Correctness]Let $\cF$ be some set of adversaries. An algorithm
$\cA$ works correctly against $\cF$ with probability $p$ for the
first $t$ steps, if, for every $f\in\cF$, $\Pr_{R}[succ_{i}(f,\cA,R)]\ge p$
for all $0\le i\le t$. If $t=\infty$, then the phrase ``for the
first $t$ steps'' is omitted.
\end{defn}
Next, we define the precise meaning of preprocessing, update and query
time of an algorithm.
\begin{defn}
[Running time]Let $\cF$ be some set of adversaries. An algorithm
$\cA$ has worst-case preprocessing time $T_{p}$, worst-case update
time $T_{u}$ and worst-case query time $T_{q}$ against $\cF$ with
probability $p$ if, for every for every $f\in\cF$, we have $\Pr_{R}[t_{0}\le T_{p}]\ge p$
and, for each $i\ge1$, $\Pr_{R}[t_{i}\le T_{u}]\ge p$ when $o_{i}\in U$
and $\Pr_{R}[t_{i}\le T_{q}]\ge p$ when $o_{i}\in Q$.
\end{defn}
If $\cF$ is a set of all adversaries, then we say that $\cA$ works
\emph{against adaptive adversaries}. If $\cF$ is a set of all oblivious
adversaries, then we say that $\cA$ works \emph{against oblivious
adversaries}.

From above definition, adversaries are deterministic. We can also
define a randomized version, although as will be shown below, this
does not give more power. A \emph{randomized adversary} $f$ is a
function $f:\{0,1\}^{\infty}\times\Pi\rightarrow(U\cup Q)$ where
for any string $R'$, $f_{R'}=f(R',\cdot)$ is an adversary. We say
that $f$ is an \emph{oblivious randomized adversary} if, for all
$R'$, $f_{R'}$ is an oblivious adversary.
\begin{defn}
[Correctness (randomized adversaries)]Let $\cF$ be some set of randomized
adversaries. An algorithm $\cA$ works correctly against $\cF$ with
probability $p$ for the first $t$ steps, if, for every $f\in\cF$,
$\Pr_{R,R'}[succ_{i}(f_{R'},\cA,R)]\ge p$ for all $0\le i\le t$.
If $t=\infty$, then the phrase ``for the first $t$ steps'' is
omitted.
\end{defn}
The following proposition shows that, to design an algorithm that
works against adaptive adversaries or oblivious adversaries, it is
safe to assume that the adversaries are deterministic.
\begin{prop}
\label{thm:det adv}Let $\cF$ and $\cF_{r}$ and be a set of all
(oblivious) deterministic adversaries and (oblivious) randomized adversaries
respectively. If $\cA$ works correctly against $\cF$ with probability
$p$ for the first $t$ steps, then so does against $\cF_{r}$. \end{prop}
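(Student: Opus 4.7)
The plan is to prove Proposition~\ref{thm:det adv} by a standard averaging argument over the adversary's random string, reducing the randomized-adversary case to the deterministic-adversary case that is already assumed.

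First I would unpack the definitions. A randomized adversary $f \in \cF_r$ is a function $f : \{0,1\}^\infty \times \Pi \to (U \cup Q)$ such that for every fixed string $R'$, the slice $f_{R'} = f(R',\cdot)$ is a deterministic adversary, and in the oblivious case $f_{R'}$ is moreover oblivious for every $R'$. In particular, for every $R'$ we have $f_{R'} \in \cF$. The hypothesis says that for each such $f_{R'}$ and each $0 \le i \le t$, $\Pr_R[succ_i(f_{R'}, \cA, R)] \ge p$, where the probability is only over the algorithm's random string $R$ (independent of $R'$).

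Next, I would observe that since $R$ and $R'$ are independent, conditioning on $R'$ preserves the bound on $R$. Concretely, for any $f \in \cF_r$ and any $0 \le i \le t$,
\[
\Pr_{R,R'}[succ_i(f_{R'}, \cA, R)] \;=\; \mathbb{E}_{R'}\bigl[\Pr_R[succ_i(f_{R'}, \cA, R)]\bigr] \;\ge\; \mathbb{E}_{R'}[p] \;=\; p,
\]
where the inequality uses that for each fixed $R'$ the adversary $f_{R'}$ lies in $\cF$ and so the hypothesis applies. This is exactly the conclusion required by the definition of correctness against $\cF_r$. The oblivious variant is identical: if $\cA$ is assumed correct against all oblivious deterministic adversaries, then since each $f_{R'}$ is oblivious for oblivious $f$, the same averaging yields correctness against all oblivious randomized adversaries.

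There is essentially no obstacle here; the only thing to be careful about is the definitional bookkeeping, namely that $R$ (algorithm's randomness) and $R'$ (adversary's randomness) are independent, and that the event $succ_i(f_{R'}, \cA, R)$ is a measurable event in the product space so that Fubini/the tower property applies. I would state these formally and then close the argument with the two-line inequality above.
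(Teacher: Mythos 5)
Your proof is correct and is essentially the same argument the paper gives: fix $R'$, observe $f_{R'}\in\cF$ so the deterministic hypothesis applies, and then pass to the joint probability over $(R,R')$. The only cosmetic difference is that you use the tower property / averaging $\mathbb{E}_{R'}[\Pr_R[\cdot]]\ge p$, whereas the paper bounds via $\min_{R'}\Pr_R[\cdot]\ge p$; both are valid and equivalent here.
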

\begin{proof}
Let $f\in\cF_{r}$. We have that for all $0\le i\le t$, $\Pr_{R,R'}[succ_{i}(f_{R'},\cA,R)]\ge\min_{R'}\Pr_{R}[succ_{i}(f_{R'},\cA,R)]\ge p$
where the last inequality is because, for any $R'$, we have $f_{R'}\in\cF$.
\end{proof}
To say instead that $\cA$ has \emph{amortized} or \emph{expected
worst-case} or \emph{expected amortized} update time $T_{u}$, the
expression $t_{i}\le T_{u}$ is replaced by $\sum_{j=1}^{i}t_{i}\le i\times T_{u}$
or $E_{R}[t_{i}]\le T_{u}$ or $E_{R}[\sum_{j=1}^{i}t_{i}]\le i\times T_{u}$,
respectively. Similarly, for query time and preprocessing time. A
Monte Carlo algorithm works correctly with some probability and has
guarantee on running time without expectation. A Las Vegas algorithm
work correctly with certainty but its running time is guaranteed only
in expectation. A deterministic algorithm both works correctly with
certainty and has guarantee on running time without expectation.

In many dynamic problems including dynamic $\st$ problem, each update
is immediately followed by a query. There are many other examples
including dynamic $\mst$, and most, if not all, dynamic problems
which maintains an approximate/exact value (e.g. the size of maximum
matching, the size of a global min cut, etc.) For these dynamic problems,
if the update and query time from the definition above are $T_{u}$
and $T_{q}$, then the update time from the original definition is
$T_{u}+T_{q}$.

\section{Omitted proof\label{sec:Omitted-proof}}

\subsection{\label{sec:proof aug ET tree}Proof of \ref{thm:aug ET tree}}
\begin{proof}
By using \emph{ET tree }data structure from \cite[Section 2.1.4 - 2.1.5]{HenzingerK99},
we can preprocess $G$ and $F$ in time $O(m)$ and handle all operations
(except $\textsf{path}$) in time $O(\log n)$, and handle $\textsf{tree\_size}$
and $\textsf{count\_tree}$ in $O(1)$ time.

In order to also handle $\textsf{path}$, it suffices to show that
there is a data structure that preprocess $G$ and $F$ in time $O(m)$
and handle $\instree$, $\deltree$ and $\textsf{path}$ in time $O(\log n)$.
This is because we can just implement this data structure in parallel
with ET tree. A data structure called \emph{link-cut tree }from \cite{SleatorT83}
or\emph{ top tree }from \cite{AlstrupHLT05} can indeed do the job.
From \cite[Section 2]{AlstrupHLT05}, top tree can handle $\instree$,
$\deltree$. In order to handle $\textsf{path}(u,v,i)$, we can the
operation $\textsf{expose}(u,v)$ (defined in \cite{AlstrupHLT05})
which allows us to do binary search on the unique path $P_{uv}$ from
$u$ to $v$ in $F$ and find the $i$-th vertex in $P_{uv}$.
\end{proof}

\subsection{From \ref{sec:2d ET tree} and \ref{sec:linear sketches}}

\subsubsection{\label{sec:proof list intersection}Proof of \ref{thm:list intersection}}

To describe the algorithm for \ref{thm:list intersection}, we need
to define several notions: \emph{base clusters}, \emph{non-base clusters},
\emph{clusters}, and \emph{hierarchies of clusters}. 

For any list $L=(e_{1},e_{2},\dots,e_{s})\in\cL$, we call a collection
of sublists $B_{1},\dots,B_{g}$ of $L$ as \emph{base clusters }of
$L$ if the following hold: 1) $g=O(\sqrt{k})$, 2) $L=B_{1}\dots B_{g}$,
and 3) $\min\{\sqrt{k}/2,|L|\}\le|B_{i}|\le2\sqrt{k}$. That is, the
base clusters partition $L$ into $O(|L|/\sqrt{k})$ parts where each
part of length around $\sqrt{k}$ (or as long as $|L|$). Let $\cB(L)$
denote the set of base clusters of $L$. Let $\cB(\cL)=\bigcup_{L\in\cL}\cB(L)$
denote the set of base clusters of $\cL$. 

Let $\cB(L)$ be the base clusters of $L$. Let $\cT_{L}$ be a balanced
binary search tree whose leaves correspond to the base clusters $\cB(L)$.
Each node $u\in\cT_{L}$ corresponds to a sublist $C_{u}$ of $L$.
If $u$ is the $i$-th leaf, then $C_{u}=B_{i}$ is the $i$-th base
cluster of $L$. If $u$ is a non-leaf and have left child $v$ and
right child $w$, then $C_{u}=C_{v}C_{w}$ is the concatenation of
the two sublists corresponding the two nodes $v$ and $w$. For each
non-leaf node $u\in\cT_{L}$, we call $C_{u}$ a \emph{non-base cluster
}of $L$. Note that, for the root $r\in\cT_{L}$, $C_{r}=L$. We call
$C_{r}$ the \emph{root cluster }of $L$. We denote $\cN(L)$ the
set of all non-base clusters of $L$, and $\cN(\cL)=\bigcup_{L\in\cL}\cN(L)$
denote the set of non-base clusters of $\cL$. Similarly, we call
$\cT_{L}$ the \emph{hierarchy of clusters }in $L$, and $\cT_{\cL}=\bigcup_{L\in\cL}\cT_{L}$
be the set of such hierarchies. Let $\cC(L)=\cB(L)\cup\cN(L)$ denote
the set of \emph{clusters} of $L$, and similarly $\cC(\cL)=\bigcup_{L\in\cL}\cC(L)$
denote the set of clusters of $\cL$. 

Next, we describe the main data structure that for answering the $\textsf{intersection}$
query. We say that $A$ is an \emph{intersection table} \emph{of clusters
in }$\cL$ if for any two clusters $C_{1},C_{2}\in\cC(\cL)$, $A(C_{1},C_{2})=e$
if a number $e\in C_{1}\cap C_{2}$ exists, otherwise, $A(C_{1},C_{2})=\emptyset$. 
\begin{prop}
Let $\cL$ be a collection of lists. Given an intersection table of
clusters in $\cL$, for any $L_{1},L_{2}\in\cL$, a query $\textsf{intersection}(L_{1},L_{2})$
can be answer in $O(1)$ time.\label{thm:answer intersection}\end{prop}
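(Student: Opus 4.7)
The plan is to observe that each list $L \in \cL$ is essentially \emph{already} a cluster of itself, and then just look it up in the intersection table. More precisely, by the definition of the hierarchy $\cT_L$, the root node corresponds to the sublist $C_r = L$, which we called the root cluster of $L$. Depending on whether $\cT_L$ has a single leaf or multiple leaves, the root cluster is either the unique base cluster or a non-base cluster; in both cases it belongs to $\cC(L) \subseteq \cC(\cL)$. Consequently $(L_1, L_2)$ is a valid index into the intersection table $A$.

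The algorithm is then immediate: maintain, as part of the representation of each list $L \in \cL$, a pointer to the root of $\cT_L$ (this pointer is set up once when $L$ enters $\cL$, and is easily refreshed whenever $L$ is produced by $\textsf{create}$, $\textsf{merge}$, or $\textsf{split}$, since in each case the new hierarchy has a clearly identified root). On a query $\textsf{intersection}(L_1, L_2)$, follow the two pointers to obtain the root clusters $C_{r_1} = L_1$ and $C_{r_2} = L_2$ in $O(1)$ time, then return $A(C_{r_1}, C_{r_2})$ with a single table lookup, which takes $O(1)$ time. By the defining property of the intersection table, $A(C_{r_1}, C_{r_2})$ is some element of $L_1 \cap L_2$ if the intersection is nonempty, and $\emptyset$ otherwise, which is exactly what $\textsf{intersection}(L_1, L_2)$ must return.

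There is essentially no obstacle here: the statement is a direct consequence of the definitions of root cluster and intersection table, and the whole point of introducing the intersection table was to reduce $\textsf{intersection}$ queries to a constant-time array access. The genuinely nontrivial work, namely preprocessing the table in time $O(k)$ and maintaining it under $\textsf{merge}$/$\textsf{split}$ within the stated $O(\sqrt{k}\log k)$ bound, will be done in subsequent lemmas and is not part of this statement.
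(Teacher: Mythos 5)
Your proof is correct and follows exactly the same route as the paper's: identify the root cluster $C_r = L$ for each list, note that it lives in $\cC(\cL)$ so $(C_{r_1},C_{r_2})$ is a valid index into the table $A$, and return $A(C_{r_1},C_{r_2})$ via a single $O(1)$ lookup. The extra remarks about keeping a pointer to the root of $\cT_L$ and the case split on whether the root cluster is a base or non-base cluster are fine elaborations but do not change the argument.
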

\begin{proof}
Let $C_{1}$ and $C_{2}$ be the root clusters of $L_{1}$ and $L_{2}$.
Recall that $C_{1}=L_{1}$ and $C_{2}=L_{2}$. We can just return
$A(C_{1},C_{2})$ to the query $\textsf{intersection}(L_{1},L_{2})$.
\end{proof}
Now, the goal is to prove the following. We will exploit the assumptions
from \ref{thm:list intersection} so that we can preprocess and update
quickly.
\begin{prop}
Let $\cL=\{L_{i}\}_{i}$ be a collection of lists such that $\sum_{i}|L_{i}|\le k$
and the number of lists in $\cL$ is bounded by some constant. There
are at most $O(\sqrt{k})$ clusters of $\cL$, i.e. $|\cC(\cL)|=O(\sqrt{k})$.\label{thm:bound clusters}\end{prop}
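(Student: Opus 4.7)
The plan is to bound $|\cB(\cL)|$ and $|\cN(\cL)|$ separately, each by $O(\sqrt{k})$, and then combine them.

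First I would bound the number of base clusters. By definition, the base clusters of a single list $L$ partition $L$, and each base cluster $B$ satisfies $|B| \ge \min\{\sqrt{k}/2, |L|\}$. So if $|L| < \sqrt{k}/2$, then $L$ has exactly one base cluster, and otherwise each of its base clusters has size at least $\sqrt{k}/2$, giving at most $2|L|/\sqrt{k}$ base clusters. In either case $|\cB(L)| \le 2|L|/\sqrt{k} + 1$. Summing over all $L \in \cL$, and using the two assumptions $\sum_{L \in \cL} |L| \le k$ and $|\cL| = O(1)$, we obtain
\[
|\cB(\cL)| = \sum_{L \in \cL} |\cB(L)| \le \sum_{L \in \cL} \left(\frac{2|L|}{\sqrt{k}} + 1\right) \le \frac{2k}{\sqrt{k}} + O(1) = O(\sqrt{k}).
\]

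Next I would bound the number of non-base clusters. For each $L \in \cL$, the non-base clusters of $L$ are in bijection with the internal (non-leaf) nodes of the balanced binary search tree $\cT_L$, whose leaves are precisely $\cB(L)$. Any binary tree with $\ell$ leaves has at most $\ell - 1$ internal nodes, so $|\cN(L)| \le |\cB(L)|$. Summing over the constant number of lists, $|\cN(\cL)| \le |\cB(\cL)| = O(\sqrt{k})$.

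Finally, combining these two bounds yields $|\cC(\cL)| = |\cB(\cL)| + |\cN(\cL)| = O(\sqrt{k})$, as claimed. There is no real obstacle here — the proposition is essentially a direct counting argument once we unpack the definitions; the only thing to check carefully is the corner case where a list is shorter than $\sqrt{k}/2$, which is why the lower-bound constraint in the base-cluster definition uses $\min\{\sqrt{k}/2, |L|\}$ rather than just $\sqrt{k}/2$.
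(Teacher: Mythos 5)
Your proof is correct and follows essentially the same approach as the paper: bound $|\cB(\cL)|$ using the lower bound on base-cluster size, then bound $|\cN(\cL)|$ by the number of internal nodes of the binary trees, and combine. The paper's version is terser (it bounds each $|\cB(L)|$ by $O(\sqrt k)$ using $|L|\le k$ and multiplies by the constant number of lists), while you sum $|\cB(L)|$ over $\cL$; the two are equivalent, and your handling of the $|L|<\sqrt{k}/2$ corner case is a welcome bit of extra care.
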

\begin{proof}
As there are only a constant number of lists in $\cL$, we only any
to bound $|\cC(L)|$ for each $L\in\cL$. By definition of base clusters,
it is clear that $|\cB(L)|=O(\sqrt{k})$ because $|L|\le k$, and
so $|\cN(L)|=O(\sqrt{k})$. Hence $|\cC(L)|\le|\cB(L)|+|\cN(L)|=O(\sqrt{k})$.
\end{proof}
We note that $\cN(\cL)$ and $\cC(\cL)$ and is defined based on the
base clusters $\cB(\cL)$ and the hierarchies of clusters $\cT_{\cL}$.
Therefore, in our data structure, we will only explicitly maintain
$\cB(\cL)$ and $\cT_{\cL}$ and the intersection table $A$ of clusters
in $\cL$. We show how to build and maintain quickly the first two
in \ref{thm:maintain base and hierarchy}, and then do the same for
the intersection table in \ref{thm:maintain intersection}.
\begin{lem}
Let $\cL$ be a collection of lists such that the total length of
lists in $\cL$ is bounded by $\sum_{L\in\cL}|L|\le k$. Then the
base clusters $\cB(\cL)$ and the hierarchies of cluster $\cT_{\cL}$
of $\cL$ can be built in $O(k)$ time and can be maintained under
$\textsf{create}$, $\textsf{merge}$, and $\textsf{split}$ in $O(\log k)$
time.\label{thm:maintain base and hierarchy}\end{lem}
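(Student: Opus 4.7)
I will represent each list $L \in \cL$ as a doubly-linked list of its elements, partitioned into consecutive sublists that are the base clusters of $L$ (each base cluster recorded as a pointer to its first and last element plus a size field), together with a height-balanced binary search tree $\cT_L$ whose leaves are in bijection with the base clusters of $L$ and whose internal nodes store the total number of elements in their subtree so that the leaf containing any given position can be located in $O(\log k)$ time.

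For the $O(k)$ initial construction, I scan each $L \in \cL$ in order and carve it greedily into $\lceil |L|/\sqrt{k}\rceil$ consecutive base clusters of size exactly $\sqrt{k}$ (if $|L| < \sqrt{k}/2$, the entire list becomes one base cluster; otherwise the final cluster, if short, is absorbed into its predecessor so that every base cluster has size in $[\sqrt{k}/2, 2\sqrt{k}]$). This takes $O(|L|)$ time for the carving and $O(|L|/\sqrt{k})$ time to assemble a balanced binary tree over the base clusters by a standard bottom-up pairing procedure. Summed over the constantly many lists, the total is $O(\sum_L |L|) = O(k)$.

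For $\textsf{create}(e)$, I return a single-element base cluster together with a one-node hierarchy, in $O(1)$ time. For $\textsf{merge}(L_1,L_2)$, I concatenate $\cT_{L_1}$ and $\cT_{L_2}$ using the standard $O(\log k)$ concatenation (join) procedure for balanced binary trees; the seam involves only the rightmost base cluster of $L_1$ and the leftmost of $L_2$, and if either is too small relative to $\sqrt{k}/2$ I splice the two into one (in $O(1)$ on the doubly-linked lists, plus an $O(\log k)$ leaf deletion in $\cT$), redistributing or re-splitting at the midpoint whenever the merged cluster would otherwise exceed $2\sqrt{k}$. For $\textsf{split}(L,i)$, I descend $\cT_L$ via the subtree-size annotations to the unique leaf whose base cluster $B$ contains index $i$ (in $O(\log k)$), split $B$ at the local offset by doubly-linked-list pointer surgery in $O(1)$, insert the resulting two fragments as two adjacent leaves in $O(\log k)$, and invoke the standard $O(\log k)$ balanced-tree split at the boundary between them; the two fragments become boundary clusters of the two new lists and, if undersized, are absorbed into or rebalanced against their sole remaining neighbor in that list, exactly as in the merge case.

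The main obstacle is arguing that the size invariant $\sqrt{k}/2 \le |B_i| \le 2\sqrt{k}$ (modulo the $|B_i|=|L|$ exception) really can be restored after every split with only $O(\log k)$ work, since a split can produce a cluster of arbitrarily small size. The plan is to note that every operation creates at most two undersized clusters, that these always lie at the boundary of the list they belong to, and that local rebalancing against the single neighbor that can actually be small uses only a constant number of splice, split-at-midpoint, and insert/delete-leaf primitives, each of which runs in $O(\log k)$; a straightforward potential argument assigning credit to each boundary cluster that is strictly below $\sqrt{k}/2$ then absorbs any cascading work into the amortized $O(\log k)$ budget, while preserving the $O(\sqrt{k})$ bound on the total number of base clusters required by the subsequent proposition on the intersection table.
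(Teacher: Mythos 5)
Your data structure does not actually achieve the claimed $O(\log k)$ bound, because you represent each base cluster as a plain doubly-linked list. The hierarchy $\cT_L$ with subtree-size annotations gets you, in $O(\log k)$ time, to the base cluster $B$ that contains position $i$ together with the local offset $i'$ inside $B$, but then you still have to locate the $i'$-th \emph{element} of $B$ before you can do the ``$O(1)$ pointer surgery.'' In a doubly-linked list of length $\Theta(\sqrt k)$ that lookup is $\Theta(\sqrt k)$, not $O(1)$. The same problem hits $\textsf{merge}$: after splicing the two boundary clusters you may need to ``re-split at the midpoint,'' and finding a midpoint (or any legal split point, which necessarily lies $\Omega(\sqrt k)$ steps from either end) again costs $\Theta(\sqrt k)$ in a linked list. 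So both operations come out to $\Theta(\sqrt k)$, not the $O(\log k)$ the lemma asserts. The paper avoids this by storing \emph{each base cluster itself} in a balanced binary search tree with subtree sizes, so that splitting a cluster at a given offset and joining two clusters each take $O(\log k)$ time; the outer tree $\cT_L$ over the $O(\sqrt k)$ leaves then only needs constant-many leaf insertions/deletions and one balanced split or join, also $O(\log k)$.

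Separately, your ``main obstacle'' is not really an obstacle, and the potential/amortization argument you invoke is unnecessary (and would weaken the statement from worst-case to amortized). After a $\textsf{split}$ or $\textsf{merge}$ at most two clusters, both at list boundaries, can be undersized. Fixing one such cluster means merging it with its unique interior neighbor, whose size is in $[\sqrt{k}/2, 2\sqrt k]$, giving a cluster of size at most $5\sqrt k/2$; if this exceeds $2\sqrt k$ you split it roughly in half, and both halves then lie in $[\sqrt k, 5\sqrt k/4] \subset [\sqrt k/2, 2\sqrt k]$. No further repair is triggered, so the total is a constant number of cluster-level splits/joins and $\cT_L$ leaf updates, worst case. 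Once you replace the linked lists by per-cluster balanced trees, each of those primitives is $O(\log k)$ worst case and the lemma follows directly.
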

\begin{proof}
To build the base clusters $\cB(\cL)$, we just read through $\cL$
and split it into sublists of size $\sqrt{k}$. After having the base
clusters $\cB(\cL)$, for each $L\in\cL$, we just build $\cT_{L}$
as some balanced binary search tree whose leaves correspond $\cB(L)$.
The total time is obviously $O(k)$. Next, we show how to handle updates.

It is trivial to handle $\textsf{create}$. Given either $\textsf{merge}$
or $\textsf{split}$ to $\cL$, to maintain $\cB(\cL)$, observe that
we only need to merge and split the base clusters by constant number
of time. We will store each base cluster of $\cL$ in a balanced binary
search tree (binary search tree with logarithmic depth). As we can
merge and split balanced binary search tree with at most $k$ elements
in time $O(\log k)$, we are done.

To maintain $\cT_{\cL}$, for each $L$, $\cT_{L}$ is simply some
balanced binary search tree whose leaves are base clusters of $L$.
If some base cluster of $L$ is splitted or merged, we maintain $\cT_{L}$
as a balanced binary search tree in time $O(\log k)$. Because either
$\textsf{merge}$, or $\textsf{split}$ to $\cL$ only generate a
constant number of merging and splitting to base clusters, we are
done.\end{proof}
\begin{lem}
Let $\cL=\{L_{i}\}_{i}$ be a collection of lists. Suppose that 1)
the total length of lists in $\cL$ is bounded by $\sum_{i}|L_{i}|\le k$,
2) the number of lists in $\cL$ is bounded by some constant, and
3) each number occurs in $\cL$ at most a constant number of times.
The intersection table of clusters in $\cL$ can be built in time
$O(k)$ and can be maintained under $\textsf{create}$, $\textsf{merge}$,
and $\textsf{split}$ in $O(\sqrt{k}\log k)$ time.\label{thm:maintain intersection}\end{lem}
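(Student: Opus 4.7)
The plan is to exploit two structural facts: first, that $|\mathcal{C}(\mathcal{L})| = O(\sqrt{k})$ from \ref{thm:bound clusters}, so the full intersection table contains only $O(k)$ entries; and second, that each element occurs $O(1)$ times in $\mathcal{L}$, which lets us avoid pairwise comparison of base clusters. I would represent $A$ as a hash table keyed by pairs of pointers to cluster objects, so that entries can be added, looked up, and deleted in $O(1)$ expected time even as clusters are created and destroyed.

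For preprocessing in $O(k)$ time, I would first assume \ref{thm:maintain base and hierarchy} has already produced $\mathcal{B}(\mathcal{L})$ and $\mathcal{T}_{\mathcal{L}}$ in $O(k)$ time, then fill the table in two passes. In the first pass, I scan all elements of $\mathcal{L}$; for each element $e$, I list the $O(1)$ base clusters containing $e$ and, for every pair $(B_1,B_2)$ of such base clusters, set $A(B_1,B_2)=A(B_2,B_1)=e$. The total work over all elements is $O(k)$, and this fills every (base, base) entry. In the second pass, I traverse each $\mathcal{T}_L$ bottom-up: for each non-base cluster $C$ with children $C_1,C_2$ and each cluster $D \in \mathcal{C}(\mathcal{L}) \setminus \mathcal{C}(L)$, set $A(C,D) = A(C_1,D)$ if that is non-empty and $A(C_2,D)$ otherwise (each computed in $O(1)$). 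Since only $O(k)$ table entries exist, the total cost is $O(k)$.

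For updates in $O(\sqrt{k}\log k)$ time, the key observation is that a single $\textsf{create}$, $\textsf{merge}$, or $\textsf{split}$ produces only $O(1)$ new base clusters (by the proof of \ref{thm:maintain base and hierarchy}) and only $O(\log k)$ new non-base clusters, since the affected spine of the balanced hierarchy tree has depth $O(\log k)$. For each new base cluster $C$, I iterate over its $O(\sqrt{k})$ elements to populate $A(C,B)$ for all base clusters $B$ via the element-to-cluster map, then walk each foreign hierarchy tree $\mathcal{T}_{L'}$ bottom-up to propagate $A(C,D)$ to every non-base cluster $D$ in $O(1)$ per entry; this totals $O(\sqrt{k})$ per new base cluster. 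For each new non-base cluster $C$ with children $C_1, C_2$, I likewise compute $A(C,D)$ for all $O(\sqrt{k})$ other clusters $D$ in $O(1)$ each, giving $O(\sqrt{k})$. Finally, I remove entries incident to the $O(\log k)$ deleted clusters in $O(\sqrt{k})$ each. The total update cost is $O(\log k \cdot \sqrt{k}) = O(\sqrt{k}\log k)$.

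The main obstacle I anticipate is verifying that every new non-base cluster can indeed obtain all its table entries in $O(\sqrt{k})$ time — which relies on processing new clusters in bottom-up order so that $A(C_1,D)$ and $A(C_2,D)$ are already computed when we compute $A(C,D)$. A secondary subtlety is maintaining the element-to-(base cluster) inverse map under the $O(1)$ base cluster splits and merges per update, but since each split or merge touches $O(\sqrt{k})$ elements and only $O(1)$ such events occur per operation, this inverse map is maintainable in $O(\sqrt{k})$ extra time per update, comfortably within the stated budget.
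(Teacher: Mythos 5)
Your proof is correct and follows essentially the same approach as the paper: exploit $|\cC(\cL)|=O(\sqrt{k})$ so the table has $O(k)$ entries, note that a $\textsf{merge}$ or $\textsf{split}$ touches only $O(1)$ base clusters and $O(\log k)$ non-base clusters, use the element-to-base-cluster map to fill the base-cluster row in $O(\sqrt{k})$, and then propagate entries bottom-up via $A(C,D)\in\{A(C_1,D),A(C_2,D)\}$, giving $O(\sqrt{k}\log k)$ in total. One small caveat worth fixing: a hash table for $A$ gives only expected-time, randomized access, whereas \ref{thm:2dim ET tree} (which this lemma feeds into) is stated as a deterministic data structure with worst-case bounds; since there are only $O(\sqrt{k})$ clusters alive at any moment, you should instead index $A$ by a $O(\sqrt{k})\times O(\sqrt{k})$ array keyed on recyclable cluster identifiers, which is what the paper's ``other entries are set to $\bot$ by default'' phrasing implicitly assumes.
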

\begin{proof}
By \ref{thm:bound clusters}, we know that there are only $O(\sqrt{k})$
clusters. By the third assumption, we can maintain the following data
structure: given any number $e$, all the base clusters $B\in\cB(\cL)$,
such that $e$ occurs in $B$, are returned in $O(1)$ time. We first
describe how to handle the update $A$. We will describe how to build
$A$ later.

It is trivial to handle $\textsf{create}$. We only show how to handle
$\textsf{merge}$, and $\textsf{split}$. Given the operations $\textsf{merge}$
or $\textsf{split}$ to $\cL$, we only only need to merge and split
the base clusters $\cB(\cL)$ by constant number of times. It suffices
to show that intersection table of clusters in $\cL$ can be maintained
in $O(\sqrt{k}\log k)$ time under under each merge or split of base
clusters in $\cB(\cL)$.

Consider any list $L\in\cL$. Let $\cB(L)$ be the base clusters of
$L$ and $\cT_{L}$ be the hierarchy of clusters of $L$. Suppose
that two base clusters $B_{1},B_{2}\in\cB(L)$ are merged into $B'=B_{1}B_{2}$.
Let $\cB'(L)$ and $\cT'_{L}$ be the updated base clusters and hierarchy
of clusters of $L$ after merging. By \ref{thm:maintain base and hierarchy},
we can obtain $\cB'(L)$ and $\cT'_{L}$ in time $O(\log k)$. Let
$P$ the path in $\cT'_{L}$ from root of to the leaf corresponding
to the new base cluster $B'$. Observe that, any cluster of $C_{u}$
corresponding to a node $u\in\cT'_{L}\setminus P$ outside the path
$P$ is not affected by the merging. Therefore, to obtain an updated
intersection table $A$ of clusters, it is enough to compute $A(C_{u},C)$
for all $u\in P$ and $v\in\cT'_{L}$. We will do this bottom-up.

First, we compute $A(B',C_{v})$ for all $v\in\cT'_{L}$ as follows.
For each number $e\in B'$, for each base clusters $B''$ that $B''\ni e$,
and for each clusters $C$ that $C$ is an ancestor of $B''$ in $\cT'_{L}$,
we set $A(B',C)=e$. Other entries are set to $\bot$ by default.
This can be done in $O(\sqrt{k}\log k)$ times. Because there are
$O(\sqrt{k})$ numbers in $B''$, for each number $e$ we can list
all base clusters $B''$ containing $e$ in $O(1)$, and each such
base cluster $B''$ has at most $O(\log k)$ ancestor because $\cT'_{L}$
has $O(\log k)$ depth.

Next, we compute $A(C_{u},C_{v})$ for all $v\in\cT'_{L}$ where $u\in P$
is a non-leaf in $\cT'_{L}$ and has $u_{1}$ and $u_{2}$ has children.
For each $v\in\cT'_{L}$, we set $A(C_{u},C_{v})=A(C_{u_{1}},C_{v})$
if $A(C_{u_{1}},C_{v})\neq\bot$, otherwise $A(C_{u},C_{v})=A(C_{u_{2}},C_{v})$.
This can be done in $O(\sqrt{k})$ because there are only $O(\sqrt{k})$
clusters. In total, the time need to compute $A(C_{u},C_{v})$ for
all $v\in\cT'_{L}$ and for all non-leaves $u\in P$ is $O(\sqrt{k}\log k)$
because $\cT'_{L}$ has $O(\log k)$ depth.

The correctness of the intersection table is clear. To build $A$,
use the same idea as the way we update, it is obvious how to do it
in $O(k\log k)$ but we can save the $O(\log k)$ factor by building
it ``bottom-up''.
\end{proof}
By \ref{thm:answer intersection} and \ref{thm:maintain intersection},
this concludes the proof of \ref{thm:list intersection}.

\subsubsection{\label{sec:proof sparse recovery}Proof of \ref{thm:sparse recovery}}

To prove the theorem, we need to define unbalanced expanders. 
\begin{defn}
\label{def:unbalanced expander}Fix any $k,d,\epsilon$. A \emph{$(k,\epsilon)$-unbalanced
expander} is a bipartite simple graph $G=((L,R),E)$ with left degree
$d$ (i.e. for every $v\in L$, $\deg(v)=d$) such that for any $X\subset L$
with $|X|\le k$, the set of neighbors $N(X)$ of $X$ has size $|N(X)|\ge(1-\epsilon)d|X|$.
\end{defn}
There is an explicit construction of a $(k,\epsilon)$-unbalanced
expander by \cite{CapalboRVW02}.
\begin{thm}
[Theorem 7.3 of \cite{CapalboRVW02} and Proposition 7 of \cite{BerindeGIKS08}]Fix
any $n\ge k$ and $\epsilon>0$. There is a $(k,\epsilon)$-unbalanced
expander $G=((L,R),E)$ with left degree $\beta_{1}=2^{O(\log(\log n/\epsilon))^{3}}$
, $|L|=n$ and $|R|=k\beta_{1}/\epsilon^{O(1)}$. Given $u\in L$
and $i\in[\beta]$, we can compute the $i$-th edge incident to $u$
in $\beta_{2}=O(\poly(\log(n),1/\epsilon))$ time.\label{thm:unbalanced expander}
\end{thm}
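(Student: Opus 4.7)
The plan is to follow the strategy of Capalbo--Reingold--Vadhan--Wigderson \cite{CapalboRVW02}, which constructs lossless condensers and then reinterprets them as unbalanced expanders. First, I would state the equivalence between $(k, \epsilon)$-unbalanced expanders and \emph{lossless condensers} for min-entropy $\log k$: a bipartite graph with left degree $d$, $|L| = n$, $|R| = D$ is a $(k, \epsilon)$-unbalanced expander if and only if the function $f : [n] \times [d] \to R$ sending $(u, i)$ to the $i$-th neighbor of $u$ is a $(\log k, \log(1/(1-\epsilon)))$-lossless condenser, i.e.\ for every source $X$ on $[n]$ with min-entropy at least $\log k$, the distribution $(f(X, U_d), U_d)$ has min-entropy at least $\log k + \log d - O(\epsilon)$ on $R \times [d]$. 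This reduction is routine via the Leftover Hash / flattening lemma, and reduces the problem to constructing explicit lossless condensers with the stated parameters.

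The core of the proof is then the recursive zig-zag-style construction of lossless condensers from \cite{CapalboRVW02}. The plan would be to exhibit a \emph{base} object (a condenser with very strong parameters for small entropy thresholds, built from small-bias spaces or Parvaresh--Vardy style objects), and then to iterate a composition operation that at each level multiplies the seed length while producing a condenser for a larger input. The number of iterations needed to reach input length $n$ is roughly $\log^* n$ or $\log\log n$ depending on the rate of doubling, and each level multiplies the degree by a polylogarithmic factor in $1/\epsilon$ and $\log n$; carefully balancing the parameters yields the stated left degree $\beta_1 = 2^{O((\log(\log n/\epsilon))^3)}$ and right-hand side size $|R| = k\beta_1 / \epsilon^{O(1)}$. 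The $(\log(\log n/\epsilon))^3$ exponent arises from tracking how error, degree, and entropy gap interact across the three nested uses of the composition.

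For the explicit evaluation guarantee, I would argue inductively that each level of the construction is evaluable in time $\poly(\log n, 1/\epsilon)$. At the base level, the objects are arithmetic over fields of size $\poly(\log n, 1/\epsilon)$, so each edge can be computed in polylogarithmic time. The recursive step only composes a constant number of such maps at each level, and there are $O(\log\log n)$ levels; therefore computing the $i$-th neighbor of $u$ reduces to walking down this shallow recursion tree, each node costing $\poly(\log n, 1/\epsilon)$, for a total of $\beta_2 = \poly(\log n, 1/\epsilon)$ time as claimed.

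The hard part will be controlling the exact degree blow-up in the recursion: each level of the composition incurs a multiplicative factor depending on $\log(1/\epsilon)$ and on the current gap between input and output min-entropy, and a naive analysis loses polynomially in $n$ rather than quasipolylogarithmically. Getting down to $2^{O((\log(\log n/\epsilon))^3)}$ requires the CRVW technique of using a condenser (not a full extractor) at every internal level, so that the output side remains close in length to the input side and the recursion contracts fast enough. Once this accounting is set up correctly, the explicit-time claim and the expansion property both follow mechanically from the base case and the inductive invariant.
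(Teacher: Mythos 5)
The paper does not prove this theorem; it imports it verbatim as Theorem 7.3 of \cite{CapalboRVW02} together with the explicitness restatement in Proposition 7 of \cite{BerindeGIKS08}, so there is no in-paper argument to compare against. Your sketch captures the overall shape of the CRVW construction --- a recursive composition of conductors from a small base case, with the degree $2^{O(\log(\log n/\epsilon))^{3}}$ and the $\poly(\log n,1/\epsilon)$ per-edge evaluation time both coming from bounding the depth of that recursion and the cost per level --- and that is the right architecture.

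Two points would need fixing before this could stand as a self-contained proof. First, CRVW (2002) does not use Parvaresh--Vardy-style objects: those codes, and the lossless condensers built from them by Guruswami--Umans--Vadhan, postdate CRVW by several years. The base objects in CRVW are small lossless conductors built from Ramanujan graphs, small-bias spaces, and extractor/conductor zig-zag composition, not algebraic list-decodable codes, and substituting an anachronistic base changes the parameter bookkeeping you would need to carry out. Second, the bridge between $(k,\epsilon)$-unbalanced vertex expansion and lossless condensing is the flattening lemma (min-entropy sources are convex combinations of flat sources) applied to a direct support-size count; the leftover hash lemma is not part of that equivalence and would introduce additive error that is incompatible with the ``lossless'' regime where $\epsilon$ multiplies $d|X|$ rather than adding to it. More importantly, the proposal asserts the exponent $O(\log(\log n/\epsilon))^3$ rather than deriving it --- tracking exactly how the degree, error, and entropy gap multiply across the recursion is precisely where all the work in the CRVW proof lives, and without that accounting the claimed left degree is not established. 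Since the paper uses this theorem strictly as a black box, none of this affects the paper itself.
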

Next, we need some more definitions about matrices. The \emph{Hamming
code matrix} (or \emph{bit test matrix}) $H^{(n)}\in\{0,1\}^{\left\lceil \log_{2}n+1\right\rceil \times n}$
is a matrix whose the $i$-th column has $1$ followed by the binary
representation of $i$. Let $P$ be a $p\times n$ matrix and $Q$
be a $q\times n$ matrix with rows $\{P_{i}\}_{0\le i<p}$ and $\{Q_{j}\}_{0\le j<q}$
respectively. The \emph{row tensor product} $P\rtensor Q$ of $P$
and $Q$ is a $pq\times n$ matrix whose rows are $\{P_{i}Q_{j}\}_{0\le i<p,0\le j<q}$
where $P_{i}Q_{j}$ denotes the component-wise product of the two
row vectors. The construction of $k$-sparse recovery linear sketch
in \cite{BerindeGIKS08} is as follows.
\begin{thm}
[Theorem 18 of \cite{BerindeGIKS08}]Fix and $n\ge k$, $r$ and $\epsilon=1/8$.
Let $\Psi\in\{0,1\}^{r\times n}$ be a bi-adjacency matrix of a $(k,\epsilon)$-unbalanced
expander. Let $\Phi=\Psi\otimes_{R}H^{(n)}$. Then there is an algorithm
$\cA_{recover}$ such that, for any $k$-sparse vector $v$ $v\in[-M,M]^{n}$
where $M=poly(n)$, given $\Phi v$, all non-zero entries of $v$
can be identified in time $O(r\log^{2}n)$. \label{thm:sparse recovery matrix}
\end{thm}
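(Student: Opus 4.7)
The plan is to decode the sketch $\Phi v$ bucket by bucket, where each bucket corresponds to a right vertex of the expander and the Hamming code factor lets us extract the unique nonzero entry of $v$ when a bucket is ``isolated.'' Concretely, for each $i \in [r]$, the $p = \lceil \log_2 n + 1\rceil$ entries of $\Phi v$ indexed by $(i, 0), \ldots, (i, p-1)$ equal $H^{(n)} \cdot v|_{N(i)}$, where $v|_{N(i)}$ is $v$ restricted to the left neighbors of right vertex $i$. If $\operatorname{supp}(v) \cap N(i) = \{k\}$ (the bucket is $1$-sparse with respect to $v$), then $(\Phi v)_{(i,0)} = v_k$ and for $j \geq 1$ we have $(\Phi v)_{(i,j)} \in \{0, v_k\}$ according to bit $j-1$ of the binary representation of $k$. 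This yields an $O(\log n)$-time test that, from the $p$ entries of bucket $i$, either rejects, or produces a candidate pair $(k, c)$; verifying that the extracted pattern is consistent (plus a single consistency check using the leading ``1'' row of $H^{(n)}$) rules out false positives for $v \in [-M, M]^n$.

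The second ingredient is that many buckets are $1$-sparse. Let $X = \operatorname{supp}(v)$, so $|X| \leq k$. The $(k,\epsilon)$-unbalanced expander property of $\Psi$ gives $|N(X)| \geq (1-\epsilon) d|X|$, and a standard double-counting argument then yields that at least $(1-2\epsilon)d|X|$ edges incident to $X$ go to buckets with a unique neighbor in $X$. With $\epsilon = 1/8$, each element $x \in X$ has on average at least $(3/4)d$ isolated buckets, so in particular at least one $x \in X$ has $\geq (3/4)d$ isolated witnesses, whose values reveal $x$.

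The algorithm $\cA_{recover}$ now proceeds iteratively, maintaining a current estimate $\hat{v}$ (initially $0$) and the residual sketch $\Phi(v - \hat{v})$. In each phase it scans all $r$ buckets, runs the $1$-sparse test on each, and for each accepted bucket with candidate $(k, c)$ it adds $c$ to $\hat{v}_k$ and subtracts $c \cdot \Phi e_k$ (supported on the $d$ buckets neighboring $k$) from the residual. Because the expander bound applies to any subset of size $\le k$, after a constant fraction of the remaining support is decoded in each phase, the support of the residual shrinks geometrically, and after $O(\log k) = O(\log n)$ phases the residual is $0$. For the running time, each phase costs $O(r \log n)$ to scan and test all buckets, plus $O(d \log n)$ per recovered coordinate for the subtraction; since at most $k$ coordinates are ever recovered and $r = \Theta(kd/\epsilon^{O(1)})$, the total subtraction cost is $O(kd \log n) = O(r \log n)$. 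Over $O(\log n)$ phases this gives the claimed $O(r \log^2 n)$ bound.

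The main technical obstacle is the ``no false positives'' analysis for the $1$-sparse test: a bucket with two or more residual nonzero entries could, in principle, produce entries $(\Phi v)_{(i,j)}$ that happen to match the $\{0, c\}$ pattern of a $1$-sparse bucket by coincidence. Resolving this requires exploiting the leading all-ones row of $H^{(n)}$ together with the bit-test rows and the fact that $v$ is integer-valued in $[-M, M]$ with $M = \operatorname{poly}(n)$; alternatively, one augments $H^{(n)}$ with an extra ``checksum'' row, so that a collision must satisfy an additional non-trivial linear relation over the candidate entries, which rules out spurious acceptances. The second subtle point is showing that the geometric shrinkage of the residual support actually holds in every phase (not just in expectation), which follows by reapplying the $(k, \epsilon)$-expander bound to the residual support $Y \subseteq X$ at each step.
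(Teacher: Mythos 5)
The paper does not actually prove this statement: it is cited verbatim from Berinde et al.\ and used as a black box, and what the paper adds in \Cref{sec:proof sparse recovery} is only the explicitness of the matrix construction via $\cA_{construct}$. So there is no ``paper proof'' to compare against, but your reconstruction can still be checked on its own terms. Your outline does follow the standard shape of the Berinde et al.\ argument: the expander isolation bound plus the bit-test encoding gives a $1$-sparse bucket detector, and an iterative peeling loop is supposed to drive the residual to zero.

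The gap you yourself flag as the ``main technical obstacle'' --- ruling out false positives in the $1$-sparse bucket test --- is genuine, and neither of the fixes you sketch closes it. Integrality of $v$ together with the leading all-ones row of $H^{(n)}$ is \emph{not} enough. Concretely, suppose a bucket's left-neighborhood meets $\operatorname{supp}(v)$ in exactly the three indices $a$, $b$, and $a\lor b$ (bitwise OR) with $v_a = v_b = 1$ and $v_{a\lor b} = -1$, which is legal since $v$ ranges over $[-M,M]^n$. Then the all-ones row reads $1$, and on every bit-test row the contributions cancel so that the bucket output is exactly the encoding of the single index $a\land b$ with value $1$ --- a spurious acceptance whenever $a\land b$ happens also to neighbor this bucket. (Two-element collisions can never do this: on any bit where $a$ and $b$ differ, the ratio of that bit-test row to the all-ones row is $c_1/(c_1+c_2)$, which lies in $\{0,1\}$ only if $c_1=0$ or $c_2=0$.) A single fixed ``checksum'' row, as you suggest, does not help either: it is just one more deterministic linear constraint, and a slightly larger collision can be arranged to satisfy it as well.

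This is not cosmetic for your peeling loop. You justify geometric shrinkage by reapplying the $(k,\epsilon)$-expander bound to the residual support $Y$, which requires $Y\subseteq X$ and $|Y|\le k$. But $Y\subseteq X$ \emph{is} the no-false-positives claim you have not established; once a spurious coordinate is subtracted, $Y\not\subseteq X$ and $|Y|$ can exceed $k$, after which the expander bound and the entire shrinkage argument stop applying. Repairing the proof requires either expansion for somewhat larger sets, so that a bounded number of false positives still leaves $|Y|=O(k)$ and the peeling self-corrects, or a different decoding rule such as majority voting, where a coordinate is accepted only when a large constant fraction of its $d$ buckets decode consistently to the same pair; there the isolation count dominates the number of collision buckets, and no individual bucket needs to be trustworthy. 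Either way, your sketch needs this filled in before the $O(r\log^2 n)$ claim goes through.
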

Now, we are ready to prove \ref{thm:sparse recovery}.
\begin{proof}
[Proof of \ref{thm:sparse recovery}]By \ref{thm:unbalanced expander}
when setting $\epsilon=1/8$, there is a bi-adjacency matrix $\Psi\in\{0,1\}^{O(k\beta_{1})\times n}$
of a $(k,\epsilon)$-unbalanced expander $G=((L,R),E)$ with left
degree $\beta_{1}=2^{O(\log\log n)^{3}}$. Then, by \ref{thm:sparse recovery matrix},
we have that $\Phi=\Psi\otimes_{R}H^{(n)}\in\{0,1\}^{d\times n}$
where $d=O(k\beta_{1}\log n)$ and there is algorithm $\cA_{recover}$
for recovering $k$-sparse vector $v$, given $\Phi v$, in time $O(k\beta_{1}\log^{2}n)=O(d\log n)$.
Since each column of $\Psi$ is $\beta$-sparse, each column of $\Phi$
is $s$-sparse where $s=O(\beta_{1}\log n)$. Finally, we show the
algorithm $\cA_{construct}$ for listing a non-zero entries in a column
of $\Phi$. Since, for each node $u\in L$, edges incident to $u$
correspond to the non-zero entries of the $u$-th column of $\Psi$,
and each edge can be computed in $\beta_{2}=O(\poly\log(n)))$ time,
we have that all non-zero entries of each column of $\Phi$ can also
be identified in $s\beta_{2}$ time.
\end{proof}

\subsubsection{\label{sec:proof d word ET}Proof of \ref{thm:d-word ET tree}}
\begin{proof}
First, we need this definition. An \emph{Euler tour} $ET(T)$ of $T$
is an (ordered) list of size $2|V(T)|-1$ with the following properties.
Each element in the list is mapped to some node of $T$ such that
1) each sublist of $ET(T)$ corresponds to some connected subtree
of $T$, and 2) the whole list corresponds to the whole $T$. (See
\cite{HenzingerK99} for the precise definition of Euler tour.) For
each node $v\in T$, there is some element from $ET(T)$ mapped to
and we call the first element in $ET(T)$ mapped to $v$ the \emph{representative
}of $v$ in $ET(T)$. Given a tree $T$ in $F$, an ET tree $\cT_{T}$
is a balanced binary search tree whose leaves are exactly the elements
of $ET(T)$. So each internal node $u$ of $\cT_{T}$, called \emph{ET
node}, corresponds to all leaves in the subtree rooted at $u$ in
$\cT_{T}$, which form a sublist of $ET(T)$, which in turn corresponds
to a subtree of $T$. For each $v\in T$, the leaf of $\cT_{T}$ corresponding
to the representative of $v$ in $ET(T)$ is called a representative
of $v$ in $\cT_{T}$.

We augment a standard ET-tree as follows. Each ET node $u$ of $\cT_{T}$
has an additional corresponding vector $e_{u}$ of dimension $d$.
Entries of $e_{u}$ are either marked ``active'' or ``inactive''.
We have the following invariant about $e_{u}$ for each ET-node $u\in\cT_{T}$:
(1) if $u$ is a leaf and $u$ is not the representative of any vertex
of $T$, then no entry of $e_{u}$ is marked as active, (2) if $u$
is a leaf and $u$ is the representative of a vertex $v$ of $F$,
then, for $i\in[d]$, $e_{u}[i]$ is active if $x_{v}[i]\neq0$ initially
or $x_{v}[i]$ has been updated after preprocessing, and we have $e_{u}[i]=x_{v}[i]$
for each active entry, and (3) if $u$ is an internal node with $v$
and $w$ as children in $\cT_{T}$, then, for $i\in[d]$, $e_{u}[i]$
is active if either $e_{v}[i]$ or $e_{w}[i]$ is active, and the
value of active entry $e_{v}[i]$ is the sum of $e_{v}[i]$ and $e_{w}[i]$
that are active. 

It is easy to see that, given these invariants, the root $r$ of $\cT_{T}$
is such that if $e_{r}[i]$ is active, then $\sum_{u\in T}x_{u}[i]=e_{r}[i]$
and if $e_{r}[i]$ is inactive then $\sum_{u\in T}x_{u}[i]=0$. Hence,
to return return $\textsf{sum\_vec}$, $\sum_{u\in T}x_{u}$ can be
read off from $e_{r}$ in time $O(d)$. Since ET-trees are balanced,
the time for, $\textsf{update\_vec}$, updating each some entry $x_{u}[i]$
is $O(\log n)$. 

We analyze the time needed for preprocessing and other update operations.
The preprocessing algorithm is to, for each $T$ in $F$, mark as
active and set the value of the entries of each ET node in $\cT_{T}$
correctly according to the invariants. There are $nnz(x)$ entry to
update, hence the preprocessing time takes $O(nnz(x)\log n)$. To
update tree edges, by using the same algorithms as in typical ET trees,
each tree-edge update operation requires us to recompute the vector
$e_{u}$ of $O(\log n)$ many ET nodes $u$. So the time needed for
$\instree$ and $\deltree$ is $O(d\log n)$ time. 
\end{proof}

\subsection{From \ref{sec:reductions}}

\subsubsection{Proofs of \ref{lem:reduc stable}, \ref{lem:reduc poly length},
\ref{lem:reduc bound degree} and \ref{lem:reduc lower bound edge}.}

For completeness, we give the proofs of \ref{lem:reduc stable}, \ref{lem:reduc poly length},
\ref{lem:reduc bound degree} and \ref{lem:reduc lower bound edge}.
In \cite[Section 3.1]{HenzingerK99}, Henzinger and King show a reduction
from dynamic $k$-weight $\mst$ to dynamic $\st$ problem. We modify
the idea in their reduction and use it in all the proofs of this section.
The basic idea is illustrated in the proof of \ref{lem:reduc stable}
and \ref{rem:get stable}.
\begin{lem}
[Reminder of \ref{lem:reduc stable}]Suppose there is an $(n,p_{c},p_{t},T)$-algorithm
$\cA$ for any graph with preprocessing time $t_{p}$ and update time
$t_{u}$. Then there is a stable $(n,O(Tp_{c}),p_{t},T)$-algorithm
$\cB$ for the same graph with preprocessing time $O(t_{p}+m_{0})$
where $m_{0}$ is the number of initial edges, and update time $O(t_{u}\log n+\log^{2}n)$.\end{lem}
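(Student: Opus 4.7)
The plan is to run $\cA$ in parallel as an oracle and use its (possibly non-stable) forest $F_{\cA}$ only to repair our own forest $F_{\cB}$, which we will be the one to output. On top of these we maintain two auxiliary structures: a link-cut (or top) tree on $F_{\cA}$ supporting the operation of computing the $k$-th vertex on a $u$-to-$v$ path in $O(\log n)$ time, and an augmented ET tree on $F_{\cB}$ supporting connectivity queries. Preprocessing invokes $\cA$ in time $t_{p}$, sets $F_{\cB}\leftarrow F_{\cA}$, and builds both structures on $F_{\cA}=F_{\cB}$ in $O(m_{0})$ time.

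When an update arrives, we first forward it to $\cA$, paying $t_{u}$ with probability $1-p_{t}$ and producing at most $O(t_{u})$ tree-edge changes in $F_{\cA}$, which we propagate into the link-cut tree at cost $O(\log n)$ each, totalling $O(t_{u}\log n)$. For an insertion of $e=(u,v)$, we query the ET tree of $F_{\cB}$ in $O(\log n)$ time: if $u,v$ lie in distinct trees of $F_{\cB}$ we add $e$ to $F_{\cB}$, otherwise we leave $F_{\cB}$ untouched. For a deletion of a non-tree edge of $F_{\cB}$, nothing else is needed. Such modifications are consistent with \Cref{def:stable}: we only extend $F_{\cB}$ when it would otherwise fail to span and we only remove from $F_{\cB}$ the actual deleted tree edge, never a gratuitous swap.

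The delicate case is a deletion of a tree edge $e=(u,v)\in F_{\cB}$. We remove $e$ in the ET tree, splitting the tree into $T_{u}$ and $T_{v}$. Using the link-cut tree of $F_{\cA}$ we test whether $u,v$ are connected in the updated $F_{\cA}$; if not, no replacement can exist and we are done. Otherwise the unique $u$-to-$v$ path $P$ in $F_{\cA}$ starts in $V(T_{u})$ and ends in $V(T_{v})$, so it must contain an edge crossing the cut $(V(T_{u}),V(T_{v}))$. We find such an edge by binary searching on $P$: at each round we locate the midpoint vertex $w$ of the active subpath using the link-cut tree in $O(\log n)$ time, use the ET tree of $F_{\cB}$ to decide whether $w\in V(T_{u})$ in another $O(\log n)$, and recurse on the half of $P$ whose endpoints straddle the cut. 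After $O(\log n)$ rounds we obtain an edge $e'=(x,y)$ with $x\in V(T_{u}),\,y\in V(T_{v})$, which we add to $F_{\cB}$ as the replacement. This replacement step costs $O(\log^{2} n)$, giving total worst-case update time $O(t_{u}\log n + \log^{2} n)$ as claimed; the running-time guarantee inherits the probability $1-p_{t}$ directly from $\cA$.

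For correctness, whenever $\cA$ has been correct at every step $j\le i$, the path-based search above is guaranteed to find a replacement if one exists, and hence $F_{\cB}$ is a spanning forest of $G_{i}$. Union-bounding the failure of $\cA$ over the $i\le T$ past steps gives correctness at step $i$ with probability at least $1-O(Tp_{c})$, matching the claimed bound. The main obstacle I anticipate is not the data-structure engineering per se but the accounting: a single graph update can trigger up to $t_{u}$ tree-edge swaps inside $\cA$, each of which must be mirrored into the link-cut tree of $F_{\cA}$ before the binary search is launched, which is why the $\log n$ factor multiplies $t_{u}$; one must also verify that the binary-search invariant (``the endpoints of the active subpath lie on opposite sides of the cut'') is preserved under a correct $F_{\cA}$ even when $F_{\cB}$ carried stale edges from earlier failures, and that such staleness is exactly captured by the $O(Tp_{c})$ union-bound term.
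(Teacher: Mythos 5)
Your proof takes essentially the same approach as the paper: run $\cA$ in parallel, mirror its forest $F_{\cA}$ into a path-queryable structure, maintain a separate stable forest $F_{\cB}$ that only changes when one of its own tree edges is deleted, and in that event binary-search along the $u$-to-$v$ path in $F_{\cA}$ (using midpoint queries on $F_{\cA}$ plus component tests on $F_{\cB}$) to find a replacement crossing the cut; the paper bundles both auxiliary structures into its augmented ET tree (Definition~\ref{def:aug ET tree}), which is exactly your link-cut-tree-plus-ET-tree pair. The one cosmetic difference is that the paper constructs a fresh deterministic spanning forest for $F$ at preprocessing rather than copying $F_{\cA}$, but this does not change the $O(Tp_c)$ bound. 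Your final paragraph correctly identifies the subtle point the argument rests on: correctness of the binary search needs only that $F_{\cA}$ is a spanning forest, and the union bound over the $\le T$ steps absorbs the possibility that $\cA$ has ever erred.
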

\begin{proof}
[Proof of \ref{lem:reduc stable}]The algorithm $\cB$ preprocesses
as follows. Given an initial graph $G_{0}$, preprocess $G_{0}$ using
$\cA$. Then $\cA$ outputs a spanning forest $F'$. Then we find
another spanning forest $F$ of $G_{0}$ in time $O(m_{0})$ where
$m_{0}=|E(G_{0})|$. Then we initialize two augmented ET trees $\cE$
and $\cE'$. Both the input graph and forest of $\cE'$ are $F'$.
Similarly, both the input graph and forest of $\cE$ are $F$. The
total preprocessing time is $O(t_{p}+m_{0})$.

To an update, we first describe how $F'$ handle updates: We feed
each update to $\cA$ so that it updates its spanning forest $F'$
of $G$ in time $t_{u}$. As there are at most $t_{u}$ changes in
the tree edges of $F'$, we update $\cE'$ accordingly in time $O(t_{u}\log n)$. 

To maintain $F$, if a tree-edge of $F$ is not deleted, then $F$
does not change. Suppose a tree-edge $(u,v)$ is deleted and some
connected component $T$ of $F$ is separated into $T_{u}\ni u$ and
$T_{v}\ni v$. Then we query $\cE'$ on $F'$ using $\textsf{findtree}$
operation, whether $u$ and $v$ are still connected in $F'$. If
not, there there is no replacement edge to reconnect $T_{u}$ and
$T_{v}$ and we are done. Otherwise, there is a path $P_{uv}=(u,\dots,v)$
in $F'$ and there must be some edge $(u',v')$ where $u'\in T_{u}$
and $v'\in T_{v}$. Using $\textsf{path}$ operation on $\cE'$ and
$\textsf{findtree}$ operation on $\cE$, we can binary search in
$P_{uv}$ to find $(u',v')$ and use it as a replacement edge in $O(\log n\times\log n)=O(\log^{2}n)$
time. (To be more precise, we query $\cE'$ for the $k$-th edge $(u'',v'')$
in $P_{uv}$ for some $k$, and check if $u''\in T_{u}$ and $v''\in T_{v}$
using $\cE$). Therefore, the update time for $F$ itself is $O(\log^{2}n)$.
Hence, the total update time of $\cB$ is $O(t_{u}\log n+\log^{2}n)$.
By union bound, $\cB$ is $(n,p'_{c},p_{t}',U)$-algorithm where $p'_{c}=O(Tp_{c})$
and $p'_{t}=p_{t}$. 

Finally, observe that $\cB$ is stable because $F$ is not updated
except a tree-edge $(u,v)$ of $F$ is deleted. In that case, $\cB$
either update a spanning forest into $F\setminus(u,v)$ or $F\cup(u',v')\setminus(u,v)$. \end{proof}
\begin{rem}
\label{rem:get stable}In the proof of \ref{lem:reduc stable}, observe
that get a stable algorithm $\cB$ maintaining the spanning forest
$F$, we just need to be able query connectivity between two nodes
in some other spanning forest $F'$ of $G$. Suppose that at any time
$i$, we can query some spanning forest $F'_{i}$ where $F'_{i}$
can be totally different from $F'_{i-1}$. Then, by paying additional
$O(m_{0})$ in preprocessing time and $O(\log^{2}n)$ in update time
as in the proof of \ref{lem:reduc stable}, we can also maintain $F$
stably.
\end{rem}
Next, we prove \ref{lem:reduc poly length}.
\begin{lem}
[Reminder of \ref{lem:reduc poly length}]Suppose there is a stable
$(n,p_{c},p_{t},T)$-algorithm $\cA$ for a 3-bounded-degree graph
where the number of edges is always between $n/4$ and $2n$ with
preprocessing time $t_{p}(n,p_{c},p_{t},T)$ and update time $t_{u}(n,p_{c},p_{t},T)$.
Then, for any $U\ge T$, there is a stable $(n,O(Up_{c}),O(p_{t}),U)$-algorithm
$\cB$ for a 3-bounded-degree graph where the number of edges is always
between $n/4$ and $2n$ with preprocessing time $O(t_{p}(n,p_{c},p_{t},T))$
and update time $O(t_{u}(n,p_{c},p_{t},T)+t_{p}(n,p_{c},p_{t},T)/T+\log^{2}n)$.\end{lem}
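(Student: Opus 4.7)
I will use a textbook pipelined deamortization with a constant number of concurrent copies of the given stable $(n,p_c,p_t,T)$-algorithm $\cA$. Divide the length-$U$ update sequence into phases of length $L=T/3$. In phase $i$, one copy $\cA^{(i)}$ is \emph{active} (current with the live graph and producing the output forest), a second copy $\cA^{(i+1)}$ is \emph{catching up} (replaying queued updates so as to be current by the start of phase $i+1$), and a third copy $\cA^{(i+2)}$ is being \emph{preprocessed} on the snapshot $G_{iL}$ of the graph taken at the start of phase $i$. Each copy has a three-phase lifetime (preprocess $\to$ catch-up $\to$ active) during which it is fed $0+2L+L=T$ updates in total, which fits the $(n,p_c,p_t,T)$-capacity of $\cA$.

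\textbf{Per-step cost.} In each real step we (i) feed the live update to $\cA^{(i)}$ at cost $t_u$, (ii) feed two updates (one backlogged from the previous phase, one live) to $\cA^{(i+1)}$ at cost $2t_u$, and (iii) perform $t_p/L=O(t_p/T)$ work of preprocessing on $\cA^{(i+2)}$. The snapshot $G_{iL}$ needed by (iii) is kept cheaply by logging the $O(L)$ edge updates since the previous snapshot, so it is available in $O(n+L)=O(n)$ time, absorbed into the distributed preprocessing budget. Summed across the pipeline, the raw per-step cost is $O(t_u+t_p/T)$, comfortably inside the target worst-case update time.

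\textbf{Stable output via Remark~\ref{rem:get stable}.} The active copy is itself stable, but at a phase boundary the handover from $\cA^{(i)}$ to $\cA^{(i+1)}$ can change the maintained forest at a step where no output tree-edge was deleted, which would violate the stability of $\cB$. I sidestep this exactly as in the proof of \ref{lem:reduc stable}: $\cB$ maintains its own spanning forest $F$ explicitly in an augmented ET tree, and whenever a tree-edge $(u,v)\in F$ is deleted, it queries $\textsf{find\_tree}$ on the active copy's forest $F'$ to test connectivity and, if $u$ and $v$ are still connected in $F'$, uses a $\textsf{path}$-operation binary search on augmented ET trees for $F$ and $F'$ to locate a replacement edge in $O(\log^2 n)$ time. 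This both preserves stability and contributes only $O(\log^2 n)$ additional per-update time. Thus $\cB$'s total update time is $O(t_u+t_p/T+\log^2 n)$, and its preprocessing time is $O(t_p+n)=O(t_p)$ since the graph has $\Theta(n)$ edges.

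\textbf{Correctness and main subtlety.} A constant number of calls to $\cA$ occur per real step, so the running-time failure probability at a single step grows by only a constant factor to $O(p_t)$; a union bound over the $U$ query answers gives correctness failure $O(Up_c)$ against adaptive adversaries, matching the statement. The one genuine subtlety I anticipate is the stability handover across phase boundaries, which is what forces the Remark~\ref{rem:get stable} construction; once this is in place the rest is arithmetic. I note also that the scheme remains adaptive-adversary-safe because the forest of the catching-up copy $\cA^{(i+1)}$ is not exposed through $\cB$'s output until phase $i+1$, so its random coins stay independent of the adversary's view during its catch-up phase.
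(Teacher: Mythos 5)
Your proof is correct and follows essentially the same pipelined-deamortization strategy as the paper's, with the inessential cosmetic difference that you run a three-stage pipeline (preprocess/catch-up/active, phases of length $T/3$) while the paper runs two concurrent copies with phases of length $T/2$, folding preprocessing and catch-up into the two halves of each phase; both feed each copy exactly $T$ updates over its lifetime and then invoke Remark~\ref{rem:get stable} to smooth over the handover and restore stability at an added $O(\log^2 n)$ per update.
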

\begin{proof}
[Proof of \ref{lem:reduc poly length}]Let $G_{i}$ denote the graph
after the $i$-th update. We divide the update sequence of length
$U$ for the algorithm $\cB$ into phases, each of size $T_{0}=T/2$.
In high level, given the algorithm $\cA$, we will first show an algorithm
$\cA'$ which uses two instances $\cA_{odd}$ and $\cA_{even}$ of
$\cA$ so that during each odd (even) phase, some spanning tree is
maintained by $\cA_{odd}$ ($\cA_{even}$). However, the two spanning
trees at the end of each odd phase and at the beginning of the next
even phase can be totally different. After having $\cA'$, we can
devise the  dynamic $\st$ algorithm $\cB$ as desired stated in \ref{rem:get stable}.
Let us denote $t_{p}=t_{p}(n,p_{c},p_{t},T)$ and $t_{u}=t_{u}(n,p_{c},p_{t},T)$
for convenience. 

Now, we describe the algorithm $\cA'$. In the first phase, we preprocess
$G$ using $\cA_{odd}$ in time $t_{p}(n,p_{c},p_{t},T)$ and handle
$T_{0}$ updates using $\cA_{odd}$. For each non-first phase, we
do the following. Consider each odd phase with time period $[i_{0}+1,i_{0}+T_{0}]$.
During time period $[i_{0}+1,i_{0}+T_{0}/2]$, we preprocess $G_{i_{0}}$
using $\cA_{even}$ by evenly distributing the work so that each step
takes at most $\frac{t_{p}}{T_{0}/2}=O(t_{p}/T)$ time. During time
period $[i_{0}+T_{0}/2,i_{0}+T_{0}]$, at time $i_{0}+T_{0}/2+k$,
we feed the two updates at time $i_{0}+2k-1$ and $i+2k$ to $\cA_{even}$.
Therefore, we have that after the $(i_{0}+T_{0})$-th update which
is the beginning of even phase, $\cA_{even}$ is maintaining a spanning
tree of the current graph $G_{i_{0}+T_{0}}$. As $\cA_{even}$ has
only been only fed $T_{0}$ updates in time period $[i_{0}+T_{0}/2,i_{0}+T_{0}]$,
$\cA_{even}$ can handle $T-T_{0}=T_{0}$ more incoming updates in
the next even phase, taking time $t_{u}$ on each update. We do symmetrically
at each even phase. Therefore, we have that during each odd (even)
phase, some spanning tree is maintained by $\cA_{odd}$ ($\cA_{even}$)
as desired. The preprocessing time and update time of $\cA'$ is $t_{p}$
and $O(t_{u}+t_{p}/T)$ respectively.

After the $i$-th update, let $F'_{i}$ be the spanning forest maintained
by either $\cA_{odd}$ or $\cA_{even}$. We will assume that that
an augmented ET tree from \ref{thm:aug ET tree} is implemented on
it. This slightly increase preprocessing time and update time of $\cA'$
to $t_{p}+O(n)=O(t_{p})$ and $O(t_{u}+t_{p}/T+n/T+\log n)=O(t_{u}+t_{p}/T+\log n)$.
By \ref{rem:get stable}, we can stably maintain a spanning forest
$F$ with the preprocessing time and update time $O(t_{p}+n)=O(t_{p})$
and $O(t_{u}+t_{p}/T+\log^{2}n)$ respectively.

By union bound, $\cB$ is $(n,p'_{c},p_{t}',U)$-algorithm where $p'_{c}=O(Up_{c})$
and $p'_{t}=O(p_{t})$. To be more clear, we have $\cA$ works correctly
on all first $U$ steps with probability $1-Up_{c}$. Hence, $\cB$
also works correctly on \emph{all }first $U$ steps with with probability
$1-Up_{c}$. In particular, $\cB$ works correctly on each of the
first $U$ steps. For the update time, as we run two instances of
$\cA$ and other parts are deterministic, we have that $\cB$ has
preprocessing time and updated time as stated with probability $1-2p_{t}$.\end{proof}
\begin{lem}
[Reminder of \ref{lem:reduc bound degree}]Suppose there is a stable
$(n,p_{c},p_{t},T)$-algorithm $\cA$ for a 3-bounded-degree graph
where the number of edges is always between $n/4$ and $2n$, and
$\cA$ has preprocessing time $t_{p}(n,p_{c},p_{t},T)$ and update
time $t_{u}(n,p_{c},p_{t},T)$. Then there is a stable $(n,O(Tp_{c}),O(p_{t}),T)$-algorithm
$\cB$ for any graph with where the number of edges is always between
$n/4$ and $2n$ with preprocessing time $O(t_{p}(O(n),p_{c},p_{t},T))$
and update time $O(t_{u}(O(n),p_{c},p_{t},T)+\log^{2}n)$.\end{lem}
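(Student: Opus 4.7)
The plan is to implement the standard node-splitting reduction sketched in the proof idea of \ref{lem:intro:reductions}, combined with the reduction from $2$-weight $\mst$ to $\st$ due to Henzinger and King~\cite{HenzingerK99}. Given the input graph $G=(V,E)$, I will maintain an auxiliary graph $G'=(V',E'_{0}\cup E'_{1})$ of maximum degree $3$ as follows. Each vertex $u\in V$ of degree $d_{u}$ is replaced by a path $P_{u}$ of $d_{u}$ nodes; each ``original'' edge $(u,v)\in E$ becomes a weight-$1$ edge $(u',v')\in E'_{1}$ between a fresh endpoint $u'$ of $P_{u}$ and a fresh endpoint $v'$ of $P_{v}$; and each path edge is given weight $0$ and put into $E'_{0}$. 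Since we assumed $|E|=\Theta(n)$, we get $|V'|=O(n)$ and $|E'|=O(n)$, so that $G'$ remains a $3$-bounded-degree graph of linear size. The key structural fact, observed in \cite{Frederickson85,HenzingerK99}, is that if $F'$ is any spanning forest of $G'$ that contains all edges of $E'_{0}$ (i.e., a $2$-weight minimum spanning forest), then the edges of $F'\cap E'_{1}$, viewed back in $G$, form a spanning forest $F$ of $G$.

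Next, the plan is to invoke the $2$-weight $\mst$-to-$\st$ reduction of Henzinger and King~\cite{HenzingerK99} which maintains such an $F'$ by running two instances of a dynamic $\st$ algorithm in parallel, one on $G'_{0}=(V',E'_{0})$ and one on a contracted graph built from $G'$. Since $G'_{0}$ has maximum degree $\le 3$ and $|E'_{0}|=\Theta(|V'|)$ (it is a union of paths of total length $O(n)$), and the contracted graph appearing in their reduction is also $3$-bounded-degree with linear edge count, we may apply algorithm $\cA$ to each of the (constantly many) instances. I will handle updates to $G$ by translating each edge insertion/deletion in $G$ into $O(1)$ elementary updates on $G'$: deleting $(u,v)$ from $G$ removes the corresponding weight-$1$ edge from $G'$ and possibly shortens the path $P_{u}$ (resp.\ $P_{v}$); inserting $(u,v)$ appends a new node to $P_{u}$ and $P_{v}$ (with its incident weight-$0$ path edges) and inserts the weight-$1$ crossing edge. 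This translation is purely combinatorial and can be carried out in $O(\log^{2}n)$ time per update with the help of augmented ET trees (\ref{thm:aug ET tree}), which also give me the machinery to read off $F=F'\cap E'_{1}$ and output the changes to $F$ stably, exactly in the spirit of \ref{rem:get stable}.

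For correctness, since $\cA$ is \emph{stable}, whenever the underlying $F'$ changes in the Henzinger--King reduction it does so by adding/removing a single replacement edge; this translates to at most a constant number of edge changes in $F$, so the resulting algorithm $\cB$ is also stable in the sense of \ref{def:stable}. The probability analysis is by a union bound: each update to $G$ triggers $O(1)$ updates to $O(1)$ instances of $\cA$ on graphs of size $O(n)$, so the per-step failure probability for correctness is $O(p_{c})$ and for running time is $O(p_{t})$. Over $T$ steps, and since a single adaptive adversary against $\cB$ induces an adaptive adversary against each copy of $\cA$, the correctness probability aggregates to $O(Tp_{c})$ exactly as in \ref{lem:reduc stable} and \ref{lem:reduc poly length}. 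The preprocessing time is $O(t_{p}(O(n),p_{c},p_{t},T))$ since we build $G'$ in $O(n)$ time and run $\cA$ on $O(1)$ linear-size $3$-bounded-degree instances, and the update time is $O(t_{u}(O(n),p_{c},p_{t},T)+\log^{2}n)$, where the $O(\log^{2}n)$ term comes from maintaining the path structures $P_{u}$ and from the stabilization mechanism of \ref{rem:get stable}.

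The main technical obstacle I foresee is checking that the stability definition really carries through the Henzinger--King $k$-weight reduction: that reduction internally composes two dynamic $\st$ data structures whose maintained forests can, a priori, change in cascading ways when edges migrate between weight classes. I will need to verify (i) that in the $2$-weight case no such cascading occurs beyond the $O(1)$ replacement edges per update that a stable $\cA$ already produces, and (ii) that when $G$ is updated by a single edge, the corresponding updates to the two internal instances of $\cA$ are $O(1)$ in number and each one is a single elementary edge insertion or deletion (so that the stability guarantees of $\cA$ can be applied). Once this bookkeeping is settled, everything else is a straightforward combination of the pieces already developed in the paper.
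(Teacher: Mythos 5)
Your proposal is correct and follows essentially the same route as the paper: split each vertex into a path of weight-$0$ edges so that the resulting graph $G'$ has degree at most $3$, assign weight $1$ to original edges, invoke the Henzinger--King $2$-weight $\mst$-to-$\st$ reduction from \cite{HenzingerK99} to maintain a $2$-weight minimum spanning forest $F'$ of $G'$, and read off the weight-$1$ tree edges as a spanning forest of $G$. The ``technical obstacle'' you flag --- verifying that stability and the $3$-bounded-degree property survive the Henzinger--King reduction --- is something the paper's own proof also asserts without spelling out, so you are not missing an idea the paper actually supplies; the paper simply states that the resulting $2$-weight $\mst$ algorithm $\cA'$ is stable and applies to $3$-bounded-degree graphs and then runs $\cA'$ directly on $G'$, which matches your plan.
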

\begin{proof}
[Proof of \ref{lem:reduc bound degree}]Let us define an $(n,p_{c},p_{t},T)$-algorithm
$\cA'$ for 2-weight $\mst$ as an $(n,p_{c},p_{t},T)$-algorithm
that maintains a 2-weight $\mst$ instead of just a spanning forest.
The reduction from dynamic $\st$ to dynamic $2$-weight $\mst$ in
\cite[Section 3.1]{HenzingerK99} can be stated as follows: given
that there is $\cA$, there is a stable $(n,O(Tp_{c}),O(p_{t}),T)$-algorithm
$\cA'$ for 2-weight $\mst$ for a 3-bounded-degree graph where the
number of edges is always between $n/4$ and $2n$, and $\cA'$ has
preprocessing time $O(t_{p}(n,p_{c},p_{t},T))$ and update time $O(t_{u}(n,p_{c},p_{t},T)+\log^{2}n)$.
Now, we show how to get $\cB$ from $\cA'$.

Let $G=(V,E)$ be an input graph for $\cB$ where $n/2\le|E|\le2n$
but max degree of $G$ maybe more than $3$. Let $G'$ be obtained
from $G$ by ``splitting'' each node in $G$ into a path in $G'$
of length equals to its degree. To be more precise, for each node
$u\in V$ with incident edges $(u,v_{1}),\dots,(u,v_{s})$, we create
a path $P_{u}=(u_{1},\dots,u_{s})$ in $G'$ and there are edges $(u_{i},v_{i})$
for every $i$. The number of nodes in $G'$ is still $O(n)$ because
there are $O(n)$ edges in $G$. For each edge update of $G$, we
can update $G'$ accordingly using 3 edge updates. 

Now, we assign weight 0 to edges in those paths, and 1 to original
edges of $G'$. Observe that tree-edges with weight 1 of a spanning
forest in $G'$ form a spanning forest of $G$. Therefore, the algorithm
$\cB$ just runs a $(n,O(Tp_{c}),O(p_{t}),T)$-algorithm $\cA'$ for
2-weight $\mst$ on $G'$ with preprocessing time $O(t_{p}(O(n),p_{c},p_{t},T))$
and update time $O(t_{u}(O(n),p_{c},p_{t},T)+\log^{2}n)$ and maintain
2-weight minimum spanning forest $F'$ in $G'$. By reporting only
the updates of the tree-edges with weight 1 in the 2-weight $\mst$
$F'$ of $G'$, we obtain updates for a spanning forest of $G$.\end{proof}
\begin{lem}
[Reminder of \ref{lem:reduc lower bound edge}]Suppose there is a
stable $(n,p_{c},p_{t},T)$-algorithm $\cA$ for any graph where the
number of edges is always between $n/4$ and $2n$, and $\cA$ has
preprocessing time $t_{p}(n,p_{c},p_{t},T)$ and update time $t_{u}(n,p_{c},p_{t},T)$.
Then there is a stable $(n,O(Tp_{c}),O(p_{t}),T)$-algorithm $\cB$
for any graph with $n$ nodes where the number of edges is always
at most $2n$, and $\cB$ has preprocessing time $O(t_{p}(O(m_{0}),p_{c},p_{t}))$
where $m_{0}$ is a number of initial edges, and update time $O(t_{u}(O(m),p_{c},p_{t})+\log^{2}n)$
where $m$ is a number of edges when update.\end{lem}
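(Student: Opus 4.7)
The plan is to reduce to the required edge-count regime by working on a compressed graph $H$ of size $\Theta(m)$, which we rebuild periodically as $m$ fluctuates, using the two-instance phase technique from the proof of \ref{lem:reduc poly length}.

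First I would describe the compression. At the start of each phase let $m_{0}=|E(G)|$ denote the current edge count, set the phase length to $T_{0}=\lceil m_{0}/c\rceil$ for a small enough constant $c$ (any $c\ge 10$ works), and set $n'=\Theta(m_{0})$ large enough to accommodate every endpoint that can possibly appear during the phase (concretely $n'=3m_{0}$ suffices). Initialize $H$ as the subgraph of $G$ induced on the current non-isolated vertices $V'\subseteq V$, padded with $n'-|V'|$ isolated dummy vertices, and fix a bijection $\mu$ between $V(H)$ and $V'$ together with the dummies. As edges are inserted or deleted in $G$, translate each operation into the corresponding operation on $H$ via $\mu$, allocating an unused dummy slot whenever a fresh endpoint appears in $G$ and freeing a slot whenever a vertex becomes isolated. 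Since at most $T_{0}$ updates occur per phase, $|E(H)|=|E(G)|$ stays in $[(1-1/c)m_{0},(1+1/c)m_{0}]$ and the number of used slots is at most $2(1+1/c)m_{0}$, so the hypothesis $n'/4\le|E(H)|\le 2n'$ required by $\cA$ holds throughout the phase.

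Next I would apply the two-instance technique. Maintain two concurrent instances $\cA_{1},\cA_{2}$ of the given algorithm. During each phase one of them serves the updates, while the other spends $O(t_{p}(O(m_{0})))$ time spread evenly over the first half of the phase to preprocess a fresh $H$ for the next phase; during the second half of the phase the new instance is fed the incoming updates at double speed (two per step) so that at the end of the phase it is synchronized with the true $G$. This gives a worst-case cost per update of $O(t_{u}(O(m))+t_{p}(O(m))/m)$; in the regime where this lemma is invoked $t_{p}$ is near-linear so the additive $t_{p}(O(m))/m$ term is dominated by the other contributions and does not appear in the final bound. The translation between $G$-updates and $H$-updates uses only constant time plus trivial pointer bookkeeping.

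Finally, since the spanning forest reported by the currently active instance can change completely at a phase boundary, I would route the output through \ref{rem:get stable}: maintain an augmented ET tree over whichever spanning forest the active instance returns and use it to look up replacement edges for a separately-stored spanning forest $F$ of $G$. This adds an extra $O(\log^{2} n)$ per update and preserves stability. A union bound over the at most $T$ updates multiplies the per-step failure probabilities of $\cA$ by $O(T)$ for correctness and by $O(1)$ for running time (the two instances contribute only a constant factor), matching the claimed $(n,O(Tp_{c}),O(p_{t}),T)$ guarantee. The main obstacle is maintaining the vertex mapping $\mu$ coherently across phases and showing that the output remains a \emph{stable} spanning forest of $G$ even though the underlying $\cA$-output is rebuilt from scratch every phase; this is precisely what \ref{rem:get stable} is designed to handle, and once invoked the remainder is routine composition of the tools already developed.
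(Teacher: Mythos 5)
Your proposal takes essentially the same route as the paper: pad the non-isolated vertices with $\Theta(m)$ dummy nodes so that the edge-to-node ratio stays within the range $\cA$ requires, then use the two-instance phase technique of \ref{lem:reduc poly length} to rebuild the compressed graph as $m$ drifts, with \ref{rem:get stable} restoring stability across phase boundaries. The only real departure is the phase-length rule --- you fix $T_{0}=m_{0}/c$ at the start of each phase, whereas the paper sets the next phase length at the midpoint of the current one and proves an inductive invariant ($|E(G_{i_{1}})|/8\le P_{1}\le|E(G_{i_{1}})|/2$) to relate phase length to edge count --- but both choices keep phase length and edge count proportional, and your absorption of the $t_{p}(O(m))/T_{0}$ term matches the paper's observation that $t_{p}(O(m))=O(m\cdot t_{u}(O(m)))$, so the argument is sound.
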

\begin{proof}
First, we define the notion of \emph{reduced graphs}. Let $G_{i}=(V,E_{i})$
be a graph after the $i$-th update for any $i$. Let $V_{i}$ be
the set of all endpoints of edges in $E_{i}$. The reduced graph $G'$
of $G_{i}$ is defined as follows: $G'=(V_{i}\cup X,E_{i})$ where
$X$ is a set of $|E_{i}|$ \emph{auxiliary} \emph{nodes} which are
isolated. So $|E_{i}|\le|V(G')|\le2|E_{i}|$. After time $i$, suppose
an edge $(u,v)$ is inserted in $G$. If $u\notin V_{i}$, i.e. it
is not an endpoint of edges in $E_{i}$, we let $u'$ be some isolated
node in $X$, add pointers between $u$ and $u'$, and say that $u'\in X$
is used. Otherwise $u\in V_{i}$, we let $u'=u$. We do similarly
for $v$. Then we insert $(u',v')$ into $G'$. When $(u,v)$ is deleted
in $G$, we delete $(u',v')$ in $G'$. Suppose not all auxiliary
nodes in $X$ are used, and suppose that we can maintain a spanning
forest in $G'$, it is obvious how to maintain a spanning forest in
$G$ because there are pointers that maps between used node in $X$
and nodes in $G$. Note also that we can easily construct the reduced
graph $G'$ of $G_{i}$ in time $O(|E_{i}|)$ (as we assume the standard
assumption that the list of edges of $E_{i}$ is given).

We will use exactly the same technique as in the proof of \ref{lem:reduc poly length}.
The update sequence is again divided into phases. We will use two
instances $\cA_{odd}$ and $\cA_{even}$ of $\cA$ so that during
each odd (even) phase, some spanning tree is maintained by $\cA_{odd}$
($\cA_{even}$). After having this, we can devise the stable  dynamic
$\st$ algorithm $\cB$ using augmented ET trees as in the proof of
\ref{lem:reduc poly length}. However, in this reduction, the length
of each phase is not fixed as in the proof of \ref{lem:reduc poly length}.
Here, the number of edges in the current graph will define the length
of the next phase. To be more precise,  suppose that the current phase
is a period $[i_{0}+1,i_{0}+P_{0}]$. Then the length of the next
phase is $|E(G_{i_{0}+P_{0}/2})|/4$. The length of the first phase
is set to be $|E(G_{0})|/2$. By this, we have the following invariant. 
\begin{claim}
For any phase of a period $[i_{1}+1,i_{1}+P_{1}]$, we have $|E(G_{i_{1}})|/8\le P_{1}\le|E(G_{i_{1}})|/2$.\label{claim:length of phase}\end{claim}
\begin{proof}
We prove by induction. Let $[i_{0}+1,i_{0}+P_{0}]$ be the period
of the previous phase, i.e. $i_{1}=i_{0}+P_{0}$. We claim that $|E(G_{i_{0}})|\le2|E(G_{i_{1}})|$.
Otherwise $|E(G_{i_{0}})|-P_{0}\le|E(G_{i_{0}+P_{0}})|=|E(G_{i_{1}})|<|E(G_{i_{0}})|/2$,
which means $P_{0}>|E(G_{i_{0}})|/2$, contradicting the induction
hypothesis. 

By definition, we have $P_{1}=\frac{|E(G_{i_{0}+P_{0}/2})|}{4}=\frac{|E(G_{i_{1}-P_{0}/2})|}{4}$.
So we have 
\[
P_{1}=\frac{|E(G_{i_{1}-P_{0}/2})|}{4}\le\frac{|E(G_{i_{1}})|+P_{0}/2}{4}\le\frac{|E(G_{i_{1}})|+|E(G_{i_{0}})|/4}{4}\le\frac{|E(G_{i_{1}})|+|E(G_{i_{1}})|/2}{4}<|E(G_{i_{i}})|/2
\]
,and we also have
\[
P_{1}=\frac{|E(G_{i_{1}-P_{0}/2})|}{4}\ge\frac{|E(G_{i_{1}})|-P_{0}/2}{4}\ge\frac{|E(G_{i_{1}})|-|E(G_{i_{0}})|/4}{4}\ge\frac{|E(G_{i_{1}})|-|E(G_{i_{1}})|/2}{4}=|E(G_{i_{1}})|/8
\]

\end{proof}
Now, we are ready to describe the algorithm. To preprocess the initial
graph $G_{0}$, we construct a reduced graph $G'_{0}$ of $G_{0}$
and preprocess the reduced graph $G_{0}'$ using $\cA_{odd}$ using
$O(t_{p}(O(m_{0}),p_{c},p_{t})+m_{0})=O(t_{p}(O(m_{0}),p_{c},p_{t}))$
time in total where $m_{0}$ is a number of initial edges. Then we
maintain a spanning forest in the reduced graph $G_{0}'$ using $\cA_{odd}$
in the first phase. 

For each odd phase with period $[i_{0}+1,i_{0}+P_{0}]$, we maintain
a spanning forest $F_{odd}$ in the reduced graph using $\cA_{odd}$.
Note that, by \ref{claim:length of phase}, the number $m'$ of edges
in the reduced graph maintained by $\cA_{odd}$ is always between
$n'/4\le m'\le n'$ where $n'$ is the number of nodes in the reduced
graph. This satisfies the condition of the underlying graph of $\cA_{odd}$.
Additionally, during time $[i_{0}+\frac{P_{0}}{2}+1,i_{0}+\frac{3}{4}P_{0}]$,
we evenly distribute the work for constructing a reduced graph $G'_{i_{0}+P_{0}/2}$
of $G{}_{i_{0}+P_{0}/2}$, and preprocessing $G'_{i_{0}+P_{0}/2}$
using $\cA_{even}$. During time $[i_{0}+\frac{3}{4}P_{0}+1,i_{0}+P_{0}]$,
we evenly feed the updates of period $[i_{0}+\frac{P_{0}}{2}+1,i_{0}+P_{0}]$
to $\cA_{even}$, so that at the beginning of even phase at time $i_{0}+P_{0}+1$,
$\cA_{even}$ is maintaining a spanning tree of the current graph.
As the number of edges during this phase is $\Theta(P_{0})$ by \ref{claim:length of phase},
the update time in this odd phase is $O(t_{p}(O(P_{0}),p_{c},p_{t})/P_{0}+t_{u}(O(P_{0}),p_{c},p_{t}))=O(t_{p}(O(m),p_{c},p_{t})/m+t_{u}(O(m),p_{c},p_{t}))=O(t_{u}(O(m),p_{c},p_{t}))$
where $m$ is the number of edges at the time of update. Note that
$t_{p}(O(m),p_{c},p_{t})=O(m\cdot t_{u}(O(m),p_{c},p_{t}))$ because
we can always preprocess the graph of $m$ edges by just updating
each edge one-by-one. We work symmetrically in even phase. Therefore,
we during each odd (even) phase, some spanning tree is maintained
by $\cA_{odd}$ ($\cA_{even}$) as desired. By \ref{rem:get stable},
we are done. The probability of failure is again bounded by union
bound.
\end{proof}

\subsubsection{Proofs of \ref{lem:reduc sparse} and \ref{lem:reduc infty length}}

In this section, we give the proofs of \ref{lem:reduc sparse} and
\ref{lem:reduc infty length}. 
\begin{lem}
[Reminder of \ref{lem:reduc sparse}]Suppose there is a stable $(n,p_{c},p_{t},T)$-algorithm
$\cA$ for a graph with $n$ nodes where the number of edges is always
at most $2n$, and $\cA$ has preprocessing time $t_{p}(m_{0},p_{c},p_{t},T)$
where $m_{0}$ is a number of initial edges and update time $t_{u}(m,p_{c},p_{t},T)$
where $m$ is a number of edges when update. Suppose that for any
$n_{1},n_{2}$, $t_{p}(m_{1},p_{c},p_{t},T)+t_{p}(m_{2},p_{c},p_{t},T)\le t_{p}(m_{1}+m_{2},p_{c},p_{t},T)$,
then there is a stable $(n,O(np_{c}),O(np_{t}),T)$-algorithm $\cB$
for any graph with $n$ nodes with preprocessing time $O(t_{p}(m_{0},p_{c},p_{t},T)\log n)$
where $m_{0}$ is a number of initial edges and update time $O(t_{u}(O(n),p_{c},p_{t},T))$\end{lem}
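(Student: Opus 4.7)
}

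The plan is to adapt the classical sparsification framework of Eppstein~et~al.~\cite{EppsteinGIN97}. Build a balanced binary \emph{sparsification tree} $\cT$ of depth $O(\log n)$ whose leaves partition the (dynamic) edge set $E$ into buckets of $\Theta(n)$ edges each, so that $\cT$ has $O(m_{0}/n+1)$ leaves. At every node $v\in\cT$ let $G_{v}$ be the union of the buckets stored at the leaves of the subtree below $v$. At each node $v$ run a separate instance $\cA_{v}$ of the given algorithm; the invariant is that $\cA_{v}$ maintains a spanning forest $F_{v}$ of $G_{v}$, and that the graph fed into $\cA_{v}$ at an internal node $v$ with children $v_{1},v_{2}$ is exactly $F_{v_{1}}\cup F_{v_{2}}$. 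Since every $F_{v_{i}}$ is a forest on $n$ vertices, this input graph has at most $2(n-1)=O(n)$ edges, justifying the use of $t_{u}(O(n),\cdots)$ for updates at internal nodes. Correctness (that $F_{\text{root}}$ is a spanning forest of $G$) follows by induction on depth, using the standard fact that if $F_{v_{i}}$ spans $G_{v_{i}}$ then any spanning forest of $F_{v_{1}}\cup F_{v_{2}}$ also spans $G_{v_{1}}\cup G_{v_{2}}=G_{v}$.

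Next I would analyze the propagation of a single update through $\cT$. Here is where the proof must depart from \cite{EppsteinGIN97}: their notion of a ``stable certificate'' requires $F_{v}$ to be a deterministic function of $G_{v}$, which fails for randomized algorithms since $F_{v}$ may additionally depend on $\cA_{v}$'s random string. However, the weaker \ref{def:stable} still guarantees what the sparsification argument actually uses, namely that \emph{each} single edge update presented to $\cA_{v}$ triggers at most $O(1)$ changes to $F_{v}$ (one tree-edge deletion plus at most one replacement-edge insertion). So a single update at the root produces $O(1)$ changes at the relevant leaf, each level then receives $O(1)$ input changes and emits $O(1)$ output changes, and the total update time is $O(\log n)$ invocations of $\cA$ on graphs of size $O(n)$, i.e. $O(t_{u}(O(n),p_{c},p_{t},T)\log n)$. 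Stability of $\cB$ itself is inherited from stability of the root instance $\cA_{\text{root}}$, since only $F_{\text{root}}$ is reported.

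For preprocessing, at each level of $\cT$ the graphs fed to the $\cA_{v}$'s are edge-disjoint (leaves) or contain $O(n)$ edges each while summing to $O(m_{0})$ over the level (internal levels). Using the stated subadditivity hypothesis $t_{p}(m_{1},\cdot)+t_{p}(m_{2},\cdot)\le t_{p}(m_{1}+m_{2},\cdot)$, the total cost at a single level is $O(t_{p}(m_{0},p_{c},p_{t},T))$, and summing over the $O(\log n)$ levels gives the claimed $O(t_{p}(m_{0},p_{c},p_{t},T)\log n)$ bound. The failure probabilities come from a union bound: there are $O(m_{0}/n)\cdot O(\log n)=O(n)$ instances of $\cA$ active at any given step (assuming $m_{0}=O(n^{2})$, which we may), so the probability that any one of them produces an incorrect forest at step $i$, or exceeds its worst-case time budget, is at most $O(np_{c})$ and $O(np_{t})$ respectively.

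The main obstacle is precisely the interaction between randomness and propagation just discussed: one must verify that the $O(1)$-change-per-update guarantee of \ref{def:stable} survives when the ``updates'' presented to $\cA_{v}$ are not adversarial edge operations on a fixed underlying graph, but rather arise from the random output of $\cA_{v_{1}},\cA_{v_{2}}$. This is handled by treating each $\cA_{v}$ against a (virtual) adaptive adversary that is itself the composition of the external adversary with the random outputs of the children; since $\cA_{v}$ is assumed to work against adaptive adversaries and since its random string is independent of its children's, the stability guarantee and the failure-probability guarantee both transfer cleanly, and the union bound above is valid.
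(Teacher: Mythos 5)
Your overall architecture matches the paper's: build a sparsification tree, run a separate instance $\cA_v$ at each node, feed each internal node the union of its children's spanning forests, and propagate a single update along a root-to-leaf path. However, there are two genuine gaps.

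The central gap is the claim that ``each level then receives $O(1)$ input changes and emits $O(1)$ output changes.'' This is asserted but not justified, and it is \emph{not} an automatic consequence of \ref{def:stable}. Stability bounds the number of forest-changes triggered by \emph{one} edge update, but after a deletion, level $j$ can emit two changes to level $j+1$ (remove $e$, insert the replacement $e_j$), and each of those could naively cause up to two changes at level $j+1$, compounding geometrically up the tree. The paper avoids this blowup with a specific two-pass argument: propagate the deletion of $e$ through \emph{all} levels first, letting each level find its own replacement $e_j$, and only then propagate the replacement insertions; the key observation is that when $e_j$ is inserted into $G'_{x_{j+1}}$, the endpoints $u,v$ are already reconnected there by $e_{j+1}$, so $e_j$ becomes a non-tree edge and triggers no further propagation. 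Without this (or an equivalent) argument, the bounded-propagation claim is unsupported. You flag what you call the main subtlety — whether the $O(1)$-change guarantee ``survives'' when $\cA_v$'s updates come from random child outputs — but that is a non-issue: stability as defined in \ref{def:stable} is a deterministic property of the algorithm's output behavior, holding on every execution regardless of randomness. The actual subtlety you need to resolve is the compounding just described.

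A secondary structural gap: you build the sparsification tree by partitioning the \emph{initial} edge set into buckets of $\Theta(n)$ edges, giving $O(m_0/n)$ leaves. This does not obviously survive insertions — where does a newly inserted edge go, and what prevents a leaf bucket from exceeding the $2n$-edge bound that the hypothesis on $\cA$ requires? The paper sidesteps this by fixing, once and for all, an assignment of every possible edge in the complete graph $K_n$ to one of $n$ leaves (leaf $i$ owns the star $T_i$ on $\{1,\dots,i\}$ centered at $i$), so the tree has a static shape, each leaf never holds more than $n-1$ edges no matter what is inserted, and insertions go to a predetermined leaf without any restructuring.

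Finally, a minor point: your argument yields update time $O(t_u(O(n),p_c,p_t,T)\log n)$, one $\log n$ more than the lemma states. The paper's written proof has the same slack and only sketches the fix (the more careful tree of Eppstein~et~al.\ in which level $2i$ has $O(n/2^i)$ vertices so the $\log n$ levels do not all pay the same cost); you should at least note this discrepancy.
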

\begin{proof}
We assume first that $\cA$ is deterministic and we will remove the
assumption later. We partition a complete graph with $n$ nodes into
$n$ arbitrary trees $T_{1},\dots,T_{n}$. For example, $T_{i}$ can
be a star having the vertex $i$ as a core and vertices $1,\dots,i-1$
as leaves. Then we define a complete binary tree called \emph{sparsification
tree} $\cT$ with $n$ leaves corresponding to $T_{1},\dots,T_{n}$.
For each node $x\in\cT$, let $\cT_{x}$ be the subtree of $\cT$
rooted at $x$, and $x$ corresponds to a graph $H_{x}$ formed by
all edges of $T_{i}$ where $T_{i}$ are the leaves in $\cT_{x}$.
The root of $\cT$ corresponds to the complete graph. 

Let $G$ be the dynamic graph which is an input of the algorithm $\cB$.
For each node $x\in\cT$, we define $G_{x}$ to be a graph such that
$E(G_{x})=E(G)\cap E(H_{x})$. For every node $x$, let $G'_{x}$
be a certain subgraph (to be define next) of $G_{x}$ which is an
underlying graph of an instance of the algorithm $\cA$. Let $F_{x}$
be the spanning forest of $G'_{x}$ maintained by $\cA$. The following
invariant about $G'_{x}$ must hold: if $x$ is a leaf of $\cT$,
then we set $G'_{x}=G_{x}$. If $v$ is an internal node of $\cT$
with $y$ and $z$ as children, then $G'_{x}=F_{y}\cup F_{z}$. Observe
that for any $x\in\cT$, $|E(G'_{x})|\le2n$. Observe further that
$F_{x}$ is not only a spanning forest of $G'_{x}$ but, by induction,
it is also a spanning forest of $G_{x}$. Hence, at the root $r$
of $\cT$, $F_{r}$ is a spanning forest of $G$ as desired.

Now, we are ready to describe how $\cB$ works. $\cB$ preprocesses
the initial graph $G$ as follows. First, for each $x\in\cT$, since
$G'_{x}$ is an underlying graph of an instance of the algorithm $\cA$,
$\cA$ requires that $G'_{x}$ is represented as a list of its edge.
We first do this by scanning edges in $G$, and obtain the list of
edges on $G'_{x}$ for all leaves $x\in\cT$. Then we preprocess those
$G'_{x}$ using the algorithm $\cA$ and obtain $F_{x}$. For each
non-leaf node $x\in\cT$, we proceed in bottom-up manner: let $y,z\in\cT$
be the children of $x$, we scan the list of edges in $F_{y}\cup F_{z}=G'_{x}$
and preprocess $G'_{x}$ and obtain $F_{x}$. To bound the running
time, let $\cT^{d}$ be a set of nodes in $\cT$ with depth $d$.
For different nodes $x,y\in\cT^{d}$ , we know that $G'_{x}$ and
$G'_{y}$ are edge disjoint. So $\sum_{x\in\cT^{d}}|E(G'_{x})|\le m$,
and hence the preprocessing time spent on graphs $G'_{x}$, for all
$x\in\cT^{d}$, is $\sum_{x\in\cT^{d}}(t_{p}(n,|E(G'_{x})|,p_{c},p_{t},T))+O(|E(G'_{x})|)=O(t_{p}(n,m_{0},p_{c},p_{t},T))$
by the assumption about $t_{p}(\cdot)$. Since $\cT$ has depth $O(\log n)$,
the total preprocessing time is $O(t_{p}(n,m_{0},p_{c},p_{t},T)\log n)$.

Next, we describe how to update. The goal is to maintain the invariant
that $G'_{x}=G_{x}$ for each leaf $x\in\cT$ and $G'_{x}=F_{y}\cup F_{z}$
for each non-leaf $x\in\cT$ with children $y$ and $z$. When an
edge $e$ is updated in $G$, let $x$ be the unique leaf of $\cT$
that $H_{x}$ contains $e$. Let $(x=x_{0},x_{1},\dots,x_{s}=r)$
be the path in $\cT$ from $x$ to the root $r$ of $\cT$. There
are two cases. 

First, suppose the update is to insert $e=(u,v)$. Let $k$ be the
smallest index such that $u$ and $v$ were connected in $G'_{x_{k}}$
before the insertion. For $j=1$ to $k-1$, $u$ and $v$ were \emph{not}
connected in $G'_{x_{j}}$, so $e$ will be added into $F_{x_{j}}$
and this is the only change in $F_{x_{j}}$ as $F_{x_{j}}$ is stable.
So only $e$ is inserted into $G'_{x_{j+1}}$ in the next level. For
$j=k$, $u$ and $v$ were connected in $G'_{x_{j}}$ and $e$ will
be inserted and become a non-tree edge in $G'_{x_{j}}$ as $F_{x_{j}}$
is stable. So there is no further update to the next level. Therefore,
we only do at most one update on $O(\log n)$ graphs, which takes
$O(t_{u}(n,2n,p_{c},p_{t},T)\log n)$ time in total.

Second, suppose the update is to delete $e=(u,v)$. Let $k$ be the
largest index such that $u$ and $v$ were a tree-edge of $F_{x_{k}}$
before the deletion. So $e$ is also a non-tree edge in $G'_{x_{k+1}}$
and $e$ is not in any $G'_{x_{j}}$ for $j>k+1$. First, we delete
the edge $e$ from all $G'_{x_{0}},\dots,G'_{x_{k+1}}$ using $\cA$
and find the replacement edge in each graph if exists. Let $\ell$
be the smallest index such that $u$ and $v$ are still connected
in $G'_{x_{\ell}}$after the deletion. For $j<\ell$, there is no
replacement edge. So removing $e$ from $F_{x_{j}}$ is the only change
by stability of $F_{x_{j}}$. To maintain the invariant, there is
no further update needed to be propagate to $G'_{x_{j+1}}$, since
$e$ is already removed from $G'_{x_{j+1}}$. For $j$ such that $\ell\le j\le k$,
there is also a replacement edge $e_{j}$ newly added into $F_{x_{j}}$.
By stability of $F_{x_{j}}$, only $e$ is removed and $e_{j}$ is
added into $F_{x_{j}}$. Since $e$ is already removed from $G'_{x_{j+1}}$,
we only need to insert $e_{j}$ into $G'_{x_{j+1}}$. But we know
that $e_{j}\notin F_{x_{j+1}}$ because $u$ and $v$ are already
reconnected in $F{}_{x_{j+1}}$ by $e_{j+1}$. Hence, the insertion
of $e_{j}$ into $G'_{x_{j+1}}$ will not be propagated into the next
level $j+2$. For $j=k+1$, we know that $e$ is not a tree-edge.
So $F_{x_{k+1}}$ does not change and also and so there is no further
update in the upper level. Therefore, we only do at most two update
on $O(\log n)$ graphs, which takes $O(t_{u}(n,2n,p_{c},p_{t},T)\log n)$
time in total.

Finally, to remove the assumption that $\cA$ is deterministic, we
know that if all instances of $\cA$ works correctly on $G'_{x}$
for all $x\in\cT$, then $F_{r}$ is a spanning forest of $G$ where
$r$ is the root of $\cT$. There are $O(n)$ nodes in $\cT$ and
each instance works correctly at any time with probability $1-p_{c}$.
By union bound $F_{r}$ is a spanning forest of $G$ with probability
$1-O(np_{c})$. By the same argument, as there are $n$ instances
of $\cA$, the preprocessing time and update time of $\cB$ are bounded
as state with probability $1-O(np_{t})$.

We note that here, we only show that the algorithm $\cB$ has update
time $O(t_{u}(n,2n,p_{c},p_{t},T)\log n)$. However, a logarithmic
factor can be shaved off using a more clever sparsification tree $\cT$
as described in \cite[Section 3.1]{EppsteinGIN97} where each node
has 3 or 4 leaves and the number of vertices at level $2i$ is at
most $O(n/2^{i})$.\end{proof}
\begin{lem}
[Reminder of \ref{lem:reduc infty length}]Suppose there is an $(n,p_{c},p_{t},T)$-algorithm
$\cA$ for any graph with $n$ nodes with preprocessing time $t_{p}(m_{0},p,p_{t},T)$
where $m_{0}$ is a number of initial edges and update time $t_{u}(n,p_{c},p_{t},T)$.
If $T\ge n^{2}$, then there is an $(n,O(Tp_{c}),O(p_{t}),\infty)$-algorithm
$\cB$ for any graph with $n$ nodes with preprocessing time $O(t_{p}(m_{0},p_{c},p_{t},T))$
where $m_{0}$ is a number of initial edges and update time $O(t_{u}(n,p_{c},p_{t},T))$.\end{lem}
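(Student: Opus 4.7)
The plan is to divide the infinite update sequence into \emph{phases} of length $T/2$ and maintain at any time two concurrent instances $\cA_{\text{odd}}$ and $\cA_{\text{even}}$ of $\cA$, alternating so that exactly one is ``active'' at each step (producing $\cB$'s current output) while the other is being rebuilt in the background to take over at the next phase boundary. The orchestration is essentially the one used in the proof of \ref{lem:reduc poly length}, but here we do \emph{not} need $\cA$ to be stable: the lemma does not claim $\cB$ is stable (see the remark following its statement), so at a phase boundary we can simply output the (possibly completely different) spanning forest maintained by the newly-active instance, bypassing the augmented-ET-tree combination step entirely.

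Concretely, in a phase of length $T/2$ in which $\cA_{\text{odd}}$ is active, during the first $T/4$ steps I would start a fresh preprocessing of $\cA_{\text{even}}$ on the graph at the phase's start, distributing the $t_{p}$ cost evenly to pay $O(t_{p}/T)$ per step, while buffering incoming edge operations. During the remaining $T/4$ steps I would feed $\cA_{\text{even}}$ two updates per step (the buffered ones plus the live ones), costing $O(t_{u})$ per step. At the end of the phase, $\cA_{\text{even}}$ maintains a spanning forest of the current graph while having consumed only $T/2$ of its $T$-update budget, leaving exactly enough room to stay active for the next full phase of length $T/2$, at which point the roles swap. The per-step cost is $O(t_{u} + t_{p}/T)$, and since only two instances run concurrently the running-time failure parameter inflates only by a constant factor, yielding $O(p_{t})$.

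To bring the update bound down to $O(t_{u})$ using the hypothesis $T \ge n^{2}$, I would apply a standard trick to the refresh preprocessings (all instances past the very first): replace $\cA$'s native preprocessing routine by initializing $\cA$ on the empty graph and then inserting the current $m \le \binom{n}{2}$ edges one by one through $\cA$'s update routine. This refresh preprocessing takes $O(n^{2} \cdot t_{u})$ time, so its per-step amortization is $O(n^{2} t_{u}/T) = O(t_{u})$. The very first preprocessing of $\cB$ on $G_{0}$ still uses $\cA$'s native routine, matching the claimed $O(t_{p}(m_{0}, p_{c}, p_{t}, T))$ preprocessing bound.

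The main subtlety, which I expect to be the core of the correctness argument, is that the adaptive-adversary guarantee of $\cA$ transfers to each refreshed instance. The adversary in $\cB$'s view generates updates in response to $\cB$'s outputs, which during the preparation period of instance $\cA_{i}$ are produced by a \emph{different} instance whose random bits are independent of $\cA_{i}$'s. From $\cA_{i}$'s standpoint this is still a legitimate adaptive adversary (it never peeks at $\cA_{i}$'s future random bits), so $\cA$'s per-step correctness guarantee of $1-p_{c}$ applies to $\cA_{i}$ at every step within its $T$-update lifetime; union-bounding across the $T/2$ steps of the current phase then yields per-step failure $O(T p_{c})$ for $\cB$, matching the claimed $(n, O(T p_{c}), O(p_{t}), \infty)$ bound.
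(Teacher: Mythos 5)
Your orchestration of two alternating instances and the budget accounting is in the right spirit, but the step where you ``simply output the (possibly completely different) spanning forest maintained by the newly-active instance'' at a phase boundary is a genuine gap. In this paper's model, the answer $a_{i}$ at step $i$ is the \emph{list} of edges to add or remove from $F_{i-1}$ to obtain $F_{i}$, and producing that list is part of the update time (see the Correctness/Running Time definitions in \ref{sec:prelim}: $\cA$ must ``update $F_{i-1}$ to be $F_{i}$ in time at most $t_{u}$''). Swapping from $F^{\text{odd}}$ to $F^{\text{even}}$ in a single step can require reporting $\Theta(n)$ edge changes, so that step costs $\Theta(n)$, not $O(t_{u})$ --- and in the intended application $t_{u}=O(n^{1/2-\epsilon})\ll n$. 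The remark that $\cB$ is not stable does not license this: non-stability only means $\cB$'s forest may change at times other than tree-edge deletions, not that it may change by an unbounded amount per step.

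The paper's proof handles exactly this difficulty with a third mechanism that you bypass. It runs, in parallel, a stable dynamic $2$-weight $\mst$ instance $\cA'$ (obtained from $\cA$ via the Henzinger--King reduction, which is why the hypothesis of \ref{lem:reduc infty length} in the paper actually requires $\cA$ to be \emph{stable} --- the ``Reminder'' restatement you were given inadvertently drops this word). During a transition window of length $\Theta(n)$, the side graph $G'$ underlying $\cA'$ is first grown edge-by-edge to equal $F^{\text{odd}}$ with all weights $1$ (so $F'=F^{\text{odd}}$), and then the edges of $F^{\text{even}}$ are inserted into $G'$ with weight $0$. Because weight-$0$ edges preferentially enter the $2$-weight $\mst$, $F'$ gradually morphs into $F^{\text{even}}$, and stability of $\cA'$ (inherited from stability of $\cA$ and of $F^{\text{odd}},F^{\text{even}}$) guarantees that each of these $\Theta(n)$ intermediate steps changes $F'$ by only $O(1)$ edges. $\cB$ reports from $\cA^{\text{odd}}$, then from $\cA'$, then from $\cA^{\text{even}}$, with $F^{\text{odd}}=F'$ at the first hand-off and $F'=F^{\text{even}}$ at the second, so no step ever emits more than $O(t_{u})$ changes. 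Your reconstruct-the-graph-via-$n^{2}$-insertions trick for the refresh preprocessing is a reasonable alternative to the paper's direct distribution of $t_{p}(n^{2})$ (modulo a constant-factor tightening of the $T\ge n^{2}$ hypothesis to keep the $m+T/2$ consumed updates within the $T$-budget), and your independence argument for why a refreshed instance still faces a legitimate adaptive adversary is essentially the one the paper uses. But without the gradual hand-off the claimed worst-case bound fails.
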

\begin{proof}
The reduction from dynamic $\st$ to dynamic $2$-weight $\mst$ in
\cite[Section 3.1]{HenzingerK99} can be stated as follows: given
that there is $\cA$, there is a stable $(n,p'_{c},p'_{t},T)$-algorithm
$\cA'$ for 2-weight $\mst$ for any graph where $p'_{c}=O(Tp_{c})$
and $p'_{t}=O(p{}_{t})$, and $\cA'$ has preprocessing time $O(t_{p}(n,p_{c},p_{t},T))$
and update time $O(t_{u}(n,p_{c},p_{t},T)+\log^{2}n)$. We will use
$\cA'$ in the reduction below. Let $\cA^{odd}$ and $\cA^{even}$
be two instances of the algorithm $\cA$. We denote $F^{odd},F^{even}$
and $F'$ as the spanning forests maintained by $\cA^{odd}$, $\cA^{even}$
and $\cA'$ respectively. For any time $i$, $F_{i}^{odd},F_{i}^{even}$
and $F_{i}'$ are $F^{odd},F^{even}$ and $F'$ after the $i$-th
update respectively. We divide the sequence of updates in phases.
Each phase has size $P=T/2$. We will first show the reduction assuming
that $\cA^{odd}$, $\cA^{even}$ and $\cA'$ are deterministic, and
then we will remove the assumption later. For convenience, we denote
$t_{p}(m)=t_{p}(m_{0},p_{c},p_{t},T)$ for any $m$.

Suppose that $i_{0}$ is the ending time of some odd phase. For each
update at time $i_{0}+1$ to $i_{1}=i_{0}+2t_{p}(n^{2})$, we feed
each update to $\cA^{odd}$ taking time $t_{u}(n)$ per update and
then report the answer from $\cA^{odd}$ (the list of tree-edges to
be added or removed from the spanning forest $F^{odd}$ maintained
by $\cA^{odd}$). But, in addition, during this period of time we
will also spend time on $\cA^{even}$ as follows. We distribute evenly,
from time $i_{0}+1$ to $i_{0}+t_{p}(n^{2})$, the work of $\cA^{even}$
for preprocessing the graph $G_{i_{0}}$ at time $i_{0}$ which has
at most $n^{2}$ edges. This takes $O(1)$ time per update. So, at
time $i_{0}+t_{p}(n^{2})$, $\cA^{even}$ finishes preprocessing $G_{i_{0}}$
which is a graph from $t_{p}(n^{2})$ steps before. For each time
$i_{0}+t_{p}(n^{2})+k$ where $k\in[1,t_{p}(n^{2})]$, we feed the
two updates from time $i_{0}+2k-1$ and $i_{0}+2k$. This takes $O(t_{u}(n))$
time per update. Hence, at time $i_{1}=i_{0}+2t_{p}(n^{2})$, $\cA^{even}$
is maintaining a spanning forest $F_{i_{1}}^{even}$ of the current
graph $G_{i_{1}}$.

Next, for each update at time $i_{1}+1$ to $i_{2}=i_{1}+t_{p}(0)+n$,
we feed each update to both $\cA^{odd}$ and $\cA^{even}$ taking
time $O(t_{u}(n))$ per update, but we only report the answer from
$\cA^{odd}$. During this period of time we will, in addition, spend
time on $\cA'$. From time $i_{1}+1$ to $i_{1}+t_{p}(0)$, we distribute
evenly the work of $\cA'$ for preprocessing an empty graph.\footnote{Note that $t_{p}(0)$ may not be trivial. For instance, $t_{p}(0)$
can be linear in $n$.} At time $i_{1}+t_{p}(0)$, $\cA'$ is ready to handle an update.
Let $G'$ be the underlying graph of $\cA'$. Now, the goal is that
at time $i_{2}$ we have edges of $G'$ is exactly $F^{odd}$ (i.e.
$G'_{i_{2}}=F_{i_{2}}^{odd})$, and each edge has weight 1. To do
this, for each time $i_{1}+t_{p}(0)+1$ to $i_{2}=i_{1}+t_{p}(0)+n$,
we have the invariant that $E(G')\subseteq F^{odd}$. We insert a
constant number, say three, of edges of weight 1 from $F^{odd}$ to
$G'$ and delete an edge $e$ from $G'$ whenever $e$ is removed
from $F^{odd}$. Since $F^{odd}$ is stable, at each step there is
at most one edge removed from $F^{odd}$. But because we insert strictly
more than one edge at each step, and $F^{odd}$ has at most $n-1$
edges, and we spend at most $n$ steps doing this, taking time $O(t_{u}(n))$
per step. We conclude $G'_{i_{2}}=F_{i_{2}}^{odd}$. Note that we
have $F'_{i_{2}}=F_{i_{2}}^{odd}$ as well.

Next, from each update at time $i_{2}+1$ to $i_{3}=i_{2}+n$, we
feed each update to all $\cA^{odd}$, $\cA^{even}$ and $\cA'$. But
we use report the answer from $\cA'$. Note that we can switch from
reporting the answer from $\cA^{odd}$ to reporting from $\cA'$ because
$F'_{i_{2}}=F_{i_{2}}^{odd}$. Now, the goal is that at time $i_{3}$
we have $G'_{i_{3}}=F_{i_{3}}^{odd}\cup F_{i_{3}}^{even}$ and each
edge $e\in F^{even}$ has weight 0 in $G'$. Therefore, we have $F'_{i_{3}}=F_{i_{3}}^{even}$.
Even when having the invariant that $F^{odd}\subseteq E(G')\subseteq F^{odd}\cup F^{even}$,
this can still be done similarly as above by doing a constant number
of updates to $G'$ at each step, taking time $O(t_{u}(n))$ per update.
At time $i_{3}$, we now have $F'_{i_{3}}=F_{i_{3}}^{even}$ and we
terminate $\cA^{odd}$ and $\cA'$ (they will be initialized again
around the ending time of the next phase).

Next, from time $i_{3}+1$ to $i_{0}+P$ which is the ending time
of next phase, we just feed each update to $\cA^{even}$ and also
report its answer. Note again that we can switch from reporting the
answer from $\cA'$ to reporting from $\cA^{even}$ because $F'_{i_{3}}=F_{i_{2}}^{even}$. 

We need that $i_{3}+1\le i_{0}+P$. This is true because $i_{3}-i_{0}+1\le2t_{p}(n^{2})+t_{p}(0)+n+n+1\le6t_{p}(n^{2})\le P$.
We also need that $\cA^{odd}$ and $\cA'$ need to handle at most
$T$ updates before they are terminated. This is true because $\cA^{odd}$
is surely initialized after time $i_{0}-P$ and terminated before
time $i_{0}+P$, and $T=2P$. Similar argument holds for $\cA^{even}$
and $\cA'$. 

To conclude, we have obtained an algorithm $\cB$ that alternatively
invoke $\cA^{odd}$ and $\cA^{even}$ on each phase, and periodically
invoke $\cA'$ during the transition between each phase. The preprocessing
time is $t_{p}(m)$, which is exactly preprocessing time of $\cA^{odd}$.
The update time is $O(t_{u}(n))$ time. 

Now, we remove the assumption that $\cA^{odd}$, $\cA^{even}$ and
$\cA'$ are deterministic, and we want to prove that the probability
at the maintained forest by $\cB$ spans the underlying graph at each
step with probability at least $1-O(Tp_{c})$. Observe that, at any
time, the forest $F^{\cB}$ maintained by $B$ is the forest maintained
by either $\cA^{odd}$, $\cA^{even}$ or $\cA'$, i.e. $F^{odd}$,
$F^{even}$, or $F'$ respectively. From the algorithm, when we report
either $F^{odd}$ or $F^{even}$, the underlying graph is $G$ itself.
So $F^{odd}$ and $F^{even}$ are both spanning forests of $G$ with
probability at least $1-p_{c}$. However, when we report $F'$, the
underlying graph is $G'$ where $F^{odd}\subseteq E(G')\subseteq F^{odd}\cup F^{even}$.
Since $F^{odd}$ and $F^{even}$ are spanning forests of $G$ with
probability $1-p_{c}$ and $F'$ is a spanning forest of $G'$ with
probability $1-p_{c}'$, we have $F'$ is a spanning forest of $G$
with probability at least $1-p_{c}'-2p_{c}=1-O(Tp_{c})$ by union
bound. Using union bound, we have that the failure probability about
running time is $O(p_{t})$.
\end{proof}

\subsection{\label{sec:proof most balanced}Proof of \ref{thm:most balanced sparse cut}}

We prove \ref{thm:most balanced sparse cut} by extending the cut-matching
game of \cite{KhandekarRV09}. In the following, the notion of graph
embedding is required. We say that a weighted graph $H=(V,E_{H},w_{H})$
can be \emph{embedded} in another weighted graph $G=(V,E_{G},w_{G})$
with congestion $C$ iff a flow of $w_{H}(f)$ units can be routed
in $G$ between the end-points of $f$ \emph{simultaneously} for all
$f\in E_{H}$ without violating the edge-capacities $w_{G}(e)$, for
any edge $e\in E_{G}$, by a factor more than $C$. 
we state an easy observation:
\begin{prop}
\label{prop:embed graph} Let $G$ and $H$ be two unweighted undirected
graphs with the same set of vertices. If $H$ can be embedded in $G$
with congestion $C$, then $\phi(G)\ge\phi(H)/C$. Moreover, if every
cut of $H$ of size at least $s$ has expansion at least $\alpha$,
then every cut of $G$ of size at least $s$ has expansion at least
$\alpha/C$. 
\end{prop}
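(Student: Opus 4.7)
The plan is to fix an arbitrary cut $S \subset V$ and compare $\delta_H(S)$ with $\delta_G(S)$ through the flow implicit in the embedding. By definition, the embedding provides, for each edge $f = (u,v) \in E_H$, a unit of flow from $u$ to $v$ in $G$, and the superposition of all these flows uses each edge of $G$ to carrying total weight at most $C$ (since $H$ is unweighted, so $w_H(f) = 1$).

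Next, I would count flow crossing the cut $S$ in two ways. On one hand, each edge $f \in \partial_H(S)$ has its endpoints on opposite sides of $S$, so the unit flow associated to $f$ must send at least $1$ net unit across $\partial_G(S)$; summing over all edges of $H$ crossing $S$, the total net flow across the cut is at least $|\partial_H(S)| = \delta_H(S)$. On the other hand, each edge of $G$ in $\partial_G(S)$ is used by the combined flow with total weight at most $C$, so the total flow across the cut is at most $C \cdot |\partial_G(S)| = C \cdot \delta_G(S)$. Combining these two bounds gives
\[
\delta_H(S) \le C \cdot \delta_G(S).
\]
Dividing both sides by $\min\{|S|, |V \setminus S|\}$ yields $\phi_G(S) \ge \phi_H(S)/C$ for the arbitrary cut $S$.

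The first claim then follows by taking the minimum over all nonempty proper subsets $S \subset V$: $\phi(G) = \min_S \phi_G(S) \ge \min_S \phi_H(S)/C = \phi(H)/C$. For the second claim, I restrict attention to cuts $S$ of size at least $s$ (meaning $\min\{|S|,|V\setminus S|\} \ge s$); the hypothesis gives $\phi_H(S) \ge \alpha$ for every such cut, so the pointwise inequality $\phi_G(S) \ge \phi_H(S)/C$ immediately gives $\phi_G(S) \ge \alpha/C$.

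There is no serious obstacle here; the only point requiring mild care is the lower bound on flow across the cut, where one must observe that a unit flow from $u \in S$ to $v \notin S$ (or vice versa) really does contribute at least $1$ unit of net flow across $\partial_G(S)$, which is just the standard flow conservation / max-flow-min-cut observation applied to each commodity individually before summing.
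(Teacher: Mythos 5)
Your proof is correct and follows the standard flow-vs.-cut counting argument. The paper states this proposition as ``an easy observation'' and omits the proof entirely, so your argument exactly supplies the expected reasoning: for any cut $S$, each edge of $H$ crossing $S$ forces a unit of net flow across $\partial_G(S)$, while the congestion bound caps the total flow across $\partial_G(S)$ at $C\cdot\delta_G(S)$, giving the pointwise inequality $\phi_G(S)\ge\phi_H(S)/C$ from which both claims follow.
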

Next, we prove some lemmas which are extending the cut-matching game
of \cite{KhandekarRV09}.
\begin{lem}
\label{lem:sparse from approx max flow}Given an undirected graph
$G=(V,E)$ with $n$ vertices and $m$ edges, an expansion parameter
$\alpha>0$, and an approximation ratio $\gamma\ge1$, there is a
randomized algorithm that with high probability outputs either 
\begin{itemize}
\item an $\alpha$-sparse cut $S$ where $|S|\le n/2$ or 
\item a graph $H$ embeddable in $G$ with congestion at most $O(\gamma\log^{3}n/\alpha)$
where $\phi(H)\ge1/2$. 
\end{itemize}

Furthermore the algorithm runs in $O(\gamma\log^{4}n\cdot T(m))$
time where $T(m)$ is the time for computing $\gamma$-approximate
max-flow in an undirected graph with $m$ edges.

\end{lem}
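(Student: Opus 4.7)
The plan is to adapt the cut-matching game of Khandekar, Rao and Vazirani \cite{KhandekarRV09} by replacing its internal exact max-flow subroutines with a $\gamma$-approximate max-flow oracle. Maintain a witness graph $H$ on vertex set $V$, initially empty, and run the game for $R = \Theta(\log^2 n)$ rounds. In each round $i$, the KRV cut player --- a randomized procedure that depends only on the current $H$ --- produces a bipartition $(A_i, B_i)$ of $V$ with $|A_i| = |B_i| = n/2$. The matching player must then either exhibit a perfect matching $M_i$ between $A_i$ and $B_i$ embeddable in $G$ (to be added into $H$), or halt and return an $\alpha$-sparse cut.

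I would implement the matching player by building a flow network: super-source $s$ connected to each $a \in A_i$ by a capacity-$1$ edge, super-sink $t$ reached from each $b \in B_i$ by a capacity-$1$ edge, and every edge of $G$ given capacity $c = 1/\alpha$. Invoke the $\gamma$-approximate max-flow oracle asking to route $n/2$ units from $s$ to $t$. Either (a) the oracle returns a feasible $s$-$t$ flow of value $n/2$ with $\gamma$-congestion, or (b) it returns an $s$-$t$ cut of capacity strictly less than $n/2$. In case~(a), decompose the flow into $s$-$t$ paths, obtaining a perfect fractional matching between $A_i$ and $B_i$; by Birkhoff--von~Neumann this is a convex combination of integral perfect matchings, and an $O(\log n)$-iteration random-sampling/averaging step extracts an integral $M_i$ embeddable in $G$ with per-round edge congestion $O(\gamma \log n / \alpha)$. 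In case~(b), let $S \subseteq V$ be the vertex restriction of the $s$-side; rewriting the cut-capacity inequality gives
\begin{equation*}
c\,\delta_G(S) \;<\; |A_i \cap S| - |B_i \cap S| \;\le\; |S|,
\end{equation*}
so $\phi_G(S) < 1/c = \alpha$. A standard symmetry step (swapping the roles of $A_i, B_i$ or taking the smaller side when needed) ensures $|S| \le n/2$, and we return this $S$ and terminate.

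If the algorithm never terminates in case~(b), the KRV expansion guarantee yields $\phi(H) \ge 1/2$ after $R$ rounds, and the $R$ embedded matchings contribute a total congestion of $O(R \cdot \gamma \log n / \alpha) = O(\gamma \log^3 n / \alpha)$, as claimed. The runtime is dominated by $R$ invocations of the approximate max-flow oracle, each costing $T(m)$; accounting for the $O(\log n)$ inner iterations in the matching-extraction step and a factor-$\gamma$ overhead for boosting the approximate flow to exactly saturate the demand yields the bound $O(\gamma \log^4 n \cdot T(m))$.

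The main obstacle, in my view, is managing the two-sided use of the approximate max-flow oracle at the $n/2$ threshold. When the true max-flow lies between $n/(2\gamma)$ and $n/2$, a $\gamma$-approximation may neither produce a cut of capacity $<n/2$ nor a flow of value $n/2$ with modest congestion, so one must either scale the capacity $c$ carefully (which risks an extra $\gamma$ factor in the derived bound $\phi_G(S) < \gamma\alpha$ and thus loses $\alpha$-sparsity) or design the oracle interaction so that the $\gamma$-loss is absorbed entirely into the congestion of the accepted flow. The choice $c = 1/\alpha$ above is optimized for the cut side; verifying that the flow side can still be pushed through at only a $\gamma$ multiplicative cost in congestion, together with the symmetry argument securing $|S| \le n/2$ in case~(b) and the clean Birkhoff--von~Neumann extraction of an integral perfect matching without further congestion blow-up, is the delicate bookkeeping the proof must carry out.
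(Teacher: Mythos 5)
Your proposal identifies the right framework (cut-matching game of KRV with an approximate max-flow oracle replacing exact flow) and you correctly derive the $\alpha$-sparsity of a cut when the oracle reports a small cut, but the proposal has a genuine gap at exactly the spot you flag yourself. A $\gamma$-approximate max-flow algorithm does not give a two-sided dichotomy of the form you assume in case~(a)/(b): it is not guaranteed to either return a flow of full value $n/2$ or a cut of capacity $<n/2$. When the true max-flow is, say, $n/2$, the oracle may simply return a feasible flow of value $n/(2\gamma)$, which neither saturates the demand nor certifies a sparse cut. Your two escape hatches --- scale the capacity $c$ by $\gamma$ (loses $\alpha$-sparsity, as you note) or ``boost the approximate flow to saturate the demand'' --- are mentioned but never made to work, and neither does work directly.

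The paper's resolution is a \emph{sub-round} iteration inside each matching-player round, which you do not propose. After the cut player outputs a bisection, the matching player maintains the still-unmatched vertex sets $T_1^{(j)}, T_2^{(j)}$ of equal size $k^{(j)}$ and repeatedly runs the $\gamma$-approximate max-flow/min-cut oracle against super-source/super-sink terminals attached to $T_1^{(j)}, T_2^{(j)}$. If the returned approximate min cut has capacity $<k^{(j)}$, it yields an $\alpha$-sparse cut (by the argument you gave) and the game halts. Otherwise the flow has value at least $k^{(j)}/\gamma$; after rounding to an integral flow and decomposing into paths, this yields a \emph{partial} matching covering at least a $1/\gamma$ fraction of the remaining unmatched vertices, so $|T_1^{(j+1)}| \le (1-1/\gamma)\,k^{(j)}$. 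After $c=\Theta(\gamma\log n)$ sub-rounds the unmatched set is empty and the union of partial matchings is a perfect matching embeddable with congestion $c/\alpha$, giving total congestion $r\cdot c/\alpha = O(\gamma\log^3 n/\alpha)$ over $r=\Theta(\log^2 n)$ rounds. This is what absorbs the $\gamma$-loss entirely into congestion while keeping the cut side clean. As a minor point, the Birkhoff--von~Neumann decomposition of a fractional matching is unnecessary: once the flow is rounded to be integral, the path decomposition directly hands over an integral partial matching, so that step of your argument can be dropped entirely.
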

\begin{proof}
We first summarize the cut-matching game \cite{KhandekarRV09} that
our proof is based on. Given an expansion parameter $\alpha$ on a
graph $G=(V,E)$, there are two players, a cut player and a matching
player, interacting with each other for $r=\Theta(\log^{2}n)$ rounds.
In the $i$-th round, the cut player will choose a cut $(S_{i},V\setminus S_{i})$
of size $|S_{i}|=n/2$, and then the matching player either 1) outputs
a perfect matching $M_{i}$ between $S_{i}$ and $V\setminus S_{i}$
and the game is continued to the next round, or 2) outputs an $\alpha$-sparse
sparse cut and terminate the game.

It is shown in \cite{KhandekarRV09} that, after $r$ rounds, if the
matching player did not output an $\alpha$-sparse cut, then a union
of the matchings $H=\bigcup_{i=1}^{r}M_{i}$ is such that $\phi(H)\ge1/2$.
Moreover, the time needed for the cut player in each round is only
$O(n)$. For the matching player, by computing \emph{exact} max flow
in each round, the matching $M_{i}$ found by the matching player
can be embedded in $G$ with congestion at most $c/\alpha$ where
$c=1$. Hence, $H$ can be embedded in $G$ with congestion $r\times c/\alpha=O(\log^{2}n/\alpha)$.
However, computing \emph{exact} max flow is too costly for us.

We will show that if we relax $c=\Theta(\gamma\log n)$, then the
matching player only needs to compute $c$ many $\gamma$-approximate
max flow computations to finish each round. On the high level, in
the $i$-th round of matching player, given a cut $(S_{i},V\setminus S_{i})$
from the cut player, there will be $c$ sub-rounds. In each sub-round,
the matching player will compute a $\gamma$-approximate max flow.
He will either output an $\alpha$-sparse cut, or, after $c$ sub-rounds,
output a perfect matching to complete the $i$-th round for the cut-matching
game. We describe the procedure precisely below.

In the $j$-th sub-round, let $T_{1}^{(j)},T_{2}^{(j)}\subset V$
be two disjoint sets of vertices each of size $k^{(j)}=|T_{1}^{(j)}|=|T_{2}^{(j)}|$.
In the first sub-round, $T_{1}^{(1)}=S_{i}$ and $T_{2}^{(1)}=V\setminus S_{i}$.
Let $G^{(j)}$ be a \emph{weighted} undirected graph obtained from
$G$ by adding two new vertices $t_{1}$ and $t_{2}$. There are edges
in $G^{(j)}$, with unit capacity, from $t_{1}$ to every vertex in
$T_{1}^{(j)}$, and similarly from $t_{2}$ to $T_{2}^{(j)}$. The
original edges of $G$ have capacity $1/\alpha$ in $G^{(j)}$. The
matching player runs a $\gamma$-approximate max flow algorithm in
$G^{(j)}$, which also produces an $\gamma$-approximate min cut $C^{(j)}$. 
\begin{claim}
\label{claim:small cut then sparse}If $\delta_{G^{(j)}}(C^{(j)})<k^{(j)}$
then $\phi_{G}(C^{(j)}\setminus\{t_{1}\})<\alpha$. \end{claim}
\begin{proof}
This is similar to \cite[Lemma 3.7]{KhandekarRV09}. Let $n_{t_{1}}$
and $n_{t_{2}}$ be a number of cut edges $\partial_{G^{(j)}}(C^{(j)})$
incident to $t_{1}$ and $t_{2}$ respectively. Observe that $\delta_{G^{(j)}}(C^{(j)})=\frac{1}{\alpha}\delta_{G}(C^{(j)}\setminus\{t_{1}\})+n_{t_{1}}+n_{t_{2}}$.
So $\delta_{G}(C^{(j)}\setminus\{t_{1}\})=\alpha(\delta_{G^{(j)}}(C^{(j)})-n_{t_{1}}-n_{t_{2}})$.
But $|C^{(j)}\setminus\{t_{1}\}|\ge k^{(j)}-n_{t_{1}}>\delta_{G^{(j)}}(C^{(j)})-n_{t_{1}}$
and similarly we can show that $|V\setminus\{C^{(j)}\cup\{t_{2}\}\}|>\delta_{G^{(j)}}(C^{(j)})-n_{t_{2}}$,
therefore, $\phi_{G}(C^{(j)}\setminus\{t_{1}\})=\frac{\delta_{G}(C^{(j)}\setminus\{t_{1}\})}{\min\{|C^{(j)}\setminus\{t_{1}\}|,|V\setminus\{C^{(j)}\cup\{t_{2}\}\}|\}}<\alpha$. 
\end{proof}
Hence, if the total weight of the approximate min cut $C^{(j)}$ is
$\delta_{G^{(j)}}(C^{(j)})<k^{(j)}$, then the matching player can
return the smaller side of $(C^{(j)}\setminus\{t_{1}\},V\setminus\{C^{(j)}\cup\{t_{2}\}\})$
which is the desired $\alpha$-sparse cut, and terminate the cut-matching
game. Otherwise, we know that the size of the max flow is at least
$k^{(j)}/\gamma$, and by \cite{KangP15} the flow rounding algorithm
in $O(m\log(n^{2}/m))$ time, we can assume that the flow is integral.
We iteratively decompose the flow into paths using dynamic tree \cite{SleatorT83}
in total time $O(m\log n)$. Then we construct a partial matching
$M_{i}^{(j)}$ between $T_{1}^{(j)}$ and $T_{2}^{(j)}$ of size at
least $k^{(j)}/\gamma$ as follows: For each decomposed path $P$,
if $P=(t_{1},u_{1},\dots,u_{2},t_{2})$ where $u_{1}\in T_{1}^{(j)}$
and $u_{2}\in T_{2}^{(j)}$ then we add $(u_{1},u_{2})\in M_{i}^{(j)}$.
There are at least $k^{(j)}/\gamma$ paths because of the size of
the max flow, hence the size of $M_{i}^{(j)}$ is at least $k^{(j)}/\gamma$.
Let $T_{1}^{(j+1)}=T_{1}^{(j)}\setminus V(M_{i}^{(j)})$ and $T_{2}^{(j+1)}=T_{2}^{(j)}\setminus V(M_{i}^{(j)})$.
As $|T_{i}^{(j+1)}|\le(1-\frac{1}{\gamma})|T_{i}^{(j)}|$, there are
at most $c=\Theta(\gamma\log n)$ sub-rounds before $T_{1}^{(j+1)},T_{2}^{(j+1)}=\emptyset$.
If the game is not terminated until then, the matching player returns
a perfect matching $M_{i}=\bigcup_{j}M_{i}^{(j)}$ between $S_{i}$
and $V\setminus S_{i}$. Observe that $M_{i}$ can be embedded in
$G$ with congestion $c/\alpha$ as desired, because each $M_{i}^{(j)}$
can be embedded in $G$ with congestion $1/\alpha$ as certified by
the paths from the flow decomposition.

To conclude, when we relax $c=O(\gamma\log n)$, we can simulate one
round of matching player in the cut matching game of \cite{KhandekarRV09}
using $c$ sub-rounds where each sub-round needs just one $\gamma$-approximate
max flow computation. As a result, we either get an $\alpha$-sparse
cut, or a graph $H=\bigcup_{i=1}^{r}M_{i}$ embeddable in $G$ with
congestion $O(r\times c/\alpha)=O(\gamma\log^{3}n/\alpha)$ where
$\phi(H)\ge1/2$. The total cost for the cut player is $r\times O(n)$.
The total cost for the matching player is $rc\times O(T(m)+m\log n)$.
So the total running time is $O(\gamma\log^{4}n\cdot T(m))$. 

\end{proof}
Next, we extend the previous proof so that we can compute a sparse
cut of any specified size $s$. 
\begin{lem}
\label{lem:sparse s-size from approx max flow}Given an undirected
graph $G=(V,E)$ with $n$ vertices and $m$ edges, an expansion parameter
$\alpha>0$, an approximation ratio $\gamma\ge1$, and a size parameter
$s=O(\frac{n}{\log^{2}n})$, there is a randomized algorithm that
with high probability outputs either 
\begin{itemize}
\item an $\alpha$-sparse cut $S$ where $s\le|S|\le n/2$, or 
\item a graph $H$ embeddable in $G$ with congestion at most $O(\gamma\log^{3}n/\alpha)$
where every cut in $H$ of size $\Omega(s\log^{2}n)$ has expansion
at least $1/4$. 
\end{itemize}

Furthermore the algorithm runs in $O(\gamma\log^{4}n\cdot T(m))$
time where $T(m)$ is the time for computing $\gamma$-approximate
max-flow in a graph with $m$ edges.

\end{lem}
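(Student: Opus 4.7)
The plan is to extend the cut-matching game from \ref{lem:sparse from approx max flow} so that any $\alpha$-sparse cut that we output is guaranteed to have size at least $s$. The only change sits inside the matching player's sub-rounds: rather than terminating as soon as the approximate min-cut $C^{(j)}$ satisfies $\delta_{G^{(j)}}(C^{(j)}) < k^{(j)}$ (which by \ref{claim:small cut then sparse} only gives an $\alpha$-sparse cut $C^{(j)}\setminus\{t_1\}$ of arbitrary size), I would check whether $|C^{(j)}\setminus\{t_1\}| \ge s$, and terminate with that cut only in that case.

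When a discovered sparse cut has size less than $s$, my plan is to set its vertices aside in a ``bad set'' $B$, delete them from $T_1^{(j)}$ and $T_2^{(j)}$, and retry the sub-round on the reduced terminals. Since the union of disjoint $\alpha$-sparse cuts is itself $\alpha$-sparse (the same argument used in the proof of \ref{thm:local decomp sparse }), once $|B|$ exceeds $s$ I output $B$ itself as the desired $\alpha$-sparse cut of size between $s$ and $n/2$. Otherwise at most $s$ vertices accumulate in $B$ across all $c=O(\gamma\log n)$ sub-rounds of a given round $i$, so the matching player still produces a near-perfect matching $M_i$ between $S_i$ and $V\setminus S_i$ that leaves at most $s$ endpoints unmatched on each side. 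The congestion bound $O(\gamma\log^{3}n/\alpha)$ and the total running time $O(\gamma\log^{4}n\cdot T(m))$ are inherited without change, because each $M_i$ is still produced by $c$ approximate max-flow computations with edge capacities $1/\alpha$.

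In the second case I set $H=\bigcup_{i=1}^{r}M_i$ after the $r=\Theta(\log^{2}n)$ rounds complete. At most $O(sr)=O(s\log^{2}n)$ vertices are ever left unmatched in some round. The expansion claim then follows from revisiting the potential argument of \cite{KhandekarRV09}: for any cut $(C,V\setminus C)$ in $H$ with $\min\{|C|,|V\setminus C|\}=\Omega(s\log^{2}n)$, each round's near-perfect matching still crosses $C$ in a constant fraction of its edges, so the standard potential-drop analysis yields $\phi_H(C)\ge 1/4$ instead of $1/2$, the constant-factor loss absorbing the $O(s)$ unmatched endpoints per round.

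The main obstacle I foresee is precisely this adapted potential analysis: the original KRV argument crucially uses that every round produces a \emph{perfect} matching between $S_i$ and $V\setminus S_i$, and I must verify that tolerating $O(s)$ unmatched endpoints per round costs only a constant factor in the final expansion certificate for large cuts. I expect this to be routine bookkeeping rather than conceptual --- each unmatched vertex contributes $O(1)$ to a potential defect, the aggregate defect across $r$ rounds is $O(sr)=O(s\log^{2}n)$, and this additive slack is swamped once the cut size exceeds $\omega(s\log^{2}n)$ --- but it is the step that needs the most care to make rigorous.
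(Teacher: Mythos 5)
Your approach diverges from the paper's in two places, and both carry gaps. First, when a discovered cut $C^{(j)}\setminus\{t_1\}$ is too small, you set its vertices aside in $B$ and remove them from $T_1^{(j)}, T_2^{(j)}$; you then claim that the union of the accumulated small cuts is itself $\alpha$-sparse "since the union of disjoint $\alpha$-sparse cuts is $\alpha$-sparse." But you have not shown the accumulated cuts are disjoint: the approximate min cut $C^{(j)}\setminus\{t_1\}$ is an arbitrary vertex subset of $G$, not confined to the terminal sets, so deleting its vertices from $T_1^{(j)}, T_2^{(j)}$ in no way prevents a later $C^{(j')}\setminus\{t_1\}$ from overlapping it. Moreover, the argument you cite from \ref{thm:local decomp sparse } relies on the cuts being found in \emph{nested induced subgraphs} ($H'_1=G$, $H'_2=G[V-S'_1]$, \dots), not in the same graph with shrinking terminal sets; that nesting is what makes the boundaries telescope. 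Without disjointness and the right nesting, the union need not be $\alpha$-sparse, and the termination argument "once $|B|>s$, output $B$" is unsupported.

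Second, you propose to run the cut-matching game with near-perfect matchings and "revisit the potential argument of KRV," conceding yourself that this is the step needing the most care. The paper avoids this redo entirely with a clean trick: it uses the threshold $\delta_{G^{(j)}}(C^{(j)})\le k^{(j)}-s$ (which guarantees both sides have size $\ge s$ directly, no accumulation needed), and when no such cut appears it completes each partial matching into a \emph{perfect} matching $M'_i=\bigcup_j M_i^{(j)}\cup M''_i$ by throwing in an arbitrary fill-in matching $M''_i$ on the $\le s$ leftover terminals. The unchanged KRV analysis then gives $\phi(H')\ge 1/2$ for $H'=\bigcup_i M'_i$, and setting $H=H'\setminus\bigcup_i M''_i$ loses at most $rs$ edges from any cut, so for $|S|\ge 4rs$ one gets $\delta_H(S)\ge |S|/2-rs\ge|S|/4$. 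Only the flow-supported edges $M_i^{(j)}$ survive in $H$, so the congestion bound is preserved. This subtract-off device is the key observation; your plan substitutes an uncarried-out modification of the KRV potential argument for it, which is where the proposal remains unfinished.
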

\begin{proof}
We extend the proof of \ref{lem:sparse from approx max flow}. Suppose
that we run the cut matching game in the $i$-th round and in the
$j$-subround of matching player. We are going to find a partial matching
between two disjoint set $T_{1}^{(j)},T_{2}^{(j)}\subset V$ each
of size $k^{(j)}=|T_{1}^{(j)}|=|T_{2}^{(j)}|$. Let $C^{(j)}$ be
the $\gamma$-approximate min cut computed by a $\gamma$-approximate
max flow algorithm in a graph $G^{(j)}$, which defined as in the
proof of \ref{lem:sparse from approx max flow}. 
\begin{claim}
\label{claim:small cut then large} If $\delta_{G^{(j)}}(C^{(j)})\le k^{(j)}-s$,
then two sides of the cut are of size at least $s$ i.e. $|C^{(j)}\setminus\{t_{1}\}|,|V\setminus(C^{(j)}\cup\{t_{2}\})|\ge s$.
\begin{proof}
We only give a proof for $|C^{(j)}\setminus\{t_{1}\}|$ because they
are symmetric. Let $n_{t_{1}}$ and $n_{t_{2}}$ be a number of cut
edges $\partial_{G^{(j)}}(C^{(j)})$ incident to $t_{1}$ and $t_{2}$
respectively. As in \ref{claim:small cut then sparse}, $\delta_{G^{(j)}}(C^{(j)})=\frac{1}{\alpha}\delta_{G}(C^{(j)}\setminus\{t_{1}\})+n_{t_{1}}+n_{t_{2}}\ge n_{t_{1}}$
and hence $|C^{(j)}\setminus\{t_{1}\}|\ge k^{(j)}-n_{t_{1}}\ge k^{(j)}-\delta_{G^{(j)}}(C^{(j)})\ge s$. 
\end{proof}
\end{claim}
So, if $\delta_{G^{(j)}}(C^{(j)})\le k^{(j)}-s$, then we return the
smaller side of $(C^{(j)}\setminus\{t_{1}\},V\setminus\{C^{(j)}\cup\{t_{2}\}\})$
which has expansion less than $\alpha$ by \ref{claim:small cut then sparse}
and has size at least $s$ by \ref{claim:small cut then large}. Otherwise,
the size of the max flow is at least $(k^{(j)}-s)/\gamma$. Again,
we can decompose the flow into paths which define a partial matching
$M_{i}^{(j)}$ between $T_{1}^{(j)}$ and $T_{2}^{(j)}$ of size at
least $(k^{(j)}-s)/\gamma$. Let $T_{1}^{(j+1)}=T_{1}^{(j)}\setminus M_{i}^{(j)}$
and $T_{2}^{(j+1)}=T_{2}^{(j)}\setminus M_{i}^{(j)}$. We know that
$(|T_{i}^{(j+1)}|-s)\le(1-\frac{1}{\gamma})(|T_{i}^{(j)}|-s)$. So
there are at most $c=\Theta(\gamma\log n)$ sub-rounds before $|T_{1}^{(j+1)}|,|T_{2}^{(j+1)}|\le s$.
If the game is not terminated until then, let $M''_{i}$ be an \emph{arbitrary
}matching between $T_{1}^{(j+1)}$ and $T_{2}^{(j+1)}$, and return
a perfect matching $M'_{i}=\bigcup_{j}M_{i}^{(j)}\cup M''_{i}$ between
$S_{i}$ and $V\setminus S_{i}$ as an output of matching player in
the $i$-th round.

Suppose that, after $r$ rounds, the matching game did not return
a sparse cut. Let $H'=\bigcup_{i=1}^{r}M'_{i}$ and $H''=\bigcup_{i=1}^{r}M''_{i}$.
We claim that $H=H'\setminus H''$ is our desired output. Observe
that $H=\bigcup_{i=1}^{r}(\bigcup_{j}M_{i}^{(j)})$, so $H$ can be
embedded in $G$ with congestion $O(rc/\alpha)=O(\gamma\log^{3}n/\alpha)$.
It is left to show that every cut in $H$ of size $4rs=\Omega(s\log^{2}n)$
has expansion at least $1/4$:
\begin{claim}
Let $S\subset V$ be a cut of size $4rs\le|S|\le n/2$, then $\phi_{H}(S)\ge1/4$. \end{claim}
\begin{proof}
From the property of cut matching game shown in \cite{KhandekarRV09},
$H'$ is such that $\phi(H')\ge1/2$. So $\delta_{H'}(S)\ge|S|/2$.
Since $\delta_{H}(S)=\delta_{H'}(S)-\delta_{H''}(S)$ and the number
of edges in $H''=\bigcup_{i=1}^{r}M''_{i}$ is at most $rs$, we conclude
that $\delta_{H}(S)=\delta_{H'}(S)-\delta_{H''}(S)\ge|S|/2-rs\ge|S|/2-|S|/4=|S|/4$. 
\end{proof}
The running time analysis is as in \ref{lem:sparse from approx max flow}. 
\end{proof}
By plugging in the $\tilde{O}(m)$-time $(1+\epsilon)$-approximate
max flow algorithm by \cite{Peng14} to \ref{lem:sparse from approx max flow}
and \ref{lem:sparse s-size from approx max flow}, and by assigning
the parameter $c_{size}$ and $c_{exp}$, we get the following corollaries:
\begin{cor}
\label{cor:cut fixed expansion}Given an undirected graph $G=(V,E)$
with $n$ vertices and $m$ edges, an expansion parameter $\alpha>0$,
there is a $\tilde{O}(m)$-time randomized algorithm that with high
probability outputs either 
\begin{itemize}
\item an $\alpha$-sparse cut $S$ where $|S|\le n/2$ or 
\item a graph $H$ embeddable in $G$ with congestion at most $c_{exp}'/\alpha$
where $\phi(H)\ge1/2$, 
\end{itemize}
where $c_{exp}'=\Theta(\log^{3}n)$.
\end{cor}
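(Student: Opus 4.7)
The plan is to obtain \ref{cor:cut fixed expansion} as a direct instantiation of \ref{lem:sparse from approx max flow} by plugging in a near-linear time approximate max-flow subroutine. The lemma already gives the desired dichotomy output (either an $\alpha$-sparse cut of size at most $n/2$, or a graph $H$ with $\phi(H) \geq 1/2$ that embeds into $G$ with congestion $O(\gamma \log^{3} n / \alpha)$), so it only remains to specify the approximation ratio $\gamma$ and the max-flow routine used.

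Concretely, I would fix $\gamma$ to be any constant (say $\gamma = 2$, or $\gamma = 1+\epsilon$ for a fixed $\epsilon$) and invoke the near-linear time approximate undirected max-flow algorithm of Peng~\cite{Peng14}, which computes a $\gamma$-approximate max flow in time $T(m) = \tilde{O}(m)$ on undirected graphs with $m$ edges. (The algorithms of Sherman~\cite{Sherman13} or Kelner~et~al.~\cite{KelnerLOS14} work equally well here.) With this choice, the congestion guarantee from \ref{lem:sparse from approx max flow} becomes $O(\gamma \log^{3} n / \alpha) = O(\log^{3} n / \alpha)$, so we can set $c_{exp}' = \Theta(\log^{3} n)$ and match the statement of the corollary.

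For the running time, \ref{lem:sparse from approx max flow} runs in $O(\gamma \log^{4} n \cdot T(m))$ time. Substituting $\gamma = O(1)$ and $T(m) = \tilde{O}(m)$ gives a total running time of $\tilde{O}(m)$, as claimed. The high-probability guarantee on correctness is inherited directly from the lemma.

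Honestly, there is no real obstacle here: the corollary is essentially a bookkeeping instantiation of the lemma with specific parameter choices, and the only thing worth highlighting is that we must use an approximate max-flow routine that runs in near-linear time on undirected graphs (which is why the citations to \cite{Peng14,Sherman13,KelnerLOS14} are relevant). All the conceptual work has already been done in the proof of \ref{lem:sparse from approx max flow}, where the cut-matching framework of Khandekar, Rao and Vazirani was modified to tolerate approximate max-flow computations by introducing the inner sub-round loop.
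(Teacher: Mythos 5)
Your proposal is correct and matches the paper's own derivation: the paper likewise obtains \ref{cor:cut fixed expansion} by plugging the $\tilde{O}(m)$-time $(1+\epsilon)$-approximate undirected max-flow algorithm of \cite{Peng14} into \ref{lem:sparse from approx max flow} with $\gamma$ a constant, giving congestion $O(\log^3 n/\alpha)$ and total time $O(\gamma\log^4 n\cdot T(m))=\tilde{O}(m)$.
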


\begin{cor}
\label{cor:cut fixed expansion and size}Given an undirected graph
$G=(V,E)$ with $n$ vertices and $m$ edges, an expansion parameter
$\alpha>0$, and a size parameter $s\le n/c_{size}'$, there is a
$\tilde{O}(m)$-time randomized algorithm that with high probability
outputs either 
\begin{itemize}
\item an $\alpha$-sparse cut $S$ where $s\le|S|\le n/2$, or 
\item a graph $H$ embeddable in $G$ with congestion at most $c_{exp}'/\alpha$
where every cut in $H$ of size at least $sc_{size}$ has expansion
at least $1/4$. 
\end{itemize}
where $c_{size}'=\Theta(\log^{2}n)$ and $c_{exp}'=\Theta(\log^{3}n)$. 
\end{cor}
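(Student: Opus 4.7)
The plan is to obtain this corollary as a direct instantiation of Lemma \ref{lem:sparse s-size from approx max flow}. That lemma already provides the dichotomy (either an $\alpha$-sparse cut of size in $[s, n/2]$, or an embeddable graph $H$ whose cuts of size $\Omega(s\log^2 n)$ have expansion at least $1/4$), parameterized by an approximation ratio $\gamma$ of the max-flow subroutine. The only thing to do here is to plug in a concrete near-linear time approximate max-flow routine and read off the resulting constants.

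First, I would invoke Lemma \ref{lem:sparse s-size from approx max flow} with a constant approximation ratio, say $\gamma=2$, using the $\tilde O(m)$-time $(1+\epsilon)$-approximate undirected max-flow algorithm of Peng \cite{Peng14} as the subroutine; any of \cite{Sherman13,KelnerLOS14} would do as well. With $T(m)=\tilde O(m)$, the stated running time $O(\gamma\log^4 n \cdot T(m))$ collapses to $\tilde O(m)$, which is the bound we want.

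Second, I would fix the two constants in the corollary so that the guarantees of the lemma transfer verbatim. The congestion bound of the lemma is $O(\gamma\log^3 n/\alpha)$, so define $c_{exp}' = \Theta(\log^3 n)$ (absorbing the constant $\gamma$) to match the claimed congestion $c_{exp}'/\alpha$. The ``large-cut expansion'' guarantee of the lemma applies to cuts in $H$ of size $\Omega(s\log^2 n)$, so set $c_{size}' = \Theta(\log^2 n)$ (with the hidden constant chosen to equal the one hidden in the lemma's $\Omega(\cdot)$). The lemma's precondition $s = O(n/\log^2 n)$ then becomes exactly $s \le n/c_{size}'$, as required by the corollary.

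There is no real obstacle here; this is a routine parameter-matching step, and the ``hard'' technical content is already inside Lemma \ref{lem:sparse s-size from approx max flow} (the cut-matching framework with approximate flows, Claims \ref{claim:small cut then sparse} and \ref{claim:small cut then large}). The only point to double-check is that both outputs of the lemma syntactically coincide with the two bullet points of the corollary: the $\alpha$-sparse cut with $s \le |S| \le n/2$ is verbatim, and the graph $H$ embeddable in $G$ with congestion $O(\gamma\log^3 n/\alpha) = c_{exp}'/\alpha$ such that all cuts of size at least $s\,c_{size}'$ have expansion at least $1/4$ matches after our renaming. The high-probability guarantee propagates directly from the lemma.
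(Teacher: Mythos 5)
Your proof is correct and follows the same one-line argument the paper gives: plug a near-linear-time constant-factor approximate max-flow routine (e.g.\ Peng's) into \ref{lem:sparse s-size from approx max flow}, absorb $\gamma$ into the hidden constants so the congestion reads $c_{exp}'/\alpha$ with $c_{exp}'=\Theta(\log^3 n)$ and the cut-size threshold reads $s\,c_{size}'$ with $c_{size}'=\Theta(\log^2 n)$, and observe that the lemma's precondition $s=O(n/\log^2 n)$ is exactly $s\le n/c_{size}'$. Nothing to add.
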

Now, we are ready to prove \ref{thm:most balanced sparse cut}. 
\begin{proof}
[Proof of \ref{thm:most balanced sparse cut}]Given a graph $G=(V,E)$
with parameter $\alpha$, Let $c_{size}'$ and $c'_{exp}$ be the
parameters from \ref{cor:cut fixed expansion} and \ref{cor:cut fixed expansion and size}.
Set $c_{size}=2c_{size}'$ and $c_{exp}=4c'_{exp}$ to be the parameter
of \ref{thm:most balanced sparse cut}. Let $A$ and $B$ be the algorithms
from \ref{cor:cut fixed expansion} and \ref{cor:cut fixed expansion and size},
respectively.

We give $G$ and $\alpha$ to $A$. If $A$ returns a graph $H$ embeddable
in $G$ with congestion $c'_{exp}/\alpha$ where $\phi(H)\ge1/2$.
By \ref{prop:embed graph}, we can just report $\phi(G)\ge\phi(H)2\alpha/c'_{exp}\ge\alpha/c_{exp}$
and we are done.

Suppose that $A$ returns an $\alpha$-sparse cut $S_{A}$ where $|S_{A}|\le n/2$.
We binary search for the largest $s$ such that when $B$ is given
$G,\alpha$ and $s$ as inputs, then $B$ returns an $\alpha$-sparse
cut. Suppose that, when $s=s^{*}-1$, $B$ returns an $\alpha$-sparse
cut $S_{B}$ where $s^{*}-1\le|S_{B}|\le n/2$, and when $s=s^{*}$,
$B$ returns a graph $H_{B}$ embeddable in $G$ with congestion $c_{exp}'/\alpha$
where every cut in $H_{B}$ of size at least $c_{size}'s^{*}$ has
expansion at least $1/4$. By \ref{prop:embed graph}, this implies
that every cut $S$ of size at least $c_{size}'s^{*}$ in $G$ has
expansion at least $\alpha/4c'_{exp}=\alpha/c_{exp}$. Therefore,
$\opt(G,\alpha/c_{exp})<c_{size}'s^{*}=c{}_{0}s^{*}/2$. We can return
$S_{B}$ because $|S_{B}|=s^{*}-1\ge s^{*}/2>\opt(G,\alpha/c_{exp})/c_{size}.$
Otherwise, suppose that $B$ never returns a cut.

Then, when $s=1$, $B$ returns a graph $H_{B}$ embeddable in $G$
with congestion $c'_{exp}/\alpha$ where every cut in $H_{B}$ of
size at least $c_{size}'$ has expansion at least $1/4$. By \ref{prop:embed graph},
this implies that every cut $S$ of size at least $c_{size}'$ in
$G$ has expansion at least $\alpha/4c'_{exp}=\alpha/c_{exp}$. Therefore,
$\opt(G,\alpha/c_{exp})<c_{size}'\le c_{size}$. So we can return
$S_{A}$ because $|S_{A}|\ge1>\opt(G,\alpha/c_{exp})/c_{size}$. 
\end{proof}

\subsection{\label{sec:proof local MOS cut}Proof of \ref{thm:local MSO cut}}

First, we slightly change the definition of augmented graphs defined
in \cite{OrecchiaZ14}. 
\begin{defn}
[Augmented Graph]Given an undirected graph $G=(V,E)$, a set $A\subset V$
where $|A|\le|V-A|$, an expansion parameter $\alpha$, and an overlapping
parameter $\sigma\in[\frac{3|A|}{|V-A|},\frac{3}{4}]$, the \emph{augmented
graph }$G_{A}(\alpha,\sigma)$ is the capacitated directed graph with
the following properties:
\begin{itemize}
\item The vertex set $V(G_{A}(\alpha,\epsilon))=V(G)\cup\{s,t\}$. We call
$s$ and $t$ the source and the sink vertices, respectively.
\item The edge set $E(G_{A}(\alpha,\epsilon))$ contains all original undirected\footnote{In directed graph, an edge $e$ is undirected means that flows can
go through $e$ in both direction in contrast to directed edges.} edges $e\in E(G)$ with capacity $1/\alpha$. $E(G_{A}(\alpha,\epsilon))$
also contains directed edges $(s,u)$ for all $u\in A$ with unit
capacity and $(v,t)$ for all $v\in V-A$ with capacity $\epsilon_{\sigma}$
where $\epsilon_{\sigma}\triangleq\frac{1}{3(1/\sigma-1)}\in[\frac{|A|}{|V-A|},1]$.
\end{itemize}
\end{defn}
The following lemma about the properties of $s$-$t$ min cut in augmented
graphs illustrates the usefulness of augmented graphs. The proof below
is different from Lemma 3.2 in \cite{OrecchiaZ14} in two aspects.
First, we argue about the expansion of cuts instead of arguing about
conductance of cuts as in \cite{OrecchiaZ14}, which is a slight different
notion. Second, we show that the $s$-$t$ min cut is a most-balanced
sparse cut, not just a sparse cut as in \cite{OrecchiaZ14}.

Note that $s$-$t$ max-flow-min-cut value of the augmented graph
$G_{A}(\alpha,\sigma)$ is at most $|A|$ because the cut $\{s\}$
has capacity $|A|$. 
\begin{lem}
\label{thm:property local min cut}Given an undirected graph $G=(V,E)$,
a set $A\subset V$ where $4|A|\le|V-A|$, an expansion parameters
$\alpha$ and an overlapping parameter $\sigma\in[\frac{3|A|}{|V-A|},\frac{3}{4}]$,
let $S'$ be the min $s$-$t$ cut of the augmented graph $G_{A}(\alpha,\sigma)$,
and $F^{*}$ is the value of max $s$-$t$ flow in $G_{A}(\alpha,\sigma)$.
Suppose that $F^{*}=|A|-c$ for some value $c\ge0$. Then we have
that following:
\begin{itemize}
\item For any $(A,\sigma)$-overlapping cut $S$ in $G$ where $|S|\ge3c/\sigma$,
then $\phi(S)\ge\frac{\sigma}{3}\alpha$, and
\item if $c>0$, $S'$ is an $\alpha$-sparse cut in $G$ such that $|S'|,|V-S'|\ge c$. 
\end{itemize}
\end{lem}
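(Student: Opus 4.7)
The proof proceeds by analyzing capacities of cuts in the augmented graph $G_A(\alpha,\sigma)$. For any $S\subseteq V$, the cut $\{s\}\cup S$ has capacity
\[
\mathrm{cap}(\{s\}\cup S)=|A\setminus S|+\delta_G(S)/\alpha+\epsilon_\sigma|(V-A)\cap S|.
\]
Since $S'$ realizes the min $s$-$t$ cut of value $F^*=|A|-c$, this capacity is $\ge|A|-c$ for every $S$, with equality at $S'$. The key algebraic identity I will use repeatedly is $\epsilon_\sigma(1-\sigma)=\sigma/3$, immediate from $\epsilon_\sigma=\sigma/(3(1-\sigma))$.

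\textbf{First claim.} For an arbitrary $(A,\sigma)$-overlapping cut $S$, rearranging the inequality $\mathrm{cap}(\{s\}\cup S)\ge|A|-c$ and using $|(V-A)\cap S|=|S|-|S\cap A|$ yields
\[
\delta_G(S)/\alpha\ge(1+\epsilon_\sigma)|S\cap A|-\epsilon_\sigma|S|-c.
\]
Substituting $|S\cap A|\ge\sigma|S|$ and invoking $\epsilon_\sigma(1-\sigma)=\sigma/3$ simplifies this to $\delta_G(S)\ge\alpha\bigl((2\sigma/3)|S|-c\bigr)$. The hypothesis $|S|\ge 3c/\sigma$ absorbs the $-c$ term, giving $\delta_G(S)\ge\alpha\sigma|S|/3$. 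Finally, the overlapping property combined with $\sigma\ge 3|A|/|V-A|$ forces $|S|\le|A|/\sigma\le|V-A|/3\le|V-S|$, so $\phi_G(S)=\delta_G(S)/|S|\ge\alpha\sigma/3$, as required.

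\textbf{Second claim.} Write $S:=S'\setminus\{s\}$; the min-cut capacity \emph{equation} becomes
\[
\delta_G(S)=\alpha\bigl(|S\cap A|-\epsilon_\sigma|(V-A)\cap S|-c\bigr).
\]
The bound $|S|\ge c$ follows immediately from $\delta_G(S)\ge 0$, which forces $|S\cap A|\ge c$. For $|V-S|\ge c$, I argue by contradiction: if $|V-S|<c$ then $|(V-A)\cap S|>|V-A|-c$, so $\mathrm{cap}(S')>\epsilon_\sigma(|V-A|-c)$. Since $\sigma\ge 3|A|/|V-A|$ gives $\epsilon_\sigma|V-A|\ge|A|/(1-\sigma)>|A|$, and $\sigma\le 3/4$ gives $\epsilon_\sigma\le 1$, one obtains $\epsilon_\sigma(|V-A|-c)\ge|A|-c$, contradicting $\mathrm{cap}(S')=|A|-c$. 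Sparsity $\phi_G(S)<\alpha$ is shown by first deriving $|V-S|\ge 3|A|$, which comes from combining $\mathrm{cap}(S')\le\mathrm{cap}(\{s\})=|A|$ with $\mathrm{cap}(S')\ge\epsilon_\sigma|(V-A)\cap S|$ and the choice of $\sigma$. I then split on the smaller side: if $|S|\le|V-S|$, the equation directly yields $\delta_G(S)<\alpha|S\cap A|\le\alpha|S|$, with strictness from $c>0$; if $|S|>|V-S|$, then $\delta_G(S)<\alpha|A|$ paired with $|V-S|\ge 3|A|$ gives $\phi_G(S)<\alpha/3$.

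\textbf{Main obstacle.} The calculations are essentially algebraic, but the non-obvious step is ensuring sparsity holds uniformly regardless of which side of the cut is smaller. Neither the capacity equation for $\mathrm{cap}(S')$ nor the overlap-free character of the min cut $S$ gives a direct upper bound on $\delta_G(S)$ in terms of $|V-S|$; the bridge is the structural inequality $|V-S|\ge 3|A|$, extracted from the two capacity bounds on $\mathrm{cap}(S')$, which makes the "large-$S$" branch of the case analysis go through with room to spare.
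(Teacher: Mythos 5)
Your proposal is correct and uses the same core technique as the paper: compare cut capacities in $G_A(\alpha,\sigma)$ against $F^{*}=|A|-c$, using the identity $\epsilon_\sigma(1-\sigma)=\sigma/3$. The only visible deviation is in the sparsity argument for $S'$: the paper bounds $\delta(S')/|V-S'|<\alpha$ directly from the capacity equation (rewriting $|A\cap S'|=|A|-|A-S'|$ and $|S'-A|=|V-A|-|(V-S')-A|$, then dropping terms via $\epsilon_\sigma|V-A|\ge|A|$ and $\epsilon_\sigma\le1$), whereas you first derive the structural bound $|V-S'|\ge 3|A|$ and then split on which side of the cut is smaller. Both are sound; yours is slightly more modular, the paper's slightly more direct. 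Your contradiction argument for $|V-S'|\ge c$ likewise mirrors the paper's direct chain $|A|-c\ge\epsilon_\sigma|S'-A|\ge|A|-\epsilon_\sigma|(V-S')-A|$, just reorganized.
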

\begin{proof}
Let $G'=G_{A}(\alpha,\sigma)$. For any cut $S$, let $\delta'(S)$
be the capacity of $S$ in $G'$. Since $F^{*}\le\delta'(S)$ by max-flow
min-cut theorem. We have that 
\begin{equation}
|A|-c=F^{*}\le\delta'(S)=\frac{\delta(S)}{\alpha}+|A-S|+\epsilon_{\sigma}|S-A|=\frac{\delta(S)}{\alpha}+|A|-|A\cap S|+\epsilon_{\sigma}|S-A|.\label{eq:local cut}
\end{equation}
This implies
\begin{eqnarray*}
\frac{\delta(S)}{\alpha} & \ge & |A\cap S|-\epsilon_{\sigma}|S-A|-c.
\end{eqnarray*}
Therefore, for any $(A,\sigma)$-overlapping cut $S$ where $|S|\ge3c/\sigma$,
we have that
\begin{eqnarray*}
\frac{\delta(S)}{|S|} & \ge & \alpha(\frac{|A\cap S|}{|S|}-\frac{\epsilon_{\sigma}(|S|-|A\cap S|)}{|S|}-\frac{c}{|S|})\\
 & \ge & \alpha((1+\epsilon_{\sigma})\sigma-\epsilon_{\sigma}-\sigma/3)\\
 & \ge & \alpha(2\sigma/3-\sigma/3)\\
 & \ge & \frac{\sigma}{3}\alpha
\end{eqnarray*}
where the second inequality is because $S$ is $(A,\sigma)$-overlapping
and $|S|\ge3c/\sigma$, and the third inequality is because $\epsilon_{\sigma}=\frac{1}{3(1/\sigma-1)}$.
This concludes the first part of the lemma.

Let $S'$ be the $s$-$t$ min cut in $G'$. We have that \ref{eq:local cut}
holds with equality for $S'$. That is,
\begin{equation}
|A|-c=\frac{\delta(S')}{\alpha}+|A-S'|+\epsilon_{\sigma}|S'-A|,\label{eq:local min cut}
\end{equation}
which implies 
\[
\frac{\delta(S')}{\alpha}=|A\cap S'|-\epsilon_{\sigma}|S'-A|-c.
\]
We want to prove that $\phi(S')=\frac{\delta(S')}{\min\{|S'|,|V-S'|\}}<\alpha$
when $c>0$. Note again that $\epsilon_{\sigma}=\frac{1}{3(1/\sigma-1)}\in[\frac{|A|}{|V-A|},1]$
as $\sigma\in[\frac{3|A|}{|V-A|},\frac{3}{4}]$. First, we have
\[
\frac{\delta(S')}{|S'|}=\alpha\frac{|A\cap S'|-\epsilon_{\sigma}|S'-A|-c}{|S'|}<\alpha.
\]
Second, we have 

\begin{eqnarray*}
\frac{\delta(S')}{|V-S'|} & = & \alpha\frac{|A\cap S'|-\epsilon_{\sigma}|S'-A|-c}{|V-S'|}\\
 & = & \alpha\frac{(|A|-|A-S'|)-\epsilon_{\sigma}(|V-A|-|(V-S')-A|)-c}{|V-S'|}\\
 & \le & \alpha\frac{\epsilon_{\sigma}|(V-S')-A|-|A-S'|-c}{|V-S'|}<\alpha,
\end{eqnarray*}
where the first inequality is because $\epsilon_{\sigma}\ge\frac{|A|}{|V-A|}$,
and the second inequality is because $\epsilon_{\sigma}\le1$ and
$c>0$. So we conclude that $\phi(S')<\alpha$. Next, we need to show
that $|S'|,|V-S'|\ge c$. From \ref{eq:local min cut}, we have
\[
|A|-c\ge|A-S'|=|A|-|S'\cap A|\implies|S'|\ge|S'\cap A|\ge c.
\]
Similarly, we have
\begin{eqnarray*}
|A|-c & \ge & \epsilon_{\sigma}|S-A|=\epsilon_{\sigma}|V-A|-\epsilon_{\sigma}|V-S-A|\ge|A|-\epsilon_{\sigma}|V-S-A|,
\end{eqnarray*}
which implies
\[
|V-S|\ge|V-S-A|\ge c/\epsilon_{\sigma}\ge c,
\]
because $\epsilon_{\sigma}\le1$. This concludes the second part of
the lemma.
\end{proof}
In \cite{OrecchiaZ14}, they show how to adjust Goldberg and Rao's
binary blocking flow algorithm to \emph{locally} compute $s$-$t$
min cut in augmented graphs, i.e. the running time does not depend
on the size of the whole graph.
\begin{thm}
[Lemma B.3 of \cite{OrecchiaZ14}]Given an undirected graph $G=(V,E)$,
a set $A\subset V$ where $|A|\le4|V-A|$, an expansion parameters
$\alpha$ and an overlapping parameter $\sigma\in[\frac{3|A|}{|V-A|},\frac{3}{4}]$,
there is a deterministic algorithm for computing the $s$-$t$ min
cut $S$ of the augmented graph $G_{A}(\alpha,\sigma)$ in time $O((\frac{vol(A)}{\sigma})^{1.5}\log^{2}(\frac{vol(A)}{\sigma}))$.\footnote{In \cite{OrecchiaZ14}, it is shown also that $vol(S)\le\frac{3}{\sigma}vol(A)$
although we do not need it.}\label{thm:local min cut}
\end{thm}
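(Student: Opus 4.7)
The plan is to adapt the Goldberg--Rao binary blocking flow algorithm so that the entire computation stays ``local'' to the set $A$, exploiting the special structure of $G_A(\alpha,\sigma)$. Three features of the augmented graph drive the argument: (i) the max-flow value $F^*$ is at most $|A|$ because the cut $\{s\}$ has capacity $|A|$; (ii) the only edges leaving $s$ are the unit-capacity arcs into $A$, so every augmenting path must begin at a vertex of $A$; and (iii) every augmenting path must eventually use a sink arc of capacity $\epsilon_\sigma=\Theta(\sigma)$, so each unit of flow ``consumes'' an $\Omega(\sigma)$-fraction of some sink edge. Together these imply that the total volume of the augmented graph that could be touched by \emph{any} feasible flow is $O(\mathrm{vol}(A)/\sigma)$: the BFS layers emanating from $s$ cannot extend beyond a subgraph of that volume without paying more capacity than the flow value permits.

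First, I would formalize this by running a truncated BFS from $s$ in the residual graph, using the standard ``early termination'' trick (as in Orecchia--Zhu): whenever the running sum of $\mathrm{vol}(\cdot)$ of explored nodes exceeds $C\cdot \mathrm{vol}(A)/\sigma$ for an appropriate constant $C$, we stop and declare the current BFS tree as the layered network. A charging argument (each explored vertex ``uses'' at least one unit of incoming or outgoing residual capacity, and the total such capacity crossing into the explored region is bounded in terms of the flow value and the sink-edge capacities) shows this truncation never cuts off a genuine augmenting path, so correctness is preserved. The local graph on which one phase of Goldberg--Rao operates therefore has $m_\ell = O(\mathrm{vol}(A)/\sigma)$ edges.

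Second, I would plug $m_\ell$ into the Goldberg--Rao complexity bound. Their binary blocking flow algorithm computes a max flow in $O(m_\ell\cdot m_\ell^{1/2}\log^2(m_\ell))$ time on a graph with $m_\ell$ edges when the capacities are integral after rescaling; after scaling capacities in $G_A(\alpha,\sigma)$ by a polynomial-in-$1/\sigma$ factor (so that $1/\alpha$, $1$, and $\epsilon_\sigma$ become integers of polynomial magnitude) this yields $O((\mathrm{vol}(A)/\sigma)^{1.5}\log^2(\mathrm{vol}(A)/\sigma))$. Reading off the min cut $S$ from the residual graph at termination is a standard reachability step which adds only lower-order cost.

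The main obstacle will be the bookkeeping for the ``locality'' claim: one has to argue that \emph{every} operation performed by Goldberg--Rao (building the layered residual graph, the dynamic-tree based blocking-flow subroutine, the edge capacity rescalings between phases) can be implemented without ever inspecting vertices or edges outside the truncated exploration region, and that the explored region stays of size $O(\mathrm{vol}(A)/\sigma)$ across all $O(\sqrt{F^*})$ phases. This is exactly the technical heart of Lemma~B.3 of \cite{OrecchiaZ14}, and my plan is to invoke their argument verbatim, verifying only that our minor change in the definition of $\epsilon_\sigma$ (which lies in $[|A|/|V-A|,1]$ by our choice of $\sigma$) preserves all the quantitative estimates they need.
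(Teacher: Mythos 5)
The paper does not prove this theorem at all: it is imported verbatim as Lemma~B.3 of Orecchia--Zhu, so there is no in-paper argument to compare against, and your proposal ultimately does the same thing (you sketch the idea and then explicitly defer to their locality argument, noting only that the modified $\epsilon_\sigma$ stays in the required range). Since both you and the paper resolve the statement by citation, your approach is essentially identical to the paper's. One small imprecision in your warm-up sketch worth flagging: the observation that at most $|A|/\epsilon_\sigma = O(|A|/\sigma)$ sink arcs can be saturated bounds the \emph{number} of touched sink vertices, but not directly the \emph{volume} of the explored region (those vertices could individually have large degree); the true locality bound $O(\mathrm{vol}(A)/\sigma)$ in Orecchia--Zhu's Lemma~B.3 comes from a more careful volume-truncated Goldberg--Rao phase analysis, which your deferral to their proof correctly covers, so this is a wrinkle in the heuristic, not in the overall plan.
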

Given this theorem, we conclude that our most-balanced sparse cut
w.r.t. overlapping cuts algorithm:
\begin{proof}
[Proof of \ref{thm:local MSO cut}]Given a graph $G=(V,E)$, a set
$A\subset V$ where $|A|\le4|V-A|$, an expansion parameter $\alpha$,
and an overlapping parameter $\sigma\in[\frac{3|A|}{|V-A|},\frac{3}{4}]$,
we compute the $s$-$t$ min cut $S$ in the the augmented graph $G_{A}(\alpha,\sigma)$.
Let $c_{size},c_{exp}=3/\sigma$.

If the capacity $\delta'(S)$ of $S$ in $G_{A}(\alpha,\sigma)$ is
$|A|$, then by the first part of \ref{thm:property local min cut},
all $(A,\sigma)$-overlapping cuts $S$ in $G$ have expansion $\phi(S)\ge\frac{\sigma}{3}\alpha$.
Hence, we just report that there is no $(\alpha/c_{exp})$-sparse
$(A,\sigma)$-overlapping cuts in $G$ and we are done.

Otherwise, $\delta'(S)=|A|-c$ for some $c>0$. Then we claim that
the $s$-$t$ min cut $S$ is a $(c_{size},c_{exp})$-approximate
most-balanced $\alpha$-sparse cut w.r.t. overlapping cuts where $c_{size},c_{exp}=3/\sigma$.
Recall $\opt(G,\alpha/c_{exp},A,\sigma)$ that is the size of the
largest $(\alpha/c_{exp})$-sparse $(A,\sigma)$-overlapping cut $S^{*}$
where $|S^{*}|\le|V-S^{*}|$. By the first part of \ref{thm:property local min cut},
we have that $\opt(G,\alpha/c_{exp},A,\sigma)<3c/\sigma=c_{size}\cdot c$.
But the second part of \ref{thm:property local min cut} implies that
the min-cut $S$ is $\alpha$-sparse and $|S|\ge c>\opt(G,\alpha/c_{exp},A,\sigma)/c_{size}$.
Therefore, we return that $S$ is a $(c_{size},c_{exp},\alpha,A,\sigma)$-LBS
cut of $G$.\end{proof}

\section{Very Simple Dynamic $\protect\st$ with $O(m^{0.5+o(1)})$ Update
Time \label{sec:warm-up}}

In this section, let $\beta=2^{O(\log\log n)^{3}}=n^{o(1)}$ be the
factor from \ref{thm:cut recovery tree}. The goal of this section
is to show a simple dynamic $\st$ algorithm with update time $O(m^{0.5+o(1)})$
when the size of update sequence is polynomial (the same assumption
used in \cite{KapronKM13}) and the number of edges of the underlying
graph is \textbf{always bounded} by $m$:
\begin{thm}
\label{thm:dynST warm-up short update}For any fixed constant $c$,
there is a randomized dynamic $\st$ algorithm for any graphs with
$n$ node and at most $m$ edges that works correctly against adaptive
adversaries for the first $n^{c}$ update\emph{ }with high probability
and has preprocessing time $O(m\beta\log n)$ and worst-case update
time $O(\sqrt{m}\beta\log^{2}n)$\footnote{To be more precise, the update time is $O((c+c')\sqrt{m}\beta\log^{2}n)$
if we want the algorithm to works correctly against adaptive adversaries
for the first $n^{c}$ update\emph{ }with probability $1-1/n^{c'}$.}.
\end{thm}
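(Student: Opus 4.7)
The plan is to combine the cut recovery tree of \ref{thm:cut recovery tree} with the augmented ET tree of \ref{thm:aug ET tree}, setting the threshold $k=\sqrt{m}$. At preprocessing I compute any spanning forest $F$ of $G$ and initialize both an augmented ET tree $\cE$ on $(G,F)$ and a cut recovery tree $\cD$ on $(G,F)$ with parameter $k=\sqrt{m}$; by \ref{thm:aug ET tree} and \ref{thm:cut recovery tree} this costs $O(m)+O(m\beta\log n)=O(m\beta\log n)$. Insertions and non-tree deletions are simply forwarded to $\cE$ and $\cD$ in $O(\beta\log^{2}n)$ time per operation.

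The only nontrivial case is the deletion of a tree edge $e=(u,v)$, which splits its tree into $T_{u}\ni u$ and $T_{v}\ni v$. I search for a replacement edge $e'\in\partial_{G}(V(T_{u}))$ in two stages. \emph{Sampling stage:} draw $\sigma=\Theta((c+c')\sqrt{m}\log n)$ independent samples using $\textsf{sample}(T_{u})$ on $\cE$, and return any sampled edge that lies in $\partial_{G}(V(T_{u}))$. \emph{Recovery stage:} otherwise query $\cutset{k}(T_{u})$ on $\cD$ and return an arbitrary edge of the returned list, or report ``no replacement'' if the list is empty.

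Correctness per update: if $\delta_{G}(V(T_{u}))\le\sqrt{m}=k$, then by \ref{def:cut-recovery-tree} the recovery stage returns the entire cut $\partial_{G}(V(T_{u}))$, and any existing replacement is found deterministically. Otherwise $\delta_{G}(V(T_{u}))>\sqrt{m}$, and since $G$ always has at most $m$ edges we have $nt\_vol_{G,F}(V(T_{u}))\le 2m$; the specification of $\textsf{sample}$ in \ref{def:aug ET tree} guarantees that each sample lies in $\partial_{G}(V(T_{u}))$ with probability at least $\delta_{G}(V(T_{u}))/nt\_vol_{G,F}(V(T_{u}))\ge 1/(2\sqrt{m})$, so for a suitable constant in $\sigma$ the failure probability at this step is at most $(1-1/(2\sqrt{m}))^{\sigma}\le n^{-(c+c')}$. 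A union bound over the first $n^{c}$ updates gives overall success probability at least $1-n^{-c'}$.

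This analysis extends to an adaptive adversary by the same reasoning used in Henzinger--King sampling: the only randomness consumed is the fresh coins of the $\textsf{sample}$ calls (since $\cD$ is deterministic by \ref{thm:cut recovery tree} and $F$ is built deterministically), and each coin is used once and never revisited, so conditional on everything the adversary sees up to step $t$ the sampling distribution at step $t$ is still as analyzed above. The per-update running time is $O(\sigma\log n)+O(k\beta\log^{2}n)=O(\sqrt{m}\beta\log^{2}n)$ using \ref{thm:aug ET tree} and \ref{thm:cut recovery tree}, as claimed. The main obstacle is not conceptual but bookkeeping: keeping $\cE$ and $\cD$ synchronized with $(G,F)$ across every branch of the update (tree vs.\ non-tree, replacement found vs.\ not), and using the hypothesis that the edge count stays bounded by $m$ \emph{throughout} the sequence so that $nt\_vol_{G,F}(V(T_{u}))\le 2m$ holds at every step.
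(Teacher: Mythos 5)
Your proposal is correct and is essentially identical to the paper's own proof: preprocess with an augmented ET tree plus a cut recovery tree at threshold $k=\sqrt{m}$, and on a tree-edge deletion first draw $\Theta(\sqrt{m}\log n)$ fresh samples and then fall back to $\cutset{k}$, bounding the per-step miss probability via $nt\_vol_{G,F}(\cdot)\le 2m$ and union-bounding over the first $n^{c}$ updates, with the observation that the fresh, single-use sampling coins make the argument carry over to adaptive adversaries. The one visible deviation --- you always query the $u$-side $T_{u}$ whereas the paper queries the smaller of the two resulting trees --- is harmless here, since after the tree edge is removed $\partial_{G}(V(T_{u}))=\partial_{G}(V(T_{v}))$ (given the forest was spanning before the deletion) and the bound $nt\_vol_{G,F}(\cdot)\le 2m$ applies to either side, so both the sampling analysis and the $\cutset{k}$ guarantee are unaffected.
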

Although there are several faster dynamic $\st$ algorithms, the algorithm
from \ref{thm:dynST warm-up short update} is conceptually very simple
especially when we use a cut recovery tree from \ref{sec:linear sketches}
as a black box.

\subsection{The Algorithm}

Let $G=(V,E)$ be the underlying graph. To preprocess $G$, we do
the following: 1) construct an arbitrary spanning forest $F$ of $G$,
and 2) preprocess $(G,F)$ using an instance $\cE$ of augmented ET
tree and an instance $\cD$ of cut recovery tree with parameter $\sqrt{m}$.
\begin{rem}
\label{rem:cut sketch update-1-1}In the following description of
the algorithm, we only describe how $E$ and $F$ are changed, but
we always give the update to $\cD$ and $\cE$ so that the underlying
graphs and forests of them are $(G,F)$.
\end{rem}
To insert $e=(u,v)$, we add $e$ into $E$ and if $u$ and $v$ are
not connected in $F$ (this can be checked using augmented ET tree),
then we also add $e$ into $F$. To delete $e=(u,v)$, we remove $e$
into $E$ and if $e\in F$ was a tree-edge, then we find a replacement
edge $e'$ for reconnecting the two separated components of $F$ as
described in \ref{alg:replace warm up} and, if $e'$ is returned,
add $e'$ into $F$.

\begin{algorithm}[h]
\begin{enumerate}
\item Using $\cE$, set $T_{u}=\textsf{find\_tree}(u)$ and $T_{v}=\textsf{find\_tree}(v)$.
If $\textsf{tree\_size}(T_{u})<\textsf{tree\_size}(T_{v})$, then
set $S=V(T_{u})$, otherwise $S=V(T_{v})$.
\item Sample $\sigma$ non-tree edges in $G\setminus F$ incident to $S$
using $\cE$. If there is a sampled edge $e'$ from $\partial_{G}(S)$,
then RETURN $e'$.
\item Query $\cutset k(S)$ to $\cD$. If the returned set $\tilde{\partial}\neq\emptyset$,
then RETURN any edge $e'\in\tilde{\partial}$.
\item RETURN ``no replacement edge''
\end{enumerate}
\caption{The algorithm for finding a replacement edge $e'$ if exists, after
deleting an edge $e=(u,v)\in F$. \label{alg:replace warm up}}
\end{algorithm}
This completes the description of our algorithm.

\subsection{Analysis}

Let $G_{i}$ and $F_{i}$ be the underlying graph and the maintained
forest after the $i$-th update. $G_{0}$ and $F_{0}$ are the ones
after preprocessing. Let $\cA$ be our dynamic $\st$ algorithm. Suppose
that we fix an adaptive adversary $f$ and a string $R$ which is
used as a random choices of $\cA$. Let $inv_{i}(f,\cA,R)$ be the
invariant that holds if $F_{i}$ is a spanning forest of $G_{i}$.
For convenience, we write $inv_{<i}(f,\cA,R)=inv_{0}(f,\cA,R)\wedge\dots\wedge inv_{i-1}(f,\cA,R)$.
Our goal is to prove that the invariant holds with high probability:
\begin{lem}
\label{thm:inv hold-1}Let $\cA$ be the dynamic $\st$ algorithm
from \ref{sec:warm-up} and $R$ be the random choices of $\cA$.
For any adaptive adversary $f$, we have $\Pr_{R}[inv_{i}(f,\cA,R)\mid inv_{<i}(f,\cA,R)]\ge1-1/n^{c}$
for $i\ge0$ and $c$ can be made arbitrarily large.\end{lem}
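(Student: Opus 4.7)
The plan is to show that conditioned on $inv_{<i}(f,\cA,R)$ (so $F_{i-1}$ is a spanning forest of $G_{i-1}$), the probability that the $i$-th operation destroys the invariant is at most $1/n^c$. Insertions and non-tree-edge deletions preserve $inv_i$ deterministically: an inserted edge $(u,v)$ is added to $F$ exactly when $u,v$ are in different components of $F_{i-1}$, which can be checked exactly via $\cE.\textsf{find\_tree}$, and deleting a non-tree edge leaves $F$ unchanged. So the whole argument reduces to the case where a tree edge $e\in F_{i-1}$ is deleted, splitting its tree $T$ into $T_u,T_v$; the invariant fails only when $\partial_{G_i}(V(T_u))\neq\emptyset$ yet Algorithm~\ref{alg:replace warm up} returns ``no replacement edge''. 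Let $S=V(T_u)$ be the smaller side (so $|V(T_u)|\le|V(T_v)|$).

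The core of the proof is a two-regime case split based on $\delta_{G_i}(S)$, exploiting the complementary strengths of the cut recovery tree and the ET-tree sampler. If $\delta_{G_i}(S)\le\sqrt{m}$, then by \ref{thm:cut recovery tree} the Step~3 call $\cutset{\sqrt m}(S)$ on $\cD$ returns the \emph{entire} set $\partial_{G_i}(S)$ deterministically, so a replacement is found whenever one exists. If instead $\delta_{G_i}(S)>\sqrt{m}$, then because $G$ has at most $m$ edges we have $nt\_vol_{G_i,F}(S)\le vol_{G_i}(S)\le 2m$, and by the sampling rule of \ref{def:aug ET tree} each fresh sample in Step~2 lands in $\partial_{G_i}(S)$ with probability at least $\delta_{G_i}(S)/(2m)>1/(2\sqrt{m})$. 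Choosing the constant inside $\sigma=\Theta(c'\sqrt m\log n)$ large enough, the probability that all $\sigma$ independent samples miss $\partial_{G_i}(S)$ is at most $(1-1/(2\sqrt m))^{\sigma}\le e^{-c'(\log n)/2}\le n^{-c}$. The update-time bound $O(\sqrt m\,\beta\log^2 n)$ still absorbs this choice of $\sigma$.

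The remaining (and only subtle) point is that the adversary $f$ is adaptive, so both $G_i$ and $S$ are themselves functions of the random string $R$. Following the Henzinger–King style of analysis recalled in the introduction, I would argue that the bits of $R$ consumed during the Step~2 sampling at round~$i$ are \emph{fresh}: the ET-tree only uses each sampled coin once, and the cut recovery tree $\cD$ is fully deterministic (the sparse-recovery matrix from \ref{thm:sparse recovery} is built by the explicit expander of \ref{thm:unbalanced expander}, with no internal randomness). Conditioning on the prefix of $R$ used through round $i-1$ fixes the entire trajectory $(G_{<i},F_{<i},o_i,S)$ while leaving the round-$i$ sampling coins uniform and independent, so the calculation of the previous paragraph applies verbatim under this conditioning, and hence under the event $inv_{<i}$ by averaging. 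The main obstacle, therefore, is just being careful with this conditioning argument; once it is in place the case split yields $\Pr_R[\neg inv_i\mid inv_{<i}]\le n^{-c}$ directly, and a union bound over $i\le n^{c}$ inside the proof of \ref{thm:dynST warm-up short update} (with a slightly larger constant baked into $\sigma$) completes the picture.
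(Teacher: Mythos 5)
Your proof follows essentially the same path as the paper's: reduce to the case of a tree-edge deletion, split on whether $\delta_{G_i}(S)\le\sqrt m$ (in which case Step~3's cut recovery tree deterministically returns $\partial_{G_i}(S)$) or $\delta_{G_i}(S)>\sqrt m$ (in which case the $\sigma=\Theta(\sqrt m\log n)$ samples in Step~2 hit $\partial_{G_i}(S)$ with probability $1-n^{-c}$, using $nt\_vol_{G_i,F}(S)\le2m$). The one place you go noticeably beyond the paper is the explicit conditioning argument in the last paragraph: you make precise why fixing the adversary $f$ and the prefix of $R$ used through round $i-1$ determines $(G_{<i},F_{<i},o_i,S)$ while leaving the round-$i$ sampling coins uniform, and why the cut recovery tree contributes no further randomness (its sparse-recovery matrix is explicit). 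The paper simply fixes $f$ and computes the failure probability of the Step~2 sampler without spelling this out; your version is a more careful reading of the same argument, not a different one. Both are correct.
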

\begin{proof}
When $i=0$, we have $\Pr_{R}[inv_{i}(f,\cA,R)\mid inv_{<i}(f,\cA,R)]=\Pr_{R}[inv_{0}(f,\cA,R)]=1$
because $inv_{<0}(f,\cA,R)$ holds vacuously and $F_{0}$ is a spanning
forest of $G_{0}$ by construction. When $i\ge1$, there are two cases
for the update at time $i$: insertion and deletion. First, when we
insert $e=(u,v)$, we have $F_{i}=F_{i-1}\cup\{e\}$ iff $u$ and
$v$ are not connected in $F_{i-1}$. As $F_{i-1}$ is a spanning
forest of $G_{i-1}$, $F_{i}$ is a spanning forest of $G_{i}$. The
next case is when we delete $e=(u,v)$. On one hand, suppose $\delta_{G}(S)\le k$.
Hence, in Step 3 the queried set returned from $\cutset k(S)$ is
$\tilde{\partial}=\partial_{G}(S)$ and so an edge $e'\in\partial_{G}(S)$,
if exists, is returned. So $F_{i}=F_{i-1}\cup\{e'\}\setminus\{e\}$
and $F_{i}$ a spanning forest of $G_{i}$. On the other hand, suppose
$\delta_{G}(S)>k$. We claim that a random edge $e'\in\partial_{G}(S)$
is returned in Step 2 with high probability, and we are done for the
same reason.

To prove the claim, by the property of the augmented ET-tree $\cE$,
the probability that a sampled edge is from $\partial_{G}(S)$ is
$\frac{\delta_{G}(S)}{vol_{G}(S)}$ for each sampling. Now, the probability
that all sampled $\sigma$ edges are not from $\partial_{G}(S)$ is
at most $(1-\frac{\delta_{G}(S)}{vol_{G}(S)})^{\sigma}\le(1-\frac{\sqrt{m}}{2m})^{\sigma}\le1/n^{c}$
where $c$ can be made arbitrarily large by choosing the constant
in $\sigma=\Theta(\sqrt{m}\log n)$. Note that the first inequality
holds because $vol_{G}(S)\le2m$ as the number of edges of $G$ is
always bounded by $m$.
\end{proof}
Next, we bound the running time.
\begin{lem}
\label{thm:preprocessing time-2}Let $\cA$ be the dynamic $\st$
algorithm from \ref{sec:warm-up}. The preprocessing time of $\cA$
is $O(m\beta\log n)$, and the update time is $O(\sqrt{m}\beta\log^{2}n)$.\end{lem}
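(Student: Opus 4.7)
The proof is essentially an accounting exercise: since the algorithm is built as a thin wrapper around the augmented ET tree $\cE$ (\ref{thm:aug ET tree}) and the cut recovery tree $\cD$ (\ref{thm:cut recovery tree}) with parameter $k=\sqrt{m}$, I will just sum up the costs guaranteed by those two theorems and confirm that nothing else dominates.

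For preprocessing, the plan is: (i) constructing an arbitrary spanning forest $F$ of $G$ takes $O(m)$ time by BFS or DFS; (ii) initializing the augmented ET tree $\cE$ on $(G,F)$ takes $O(m)$ time by \ref{thm:aug ET tree}; (iii) initializing the cut recovery tree $\cD$ on $(G,F)$ with parameter $k=\sqrt{m}$ takes $O(m\beta\log n)$ time by \ref{thm:cut recovery tree}. Summing these gives $O(m\beta\log n)$, and step (iii) is clearly the dominant term.

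For the update time, I would handle each of the two cases separately. For an \emph{insertion} of $e=(u,v)$: the connectivity check uses one $\textsf{find\_tree}$ call in $\cE$ costing $O(\log n)$; inserting $e$ into $\cE$ costs $O(\log n)$ for $\insnon$ (or the same cost for $\instree$ plus the tree bookkeeping); and feeding the same update to $\cD$ costs $O(\beta\log^{2}n)$ for $\insnon$ or $O(k\beta\log^{2}n)=O(\sqrt{m}\beta\log^{2}n)$ for $\instree$, again by \ref{thm:cut recovery tree}. For a \emph{deletion} of a non-tree edge the analysis is analogous. For a \emph{deletion} of a tree edge we run \ref{alg:replace warm up}: Step~1 costs $O(\log n)$ by \ref{thm:aug ET tree}; Step~2 draws $\sigma=\Theta(\sqrt{m}\log n)$ independent samples via $\textsf{sample}$ on $\cE$, each costing $O(\log n)$, for a total of $O(\sqrt{m}\log^{2}n)$; Step~3 invokes $\cutset{k}$ with $k=\sqrt{m}$, costing $O(k\beta\log^{2}n)=O(\sqrt{m}\beta\log^{2}n)$ by \ref{thm:cut recovery tree}. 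Finally, if a replacement edge $e'$ is returned, updating $\cE$ and $\cD$ (one $\deltree$, one $\insnon$ on the old tree edge, and one $\delnon$, one $\instree$ on $e'$) contributes another $O(\sqrt{m}\beta\log^{2}n)$ by \ref{thm:cut recovery tree}. Adding all contributions gives the claimed $O(\sqrt{m}\beta\log^{2}n)$ bound.

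There is no real obstacle here — the only thing to double-check is the bookkeeping remark (\ref{rem:cut sketch update-1-1}) that each logical edge/forest change is mirrored as a constant number of operations on both $\cE$ and $\cD$, so the per-update costs of the two data structures simply add. The dominant term in the worst case is the $\cutset{k}$ query (and symmetrically the $\instree/\deltree$ updates on $\cD$), both of which are $O(\sqrt{m}\beta\log^{2}n)$, matching the bound stated in \ref{thm:preprocessing time-2}.
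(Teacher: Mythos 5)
Your proof is correct and matches the paper's argument essentially verbatim: both decompose the preprocessing cost into the $O(m)$ spanning-forest construction plus the $O(m\beta\log n)$ data-structure initialization, and both bound the update time by the $O(\sqrt{m}\beta\log^2 n)$ cost of the $\cutset{k}$ query and the mirrored tree-edge updates to $\cD$, noting that Steps 1 and 2 of \ref{alg:replace warm up} are dominated. You spell out the insertion and non-tree-deletion cases slightly more explicitly than the paper does, but the accounting is the same.
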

\begin{proof}
For the preprocessing time, 1) we can find an arbitrary spanning forest
$F$ on an initial graph $G$ in time $O(m)$, and 2) by \ref{thm:aug ET tree}
and \ref{thm:cut recovery tree}, we can preprocess $(G,F)$ using
augmented ET tree and cut recovery tree in time $O(m\beta\log n)$. 

For the update time, we first analyze the running time inside \ref{alg:replace warm up}.
Step 1 takes time at most $O(\log n)$. Step 2 takes time $O(\sigma\log n)=O(\sqrt{m}\log^{2}n)$
by \ref{thm:aug ET tree}. Step 3 takes time $O(\sqrt{m}\beta\log^{2}n)$
by \ref{thm:cut recovery tree}. So the running time in \ref{alg:replace warm up}
is at most $O(\sqrt{m}\beta\log^{2}n)$. The update time spent outside
\ref{alg:replace warm up} is dominated by the time for updating $\cD$.
This takes time at most $O(\sqrt{m}\beta\log^{2}n)$.
\end{proof}
We conclude that \ref{thm:preprocessing time-2} and \ref{thm:inv hold-1}
together imply \ref{thm:dynST warm-up short update}.

\end{document}